\providecommand{\U}[1]{\protect\rule{.1in}{.1in}}
\newtheorem{theorem}{Theorem}
\newtheorem{assumption}{Assumption}
\newtheorem{lemma}{Lemma}
\newtheorem{proposition}{Proposition}
\newtheorem{remark}{Remark}
\newenvironment{proof}[1][Proof]{\noindent\textbf{#1.} }{\ \rule{0.5em}{0.5em}}
\renewcommand*{\@fnsymbol}[1]{\ifcase#1\or$\dagger$\else\@arabic{\numexpr#1-1\relax}\fi}
\begin{document}

\title{How to Detect Network Dependence in Latent Factor Models? A Bias-Corrected CD
Test\thanks{We thank Peter C. B. Phillips (Editor), Guido Kuersteiner
(Co-Editor), and three anonymous referees for their thoughtful comments and
constructive suggestions. We also would like to thank Natalia Bailey, Alex
Chudik, Zhan Gao, Arturas Juodis, Ron Smith, Simon Reese and Qiankun Zhou for
helpful comments. The research for this paper was completed when Yimeng Xie
was a Ph.D. student at the University of Southern California. This paper has
been previously circulated under the title \textquotedblleft A Bias-Corrected
CD Test for Error Cross-Sectional Dependence in Panel Data Models with Latent
Factors". https://arxiv.org/abs/2109.00408}}
\author{M. Hashem Pesaran\\University of Cambridge, U.K. and University of Southern California, U.S.\\and\\Yimeng Xie\\Xiamen University, China}
\date{\today}
\maketitle

\begin{abstract}
In a recent paper \cite{juodis2022incidental} (JR) show that the application
of the CD test proposed by \cite{pesaran2004general} to residuals from panels
with latent factors results in over-rejection. They propose a randomized test
statistic to correct for over-rejection, and add a screening component to
achieve power. This paper considers the same problem but from a different
perspective, and shows that the standard CD test remains valid if the latent
factors are weak. A bias-corrected version, CD$^{\text{*}}$, is proposed which is shown to be asymptotically
standard normal under the null of error cross-sectional independence which has
power against network type alternatives. This result is shown to hold for pure
latent factor models as well as for panel regression models with latent
factors. The case where the errors are serially correlated is also considered.
Small sample properties of the CD$^{\text{*}}$ test are investigated by Monte Carlo experiments and are shown to have satisfactory small sample properties. In an empirical application, using the CD$^{\text{*}}$ test, it
is shown that there remains spatial error dependence in a panel data model for
real house price changes across 377 Metropolitan Statistical Areas in the
U.S., even after the effects of latent factors are filtered out.

\end{abstract}

\textbf{JEL Classifications:}\ {\footnotesize C18, C23, C55}

\textbf{Key Words:}{\footnotesize \ Latent factor models, strong and weak
factors, error cross-sectional independence, spatial and network alternatives,
size and power.}%

\pagenumbering{gobble}%

\newpage%
\pagenumbering{arabic}%
%

\section{Introduction}

It is now quite standard to use latent multi-factor models to characterize and
explain cross-sectional dependence in panels when the cross section dimension
$(n)$ and the time series dimension $(T)$ are both large. However, due to
uncertainty regarding the nature of error cross-sectional dependence, it is
arguable whether error cross-sectional dependence is fully accounted for by
latent factors. Some of the factors could be semi-strong, and the errors might
have spatial or network features that are not necessarily captured by common
factors alone. \cite{chudik2011weak} provide an early discussion of the
different sources of cross-sectional dependence. Diagnostic tests of error
cross-sectional independence in panels are required to safeguard against
estimation bias and unreliable inference. See, for example,
\cite{bai2003inferential,bai2009panel},
\cite{phillips2003dynamic,phillips2007bias}, \cite{bai2006confidence},
\cite{pesaran2006estimation}, and \cite{pesaran2011large}. Such tests are also
helpful to researchers interested in network or spatial dependence once the
influence of common factors are filtered out. See, for example,
\cite{bailey2016two}, \cite{shi2017spatial}, \cite{aquaro2021estimation}, and
\cite{bai2021dynamic} amongst others.

One standard test for error cross-sectional independence is the CD test
proposed by \cite{pesaran2004general,pesaran2021general}, which has been
further developed. For example, \cite{hsiao2012diagnostic} apply the CD
test to panel data models with limited dependent variables, while
\cite{pesaran2015testing} uses it to test weak error cross-sectional
dependence in large panels. In a recent paper \cite{juodis2022incidental} (JR)
show that the application of the CD test to the residuals from panels with
latent factors is invalid and can result in over-rejection of the null of
error cross-sectional independence.\footnote{In the empirical finance
literature \cite{gagliardini2019diagnostic} propose a diagnostic criterion to
check if the errors from a (strong) factor model are weakly correlated or
contain missing strong factor(s). These authors do not propose a test of
cross-sectional error dependence but focus on the detection of potentially
omitted strong factors in asset pricing models.} They propose a randomized CD
test statistic as a solution. Their proposed test is constructed in two steps.
First, they multiply the residuals from panel regressions with independent
randomized weights to obtain their CD$_{W}$ statistic, which will have a zero
mean by construction. In this way they avoid the over-rejection problem of the
CD test, but by the very nature of the randomization process they recognize
that the CD$_{W}$ test will lack power. To overcome the problem of lack of
power, JR modify the CD$_{W}$ test statistic by adding to it a screening
component proposed by \cite{fan2015power} which is expected to tend to zero
with probability approaching one under the null hypothesis, but to diverge at
a reasonably fast rate under the alternative. This further modification of the
CD$_{W}$ test is denoted by CD$_{W+}$. Accordingly, it is presumed that the
CD$_{W+}$ test can overcome both over-rejection and the low power problems.
However, JR do not provide a formal proof establishing conditions under which
the screening component tends to zero under the null and diverges sufficiently
fast under alternatives, including spatial or network dependence type
alternatives. Also, our Monte Carlo simulations show that the CD$_{W+}$ test
tends to over-reject when the errors are non-Gaussian and $n>>T$, and lacks power under spatial and network alternatives, which is likely to be
particularly important in empirical applications.

In this paper we show that the standard CD test is in fact valid for testing
error cross-sectional independence in panel data models with weak latent
factors. However, when the latent factors are semi-strong or strong the use of
the CD test will result in over-rejection and will no longer be valid,
extending JR's results to panels with semi-strong latent factors. In short,
whilst the CD$_{W+}$ test is a useful and welcome addition to testing for
error cross-sectional independence, it would be interesting to develop a
modified version of the test that simultaneously deals with the over-rejection
problem and does not compromise power for a general class of alternatives. To
that end, firstly we study testing for error cross-sectional independence in a
pure latent factor model, and derive an explicit expression for the bias of
the CD test statistic in terms of factor loadings and error variances. We then
propose a bias-corrected version of the CD test statistic, denoted by
$CD^{\ast}$, which is shown to have $\mathcal{N}(0,1)$ asymptotic distribution
under the null hypothesis irrespective of whether the latent factors are weak
or strong. When the latent factors are weak the correction tends to zero, $CD$
and $CD^{\ast}$ will be asymptotically equivalent. However, $CD-CD^{\ast}$
diverges if at least one of the underlying latent factors is (semi) strong. We
show that under the null of cross-sectional independence, $CD^{\ast}$
converges to a standard normal distribution when $n$ and $T$ tend to infinity
so long as $n/T\rightarrow\kappa$, where $0<\kappa<\infty$, and a test based
on $CD^{\ast}$ will have the correct size asymptotically. In addition, it is
shown that the CD$^{\text{*}}$ test has power against spatial and network type
alternatives. In particular, we are the first to give a formal derivation of
the power function for CD tests against general spatial and network
alternatives, which can be applied equally to panel data models without latent
factors and therefore supplement earlier research on CD tests\footnote{For
instance, \cite{pesaran2004general}, which is the unpublished version of
\cite{pesaran2021general}, also discusses the power of the CD test against
spatial dependence in Section 8.2 of his paper, for a specific connection matrix with $n$
fixed as $T\rightarrow\infty$.}.

We then consider the application of the $CD^{\ast}$ to test error
cross-sectional independence in the case of panel regression models with
latent factors, discussed in \cite{pesaran2006estimation}. It is shown that
the asymptotic properties of the $CD^{\ast}$ in the case of pure latent factor
models also carry over to panel regression models with latent factors. We also
investigate the application of the $CD^{\ast}$ to panel data models with
serially correlated errors, and consider the method proposed by
\cite{baltagi2016testing} as well as using an autoregressive distributed lag
(ARDL) representation which transforms the model with serially correlated
errors to one without error serial correlation.

The finite sample performance of the CD$^{\text{*}}$ test is investigated by
Monte Carlo simulations in the case of pure latent factor models, panel
regression models with latent factors with and without error serial
correlation. It is found that the CD$^{\text{*}}$ test avoids the
over-rejection problem under the null and has power against spatial and
network alternatives, and has desirable small sample properties regardless of
whether the errors are Gaussian or not, under different combinations of $n$
and $T$. We also find that both adjustments for dealing with error serial
correlation considered in the paper give desirable small sample properties.
Finally, as compared to JR's CD$_{W^{+}}$ test, the proposed bias-corrected CD
test is better in controlling the size of the test and has much better power
properties against spatial or network alternatives.

The use of the CD$^{\text{*}}$ test is illustrated by an empirical application
in modeling real house price changes in the U.S. Because it is evident that
real house price changes are driven by macroeconomic trends which can be
modeled by latent factors, it is necessary to filter out these factors before
testing for spillover effect. By applying the CD$^{\text{*}}$ test to real
house price changes in the U.S. we are able to show significant existence of
weak cross-sectional dependence in addition to latent factors.

The rest of the paper is organized as follows. Section \ref{pure_factor} sets
out the latent factor model and its assumptions. Section \ref{sec:cd}
introduces the estimation of latent factors and the CD test. The
bias-corrected test, CD$^{\ast}$, is introduced in Section \ref{CD*} and its
asymptotic distributions are derived under the null and the alternative
hypotheses. The extension to more general panel regression models with
observed covariates as well as latent factors are discussed in Section
\ref{panel_model}. Adjustments to the CD$^{\text{*}}$ test for panels with
serially correlated errors are discussed in Section \ref{section_serial}.
Using Monte Carlo techniques, the small sample properties of CD, CD$^{\ast}$,
and CD$_{W+}$ tests are discussed in Section \ref{mc}. An empirical
illustration is provided in Section \ref{app}. Proofs of the propositions and
theorems are provided in an appendix. The auxiliary lemmas and the associated proofs are given in a supplement.

\textbf{Notations: }For the $n\times n$ matrix $\mathbf{A}=\left(
a_{ij}\right)  $, we denote its largest eigenvalue by $\mu_{max}\left(
\mathbf{A}\right)  $, its trace by $\mathrm{tr}\left(  \mathbf{A}\right)
=\sum_{i=1}^{n}a_{ii}$, its spectral norm by $\left\Vert \mathbf{A}\right\Vert
=\mu_{\max}^{1/2}\left(  \mathbf{A}^{\prime}\mathbf{A}\right)  $, its maximum
absolute column sum norm by $\left\Vert \mathbf{A}\right\Vert _{1}=\max_{1\leq
j\leq n}\left(  \sum_{i=1}^{n}\left\vert a_{ij}\right\vert \right)  $, and its
maximum absolute row sum norm by $\left\Vert \mathbf{A}\right\Vert _{\infty
}=\max_{1\leq i\leq n}\left(  \sum_{j=1}^{n}\left\vert a_{ij}\right\vert
\right)  $. We write $\mathbf{A}>\mathbf{0}$ when $\mathbf{A}$ is positive
definite. For matrices $\mathbf{B}=\left(  b_{ij}\right)  $ and $\mathbf{C}%
=\left(  c_{ij}\right)  $, $\mathbf{B\odot C=C\odot B}$ denote Hadamard
product with elements $b_{ij}c_{ij}$. $\rightarrow_{p}$ denotes convergence in
probability, $\rightarrow_{d}$ convergence in distribution, and
$\overset{a}{\thicksim}$ asymptotic equivalence in distribution. $O_{p}\left(
\cdot\right)  $ and $o_{p}\left(  \cdot\right)  $ denote the stochastic order
relations. In particular, $o_{p}(1)$ indicates terms that tend to zero in
probability as $(n,T)\rightarrow\infty$, such that $n/T\rightarrow\kappa$,
where $0<\kappa<\infty$. $C$ and $c$ will be used to denote finite large and
non-zero small positive numbers, respectively, that are bounded in $n$ and
$T$. They can take different values at different instances. If $\left\{
f_{n}\right\}  _{n=1}^{\infty}$ is any real sequence and $\left\{
g_{n}\right\}  _{n=1}^{\infty}$ is a sequence of positive real numbers, then
$f_{n}=O(g_{n})$, if there exists $C$ such that $\left\vert f_{n}\right\vert
/g_{n}\leq C$ for all $n$. $f_{n}=o(g_{n})$ if $f_{n}/g_{n}\rightarrow0$ as
$n\rightarrow\infty$. If $\left\{  f_{n}\right\}  _{n=1}^{\infty}$ and
$\left\{  g_{n}\right\}  _{n=1}^{\infty}$ are both positive sequences of real
numbers, then $f_{n}=\ominus\left(  g_{n}\right)  $ if there exists $n_{0}%
\geq1$ and positive finite constants $C_{0}$ and $C_{1}$, such that
$\inf_{n\geq n_{0}}\left(  f_{n}/g_{n}\right)  \geq C_{0},$ and $\sup_{n\geq
n_{0}}\left(  f_{n}/g_{n}\right)  \leq C_{1}$.

\section{The latent factor model\label{pure_factor}}

To simplify the exposition and to highlight the main issue of concern, namely
the presence of \textit{latent} (unobserved) factors in the panel regression
model, initially we focus on the approximate factor model, due to
\cite{chamberlain1983arbitrage}, and assume that for each unit $i=1,2,\ldots
,n$,
\begin{equation}
y_{it}=\boldsymbol{\gamma}_{i}^{^{\prime}}\mathbf{f}_{t}+u_{it},\text{ for
}t=1,2,...,T, \label{mod2}%
\end{equation}
where%
\begin{equation}
u_{it}/\sigma_{i}=\varepsilon_{it}\left(  \lambda_{T}\right)  =\varepsilon
_{it}+\lambda_{T}\sum_{j=1}^{n}w_{ij}\varepsilon_{jt},\text{ } \label{uit:p}%
\end{equation}
$\sup_{i}\sigma_{i}^{2}<C<\infty$ and $\inf_{i}\sigma_{i}^{2}>c>0$, $\left\{
w_{ij}:j=1,2,\ldots,n\right\}  $ represent the strengths of connections of
unit $i$ with the rest of units, $\mathbf{f}_{t}=\left(  f_{1t},f_{2t}%
,...,f_{m_{0}t}\right)  ^{\prime}$ is an $m_{0}\times1$ vector of latent
factors with $m_{0}$ fixed, and $\boldsymbol{\gamma}_{i}=\left(  \gamma
_{i1},\gamma_{i2},...,\gamma_{im_{0}}\right)  ^{\prime}$ is the vector of
associated factor loadings.

We make the following assumptions that are mostly standard in the analysis of
latent factor models.

\begin{assumption}
\label{ass:factors} (a) $\mathbf{f}_{t}$ is a covariance-stationary process
with zero means and the covariance matrix, $E\left(  \mathbf{f}_{t}%
\mathbf{f}_{t}^{\prime}\right)  =\mathbf{\Sigma}_{ff}>\mathbf{0}$. (b)
$T^{-1}\sum_{t=1}^{T}\left[  \Vert\mathbf{f}_{t}\Vert^{j}-E\left(
\Vert\mathbf{f}_{t}\Vert^{j}\right)  \right]  {\rightarrow}_{p}0$, for
$j=3,4$, as $T\rightarrow\infty$.\textbf{ }(c) There exists $T_{0}$ such that
for all $T>T_{0}$, $T^{-1}\sum_{t=1}^{T}\mathbf{f}_{t}\mathbf{f}_{t}^{\prime
}=T^{-1}\mathbf{F}^{\prime}\mathbf{F=\Sigma}_{T,ff}>\mathbf{0}$, and
$\mathbf{\Sigma}_{T,ff}\rightarrow_{p}E\left(  T^{-1}\mathbf{F}^{\prime
}\mathbf{F}\right)  =\mathbf{\Sigma}_{ff}>\mathbf{0}$, where $\mathbf{F=(f}%
_{1},\mathbf{f}_{2},...,\mathbf{f}_{T})^{\prime}$. (d) There exist constants
$r_{1}$, $C_{0}$ and $C_{1}>0$ such that
\begin{equation}
\sup_{j,t}\Pr\left(  \left\vert f_{jt}\right\vert >a\right)  \leq C_{0}%
\exp\left(  -C_{1}a^{r_{1}}\right),  \label{sup:f}%
\end{equation}
all $a>0$.
\end{assumption}

\begin{assumption}
\label{ass:errors} (a) $\varepsilon_{it}\sim IID\left(  0,1\right)  $ for all
$i$ and $t$, and there exist constants $r_{2}$, $C_{2}$ and $C_{3}>0$ such
that%
\begin{equation}
\sup_{i,t}\Pr\left(  \left\vert \varepsilon_{it}\right\vert >a\right)  \leq
C_{2}\exp\left(  -C_{3}a^{r_{2}}\right),  \label{sup:u}%
\end{equation}
for all $a>0$. (b) $\mu_{\max}\left(  \mathbf{V}_{\varepsilon T}\right)
=O_{p}(n/T)$, where $\mathbf{V}_{\varepsilon T}=T^{-1}\sum_{t=1}%
^{T}\boldsymbol{\varepsilon}_{\circ t}\boldsymbol{\varepsilon}_{\circ
t}^{\prime}$, and $\boldsymbol{\varepsilon}_{\circ t}=(\varepsilon
_{1t},\varepsilon_{2t},...,\varepsilon_{nt})^{\prime}$. (c) $\varepsilon_{it}$
is distributed independently of $\mathbf{f}_{t^{\prime}}$, for all $i,t$ and
$t^{\prime}$, and there exists $v_{0}>0$ such that for all $v=T-m_{0}>v_{0}$
\begin{equation}
\inf_{i}\left(  v^{-1}\boldsymbol{\varepsilon}_{i\circ}^{\prime}\mathbf{M}%
_{F}\boldsymbol{\varepsilon}_{i\circ}\right)  >c>0, \label{epsic}%
\end{equation}
where $\boldsymbol{\varepsilon}_{i\circ}=(\varepsilon_{i1},\varepsilon
_{i2},...,\varepsilon_{iT})^{\prime}$, and $\mathbf{M}_{F}=\mathbf{I}%
_{T}-\mathbf{F(F}^{\prime}\mathbf{F)}^{-1}\mathbf{F}^{\prime}$.
\end{assumption}

\begin{assumption}
\label{ass:loadings} The $m_{0}\times1$ vector of factor loadings
$\boldsymbol{\gamma}_{i}$ is bounded such that $\sup_{i}\left\Vert
\boldsymbol{\gamma}_{i}\right\Vert <C$, $n^{-1}\sum_{i=1}^{n}%
\boldsymbol{\gamma}_{i}\boldsymbol{\gamma}_{i}^{\prime}=\mathbf{\Sigma
}_{n,\gamma\gamma}\rightarrow\mathbf{\Sigma}_{\gamma\gamma}>\mathbf{0}$, and
\begin{equation}
1-\theta_{n}>0, \label{thetaCon}%
\end{equation}
for all $n>n_{0}$ and as $n\rightarrow\infty,$ where $\theta_{n}=1-n^{-1}%
\sum_{i=1}^{n}a_{i,n}^{2}$, with $
\text{ }a_{i,n}=1-\sigma_{i}\boldsymbol{\varphi}_{n}^{\prime}%
\boldsymbol{\gamma}_{i},
$ and $\boldsymbol{\varphi}_{n}=n^{-1}\sum_{i=1}^{n}\boldsymbol{\gamma}%
_{i}/\sigma_{i}.$
\end{assumption}

\begin{remark}
\label{rm:literature}The above assumptions relate closely to those made in the
literature on CD tests and high dimensional factor models. See, for example,
the assumptions in
\cite{pesaran2004general,pesaran2015testing,pesaran2021general}, and
assumptions in \cite{bai2003inferential}.\textbf{\ }Part (a) of Assumption
\ref{ass:factors} will be relaxed when we consider panel regression models
with observed regressors. The sub-exponential type conditions (\ref{sup:f})
and (\ref{sup:u}) are needed for bounding the probabilities across all $i$,
and are also adopted by \cite{fan2011high}, \cite{fan2013large} and
\cite{chudik2018one}.
\end{remark}

\begin{remark}
\label{rm:epsiC}Since $\mathbf{M}_{F}$ is an idempotent matrix with rank $v$,
then there exists the orthogonal transformation $\boldsymbol{\eta}_{i}%
=(\eta_{i1},\eta_{i2},...,\eta_{iv})^{\prime}=\mathbf{H}%
\boldsymbol{\varepsilon}_{i\circ}$ where $\mathbf{H}$ is a $v\times T$ matrix
such that $v^{-1}\boldsymbol{\varepsilon}_{i\circ}^{\prime}\mathbf{M}%
_{F}\boldsymbol{\varepsilon}_{i\circ}=v^{-1}\boldsymbol{\eta}_{i}^{\prime
}\boldsymbol{\eta}_{i}>0,$ and $E\left(  \eta_{it}^{2}\right)  =1$. See, for
example, \citet[p. 412]{durbin1950testing}. Therefore, there exists a finite
$v_{0}$ such that for all $v>v_{0}$ condition (\ref{epsic}) is met, and as
result we also have
\begin{equation}
E\left(  \frac{\boldsymbol{\varepsilon}_{i\circ}^{\prime}\mathbf{M}%
_{F}\boldsymbol{\varepsilon}_{i\circ}}{v}\right)  ^{-s}<\frac{1}{c^{s}%
}<C<\infty, \label{epsicA}%
\end{equation}
for all $i$ and any fixed $s>0$.
\end{remark}

The focus of this paper is on testing the null hypothesis of error
cross-sectional independence:
\begin{equation}
H_{0}:\lambda_{T}=0, \label{H0}%
\end{equation}
where $\lambda_{T}$ is defined by equation (\ref{uit:p}). For the analysis of
power we consider local alternatives:%
\begin{equation}
H_{1T}:\lambda_{T}=c_{\lambda}T^{-1/2}, \label{H1T}%
\end{equation}
with $c_{\lambda}\neq0$. We also introduce the following assumption on
$\mathbf{W}=\left(  w_{ij}\right)  $.

\begin{assumption}
\label{ass:ws}The connection matrix $\mathbf{W}$ has bounded maximum absolute
column and row sum norms:
\begin{equation}
\left\Vert \mathbf{W}\right\Vert _{1}=\sup_{j}\sum_{i=1}^{n}\left\vert
w_{ij}\right\vert <C,\text{ and }\left\Vert \mathbf{W}\right\Vert _{\infty
}=\sup_{i}\sum_{j=1}^{n}\left\vert w_{ij}\right\vert <C, \label{sum:wi}%
\end{equation}
and $w_{ii}=0$ for all $i$.
\end{assumption}

\begin{remark}
The connection matrix does not need to be symmetric. To see this, we can
consider a more generalized setup of idiosyncratic errors,
\begin{equation}
\frac{u_{it}}{\sigma_{i}}=\varepsilon_{it}+\lambda_{T}{\sum\limits_{j=1}^{n}%
}\frac{\rho_{i}\sigma_{j}\mathring{w}_{ij}}{\sigma_{i}}\varepsilon_{jt},
\label{uit:w}%
\end{equation}
where $\left\vert \lambda_{T}\right\vert <C$ and $\left\vert \rho
_{i}\right\vert <C$. It is clear that (\ref{uit:w}) reduces to $u_{it}$ in
(\ref{uit:p}) by letting $w_{ij}=\rho_{i}\sigma_{i}^{-1}\sigma_{j}\mathring
{w}_{ij}$. In this way, $\mathbf{W}$ need not be symmetric even if the
connection matrix $\left(  \mathring{w}_{ij}\right)  $ is symmetric. Under
local alternatives and Assumption \ref{ass:ws}, the specification of
(\ref{uit:p}) allows for a wide range of spatial and network dependence
characterized by the connection matrix. It is in accord with an early
discussion in \cite{chudik2011weak} that spatial dependence can be captured by
a weak factor model, so long as the number of weak factors tends to infinity
with $n$, which is ruled out in standard factor models where the number of
latent factors is assumed to be fixed.
\end{remark}

\begin{remark}
It is also easily seen that $\varepsilon_{it}\left(  \lambda_{T}\right)  $
defined by (\ref{uit:p}) is sub-exponential for any $\left\vert \lambda
_{T}\right\vert <C$. This follows since by Assumption \ref{ass:errors}
$\left\{  \varepsilon_{it}\right\}  $ are independently and identically
distributed sub-exponential processes, and by Assumption \ref{ass:ws}
$\sup_{i}\sum_{j=1}^{n}\left\vert w_{ij}\right\vert <C.$ For a proof see part
(b) of Theorem 5.5 in \cite{goldie1998subexponential} or Theorem 2.8.2 in
\cite{vershynin2018high}.
\end{remark}

\section{Estimation of latent factors and the CD test\label{sec:cd}}

Following the literature we use principal component (PC) analysis to estimate
the latent factors and their loadings. Let $\mathbf{Y=(y}_{1},\mathbf{y}%
_{2},\ldots,\mathbf{y}_{n}\mathbf{)}$ be the $T\times n$ matrix of
observations on $y_{it}$, where $\mathbf{y}_{i}=\left(  y_{i1},y_{i2}%
,\ldots,y_{iT}\right)  ^{^{\prime}}$ and denote the first $m_{0}$ largest
eigenvalues of $\mathbf{Y}^{^{\prime}}\mathbf{Y}$ by $(\hat{\rho}_{1}%
,\hat{\rho}_{2},...,\hat{\rho}_{m_{0}})$, and its associated $n\times m_{0}$
matrix of orthonormal eigenvectors by $\mathbf{\hat{Q}}$. The PC estimators of
factors $\mathbf{F}=\left(  \mathbf{f}_{1},\mathbf{f}_{2},\ldots
,\mathbf{f}_{T}\right)  ^{^{\prime}}$ and their loadings $\boldsymbol{\Gamma
}=\left(  \boldsymbol{\gamma}_{1},\boldsymbol{\gamma}_{2},\ldots
,\boldsymbol{\gamma}_{n}\right)  ^{^{\prime}}$ are then given by%
\begin{equation}
\mathbf{\mathbf{\hat{F}}}=\mathbf{\left(  \mathbf{\hat{f}}_{1},\mathbf{\hat
{f}}_{2},\ldots,\mathbf{\hat{f}}_{T}\right)  ^{\prime}}=\frac{1}{\sqrt{n}%
}\mathbf{\mathbf{Y\hat{Q}},}\text{ and }\boldsymbol{\hat{\Gamma}}=\left(
\boldsymbol{\hat{\gamma}}_{1},\boldsymbol{\hat{\gamma}}_{2},\ldots
,\boldsymbol{\hat{\gamma}}_{n}\right)  ^{^{\prime}}=\sqrt{n}\mathbf{\hat{Q}.}
\label{hatg}%
\end{equation}
By construction $n^{-1}\boldsymbol{\hat{\Gamma}}^{\prime}\boldsymbol{\hat
{\Gamma}}=\mathbf{I}_{m_{0}},$ and $T^{-1}\mathbf{\hat{F}}^{^{\prime}%
}\mathbf{\hat{F}=D}_{nT}$, where $\mathbf{D}_{nT}=(nT)^{-1}\mathrm{diag}%
(\hat{\rho}_{1},\hat{\rho}_{2},...,\hat{\rho}_{m_{0}})$. Under Assumptions
\ref{ass:factors}-\ref{ass:ws} the asymptotic results derived by
\cite{bai2003inferential} for PCs continue to apply here, and $u_{it}$ can be
consistently estimated by
\begin{equation}
\hat{u}_{it}=y_{it}-\boldsymbol{\hat{\gamma}}_{i}^{^{\prime}}\mathbf{\hat{f}%
}_{t}. \label{eit}%
\end{equation}

The CD test is based on the standardized residuals,
\begin{equation}
\tilde{\varepsilon}_{it,T}=\frac{\hat{u}_{it}}{\hat{\sigma}_{i,T}},
\label{egziit}%
\end{equation}
where $\hat{\sigma}_{i,T}=\left(  T^{-1}\sum_{t=1}^{T}\hat{u}_{it}^{2}\right)
^{1/2}=\left(  T^{-1}\mathbf{y}_{i}^{\prime}\mathbf{M}_{\hat{F}}\mathbf{y}%
_{i}\right)  ^{1/2}$, and $\mathbf{M}_{\hat{F}}=\mathbf{I}_{T}-\mathbf{\hat
{F}(\hat{F}}^{\prime}\mathbf{\hat{F})}^{-1}\mathbf{\hat{F}}^{\prime}$. Only
units with non-zero $\hat{\sigma}_{i,T}^{2}$ are included in the construction
of the CD test, namely
\begin{equation}
\inf_{i}\hat{\sigma}_{i,T}^{2}>c>0. \label{infhsigma2}%
\end{equation}
The standard CD test statistic based on the residuals, (\ref{eit}), is given
by%
\begin{equation}
CD=\sqrt{\frac{2T}{n(n-1)}}\left(  \sum_{i=1}^{n-1}\sum_{j=i+1}^{n}\hat{\rho
}_{ij,T}\right)  , \label{cd}%
\end{equation}
where $\hat{\rho}_{ij,T}=T^{-1}\sum_{t=1}^{T}\tilde{\varepsilon}_{it,T}%
\tilde{\varepsilon}_{jt,T}$.

\cite{juodis2022incidental} apply the CD test to a panel regression model with
latent factors, assuming that all the factors are strong. They show in that
case $CD=O_{p}\left(  \sqrt{T}\right)  $, and its use will lead to gross
over-rejection of the null of error cross-sectional independence. To deal with
the over-rejection problem, these authors propose a randomized CD test,
CD$_{W+}$. However, as shown in Section \ref{section.JR} of the supplement,
the CD$_{W+}$ test is likely to over-reject and tends to lack power against
spatial and network alternatives. See also Section
\ref{sec:simulation results} for Monte Carlo evidence on the small sample
performance of the CD$_{W+}$ test.

\section{The bias-corrected CD test\label{CD*}}

The main reason for the failure of the standard CD test in the case of latent
factor models lies in the fact that both the factors and their loadings are
unobserved and need to be estimated, and the differences between
$\boldsymbol{\hat{\gamma}}_{i}^{^{\prime}}\mathbf{\hat{f}}_{t}$ and
$\boldsymbol{\gamma}_{i}^{^{\prime}}\mathbf{f}_{t}$ do not tend to zero at a
sufficiently fast rate for the CD test to be valid. Since the errors from
estimation of $\boldsymbol{\gamma}_{i}^{^{\prime}}\mathbf{f}_{t}$ are included
in the residuals $\hat{u}_{it}$, the resultant CD statistic tends to
over-state the degree of underlying error cross-sectional dependence. This
problem also arises when latent factors are proxied by cross section averages,
as is the case when panel data models are estimated using correlated common
effect (CCE) estimators proposed by \cite{pesaran2006estimation}, which we
shall address below in Section \ref{panel_model}.

We propose a bias-corrected CD test statistic, which we denote by $CD^{\ast}$,
that \textit{directly} corrects the asymptotic bias of the CD test using the
estimates of the factor loadings and error variances. To obtain the expression
for the bias we first note under the null hypothesis of cross-sectional
independence, $CD=z_{nT}+o_{p}(1)$ and
\begin{equation}
z_{nT}=\frac{1}{\sqrt{T}}\sum_{t=1}^{T}\left(  \frac{\xi_{t,n}^{2}-1}{\sqrt
{2}}\right)  +o_{p}\left(  1\right)  , \label{znT}%
\end{equation}
where
\begin{equation}
\xi_{t,n}=\frac{1}{\sqrt{n}}\sum_{i=1}^{n}a_{i,n}\varepsilon_{it},\text{
}a_{i,n}=1-\sigma_{i}\boldsymbol{\varphi}_{n}^{\prime}\boldsymbol{\gamma}_{i},
\label{ksi.ain}%
\end{equation}
$\boldsymbol{\varphi}_{n}=n^{-1}\sum_{i=1}^{n}\boldsymbol{\gamma}_{i}%
/\sigma_{i}$,$\boldsymbol{\ }$which is established in the proof of Proposition
\ref{Proposition:CD(s)} in the Appendix.\textbf{ }Since $a_{i,n}$ are given
constants, then $E\left(  \xi_{t,n}\right)  =0$,
\begin{equation}
E\left(  \xi_{t,n}^{2}\right)  \equiv\omega_{n}^{2}=\frac{1}{n}\sum_{i=1}%
^{n}a_{i,n}^{2}=n^{-1}\sum_{i=1}^{n}\left(  1-\sigma_{i}\boldsymbol{\varphi
}_{n}^{\prime}\boldsymbol{\gamma}_{i}\right)  ^{2}, \label{EknT2}%
\end{equation}
and
\begin{equation}
Var\left(  \xi_{t,n}^{2}\right)  =2\left(  \frac{1}{n}\sum_{i=1}^{n}%
a_{i,n}^{2}\right)  ^{2}-\kappa_{2}\left(  \frac{1}{n^{2}}\sum_{i=1}%
^{n}a_{i,n}^{4}\right)  , \label{vknT2}%
\end{equation}
where $\kappa_{2}=E\left(  \varepsilon_{it}^{4}\right)  -3$. Clearly, when the
errors are Gaussian then $E\left(  \varepsilon_{it}^{4}\right)  =3$, and the
second term of $Var\left(  \xi_{t,n}^{2}\right)  $ defined by (\ref{vknT2}) is
exactly zero. But even for non-Gaussian errors the second term of $Var\left(
\xi_{t,n}^{2}\right)  $ is negligible when $n$ is sufficiently large. To see
this note that under Assumptions \ref{ass:errors} and \ref{ass:loadings}
\[
\frac{1}{n^{2}}\sum_{i=1}^{n}a_{i,n}^{4}=\frac{1}{n^{2}}\sum_{i=1}%
^{n}(1-\sigma_{i}\boldsymbol{\varphi}_{n}^{\prime}\boldsymbol{\gamma}_{i}%
)^{4}\leq\frac{C}{n},
\]
where $C$ is a positive constant. Since $\varepsilon_{it}$ (and henceforth
$\xi_{t,n}$) are assumed to be serially independent, then we can also compute
the mean and the variance of $z_{nT}$ as%
\begin{align*}
E\left(  z_{nT}\right)   &  =\frac{1}{\sqrt{T}}\sum_{t=1}^{T}\left(
\frac{\omega_{n}^{2}-1}{\sqrt{2}}\right)  =\sqrt{\frac{T}{2}}\left(
\omega_{n}^{2}-1\right)  ,\\
Var\left(  z_{nT}\right)   &  =\frac{1}{T}\sum_{t=1}^{T}Var\left(  \frac
{\xi_{t,n}^{2}}{\sqrt{2}}\right)  =\frac{Var\left(  \xi_{t,n}^{2}\right)  }%
{2}.
\end{align*}
The above expressions for $E\left(  z_{nT}\right)  $ give the source of the
asymptotic bias of $CD$ as $E\left(  z_{nT}\right)  $ rises with $\sqrt{T}$,
unless
\[
\lim_{n\rightarrow\infty}\omega_{n}^{2}=\lim_{n\rightarrow\infty}n^{-1}%
\sum_{i=1}^{n}\left(  1-\sigma_{i}\boldsymbol{\varphi}_{n}^{\prime
}\boldsymbol{\gamma}_{i}\right)  ^{2}=1.
\]
A bias-corrected version of $CD$ can be defined by%
\begin{equation}
CD^{\ast}(\theta_{n})=\frac{CD+\sqrt{\frac{T}{2}}\theta_{n}}{1-\theta_{n}},
\label{CD(s)}%
\end{equation}
where
\begin{equation}
\theta_{n}=1-n^{-1}\mathbf{a}_{n}^{\prime}\mathbf{a}_{n}\text{, }%
\mathbf{a}_{n}=\left(  a_{1,n},a_{2,n},\ldots,a_{n,n}\right)  ^{\prime},
\label{thetan}%
\end{equation}
and $1-\theta_{n}>0$ by condition (\ref{thetaCon}). Also upon using
(\ref{ksi.ain})%
\begin{equation}
\theta_{n}=2\left(  n^{-1}\sum_{i=1}^{n}\sigma_{i}\boldsymbol{\gamma}%
_{i}^{\prime}\right)  \boldsymbol{\varphi}_{n}-\boldsymbol{\varphi}%
_{n}^{\prime}\left(  \frac{1}{n}\sum_{i=1}^{n}\sigma_{i}^{2}\boldsymbol{\gamma
}_{i}\boldsymbol{\gamma}_{i}^{\prime}\right)  \boldsymbol{\varphi}_{n}\text{.
} \label{thetan1}%
\end{equation}
The main difference between $CD$ and $CD^{\ast}\left(  \theta_{n}\right)  $
depends on the magnitude of $\sqrt{T}\theta_{n}$, which in turn depends on the
strengths of the factor loadings. Following \cite{bailey2021measurement}, we
measure the strength of factor $j$ by $\alpha_{j},$ defined by the rate at
which the sum of absolute values of factor loadings rises with $n$, namely
\begin{equation}
\sum_{i=1}^{n}\left\vert \gamma_{ij}\right\vert =\ominus\left(  n^{\alpha_{j}%
}\right)  ,\text{ for }j=1,2,\ldots,m_{0}, \label{fs}%
\end{equation}
where $0\leq\alpha_{j}\leq1$. Using (\ref{thetan1}) it is now easily
established that $\theta_{n}$ $=\ominus\left(  n^{\alpha-1}\right)  $, where
$\alpha=max_{j=1,2,...,m_{0}}(\alpha_{j})$, and $\theta_{n}$ does not tend to
zero when there is at least one strong factor in the panel data
model.\footnote{For a proof see Section \ref{section.thetan} of the supplement.}. Therefore, based on (\ref{CD(s)}), the relationship between $CD$
and $CD^{\ast}\left(  \theta_{n}\right)  $ is essentially controlled by the
maximum factor strength $\alpha\ $as $\sqrt{T}\theta_{n}=O\left(
T^{1/2}n^{\alpha-1}\right)  $. Suppose now $T=\ominus\left(  n^{d}\right)  $
for some $d>0$, then $\sqrt{T}\theta_{n} =\ominus\left(  n^{\alpha+d/2-1}\right)  ,$ and the bias
correction becomes negligible if $\alpha<1-d/2$. Under the required relative
expansion rates of $n$ and $T$ entertained in this paper, we need to set
$d=1$, and for this choice the bias correction term, $\sqrt{T}\theta_{n}$,
becomes negligible if $\alpha<1/2$, and as a result $CD$ and
$CD^{\ast}\left(  \theta_{n}\right)  $ will be asymptotically equivalent. In fact,
the case of strong factors assumed in the PCA literature corresponds to
$\alpha_{j}=1$ for $j=1,2,\ldots,m_{0}$, which is also fulfilled by Assumption
\ref{ass:loadings} and used in our mathematical derivations.

The theoretical results for $CD^{\ast}(\theta_{n})$ are summarized in the
following proposition.

\begin{proposition}
\label{Proposition:CD(s)}Suppose that observations on $y_{it}$, for
$i=1,2,\ldots,n,$ and $t=1,2,\ldots,T$ are generated from the pure latent
factor model given by (\ref{mod2}) and (\ref{uit:p}), where the number of
factors, $m_{0}$, is known. Consider the statistic $CD^{\ast}\left(
\theta_{n}\right)  $ defined by (\ref{CD(s)}) and assume $(n,T)\rightarrow
\infty$, such that $n/T\rightarrow\kappa,$ and $0<\kappa<\infty$.

(a) Under the null hypothesis $H_{0}$, defined by (\ref{H0}), and supposing
that Assumptions \ref{ass:factors} to \ref{ass:loadings} hold, then
\begin{equation}
CD^{\ast}\left(  \theta_{n}\right)  \rightarrow_{d}\mathcal{N}(0,1).
\label{CDs:H0}%
\end{equation}

(b) Under local alternatives $H_{1T}$, defined by (\ref{H1T}), and supposing
that Assumptions \ref{ass:factors} to \ref{ass:ws} hold, then
\begin{equation}
CD^{\ast}\left(  \theta_{n}\right)  \rightarrow_{d}\mathcal{N}(\phi,1),
\label{CDs:H1T}%
\end{equation}
where $\phi=\lim_{n\rightarrow\infty}\phi_{n}$ and
\begin{equation}
\phi_{n}=\frac{\sqrt{2}c_{\lambda}}{1-\theta_{n}}n^{-1}\mathbf{a}_{n}^{\prime
}\mathbf{Wa}_{n}, \label{phi_nT}%
\end{equation}
$\mathbf{W}=\left(  w_{ij}\right)  $ is the connection matrix, $\mathbf{a}%
_{n}=\left(  a_{1,n},a_{2,n},\ldots,a_{n,n}\right)  ^{\prime}$, with $a_{i,n}$
and $\theta_{n}$ defined by (\ref{ksi.ain}) and (\ref{thetan}), respectively.
\end{proposition}

For a proof see the Appendix.\bigskip

The bias-corrected test statistic, $CD^{\ast}(\theta_{n}),$ depends on the
unknown parameter, $\theta_{n}$, which can be estimated by
\begin{equation}
\hat{\theta}_{nT}=1-\frac{1}{n}\sum_{i=1}^{n}\hat{a}_{i,nT}^{2} \label{htheta}%
\end{equation}
where
\begin{equation}
\hat{a}_{i,nT}=1-\hat{\sigma}_{i,T}\left(  \boldsymbol{\hat{\varphi}}%
_{nT}^{\prime}\boldsymbol{\hat{\gamma}}_{i}\right)  \text{, and\ }%
\boldsymbol{\hat{\varphi}}_{nT}=\frac{1}{n}\sum_{i=1}^{n}\boldsymbol{\hat
{\gamma}}_{i}/\hat{\sigma}_{i,T}. \label{aihat}%
\end{equation}
The following proposition establishes the probability order of the difference
between $\hat{\theta}_{nT}$ and $\theta_{n}$.

\begin{proposition}
\label{Proposition:theta}Suppose that observations on $y_{it}$, for
$i=1,2,\ldots,n,$ and $t=1,2,\ldots,T$ are generated from the pure latent
factor model given by (\ref{mod2}) and (\ref{uit:p}), where the number of
factors, $m_{0}$, is known, and $\lambda_{T}=c_{\lambda}T^{-1/2}$ with
$\left\vert c_{\lambda}\right\vert <\infty$. Consider the term $\theta_{n}$ in
the $CD^{\ast}\left(  \theta_{n}\right)  $ statistic given by (\ref{thetan})
and its estimator $\hat{\theta}_{nT}$ given by (\ref{htheta}). Let Assumptions
\ref{ass:factors} to \ref{ass:ws} hold and $(n,T)\rightarrow\infty$, such that
$n/T\rightarrow\kappa$, where $0<\kappa<\infty$. Then
\begin{equation}
\sqrt{T}\left(  \hat{\theta}_{nT}-\theta_{n}\right)  =o_{p}(1).
\label{diff theta}%
\end{equation}

\end{proposition}

\medskip For a proof see the Appendix.\bigskip

Consider now the following feasible version of $CD^{\ast}\left(  \theta
_{n}\right)  $,%

\begin{equation}
CD^{\ast}\left(  \hat{\theta}_{nT}\right)  =\frac{CD+\sqrt{\frac{T}{2}}%
\hat{\theta}_{nT}}{1-\hat{\theta}_{nT}}, \label{CD*a}%
\end{equation}
and note that in view of (\ref{CD(s)}) and (\ref{diff theta}) we have
\begin{align*}
CD^{\ast}\left(  \hat{\theta}_{nT}\right)   &  =\left(  \frac{1-\theta_{n}%
}{1-\hat{\theta}_{nT}}\right)  \frac{CD+\sqrt{\frac{T}{2}}\theta_{n}%
+\sqrt{\frac{T}{2}}\left(  \hat{\theta}_{nT}-\theta_{n}\right)  }{1-\theta
_{n}}\\
&  =\left(  \frac{1-\theta_{n}}{1-\hat{\theta}_{nT}}\right)  \left[  CD^{\ast
}\left(  \theta_{n}\right)  +o_{p}(1)\right]  .
\end{align*}
Also,
\[
\frac{1-\theta_{n}}{1-\hat{\theta}_{nT}}=1+\frac{\sqrt{T}\left(  \hat{\theta
}_{nT}-\theta_{n}\right)  }{\sqrt{T}\left(  1-\theta_{n}\right)  -\sqrt
{T}\left(  \hat{\theta}_{nT}-\theta_{n}\right)  }=1+o_{p}(1),
\]
and hence $CD^{\ast}\left(  \hat{\theta}_{nT}\right)  =CD^{\ast}(\theta
_{n})+o_{p}(1)$. We refer to $CD^{\ast}\left(  \hat{\theta}_{nT}\right)  $
simply as $CD^{\ast}$ and the test based on it as the CD$^{\text{*}}$ test.
The main result of the paper for pure latent factor models is summarized in
the following theorem.

\begin{theorem}
\label{Theorem:CDs} Suppose that observations on $y_{it}$, for $i=1,2,\ldots
,n,$ and $t=1,2,\ldots,T$ are generated from the pure latent factor model
given by (\ref{mod2}) and (\ref{uit:p}), where the number of factors, $m_{0}$,
is known. Consider the statistic $CD^{\ast}$ defined by (\ref{CD*a}), and
assume $(n,T)\rightarrow\infty$, such that $n/T\rightarrow\kappa,$ and
$0<\kappa<\infty$.

(a) Under the null hypothesis $H_{0}$, defined by (\ref{H0}), and supposing
that Assumptions \ref{ass:factors} to \ref{ass:loadings} hold, then
\[
CD^{\ast}\rightarrow_{d}\mathcal{N}(0,1).
\]

(b) Under local alternatives $H_{1T}$, defined by (\ref{H1T}), and supposing
that Assumptions \ref{ass:factors} to \ref{ass:ws} hold, then
\[
CD^{\ast}\rightarrow_{d}\mathcal{N}\left(  \phi,1\right)  ,
\]
where $\phi=\lim_{n\rightarrow\infty}\phi_{n}$, and $\phi_{n}$ is defined by
(\ref{phi_nT}).
\end{theorem}

For a proof see the Appendix.\bigskip

This theorem establishes the conditions under which the proposed CD$^{\ast}$
test has the correct size asymptotically. It also shows that the CD$^{\ast}$
test has power against network alternatives if the limit of $\phi_{n}$ defined
by (\ref{phi_nT}) is nonzero, namely so long as $\lim_{n\rightarrow\infty
}n^{-1}\mathbf{a}_{n}^{\prime}\mathbf{Wa}_{n}\neq0$. This condition is likely
to be satisfied if the connection matrix, $\mathbf{W}$, is not too sparse,
although it must be sufficiently sparse so that Assumption \ref{ass:ws} is
met. In the case where there are no latent factors, $\mathbf{a}_{n}%
=(1,1,...,1)^{\prime}$, it is sufficient that $n^{-1}\sum_{i=1}^{n}%
\sum_{j=1}^{n}w_{ij}\neq0$.

To our knowledge, this is the first paper to provide a formal derivation of the power
function of CD tests against spatial and network alternatives, which applies
equally to the CD test for panel data models without latent factors. Hence,
our derivation of the power function can be used to supplement earlier
research on CD tests.

As we shall see from the Monte Carlo results reported below, the CD$^{\ast}$
test performs well even if some of the latent factors happen to be weak with
$\alpha_{j}\in(0,1/2]$ or semi-strong with $\alpha_{j}\in\left(  1/2,1\right)
$. This is because when a factor is weak, it does not matter if its estimation
by PCA is not consistent at the standard rate of $\delta_{nT}=\min
(n^{1/2},T^{1/2})$, and its inclusion or exclusion from the analysis has no
material impact on the $CD^{\ast}$ statistics for $n$ and $T$ sufficiently
large. In view of this result, in the mathematical derivations it is
sufficient to consider the case of strong factors, and let the weak factors to
be absorbed in the error term.

However, it should be acknowledged that our derivations do not take account of
the case when one or more of the factors are semi-strong. Such an extension is
beyond the scope of the present paper, although recent studies by
\cite{bai2023approximate} and \cite{jiang2023revisiting} show that PCA
estimation is asymptotically valid for factor models so long as factor
strengths are all above $1/2$. It is therefore reasonable to conjecture that
the CD$^{\text{*}}$ test applied to PCA residuals will be asymptotically valid
even if some of the factors are semi-strong, namely if $1/2<\alpha_{j}<1$.

In practice, the true number of factors, $m_{0}$, is unknown. In cases where
the estimated number of factors, $\hat{m}$, is underestimated ($\hat{m}<m_{0}%
$), the CD$^{\text{*}}$ test has power against the missing strong factors.
However, the rejection of the null hypothesis by the CD$^{\text{*}}$ test does
not necessarily mean there are missing factors, since the rejection could be
due to network error dependence. It is, therefore, important for the
investigator to decide on the number of strong latent factors before the
implementation of the proposed CD$^{\text{*}}$ test. To that end, we refer the
reader to the information criterion approach advanced by
\cite{bai2002determining} and the eigenvalue ratio test of
\cite{ahn2013eigenvalue}, for example.

\section{The CD$^{\text{*}}$ test for panel regression models with interactive
effects\label{panel_model}}

Consider now the factor model (\ref{mod2}) augmented with observed regressors
\begin{equation}
y_{it}=\boldsymbol{\alpha}_{i}^{\prime}\mathbf{d}_{t}+\boldsymbol{\beta}%
_{i}^{\prime}\mathbf{x}_{it}+\boldsymbol{\gamma}_{i}^{^{\prime}}\mathbf{f}%
_{t}+u_{it}, \label{mod3}%
\end{equation}
where $\mathbf{d}_{t}$ is a $k_{d}\times1$ vector of observed common factors,
$\mathbf{x}_{it}$ is a $k_{x}\times1$ vector of unit-specific observed
covariates, $\boldsymbol{\alpha}_{i}=(\alpha_{i1},\alpha_{i2},...,\alpha
_{ik_{d}})^{\prime}$, and $\boldsymbol{\beta}_{i}=(\beta_{i1},\beta
_{i2},...,\beta_{ik_{x}})^{\prime}$ are their associated unknown coefficients.
To highlight the relevance of the CD test for this set up, model (\ref{mod3})
can be written alternatively as
\begin{equation}
y_{it}=\boldsymbol{\alpha}_{i}^{\prime}\mathbf{d}_{t}+\boldsymbol{\beta}%
_{i}^{\prime}\mathbf{x}_{it}+v_{it}\text{, } \label{mod1}%
\end{equation}
where the errors, $v_{it}$, follow the factor structure
\begin{equation}
v_{it}=\boldsymbol{\gamma}_{i}^{\prime}\mathbf{f}_{t}+u_{it}. \label{vit}%
\end{equation}
The CD test is applicable, without any modifications, to test the null
hypothesis that the errors of the panel regression model, $v_{it}$, are
cross-sectionally independent, so long as the regressors, $\mathbf{d}_{t}$ and
$\mathbf{x}_{it}$, are strictly exogenous with respect to $v_{it}$. When the
regressors are correlated with the errors, the least squares estimates of
$v_{it}$ become inconsistent and the standard CD test will fail. One important
example of endogeneity arises when both $y_{it}$ and $\mathbf{x}_{it}$ are
driven by the same latent factor(s). \cite{pesaran2006estimation} formalizes
this form of endogeneity by assuming that%
\begin{equation}
\mathbf{x}_{it}=\mathbf{A}_{i}^{^{\prime}}\mathbf{d}_{t}+\boldsymbol{\Gamma
}_{i}^{^{\prime}}\mathbf{f}_{t}+\boldsymbol{\varepsilon}_{xit}, \label{xitL}%
\end{equation}
where $\mathbf{A}_{i}$ and $\boldsymbol{\Gamma}_{i}$\ are $k_{d}\times k_{x}$
and $m_{0}\times k_{x}$ factor loading matrices and $\boldsymbol{\varepsilon
}_{xit}$ are distributed independently of $\mathbf{f}_{t}$. The system of
equations (\ref{mod1}), (\ref{vit}) and (\ref{xitL}) fully specify the
dependence of $\mathbf{x}_{it}$ and $v_{it}$, and allows consistent estimation
of $v_{it}$ which can then be used to test the hypothesis that $u_{it}$ are
cross-sectionally independent in the pure latent factor model (\ref{vit}). We
now show that the CD$^{\text{*}}$ test applied to these residuals will be
valid. To this end we make the following additional standard assumptions.

\begin{assumption}
\label{ass:xfactors}(a) The $k_{d}\times1$\ vector $\mathbf{d}_{t}$ is a
covariance stationary process, with absolute summable autocovariances and
$\mathbf{d}_{t}$ is distributed independently of $\mathbf{f}_{t^{^{\prime}}},$
for all $t$ and $t^{^{\prime}}$, such that $T^{-1}\mathbf{D}^{^{\prime}%
}\mathbf{F}=O_{p}\left(  T^{-1/2}\right)  $, where $\mathbf{D}=\left(
\mathbf{d}_{1},\mathbf{d}_{2},\ldots,\mathbf{d}_{T}\right)  ^{^{\prime}}$ and
$\mathbf{F}=\left(  \mathbf{f}_{1},\mathbf{f}_{2},\ldots,\mathbf{f}%
_{T}\right)  ^{^{\prime}}$ are matrices of observations on $\mathbf{d}_{t}$
and $\mathbf{f}_{t}$. (b) $\left(  \mathbf{d}_{t},\mathbf{f}_{t}\right)  $ is
distributed independently of $u_{is}$ and $\boldsymbol{\varepsilon}_{xis}$ for
all $i,t,s.$
\end{assumption}

\begin{assumption}
\label{ass:xfactor loadings}The unobserved factor loadings $\boldsymbol{\Gamma
}_{i}$ are bounded, i.e. $\left\Vert \boldsymbol{\Gamma}_{i}\,\right\Vert
_{2}<C$ for all $i.$
\end{assumption}

\begin{assumption}
\label{ass:xerrors}The individual-specific errors $\varepsilon_{it}$ in (\ref{uit:p}) and
$\boldsymbol{\varepsilon}_{xi,t^{\prime}}$ are distributed independently for
all $i,j,t$ and $t^{\prime},$ and $\boldsymbol{\varepsilon}_{xit}$ follows the
linear stationary process $\boldsymbol{\varepsilon}_{xit}=\sum_{l=0}^{\infty
}\mathbf{S}_{il}\boldsymbol{\eta}_{xi,t-l},$ where for each $i$,
$\boldsymbol{\eta}_{xit}$ is a $k_{x}\times1$ vector of serially uncorrelated
random variables with mean zero, the variance matrix $\mathbf{I}_{k_{x}}$, and
finite fourth-order cumulants. For each $i$, the coefficient matrices
$\mathbf{S}_{il}$ satisfy the condition
\[
Var\left(  \boldsymbol{\varepsilon}_{xit}\right)  =\sum_{l=0}^{\infty
}\mathbf{S}_{il}\mathbf{S}_{il}^{^{\prime}}=\boldsymbol{\Sigma}_{xi},
\]
where $\boldsymbol{\Sigma}_{xi}$ is a positive definite matrix, such that
$\sup_{i}\left\vert \left\vert \boldsymbol{\Sigma}_{xi}\right\vert \right\vert
_{2}<C$.
\end{assumption}

\begin{assumption}
\label{ass:rank condition}Let $\tilde{\boldsymbol{\Gamma}}=E\left(
\boldsymbol{\gamma}_{i},\boldsymbol{\Gamma}_{i}\right)  .$ We assume that
$Rank\left(  \tilde{\boldsymbol{\Gamma}}\right)  =m_{0}.$
\end{assumption}

\begin{assumption}
\label{ass:identification}Consider\ the cross-sectional averages of the
individual-specific variables, $\mathbf{z}_{it}=\left(  y_{it},\mathbf{x}%
_{it}^{^{\prime}}\right)  ^{^{\prime}}$ defined by $\bar{\mathbf{z}}%
_{t}=n^{-1}\sum_{i=1}^{n}\mathbf{z}_{it}$, and let $\mathbf{\bar{M}%
}=\mathbf{I}_{T}-\mathbf{\bar{H}}\left(  \mathbf{\bar{H}}^{^{\prime}%
}\mathbf{\bar{H}}\right)  ^{-1}\mathbf{\bar{H}}^{^{\prime}}$, and
$\mathbf{M}_{g}=\mathbf{I}_{T}-\mathbf{G}\left(  \mathbf{G}^{^{\prime}%
}\mathbf{G}\right)  ^{-1}\mathbf{G}^{^{\prime}}$, where $\mathbf{\bar{H}%
=}\left(  \mathbf{D},\mathbf{\bar{Z}}\right)  ,$ $\mathbf{G}=\left(
\mathbf{D,F}\right)  ,$ and $\mathbf{\bar{Z}=}\left(  \mathbf{\bar{z}}%
_{1},\mathbf{\bar{z}}_{2},\ldots,\mathbf{\bar{z}}_{T}\right)  ^{\prime}$ is
the $T\times\left(  k_{x}+1\right)  $ matrix of observations on the
cross-sectional averages. Let $\mathbf{X}_{i}=\mathbf{(x}_{i1},\mathbf{x}%
_{i2},...,\mathbf{x}_{iT})^{\prime}$, then the $k\times k$ matrices
$\mathbf{\hat{\Psi}}_{i,T}=T^{-1}\mathbf{X}_{i}^{^{\prime}}\mathbf{\bar{M}%
X}_{i}$ and $\mathbf{\Psi}_{ig}=T^{-1}\mathbf{X}_{i}^{^{\prime}}\mathbf{M}%
_{g}\mathbf{X}_{i}$ are non-singular, and $\mathbf{\hat{\Psi}}_{i,T}^{-1}$ and
$\mathbf{\Psi}_{ig}^{-1}$ have finite second-order moments for all $i.$

\begin{remark}
The above assumptions are standard in the panel data models with multi-factor
error structure. See, for example, \cite{pesaran2006estimation}. But in our
setup under Assumption \ref{ass:errors} we require the error term,
$\varepsilon_{it}$, to be serially independent, since our focus is on testing
$\varepsilon_{it}$ for cross-sectional independence, and this assumption is
needed for asymptotic normality of the bias-corrected CD test. Later in
Section \ref{section_serial}, we will consider models with serially correlated
errors and show that the bias-corrected CD test remains valid. Nevertheless,
we allow $\varepsilon_{xit}$, the errors in the $\mathbf{x}_{it}$ equations to
be serially correlated. Assumption \ref{ass:xfactors} separates the observed
and the latent factors, as in Assumption 11 of \cite{pesaran2011large}. This
assumption is required to obtain the probability order of estimated residuals
needed for computation of $CD^{\ast}$ statistic. A necessary condition for the
rank condition in Assumption \ref{ass:rank condition} to hold is $k_{x}\geq
m_{0}-1$.
\end{remark}
\end{assumption}

To estimate $v_{it}$ we first filter out the effects of observed covariates
using the CCE estimators proposed in \cite{pesaran2006estimation}, namely for
each $i$ we estimate $\boldsymbol{\beta}_{i}$ by%
\begin{equation}
\hat{\boldsymbol{\beta}}_{{CCE,i}}=\left(  \mathbf{X}_{i}^{^{\prime}%
}\mathbf{\bar{M}}\mathbf{X}_{i}\right)  ^{-1}\left(  \mathbf{X}_{i}^{^{\prime
}}\mathbf{\bar{M}y}_{i}\right)  , \label{b_CCE}%
\end{equation}
and following \cite{pesaran2011large}, estimate $\boldsymbol{\alpha}_{i}$ by%
\begin{equation}
\boldsymbol{\hat{\alpha}}_{CCE,i}=\left(  \mathbf{D}^{^{\prime}}%
\mathbf{D}\right)  ^{-1}\mathbf{D}^{^{\prime}}\left(  \mathbf{y}%
_{i}-\mathbf{X}_{i}\hat{\boldsymbol{\beta}}_{{CCE,i}}\right)  . \label{a_CCE}%
\end{equation}
Then we have the following estimator of $v_{it}$%
\begin{equation}
\hat{v}_{it}=y_{it}-\boldsymbol{\hat{\alpha}}_{CCE,i}^{^{\prime}}%
\mathbf{d}_{t}-\hat{\boldsymbol{\beta}}_{{CCE,i}}^{^{\prime}}\mathbf{x}_{it}.
\label{vhat_it0}%
\end{equation}
Using results in \cite{pesaran2011large} (p. 189) it follows that under
Assumptions \ref{ass:factors}-\ref{ass:identification}%
\begin{equation}
\hat{v}_{it}=v_{it}+O_{p}\left(  \frac{1}{n}\right)  +O_{p}\left(  \frac
{1}{\sqrt{T}}\right)  +O_{p}\left(  \frac{1}{\sqrt{nT}}\right)  .
\label{vhat_it}%
\end{equation}
Note when $\boldsymbol{\alpha}_{i}=\mathbf{0}$ and $\boldsymbol{\beta}%
_{i}=\mathbf{0}$, (\ref{mod1}) reduces to the pure latent factor model,
(\ref{mod2}), where PCA can be applied to $v_{it}=y_{it}$ directly. In the case of
panel regressions $\hat{v}_{it}$ can be used instead of $v_{it}$ to compute
the bias-corrected CD statistic given by (\ref{CD*a}). The errors involved
will become asymptotically negligible in view of the fast rate of convergence
of $\hat{v}_{it}$ to $v_{it}$, uniformly for each $i$ and $t$. Specifically,
as in the case of the pure latent factor model, we first compute $m_{0}$ PCs
of $\left\{  \hat{v}_{it};\text{ }i=1,\ldots,n;\text{ and }t=1,\ldots
,T\right\}  $ and the associated factor loadings, $(\boldsymbol{\hat{\gamma}%
}_{i},\mathbf{\hat{f}}_{t}),$ subject to the normalization $n^{-1}\sum
_{i=1}^{n}\boldsymbol{\hat{\gamma}}_{i}\boldsymbol{\hat{\gamma}}_{i}%
^{^{\prime}}=\mathbf{I}_{m_{0}}$. The residuals
\begin{equation}
\hat{u}_{it}=\hat{v}_{it}-\boldsymbol{\hat{\gamma}}_{i}^{^{\prime}%
}\mathbf{\hat{f}}_{t},\text{ for }i=1,\ldots,n\text{; and }t=1,\ldots,T,
\label{eit_x}%
\end{equation}
can then be used to compute the standard CD statistic, (\ref{cd}), and its
bias-corrected version, $CD^{\ast}$, using (\ref{CD*a}).

\begin{remark}
\label{CCERES}It is important to bear in mind that $\hat{u}_{it}$ is not the
same as the CCE residuals that result from running the panel regressions of
$y_{it}$ on $(\mathbf{d}_{t},\mathbf{x}_{it},\mathbf{\bar{z}}_{t})$. As shown
by \cite{juodis2022incidental}, the standard CD test applied to the CCE
residuals will result in over-rejection and is not recommended. In our
approach, we filter out the latent factors from $\hat{v}_{it}$ and use the
filtered residuals, $\hat{u}_{it}$, to compute the CD statistic and correct
it, as in CD$^{\ast}$, to allow for errors associated with estimation of
factors and their loadings.
\end{remark}

The following theorem extends Theorem \ref{Theorem:CDs} to panel
regression models with observed regressors.

\begin{theorem}
\label{Theorem:CDss} Suppose that observations on $y_{it}$, for $i=1,2,\ldots
,n,$ and $t=1,2,\ldots,T$ are generated from the panel regression model
defined by (\ref{mod1}),\ (\ref{vit}) and (\ref{xitL}), where the number of
latent factors in (\ref{vit}), $m_{0}$, is known. Consider the statistic
$CD^{\ast}$ given\ by (\ref{CD*a}) using the filtered residuals defined by
(\ref{eit_x}). Suppose that $(n,T)\rightarrow\infty$, such that
$n/T\rightarrow\kappa,$ and $0<\kappa<\infty$.

(a) Under the null hypothesis $H_{0}$, defined by (\ref{H0}), and supposing
that Assumptions \ref{ass:factors} to \ref{ass:loadings} and Assumptions
\ref{ass:xfactors} to \ref{ass:identification} hold, then
\[
CD^{\ast}\rightarrow_{d}\mathcal{N}(0,1).
\]

(b) Under local alternatives $H_{1T}$, defined by (\ref{H1T}), and supposing
that Assumptions \ref{ass:factors} to \ref{ass:identification} hold, then
\[
CD^{\ast}\rightarrow_{d}\mathcal{N}\left(  \phi,1\right)  ,
\]
where $\phi=\lim_{n\rightarrow\infty}\phi_{n}$ and $\phi_{n}$ defined by
(\ref{phi_nT}).
\end{theorem}

For a proof see the Appendix.

\section{ CD$^{\text{*}}$ tests for models with serially correlated
errors\label{section_serial}}

As shown by \cite{baltagi2016testing}, when the errors $u_{it}$ in
(\ref{mod3}) are serially correlated the variance of the standard CD test
statistic is not unity (even asymptotically) and the test is no longer valid.
The same also applies to the CD$^{\text{*}}$ test. To deal with this problem,
we propose two solutions which involve different ways of adjusting the
CD$^{\ast}$ test so that it will become applicable to panels with serially
correlated errors. The first method closely follows the variance adjustment
proposed by \cite{baltagi2016testing}, in which $CD^{\ast}$ is scaled by
$\varpi$ where
\begin{equation}
\varpi^{2}=\frac{2T}{n\left(  n-1\right)  }\sum_{i=2}^{n}\sum_{j=1}%
^{i-1}\boldsymbol{\tilde{\varepsilon}}_{i,T}^{\prime}\left(
\boldsymbol{\tilde{\varepsilon}}_{j,T}-\boldsymbol{\tilde{\varepsilon}%
}_{\left(  ij\right)  ,T}\right)  \boldsymbol{\tilde{\varepsilon}}%
_{j,T}^{\prime}\left(  \boldsymbol{\tilde{\varepsilon}}_{i,T}%
-\boldsymbol{\tilde{\varepsilon}}_{\left(  ij\right)  ,T}\right)  ,
\label{var_of_CD}%
\end{equation}
with $\boldsymbol{\tilde{\varepsilon}}_{i,T}=(\tilde{\varepsilon}%
_{i1,T},\tilde{\varepsilon}_{i2,T},\ldots,\tilde{\varepsilon}_{iT,T})^{\prime
}$, $\tilde{\varepsilon}_{it,T}$ defined in (\ref{egziit}) and
\[
\boldsymbol{\tilde{\varepsilon}}_{\left(  ij\right)  ,T}=\frac{1}{n-2}%
\sum_{1\leq\tau\neq i,j\leq n}\boldsymbol{\tilde{\varepsilon}}_{\tau,T}.
\]
The expression in (\ref{var_of_CD}) is the equivalent to that provided in
Theorem 3 of \cite{baltagi2016testing} but the factor of $2$ in
(\ref{var_of_CD}) is missing in their paper.\textbf{ }The same adjustment is
also applied to the CD$_{W+}$ test to allow for serially correlated errors.

Alternatively, following \cite{pesaran2004general}, we first transform the
panel regression model to eliminate the error serial correlation and then
apply the CD$^{\text{*}}$ test to the residuals of the transformed model. This
is possible so long as the error serial correlation can be approximated by a
finite order stationary autoregressive process. As a simple illustration
consider the pure latent factor model $y_{it}=\gamma_{i}f_{t}+u_{it},$ in
which factor $f_{t}$ and loading $\gamma_{i}$ are both latent, and the errors
$u_{it}$ are generated as $AR(1)$ processes, $u_{it}=\rho_{i}u_{it-1}%
+\epsilon_{it},$ where $\rho_{i}$ is the autoregression coefficient and
$\epsilon_{it}$ is serially independent, as well as being distributed
independently of $f_{t^{\prime}}$ for all $i$ and $t,t^{\prime}=1,2,\ldots,T$.
Testing the cross-sectional independence of $u_{it}$ is equivalent to testing
the cross-sectional independence of $\epsilon_{it}$ in the following
autoregressive distributed lag (ARDL) representation of $y_{it}$%
\[
y_{it}=\rho_{i}y_{i,t-1}+\gamma_{i}f_{t}-\rho_{i}\gamma_{i}f_{t-1}%
+\epsilon_{it},
\]
which can be written equivalently as a multi-factor AR panel regression model
\begin{equation}
y_{it}=\rho_{i}y_{i,t-1}+\boldsymbol{\mathring{\gamma}}_{i}^{\prime
}\mathbf{\mathring{f}}_{t}+\epsilon_{it}, \label{mod4_serial}%
\end{equation}
where $\mathbf{\mathring{f}}_{t}=(f_{t},f_{t-1})^{\prime}$, and
$\boldsymbol{\mathring{\gamma}}_{i}=\left(  \gamma_{i},-\rho_{i}\gamma
_{i}\right)  ^{\prime}$. Since $y_{i,t-1}$ is weakly exogenous, the
transformed model satisfies the setup of panel regression model (\ref{mod1})
with $\mathbf{\mathring{f}}_{t}$ viewed as a vector of latent variables with
the associated factor loadings, $\boldsymbol{\mathring{\gamma}}_{i}$. It
therefore follows that the CD$^{\text{*}}$ test can now be applied to test the
cross-sectional independence of $\epsilon_{it}$ in (\ref{mod4_serial}). We
refer to this test as the ARDL adjusted CD$^{\text{*}}$ test.

The same approach can also be used for panels with observed covariates. In
general, testing cross-sectional independence of $u_{it}$ in model
(\ref{mod3}) is equivalent to testing the cross-sectional independence of
$\epsilon_{it}$ in
\begin{equation}
y_{it}=\sum_{s=0}^{S}\boldsymbol{\alpha}_{i,s}^{\prime}\mathbf{d}_{t-s}%
+\sum_{s=1}^{S}\rho_{i,s}y_{it-s}+\sum_{s=0}^{S}\boldsymbol{\beta}%
_{i,s}^{\prime}\mathbf{x}_{it-s}+\mathbf{g}_{i}^{\prime}\mathbf{h}%
_{t}+\epsilon_{it}, \label{d_mod1}%
\end{equation}
where $\mathbf{h}_{t}$\ is an extended set of latent factors (that encompass
$\mathbf{f}_{t}$), and $\mathbf{g}_{i}$ are the associated factor loadings.
The number of lags $S$ is determined by the order of the AR specification
assumed for $u_{it}$ in (\ref{mod3}).

The variance adjustment is simpler to implement but it requires theoretical
justification in the context of panel data models with latent factors. The
ARDL adjustment is theoretically justified so long as the underlying errors
follow finite order AR processes. As we shall see both approaches work well in
dealing with serially correlated errors, at least in the context of the
limited MC designs that we are considering. Clearly, further theoretical and
Monte Carlo investigations are needed for a better understanding of the
relative merits of the two approaches.

\section{Small sample properties of CD$^{\ast}$ and CD$_{W^{+}}$
tests\label{mc}}

\subsection{Data generating process}

We consider the following data generating process
\begin{equation}
y_{it}=\text{a}_{i}+\sigma_{i}\left[  \beta_{i1}d_{t}+\beta_{i2}x_{it}%
+m_{0}^{-1/2}\boldsymbol{\gamma}_{i}^{\prime}\mathbf{f}_{t}+\varepsilon
_{it}\left(  \lambda\right)  \right]  ,\text{ }i=1,2,...,n;t=1,2,...,T,
\label{dgpy}%
\end{equation}
where $\varepsilon_{it}\left(  \lambda\right)  $ follows the first order
spatial autoregressive process, $SAR\left(  1\right)  $, such that%
\begin{equation}
\varepsilon_{it}\left(  \lambda\right)  =\lambda\sum_{j=1}^{n}w_{ij}%
\varepsilon_{jt}\left(  \lambda\right)  +c\text{ }\varepsilon_{it}.
\label{dgpe}%
\end{equation}
$\text{a}_{i}$ is a unit-specific effect$,$ $d_{t}$ is the observed common
factor, $x_{it}$ is the observed regressor that varies across $i$ and $t$,
$\mathbf{f}_{t}$ is the $m_{0}\times1$ vector of unobserved factors,
$\boldsymbol{\gamma}_{i}$ is the vector of associated factor loadings. The
scalar constants, $\sigma_{i}>0$, are generated as $\sigma_{i}^{2}%
=0.5+\frac{1}{2}\left(  s_{i}^{2}-1\right)  $, with $s_{i}^{2}\sim IID\chi
^{2}(2)$, which ensures that $E(\sigma_{i}^{2})=1$.

\subsubsection{DGP under the null hypothesis}

Under the null hypothesis, we set $\lambda=0$ and $c=1$, and consider both
serially independent errors and serially correlated errors, which are
generated by both Gaussian and non-Gaussian distributions:

\begin{itemize}
\item Serially independent errors: Gaussian errors, $\varepsilon_{it}\sim
IID\mathcal{N}(0,1)$; chi-squared distributed errors, $\varepsilon_{it}\sim
IID\left(  \frac{\chi^{2}(2)-2}{2}\right)  $.

\item Serially correlated errors: $\varepsilon_{it}=\rho_{\varepsilon
}\varepsilon_{it-1}+\sqrt{1-\rho_{\varepsilon}^{2}}$ $e_{\varepsilon it}$, for
$i=1,2,\ldots,n$ and $t=1,2,\ldots,T$, where $\rho_{\varepsilon}=0.5$ and
$e_{\varepsilon it}$ are generated as Gaussian errors, $e_{\varepsilon it}\sim
IID\mathcal{N}\left(  0,1\right)  $, or chi-squared distributed errors,
$e_{\varepsilon it}\thicksim IID\left(  \frac{\chi^{2}(2)-2}{2}\right)  $.
\end{itemize}

The focus of the experiments is on testing the null hypothesis that
$\varepsilon_{it}$ are cross-sectional independent, whilst allowing for the
presence of $m_{0}$ unobserved factors, $\mathbf{f}_{t}=(f_{1t},f_{2t}%
,...,f_{m_{0}t})^{\prime}$. We consider $m_{0}=1$ and $m_{0}=2$, and generate
the factor loadings $\boldsymbol{\gamma}_{i}=(\gamma_{i1},\gamma_{i2}%
)^{\prime}$ as:
\begin{align*}
\gamma_{i1}  &  \sim IID\mathcal{N}\left(  0.5,0.5\right)  \text{ for
}i=1,2,\ldots,\left[  n^{\alpha_{1}}\right]  ,\\
\gamma_{i2}  &  \sim IID\mathcal{N}\left(  1,1\right)  \text{ for
}i=1,2,\ldots,\left[  n^{\alpha_{2}}\right]  ,\\
\gamma_{ij}  &  =0\text{ for }i=\left[  n^{\alpha_{j}}\right]  +1,\left[
n^{\alpha_{j}}\right]  +2,....,n,\text{ and }j=1,2.
\end{align*}
In the one-factor case ($m_{0}=1$), we only include $f_{1t}$ as the latent
factor and denote its factor strength by $\alpha$. Three values of $\alpha$
are considered, namely $\alpha=1,2/3,1/2$, respectively representing strong,
semi-strong and weak factors. Similarly, in the two-factor case ($m_{0}=2$),
we include both $f_{1t}$ and $f_{2t}$ as the latent factors and consider the
following combinations of factor strengths: $(\alpha_{1},\alpha_{2})=\left[
(1,1),(1,2/3),(2/3,1/2)\right]  $. The intercepts a$_{i}$ are generated as
$IID\mathcal{N}(1,2)$ and fixed thereafter. The observed common factor is
generated as $d_{t}=\rho_{d}d_{t-1}+\sqrt{1-\rho_{d}^{2}}$ $v_{dt}$, with
$\rho_{d}=0.8,$ and $v_{dt}\thicksim IID\mathcal{N}(0,1)$, thus ensuring that
$E(d_{t})=0$ and $Var(d_{t})=1$. The observed unit-specific regressors,
$x_{it}$, for $i=1,2,\ldots,n$ are generated to have non-zero correlations with
the unobserved factors:
\begin{equation}
x_{it}=\gamma_{xi1}f_{1t}+\gamma_{xi2}f_{2t}+e_{xit}, \label{dgpx}%
\end{equation}
where $f_{jt}=r_{j}f_{j,t-1}+\sqrt{1-r_{j}^{2}}$ $v_{jt}$, with $r_{j}=0.9$
and $v_{jt}\sim IID\left(  \frac{\chi^{2}(2)-2}{2}\right)  $, for $j=1,2$. The
factor loadings in (\ref{dgpx}) are generated as $\gamma_{xi1}\sim IIDU\left(
0.25,0.75\right)  $ and $\gamma_{xi2}\sim IIDU\left(  0.1,0.5\right)  $. The
error term of (\ref{dgpx}) is generated as $e_{xit}=\rho_{i}e_{xi,t-1}%
+\sqrt{1-\rho_{i}^{2}}$ $v_{xit}\text{, }$where $\rho_{i}\sim IIDU(0,0.95)$
and $v_{xit}\thicksim IID\mathcal{N}(0,1)$.

We will examine the small sample properties of the CD and the bias-corrected
CD tests for both the pure latent factor model and for the panel regression
model which also includes observed covariates.

\begin{itemize}
\item In the case of the pure latent factor model we set $\beta_{i1}%
=\beta_{i2}=0$.

\item In the case of the panel regression model with latent factors, we allow
for heterogeneous slopes and generate the slopes of observed covariates,
$d_{t}$ and $x_{it}$, as $\beta_{i1}\sim IID\mathcal{N}(\mu_{\beta1}%
,\sigma_{\beta1}^{2}),$ and $\beta_{i2}\sim IID\mathcal{N}(\mu_{\beta2}%
,\sigma_{\beta2}^{2})$ where $\mu_{\beta1}=\mu_{\beta2}=0.5$ and
$\sigma_{\beta1}^{2}=\sigma_{\beta2}^{2}=0.25,$ respectively.
\end{itemize}

As our theoretical results show the null distributions of the CD and the
bias-corrected CD tests do not depend on a$_{i}$, $\beta_{i1}$ and $\beta
_{i2}$, it is therefore innocuous what values are chosen for these parameters.
Moreover, the average fit of the panel is controlled in terms of the limiting
value of the pooled R-squared defined by
\begin{equation}
PR_{nT}^{2}=1-\frac{(nT)^{-1}\sum_{i=1}^{n}\sum_{t=1}^{T}\sigma_{i}%
^{2}E\left(  \varepsilon_{it}^{2}\right)  }{(nT)^{-1}\sum_{i=1}^{n}\sum
_{t=1}^{T}Var\left(  y_{it}\right)  }. \label{PRnT}%
\end{equation}
Since the underlying processes, (\ref{dgpy}) and (\ref{dgpx}), are stationary
and $E\left(  \varepsilon_{it}^{2}\right)  =1$, we have%
\[
\lim_{T\rightarrow\infty}PR_{nT}^{2}=PR_{n}^{2}=\frac{n^{-1}\sum_{i=1}%
^{n}\sigma_{i}^{2}\left[  \beta_{i1}^{2}+\beta_{i2}^{2}Var\left(
x_{it}\right)  +m_{0}^{-1}\boldsymbol{\gamma}_{i}^{\prime}\boldsymbol{\gamma
}_{i}+2Cov\left(  x_{it},\boldsymbol{\gamma}_{i}^{^{\prime}}\mathbf{f}%
_{t}\right)  \right]  }{n^{-1}\sum_{i=1}^{n}Var\left(  y_{it}\right)  },
\]
where $\boldsymbol{\gamma}_{i}=\left(  \gamma_{i1},\gamma_{i2}\right)
^{\prime},$ $Var\left(  x_{it}\right)  =\boldsymbol{\gamma}_{xi}^{^{\prime}%
}\boldsymbol{\gamma}_{xi}+1,$ $Cov\left(  x_{it},\boldsymbol{\gamma}%
_{i}^{^{\prime}}\mathbf{f}_{t}\right)  =\boldsymbol{\gamma}_{xi}^{^{\prime}%
}\boldsymbol{\gamma}_{i},$ $\boldsymbol{\gamma}_{xi}=\left(  \gamma
_{xi1},\gamma_{xi2}\right)  ^{\prime}$, and
\[
Var\left(  y_{it}\right)  =\sigma_{i}^{2}\left[  \beta_{i1}^{2}+\beta_{i2}%
^{2}Var\left(  x_{it}\right)  +m_{0}^{-1}\boldsymbol{\gamma}_{i}^{\prime
}\boldsymbol{\gamma}_{i}+2m_{0}^{-1/2}Cov\left(  x_{it},\boldsymbol{\gamma
}_{i}^{^{\prime}}\mathbf{f}_{t}\right)  +1\right]  .
\]
Also since $\sigma_{i}^{2}$ and $\beta_{ij}$ are independently distributed and
$E(\sigma_{i}^{2})=1$, it then readily follows that $\lim_{n\rightarrow\infty
}PR_{n}^{2}=\eta^{2}/(1+\eta^{2})$, where%
\[
\eta^{2}=\mu_{\beta1}^{2}+\sigma_{\beta1}^{2}+\left(  \mu_{\beta2}^{2}%
+\sigma_{\beta2}^{2}\right)  \left[  1+E\left(  \boldsymbol{\gamma}%
_{xi}^{^{\prime}}\boldsymbol{\gamma}_{xi}\right)  \right]  +\frac{2\mu
_{\beta2}E\left(  \boldsymbol{\gamma}_{xi}^{^{\prime}}\boldsymbol{\gamma}%
_{i}\right)  }{\sqrt{m_{0}}}+\frac{E\left(  \boldsymbol{\gamma}_{i}^{^{\prime
}}\boldsymbol{\gamma}_{i}\right)  }{m_{0}}.
\]
By controlling the value of $\eta^{2}$ across the experiments we ensure that
the pooled R$^{2}$ in large samples is the same for all values of $\sigma
_{i}^{2}$. In particular, in the case of the pure latent model we have
$\eta^{2}=m_{0}^{-1}E\left(  \boldsymbol{\gamma}_{i}^{^{\prime}}%
\boldsymbol{\gamma}_{i}\right)  =O\left(  n^{\alpha-1}\right)  ,$ where
$\alpha=max(\alpha_{1},\alpha_{2})$.

\subsubsection{DGP under alternative hypotheses}

Under alternative hypotheses, using (\ref{dgpe}), we consider a spatial
alternative defined by
\begin{equation}
\boldsymbol{\varepsilon}_{\circ t}\left(  \lambda\right)  =c\left(
\lambda\right)  \left(  \mathbf{I}_{n}-\lambda\mathbf{W}\right)
^{-1}\boldsymbol{\varepsilon}_{\circ t}, \label{error}%
\end{equation}
where $\boldsymbol{\varepsilon}_{\circ t}\left(  \lambda\right)  =\left(
\varepsilon_{1t}\left(  \lambda\right)  ,\varepsilon_{2t}\left(
\lambda\right)  ,\ldots,\varepsilon_{nt}\left(  \lambda\right)  \right)
^{\prime},$ $\mathbf{W}=(w_{ij})$, and $\boldsymbol{\varepsilon}_{\circ
t}=(\varepsilon_{1t},\varepsilon_{2t},\ldots,\varepsilon_{nt})^{\prime}$. The
errors $\varepsilon_{it}$ are generated as described above. For the spatial
weights $w_{ij}$, we first set $w_{ij}^{0}=1$ if $j=i-2,i-1,i+1,i+2,$ and zero
otherwise. We then row normalize the weights such that $w_{ij}=\left(
\sum_{j=1}^{n}w_{ij}^{0}\right)  ^{-1}w_{ij}^{0}$. We also set $c\left(
\lambda\right)  ^{2}=n/$\textrm{$tr$}$\left[  \left(  \mathbf{I}_{n}%
-\lambda\mathbf{W}\right)  ^{-1}\left(  \mathbf{I}_{n}-\lambda\mathbf{W}%
\right)  ^{\prime-1}\right]  $, which ensures that $n^{-1}\sum_{i=1}%
^{n}Var(\varepsilon_{it}\left(  \lambda\right)  )=1$, for all values of
$\lambda$. In practice, only positive values of $\lambda$ are of interest, and
the power function need not be symmetric for all positive and negative values
of $\lambda$.

\subsection{CD, CD$^{\ast}$ and CD$_{W^{+}}$ tests}

All experiments are carried out for $n=100,200,500,1000$ and $T=100,200,500$,
and the number of replications is set to $2000$. Firstly we consider the DGPs
with serially independent errors. For the pure latent factor models, we
compute the filtered residuals as $\hat{v}_{it}=y_{it}-\hat{\text{a}}_{i}$,
where $\hat{\text{a}}_{i}=T^{-1}\sum_{t=1}^{T}y_{it}$. For the panel
regressions with latent factors, the filtered residuals are computed as%
\begin{equation}
\hat{v}_{it}=y_{it}-\hat{\text{a}}_{CCE,i}-\hat{\beta}_{CCE,i1}d_{t}%
-\hat{\beta}_{CCE,i2}x_{it}, \label{vit1}%
\end{equation}
where $\left(  \hat{\text{a}}_{CCE,i},\hat{\beta}_{CCE,i1},\hat{\beta
}_{CCE,i2}\right)  $ is the CCE estimator of a$_{i}$, $\beta_{i1}$ and
$\beta_{i2}$, as set out in \cite{pesaran2006estimation}. The residuals
$\{\hat{v}_{it};$ $i=1,2,\ldots,n;$ and $t=1,2,\ldots,T\}$, together with
their first $m$ PCs and the associated factor loadings, $(\boldsymbol{\hat
{\gamma}}_{i},\hat{\mathbf{f}}_{t}),$ are then used to compute the filtered
residuals, $\hat{u}_{it}=\hat{v}_{it}-\boldsymbol{\hat{\gamma}}_{i}^{\prime
}\hat{\mathbf{f}}_{t}$, to compute the CD test statistics, $CD$ and $CD^{\ast
}$, given by (\ref{cd}) and (\ref{CD*a}), respectively. For comparison, we
also consider the power enhanced version of the randomized CD test statistic
proposed by JR given by
\begin{equation}
CD_{W+}=CD_{W}+\Delta_{nT}, \label{CDW+}%
\end{equation}
where%
\begin{equation}
CD_{W}=\left(\frac{1}{nT}\sum_{i=1}^n\sum_{t=1}^T\hat{u}_{it}^2\right)^{-1}\left(\sqrt{\frac{2}{Tn(n-1)}}\sum_{t=1}^{T}\sum_{i=2}^{n}\sum_{j=1}%
^{i-1}  w_{i}\hat{u}_{it}    w_{j}\hat{u}_{jt}\right)  .
\label{CDW}%
\end{equation}
The weights $w_{i}$, for $i=1,2,...,n$ are independently drawn from a
Rademacher distribution and
\begin{equation}
\Delta_{nT}=\sum_{i=2}^{n}\sum_{j=1}^{i-1}\left\vert \hat{\rho}_{ij,T}%
\right\vert \mathbf{1}\left(  \left\vert \hat{\rho}_{ij,T}\right\vert
>2\sqrt{\frac{\ln(n)}{T}}\right)  , \label{Delta nT}%
\end{equation}
where $\hat{\rho}_{ij,T}=T^{-1}\sum_{t=1}^{T}\tilde{\varepsilon}_{it,T}%
\tilde{\varepsilon}_{jt,T}$, and $\tilde{\varepsilon}_{it,T}$ is defined by
(\ref{egziit}). As shown by JR, $CD_{W}$ has a zero mean by construction and avoids the over-rejection problem of the CD test, but it can also lack power by
the very nature of the randomization process. JR further suggest $CD_{W+}$ by
adding a screening component $\Delta_{nT}$ proposed by \cite{fan2015power},
which enhances the power of the test since $\Delta_{nT}$ converges to zero as
$n$ and $T\rightarrow\infty$ under the null hypotheses, but can diverge under
alternatives with a sufficient number of $(i,j)$ pairs with non-zero
correlations, $\rho_{ij}$.

As discussed in Section \ref{section_serial}, the CD$^{\text{*}}$ test is not
valid when the errors are serially correlated. In the simulations, we apply
the variance and ARDL adjustments to $CD$, $CD_{W+}$, and $CD^{\ast}$. The
variance adjusted versions are computed by scaling the original statistics by
the standard deviation of the CD statistics using the expression in
(\ref{var_of_CD}) with $\tilde{\varepsilon}_{it,T}=\hat{u}_{it}/\hat{\sigma}_{i,T}$, where $\hat{u}_{it}=\hat{v}_{it}-\boldsymbol{\hat{\gamma}%
}_{i}^{\prime}\hat{\mathbf{f}}_{t}$. The ARDL adjusted versions of $CD$,
$CD^{\ast}$, and $CD_{W+}$, are computed using the residuals from the
following dynamic panel data model with latent factors,%
\begin{equation}
y_{it}=a_{i}+\sum_{s=1}^{S}\rho_{is}y_{i,t-s}+\sum_{s=0}^{S}\beta
_{i1,s}d_{t-s}+\sum_{s=0}^{S}\beta_{i2,s}x_{i,t-s}+\mathbf{g}_{i}^{\prime
}\mathbf{h}_{t}+\epsilon_{it}. \label{dCDs_dgp2}%
\end{equation}
In the simulations we set $S=1$, but higher order values can also be
considered. The number of latent factors in $\mathbf{h}_{t}$ depends on $S$
and is given by $m_{h}=(S+1)m_{0}$. Accordingly, the number of selected PCs,
$\hat{m}$, should satisfy $\hat{m}\geq(S+1)m_{0}$. In the simulations if
$S=0$, we consider $\hat{m}=1$ and $2$ if $m_{0}=1$, and $\hat{m}=2$ and $4$
if $m_{0}=2$. But if $S=1$ we consider $\hat{m}=2$ and $4$ if $m_{0}=1$, and
$\hat{m}=4$ and $6$ if $m_{0}=2$. Seen from this perspective, the variance
adjustment approach to dealing with error serial correlation seems preferable
since it does not require specifying the lag order $S$.

\subsection{Simulation results\label{sec:simulation results}}

We first report the simulation results for the DGPs with normally distributed
errors, followed by the results based on DGPs with chi-squared distributed
errors. Next, we report simulation results for the DGPs with serially
correlated errors, using the variance and ARDL adjusted CD tests discussed in
Section \ref{section_serial}. Finally, to investigate the power of the
CD$^{\text{*}}$ test we consider the spatial $SAR(1)$ alternative with
$\lambda=0.25$. As to be expected the power rises very quickly as $\lambda$
deviates from $0$.\footnote{Simulated power functions are provided in the
supplement for $\lambda=\pm0.05$, $\pm0.1$, $\pm0.2$, $\pm0.3$,
$\pm0.4$, $\pm0.5$, $\pm0.6$, $\pm0.7$, $\pm0.8$, $\pm0.9$, $\pm0.95$.}

\subsubsection{Serially independent errors: normally distributed errors}

The simulation results for the DGPs with the errors following Gaussian
distribution are shown in Tables \ref{table.n.1f} to \ref{table.n.x2f.b}.
Tables \ref{table.n.1f} and \ref{table.n.1f.b} report the test results for the
latent factor model with one factor. Table \ref{table.n.1f} gives the results
for the case where the number of selected PCs, denoted by $\hat{m}$, is the
same as the true number of factors ($m_{0}=1$), while Table \ref{table.n.1f.b}
reports the results when $\hat{m}=2$. As to be expected the standard CD test
over-rejects when the factor is strong, namely when $\alpha=1$. By comparison,
the rejection frequencies of both CD$^{\ast}$ and CD$_{W+}$ tests under null
($\lambda=0)$ are generally around the nominal size of $5$ per cent. Under the
alternative (when $\lambda=0.25$), the CD$^{\ast}$ test has satisfactory power
properties with significantly high rejection frequencies even when the sample
size is small. But the CD$_{W+}$ test performs quite poorly under the spatial
alternative, especially when $T$ is small.

Tables \ref{table.n.2f} and \ref{table.n.2f.b} summarize the size and power
results for the latent factor model with $m_{0}=2,$ and reports the results
when $\hat{m}$ (the selected number of PCs) is set to $2$ (Table
\ref{table.n.2f}) and $4$ (Table \ref{table.n.2f.b}). The results are
qualitatively similar to the ones reported for the one factor model. The CD
test over-rejects if at least one of the factors is strong, and the empirical
sizes of CD$^{\ast}$ and CD$_{W+}$ tests are close to their nominal value of
$5$ per cent, although we now observe some mild over-rejection when $n=100$
and the selected number of PCs is 4. In terms of power, the CD$^{\ast}$ test
performs well, although there is some loss of power as the numbers of factors
and selected PCs rise. Similarly, the power of the CD$_{W^{+}}$ test is now
even lower and quite close to $5$ per cent when $T<500$ even if the number of
PCs is set to $m_{0}=2$.

Turning to panel regression models with latent factors estimated by CCE, the
associated simulation results are summarized in Tables \ref{table.n.x1f} to
\ref{table.n.x2f.b}. As can be seen, the results are very close to the ones
reported in Tables \ref{table.n.1f} to \ref{table.n.2f.b} for the latent
factor model, and are in line with the asymptotic result in (\ref{vhat_it})
that underlies the use of CCE approach to filter out the effects of observed
covariates, as well as latent factors.

\begin{table}[th!]
\caption{Size and power of tests of error cross-sectional dependence using one
PC $(\hat{m}=1)$ for the latent factor model with one factor $(m_{0}=1)$ and
serially independent Gaussian errors}%
\label{table.n.1f}%
\renewcommand{\thetable}{1A}
\par
\begin{center}
{\footnotesize
\begin{tabular}
[c]{crrrrrrrrrrrrrr}\hline\hline
\multicolumn{1}{l}{} &  &  &  & \multicolumn{3}{c}{$\alpha=1$} &  &
\multicolumn{3}{c}{$\alpha=2/3$} &  & \multicolumn{3}{c}{$\alpha=1/2$%
}\\\cline{5-7}\cline{9-11}\cline{13-15}%
\multicolumn{1}{c}{Tests} &  & $n\setminus T$ &  & 100 & 200 & 500 &  & 100 &
200 & 500 &  & 100 & 200 & 500\\\hline
&  &  &  &  &  &  &  &  &  &  &  &  &  & \\
\multicolumn{1}{l}{} &  & \multicolumn{13}{c}{Size ($H_{o}:\lambda=0$%
)}\\[0.5em]%
$CD$ &  & 100 &  & 64.7 & 88.1 & 97.5 &  & 5.8 & 9.7 & 22.5 &  & 5.3 & 5.9 &
9.4\\
&  & 200 &  & 67.7 & 92.3 & 99.4 &  & 5.3 & 7.1 & 14.2 &  & 5.9 & 5.4 & 7.0\\
&  & 500 &  & 71.0 & 95.2 & 100.0 &  & 5.1 & 4.2 & 8.6 &  & 6.2 & 4.9 & 4.3\\
&  & 1000 &  & 69.1 & 95.2 & 100.0 &  & 5.1 & 4.5 & 5.7 &  & 6.1 & 5.4 &
4.5\\[0.5em]%
$CD^{\ast}$ &  & 100 &  & 5.7 & 3.9 & 4.4 &  & 4.8 & 5.2 & 5.8 &  & 5.9 &
5.9 & 5.5\\
&  & 200 &  & 5.5 & 4.9 & 5.3 &  & 5.5 & 5.1 & 5.2 &  & 5.9 & 5.2 & 5.1\\
&  & 500 &  & 5.3 & 5.2 & 4.4 &  & 5.7 & 5.0 & 4.8 &  & 6.3 & 5.1 & 4.6\\
&  & 1000 &  & 4.4 & 5.3 & 5.1 &  & 5.5 & 4.8 & 4.6 &  & 6.1 & 5.5 &
4.9\\[0.5em]%
$CD_{W+}$ &  & 100 &  & 5.8 & 5.1 & 5.0 &  & 5.5 & 5.5 & 5.9 &  & 5.8 & 5.4 &
7.6\\
&  & 200 &  & 6.1 & 5.7 & 5.5 &  & 4.7 & 5.4 & 4.5 &  & 5.9 & 7.1 & 5.9\\
&  & 500 &  & 5.4 & 5.4 & 5.2 &  & 5.7 & 5.6 & 5.4 &  & 5.1 & 5.5 & 5.3\\
&  & 1000 &  & 5.1 & 4.6 & 4.8 &  & 4.6 & 5.9 & 5.3 &  & 4.7 & 5.9 & 6.0\\
&  &  &  &  &  &  &  &  &  &  &  &  &  & \\
\multicolumn{1}{l}{} &  & \multicolumn{13}{c}{Power ($H_{1}:\lambda=0.25$%
)}\\[0.5em]%
$CD$ &  & 100 &  & 23.8 & 37.5 & 55.7 &  & 68.9 & 86.7 & 97.7 &  & 81.0 &
93.8 & 99.4\\
&  & 200 &  & 16.0 & 31.0 & 50.6 &  & 75.4 & 93.1 & 99.8 &  & 84.9 & 97.6 &
100.0\\
&  & 500 &  & 10.9 & 22.9 & 46.3 &  & 82.3 & 97.5 & 100.0 &  & 89.6 & 98.9 &
100.0\\
&  & 1000 &  & 9.4 & 20.0 & 44.6 &  & 84.0 & 98.1 & 100.0 &  & 89.7 & 99.0 &
100.0\\[0.5em]%
$CD^{\ast}$ &  & 100 &  & 58.0 & 82.0 & 98.4 &  & 86.1 & 98.8 & 100.0 &  &
88.6 & 98.7 & 100.0\\
&  & 200 &  & 59.3 & 81.1 & 98.9 &  & 84.8 & 98.3 & 100.0 &  & 88.8 & 98.8 &
100.0\\
&  & 500 &  & 57.9 & 83.4 & 99.4 &  & 87.2 & 98.7 & 100.0 &  & 90.4 & 99.1 &
100.0\\
&  & 1000 &  & 60.1 & 84.0 & 99.4 &  & 86.7 & 99.2 & 100.0 &  & 90.1 & 99.2 &
100.0\\[0.5em]%
$CD_{W+}$ &  & 100 &  & 6.9 & 7.8 & 49.5 &  & 6.4 & 9.7 & 59.8 &  & 6.9 &
8.5 & 64.5\\
&  & 200 &  & 6.5 & 7.3 & 51.9 &  & 5.7 & 7.8 & 61.0 &  & 6.9 & 8.0 & 60.3\\
&  & 500 &  & 5.5 & 5.9 & 52.9 &  & 5.9 & 6.7 & 56.9 &  & 5.8 & 6.6 & 57.4\\
&  & 1000 &  & 4.9 & 5.6 & 51.7 &  & 4.6 & 6.2 & 54.2 &  & 5.6 & 5.6 &
52.8\\\hline
\end{tabular}
}
\end{center}
\par
{\footnotesize \textit{Notes}: The DGP is given by (\ref{dgpy}) with
$\beta_{i1}=\beta_{i2}=0$ and contains a single latent factor with different
factor strengths, $\alpha=1,$ $2/3$, and $1/2$. $\lambda$ denotes the spatial
autocorrelation coefficient defined by (\ref{error}). $m_{0}$ is the true
number of factors and $\hat{m}$ is the number of selected PCs used to compute
the different CD statistics. $CD$ denotes the standard test of error
cross-sectional dependence defined by (\ref{cd}), $CD^{\ast}$ is the
bias-corrected version defined by (\ref{CD*a}), and $CD_{W+}$ is the
power-enhanced randomized version defined by (\ref{CDW+}).}\end{table}

\begin{table}[th!]
\caption{Size and power of tests of error cross-sectional dependence using two
PCs $(\hat{m}=2)$ for the latent factor model with one factor $(m_{0}=1)$ and
serially independent Gaussian errors }%
\label{table.n.1f.b}%
\renewcommand{\thetable}{1B}
\par
\begin{center}
{\footnotesize
\begin{tabular}
[c]{crrrrrrrrrrrrrr}\hline\hline
\multicolumn{1}{l}{} &  &  &  & \multicolumn{3}{c}{$\alpha=1$} &  &
\multicolumn{3}{c}{$\alpha=2/3$} &  & \multicolumn{3}{c}{$\alpha=1/2$%
}\\\cline{5-7}\cline{9-11}\cline{13-15}%
\multicolumn{1}{c}{Tests} &  & $n\setminus T$ &  & 100 & 200 & 500 &  & 100 &
200 & 500 &  & 100 & 200 & 500\\\hline
&  &  &  &  &  &  &  &  &  &  &  &  &  & \\
\multicolumn{1}{l}{} &  & \multicolumn{13}{c}{Size ($H_{o}:\lambda=0$%
)}\\[0.5em]%
$CD$ &  & 100 &  & 65.2 & 87.9 & 97.6 &  & 5.9 & 10.2 & 21.8 &  & 5.6 & 6.2 &
8.8\\
&  & 200 &  & 68.3 & 91.7 & 99.5 &  & 5.1 & 7.6 & 14.6 &  & 5.8 & 5.1 & 7.1\\
&  & 500 &  & 70.7 & 94.9 & 100.0 &  & 5.0 & 4.5 & 8.2 &  & 6.3 & 5.0 & 5.0\\
&  & 1000 &  & 68.9 & 95.1 & 100.0 &  & 4.8 & 4.8 & 5.3 &  & 5.9 & 5.7 &
4.9\\[0.5em]%
$CD^{\ast}$ &  & 100 &  & 5.7 & 4.7 & 5.8 &  & 5.5 & 6.3 & 6.2 &  & 5.8 &
6.5 & 6.3\\
&  & 200 &  & 5.1 & 5.0 & 5.9 &  & 6.1 & 5.3 & 5.7 &  & 5.8 & 5.3 & 5.1\\
&  & 500 &  & 5.3 & 5.3 & 4.3 &  & 5.6 & 5.1 & 4.7 &  & 6.5 & 5.0 & 4.3\\
&  & 1000 &  & 4.3 & 5.1 & 5.0 &  & 5.8 & 5.1 & 5.0 &  & 6.3 & 5.8 &
4.6\\[0.5em]%
$CD_{W+}$ &  & 100 &  & 5.1 & 5.4 & 6.5 &  & 3.8 & 5.4 & 6.1 &  & 4.8 & 5.4 &
8.4\\
&  & 200 &  & 5.5 & 5.3 & 6.1 &  & 4.9 & 5.8 & 5.3 &  & 6.2 & 5.3 & 5.5\\
&  & 500 &  & 4.7 & 4.5 & 4.3 &  & 5.3 & 5.1 & 4.7 &  & 5.3 & 5.7 & 4.1\\
&  & 1000 &  & 3.9 & 4.6 & 4.9 &  & 4.9 & 5.7 & 4.6 &  & 5.9 & 4.6 & 4.8\\
&  &  &  &  &  &  &  &  &  &  &  &  &  & \\
\multicolumn{1}{l}{} &  & \multicolumn{13}{c}{Power ($H_{1}:\lambda=0.25$%
)}\\[0.5em]%
$CD$ &  & 100 &  & 26.6 & 44.2 & 62.7 &  & 59.4 & 73.8 & 85.8 &  & 70.5 &
83.8 & 92.2\\
&  & 200 &  & 17.3 & 34.5 & 56.9 &  & 70.5 & 89.3 & 98.3 &  & 81.1 & 95.2 &
99.4\\
&  & 500 &  & 12.0 & 25.2 & 49.7 &  & 80.2 & 96.8 & 100.0 &  & 88.2 & 98.3 &
100.0\\
&  & 1000 &  & 9.3 & 20.7 & 45.7 &  & 83.0 & 97.4 & 100.0 &  & 88.7 & 98.9 &
100.0\\[0.5em]%
$CD^{\ast}$ &  & 100 &  & 57.8 & 81.8 & 98.8 &  & 83.9 & 98.4 & 100.0 &  &
85.8 & 98.3 & 100.0\\
&  & 200 &  & 57.9 & 81.0 & 98.8 &  & 84.2 & 98.1 & 100.0 &  & 87.8 & 98.6 &
100.0\\
&  & 500 &  & 57.4 & 82.5 & 99.4 &  & 86.8 & 98.6 & 100.0 &  & 89.3 & 98.9 &
100.0\\
&  & 1000 &  & 59.7 & 83.6 & 99.4 &  & 86.2 & 99.1 & 100.0 &  & 89.4 & 99.2 &
100.0\\[0.5em]%
$CD_{W+}$ &  & 100 &  & 5.4 & 8.1 & 33.8 &  & 5.1 & 8.2 & 36.7 &  & 5.9 &
8.2 & 44.1\\
&  & 200 &  & 5.5 & 6.6 & 38.8 &  & 5.7 & 7.1 & 41.7 &  & 6.2 & 7.0 & 43.6\\
&  & 500 &  & 4.9 & 6.1 & 45.2 &  & 5.5 & 5.9 & 48.2 &  & 6.2 & 6.2 & 47.0\\
&  & 1000 &  & 3.7 & 5.2 & 43.9 &  & 5.3 & 6.6 & 48.3 &  & 6.0 & 6.1 &
47.3\\\hline
\end{tabular}
}
\end{center}
\par
{\footnotesize \textit{Notes}: See the notes to Table \ref{table.n.1f}.}\end{table}

\begin{table}[th!]
\caption{Size and power of tests of error cross-sectional dependence using two
PCs $(\hat{m}=2)$ for the latent factor model with two factors $(m_{0}=2)$ and
serially independent Gaussian errors }%
\label{table.n.2f}%
\renewcommand{\thetable}{2A}
\par
\begin{center}
{\footnotesize
\begin{tabular}
[c]{crrrrrrrrrrrrrr}\hline\hline
\multicolumn{1}{l}{} &  &  &  & \multicolumn{3}{c}{$\alpha_{1}=1,\alpha_{2}%
=1$} &  & \multicolumn{3}{c}{$\alpha_{1}=1,\alpha_{2}=2/3$} &  &
\multicolumn{3}{c}{$\alpha_{1}=2/3,\alpha_{2}=1/2$}\\\cline{5-7}%
\cline{9-11}\cline{13-15}%
\multicolumn{1}{c}{Tests} &  & $n\setminus T$ &  & 100 & 200 & 500 &  & 100 &
200 & 500 &  & 100 & 200 & 500\\\hline
&  &  &  &  &  &  &  &  &  &  &  &  &  & \\
\multicolumn{1}{l}{} &  & \multicolumn{13}{c}{Size ($H_{o}:\lambda=0$%
)}\\[0.5em]%
$CD$ &  & 100 &  & 99.9 & 100.0 & 100.0 &  & 98.3 & 99.9 & 100.0 &  & 8.3 &
15.4 & 40.7\\
&  & 200 &  & 100.0 & 100.0 & 100.0 &  & 99.4 & 100.0 & 100.0 &  & 6.6 & 8.6 &
24.1\\
&  & 500 &  & 100.0 & 100.0 & 100.0 &  & 99.7 & 100.0 & 100.0 &  & 5.7 & 5.8 &
13.0\\
&  & 1000 &  & 100.0 & 100.0 & 100.0 &  & 99.9 & 100.0 & 100.0 &  & 7.6 &
4.9 & 7.8\\[0.5em]%
$CD^{\ast}$ &  & 100 &  & 5.5 & 4.7 & 3.8 &  & 5.6 & 5.3 & 4.9 &  & 8.8 &
7.2 & 6.8\\
&  & 200 &  & 5.4 & 5.0 & 4.9 &  & 6.3 & 4.6 & 4.6 &  & 8.3 & 5.5 & 5.3\\
&  & 500 &  & 4.8 & 4.7 & 6.0 &  & 5.8 & 4.7 & 4.1 &  & 7.4 & 5.9 & 5.1\\
&  & 1000 &  & 5.1 & 4.5 & 4.8 &  & 5.4 & 5.4 & 4.7 &  & 8.3 & 6.2 &
5.2\\[0.5em]%
$CD_{W+}$ &  & 100 &  & 4.4 & 6.8 & 5.9 &  & 5.5 & 5.9 & 7.8 &  & 5.8 & 5.3 &
9.4\\
&  & 200 &  & 5.5 & 5.4 & 5.3 &  & 6.4 & 5.6 & 6.3 &  & 5.9 & 5.9 & 5.6\\
&  & 500 &  & 5.3 & 5.6 & 5.1 &  & 5.5 & 5.1 & 4.8 &  & 5.4 & 4.7 & 6.4\\
&  & 1000 &  & 4.9 & 4.6 & 4.6 &  & 5.1 & 4.3 & 5.4 &  & 6.6 & 5.4 & 5.1\\
&  &  &  &  &  &  &  &  &  &  &  &  &  & \\
\multicolumn{1}{l}{} &  & \multicolumn{13}{c}{Power ($H_{1}:\lambda=0.25$%
)}\\[0.5em]%
$CD$ &  & 100 &  & 99.4 & 100.0 & 100.0 &  & 90.3 & 98.0 & 99.4 &  & 56.7 &
65.4 & 81.3\\
&  & 200 &  & 99.8 & 100.0 & 100.0 &  & 93.1 & 99.2 & 100.0 &  & 71.5 & 84.3 &
97.8\\
&  & 500 &  & 99.9 & 100.0 & 100.0 &  & 93.4 & 99.8 & 100.0 &  & 83.1 & 95.4 &
100.0\\
&  & 1000 &  & 100.0 & 100.0 & 100.0 &  & 94.4 & 100.0 & 100.0 &  & 85.5 &
97.7 & 100.0\\[0.5em]%
$CD^{\ast}$ &  & 100 &  & 23.8 & 35.2 & 60.0 &  & 31.3 & 49.3 & 78.9 &  &
83.7 & 97.8 & 100.0\\
&  & 200 &  & 22.4 & 33.1 & 60.6 &  & 32.1 & 49.4 & 81.8 &  & 86.4 & 98.2 &
100.0\\
&  & 500 &  & 21.1 & 35.1 & 64.6 &  & 33.5 & 51.3 & 83.4 &  & 88.4 & 98.6 &
100.0\\
&  & 1000 &  & 23.5 & 34.2 & 63.1 &  & 36.2 & 52.8 & 85.2 &  & 88.6 & 98.7 &
100.0\\[0.5em]%
$CD_{W+}$ &  & 100 &  & 5.4 & 9.4 & 38.1 &  & 6.4 & 9.3 & 48.1 &  & 7.1 &
10.1 & 63.8\\
&  & 200 &  & 6.0 & 6.1 & 44.7 &  & 6.4 & 6.9 & 46.9 &  & 6.9 & 7.3 & 56.7\\
&  & 500 &  & 5.9 & 6.6 & 49.5 &  & 5.7 & 5.8 & 50.1 &  & 6.3 & 5.0 & 57.3\\
&  & 1000 &  & 5.0 & 5.5 & 50.1 &  & 4.9 & 5.2 & 50.1 &  & 6.7 & 6.3 &
53.4\\\hline
\end{tabular}
}
\end{center}
\par
{\footnotesize \textit{Notes}: The DGP is given by (\ref{dgpy}) with
$\beta_{i1}=\beta_{i2}=0$, and contains two latent factors with different
factor strengths, $(\alpha_{1},\alpha_{2})=(1,1)$, $(1,2/3)$, and $(2/3,1/2)$.
See also the notes to Table \ref{table.n.1f}.}\end{table}

\begin{table}[th!]
\caption{Size and power of tests of error cross-sectional dependence using
four PCs $(\hat{m}=4)$ for the latent factor model with two factors
$(m_{0}=2)$ and serially independent Gaussian errors }%
\label{table.n.2f.b}%
\renewcommand{\thetable}{2B}
\par
\begin{center}
{\footnotesize
\begin{tabular}
[c]{crrrrrrrrrrrrrr}\hline\hline
\multicolumn{1}{l}{} &  &  &  & \multicolumn{3}{c}{$\alpha_{1}=1,\alpha_{2}%
=1$} &  & \multicolumn{3}{c}{$\alpha_{1}=1,\alpha_{2}=2/3$} &  &
\multicolumn{3}{c}{$\alpha_{1}=2/3,\alpha_{2}=1/2$}\\\cline{5-7}%
\cline{9-11}\cline{13-15}%
\multicolumn{1}{c}{Tests} &  & $n\setminus T$ &  & 100 & 200 & 500 &  & 100 &
200 & 500 &  & 100 & 200 & 500\\\hline
&  &  &  &  &  &  &  &  &  &  &  &  &  & \\
\multicolumn{1}{l}{} &  & \multicolumn{13}{c}{Size ($H_{o}:\lambda=0$%
)}\\[0.5em]%
$CD$ &  & 100 &  & 99.8 & 100.0 & 100.0 &  & 98.8 & 100.0 & 100.0 &  & 7.4 &
15.9 & 40.8\\
&  & 200 &  & 100.0 & 100.0 & 100.0 &  & 99.6 & 100.0 & 100.0 &  & 5.7 & 8.9 &
24.3\\
&  & 500 &  & 100.0 & 100.0 & 100.0 &  & 100.0 & 100.0 & 100.0 &  & 5.9 &
5.7 & 12.8\\
&  & 1000 &  & 100.0 & 100.0 & 100.0 &  & 100.0 & 100.0 & 100.0 &  & 7.4 &
4.7 & 8.0\\[0.5em]%
$CD^{\ast}$ &  & 100 &  & 7.6 & 7.9 & 15.4 &  & 7.0 & 6.8 & 13.2 &  & 9.5 &
8.4 & 10.1\\
&  & 200 &  & 5.5 & 6.0 & 6.6 &  & 6.8 & 6.3 & 7.0 &  & 7.6 & 6.3 & 6.5\\
&  & 500 &  & 5.1 & 4.7 & 6.3 &  & 6.2 & 5.1 & 4.8 &  & 7.8 & 6.8 & 5.1\\
&  & 1000 &  & 5.8 & 5.0 & 4.7 &  & 5.4 & 5.0 & 4.8 &  & 8.6 & 5.7 &
5.7\\[0.5em]%
$CD_{W+}$ &  & 100 &  & 5.5 & 6.8 & 26.2 &  & 5.9 & 6.3 & 15.7 &  & 6.2 &
6.1 & 11.8\\
&  & 200 &  & 4.5 & 5.7 & 6.4 &  & 5.2 & 4.7 & 5.3 &  & 5.5 & 5.4 & 6.7\\
&  & 500 &  & 5.6 & 5.0 & 5.3 &  & 5.9 & 5.1 & 5.7 &  & 5.9 & 5.3 & 4.8\\
&  & 1000 &  & 4.6 & 5.8 & 4.9 &  & 5.3 & 4.9 & 4.6 &  & 5.9 & 5.5 & 5.9\\
&  &  &  &  &  &  &  &  &  &  &  &  &  & \\
\multicolumn{1}{l}{} &  & \multicolumn{13}{c}{Power ($H_{1}:\lambda=0.25$%
)}\\[0.5em]%
$CD$ &  & 100 &  & 99.4 & 100.0 & 100.0 &  & 93.6 & 98.8 & 99.8 &  & 39.9 &
43.0 & 55.4\\
&  & 200 &  & 99.8 & 100.0 & 100.0 &  & 95.4 & 99.5 & 100.0 &  & 61.7 & 73.3 &
87.8\\
&  & 500 &  & 99.9 & 100.0 & 100.0 &  & 94.6 & 99.9 & 100.0 &  & 79.2 & 93.4 &
99.6\\
&  & 1000 &  & 99.9 & 100.0 & 100.0 &  & 94.2 & 100.0 & 100.0 &  & 83.3 &
96.7 & 100.0\\[0.5em]%
$CD^{\ast}$ &  & 100 &  & 26.4 & 39.2 & 67.5 &  & 33.7 & 56.4 & 86.4 &  &
79.8 & 96.6 & 100.0\\
&  & 200 &  & 22.8 & 36.8 & 66.0 &  & 33.3 & 51.7 & 85.1 &  & 84.4 & 98.0 &
100.0\\
&  & 500 &  & 21.0 & 35.2 & 66.1 &  & 34.3 & 51.9 & 84.1 &  & 87.2 & 98.1 &
100.0\\
&  & 1000 &  & 23.4 & 34.5 & 63.4 &  & 35.9 & 52.4 & 84.9 &  & 87.3 & 98.7 &
100.0\\[0.5em]%
$CD_{W+}$ &  & 100 &  & 6.2 & 8.8 & 39.5 &  & 6.2 & 8.7 & 40.5 &  & 7.3 &
7.9 & 34.9\\
&  & 200 &  & 5.4 & 7.2 & 37.4 &  & 5.6 & 6.5 & 29.6 &  & 6.3 & 6.2 & 31.2\\
&  & 500 &  & 6.3 & 6.1 & 44.2 &  & 6.7 & 6.4 & 38.8 &  & 5.9 & 5.9 & 39.0\\
&  & 1000 &  & 5.0 & 6.6 & 45.6 &  & 4.8 & 5.0 & 42.4 &  & 6.1 & 5.7 &
43.7\\\hline
\end{tabular}
}
\end{center}
\par
{\footnotesize \textit{Notes}: See the notes to Table \ref{table.n.2f}.}\end{table}

\begin{table}[th!]
\caption{Size and power of tests of error cross-sectional dependence using one
PC $(\hat{m}=1)$ for the panel regression model with one latent factor
$(m_{0}=1)$ and serially independent Gaussian errors}%
\label{table.n.x1f}%
\renewcommand{\thetable}{3A} \hypertarget{table3}{}
\par
\begin{center}
{\footnotesize
\begin{tabular}
[c]{crrrrrrrrrrrrrr}\hline\hline
\multicolumn{1}{l}{} &  &  &  & \multicolumn{3}{c}{$\alpha=1$} &  &
\multicolumn{3}{c}{$\alpha=2/3$} &  & \multicolumn{3}{c}{$\alpha=1/2$%
}\\\cline{5-7}\cline{9-11}\cline{13-15}%
\multicolumn{1}{c}{Tests} &  & $n\setminus T$ &  & 100 & 200 & 500 &  & 100 &
200 & 500 &  & 100 & 200 & 500\\\hline
&  &  &  &  &  &  &  &  &  &  &  &  &  & \\
\multicolumn{1}{l}{} &  & \multicolumn{13}{c}{Size ($H_{o}:\lambda=0$%
)}\\[0.5em]%
$CD$ &  & 100 &  & 67.9 & 88.6 & 98.5 &  & 6.7 & 9.5 & 20.5 &  & 6.9 & 6.3 &
9.0\\
&  & 200 &  & 68.9 & 92.5 & 99.7 &  & 5.5 & 6.7 & 13.4 &  & 6.8 & 5.3 & 5.9\\
&  & 500 &  & 67.4 & 94.7 & 100.0 &  & 4.9 & 6.1 & 8.5 &  & 5.3 & 5.0 & 5.9\\
&  & 1000 &  & 69.0 & 95.1 & 100.0 &  & 5.9 & 4.3 & 6.6 &  & 6.8 & 5.8 &
5.6\\[0.5em]%
$CD^{\ast}$ &  & 100 &  & 5.1 & 5.6 & 4.6 &  & 7.4 & 5.5 & 6.1 &  & 7.8 &
6.1 & 6.2\\
&  & 200 &  & 5.5 & 6.0 & 4.2 &  & 6.2 & 5.8 & 4.6 &  & 6.8 & 5.7 & 4.9\\
&  & 500 &  & 5.0 & 5.0 & 4.5 &  & 5.5 & 5.7 & 5.5 &  & 5.5 & 5.3 & 5.5\\
&  & 1000 &  & 4.5 & 4.6 & 5.4 &  & 6.3 & 5.0 & 5.6 &  & 7.0 & 5.9 &
5.9\\[0.5em]%
$CD_{W+}$ &  & 100 &  & 5.4 & 5.3 & 6.4 &  & 5.3 & 5.7 & 6.2 &  & 4.5 & 5.9 &
6.4\\
&  & 200 &  & 5.8 & 5.1 & 5.2 &  & 5.3 & 4.9 & 5.3 &  & 6.1 & 6.4 & 4.8\\
&  & 500 &  & 5.1 & 5.4 & 5.9 &  & 5.6 & 4.5 & 5.1 &  & 5.6 & 6.4 & 5.7\\
&  & 1000 &  & 4.9 & 5.5 & 4.5 &  & 6.0 & 4.6 & 4.7 &  & 5.1 & 6.0 & 4.8\\
&  &  &  &  &  &  &  &  &  &  &  &  &  & \\
\multicolumn{1}{l}{} &  & \multicolumn{13}{c}{Power ($H_{1}:\lambda=0.25$%
)}\\[0.5em]%
$CD$ &  & 100 &  & 25.1 & 37.6 & 55.6 &  & 71.1 & 86.6 & 97.7 &  & 81.4 &
93.9 & 99.3\\
&  & 200 &  & 17.2 & 29.7 & 49.7 &  & 77.2 & 94.0 & 99.7 &  & 86.5 & 98.1 &
100.0\\
&  & 500 &  & 11.8 & 22.1 & 45.7 &  & 83.2 & 97.0 & 100.0 &  & 89.5 & 99.0 &
100.0\\
&  & 1000 &  & 10.1 & 19.7 & 44.8 &  & 85.8 & 98.3 & 100.0 &  & 89.8 & 99.0 &
100.0\\[0.5em]%
$CD^{\ast}$ &  & 100 &  & 57.5 & 81.3 & 98.2 &  & 85.9 & 98.0 & 100.0 &  &
88.9 & 98.8 & 100.0\\
&  & 200 &  & 58.9 & 83.0 & 99.1 &  & 85.8 & 98.3 & 100.0 &  & 89.2 & 99.2 &
100.0\\
&  & 500 &  & 59.0 & 83.7 & 99.3 &  & 87.4 & 98.6 & 100.0 &  & 90.6 & 99.4 &
100.0\\
&  & 1000 &  & 60.0 & 83.3 & 99.5 &  & 88.3 & 99.0 & 100.0 &  & 90.1 & 99.2 &
100.0\\[0.5em]%
$CD_{W+}$ &  & 100 &  & 5.9 & 8.6 & 46.4 &  & 5.7 & 8.9 & 60.1 &  & 6.2 &
8.8 & 64.6\\
&  & 200 &  & 6.0 & 6.6 & 51.8 &  & 6.0 & 6.3 & 60.2 &  & 6.2 & 7.6 & 59.3\\
&  & 500 &  & 5.4 & 6.7 & 55.0 &  & 5.9 & 6.1 & 57.4 &  & 6.2 & 7.0 & 58.0\\
&  & 1000 &  & 4.6 & 5.6 & 48.9 &  & 5.4 & 5.3 & 52.8 &  & 4.8 & 5.9 &
51.9\\\hline
\end{tabular}
}
\end{center}
\par
{\footnotesize \textit{Notes}: The DGP is given by (\ref{dgpy}) with
$\beta_{i1}$ and $\beta_{i2}$ both generated from normal distribution, and
contains a single latent factor with different factor strengths, $\alpha=1,$
$2/3$, and $1/2$. See also the notes to Table \ref{table.n.1f}.}\end{table}

\begin{table}[th!]
\caption{Size and power of tests of error cross-sectional dependence using two
PCs $(\hat{m}=2)$ for the panel regression model with one latent factor
$(m_{0}=1)$ and serially independent Gaussian errors}%
\label{table.n.x1f.b}%
\renewcommand{\thetable}{3B}
\par
\begin{center}
{\footnotesize
\begin{tabular}
[c]{crrrrrrrrrrrrrr}\hline\hline
\multicolumn{1}{l}{} &  &  &  & \multicolumn{3}{c}{$\alpha=1$} &  &
\multicolumn{3}{c}{$\alpha=2/3$} &  & \multicolumn{3}{c}{$\alpha=1/2$%
}\\\cline{5-7}\cline{9-11}\cline{13-15}%
\multicolumn{1}{c}{Tests} &  & $n\setminus T$ &  & 100 & 200 & 500 &  & 100 &
200 & 500 &  & 100 & 200 & 500\\\hline
&  &  &  &  &  &  &  &  &  &  &  &  &  & \\
\multicolumn{1}{l}{} &  & \multicolumn{13}{c}{Size ($H_{o}:\lambda=0$)}\\
$CD$ &  & 100 &  & 67.5 & 88.9 & 98.6 &  & 7.7 & 9.3 & 21.8 &  & 6.8 & 6.8 &
9.4\\
&  & 200 &  & 69.0 & 92.2 & 99.5 &  & 5.6 & 6.7 & 13.3 &  & 6.8 & 5.4 & 5.8\\
&  & 500 &  & 68.0 & 94.9 & 100.0 &  & 5.0 & 5.7 & 8.0 &  & 5.1 & 5.1 & 5.5\\
&  & 1000 &  & 69.9 & 95.0 & 100.0 &  & 5.9 & 4.2 & 6.7 &  & 6.7 & 5.7 &
5.7\\[0.5em]%
$CD^{\ast}$ &  & 100 &  & 5.6 & 6.5 & 6.4 &  & 7.9 & 5.8 & 7.1 &  & 8.0 &
6.6 & 6.7\\
&  & 200 &  & 5.8 & 5.7 & 4.9 &  & 6.2 & 6.9 & 5.7 &  & 7.0 & 5.6 & 4.7\\
&  & 500 &  & 5.5 & 5.0 & 4.4 &  & 5.6 & 5.3 & 5.3 &  & 5.6 & 5.3 & 5.5\\
&  & 1000 &  & 4.6 & 4.7 & 5.3 &  & 6.3 & 5.1 & 5.6 &  & 7.0 & 6.2 &
6.1\\[0.5em]%
$CD_{W+}$ &  & 100 &  & 6.1 & 5.8 & 6.6 &  & 6.9 & 5.9 & 6.4 &  & 5.5 & 6.5 &
7.4\\
&  & 200 &  & 5.9 & 4.8 & 5.6 &  & 5.2 & 6.1 & 5.6 &  & 5.3 & 5.4 & 4.4\\
&  & 500 &  & 5.2 & 4.9 & 5.3 &  & 5.1 & 5.1 & 5.7 &  & 5.7 & 5.1 & 5.3\\
&  & 1000 &  & 5.1 & 5.1 & 4.4 &  & 5.1 & 4.7 & 5.6 &  & 6.2 & 4.8 & 4.9\\
&  &  &  &  &  &  &  &  &  &  &  &  &  & \\
\multicolumn{1}{l}{} &  & \multicolumn{13}{c}{Power ($H_{1}:\lambda=0.25$)}\\
$CD$ &  & 100 &  & 28.3 & 43.6 & 63.0 &  & 59.0 & 75.0 & 86.3 &  & 69.5 &
84.5 & 90.9\\
&  & 200 &  & 18.7 & 32.3 & 55.7 &  & 71.5 & 89.4 & 98.3 &  & 80.8 & 96.1 &
99.5\\
&  & 500 &  & 12.4 & 23.7 & 47.6 &  & 80.9 & 96.0 & 100.0 &  & 87.6 & 98.6 &
100.0\\
&  & 1000 &  & 11.0 & 20.4 & 46.2 &  & 84.1 & 97.9 & 100.0 &  & 88.4 & 98.9 &
100.0\\[0.5em]%
$CD^{\ast}$ &  & 100 &  & 57.0 & 81.0 & 98.4 &  & 83.3 & 97.8 & 100.0 &  &
86.6 & 98.6 & 100.0\\
&  & 200 &  & 58.2 & 82.6 & 98.9 &  & 84.4 & 98.1 & 100.0 &  & 87.0 & 99.0 &
100.0\\
&  & 500 &  & 58.0 & 83.4 & 99.4 &  & 86.5 & 98.7 & 100.0 &  & 89.1 & 99.3 &
100.0\\
&  & 1000 &  & 58.5 & 83.0 & 99.4 &  & 87.1 & 98.9 & 100.0 &  & 89.4 & 99.2 &
100.0\\[0.5em]%
$CD_{W+}$ &  & 100 &  & 7.6 & 7.9 & 32.6 &  & 8.0 & 8.0 & 36.4 &  & 7.2 &
7.8 & 41.4\\
&  & 200 &  & 5.8 & 6.1 & 39.4 &  & 6.1 & 6.2 & 43.4 &  & 5.8 & 6.8 & 42.8\\
&  & 500 &  & 5.1 & 5.6 & 48.6 &  & 5.4 & 6.5 & 47.9 &  & 5.9 & 6.3 & 47.6\\
&  & 1000 &  & 5.1 & 5.7 & 47.7 &  & 5.9 & 4.6 & 49.9 &  & 5.6 & 5.3 &
48.8\\\hline
\end{tabular}
}
\end{center}
\par
{\footnotesize \textit{Notes}: See the notes to Table \ref{table.n.x1f}.}\end{table}

\begin{table}[th!]
\caption{Size and power of tests of error cross-sectional dependence using two
PCs $(\hat{m}=2)$ for the panel regression model with two latent factors
$(m_{0}=2)$ and serially independent Gaussian errors}%
\label{table.n.x2f}%
\renewcommand{\thetable}{4A}
\par
\begin{center}
{\footnotesize
\begin{tabular}
[c]{crrrrrrrrrrrrrr}\hline\hline
\multicolumn{1}{l}{} &  &  &  & \multicolumn{3}{c}{$\alpha_{1}=1,\alpha_{2}%
=1$} &  & \multicolumn{3}{c}{$\alpha_{1}=1,\alpha_{2}=2/3$} &  &
\multicolumn{3}{c}{$\alpha_{1}=2/3,\alpha_{2}=1/2$}\\\cline{5-7}%
\cline{9-11}\cline{13-15}%
\multicolumn{1}{c}{Tests} &  & $n\setminus T$ &  & 100 & 200 & 500 &  & 100 &
200 & 500 &  & 100 & 200 & 500\\\hline
&  &  &  &  &  &  &  &  &  &  &  &  &  & \\
\multicolumn{1}{l}{} &  & \multicolumn{13}{c}{Size ($H_{o}:\lambda=0$%
)}\\[0.5em]%
$CD$ &  & 100 &  & 100.0 & 100.0 & 100.0 &  & 97.8 & 100.0 & 100.0 &  & 7.4 &
13.9 & 40.1\\
&  & 200 &  & 100.0 & 100.0 & 100.0 &  & 99.3 & 100.0 & 100.0 &  & 5.3 & 8.2 &
23.8\\
&  & 500 &  & 100.0 & 100.0 & 100.0 &  & 99.4 & 100.0 & 100.0 &  & 6.2 & 5.4 &
11.0\\
&  & 1000 &  & 100.0 & 100.0 & 100.0 &  & 99.9 & 100.0 & 100.0 &  & 7.8 &
5.9 & 7.0\\[0.5em]%
$CD^{\ast}$ &  & 100 &  & 5.8 & 5.3 & 5.3 &  & 5.9 & 5.8 & 5.4 &  & 9.4 &
6.4 & 5.9\\
&  & 200 &  & 6.0 & 5.4 & 5.4 &  & 5.3 & 4.9 & 4.8 &  & 7.4 & 6.9 & 5.7\\
&  & 500 &  & 5.7 & 5.4 & 5.5 &  & 6.6 & 4.3 & 4.9 &  & 7.9 & 6.1 & 5.7\\
&  & 1000 &  & 5.1 & 5.4 & 5.7 &  & 5.1 & 6.3 & 4.8 &  & 9.4 & 7.0 &
5.8\\[0.5em]%
$CD_{W+}$ &  & 100 &  & 5.6 & 4.9 & 6.8 &  & 5.5 & 5.5 & 8.1 &  & 6.4 & 6.3 &
8.6\\
&  & 200 &  & 6.2 & 5.9 & 5.6 &  & 5.6 & 5.0 & 5.1 &  & 5.8 & 4.6 & 5.8\\
&  & 500 &  & 6.4 & 6.3 & 5.5 &  & 5.5 & 5.4 & 4.3 &  & 5.5 & 5.3 & 4.9\\
&  & 1000 &  & 5.2 & 5.1 & 4.7 &  & 6.6 & 5.1 & 4.6 &  & 6.0 & 6.4 & 4.3\\
&  &  &  &  &  &  &  &  &  &  &  &  &  & \\
\multicolumn{1}{l}{} &  & \multicolumn{13}{c}{Power ($H_{1}:\lambda=0.25$%
)}\\[0.5em]%
$CD$ &  & 100 &  & 99.2 & 99.9 & 100.0 &  & 89.6 & 96.9 & 99.5 &  & 56.1 &
66.1 & 81.6\\
&  & 200 &  & 99.5 & 100.0 & 100.0 &  & 91.8 & 99.2 & 100.0 &  & 71.6 & 86.3 &
98.1\\
&  & 500 &  & 100.0 & 100.0 & 100.0 &  & 92.2 & 99.8 & 100.0 &  & 83.2 &
95.3 & 100.0\\
&  & 1000 &  & 100.0 & 100.0 & 100.0 &  & 94.9 & 99.8 & 100.0 &  & 86.5 &
97.7 & 100.0\\[0.5em]%
$CD^{\ast}$ &  & 100 &  & 23.1 & 34.2 & 63.5 &  & 33.7 & 50.1 & 79.6 &  &
83.7 & 97.7 & 100.0\\
&  & 200 &  & 23.0 & 36.3 & 64.6 &  & 33.9 & 52.2 & 82.0 &  & 86.3 & 97.8 &
100.0\\
&  & 500 &  & 22.3 & 35.7 & 65.1 &  & 35.0 & 52.6 & 82.8 &  & 88.9 & 98.8 &
100.0\\
&  & 1000 &  & 23.5 & 34.6 & 64.8 &  & 35.2 & 53.6 & 85.2 &  & 89.9 & 99.1 &
100.0\\[0.5em]%
$CD_{W+}$ &  & 100 &  & 6.8 & 7.9 & 38.7 &  & 6.5 & 8.1 & 44.4 &  & 8.9 &
10.7 & 62.1\\
&  & 200 &  & 6.1 & 7.9 & 44.8 &  & 6.1 & 6.9 & 47.9 &  & 7.0 & 6.6 & 57.9\\
&  & 500 &  & 6.3 & 6.6 & 49.4 &  & 5.7 & 6.4 & 50.1 &  & 5.9 & 6.2 & 55.9\\
&  & 1000 &  & 5.3 & 5.6 & 49.9 &  & 6.8 & 6.3 & 49.8 &  & 6.2 & 7.2 &
49.9\\\hline
\end{tabular}
}
\end{center}
\par
{\footnotesize \textit{Notes}: The DGP is given by (\ref{dgpy}) with
$\beta_{i1}$ and $\beta_{i2}$ both generated from normal distribution, and
contains two latent factors with different factor strengths, $(\alpha
_{1},\alpha_{2})=(1,1),\text{ }(1,2/3),\text{ and }(2/3,1/2)$. See also the
notes to Table \ref{table.n.1f}.}\end{table}

\begin{table}[th!]
\caption{Size and power of tests of error cross-sectional dependence using
four PCs $(\hat{m}=4)$ for the panel regression model with two latent factors
$(m_{0}=2)$ and serially independent Gaussian errors}%
\label{table.n.x2f.b}%
\renewcommand{\thetable}{4B}
\par
\begin{center}
{\footnotesize
\begin{tabular}
[c]{crrrrrrrrrrrrrr}\hline\hline
\multicolumn{1}{l}{} &  &  &  & \multicolumn{3}{c}{$\alpha_{1}=1,\alpha_{2}%
=1$} &  & \multicolumn{3}{c}{$\alpha_{1}=1,\alpha_{2}=2/3$} &  &
\multicolumn{3}{c}{$\alpha_{1}=2/3,\alpha_{2}=1/2$}\\\cline{5-7}%
\cline{9-11}\cline{13-15}%
\multicolumn{1}{c}{Tests} &  & $n\setminus T$ &  & 100 & 200 & 500 &  & 100 &
200 & 500 &  & 100 & 200 & 500\\\hline
&  &  &  &  &  &  &  &  &  &  &  &  &  & \\
\multicolumn{1}{l}{} &  & \multicolumn{13}{c}{Size ($H_{o}:\lambda=0$%
)}\\[0.5em]%
$CD$ &  & 100 &  & 100.0 & 100.0 & 100.0 &  & 97.9 & 100.0 & 100.0 &  & 7.4 &
15.4 & 40.8\\
&  & 200 &  & 100.0 & 100.0 & 100.0 &  & 99.3 & 100.0 & 100.0 &  & 5.9 & 9.4 &
23.5\\
&  & 500 &  & 100.0 & 100.0 & 100.0 &  & 99.5 & 100.0 & 100.0 &  & 6.4 & 5.6 &
10.7\\
&  & 1000 &  & 100.0 & 100.0 & 100.0 &  & 99.8 & 100.0 & 100.0 &  & 7.0 &
6.3 & 7.0\\[0.5em]%
$CD^{\ast}$ &  & 100 &  & 7.9 & 8.1 & 16.0 &  & 7.4 & 8.6 & 13.7 &  & 10.3 &
9.6 & 10.4\\
&  & 200 &  & 6.1 & 6.6 & 7.9 &  & 6.6 & 6.3 & 6.6 &  & 8.6 & 7.7 & 6.7\\
&  & 500 &  & 5.7 & 5.7 & 6.2 &  & 7.0 & 4.8 & 4.7 &  & 8.0 & 6.6 & 5.8\\
&  & 1000 &  & 5.0 & 5.1 & 5.5 &  & 5.8 & 6.8 & 5.0 &  & 8.8 & 7.1 &
6.0\\[0.5em]%
$CD_{W+}$ &  & 100 &  & 6.3 & 7.1 & 25.2 &  & 5.6 & 6.3 & 17.0 &  & 6.3 &
5.8 & 10.1\\
&  & 200 &  & 6.1 & 5.3 & 6.1 &  & 5.7 & 5.9 & 5.4 &  & 6.6 & 5.8 & 6.0\\
&  & 500 &  & 5.9 & 5.7 & 4.7 &  & 6.1 & 6.3 & 5.3 &  & 5.9 & 6.1 & 5.3\\
&  & 1000 &  & 6.1 & 5.4 & 5.1 &  & 5.9 & 4.8 & 5.3 &  & 6.3 & 4.5 & 4.9\\
&  &  &  &  &  &  &  &  &  &  &  &  &  & \\
\multicolumn{1}{l}{} &  & \multicolumn{13}{c}{Power ($H_{1}:\lambda=0.25$%
)}\\[0.5em]%
$CD$ &  & 100 &  & 99.3 & 100.0 & 100.0 &  & 92.3 & 98.8 & 99.9 &  & 37.6 &
43.8 & 54.7\\
&  & 200 &  & 99.6 & 100.0 & 100.0 &  & 93.4 & 99.5 & 100.0 &  & 61.3 & 74.7 &
89.3\\
&  & 500 &  & 100.0 & 100.0 & 100.0 &  & 93.7 & 99.8 & 100.0 &  & 77.8 &
93.1 & 99.8\\
&  & 1000 &  & 100.0 & 100.0 & 100.0 &  & 94.8 & 99.8 & 100.0 &  & 84.3 &
96.8 & 100.0\\[0.5em]%
$CD^{\ast}$ &  & 100 &  & 27.6 & 43.1 & 77.2 &  & 37.2 & 56.5 & 86.6 &  &
79.2 & 95.9 & 100.0\\
&  & 200 &  & 26.1 & 41.3 & 70.8 &  & 34.6 & 54.7 & 85.9 &  & 82.7 & 97.2 &
100.0\\
&  & 500 &  & 22.7 & 36.4 & 67.4 &  & 34.9 & 51.9 & 83.4 &  & 87.3 & 98.3 &
100.0\\
&  & 1000 &  & 24.2 & 35.0 & 66.3 &  & 34.8 & 52.8 & 85.9 &  & 87.9 & 98.8 &
100.0\\[0.5em]%
$CD_{W+}$ &  & 100 &  & 6.7 & 7.5 & 47.8 &  & 5.1 & 8.6 & 42.3 &  & 7.0 &
7.4 & 33.4\\
&  & 200 &  & 5.9 & 6.4 & 31.8 &  & 6.3 & 7.4 & 28.5 &  & 7.7 & 6.9 & 28.2\\
&  & 500 &  & 6.2 & 6.1 & 38.5 &  & 6.3 & 6.4 & 38.8 &  & 6.5 & 7.3 & 39.0\\
&  & 1000 &  & 6.6 & 6.2 & 42.5 &  & 5.6 & 5.5 & 42.9 &  & 6.5 & 5.6 &
44.1\\\hline
\end{tabular}
}
\end{center}
\par
{\footnotesize \textit{Notes}: See the notes to Table \ref{table.n.x2f}.}\end{table}

\subsubsection{Serially independent errors: chi-squared distributed errors}

To save space, the simulation results for the DGPs with chi-squared errors are
provided in Tables \ref{table.chi2.1f} to \ref{table.chi2.x2f} in the
supplement. For the standard CD test and its biased-corrected
version, CD$^{\ast}$, as shown in Tables \ref{table.chi2.1f} and
\ref{table.chi2.2f}, the results are very similar to the ones with Gaussian
errors, suggesting that the CD$^{\text{*}}$ test is likely to be robust to
departures from Gaussianity. As with the experiments with Gaussian errors, the
standard CD test continues to over-reject unless $\alpha<2/3$, and the
CD$^{\text{*}}$ test has the correct size for all $n$ and $T$ combinations,
except when the number of selected PCs is large relative to $m_{0}$, and
$T=100$. The main difference between the results with and without Gaussian
errors is the tendency for the CD$_{W^{+}}$ test to over-reject when $n>T$,
which seems to be a universal feature of this test and holds for \textit{all}
choices of $m_{0}$ and the number of selected PCs, irrespective of whether the
factors are strong or weak. This could be due to the screening component of
the CD$_{W+}$ test not tending to zero sufficiently fast with $n$ and $T$.
Furthermore, the CD$^{\text{*}}$ test continues to have satisfactory power,
but the CD$_{W^{+}}$ test clearly lacks power against spatial or network
alternatives that are of primary interest.

Similar results are obtained for panel regression models with latent factors,
summarized in Tables \ref{table.chi2.x1f} and \ref{table.chi2.x2f} in the
supplement.

\subsubsection{Serially correlated errors}

To save space, the results for the DGPs with serially correlated errors are
summarized in Tables \ref{table.v.n.1f} to \ref{table.ardl.chi2.x2f} in the
supplement. Tables \ref{table.v.n.1f} to \ref{table.v.chi2.x2f} give
the simulation results for the variance adjusted CD tests, whilst Tables
\ref{table.ardl.n.1f} to \ref{table.ardl.chi2.x2f} provide the results for the
ARDL adjusted tests. Overall, the results corroborate our earlier findings
obtained for DGPs with serially independent errors. Both adjustments for
serial error correlation work well, with size and power of the adjusted
CD$^{\text{*}}$ tests being quite close to the results already reported for
DGPs with serially independent errors. It is also clear that without
adjustments for latent factors and error serial correlation, the standard CD
test will lead to large size distortions when the latent factors are strong.
But in line with our theoretical results, the standard CD test, when adjusted
for error serial correlation if needed, tends to have the correct size when
the latent factors are weak.

Comparing the two types of adjustments for error serial correlations (for pure
latent factor models as well as for panel regression models with latent
factors), the variance adjusted CD$^{\text{*}}$ test works particularly well,
and only shows mild over-rejection in the case where $T=100$ and $n>T$. In
contrast, the CD$_{W+}$ test with variance adjustment over-rejects for all
combinations of $n$ and $T$.

The ARDL adjusted version of the CD$^{\text{*}}$ test also works well when the
number of PCs is not too large, and tends to have the correct size for all
$(n,T)$ combinations and only shows slight over-rejection when $n=100$. The
CD$_{W+}$ test using ARDL adjustment does better in controlling for the size
when the errors are Gaussian, but tends to over-reject when the errors are
chi-squared distributed and $n>T$. Both adjusted versions of the CD$_{W+}$
test continue to lack power against spatial or network alternatives.

\section{Empirical application\label{app}}

It is well known that house price changes are spatially correlated, but it is
unclear if such correlations are mainly due to common factors (national or
regional) or arise from spatial spillover effects not related to the common
factors, a phenomenon also referred to as the ripple effect. See, for example,
\cite{holly2011spatial}, \cite{tsai2015spillover}, \cite{chiang2016ripple},
\cite{bailey2016two}, and \cite{aquaro2021estimation}. To test for the
presence of ripple effects the influence of common factors must first be
filtered out and this is often a challenging exercise due to the latent nature
of regional and national factors. Therefore, to find if there exist local
spillover effects, one needs to test for significant residual cross-sectional
dependence once the effects of common factors are filtered out.

We consider quarterly data on real house prices at the level of Metropolitan
Statistical Areas (MSAs) in the U.S. There are 381 MSAs, under the February
2013 definition provided by the U.S. Office of Management and Budget (OMB). We
use quarterly data on real house price changes compiled by
\cite{yang2021common} which covers $n=377$ MSAs from the contiguous United
States over the period 1975Q1-2014Q4 ($T=160$ quarters). To allow for possible
regional factors, we also follow \cite{bailey2016two} and start with the
Bureau of Economic Analysis eight regional classification, namely New England,
Mideast, Great Lakes, Plains, Southeast, Southwest, Rocky Mountain and Far
West. But due to the low number of MSAs in New England and Rocky Mountain
regions, we combine New England and Mideast, and Southwest and Rocky Mountain
as two regions. We end up with a six region classification ($R=6$), each
covering a reasonable number of MSAs.

We model house price changes and consider an extended factor model with
deterministic seasonal dummies to allow for seasonal movements in house
prices. \cite{bailey2016two} find evidence of regional factors in U.S. house
price changes which might not be picked up when using PCA. Given this finding,
our model includes observed regional and national factors, as well as latent
factors. Specifically, we suppose%
\begin{equation}
\pi_{irt}=a_{ir}+\sum_{j=1}^{3}\beta_{ir,j}\text{l}\left\{  q_{t}=j\right\}
+\delta_{ir,1}\bar{\pi}_{rt}+\delta_{ir,2}\bar{\pi}_{t}+\boldsymbol{\gamma
}_{ir}^{^{\prime}}\mathbf{f}_{t}+u_{irt}, \label{hp_mod2}%
\end{equation}
where $\pi_{irt}$ is the real house price change in MSA $i$ located in region
$r=1,2,\ldots,R$, l$\left\{  q_{t}=j\right\}  $ is the index for quarter $j$, and
$\mathbf{f}_{t}$ is the $m_{0}\times1$ vector of latent factors. $\bar{\pi
}_{rt}=n_{r}^{-1}\sum_{i=1}^{n_{r}}\pi_{irt}$, where $n=\sum_{r=1}^Rn_r$, and $n_r$ is the number of MSAs in region $r$, and $\bar{\pi}_{t}=n^{-1}%
\sum_{r=1}^{R}\sum_{i=1}^{n_{r}}\pi_{irt}$ are proxies for the regional and
national factors. To filter out the effects of seasonal dummies as well as
observed factors, we first run the least squares regression of $\pi_{irt}$ on
an intercept and $\left(  \text{l}\left\{  q_{t}=j\right\}  ,\bar{\pi}%
_{rt},\bar{\pi}_{t}\right)  $ for each $i$ to generate the residuals%
\begin{equation}
\hat{v}_{irt}=\pi_{irt}-\hat{a}_{ir}-\sum_{j=1}^{3}\hat{\beta}_{ir,j}%
\text{l}\left\{  q_{t}=j\right\}  -\hat{\delta}_{ir,1}\bar{\pi}_{rt}%
-\hat{\delta}_{ir,2}\bar{\pi}_{t}, \label{hp2}%
\end{equation}
and then apply PCA to $\left\{  \hat{v}_{irt}:i=1,2,\ldots,n_r,r=1,2,\ldots
,R,t=1,2,\ldots,T\right\}  $ to obtain $\boldsymbol{\hat{\gamma}}_{ir}$ and
$\mathbf{\hat{f}}_{t}$, yielding the residuals%
\begin{equation}
\hat{u}_{irt}=\pi_{irt}-\hat{a}_{ir}-\sum_{j=1}^{3}\hat{\beta}_{ir,j}%
\text{l}\left\{  q_{t}=j\right\}  -\hat{\delta}_{ir,1}\bar{\pi}_{rt}%
-\hat{\delta}_{ir,2}\bar{\pi}_{t}-\boldsymbol{\hat{\gamma}}_{ir}^{^{\prime}%
}\mathbf{\hat{f}}_{t}. \label{hp3}%
\end{equation}
For the case without adjusting for error serial correlation, the above
residuals are used to compute $CD$, $CD^{\ast}$ and $CD_{W+}$, given by
(\ref{cd}), (\ref{CD*a}) and (\ref{CDW+}). For the case with serially
correlated errors, the variance adjusted versions of the three CD statistics
are generated by scaling original test statistics using (\ref{var_of_CD}),
where $\tilde{\varepsilon}_{it,T}$ is replaced by the standardized residuals
generated from (\ref{hp3}), while the ARDL adjusted versions are computed
using the residuals from the following dynamic panel data model with latent
factors, $\mathbf{h}_{t}$:
\begin{align}
\pi_{irt}  &  =a_{ir}+\rho_{ir}\pi_{irt-1}+\sum_{j=1}^{3}\beta_{ir,j}%
\text{l}\left\{  q_{t}=j\right\}  +\delta_{ir,1}\bar{\pi}_{rt}+\delta
_{ir,2}\bar{\pi}_{t}\nonumber\\
&  +\sum_{j=1}^{3}\lambda_{ir,j}\text{l}\left\{  q_{t-1}=j\right\}
+\omega_{ir,1}\bar{\pi}_{rt-1}+\omega_{ir,2}\bar{\pi}_{t-1}+\mathbf{g}%
_{ir}^{\prime}\mathbf{h}_{t}+\epsilon_{irt}. \label{hp_mod4}%
\end{align}
To estimate the number of latent factors, $m$, we consider the information
criteria $IC_{P1}$ and $IC_{P2}$ proposed by \cite{bai2002determining}, and
the $ER$ and $GR$ criteria proposed by \cite{ahn2013eigenvalue}. Given the
spatial diversity of U.S. housing market, we set $m_{\max}=$ $10$, although
once we allow for national and regional factors we would expect $m_{0}$ and
its estimate, $\hat{m}$, to be relatively small.\textbf{ }The estimated number
of factors and the associated CD test statistics are summarized in Table
\ref{table.app.1}. The first four columns of the table report $\hat{m}$, $CD$,
$CD^{\ast}$ and $CD_{W+}$ statistics that are not adjusted for error serial
correlation, whilst the middle and the final four columns report the variance adjusted and ARDL adjusted versions of these statistics, respectively.

As can be seen in the case of no error serial correlations, there are large
differences in the number of factors selected by the different criteria, with
$IC_{p1}$ selecting the assumed maximum number of factors, $IC_{p2}$ selecting
$4$, and $ER$ and $GR$ both selecting $2$ factors. These estimates are not
affected when we allow for error serial correlations and consider variance
adjustment. 

CD, CD$^{\ast}$ and CD$_{W+}$ tests all reject the null hypothesis of
cross-sectional independence, irrespective of the choices of $\hat{m}$ and
whether we allow for error serial correlation. In view of the theoretical and
finite sample results reported in this paper, it is advisable to focus on the
CD$^{\text{*}}$ test results and recognize that the relatively large
magnitudes obtained for CD and CD$_{W+}$ test statistics could be due to their
tendencies to over-rejection in the presence of strong latent factors and
non-Gaussian errors. Focusing on $CD^{\ast}$, we find that even with $\hat
{m}=10$ the CD$^{\text{*}}$ test strongly rejects the null of cross-sectional
independence with $CD^{\ast}$ statistic of $25.5$, compared to 95 per cent
critical value of $1.96$. There is clear evidence that in addition to latent
factors, spatial modeling of the type carried out in \citet{bailey2016two} and
\cite{aquaro2021estimation} is likely to be necessary to account for the
remaining error cross-sectional dependence.

\begin{table}[t]
\caption{Tests of error cross-sectional dependence for the house price
application}%
\label{table.app.1}
\renewcommand{\thetable}{5} \renewcommand{\thetable}{5}
\par
\begin{center}
{\small
\begin{tabular}
[c]{lrrrrrrrrrrrrrrrr}\hline\hline
&  & \multicolumn{4}{c}{Not adjusted for} &  & \multicolumn{9}{c}{Adjusted for
error serial correlation} & \\\cline{8-16}
&  & \multicolumn{4}{c}{error serial correlation} &  &
\multicolumn{4}{c}{Variance adjusted} &  & \multicolumn{4}{c}{ARDL adjusted} &
\\\cline{3-6}\cline{8-11}\cline{13-16}
&  & $\hat{m}$ & $CD$ & $CD^{\ast}$ & $CD_{W+}$ &  & $\hat{m}$ & $CD$ &
$CD^{\ast}$ & $CD_{W+}$ &  & $\hat{m}$ & $CD$ & $CD^{\ast}$ & $CD_{W+}$ & \\
$IC_{p1}$ &  & 10 & 22.2 & 53.8 & 1080.9 &  & 10 & 10.5 & 25.5 & 512.2 &  &
10 & 59.1 & 64.7 & 1171.9 & \\
$IC_{p2}$ &  & 4 & 112.4 & 122.4 & 1444.8 &  & 4 & 50.7 & 55.3 & 652.3 &  &
2 & 64.7 & 66.3 & 1598.4 & \\
$ER$ &  & 2 & 107.9 & 117.9 & 1588.4 &  & 2 & 47.3 & 51.8 & 697.1 &  & 1 &
60.3 & 61.4 & 1766.8 & \\
$GR$ &  & 2 & 107.9 & 117.9 & 1588.4 &  & 2 & 47.3 & 51.8 & 697.1 &  & 1 &
60.3 & 61.4 & 1766.8 & \\\hline
\end{tabular}
}
\end{center}
\par
{\small
}
\par
{\small {\footnotesize \textit{Notes}: The test statistics that do not adjust
for serial correlation and that use variance adjustment are based on the panel
regression in (\ref{hp_mod2}). The test statistics that use ARDL adjustment
are based on the panel regression in (\ref{hp_mod4}). Both panel regressions
allow for seasonal dummies and national and regional effects. $IC_{p1}$ and
$IC_{p2}$ denote the two information criteria by \cite{bai2002determining},
while $ER$ and $GR$ refer to the two criteria proposed by
\cite{ahn2013eigenvalue}. The maximum number of latent factors, }}$m_{\max}$,
{\small {\footnotesize is set as }}$10${\small {\footnotesize . The number of
selected factors is denoted by $\hat{m}$. $CD$ denotes the standard test of
error cross-sectional dependence defined by (\ref{cd}), $CD^{\ast}$ is the
bias-corrected version defined by (\ref{CD*a}), and $CD_{W+}$ is the
power-enhanced randomized version defined by (\ref{CDW+}). }}\end{table}

\section{Concluding remarks\label{conclusion}}

This paper revisits the problem of testing error cross-sectional independence
in panel data models with latent factors. Starting with a pure latent
multi-factor model we show that the standard CD test proposed by
\cite{pesaran2004general} remains valid if the latent factors are weak, but
over-reject when one or more of the latent factors are strong. The
over-rejection of the CD test in the case of strong factors is also
established by \cite{juodis2022incidental}, who propose a randomized test
statistic to correct for over-rejection and add a screening component to
achieve power. However, as we show, JR's CD$_{W^{+}}$ test is not guaranteed
to have the correct size and need not be powerful against spatial or network
alternatives. Such alternatives are of particular interest in the analyses of
ripple effects in housing markets, and clustering of firms within industries
in capital or arbitrage asset pricing models. In fact, using Monte Carlo
experiments we show that under non-Gaussian errors the JR test continues to
over-reject when the cross section dimension ($n$) is larger than the time
dimension ($T$), and often has power close to size against spatial
alternatives. To overcome some of these shortcomings, we propose a simple
bias-corrected CD test statistic, labeled $CD^{\ast}$, which is shown to be
asymptotically $\mathcal{N}(0,1)$ under the null when $n$ and $T\rightarrow
\infty$ such that $n/T\rightarrow\kappa$, for a fixed constant $\kappa$. In
addition, the CD$^{\text{*}}$ test is shown to have power against network type
dependence. These results hold for pure latent factor models as well as for
panel regression models with latent factors. To deal with possible error
serial dependence, following \cite{baltagi2016testing}, we also consider a
variance adjusted version of $CD^{\ast}$, as well as an alternative ARDL
adjusted version that eliminates the error serial dependence before the
application of the CD$^{\text{*}}$ test procedure. Both of these approaches
are shown to perform well within the Monte Carlo set up of the paper.

\bigskip\bigskip

\clearpage

\renewcommand\theequation{A.\arabic{equation}} \setcounter{equation}{0}
\renewcommand\thesection{A.\arabic{section}} \setcounter{section}{0}


\begin{center}
{\large \textbf{APPENDIX}}
\end{center}

\bigskip

\noindent In this appendix we provide proofs of the propositions and and theorems.
The auxiliary lemmas and the associated proofs are given in the supplement.

\bigskip

\section{Proof of Proposition \ref{Proposition:CD(s)}}

Here we provide a proof for part (b) of Proposition \ref{Proposition:CD(s)}.
The proof for part (a) follows trivially by setting $\lambda_{T}=0$. To this
end we first note that the $CD$ statistic given by (\ref{cd}) can be written
as (for a proof see Lemma \ref{lm1} of the supplement)%
\[
CD=\left(  \sqrt{\frac{n}{n-1}}\right)  \frac{1}{\sqrt{2T}}\sum_{t=1}%
^{T}\left[  \left(  \frac{1}{\sqrt{n}}\sum_{i=1}^{n}\frac{\hat{u}_{it}}%
{\hat{\sigma}_{i,T}}\right)  ^{2}-1\right]  ,
\]
where $\hat{u}_{it}$ is defined by (\ref{eit}). Using Lemma \ref{lmcdtcd} of
the supplement we also note that
\begin{equation}
CD=\widetilde{CD}+o_{p}(1), \label{CDCDL}%
\end{equation}
where%
\begin{equation}
\widetilde{CD}=\left(  \sqrt{\frac{n}{n-1}}\right)  \frac{1}{\sqrt{T}}%
\sum_{t=1}^{T}\left[  \frac{\left(  \frac{1}{\sqrt{n}}\sum_{i=1}^{n}\frac
{\hat{u}_{it}}{\omega_{i,T}}\right)  ^{2}-1}{\sqrt{2}}\right]  ,
\label{CDtilde1}%
\end{equation}
with $\omega_{i,T}=\left(  T^{-1}\sigma_{i}^{2}\boldsymbol{\varepsilon
}_{i\circ}^{\prime}\mathbf{M}_{F}\boldsymbol{\varepsilon}_{i\circ}\right)
^{1/2}$, $\boldsymbol{\varepsilon}_{i\circ}=\left(  \varepsilon_{i1}%
,\varepsilon_{i2},\ldots,\varepsilon_{iT}\right)  ^{\prime}$ and
$\mathbf{M}_{F}=\mathbf{I}_{T}-\mathbf{F}\left(  \mathbf{F}^{\prime}%
\mathbf{F}\right)  ^{-1}\mathbf{F}^{\prime}$. Also letting
\begin{equation}
\varepsilon_{it}\left(  \lambda_{T}\right)  =\varepsilon_{it}+\lambda
_{T}\mathbf{w}_{i0}^{\prime}\boldsymbol{\varepsilon}_{\circ t}, \label{eit:p}%
\end{equation}
where $\lambda_{T}=c_{\lambda}T^{-1/2}$ with $c_{\lambda}\neq0$,
$\mathbf{w}_{i0}=\left(  w_{i1},w_{i2},\ldots,w_{in}\right)  ^{\prime}$ and
$\boldsymbol{\varepsilon}_{\circ t}=\left(  \varepsilon_{1t},\varepsilon
_{2t},\ldots,\varepsilon_{nt}\right)  ^{\prime}$, under (\ref{mod2}) $\hat
{u}_{it}$ can now be expressed as
\begin{equation}
\hat{u}_{it}=\sigma_{i}\varepsilon_{it}\left(  \lambda_{T}\right)
-\boldsymbol{\gamma}_{i}^{\prime}\left(  \mathbf{\hat{f}}_{t}-\mathbf{f}%
_{t}\right)  -\left(  \boldsymbol{\hat{\gamma}}_{i}-\boldsymbol{\gamma}%
_{i}\right)  ^{\prime}\mathbf{f}_{t}-\left(  \boldsymbol{\hat{\gamma}}%
_{i}-\boldsymbol{\gamma}_{i}\right)  ^{\prime}\left(  \mathbf{\hat{f}}%
_{t}-\mathbf{f}_{t}\right)  . \label{it1}%
\end{equation}
Let $\boldsymbol{\delta}_{i,T}=\boldsymbol{\gamma}_{i}/\omega_{i,T}$, and
$\hat{\boldsymbol{\delta}}_{i,T}=\hat{\boldsymbol{\gamma}}_{i}/\omega_{i,T}$.
Then%
\begin{equation}
\hat{u}_{it}/\omega_{i,T}=\sigma_{i}\varepsilon_{it}\left(  \lambda
_{T}\right)  /\omega_{i,T}-\boldsymbol{\delta}_{i,T}^{\prime}\left(
\mathbf{\hat{f}}_{t}-\mathbf{f}_{t}\right)  -\left(  \boldsymbol{\hat{\delta}%
}_{i,T}-\boldsymbol{\delta}_{i,T}\right)  ^{\prime}\mathbf{f}_{t}-\left(
\boldsymbol{\hat{\delta}}_{i,T}-\boldsymbol{\delta}_{i,T}\right)  ^{\prime
}\left(  \mathbf{\hat{f}}_{t}-\mathbf{f}_{t}\right)  . \label{eit2}%
\end{equation}
Also, subject to the normalization $n^{-1}\sum_{j=1}^{n}\boldsymbol{\hat
{\gamma}}_{j}\boldsymbol{\hat{\gamma}}_{j}^{\prime}=\mathbf{I}_{m_{0}}$ and
$n^{-1}\sum_{j=1}^{n}\boldsymbol{\gamma}_{j}\boldsymbol{\gamma}_{j}^{\prime
}=\mathbf{I}_{m_{0}}$ we have
\[
\mathbf{\hat{f}}_{t}=n^{-1}\sum_{j=1}^{n}\boldsymbol{\hat{\gamma}}_{j}%
y_{jt}=\left(  n^{-1}\sum_{j=1}^{n}\boldsymbol{\hat{\gamma}}_{j}%
\boldsymbol{\gamma}_{j}^{\prime}\right)  \mathbf{f}_{t}+n^{-1}\sum_{j=1}%
^{n}\boldsymbol{\hat{\gamma}}_{j}\sigma_{j}\varepsilon_{jt}\left(  \lambda
_{T}\right)  ,
\]
and hence%
\[
\mathbf{\hat{f}}_{t}-\mathbf{f}_{t}=\left[  n^{-1}\sum_{j=1}^{n}\left(
\boldsymbol{\hat{\gamma}}_{j}-\boldsymbol{\gamma}_{j}\right)
\boldsymbol{\gamma}_{j}^{\prime}\right]  \mathbf{f}_{t}+n^{-1}\sum_{j=1}%
^{n}\left(  \boldsymbol{\hat{\gamma}}_{j}-\boldsymbol{\gamma}_{j}\right)
\sigma_{j}\varepsilon_{jt}\left(  \lambda_{T}\right)  +n^{-1}\sum_{j=1}%
^{n}\boldsymbol{\gamma}_{j}\sigma_{j}\varepsilon_{jt}\left(  \lambda
_{T}\right)  .
\]
Using this result in (\ref{eit2}) we obtain%
\begin{align}
\hat{u}_{it}/\omega_{i,T}  &  =\sigma_{i}\varepsilon_{it}\left(  \lambda
_{T}\right)  /\omega_{i,T}-\boldsymbol{\delta}_{i,T}^{\prime}\left[
n^{-1}\sum_{j=1}^{n}\boldsymbol{\gamma}_{j}\sigma_{j}\varepsilon_{jt}\left(
\lambda_{T}\right)  \right] \nonumber\\
&  -\boldsymbol{\delta}_{i,T}^{\prime}\left[  n^{-1}\sum_{j=1}^{n}\left(
\boldsymbol{\hat{\gamma}}_{j}-\boldsymbol{\gamma}_{j}\right)
\boldsymbol{\gamma}_{j}^{\prime}\right]  \mathbf{f}_{t}-\boldsymbol{\delta
}_{i,T}^{\prime}\left[  n^{-1}\sum_{j=1}^{n}\left(  \boldsymbol{\hat{\gamma}%
}_{j}-\boldsymbol{\gamma}_{j}\right)  \sigma_{j}\varepsilon_{jt}\left(
\lambda_{T}\right)  \right] \nonumber\\
&  -\left(  \boldsymbol{\hat{\delta}}_{i,T}-\boldsymbol{\delta}_{i,T}\right)
^{\prime}\mathbf{f}_{t}-\left(  \boldsymbol{\hat{\delta}}_{i,T}%
-\boldsymbol{\delta}_{i,T}\right)  ^{\prime}\left(  \mathbf{\hat{f}}%
_{t}-\mathbf{f}_{t}\right)  , \label{eut3}%
\end{align}
and summing over $i$ yields
\begin{align*}
n^{-1/2}\sum_{i=1}^{n}\hat{u}_{it}/\omega_{i,T}  &  =n^{-1/2}\sum_{i=1}%
^{n}\sigma_{i}\varepsilon_{it}\left(  \lambda_{T}\right)  /\omega
_{i,T}-\boldsymbol{\varphi}_{nT}^{\prime}\left(  n^{-1/2}\sum_{i=1}%
^{n}\boldsymbol{\gamma}_{i}\sigma_{i}\varepsilon_{it}\left(  \lambda
_{T}\right)  \right) \\
&  -\boldsymbol{\varphi}_{nT}^{\prime}\left[  n^{-1/2}\sum_{i=1}^{n}\left(
\boldsymbol{\hat{\gamma}}_{i}-\boldsymbol{\gamma}_{i}\right)
\boldsymbol{\gamma}_{i}^{\prime}\right]  \mathbf{f}_{t}-\boldsymbol{\varphi
}_{nT}^{\prime}\left[  n^{-1/2}\sum_{i=1}^{n}\left(  \boldsymbol{\hat{\gamma}%
}_{i}-\boldsymbol{\gamma}_{i}\right)  \sigma_{i}\varepsilon_{it}\left(
\lambda_{T}\right)  \right] \\
&  -\left[  n^{-1/2}\sum_{i=1}^{n}\left(  \boldsymbol{\hat{\delta}}%
_{i,T}-\boldsymbol{\delta}_{i,T}\right)  ^{\prime}\right]  \mathbf{f}%
_{t}-\left[  n^{-1/2}\sum_{i=1}^{n}\left(  \boldsymbol{\hat{\delta}}%
_{i,T}-\boldsymbol{\delta}_{i,T}\right)  \right]  ^{\prime}\left(
\mathbf{\hat{f}}_{t}-\mathbf{f}_{t}\right)  ,
\end{align*}
where $\boldsymbol{\varphi}_{nT}=n^{-1}\sum_{i=1}^{n}\boldsymbol{\delta}%
_{i,T}$. Written more compactly
\begin{equation}
h_{t,nT}\left(  \lambda_{T}\right)  =n^{-1/2}\sum_{i=1}^{n}\hat{u}_{it}%
/\omega_{i,T}=\psi_{t,nT}\left(  \lambda_{T}\right)  -s_{t,nT}\left(
\lambda_{T}\right)  , \label{Avee}%
\end{equation}
where
\begin{align}
\psi_{t,nT}\left(  \lambda_{T}\right)   &  =\frac{1}{\sqrt{n}}\sum_{i=1}%
^{n}\frac{a_{i,nT}\sigma_{i}\varepsilon_{it}\left(  \lambda_{T}\right)
}{\omega_{i,T}},\text{ \ \ \ \ \ \ \ \ }a_{i,nT}=1-\omega_{i,T}%
\boldsymbol{\varphi}_{nT}^{\prime}\boldsymbol{\gamma}_{i},\label{epsinT1}\\
s_{t,nT}\left(  \lambda_{T}\right)   &  =\frac{1}{\sqrt{n}}\sum_{i=1}%
^{n}\left[  \boldsymbol{\varphi}_{nT}^{\prime}\left(  \boldsymbol{\hat{\gamma
}}_{i}-\boldsymbol{\gamma}_{i}\right)  \sigma_{i}\varepsilon_{it}\left(
\lambda_{T}\right)  +\left(  \boldsymbol{\hat{\delta}}_{i,T}%
-\boldsymbol{\delta}_{i,T}\right)  ^{\prime}\mathbf{\hat{f}}_{t}%
+\boldsymbol{\varphi}_{nT}^{\prime}\left(  \boldsymbol{\hat{\gamma}}%
_{i}-\boldsymbol{\gamma}_{i}\right)  \boldsymbol{\gamma}_{i}^{\prime
}\mathbf{f}_{t}\right]  . \label{snT}%
\end{align}
Further, let%
\begin{equation}
\xi_{t,n}\left(  \lambda_{T}\right)  =\frac{1}{\sqrt{n}}\sum_{i=1}^{n}%
a_{i,n}\varepsilon_{it}\left(  \lambda_{T}\right)  \text{, \ \ }%
a_{i,n}=1-\sigma_{i}\boldsymbol{\varphi}_{n}^{\prime}\boldsymbol{\gamma}_{i},
\label{egzi}%
\end{equation}
where $\boldsymbol{\varphi}_{n}=n^{-1}\sum_{i=1}^{n}\boldsymbol{\delta}_{i},$
and $\boldsymbol{\delta}_{i}=\boldsymbol{\gamma}_{i}/\sigma_{i}$. Then
$\psi_{t,nT}\left(  \lambda_{T}\right)  ,$ given by (\ref{epsinT1}), can be
written as
\begin{align*}
\psi_{t,nT}\left(  \lambda_{T}\right)   &  =\xi_{t,n}\left(  \lambda
_{T}\right)  +\frac{1}{\sqrt{n}}\sum_{i=1}^{n}\left(  1-\omega_{i,T}%
\boldsymbol{\varphi}_{nT}^{\prime}\boldsymbol{\gamma}_{i}\right)  \frac
{\sigma_{i}\varepsilon_{it}\left(  \lambda_{T}\right)  }{\omega_{i,T}}%
-\frac{1}{\sqrt{n}}\sum_{i=1}^{n}\left(  1-\sigma_{i}\boldsymbol{\varphi}%
_{n}^{\prime}\boldsymbol{\gamma}_{i}\right)  \varepsilon_{it}\left(
\lambda_{T}\right) \\
&  =\xi_{t,n}\left(  \lambda_{T}\right)  -\frac{1}{\sqrt{n}}\sum_{i=1}%
^{n}\boldsymbol{\varphi}_{nT}^{\prime}\boldsymbol{\gamma}_{i}\sigma
_{i}\varepsilon_{it}\left(  \lambda_{T}\right)  +\frac{1}{\sqrt{n}}\sum
_{i=1}^{n}\sigma_{i}\boldsymbol{\varphi}_{n}^{\prime}\boldsymbol{\gamma}%
_{i}\varepsilon_{it}\left(  \lambda_{T}\right)  +\frac{1}{\sqrt{n}}\sum
_{i=1}^{n}\left(  \frac{\sigma_{i}\varepsilon_{it}\left(  \lambda_{T}\right)
}{\omega_{i,T}}-\varepsilon_{it}\left(  \lambda_{T}\right)  \right) \\
&  =\xi_{t,n}\left(  \lambda_{T}\right)  +\frac{1}{\sqrt{n}}\sum_{i=1}%
^{n}\zeta_{it}\left(  \lambda_{T}\right)  -\left(  \boldsymbol{\varphi}%
_{nT}-\boldsymbol{\varphi}_{n}\right)  ^{\prime}\frac{1}{\sqrt{n}}\sum
_{i=1}^{n}\boldsymbol{\gamma}_{i}\sigma_{i}\varepsilon_{it}\left(  \lambda
_{T}\right)  ,
\end{align*}
where
\begin{equation}
\zeta_{it}\left(  \lambda_{T}\right)  =\left[  \frac{1}{\left(  T^{-1}%
\boldsymbol{\varepsilon}_{i\circ}^{\prime}\mathbf{M}_{F}%
\boldsymbol{\varepsilon}_{i\circ}\right)  ^{1/2}}-1\right]  \varepsilon
_{it}\left(  \lambda_{T}\right)  . \label{zeta}%
\end{equation}
Writing $\psi_{t,nT}\left(  \lambda_{T}\right)  $ more compactly we have
\begin{equation}
\psi_{t,nT}\left(  \lambda_{T}\right)  =\xi_{t,n}\left(  \lambda_{T}\right)
+\upsilon_{t,nT}\left(  \lambda_{T}\right)  -\left(  \boldsymbol{\varphi}%
_{nT}-\boldsymbol{\varphi}_{n}\right)  ^{\prime}\boldsymbol{\kappa}%
_{t,n}\left(  \lambda_{T}\right)  , \label{epsinT}%
\end{equation}
where%
\begin{align}
\xi_{t,n}\left(  \lambda_{T}\right)   &  =\frac{1}{\sqrt{n}}\sum_{i=1}%
^{n}a_{i,n}\varepsilon_{it}\left(  \lambda_{T}\right)  \text{, \ \ }%
a_{i,n}=1-\sigma_{i}\boldsymbol{\varphi}_{n}^{\prime}\boldsymbol{\gamma}%
_{i},\label{egzitn}\\
\boldsymbol{\kappa}_{t,n}\left(  \lambda_{T}\right)   &  =\frac{1}{\sqrt{n}%
}\sum_{i=1}^{n}\boldsymbol{\gamma}_{i}\sigma_{i}\varepsilon_{it}\left(
\lambda_{T}\right)  ,\label{kappa_t}\\
\upsilon_{t,nT}\left(  \lambda_{T}\right)   &  =\frac{1}{\sqrt{n}}\sum
_{i=1}^{n}\zeta_{it}\left(  \lambda_{T}\right)  . \label{nu_nt}%
\end{align}
Using (\ref{Avee}) in (\ref{CDtilde1}) and after some algebra we have (where
we have made the dependence of $\widetilde{CD}$ on $\lambda_{T}$ explicit)
\[
\widetilde{CD}(\lambda_{T})=\left(  \sqrt{\frac{n}{n-1}}\right)  \frac
{1}{\sqrt{T}}\sum_{t=1}^{T}\left(  \frac{\psi_{t,nT}^{2}\left(  \lambda
_{T}\right)  -1}{\sqrt{2}}\right)  +\left(  \sqrt{\frac{n}{n-1}}\right)
\left(  p_{nT}\left(  \lambda_{T}\right)  -q_{nT}\left(  \lambda_{T}\right)
\right)  ,
\]
where $\psi_{t,nT}\left(  \lambda_{T}\right)  $ is defined by (\ref{epsinT}),
\begin{equation}
p_{nT}\left(  \lambda_{T}\right)  =T^{-1/2}\sum_{t=1}^{T}s_{t,nT}^{2}\left(
\lambda_{T}\right)  , \label{pnT}%
\end{equation}
and%
\begin{equation}
q_{nT}\left(  \lambda_{T}\right)  =T^{-1/2}\sum_{t=1}^{T}\psi_{t,nT}\left(
\lambda_{T}\right)  s_{t,nT}\left(  \lambda_{T}\right)  . \label{qnT}%
\end{equation}
By Lemma \ref{lm5} of the supplement $p_{nT}\left(  \lambda
_{T}\right)  =o_{p}(1)$, and $q_{nT}\left(  \lambda_{T}\right)  =o_{p}(1)$.
Hence
\begin{equation}
\widetilde{CD}\left(  \lambda_{T}\right)  =\frac{1}{\sqrt{T}}\sum_{t=1}%
^{T}\left(  \frac{\psi_{t,nT}^{2}\left(  \lambda_{T}\right)  -1}{\sqrt{2}%
}\right)  +o_{p}(1). \label{CDtilteop_p}%
\end{equation}
Now consider $T^{-1/2}\sum_{t=1}^{T}\psi_{t,nT}^{2}\left(  \lambda_{T}\right)
$ and using (\ref{epsinT}) note that
\begin{align}
\frac{1}{\sqrt{T}}\sum_{t=1}^{T}\psi_{t,nT}^{2}\left(  \lambda_{T}\right)   &
=\left(  \frac{1}{\sqrt{T}}\sum_{t=1}^{T}\xi_{t,n}^{2}\left(  \lambda
_{T}\right)  \right)  \left(  1+\frac{2\sum_{t=1}^{T}\xi_{t,n}\left(
\lambda_{T}\right)  \upsilon_{t,nT}\left(  \lambda_{T}\right)  }{\sum
_{t=1}^{T}\xi_{t,n}^{2}\left(  \lambda_{T}\right)  }\right)  +\frac{1}%
{\sqrt{T}}\sum_{t=1}^{T}\upsilon_{t,nT}^{2}\left(  \lambda_{T}\right)
\nonumber\\
&  +\sqrt{T}\left(  \boldsymbol{\varphi}_{nT}-\boldsymbol{\varphi}_{n}\right)
^{\prime}\left(  \frac{\sum_{t=1}^{T}\boldsymbol{\kappa}_{t,n}\left(
\lambda_{T}\right)  \boldsymbol{\kappa}_{t,n}^{\prime}\left(  \lambda
_{T}\right)  }{T}\right)  \left(  \boldsymbol{\varphi}_{nT}%
-\boldsymbol{\varphi}_{n}\right) \nonumber\\
&  -2\sqrt{T}\left(  \boldsymbol{\varphi}_{nT}-\boldsymbol{\varphi}%
_{n}\right)  ^{\prime}\left(  \frac{1}{T}\sum_{t=1}^{T}\boldsymbol{\kappa
}_{t,n}\left(  \lambda_{T}\right)  \upsilon_{t,nT}\left(  \lambda_{T}\right)
\right) \nonumber\\
&  -2\sqrt{T}\left(  \boldsymbol{\varphi}_{nT}-\boldsymbol{\varphi}%
_{n}\right)  ^{\prime}\left(  \frac{1}{T}\sum_{t=1}^{T}\boldsymbol{\kappa
}_{t,n}\left(  \lambda_{T}\right)  \xi_{t,n}\left(  \lambda_{T}\right)
\right)  .\nonumber
\end{align}
By Lemma \ref{lm:gnT} of the supplement, it follows that
\begin{equation}
\frac{1}{\sqrt{T}}\sum_{t=1}^{T}\psi_{t,nT}^{2}\left(  \lambda_{T}\right)
=\left(  \frac{1}{\sqrt{T}}\sum_{t=1}^{T}\xi_{t,n}^{2}\left(  \lambda
_{T}\right)  \right)  \left(  1+\frac{2\sum_{t=1}^{T}\xi_{t,n}\left(
\lambda_{T}\right)  \upsilon_{t,nT}\left(  \lambda_{T}\right)  }{\sum
_{t=1}^{T}\xi_{t,n}^{2}\left(  \lambda_{T}\right)  }\right)  +o_{p}(1).
\label{sumepsi_p}%
\end{equation}
Consider now the bias-corrected version of $\widetilde{CD}\left(  \lambda
_{T}\right)  $ defined by%
\begin{equation}
\widetilde{CD}^{\ast}\left(  \lambda_{T}\right)  =\frac{\widetilde{CD}\left(
\lambda_{T}\right)  +\sqrt{\frac{T}{2}}\theta_{n}}{1-\theta_{n}}
\label{CDS1_p}%
\end{equation}
where $\theta_{n}=1-\frac{1}{n}\sum_{i=1}^{n}a_{i,n}^{2},$ and $a_{i,n}%
=1-\sigma_{i}\boldsymbol{\varphi}_{n}^{\prime}\boldsymbol{\gamma}_{i}$. Using
(\ref{CDtilteop_p}) in (\ref{CDS1_p}), we have
\begin{align*}
\widetilde{CD}^{\ast}\left(  \lambda_{T}\right)   &  =\frac{\frac{1}{\sqrt{T}%
}\sum_{t=1}^{T}\left(  \frac{\psi_{t,nT}^{2}\left(  \lambda_{T}\right)
-1}{\sqrt{2}}\right)  +\sqrt{\frac{T}{2}}\theta_{n}}{1-\theta_{n}}%
+o_{p}\left(  1\right) \\
&  =\frac{\frac{1}{\sqrt{T}}\sum_{t=1}^{T}\left(  \frac{\psi_{t,nT}^{2}\left(
\lambda_{T}\right)  }{\sqrt{2}}\right)  -\sqrt{\frac{T}{2}}\left(
1-\theta_{n}\right)  }{1-\theta_{n}}+o_{p}\left(  1\right)  .
\end{align*}
Now using (\ref{sumepsi_p}) in the above and after some re-arrangement of the
terms we obtain
\[
\widetilde{CD}^{\ast}\left(  \lambda_{T}\right)  =\frac{\left[  \frac{1}%
{\sqrt{T}}\sum_{t=1}^{T}\left(  \frac{\xi_{t,n}^{2}\left(  \lambda_{T}\right)
-\left(  1-\theta_{n}\right)  }{\sqrt{2}}\right)  \right]  \left(  1+\frac
{2}{\sqrt{T}}w_{nT}\left(  \lambda_{T}\right)  \right)  }{1-\theta_{n}}%
+\sqrt{2}w_{nT}\left(  \lambda_{T}\right)  +o_{p}\left(  1\right)  ,
\]
where
\[
w_{nT}\left(  \lambda_{T}\right)  =\frac{T^{-1/2}\sum_{t=1}^{T}\xi
_{t,n}\left(  \lambda_{T}\right)  \upsilon_{t,nT}\left(  \lambda_{T}\right)
}{T^{-1}\sum_{t=1}^{T}\xi_{t,n}^{2}\left(  \lambda_{T}\right)  }.
\]
By Lemma \ref{lm:wnT} of the supplement, $w_{nT}\left(  \lambda
_{T}\right)  =o_{p}\left(  1\right)  $. Hence\ \textbf{ }%
\[
\widetilde{CD}^{\ast}\left(  \lambda_{T}\right)  =\frac{\frac{1}{\sqrt{T}}%
\sum_{t=1}^{T}\left(  \frac{\xi_{t,n}^{2}\left(  \lambda_{T}\right)  -\left(
1-\theta_{n}\right)  }{\sqrt{2}}\right)  }{1-\theta_{n}}+o_{p}\left(
1\right)  .
\]
In particular, since $\varepsilon_{it}(\lambda_{T})=\varepsilon_{it}%
+\lambda_{T}\mathbf{w}_{i0}^{\prime}\boldsymbol{\varepsilon}_{\circ t}$, then
\begin{align*}
\xi_{t,n}(\lambda_{T})  &  =\frac{1}{\sqrt{n}}\sum_{i=1}^{n}a_{i,n}%
\varepsilon_{it}(\lambda_{T})\\
&  =\frac{1}{\sqrt{n}}\sum_{i=1}^{n}a_{i,n}\varepsilon_{it}+\lambda_{T}%
\frac{1}{\sqrt{n}}\sum_{i=1}^{n}a_{i,n}\mathbf{w}_{i0}^{\prime}%
\boldsymbol{\varepsilon}_{\circ t}\\
&  \equiv A_{nt}+\lambda_{T}B_{nt}.
\end{align*}
Using this result, $\widetilde{CD}^{\ast}(\lambda_{T})$ can be written as
\begin{align}
\widetilde{CD}^{\ast}\left(  \lambda_{T}\right)   &  =\frac{\frac{1}{\sqrt{T}%
}\sum_{t=1}^{T}\left(  \frac{\xi_{t,n}^{2}(\lambda_{T})-\left(  1-\theta
_{n}\right)  }{\sqrt{2}}\right)  }{1-\theta_{n}}+o_{p}\left(  1\right)
\nonumber\\
&  =\frac{\frac{1}{\sqrt{T}}\sum_{t=1}^{T}\left(  \frac{A_{nt}^{2}%
+2\lambda_{T}A_{nt}B_{nt}+\lambda_{T}^{2}B_{nt}^{2}-\left(  1-\theta
_{n}\right)  }{\sqrt{2}}\right)  }{1-\theta_{n}}+o_{p}\left(  1\right)
\nonumber\\
&  =\frac{\frac{1}{\sqrt{T}}\sum_{t=1}^{T}\left(  \frac{A_{nt}^{2}-\left(
1-\theta_{n}\right)  }{\sqrt{2}}\right)  }{1-\theta_{n}}+\frac{\sqrt{2}%
\lambda_{T}}{\left(  1-\theta_{n}\right)  \sqrt{T}}\sum_{t=1}^{T}A_{nt}%
B_{nt}+\frac{\frac{\lambda_{T}^{2}}{\sqrt{2}\sqrt{T}}\sum_{t=1}^{T}B_{nt}^{2}%
}{1-\theta_{n}}+o_{p}\left(  1\right) \nonumber\\
&  =\widetilde{CD}^{\ast}\left(  0\right)  +\phi_{nT}+g_{nT}+o_{p}\left(
1\right)  . \label{CDtstar}%
\end{align}
The second term of (\ref{CDtstar}) can be written as%
\begin{align*}
\phi_{nT}  &  =\frac{\sqrt{2}\lambda_{T}}{\left(  1-\theta_{n}\right)
\sqrt{T}}\sum_{t=1}^{T}A_{nt}B_{nt}\\
&  =\phi_{n}+\frac{\sqrt{2}c_{\lambda}}{\left(  1-\theta_{n}\right)  }%
T^{-1}\sum_{t=1}^{T}\left[  A_{nt}B_{nt}-E\left(  A_{nt}B_{nt}\right)
\right]
\end{align*}
where $\phi_{n}=\frac{\sqrt{2}c_{\lambda}}{\left(  1-\theta_{n}\right)
}T^{-1}\sum_{t=1}^{T}E\left(  A_{nt}B_{nt}\right)  $. Further
\begin{align*}
E\left(  A_{nt}B_{nt}\right)   &  =E\left(  \frac{1}{\sqrt{n}}\sum_{i=1}%
^{n}a_{i,n}\varepsilon_{it}\right)  \left(  \frac{1}{\sqrt{n}}\sum_{j=1}%
^{n}a_{j,n}\mathbf{w}_{j0}^{\prime}\boldsymbol{\varepsilon}_{\circ t}\right)
\\
&  =\frac{1}{n}\sum_{i=1}^{n}\sum_{j=1}^{n}a_{j,n}a_{i,n}E\left(
\mathbf{w}_{j0}^{\prime}\boldsymbol{\varepsilon}_{\circ t}\varepsilon
_{it}\right)  =\frac{1}{n}\sum_{i=1}^{n}\sum_{j=1}^{n}a_{j,n}a_{i,n}w_{ji}\\
&  =n^{-1}\mathbf{a}_{n}^{\prime}\mathbf{Wa}_{n}%
\end{align*}
with $\mathbf{a}_{n}=(a_{1,n},a_{2,n},...,a_{n,n})^{\prime}$. Hence%
\begin{equation}
\phi_{n}=\frac{\sqrt{2}c_{\lambda}}{\left(  1-\theta_{n}\right)  }\left(
T^{-1}\sum_{t=1}^{T}E\left(  A_{nt}B_{nt}\right)  \right)  =\frac{\sqrt
{2}c_{\lambda}}{\left(  1-\theta_{n}\right)  }n^{-1}\mathbf{a}_{n}^{\prime
}\mathbf{Wa}_{n}. \label{phi_n}%
\end{equation}
Under part (a) of Assumption \ref{ass:errors}, $\varepsilon_{it}\sim
IID\left(  0,1\right)  $ for all $i$ and $t$, with $E\left(  \varepsilon
_{it}^{8}\right)  <C$, it then follows that $A_{nt}B_{nt}-E\left(
A_{nt}B_{nt}\right)  $ will be serially independent with zero means and finite
variances and by weak law of large numbers $T^{-1}\sum_{t=1}^{T}\left[
A_{nt}B_{nt}-E\left(  A_{nt}B_{nt}\right)  \right]  =o_{p}(1)$. Therefore%
\begin{equation}
\phi_{nT}=\phi_{n}+o_{p}(1). \label{phi_nT2}%
\end{equation}
Similarly, for the third term of (\ref{CDtstar}) we first note that
\begin{align*}
T^{-1}\sum_{t=1}^{T}B_{nt}^{2}  &  \rightarrow_{p}E\left(  B_{nt}^{2}\right)
=\frac{1}{n}\sum_{i=1}^{n}\sum_{j=1}^{n}a_{i,n}a_{j,n}\mathbf{w}_{i0}^{\prime
}E\left(  \boldsymbol{\varepsilon}_{\circ t}\boldsymbol{\varepsilon}_{\circ
t}^{\prime}\right)  \mathbf{w}_{j0}\\
&  =\frac{1}{n}\sum_{i=1}^{n}\sum_{j=1}^{n}a_{i,n}a_{j,n}\mathbf{w}%
_{i0}^{\prime}\mathbf{w}_{j0}=\frac{1}{n}\sum_{s=1}^{n}\sum_{i=1}^{n}%
\sum_{j=1}^{n}a_{i,n}w_{is}a_{j,n}w_{js}\\
&  =\frac{1}{n}\sum_{s=1}^{n}\left(  \sum_{i=1}^{n}a_{i,n}w_{is}\right)
^{2}<C,
\end{align*}
and hence
\begin{equation}
g_{nT}=\frac{\frac{\lambda_{T}^{2}}{\sqrt{2}\sqrt{T}}\sum_{t=1}^{T}B_{nt}^{2}%
}{1-\theta_{n}}=\frac{c_{\lambda}^{2}}{\sqrt{2T}\left(  1-\theta_{n}\right)
}\left(  T^{-1}\sum_{t=1}^{T}B_{nt}^{2}\right)  =O_{p}\left(  \frac{1}%
{\sqrt{T}}\right)  . \label{gnT}%
\end{equation}
Using (\ref{phi_nT2}) and (\ref{gnT}) in (\ref{CDtstar}) now yields
\begin{equation}
\widetilde{CD}^{\ast}\left(  \lambda_{T}\right)  =\widetilde{CD}^{\ast}\left(
0\right)  +\phi_{n}+o_{p}\left(  1\right)  . \label{CDCD0}%
\end{equation}
Consider the first term of (\ref{CDtstar}), $\widetilde{CD}^{\ast}\left(
0\right)  $, and note that $A_{nt}$ can be written as $\,A_{nt}=n^{-1/2}%
\mathbf{a}_{n}^{\prime}\boldsymbol{\varepsilon}_{\circ t}$, where
$\boldsymbol{a}_{n}^{\prime}=(a_{1,n},a_{2,n},...,a_{n,n})$, and we have
(recall that $a_{i}=1-\sigma_{i}\boldsymbol{\varphi}_{n}^{\prime
}\boldsymbol{\gamma}_{i}$)%
\[
E\left(  A_{nt}^{2}\right)  =n^{-1}E\left(  \boldsymbol{\varepsilon}_{\circ
t}^{\prime}\mathbf{a}_{n}\mathbf{a}_{n}^{\prime}\boldsymbol{\varepsilon
}_{\circ t}\right)  =n^{-1}\mathbf{a}_{n}^{\prime}\mathbf{a}_{n}=n^{-1}%
\sum_{i=1}^{n}a_{i,n}^{2}=\frac{1}{n}\sum_{i=1}^{n}\left(  1-\sigma
_{i}\boldsymbol{\varphi}_{n}^{\prime}\boldsymbol{\gamma}_{i}\right)
^{2}=1-\theta_{n}>0,
\]
and (using result (S.7) of Lemma 6 in \cite{pesaran2024testing})%
\begin{align*}
E\left(  A_{nt}^{4}\right)   &  =n^{2}E\left[  (\boldsymbol{\varepsilon
}_{\circ t}^{\prime}\mathbf{a}_{n}\mathbf{a}_{n}^{\prime}%
\boldsymbol{\varepsilon}_{\circ t})^{2}\right]  =\kappa_{2}n^{-2}%
\mathrm{tr}\left[  \mathbf{A\odot A}\right]  +n^{-2}\left[  \mathrm{tr}\left(
\mathbf{A}\right)  \right]  ^{2}+2n^{-2}\mathrm{tr}\left(  \mathbf{A}%
^{2}\right) \\
&  =\kappa_{2}n^{-2}\sum_{i=1}^{n}a_{i,n}^{4}+3\left(  n^{-1}\sum_{i=1}%
^{n}a_{i,n}^{2}\right)  ^{2},
\end{align*}
where $\kappa_{2}=E\left(  \varepsilon_{it}^{4}\right)  -3$ and $\mathbf{A}%
=\mathbf{a}_{n}\mathbf{a}_{n}^{\prime}$. Hence
\[
Var\left(  A_{nt}^{2}\right)  =2\left(  \frac{1}{n}\sum_{i=1}^{n}a_{i,n}%
^{2}\right)  ^{2}-\frac{\kappa_{2}}{n}\left(  \frac{1}{n}\sum_{i=1}^{n}%
a_{i,n}^{4}\right)  .
\]
Furthermore, since $\sup_{i}\left\vert a_{i,n}\right\vert =\sup_{i}\left\vert
1-\sigma_{i}\boldsymbol{\varphi}_{n}^{\prime}\boldsymbol{\gamma}%
_{i}\right\vert <1+\left(  \sup_{i}\sigma_{i}\right)  \left(  \sup
_{i}\left\Vert \boldsymbol{\gamma}_{i}\right\Vert \right)  \left\Vert
\boldsymbol{\varphi}_{n}\right\Vert <C$, then $n^{-2}\sum_{i=1}^{n}a_{i,n}%
^{4}=O(n^{-1})$, and $Var\left(  A_{nt}^{2}\right)  =2\left(  1-\theta
_{n}\right)  ^{2}+O(n^{-1})$. Using the above results it now readily follows
that,
\begin{equation}
\widetilde{CD}^{\ast}(0)=\frac{\frac{1}{\sqrt{T}}\sum_{t=1}^{T}\left(
\frac{A_{nt}^{2}-\left(  1-\theta_{n}\right)  }{\sqrt{2}}\right)  }%
{1-\theta_{n}}=\frac{1}{\sqrt{T}}\sum_{t=1}^{T}\left(  \frac{A_{nt}%
^{2}-E\left(  A_{nt}^{2}\right)  }{\sqrt{Var\left(  A_{nt}^{2}\right)
+O(n^{-1})}}\right)  . \label{CD*0}%
\end{equation}
Since under part (a) of Assumption \ref{ass:errors}, $\varepsilon_{it}\sim
IID\left(  0,1\right)  $ for all $i$ and $t$, then $A_{nt}=\frac{1}{\sqrt{n}%
}\sum_{i=1}^{n}a_{i,n}\varepsilon_{it}$, and $A_{nt}^{2}$ are also
independently distributed over $t$ with finite second order moments. Then by
Lindeberg-L\'{e}vy central limit theorem it follows that $\widetilde{CD}%
^{\ast}(0)\rightarrow_{d}\mathcal{N}(0,1),$ as $n$ and $T\rightarrow\infty$.
Using this result in (\ref{CDCD0}) we further have $\widetilde{CD}^{\ast
}\left(  \lambda_{T}\right)  \rightarrow_{d}\mathcal{N}\left(  \phi,1\right)
$, where $\phi=\lim_{n\rightarrow\infty}\phi_{n}$. By Lemma \ref{lmcdtcd} of
the supplement we have $CD=\widetilde{CD}+o_{p}(1)$, then it follows
\begin{align*}
CD^{\ast}(\theta_{n})  &  =\frac{CD+\sqrt{\frac{T}{2}}\theta_{n}}{1-\theta
_{n}}=\frac{\widetilde{CD}+\sqrt{\frac{T}{2}}\theta_{n}}{1-\theta_{n}}%
+o_{p}(1)\\
&  =\widetilde{CD}^{\ast}\left(  \lambda_{T}\right)  +o_{p}(1),
\end{align*}
where the final line holds by (\ref{CDS1_p}). Now result (\ref{CDs:H1T}) is established as required.

\section{Proof of Proposition \ref{Proposition:theta}}

Note that $\theta_{n}$ define by (\ref{thetan}) can be written as $\theta
_{n}=2g_{n}-\boldsymbol{\varphi}_{n}^{\prime}\mathbf{H}_{n}\boldsymbol{\varphi
}_{n}$, where $g_{n}=n^{-1}\sum_{i=1}^{n}\sigma_{i}\boldsymbol{\varphi}%
_{n}^{\prime}\boldsymbol{\gamma}_{i}$, $\mathbf{H}_{n}=n^{-1}\sum_{i=1}%
^{n}\sigma_{i}\left(  \boldsymbol{\gamma}_{i}\boldsymbol{\gamma}_{i}^{\prime
}\right)  $, $\boldsymbol{\varphi}_{n}=n^{-1}\sum_{i=1}^{n}\boldsymbol{\delta
}_{i},$ and $\boldsymbol{\delta}_{i}=\boldsymbol{\gamma}_{i}/\sigma_{i}$.
Similarly using (\ref{htheta}) we have $\hat{\theta}_{nT}=2\hat{g}%
_{nT}-\boldsymbol{\hat{\varphi}}_{nT}^{\prime}\mathbf{\hat{H}}_{nT}%
\boldsymbol{\hat{\varphi}}_{nT}$, where $\hat{g}_{nT}=n^{-1}\sum_{i=1}^{n}%
\hat{\sigma}_{i,T}\boldsymbol{\hat{\varphi}}_{nT}^{\prime}\boldsymbol{\hat
{\gamma}}_{i},$ $\mathbf{\hat{H}}_{nT}=n^{-1}\sum_{i=1}^{n}\hat{\sigma}%
_{i,T}^{2}\left(  \boldsymbol{\hat{\gamma}}_{i}\boldsymbol{\hat{\gamma}}%
_{i}^{\prime}\right)  ,$ $\boldsymbol{\hat{\varphi}}_{nT}=n^{-1}\sum_{i=1}%
^{n}\boldsymbol{\hat{\delta}}_{i,nT},$ and\ $\boldsymbol{\hat{\delta}}%
_{i,nT}=\boldsymbol{\hat{\gamma}}_{i}/\hat{\sigma}_{i,T}$. Then%
\begin{equation}
\sqrt{T}\left(  \hat{\theta}_{nT}-\theta_{n}\right)  =2\sqrt{T}\left(  \hat
{g}_{nT}-g_{n}\right)  -\sqrt{T}\left(  \boldsymbol{\hat{\varphi}}%
_{nT}^{\prime}\mathbf{\hat{H}}_{nT}\boldsymbol{\hat{\varphi}}_{nT}%
-\boldsymbol{\varphi}_{n}^{\prime}\mathbf{H}_{n}\boldsymbol{\varphi}%
_{n}\right)  . \label{corollary01}%
\end{equation}
Consider the first term of the above
\begin{align}
\sqrt{T}\left(  \hat{g}_{nT}-g_{n}\right)   &  =\sqrt{T}\left(
\boldsymbol{\hat{\varphi}}_{nT}-\boldsymbol{\varphi}_{n}\right)  \left(
\frac{1}{n}\sum_{i=1}^{n}\sigma_{i}\boldsymbol{\gamma}_{i}\right) \nonumber\\
&  +\sqrt{T}\left[  \boldsymbol{\hat{\varphi}}_{nT}^{\prime}\left(  \frac
{1}{n}\sum_{i=1}^{n}\hat{\sigma}_{i,T}\boldsymbol{\hat{\gamma}}_{i}-\frac
{1}{n}\sum_{i=1}^{n}\sigma_{i}\boldsymbol{\gamma}_{i}\right)  \right]  ,
\label{corollary1a}%
\end{align}
and since $\sigma_{i}$ and $\boldsymbol{\gamma}_{i}$ are bounded then
$n^{-1}\sum_{i=1}^{n}\sigma_{i}\boldsymbol{\gamma}_{i}=O(1)$. Also by
(\ref{phinhatLemma}) of Lemma \ref{PHIn} in the supplement we have
$\sqrt{T}\left(  \boldsymbol{\hat{\varphi}}_{nT}-\boldsymbol{\varphi}%
_{n}\right)  =o_{p}(1)$, and hence the first term of the above is $o_{p}(1)$.
To establish the probability order of the second term of (\ref{corollary1a}),
we first note that
\begin{align}
\sqrt{T}\left[  \boldsymbol{\hat{\varphi}}_{nT}^{\prime}\left(  \frac{1}%
{n}\sum_{i=1}^{n}\hat{\sigma}_{i,T}\boldsymbol{\hat{\gamma}}_{i}-\frac{1}%
{n}\sum_{i=1}^{n}\sigma_{i}\boldsymbol{\gamma}_{i}\right)  \right]   &
=\sqrt{T}\left[  \boldsymbol{\varphi}_{n}^{\prime}\left(  \frac{1}{n}%
\sum_{i=1}^{n}\hat{\sigma}_{i,T}\boldsymbol{\hat{\gamma}}_{i}-\frac{1}{n}%
\sum_{i=1}^{n}\sigma_{i}\boldsymbol{\gamma}_{i}\right)  \right] \nonumber\\
&  +\sqrt{T}\left[  \left(  \boldsymbol{\hat{\varphi}}_{nT}%
-\boldsymbol{\varphi}_{n}\right)  ^{^{\prime}}\left(  \frac{1}{n}\sum
_{i=1}^{n}\hat{\sigma}_{i,T}\boldsymbol{\hat{\gamma}}_{i}-\frac{1}{n}%
\sum_{i=1}^{n}\sigma_{i}\boldsymbol{\gamma}_{i}\right)  \right]  .
\label{corollary1e}%
\end{align}
But by (\ref{bound phin}) and (\ref{lc1}) of the supplement,
$\boldsymbol{\varphi}_{n}=O_{p}(1)$ and $n^{-1}\sum_{i=1}^{n}\left(
\hat{\sigma}_{i,T}\boldsymbol{\hat{\gamma}}_{i}-\sigma_{i}\boldsymbol{\gamma
}_{i}\right)  =O_{p}\left(  \ln\left(  n\right)  /T\right)  ,$ which also
establishes that the second term of (\ref{corollary1e}) is $o_{p}(1)$.
Therefore overall we have%
\begin{equation}
\sqrt{T}\left(  \hat{g}_{nT}-g_{n}\right)  =o_{p}(1). \label{corollary1c}%
\end{equation}
Consider now the second term of (\ref{corollary01}) and note that
\begin{align}
\sqrt{T}\left(  \boldsymbol{\hat{\varphi}}_{nT}^{\prime}\mathbf{\hat{H}}%
_{nT}\boldsymbol{\hat{\varphi}}_{nT}-\boldsymbol{\varphi}_{n}^{\prime
}\mathbf{H}_{n}\boldsymbol{\varphi}_{n}\right)   &  =\sqrt{T}\left(
\boldsymbol{\hat{\varphi}}_{nT}-\boldsymbol{\varphi}_{n}\right)  ^{\prime
}\mathbf{\hat{H}}_{nT}\left(  \boldsymbol{\hat{\varphi}}_{nT}%
-\boldsymbol{\varphi}_{n}\right) \nonumber\\
&  +2\sqrt{T}\left(  \boldsymbol{\hat{\varphi}}_{nT}-\boldsymbol{\varphi}%
_{n}\right)  ^{\prime}\mathbf{\hat{H}}_{nT}\boldsymbol{\varphi}_{n}+\sqrt
{T}\boldsymbol{\varphi}_{n}^{\prime}\left(  \mathbf{\hat{H}}_{nT}%
-\mathbf{H}_{n}\right)  \boldsymbol{\varphi}_{n}, \label{corollary1b}%
\end{align}
where $\mathbf{\hat{H}}_{nT}=\frac{1}{n}\sum_{i=1}^{n}\hat{\sigma}_{i,T}%
^{2}\left(  \boldsymbol{\hat{\gamma}}_{i}\boldsymbol{\hat{\gamma}}_{i}%
^{\prime}\right)  $, and
\begin{align}
\sqrt{T}\left(  \mathbf{\hat{H}}_{nT}-\mathbf{H}_{n}\right)   &  =\frac
{\sqrt{T}}{n}\sum_{i=1}^{n}\sigma_{i}^{2}\left(  \boldsymbol{\hat{\gamma}}%
_{i}\boldsymbol{\hat{\gamma}}_{i}^{\prime}-\boldsymbol{\gamma}_{i}%
\boldsymbol{\gamma}_{i}^{\prime}\right)  +\frac{\sqrt{T}}{n}\sum_{i=1}%
^{n}\left(  \hat{\sigma}_{i,T}^{2}-\sigma_{i}^{2}\right)  \left(
\boldsymbol{\hat{\gamma}}_{i}\boldsymbol{\hat{\gamma}}_{i}^{\prime
}-\boldsymbol{\gamma}_{i}\boldsymbol{\gamma}_{i}^{\prime}\right) \nonumber\\
&  +\frac{\sqrt{T}}{n}\sum_{i=1}^{n}\left(  \hat{\sigma}_{i,T}^{2}%
-\omega_{i,T}^{2}\right)  \boldsymbol{\gamma}_{i}\boldsymbol{\gamma}%
_{i}^{\prime}+\frac{\sqrt{T}}{n}\sum_{i=1}^{n}\left(  \omega_{i,T}^{2}%
-\sigma_{i}^{2}\right)  \boldsymbol{\gamma}_{i}\boldsymbol{\gamma}_{i}%
^{\prime}\nonumber\\
&  =\sum_{j=1}^{4}\mathbf{D}_{j,nT}.\nonumber
\end{align}
The first two terms of (\ref{corollary1b}) are $o_{p}(1)$, since $\left\Vert
\boldsymbol{\varphi}_{n}\right\Vert <C$,  $\sqrt{T}\left(  \boldsymbol{\hat
{\varphi}}_{nT}-\boldsymbol{\varphi}_{n}\right)  =o_{p}(1)$, and $n^{-1}%
\sum_{i=1}^{n}\hat{\sigma}_{i,T}^{2}\left(  \boldsymbol{\hat{\gamma}}%
_{i}\boldsymbol{\hat{\gamma}}_{i}^{\prime}\right)  =O_{p}(1)$. To establish
the probability order of the third term of (\ref{corollary1b}), since
$\left\Vert \boldsymbol{\varphi}_{n}\right\Vert <C$ it is sufficient to
consider the four terms of $\sqrt{T}\left(  \mathbf{\hat{H}}_{nT}%
-\mathbf{H}_{n}\right)  $. It is clear that $\mathbf{D}_{2,nT}$ is dominated
by $\mathbf{D}_{1,nT\ }$ and by (\ref{lc4}) of Lemma \ref{lc} of the
supplement,
\[
\mathbf{D}_{1,nT\ }=\frac{\sqrt{T}}{n}\sum_{i=1}^{n}\sigma_{i}^{2}\left(
\boldsymbol{\hat{\gamma}}_{i}\boldsymbol{\hat{\gamma}}_{i}^{\prime
}-\boldsymbol{\gamma}_{i}\boldsymbol{\gamma}_{i}^{\prime}\right)
=O_{p}\left(  \sqrt{\frac{\ln\left(  n\right)  }{n}}\right)  =o_{p}(1).
\]
Using (\ref{Ews1}) of Lemma \ref{lm4} of the supplement and replacing
$b_{ni}$ with $\gamma_{ij}\gamma_{ij^{^{\prime}}}$ for $j,j^{\prime
}=1,2,\ldots,m_{0}$, it then follows that
\[
\mathbf{D}_{3,nT}=\frac{\sqrt{T}}{n}\sum_{i=1}^{n}\left(  \hat{\sigma}%
_{i,T}^{2}-\omega_{i,T}^{2}\right)  \boldsymbol{\gamma}_{i}\boldsymbol{\gamma
}_{i}^{\prime}=O_{p}\left(  \frac{\ln\left(  n\right)  }{\sqrt{T}}\right)
=o_{p}(1).
\]
Finally, denote the $\left(  j,j^{\prime}\right)  $ element of $\mathbf{D}%
_{4,nT}$ by $d_{4,nT}(j,j^{\prime})$ and note that
\[
d_{4,nT}(j,j^{\prime})=\frac{1}{n}\sum_{i=1}^{n}\left(  \sigma_{i}^{2}%
\gamma_{ij}\gamma_{ij^{\prime}}\right)  \sqrt{T}\left(  \frac
{\boldsymbol{\varepsilon}_{i\circ}^{^{\prime}}\mathbf{M}_{F}%
\boldsymbol{\varepsilon}_{i\circ}}{T}-1\right)  \text{, for }j,j^{\prime
}=1,2,...,m_{0}.
\]
But under Assumptions \ref{ass:errors} and \ref{ass:loadings}, $\left\vert
\sigma_{i}^{2}\gamma_{ij}\gamma_{ij^{\prime}}\right\vert <C$, and $\sqrt
{T}\left(  T^{-1}\boldsymbol{\varepsilon}_{i\circ}^{^{\prime}}\mathbf{M}%
_{F}\boldsymbol{\varepsilon}_{i\circ}-1\right)  $, for $i=1,2,...,n$ are
identically and independently distributed across $i$, with mean $1/\sqrt{T}$
and a finite variance\footnote{The mean and variance of $\sqrt{T}\left(
T^{-1}\boldsymbol{\varepsilon}_{i\circ}^{^{\prime}}\mathbf{M}_{F}%
\boldsymbol{\varepsilon}_{i\circ}-1\right)  $ can be obtained using
(\ref{mom1}) and (\ref{mom2}) in Lemma \ref{Qmom} of the supplement.}. Then by standard law of large numbers, for each $(j,j^{\prime})$,
$d_{4,nT}(j,j^{\prime})\rightarrow_{p}0$, as $n$ and $T\rightarrow\infty$, and
hence we also have $\mathbf{D}_{4,nT}=o_{p}(1)$. Overall, $\mathbf{\hat{H}%
}_{nT}-\mathbf{H}_{n}=o_{p}(1)$, and we have $\sqrt{T}\left(  \boldsymbol{\hat
{\varphi}}_{nT}^{\prime}\mathbf{\hat{H}}_{nT}\boldsymbol{\hat{\varphi}}%
_{nT}-\boldsymbol{\varphi}_{n}^{\prime}\mathbf{H}_{n}\boldsymbol{\varphi}%
_{n}\right)  =o_{p}(1)$. Using this result and (\ref{corollary1c}) in
(\ref{corollary01}) now yields $\sqrt{T}\left(  \hat{\theta}_{nT}-\theta
_{n}\right)  =o_{p}(1)$, as required.

\section{Proof of Theorem \ref{Theorem:CDs}}

Recall from (\ref{CD*a}) that $CD^{\ast}$ is given by%
\[
CD^{\ast}=\frac{CD+\sqrt{\frac{T}{2}}\hat{\theta}_{nT}}{1-\hat{\theta}_{nT}},
\]
where $\hat{\theta}_{nT}=1-\frac{1}{n}\sum_{i=1}^{n}\hat{a}_{i,n}^{2}$,
$\hat{a}_{i,n}=1-\hat{\sigma}_{i,T}\left(  \boldsymbol{\varphi}_{nT}^{\prime
}\boldsymbol{\hat{\gamma}}_{i}\right)  ,$ and $\boldsymbol{\hat{\varphi}}%
_{nT}=n^{-1}\sum_{i=1}^{n}\boldsymbol{\hat{\gamma}}_{i}/\hat{\sigma}_{i,T}$,
subject to the normalization $n^{-1}\sum_{i=1}^{n}\boldsymbol{\hat{\gamma}%
}_{i}\boldsymbol{\hat{\gamma}}_{i}^{^{\prime}}=\mathbf{I}_{m_{0}}$. By result
(\ref{diff theta}) of Proposition \ref{Proposition:theta}, $\sqrt{T}\left(
\hat{\theta}_{nT}-\theta_{n}\right)  =o_{p}(1)$, and hence
\begin{align*}
CD^{\ast}  &  =\frac{CD+\sqrt{\frac{T}{2}}\theta_{n}+\sqrt{\frac{T}{2}}\left(
\hat{\theta}_{nT}-\theta_{n}\right)  }{1-\theta_{n}-\left(  \hat{\theta}%
_{nT}-\theta_{n}\right)  }+o_{p}(1)\\
&  =\frac{CD+\sqrt{\frac{T}{2}}\theta_{n}}{1-\theta_{n}}+o_{p}\left(
1\right)  =CD^{\ast}(\theta_{n})+o_{p}\left(  1\right)  .
\end{align*}
Theorem \ref{Theorem:CDs} is then established by following Proposition
\ref{Proposition:CD(s)}.

\section{Proof of Theorem \ref{Theorem:CDss}}

Let $v_{it}=y_{it}-\boldsymbol{\alpha}_{i}^{\prime}\mathbf{d}_{t}%
-\boldsymbol{\beta}_{i}^{^{\prime}}\mathbf{x}_{it}$, and $u_{it}%
=y_{it}-\boldsymbol{\alpha}_{i}^{\prime}\mathbf{d}_{t}-\boldsymbol{\beta}%
_{i}^{^{\prime}}\mathbf{x}_{it}-\boldsymbol{\gamma}_{i}^{^{\prime}}%
\mathbf{f}_{t}=v_{it}-\boldsymbol{\gamma}_{i}^{^{\prime}}\mathbf{f}_{t}$, and
consider the following two optimization problems%
\begin{align}
&  \min_{\boldsymbol{\Gamma,F}}\frac{1}{nT}\sum_{i=1}^{n}\sum_{t=1}^{T}\left(
v_{it}-\boldsymbol{\gamma}_{i}^{^{\prime}}\mathbf{f}_{t}\right)
^{2},\label{Aproblem}\\
&  \min_{\boldsymbol{\Gamma,F}}\frac{1}{nT}\sum_{i=1}^{n}\sum_{t=1}^{T}\left(
\hat{v}_{it}-\boldsymbol{\gamma}_{i}^{^{\prime}}\mathbf{f}_{t}\right)  ^{2},
\label{Bproblem}%
\end{align}
where%
\begin{align}
\hat{v}_{it}  &  =y_{it}-\boldsymbol{\hat{\alpha}}_{CCE,i}^{\prime}%
\mathbf{d}_{t}-\hat{\boldsymbol{\beta}}_{CCE,i}^{\prime}\mathbf{x}%
_{it}\nonumber\\
&  =y_{it}-\boldsymbol{\alpha}_{i}^{\prime}\mathbf{d}_{t}-\boldsymbol{\beta
}_{i}^{^{\prime}}\mathbf{x}_{it}-\left(  \boldsymbol{\hat{\alpha}}%
_{CCE,i}-\boldsymbol{\alpha}_{i}\right)  ^{\prime}\mathbf{d}_{t}-\left(
\hat{\boldsymbol{\beta}}_{CCE,i}-\boldsymbol{\beta}_{i}\right)  ^{^{\prime}%
}\mathbf{x}_{it}\nonumber\\
&  =v_{it}-\left(  \boldsymbol{\hat{\alpha}}_{CCE,i}-\boldsymbol{\alpha}%
_{i}\right)  ^{\prime}\mathbf{d}_{t}-\left(  \hat{\boldsymbol{\beta}}%
_{CCE,i}-\boldsymbol{\beta}_{i}\right)  ^{^{\prime}}\mathbf{x}_{it}.
\label{vithat}%
\end{align}
We need to show that solving problem (\ref{Bproblem}) is asymptotically
equivalent to solving problem (\ref{Aproblem}). First, using the results in
\cite{pesaran2011large} and the fact that $\mathbf{d}_{t}$\ and $\mathbf{x}%
_{it}$ are (stochastically) bounded\footnote{See equation (31) in
\cite{pesaran2011large}.},%
\begin{align}
\mathbf{d}_{t}^{\prime}\left(  \boldsymbol{\hat{\alpha}}_{CCE,i}%
-\boldsymbol{\alpha}_{i}\right)   &  =O_{p}\left(  \frac{1}{\sqrt{T}}\right)
+O_{p}\left(  \frac{1}{n}\right)  +O_{p}\left(  \frac{1}{\sqrt{nT}}\right)
,\label{diff_alpha}\\
\mathbf{x}_{it}^{^{\prime}}\left(  \hat{\boldsymbol{\beta}}_{CCE,i}%
-\boldsymbol{\beta}_{i}\right)   &  =O_{p}\left(  \frac{1}{\sqrt{T}}\right)
+O_{p}\left(  \frac{1}{n}\right)  +O_{p}\left(  \frac{1}{\sqrt{nT}}\right)  ,
\label{diff_beta}%
\end{align}
then rewrite the criterion for (\ref{Bproblem}) with (\ref{vithat}),%
\begin{align}
&  \frac{1}{nT}\sum_{i=1}^{n}\sum_{t=1}^{T}\left(  \hat{v}_{it}%
-\boldsymbol{\gamma}_{i}^{^{\prime}}\mathbf{f}_{t}\right)  ^{2}\nonumber\\
&  =\frac{1}{nT}\sum_{i=1}^{n}\sum_{t=1}^{T}\left(  v_{it}-\boldsymbol{\gamma
}_{i}^{^{\prime}}\mathbf{f}_{t}-\left(  \boldsymbol{\hat{\alpha}}%
_{CCE,i}-\boldsymbol{\alpha}_{i}\right)  ^{\prime}\mathbf{d}_{t}-\left(
\hat{\boldsymbol{\beta}}_{CCE,i}-\boldsymbol{\beta}_{i}\right)  ^{^{\prime}%
}\mathbf{x}_{it}\right)  ^{2}\nonumber\\
&  =\frac{1}{nT}\sum_{i=1}^{n}\sum_{t=1}^{T}\left(  v_{it}-\boldsymbol{\gamma
}_{i}^{^{\prime}}\mathbf{f}_{t}\right)  ^{2}+\frac{1}{nT}\sum_{i=1}^{n}%
\sum_{t=1}^{T}\left(  \boldsymbol{\hat{\alpha}}_{CCE,i}-\boldsymbol{\alpha
}_{i}\right)  ^{\prime}\mathbf{d}_{t}\mathbf{d}_{t}^{\prime}\left(
\boldsymbol{\hat{\alpha}}_{CCE,i}-\boldsymbol{\alpha}_{i}\right) \nonumber\\
&  +\frac{1}{nT}\sum_{i=1}^{n}\sum_{t=1}^{T}\left(  \hat{\boldsymbol{\beta}%
}_{CCE,i}-\boldsymbol{\beta}_{i}\right)  ^{^{\prime}}\mathbf{x}_{it}%
\mathbf{x}_{it}^{^{\prime}}\left(  \hat{\boldsymbol{\beta}}_{CCE,i}%
-\boldsymbol{\beta}_{i}\right)  -\frac{2}{nT}\sum_{i=1}^{n}\sum_{t=1}%
^{T}\left(  v_{it}-\boldsymbol{\gamma}_{i}^{^{\prime}}\mathbf{f}_{t}\right)
\left(  \boldsymbol{\hat{\alpha}}_{CCE,i}-\boldsymbol{\alpha}_{i}\right)
^{\prime}\mathbf{d}_{t}\nonumber\\
&  -\frac{2}{nT}\sum_{i=1}^{n}\sum_{t=1}^{T}\left(  v_{it}-\boldsymbol{\gamma
}_{i}^{^{\prime}}\mathbf{f}_{t}\right)  \left(  \hat{\boldsymbol{\beta}%
}_{CCE,i}-\boldsymbol{\beta}_{i}\right)  ^{^{\prime}}\mathbf{x}_{it}-\frac
{2}{nT}\sum_{i=1}^{n}\sum_{t=1}^{T}\left(  \boldsymbol{\hat{\alpha}}%
_{CCE,i}-\boldsymbol{\alpha}_{i}\right)  ^{\prime}\mathbf{d}_{t}%
\mathbf{x}_{it}^{^{\prime}}\left(  \hat{\boldsymbol{\beta}}_{CCE,i}%
-\boldsymbol{\beta}_{i}\right) \nonumber\\
&  =\sum_{j=1}^{6}A_{j,nT}. \label{ABC}%
\end{align}
Therefore, using (\ref{diff_alpha}) and (\ref{diff_beta}), then
$A_{2,nT}=O_{p}\left(  \frac{1}{\sqrt{T}}\right)  +O_{p}\left(  \frac{1}%
{n}\right)  +O_{p}\left(  \frac{1}{\sqrt{nT}}\right)  $ and $A_{3,nT}%
=O_{p}\left(  \frac{1}{\sqrt{T}}\right)  +O_{p}\left(  \frac{1}{n}\right)
+O_{p}\left(  \frac{1}{\sqrt{nT}}\right)  $. Also, consider the fourth term of
(\ref{ABC}) and note that by Cauchy-Schwarz inequality,
\begin{align*}
\left\vert A_{4,nT}\right\vert  &  =2\left\vert \frac{1}{nT}\sum_{i=1}^{n}%
\sum_{t=1}^{T}u_{it}\left(  \boldsymbol{\hat{\alpha}}_{CCE,i}%
-\boldsymbol{\alpha}_{i}\right)  ^{\prime}\mathbf{d}_{t}\right\vert \\
&  \leq2\left(  \frac{1}{nT}\sum_{i=1}^{n}\sum_{t=1}^{T}u_{it}^{2}\right)
^{1/2}\left(  \frac{1}{nT}\sum_{i=1}^{n}\sum_{t=1}^{T}\left\Vert \left(
\boldsymbol{\hat{\alpha}}_{CCE,i}-\boldsymbol{\alpha}_{i}\right)  ^{\prime
}\mathbf{d}_{t}\right\Vert ^{2}\right)  ^{1/2},
\end{align*}
where $u_{it}=v_{it}-\boldsymbol{\gamma}_{i}^{^{\prime}}\mathbf{f}_{t}%
=y_{it}-\boldsymbol{\alpha}_{i}^{\prime}\mathbf{d}_{t}-\boldsymbol{\beta}%
_{i}^{^{\prime}}\mathbf{x}_{it}-\boldsymbol{\gamma}_{i}^{^{\prime}}%
\mathbf{f}_{t}$, so given (\ref{diff_alpha}) we have $A_{4,nT}=O_{p}\left(
\frac{1}{\sqrt{T}}\right)  +O_{p}\left(  \frac{1}{n}\right)  +O_{p}\left(
\frac{1}{\sqrt{nT}}\right)  $. Similarly, we can show $A_{5,nT}$ and
$A_{6,nT}$ share the same probability order as $A_{4,nT}$. Since in both
optimization problems $\boldsymbol{\gamma}_{i}$ and $\mathbf{f}_{t}$ are only
identified up to $m_{0}\times m_{0}$ rotation matrices, it follows that%
\begin{align*}
\min_{\boldsymbol{\Gamma,F}}\frac{1}{nT}\sum_{i=1}^{n}\sum_{t=1}^{T}\left(
v_{it}-\boldsymbol{\gamma}_{i}^{^{\prime}}\mathbf{f}_{t}\right)  ^{2}  &
\equiv\min_{\boldsymbol{\Gamma,F}}\frac{1}{nT}\sum_{i=1}^{n}\sum_{t=1}%
^{T}\left(  \hat{v}_{it}-\boldsymbol{\gamma}_{i}^{^{\prime}}\mathbf{f}%
_{t}\right)  ^{2}\\
&  +O_{p}\left(  \frac{1}{\sqrt{T}}\right)  +O_{p}\left(  \frac{1}{n}\right)
+O_{p}\left(  \frac{1}{\sqrt{nT}}\right)  .
\end{align*}
Hence, PCs based on $\hat{v}_{it}$ are asymptotically equivalent to those
based on $v_{it}$. The remaining proof of Theorem \ref{Theorem:CDss} 
follows from the proof of Theorem \ref{Theorem:CDs}.

{
\footnotesize
\bibliographystyle{apalike}

}

\clearpage\newpage{\footnotesize
\normalsize
}

\begin{center}
\bigskip

{\large Supplementary Material}

\bigskip

{\large \textbf{How to Detect Network Dependence in Latent Factor Models? A
Bias-Corrected CD Test}}

{\Large {\large \bigskip} }

{\large by }

{\Large \bigskip}

{\large M. Hashem Pesaran and Yimeng Xie}

{\Large \bigskip}

\today

\end{center}

\bigskip%

\renewcommand\theequation{S.\arabic{equation}}
\setcounter{equation}{0}
\renewcommand\thesection{S\arabic{section}}
\setcounter{section}{0}
\renewcommand\thepage{S\arabic{page}}
\setcounter{page}{1}
\setcounter{theorem}{1}
\renewcommand{\thefootnote}{S\arabic{footnote}}
\setcounter{footnote}{0}
\renewcommand{\thetable}{S.\arabic{table}}
\setcounter{table}{0}
\renewcommand\thefigure{S.\arabic{figure}}
\setcounter{figure}{0}
\renewcommand\thelemma{S.\arabic{lemma}}
\setcounter{lemma}{0}
\renewcommand\theremark{S.\arabic{remark}}
\setcounter{remark}{0}%

This supplement is in four sections. Section \ref{section.s2} states
and establishes the auxiliary lemmas used in the proofs of propositions and
theorems in the paper. Section \ref{section.thetan} derives the order of
$\theta_{n}$, defined by (\ref{thetan}) in the paper, in terms of the factor
strengths. Section \ref{section.JR} considers the CD$_{W+}$ test proposed by
Juodis and Reese (2022), and discusses some of its properties. Section
\ref{section.s3} reports simulation results for the experiments discussed in
Section \ref{mc} of the main paper.

\section{Statement and proofs of the lemmas\label{section.s2}}

This section provides auxiliary lemmas and the associated proofs, which are
required to establish the main results of the paper.

\begin{lemma}
\label{lm1}The CD statistic defined by (\ref{cd}) can be written equivalently
as,%
\begin{equation}
CD=\left(  \sqrt{\frac{n}{n-1}}\right)  \frac{1}{\sqrt{2T}}\sum_{t=1}%
^{T}\left[  \left(  \frac{1}{\sqrt{n}}\sum_{i=1}^{n}\frac{\hat{u}_{it}}%
{\hat{\sigma}_{i,T}}\right)  ^{2}-1\right]  . \label{CDlm1}%
\end{equation}

\end{lemma}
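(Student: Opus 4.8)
The plan is to establish the identity by a purely algebraic rearrangement of the double sum in (\ref{cd}), using only the fact that the scaled residuals $\tilde{e}_{it,T}=e_{it}/\hat{\sigma}_{i,T}$ are normalized so that $T^{-1}\sum_{t=1}^{T}\tilde{e}_{it,T}^{2}=1$ for every $i$; this is immediate from the definition $\hat{\sigma}_{i,T}^{2}=T^{-1}\sum_{t=1}^{T}e_{it}^{2}$. No probabilistic input is needed.

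First I would expand, for each fixed $t$, the square of the cross-sectional sum of scaled residuals,
\[
\left(\sum_{i=1}^{n}\tilde{e}_{it,T}\right)^{2}=\sum_{i=1}^{n}\tilde{e}_{it,T}^{2}+2\sum_{i=1}^{n-1}\sum_{j=i+1}^{n}\tilde{e}_{it,T}\tilde{e}_{jt,T}.
\]
Summing this over $t=1,\dots,T$ and using the normalization, the first term on the right contributes $\sum_{i=1}^{n}\sum_{t=1}^{T}\tilde{e}_{it,T}^{2}=nT$, while the second contributes $2T\sum_{i=1}^{n-1}\sum_{j=i+1}^{n}\hat{\rho}_{ij,T}$ since $\hat{\rho}_{ij,T}=T^{-1}\sum_{t=1}^{T}\tilde{e}_{it,T}\tilde{e}_{jt,T}$. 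Solving for the sum of the $\hat{\rho}_{ij,T}$ then gives
\[
\sum_{i=1}^{n-1}\sum_{j=i+1}^{n}\hat{\rho}_{ij,T}=\frac{1}{2T}\sum_{t=1}^{T}\left[\left(\sum_{i=1}^{n}\tilde{e}_{it,T}\right)^{2}-n\right]=\frac{n}{2T}\sum_{t=1}^{T}\left[\left(\frac{1}{\sqrt{n}}\sum_{i=1}^{n}\tilde{e}_{it,T}\right)^{2}-1\right],
\]
where the last step merely pulls $n$ out of the squared term and writes $n=n\cdot1$.

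Finally I would substitute this expression into the definition (\ref{cd}) of $CD$ and collect the scalar factors, using $\sqrt{2T/[n(n-1)]}\cdot n/(2T)=\sqrt{n/(n-1)}\,(2T)^{-1/2}$, which delivers (\ref{CDlm1}) after recalling $\tilde{e}_{it,T}=e_{it}/\hat{\sigma}_{i,T}$. The argument is elementary: the only place that needs a moment's care is verifying this constant identity and keeping the indices of the nested sum straight, so there is no genuine obstacle. (The representation (\ref{CD1}) used elsewhere in the paper is the same identity with the factor $1/\sqrt{2}$ absorbed into the bracket.)
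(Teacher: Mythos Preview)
Your proof is correct and follows essentially the same approach as the paper: both arguments expand the square of the cross-sectional sum of scaled residuals into diagonal and off-diagonal parts, use the normalization $T^{-1}\sum_{t}\tilde{e}_{it,T}^{2}=1$ to handle the diagonal, and then collect constants. The only cosmetic difference is that the paper applies the square-of-sum identity at each fixed $t$ before summing, whereas you sum over $t$ first; the content is identical.
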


\begin{proof}
Using $\hat{\rho}_{ij,T}=\left(  \frac{1}{T}\sum_{t=1}^{T}\hat{u}_{it}\hat
{u}_{jt}\right)  /\hat{\sigma}_{i,T}\hat{\sigma}_{j,T}$ in (\ref{cd}) we
have:
\begin{equation}
CD=\sqrt{\frac{2T}{n(n-1)}}\sum_{i=1}^{n-1}\sum_{j=i+1}^{n}\frac{\frac{1}%
{T}\sum_{t=1}^{T}\hat{u}_{it}\hat{u}_{jt}}{\hat{\sigma}_{i,T}\hat{\sigma
}_{j,T}}=\sqrt{\frac{2T}{n(n-1)}}\frac{1}{T}\sum_{t=1}^{T}\left(  \sum
_{i=1}^{n-1}\sum_{j=i+1}^{n}\left(  \frac{\hat{u}_{it}}{\hat{\sigma}_{i,T}%
}\right)  \left(  \frac{\hat{u}_{jt}}{\hat{\sigma}_{j,T}}\right)  \right)  .
\label{lm1.1}%
\end{equation}
Further, we note that%
\[
\frac{1}{n}\sum_{i=1}^{n-1}\sum_{j=i+1}^{n}\left(  \frac{\hat{u}_{it}}%
{\hat{\sigma}_{i,T}}\right)  \left(  \frac{\hat{u}_{jt}}{\hat{\sigma}_{j,T}%
}\right)  =\frac{1}{2}\left[  \left(  \frac{1}{\sqrt{n}}\sum_{i=1}^{n}%
\frac{\hat{u}_{it}}{\hat{\sigma}_{i,T}}\right)  ^{2}-\frac{1}{n}\sum_{i=1}%
^{n}\left(  \frac{\hat{u}_{it}}{\hat{\sigma}_{i,T}}\right)  ^{2}\right]  .
\]
Then using this result in (\ref{lm1.1}), and after some algebra, we have
\begin{align*}
CD  &  =\sqrt{\frac{2Tn^{2}}{n(n-1)}}\frac{1}{2T}\sum_{t=1}^{T}\left[  \left(
\frac{1}{\sqrt{n}}\sum_{i=1}^{n}\frac{\hat{u}_{it}}{\hat{\sigma}_{i,T}%
}\right)  ^{2}-\frac{1}{n}\sum_{i=1}^{n}\left(  \frac{\hat{u}_{it}}%
{\hat{\sigma}_{i,T}}\right)  ^{2}\right] \\
&  =\sqrt{\frac{2Tn^{2}}{n(n-1)}}\frac{1}{2}\left[  \frac{1}{T}\sum_{t=1}%
^{T}\left(  \frac{1}{\sqrt{n}}\sum_{i=1}^{n}\frac{\hat{u}_{it}}{\hat{\sigma
}_{i,T}}\right)  ^{2}-\frac{1}{n}\sum_{i=1}^{n}\frac{1}{T}\sum_{t=1}%
^{T}\left(  \frac{\hat{u}_{it}}{\hat{\sigma}_{i,T}}\right)  ^{2}\right] \\
&  =\left(  \sqrt{\frac{n}{n-1}}\right)  \frac{1}{\sqrt{2T}}\sum_{t=1}%
^{T}\left[  \left(  \frac{1}{\sqrt{n}}\sum_{i=1}^{n}\frac{\hat{u}_{it}}%
{\hat{\sigma}_{i,T}}\right)  ^{2}-1\right]  ,
\end{align*}
as required.
\end{proof}

\begin{lemma}
\label{lm2} Consider the latent factor model given by (\ref{mod2}) and
(\ref{uit:p}). The latent factors, $\mathbf{f}_{t}$, and their loadings,
$\boldsymbol{\gamma}_{i}$, are estimated by principal components,
$\mathbf{\hat{f}}_{t}$ and $\boldsymbol{\hat{\gamma}}_{i}$, given by
(\ref{hatg}). Suppose that Assumptions \ref{ass:factors}-\ref{ass:ws} hold and
$\left(  n,T\right)  \rightarrow\infty$, such that $n/T\rightarrow\kappa\,$
for $0<\kappa<\infty$. Then
\begin{align}
\left\Vert \mathbf{\hat{F}}-\mathbf{F}\right\Vert _{F}  &  =O_{p}\left(
\frac{\sqrt{T}}{\delta_{nT}}\right)  ,\label{lm2.1}\\
\left\Vert \boldsymbol{\hat{\Gamma}}-\boldsymbol{\Gamma}\right\Vert _{F}  &
=O_{p}\left(  \frac{\sqrt{n}}{\delta_{nT}}\right)  ,\label{lm2.2}\\
\left\Vert \mathbf{U}\left(  \lambda_{T}\right)  ^{^{\prime}}\mathbf{(\hat
{F}-F)}\right\Vert _{F}  &  =O_{p}\left(  \frac{\sqrt{nT}}{\delta_{nT}%
}\right)  ,\label{lm2.3}\\
\left\Vert \boldsymbol{\Gamma}^{\prime}(\boldsymbol{\hat{\Gamma}%
}-\boldsymbol{\Gamma})\right\Vert _{F}  &  =O_{p}\left(  \frac{n}{\delta_{nT}%
}\right)  ,\label{lm2.4}\\
\left\Vert \mathbf{F}^{\prime}(\mathbf{\hat{F}}-\mathbf{F})\right\Vert _{F}
&  =O_{p}\left(  \frac{T}{\delta_{nT}}\right)  ,\label{lm2.5}\\
\left(  \mathbf{\hat{F}-F}\right)  ^{^{\prime}}\mathbf{F}  &  =O_{p}\left(
\frac{T}{\delta_{nT}^{2}}\right)  ,\label{baib2}\\
\left(  \mathbf{\hat{F}-F}\right)  ^{^{\prime}}\mathbf{\hat{F}}  &
=O_{p}\left(  \frac{T}{\delta_{nT}^{2}}\right)  ,\label{baib3}\\
(\boldsymbol{\hat{\Gamma}}-\boldsymbol{\Gamma})^{\prime}\mathbf{u}_{\circ
t}\left(  \lambda_{T}\right)   &  =O_{p}\left(  \frac{n}{\delta_{nT}^{2}%
}\right)  , \label{baib11}%
\end{align}
where $\mathbf{F}=\left(  \mathbf{f}_{1},\mathbf{f}_{2},\ldots,\mathbf{f}%
_{T}\right)  ^{\prime}$, $\mathbf{\hat{F}}=\left(  \mathbf{\hat{f}}%
_{1},\mathbf{\hat{f}}_{2},\ldots,\mathbf{\hat{f}}_{T}\right)  ^{\prime}$,
$\boldsymbol{\Gamma}=\left(  \boldsymbol{\gamma}_{1},\boldsymbol{\gamma}%
_{2},...,\boldsymbol{\gamma}_{n}\right)  ^{\prime}$, $\boldsymbol{\hat{\Gamma
}}=\left(  \boldsymbol{\hat{\gamma}}_{1},\boldsymbol{\hat{\gamma}}%
_{2},...,\boldsymbol{\hat{\gamma}}_{n}\right)  ^{\prime}$, $\mathbf{U}\left(
\lambda_{T}\right)  =$ $(\mathbf{u}_{\circ1}\left(  \lambda_{T}\right)
,\mathbf{u}_{\circ2}\left(  \lambda_{T}\right)  ,...,\mathbf{u}_{\circ
T}\left(  \lambda_{T}\right)  )^{\prime}$, $\mathbf{u}_{\circ t}\left(
\lambda_{T}\right)  =(\sigma_{1}\varepsilon_{1t}\left(  \lambda_{T}\right)
,\sigma_{2}\varepsilon_{2t}\left(  \lambda_{T}\right)  ,...,\sigma
_{n}\varepsilon_{nt}\left(  \lambda_{T}\right)  )^{\prime}$, and
$\varepsilon_{it}\left(  \lambda_{T}\right)  =\varepsilon_{it}+\lambda_{T}%
\sum_{j=1}^{n}w_{ij}\varepsilon_{jt}$.
\end{lemma}

\begin{proof}
Since Assumptions \ref{ass:factors}-\ref{ass:ws} are a sub-set of assumptions
made by Bai (2003), so results (\ref{lm2.1}) to (\ref{lm2.4}), (\ref{baib2})
and (\ref{baib3}) follow directly from Lemmas B.1, B.2 and B.3, and Theorems 1
and 2 of Bai (2003). Results (\ref{lm2.5}) and (\ref{baib11}) can be
established analogously.
\end{proof}

\begin{lemma}
\label{lmsup0}Suppose that Assumptions \ref{ass:factors}-\ref{ass:loadings}
hold and $\left(  n,T\right)  \rightarrow\infty$, such that $n/T\rightarrow
\kappa\,,$ for $0<\kappa<\infty$. Then%
\begin{align}
\sup_{i}\left(  T^{-1}\left\Vert \boldsymbol{\varepsilon}_{i\circ}\right\Vert
^{2}\right)   &  =O_{p}\left(  1\right)  ,\label{sup_u2_v1}\\
\sup_{i}\left\Vert \frac{\mathbf{F}^{\prime}\boldsymbol{\varepsilon}_{i\circ}%
}{T}\right\Vert  &  =O_{p}\left(  \sqrt{\frac{\ln(n)}{T}}\right)
,\label{sup_fu_v1}\\
\sup_{t}\left\Vert \frac{\boldsymbol{\Gamma}^{\prime}\boldsymbol{\varepsilon
}_{\circ t}}{n}\right\Vert  &  =O_{p}\left(  \sqrt{\frac{\ln(T)}{n}}\right)
,\label{sup_gu_v1}\\
\sup_{i}\left\Vert \frac{1}{nT}\sum_{t=1}^{T}\sum_{j=1}^{n}\sigma
_{j}\boldsymbol{\gamma}_{j}\varepsilon_{it}\varepsilon_{jt}\right\Vert  &
=O_{p}\left(  \sqrt{\frac{\ln\left(  n\right)  }{nT}}\right)  ,
\label{sup_gujui_v1}%
\end{align}
where $\boldsymbol{\varepsilon}_{i\circ}=\left(  \varepsilon_{i1}%
,\varepsilon_{i2},\ldots,\varepsilon_{iT}\right)  ^{\prime}$ and
$\boldsymbol{\varepsilon}_{\circ t}=\left(  \varepsilon_{1t},\varepsilon
_{2t},\ldots,\varepsilon_{nt}\right)  ^{\prime}$.
\end{lemma}

\begin{proof}
Consider (\ref{sup_u2_v1}) and note
\[
T^{-1}\left\Vert \boldsymbol{\varepsilon}_{i\circ}\right\Vert ^{2}=\frac{1}%
{T}\sum_{t=1}^{T}\left[  \varepsilon_{it}^{2}-E\left(  \varepsilon_{it}%
^{2}\right)  \right]  +\frac{1}{T}\sum_{t=1}^{T}E\left(  \varepsilon_{it}%
^{2}\right)  =\frac{1}{T}\sum_{t=1}^{T}z_{it}+1,
\]
where $z_{it}=\varepsilon_{it}^{2}-E\left(  \varepsilon_{it}^{2}\right)  $.
Then
\begin{equation}
\sup_{i}\left(  T^{-1}\left\Vert \boldsymbol{\varepsilon}_{i\circ}\right\Vert
^{2}\right)  \leq\sup_{i}\left\vert \frac{1}{T}\sum_{t=1}^{T}z_{it}\right\vert
+1. \label{u2_part}%
\end{equation}
To establish the probability of the first term, consider the filtration
$\mathcal{I}_{it}^{\left(  1\right)  }=\{\varepsilon_{i\tau}:\tau
=t-1,t-2,\ldots\}$ and, given the serial independence of $\varepsilon_{it}$,
note that $E\left(  z_{it}|\mathcal{I}_{i,t-1}^{\left(  1\right)  }\right)
=0$, so $z_{it}$ is a martingale difference process with respect to
$\mathcal{I}_{i,t-1}^{\left(  1\right)  }$. In addition, $Var\left(
z_{it}\right)  =Var\left(  \varepsilon_{it}^{2}\right)  =E\left(
\varepsilon_{it}^{4}\right)  -\left(  E\varepsilon_{it}^{2}\right)  ^{2}$
which is bounded by assumption. Also, as $\varepsilon_{it}$ is sub-exponential
by part (a) of Assumption \ref{ass:errors}, $\varepsilon_{it}^{2}$ (and hence
$z_{it}$) is sub-exponential, and there exist positive constants $C_{4}$,
$C_{5}$ and $r_{3}$ such that
\[
\sup_{i}\Pr\left(  \left\vert z_{it}\right\vert >a\right)  \leq C_{4}%
\exp\left(  -C_{5}a^{r_{3}}\right)  ,\text{ for all }a>0.\text{ }%
\]
Then by Lemma A3 in the online theory supplement of Chudik et al. (2018), for
$\varsigma_{T}=\ominus\left(  T^{\mu}\right)  $ and $0<\mu<\left(
r_{3}+1\right)  /\left(  r_{3}+2\right)  $, there exists a positive constant
$C_{6}$ such that$\Pr\left(  \left\vert \sum_{t=1}^{T}z_{it}\right\vert
>\varsigma_{T}\right)  \leq\exp\left(  -C_{6}T^{-1}\varsigma_{T}^{2}\right)
$, and if $\mu>\left(  r_{3}+1\right)  /\left(  r_{3}+2\right)  $ there exists
a positive constant $C_{7}$ such that $\Pr\left(  \left\vert \sum_{t=1}%
^{T}z_{it}\right\vert >\varsigma_{T}\right)  \leq\exp\left(  -C_{7}\left(
\varsigma_{T}\right)  ^{\frac{r_{3}}{r_{3}+1}}\right)  . $ By Boole's
inequality, we have%
\[
\Pr\left(  \sup_{i}\left\vert \sum_{t=1}^{T}z_{it}\right\vert >\varsigma
_{T}\right)  \leq\exp\left(  \ln\left(  n\right)  -C_{6}T^{-1}\varsigma
_{T}^{2}\right)  ,\text{ if }0<\mu<\left(  r_{3}+1\right)  /\left(
r_{3}+2\right)  ,
\]
\[
\Pr\left(  \sup_{i}\left\vert \sum_{t=1}^{T}z_{it}\right\vert >\varsigma
_{T}\right)  \leq\exp\left(  \ln\left(  n\right)  -C_{7}\left(  \varsigma
_{T}\right)  ^{\frac{r_{3}}{r_{3}+1}}\right)  ,\text{ if }\mu>\left(
r_{3}+1\right)  /\left(  r_{3}+2\right)  .
\]
Let $\varsigma_{T}=C_{8}\sqrt{T\ln\left(  n\right)  }$ where $C_{8}$ is a
finite but sufficiently large constant. Then for $0<\mu<\left(  r_{3}%
+1\right)  /\left(  r_{3}+2\right)  $, we have%
\begin{align*}
\Pr\left(  \sup_{i}\left\vert \frac{1}{T}\sum_{t=1}^{T}z_{it}\right\vert
>C_{8}\sqrt{\frac{\ln\left(  n\right)  }{T}}\right)   &  =\Pr\left(  \sup
_{i}\left\vert \sum_{t=1}^{T}\varepsilon_{it}-E\left(  \varepsilon_{it}%
^{2}\right)  \right\vert >C_{8}\sqrt{T\ln\left(  n\right)  }\right) \\
&  \leq\exp\left[  \ln\left(  n\right)  -C_{6}T^{-1}C_{8}^{2}T\ln\left(
n\right)  \right] \\
&  =\exp\left[  \ln\left(  n\right)  -C_{6}C_{8}^{2}\ln\left(  n\right)
\right]  ,
\end{align*}
which is $o\left(  1\right)  $ given $C_{8}$ is sufficiently large. Also for
$\mu\geq\left(  r_{3}+1\right)  /\left(  r_{3}+2\right)  $, we have%
\begin{align*}
\Pr\left(  \sup_{i}\left\vert \frac{1}{T}\sum_{t=1}^{T}\varepsilon
_{it}\right\vert >C_{8}\sqrt{\frac{\ln\left(  n\right)  }{T}}\right)   &
=\Pr\left(  \sup_{i}\left\vert \sum_{t=1}^{T}\varepsilon_{it}\right\vert
>C_{8}\sqrt{T\ln\left(  n\right)  }\right) \\
&  \leq\exp\left[  \ln\left(  n\right)  -C_{7}\left(  C_{8}\sqrt{T\ln\left(
n\right)  }\right)  ^{\frac{r_{3}}{r_{3}+1}}\right]  ,
\end{align*}
which is also $o\left(  1\right)  $ as$\ n$ and $T$ are of the same order of
magnitude and sufficiently we have%
\[
\frac{\ln\left(  n\right)  }{\left[  \sqrt{n\ln\left(  n\right)  }\right]
^{\frac{r_{3}}{r_{3}+1}}}=\frac{\left[  \ln\left(  n\right)  \right]
^{\frac{r_{3}+2}{2\left(  r_{3}+1\right)  }}}{n^{\frac{r_{3}}{2\left(
r_{3}+1\right)  }}}=\left[  \frac{\left(  \ln\left(  n\right)  \right)
^{\frac{r_{3}+2}{r_{3}}}}{n}\right]  ^{\frac{r_{3}}{2\left(  r_{3}+1\right)
}}\rightarrow0.
\]
Therefore, $\sup_{i}\left\vert T^{-1}\sum_{t=1}^{T}z_{it}\right\vert
=O_{p}\left(  \sqrt{\frac{\ln\left(  n\right)  }{T}}\right)  $ and
(\ref{sup_u2}) follows from (\ref{u2_part}). Next, consider (\ref{sup_fu_v1})
and note by Assumptions \ref{ass:factors} and \ref{ass:errors}, $\mathbf{f}%
_{t}$ is independent from $\varepsilon_{it^{\prime}}$ for all $t,t^{\prime
}=1,2,\ldots,T$, also $\varepsilon_{it}$ is serially independent, then for a
suitable choice of $\mathcal{I}_{i,t-1}^{\left(  2\right)  }=\{\mathbf{f}%
_{\tau}\varepsilon_{i\tau}:$ $\tau=t-1,t-2,\ldots\}$ and $i=1,2,\ldots,n$,
$E\left(  \mathbf{f}_{t}\varepsilon_{it}|\mathcal{I}_{i,t-1}^{\left(
2\right)  }\right)  =E\left(  \mathbf{f}_{t}|\mathcal{I}_{i,t-1}^{\left(
2\right)  }\right)  E\left(  \varepsilon_{it}\right)  =\mathbf{0}$ so
$\mathbf{f}_{t}\varepsilon_{it}$ is a martingale difference sequence with
respect to the filtration $\mathcal{I}_{i,t-1}^{\left(  2\right)  }$. In
addition, $E\left(  \mathbf{f}_{t}\varepsilon_{it}\right)  =E\left(
\mathbf{f}_{t}\right)  E\left(  \varepsilon_{it}\right)  =\mathbf{0}$ and
$Var\left(  \mathbf{f}_{t}\varepsilon_{it}\right)  =E\left(  \mathbf{f}%
_{t}\mathbf{f}_{t}^{\prime}\right)  E\left(  \varepsilon_{it}^{2}\right)  $,
which is bounded by Assumptions \ref{ass:factors} and \ref{ass:errors}. Also
by assumptions both $\mathbf{f}_{t}$ and $\varepsilon_{it}$ are
sub-exponential, then it also follows that $\mathbf{f}_{t}\varepsilon_{it}$ is
sub-exponential. Hence, the method of proof used above can also be applied to
$\left\Vert \sum_{t=1}^{T}\mathbf{f}_{t}\varepsilon_{it}\right\Vert $, and
result (\ref{sup_fu_v1}) follows. Similarly, (\ref{sup_gu_v1}) can be
established by the symmetry of the standard factor models in
$\boldsymbol{\gamma}_{i}$ and $\mathbf{f}_{t}$. Now consider
(\ref{sup_gujui_v1}), and note that we have the following decomposition,
\begin{align*}
\mathbf{q}_{i,nT}  &  =\frac{1}{nT}\sum_{t=1}^{T}\sum_{j=1}^{n}\sigma
_{j}\boldsymbol{\gamma}_{j}\varepsilon_{it}\varepsilon_{jt}\\
&  =\frac{1}{nT}\sum_{t=1}^{T}\sum_{j=1}^{n}\left[  \sigma_{j}%
\boldsymbol{\gamma}_{j}\varepsilon_{it}\varepsilon_{jt}-\sigma_{j}%
\boldsymbol{\gamma}_{j}E\left(  \varepsilon_{it}\varepsilon_{jt}\right)
\right]  +\frac{1}{nT}\sum_{t=1}^{T}\sum_{j=1}^{n}\sigma_{j}\boldsymbol{\gamma
}_{j}E\left(  \varepsilon_{it}\varepsilon_{jt}\right)  =\mathbf{q}%
_{i,nT}^{(a)}+\mathbf{q}_{i,nT}^{(b)}.
\end{align*}
Since $E(\varepsilon_{it}\varepsilon_{jt})=0$ if $i\neq j$, and $E(\varepsilon
_{it}\varepsilon_{jt})=1$, if $i=j$, then $\mathbf{q}_{i,nT}^{(b)}=\frac
{1}{nT}\sum_{t=1}^{T}\sum_{j=1}^{n}\sigma_{j}\boldsymbol{\gamma}_{j}E\left(
\varepsilon_{it}\varepsilon_{jt}\right)  =n^{-1}\sigma_{j}\boldsymbol{\gamma
}_{j}, $ and $\sup_{i}\left\Vert \mathbf{q}_{i,nT}^{(b)}\right\Vert
=O(n^{-1})$. Consider now the first term and note that $\mathbf{q}%
_{i,nT}^{(a)}=\frac{1}{nT}\sum_{t=1}^{T}\sum_{j=1}^{n}\mathbf{s}_{i,jt}, $
where $\mathbf{s}_{i,jt}=\sigma_{j}\boldsymbol{\gamma}_{j}\left[
\varepsilon_{it}\varepsilon_{jt}-E\left(  \varepsilon_{it}\varepsilon
_{jt}\right)  \right]  $. Since by assumption $\varepsilon_{it}$ are
independently distributed over all $i$ and $t$, then $E\left(  \mathbf{s}%
_{i,jt}\left\vert \mathcal{I}_{i,t-1}^{\left(  3\right)  }\right.  \right)
=\mathbf{0}$, where $\mathcal{I}_{i,t-1}^{\left(  3\right)  }=\{\varepsilon
_{i\tau}\varepsilon_{j\tau}:$ $j=1,2,...,n$ and $\tau=t-1,t-2,...\}$. Hence
$\mathbf{s}_{i,jt}$ is a martingale difference process with respect to the
filtration, $\mathcal{I}_{i,t-1}^{\left(  3\right)  }$. The variance of
$\mathbf{s}_{i,jt}$ is $\left(  \sigma_{j}^{2}\boldsymbol{\gamma}%
_{j}\boldsymbol{\gamma}_{j}^{\prime}\right)  Var(\varepsilon_{it}%
\varepsilon_{jt})$ where $Var(\varepsilon_{it}\varepsilon_{jt})=1$ if $i\neq
j$, and $Var(\varepsilon_{it}\varepsilon_{jt})=Var(\varepsilon_{it}%
^{2})=E(\varepsilon_{it}^{4})-1$ if $i=j$, so that by Assumption
\ref{ass:errors}, $\left\Vert Var(\mathbf{s}_{i,jt})\right\Vert <C$. Also,
since by assumption $\varepsilon_{it}$ is sub-exponential, then it follows
that $\mathbf{s}_{i,jt}$ is also sub-exponential, and the above method of
proof can be applied to all elements of $\mathbf{q}_{i,nT}^{(a)}$.
Specifically $\sup_{i}\left\Vert \mathbf{q}_{i,nT}^{(a)}\right\Vert
=O_{p}\left(  \sqrt{\frac{\ln(n)}{nT}}\right)  $, $\sup_{i}\left\Vert
\mathbf{q}_{i,nT}\right\Vert =O_{p}\left(  \sqrt{\frac{\ln(n)}{nT}}\right)
+O(n^{-1})=O_{p}\left(  \sqrt{\frac{\ln(n)}{nT}}\right)  $, and result
(\ref{sup_gujui_v1}) follows, as required.\textbf{ }
\end{proof}

\begin{lemma}
\label{lmsup}Consider the latent factor model given by (\ref{mod2}) and
(\ref{uit:p}). The latent factors, $\mathbf{f}_{t}$, and their loadings,
$\boldsymbol{\gamma}_{i}$, are estimated by principal components,
$\mathbf{\hat{f}}_{t}$ and $\boldsymbol{\hat{\gamma}}_{i}$, given by
(\ref{hatg}). Suppose that Assumptions \ref{ass:factors}-\ref{ass:ws} hold and
$\left(  n,T\right)  \rightarrow\infty$, such that $n/T\rightarrow\kappa\,,$
for $0<\kappa<\infty$. Then
\begin{align}
\sup_{i}\left(  T^{-1}\left\Vert \boldsymbol{\varepsilon}_{i\circ}\left(
\lambda_{T}\right)  \right\Vert ^{2}\right)   &  =O_{p}\left(  1\right)
,\label{sup_u2}\\
\sup_{i}\left\Vert \frac{\mathbf{F}^{\prime}\boldsymbol{\varepsilon}_{i\circ
}\left(  \lambda_{T}\right)  }{T}\right\Vert  &  =O_{p}\left(  \sqrt{\frac
{\ln(n)}{T}}\right)  ,\label{sup_fu}\\
\sup_{t}\left\Vert \frac{\boldsymbol{\Gamma}^{\prime}\boldsymbol{\varepsilon
}_{\circ t}\left(  \lambda_{T}\right)  }{n}\right\Vert  &  =O_{p}\left(
\sqrt{\frac{\ln(T)}{n}}\right)  ,\label{sup_gu}\\
\sup_{i}\left\Vert \frac{1}{nT}\sum_{t=1}^{T}\sum_{j=1}^{n}\sigma
_{j}\boldsymbol{\gamma}_{j}\varepsilon_{it}\left(  \lambda_{T}\right)
\varepsilon_{jt}\left(  \lambda_{T}\right)  \right\Vert  &  =O_{p}\left(
\sqrt{\frac{\ln\left(  n\right)  }{nT}}\right)  ,\label{sup_gujui}\\
\sup_{i}\left\Vert \boldsymbol{\hat{\gamma}}_{i}-\boldsymbol{\gamma}%
_{i}\right\Vert  &  =O_{p}\left(  \sqrt{\frac{\ln(n)}{T}}\right)
,\label{sup_|hatg-g|}\\
\sup_{t}\left\Vert \mathbf{\hat{f}}_{t}-\mathbf{f}_{t}\right\Vert  &
=O_{p}\left(  \sqrt{\frac{\ln(T)}{n}}\right)  , \label{sup_|hatf-f|}%
\end{align}
where $\boldsymbol{\varepsilon}_{i\circ}\left(  \lambda_{T}\right)  =\left(
\varepsilon_{i1}\left(  \lambda_{T}\right)  ,\varepsilon_{i2}\left(
\lambda_{T}\right)  ,\ldots,\varepsilon_{iT}\left(  \lambda_{T}\right)
\right)  ^{\prime}$ and $\boldsymbol{\varepsilon}_{\circ t}\left(  \lambda
_{T}\right)  =\left(  \varepsilon_{1t}\left(  \lambda_{T}\right)
,\varepsilon_{2t}\left(  \lambda_{T}\right)  ,\ldots,\varepsilon_{nt}\left(
\lambda_{T}\right)  \right)  ^{\prime}$.
\end{lemma}

\begin{proof}
Consider (\ref{sup_u2}) and note by definition
\begin{equation}
\varepsilon_{it}\left(  \lambda_{T}\right)  =\varepsilon_{it}+\lambda
_{T}\mathbf{w}_{i0}^{\prime}\boldsymbol{\varepsilon}_{\circ t}, \label{epit}%
\end{equation}
where $\mathbf{w}_{i0}=\left(  w_{i1},w_{i2},\ldots,w_{in}\right)  ^{\prime}$
and $\boldsymbol{\varepsilon}_{\circ t}=\left(  \varepsilon_{1t}%
,\varepsilon_{2t},\ldots,\varepsilon_{nt}\right)  ^{\prime}$. Then%
\[
\varepsilon_{it}^{2}\left(  \lambda_{T}\right)  =\varepsilon_{it}^{2}%
+2\lambda_{T}\mathbf{w}_{i0}^{\prime}\boldsymbol{\varepsilon}_{\circ
t}\varepsilon_{it}+\lambda_{T}^{2}\mathbf{w}_{i0}^{\prime}%
\boldsymbol{\varepsilon}_{\circ t}\boldsymbol{\varepsilon}_{\circ t}^{\prime
}\mathbf{w}_{i0},
\]
and hence%
\[
\frac{1}{T}\sum_{t=1}^{T}\varepsilon_{it}^{2}\left(  \lambda_{T}\right)
=\frac{1}{T}\sum_{t=1}^{T}\varepsilon_{it}^{2}+2\lambda_{T}\mathbf{w}%
_{i0}^{\prime}\left(  \frac{1}{T}\sum_{t=1}^{T}\boldsymbol{\varepsilon}_{\circ
t}\varepsilon_{it}\right)  +\lambda_{T}^{2}\mathbf{w}_{i0}^{\prime}%
\mathbf{V}_{\varepsilon T}\mathbf{w}_{i0},
\]
where $\mathbf{V}_{\varepsilon T}=T^{-1}\sum_{t=1}^{T}\boldsymbol{\varepsilon
}_{\circ t}\boldsymbol{\varepsilon}_{\circ t}^{\prime}$ and $\left\Vert
\mathbf{V}_{\varepsilon T}\right\Vert =O_{p}\left(  1\right)  $ by part (b) of
Assumption \ref{ass:errors}. It follows%
\[
\left\vert \frac{1}{T}\sum_{t=1}^{T}\varepsilon_{it}^{2}\left(  \lambda
_{T}\right)  \right\vert \leq\left\vert \frac{1}{T}\sum_{t=1}^{T}%
\varepsilon_{it}^{2}\right\vert +2\left\vert \lambda_{T}\right\vert \left\Vert
\mathbf{w}_{i0}\right\Vert \left\Vert \frac{1}{T}\sum_{t=1}^{T}%
\boldsymbol{\varepsilon}_{\circ t}\varepsilon_{it}\right\Vert +\lambda_{T}%
^{2}\left\Vert \mathbf{w}_{i0}\right\Vert ^{2}\left\Vert \mathbf{V}%
_{\varepsilon T}\right\Vert .
\]
Denote $\mathbf{e}_{i}$ as $n\times1$ selection vector with $1$ on its
$i^{th}$ element and zeros elsewhere, and note that
\[
\sup_{i}\left\Vert \frac{1}{T}\sum_{t=1}^{T}\boldsymbol{\varepsilon}_{\circ
t}\varepsilon_{it}\right\Vert =\sup_{i}\left\Vert \frac{1}{T}\sum_{t=1}%
^{T}\boldsymbol{\varepsilon}_{\circ t}\boldsymbol{\varepsilon}_{\circ
t}^{\prime}\mathbf{e}_{i}\right\Vert \leq\left\Vert \frac{1}{T}\sum_{t=1}%
^{T}\boldsymbol{\varepsilon}_{\circ t}\boldsymbol{\varepsilon}_{\circ
t}^{\prime}\right\Vert \left(  \sup_{i}\left\Vert \mathbf{e}_{i}\right\Vert
\right)  =\left\Vert \mathbf{V}_{\varepsilon T}\right\Vert .
\]
Using this result we now have (recalling that $\lambda_{T}=c_{\lambda}%
T^{-1/2})$
\begin{align*}
\sup_{i}\left\vert \frac{1}{T}\sum_{t=1}^{T}\varepsilon_{it}^{2}\left(
\lambda_{T}\right)  \right\vert  &  \leq\sup_{i}\left\vert \frac{1}{T}%
\sum_{t=1}^{T}\varepsilon_{it}^{2}\right\vert +2c_{\lambda}T^{-1/2}\left\Vert
\mathbf{V}_{\varepsilon T}\right\Vert \left(  \sup_{i}\left\Vert
\mathbf{w}_{i0}\right\Vert \right)  +\\
&  c_{\lambda}^{2}T^{-1}\left\Vert \mathbf{V}_{\varepsilon T}\right\Vert
\left(  \sup_{i}\left\Vert \mathbf{w}_{i0}\right\Vert ^{2}\right)  .
\end{align*}
Therefore, since $\sup_{i}\left\Vert \mathbf{w}_{i0}\right\Vert <C$ and by
assumption $\left\Vert \mathbf{V}_{\varepsilon T}\right\Vert =O_{p}(1)$, then
\begin{equation}
\sup_{i}\left\vert \frac{1}{T}\sum_{t=1}^{T}\varepsilon_{it}^{2}\left(
\lambda_{T}\right)  \right\vert =\sup_{i}\left\vert \frac{1}{T}\sum_{t=1}%
^{T}\varepsilon_{it}^{2}\right\vert +O_{p}\left(  \frac{1}{\sqrt{T}}\right)  .
\label{sup:ave:eij}%
\end{equation}
Result (\ref{sup_u2}) now follows from (\ref{sup_u2_v1}). Similarly, to
establish (\ref{sup_fu}) note that
\[
\frac{\mathbf{F}^{\prime}\boldsymbol{\varepsilon}_{i\circ}\left(  \lambda
_{T}\right)  }{T}=\frac{1}{T}\sum_{t=1}^{T}\mathbf{f}_{t}\varepsilon
_{it}\left(  \lambda_{T}\right)  =\frac{1}{T}\sum_{t=1}^{T}\mathbf{f}%
_{t}\varepsilon_{it}+\frac{\lambda_{T}}{T}\sum_{t=1}^{T}\mathbf{f}%
_{t}\boldsymbol{\varepsilon}_{\circ t}^{\prime}\mathbf{w}_{i0},
\]
and%
\[
\left\Vert \frac{\mathbf{F}^{\prime}\boldsymbol{\varepsilon}_{i\circ}\left(
\lambda_{T}\right)  }{T}\right\Vert \leq\left\Vert \frac{1}{T}\sum_{t=1}%
^{T}\mathbf{f}_{t}\varepsilon_{it}\right\Vert +\left\vert \lambda
_{T}\right\vert \left\Vert \mathbf{w}_{i0}\right\Vert \left\Vert \frac{1}%
{T}\sum_{t=1}^{T}\mathbf{f}_{t}\boldsymbol{\varepsilon}_{\circ t}^{\prime
}\right\Vert .
\]
Applying the supremum operator to both sides yields
\[
\sup_{i}\left\Vert \frac{\mathbf{F}^{\prime}\boldsymbol{\varepsilon}_{i\circ
}\left(  \lambda_{T}\right)  }{T}\right\Vert \leq\sup_{i}\left\Vert \frac
{1}{T}\sum_{t=1}^{T}\mathbf{f}_{t}\varepsilon_{it}\right\Vert +\left\vert
\lambda_{T}\right\vert \left(  \sup_{i}\left\Vert \mathbf{w}_{i0}\right\Vert
\right)  \left\Vert \frac{1}{T}\sum_{t=1}^{T}\mathbf{f}_{t}%
\boldsymbol{\varepsilon}_{\circ t}^{\prime}\right\Vert .
\]
Also
\begin{align*}
E\left\Vert \frac{1}{T}\sum_{t=1}^{T}\mathbf{f}_{t}\boldsymbol{\varepsilon
}_{\circ t}^{\prime}\right\Vert ^{2}  &  \leq E\left\Vert \frac{1}{T}%
\sum_{t=1}^{T}\mathbf{f}_{t}\boldsymbol{\varepsilon}_{\circ t}^{\prime
}\right\Vert _{F}^{2}=\mathrm{tr}\left[  E\left[  \left(  \frac{1}{T}%
\sum_{t=1}^{T}\mathbf{f}_{t}\boldsymbol{\varepsilon}_{\circ t}^{\prime
}\right)  \left(  \frac{1}{T}\sum_{t=1}^{T}\mathbf{f}_{t}%
\boldsymbol{\varepsilon}_{\circ t}^{\prime}\right)  ^{\prime}\right]  \right]
\\
&  =\mathrm{tr}\left(  \frac{1}{T^{2}}\sum_{t=1}^{T}\sum_{t^{\prime}=1}%
^{T}E\left(  \mathbf{f}_{t}\boldsymbol{\varepsilon}_{\circ t}^{\prime
}\boldsymbol{\varepsilon}_{\circ t^{\prime}}\mathbf{f}_{t^{\prime}}^{\prime
}\right)  \right)  =\frac{1}{T^{2}}\mathrm{tr}\left(  \sum_{t=1}^{T}E\left(
\mathbf{f}_{t}\mathbf{f}_{t}^{\prime}\right)  E\left(  \boldsymbol{\varepsilon
}_{\circ t}^{\prime}\boldsymbol{\varepsilon}_{\circ t}\right)  \right) \\
&  =\frac{n}{T}\mathrm{tr}\left(  \mathbf{\Sigma}_{ff}\right)  =O\left(
\frac{nm_{0}}{T}\right)  ,
\end{align*}
which establishes that $T^{-1}\sum_{t=1}^{T}\mathbf{f}_{t}%
\boldsymbol{\varepsilon}_{\circ t}^{\prime}=O_{p}\left(  1\right)  $ since by
assumption $n$ and $T$ have the same orders of magnitudes. Given $\lambda
_{T}=c_{\lambda}T^{-1/2}$ and $\sup_{i}\left\Vert \mathbf{w}_{i0}\right\Vert
<C$, we now have%
\[
\sup_{i}\left\Vert \frac{\mathbf{F}^{\prime}\boldsymbol{\varepsilon}_{i\circ
}\left(  \lambda_{T}\right)  }{T}\right\Vert =\sup_{i}\left\Vert \frac{1}%
{T}\sum_{t=1}^{T}\mathbf{f}_{t}\varepsilon_{it}\right\Vert +O_{p}\left(
\frac{1}{\sqrt{T}}\right)  ,
\]
and (\ref{sup_fu}) follows using (\ref{sup_fu_v1}). Similarly, (\ref{sup_gu})
can be established using result (\ref{sup_gu_v1}). Next, consider
(\ref{sup_gujui}) and using the definition of $\varepsilon_{it}\left(
\lambda_{T}\right)  $ in (\ref{epit}) yields
\begin{align*}
\frac{1}{nT}\sum_{t=1}^{T}\sum_{j=1}^{n}\sigma_{j}\boldsymbol{\gamma}%
_{j}\varepsilon_{it}\left(  \lambda_{T}\right)  \varepsilon_{jt}\left(
\lambda_{T}\right)   &  =\frac{1}{nT}\sum_{t=1}^{T}\sum_{j=1}^{n}\sigma
_{j}\boldsymbol{\gamma}_{j}\varepsilon_{it}\varepsilon_{jt}+\frac{\lambda_{T}%
}{nT}\sum_{t=1}^{T}\sum_{j=1}^{n}\sigma_{j}\boldsymbol{\gamma}_{j}%
\mathbf{w}_{j0}^{\prime}\boldsymbol{\varepsilon}_{\circ t}\varepsilon_{it}+\\
&  \frac{\lambda_{T}}{nT}\sum_{t=1}^{T}\sum_{j=1}^{n}\sigma_{j}%
\boldsymbol{\gamma}_{j}\mathbf{w}_{i0}^{\prime}\boldsymbol{\varepsilon}_{\circ
t}\varepsilon_{jt}+\frac{\lambda_{T}^{2}}{n}\sum_{j=1}^{n}\sigma
_{j}\boldsymbol{\gamma}_{j}\mathbf{w}_{i0}^{\prime}\left(  \frac{1}{T}%
\sum_{t=1}^{T}\boldsymbol{\varepsilon}_{\circ t}\boldsymbol{\varepsilon
}_{\circ t}^{\prime}\right)  \mathbf{w}_{j0},
\end{align*}
which implies
\begin{align*}
\left\Vert \frac{1}{nT}\sum_{t=1}^{T}\sum_{j=1}^{n}\sigma_{j}%
\boldsymbol{\gamma}_{j}\varepsilon_{it}\left(  \lambda_{T}\right)
\varepsilon_{jt}\left(  \lambda_{T}\right)  \right\Vert  &  \leq\left\Vert
\frac{1}{nT}\sum_{t=1}^{T}\sum_{j=1}^{n}\sigma_{j}\boldsymbol{\gamma}%
_{j}\varepsilon_{it}\varepsilon_{jt}\right\Vert +\left\Vert \frac{\lambda_{T}%
}{nT}\sum_{t=1}^{T}\sum_{j=1}^{n}\sigma_{j}\boldsymbol{\gamma}_{j}%
\mathbf{w}_{j0}^{\prime}\boldsymbol{\varepsilon}_{\circ t}\varepsilon
_{it}\right\Vert +\\
&  \left\Vert \frac{\lambda_{T}}{nT}\sum_{t=1}^{T}\sum_{j=1}^{n}\sigma
_{j}\boldsymbol{\gamma}_{j}\mathbf{w}_{i0}^{\prime}\boldsymbol{\varepsilon
}_{\circ t}\varepsilon_{jt}\right\Vert +\left\Vert \frac{\lambda_{T}^{2}}%
{n}\sum_{j=1}^{n}\sigma_{j}\boldsymbol{\gamma}_{j}\mathbf{w}_{i0}^{\prime
}\left(  \frac{1}{T}\sum_{t=1}^{T}\boldsymbol{\varepsilon}_{\circ
t}\boldsymbol{\varepsilon}_{\circ t}^{\prime}\right)  \mathbf{w}%
_{j0}\right\Vert .
\end{align*}
Taking the supremum on both sides of this inequality yields%
\begin{align}
&  \sup_{i}\left\Vert \frac{1}{nT}\sum_{t=1}^{T}\sum_{j=1}^{n}\sigma
_{j}\boldsymbol{\gamma}_{j}\varepsilon_{it}\left(  \lambda_{T}\right)
\varepsilon_{jt}\left(  \lambda_{T}\right)  \right\Vert \nonumber\\
&  \leq\sup_{i}\left\Vert \frac{1}{nT}\sum_{t=1}^{T}\sum_{j=1}^{n}\sigma
_{j}\boldsymbol{\gamma}_{j}\varepsilon_{it}\varepsilon_{jt}\right\Vert
+\sup_{i}\left\Vert \frac{\lambda_{T}}{nT}\sum_{t=1}^{T}\sum_{j=1}^{n}%
\sigma_{j}\boldsymbol{\gamma}_{j}\mathbf{w}_{j0}^{\prime}%
\boldsymbol{\varepsilon}_{\circ t}\varepsilon_{it}\right\Vert +\nonumber\\
&  \sup_{i}\left\Vert \frac{\lambda_{T}}{nT}\sum_{t=1}^{T}\sum_{j=1}^{n}%
\sigma_{j}\boldsymbol{\gamma}_{j}\mathbf{w}_{i0}^{\prime}%
\boldsymbol{\varepsilon}_{\circ t}\varepsilon_{jt}\right\Vert +\sup
_{i}\left\Vert \frac{\lambda_{T}^{2}}{n}\sum_{j=1}^{n}\sigma_{j}%
\boldsymbol{\gamma}_{j}\mathbf{w}_{i0}^{\prime}\left(  \frac{1}{T}\sum
_{t=1}^{T}\boldsymbol{\varepsilon}_{\circ t}\boldsymbol{\varepsilon}_{\circ
t}^{\prime}\right)  \mathbf{w}_{j0}\right\Vert . \label{gujui}%
\end{align}
Note that $\varepsilon_{it}=\boldsymbol{\varepsilon}_{\circ t}^{\prime
}\mathbf{e}_{i}$ where $\mathbf{e}_{i}$ is an $n\times1$ selection vector with
$1$ on its $i^{th}$ element and zero elsewhere. Then the second term of the
above can be bounded as
\begin{align*}
\sup_{i}\left\Vert \frac{\lambda_{T}}{nT}\sum_{t=1}^{T}\sum_{j=1}^{n}%
\sigma_{j}\boldsymbol{\gamma}_{j}\mathbf{w}_{j0}^{\prime}%
\boldsymbol{\varepsilon}_{\circ t}\varepsilon_{it}\right\Vert  &  =\sup
_{i}\left\Vert \frac{\lambda_{T}}{n}\sum_{j=1}^{n}\sigma_{j}\boldsymbol{\gamma
}_{j}\mathbf{w}_{j0}^{\prime}\left(  \frac{1}{T}\sum_{t=1}^{T}%
\boldsymbol{\varepsilon}_{\circ t}\boldsymbol{\varepsilon}_{\circ t}^{\prime
}\right)  \mathbf{e}_{i}\right\Vert \\
&  \leq\left\vert \lambda_{T}\right\vert \left\Vert \frac{1}{n}\sum_{j=1}%
^{n}\sigma_{j}\boldsymbol{\gamma}_{j}\mathbf{w}_{j0}^{\prime}\left(  \frac
{1}{T}\sum_{t=1}^{T}\boldsymbol{\varepsilon}_{\circ t}\boldsymbol{\varepsilon
}_{\circ t}^{\prime}\right)  \right\Vert \left(  \sup_{i}\left\Vert
\mathbf{e}_{i}\right\Vert \right) \\
&  \leq\left\vert \lambda_{T}\right\vert \left\Vert \frac{1}{n}\sum_{j=1}%
^{n}\sigma_{j}\boldsymbol{\gamma}_{j}\mathbf{w}_{j0}^{\prime}\right\Vert
\left\Vert \mathbf{V}_{\varepsilon T}\right\Vert .
\end{align*}
Since $\sup_{i}\sigma_{i}^{2}<C$, and by Assumptions \ref{ass:loadings}%
-\ref{ass:ws}, $\sup_{s,i}\gamma_{si}^{2}<C$ and $\sup_{i}\sum_{j=1}%
^{n}\left\vert w_{ji}\right\vert <C$, then it follows
\begin{align*}
\left\Vert \frac{1}{n}\sum_{j=1}^{n}\sigma_{j}\boldsymbol{\gamma}%
_{j}\mathbf{w}_{j0}^{\prime}\right\Vert ^{2}  &  \leq\left\Vert \frac{1}%
{n}\sum_{j=1}^{n}\sigma_{j}\boldsymbol{\gamma}_{j}\mathbf{w}_{j0}^{\prime
}\right\Vert _{F}^{2}=\frac{1}{n^{2}}\sum_{s=1}^{m_{0}}\sum_{i=1}^{n}\left(
\sum_{j=1}^{n}\sigma_{j}\gamma_{sj}w_{ji}\right)  ^{2}\\
&  \leq\frac{1}{n^{2}}\sum_{s=1}^{m_{0}}\sum_{i=1}^{n}\left(  \sum_{j=1}%
^{n}\left\vert \sigma_{j}\gamma_{sj}\right\vert \left\vert w_{ji}\right\vert
\right)  ^{2}\\
&  \leq\left(  \sup_{i}\sigma_{i}^{2}\right)  \left(  \sup_{s,i}\gamma
_{si}^{2}\right)  \left[  \frac{1}{n}\sum_{s=1}^{m_{0}}\left(  \frac{1}{n}%
\sum_{i=1}^{n}\left(  \sum_{j=1}^{n}\left\vert w_{ji}\right\vert \right)
^{2}\right)  \right] \\
&  =O\left(  \frac{1}{n}\right)  .
\end{align*}
In addition, $\lambda_{T}=c_{\lambda}T^{-1/2}$ and $\left\Vert \mathbf{V}%
_{\varepsilon T}\right\Vert =O_{p}\left(  n/T\right)  $ with $0<n/T<C$. Hence,
we have $\sup_{i}\left\Vert \frac{\lambda_{T}}{nT}\sum_{t=1}^{T}\sum_{j=1}%
^{n}\sigma_{j}\boldsymbol{\gamma}_{j}\mathbf{w}_{j0}^{\prime}%
\boldsymbol{\varepsilon}_{\circ t}\varepsilon_{it}\right\Vert =O_{p}\left(
\left(  nT\right)  ^{-1/2}\right)  $. Similarly the third term of
(\ref{gujui}) is also $O_{p}\left(  \left(  nT\right)  ^{-1/2}\right)  $. For
the fourth term of (\ref{gujui}),
\begin{align*}
\sup_{i}\left\Vert \frac{\lambda_{T}^{2}}{n}\sum_{j=1}^{n}\sigma
_{j}\boldsymbol{\gamma}_{j}\mathbf{w}_{i0}^{\prime}\left(  \frac{1}{T}%
\sum_{t=1}^{T}\boldsymbol{\varepsilon}_{\circ t}\boldsymbol{\varepsilon
}_{\circ t}^{\prime}\right)  \mathbf{w}_{j0}\right\Vert  &  \leq\lambda
_{T}^{2}\left(  \sup_{i}\left\Vert \mathbf{w}_{i0}\right\Vert \right)  \left(
\frac{1}{n}\sum_{j=1}^{n}\sigma_{j}\left\Vert \boldsymbol{\gamma}%
_{j}\right\Vert \left\Vert \mathbf{w}_{j0}\right\Vert \right)  \left\Vert
\mathbf{V}_{\varepsilon T}\right\Vert \\
&  =O_{p}\left(  \frac{1}{T}\right)  .
\end{align*}
Using the above results in (\ref{gujui}) we now have
\[
\sup_{i}\left\Vert \frac{1}{nT}\sum_{t=1}^{T}\sum_{j=1}^{n}\sigma
_{j}\boldsymbol{\gamma}_{j}\varepsilon_{it}\left(  \lambda_{T}\right)
\varepsilon_{jt}\left(  \lambda_{T}\right)  \right\Vert \leq\sup_{i}\left\Vert
\frac{1}{nT}\sum_{t=1}^{T}\sum_{j=1}^{n}\sigma_{j}\boldsymbol{\gamma}%
_{j}\varepsilon_{it}\varepsilon_{jt}\right\Vert +O_{p}\left(  \sqrt{\frac
{\ln\left(  n\right)  }{nT}}\right)  ,
\]
and (\ref{sup_gujui}) follows using (\ref{sup_gujui_v1}) to establish the
order of the first term of the above. To establish \ref{sup_|hatg-g|}) note
that by definition of $\boldsymbol{\hat{\gamma}}_{i}$,
\begin{align*}
\boldsymbol{\hat{\gamma}}_{i}-\boldsymbol{\gamma}_{i}  &  =\left(
\mathbf{\hat{F}}^{\prime}\mathbf{\hat{F}}\right)  ^{-1}\mathbf{\hat{F}%
}^{\prime}\left(  \mathbf{F}\boldsymbol{\gamma}_{i}+\sigma_{i}%
\boldsymbol{\varepsilon}_{i\circ}\left(  \lambda_{T}\right)  \right)
-\boldsymbol{\gamma}_{i}=\left(  \mathbf{\hat{F}}^{\prime}\mathbf{\hat{F}%
}\right)  ^{-1}\mathbf{\hat{F}}^{\prime}\left[  \left(  \mathbf{F-\hat{F}%
}\right)  \boldsymbol{\gamma}_{i}+\mathbf{\hat{F}}\boldsymbol{\gamma}%
_{i}+\sigma_{i}\boldsymbol{\varepsilon}_{i\circ}\left(  \lambda_{T}\right)
\right]  -\boldsymbol{\gamma}_{i}\\
&  =\left(  \frac{\mathbf{\hat{F}}^{\prime}\mathbf{\hat{F}}}{T}\right)
^{-1}\left(  \frac{\mathbf{\hat{F}}-\mathbf{F+F}}{T}\right)  ^{\prime}\left[
\left(  \mathbf{F-\hat{F}}\right)  \boldsymbol{\gamma}_{i}+\sigma
_{i}\boldsymbol{\varepsilon}_{i\circ}\left(  \lambda_{T}\right)  \right] \\
&  =\left(  \frac{\mathbf{\hat{F}}^{\prime}\mathbf{\hat{F}}}{T}\right)
^{-1}\left[  \frac{\left(  \mathbf{\hat{F}}-\mathbf{F}\right)  ^{\prime
}\left(  \mathbf{F-\hat{F}}\right)  }{T}\right]  \boldsymbol{\gamma}%
_{i}+\left(  \frac{\mathbf{\hat{F}}^{\prime}\mathbf{\hat{F}}}{T}\right)
^{-1}\left[  \frac{\mathbf{F}^{\prime}\left(  \mathbf{F-\hat{F}}\right)  }%
{T}\right]  \boldsymbol{\gamma}_{i}\\
&  +\left(  \frac{\mathbf{\hat{F}}^{\prime}\mathbf{\hat{F}}}{T}\right)
^{-1}\left[  \frac{\sigma_{i}\left(  \mathbf{\hat{F}}-\mathbf{F}\right)
^{\prime}\boldsymbol{\varepsilon}_{i\circ}\left(  \lambda_{T}\right)  }%
{T}\right]  +\left(  \frac{\mathbf{\hat{F}}^{\prime}\mathbf{\hat{F}}}%
{T}\right)  ^{-1}\left(  \frac{\sigma_{i}\mathbf{F}^{\prime}%
\boldsymbol{\varepsilon}_{i\circ}\left(  \lambda_{T}\right)  }{T}\right)
=\sum_{j=1}^{4}\mathbf{a}_{j,iT},
\end{align*}
and%
\begin{equation}
\left\Vert \boldsymbol{\hat{\gamma}}_{i}-\boldsymbol{\gamma}_{i}\right\Vert
\leq\sum_{j=1}^{4}\left\Vert \mathbf{a}_{j,iT}\right\Vert . \label{|hg-g|}%
\end{equation}
Firstly we have%
\begin{align*}
\left\Vert \mathbf{a}_{1,iT}\right\Vert  &  \leq\left\Vert \left(
\frac{\mathbf{\hat{F}}^{\prime}\mathbf{\hat{F}}}{T}\right)  ^{-1}\right\Vert
\left\Vert \left[  \frac{\left(  \mathbf{\hat{F}}-\mathbf{F}\right)  ^{\prime
}\left(  \mathbf{F-\hat{F}}\right)  }{T}\right]  \right\Vert \left\Vert
\boldsymbol{\gamma}_{i}\right\Vert \\
&  \leq\left\Vert \left(  \frac{\mathbf{\hat{F}}^{\prime}\mathbf{\hat{F}}}%
{T}\right)  ^{-1}\right\Vert \left(  \frac{\left\Vert \mathbf{\hat{F}%
}-\mathbf{F}\right\Vert ^{2}}{T}\right)  \left\Vert \boldsymbol{\gamma}%
_{i}\right\Vert ,
\end{align*}
which implies%
\[
\sup_{i}\left\Vert \mathbf{a}_{1,iT}\right\Vert \leq\left\Vert \left(
\frac{\mathbf{\hat{F}}^{\prime}\mathbf{\hat{F}}}{T}\right)  ^{-1}\right\Vert
\left(  \frac{\left\Vert \mathbf{\hat{F}}-\mathbf{F}\right\Vert ^{2}}%
{T}\right)  \sup_{i}\left\Vert \boldsymbol{\gamma}_{i}\right\Vert .
\]
Using (\ref{hfhf}) in Lemma \ref{lm4} and (\ref{lm2.1}) in Lemma \ref{lm2}, we
note that%
\begin{equation}
\frac{\mathbf{\hat{F}}^{^{\prime}}\mathbf{\hat{F}}}{T}=O_{p}(1)\text{,
}\left(  \frac{\mathbf{\hat{F}}^{^{\prime}}\mathbf{\hat{F}}}{T}\right)
^{-1}=O_{p}(1)\text{, and }T^{-1}\left\Vert \mathbf{\hat{F}}-\mathbf{F}%
\right\Vert _{F}^{2}=O_{p}\left(  \frac{1}{\delta_{nT}^{2}}\right)  .
\label{FhatFhat}%
\end{equation}
Using this result and $\sup_{i}\left\Vert \boldsymbol{\gamma}_{i}\right\Vert
<C$, we obtain
\begin{equation}
\sup_{i}\left\Vert \mathbf{a}_{1,iT}\right\Vert =O_{p}\left(  \frac{1}%
{\delta_{nT}^{2}}\right)  . \label{sup_g_a1}%
\end{equation}
Similarly, $\left\Vert \mathbf{a}_{2,iT}\right\Vert \leq\left\Vert \left(
\frac{\mathbf{\hat{F}}^{\prime}\mathbf{\hat{F}}}{T}\right)  ^{-1}\right\Vert
\left\Vert \frac{\mathbf{F}^{\prime}\left(  \mathbf{F-\hat{F}}\right)  }%
{T}\right\Vert \left\Vert \boldsymbol{\gamma}_{i}\right\Vert ,$ so using
(\ref{baib2}) and (\ref{FhatFhat}) it yields
\begin{equation}
\sup_{i}\left\Vert \mathbf{a}_{2,iT}\right\Vert \leq\left\Vert \left(
\frac{\mathbf{\hat{F}}^{\prime}\mathbf{\hat{F}}}{T}\right)  ^{-1}\right\Vert
\left(  \frac{\left\Vert \mathbf{F}^{\prime}\left(  \mathbf{F-\hat{F}}\right)
\right\Vert }{T}\right)  \sup_{i}\left\Vert \boldsymbol{\gamma}_{i}\right\Vert
=O_{p}\left(  \frac{1}{\delta_{nT}^{2}}\right)  . \label{sup_g_a2}%
\end{equation}
Regarding $\mathbf{a}_{3,iT}$, by Cauchy-Schwarz inequality we have
\[
\left\Vert \mathbf{a}_{3,iT}\right\Vert \leq\left\Vert \left(  \frac
{\mathbf{\hat{F}}^{\prime}\mathbf{\hat{F}}}{T}\right)  ^{-1}\right\Vert
\left\Vert \frac{\sigma_{i}\left(  \mathbf{\hat{F}}-\mathbf{F}\right)
^{\prime}\boldsymbol{\varepsilon}_{i\circ}\left(  \lambda_{T}\right)  }%
{T}\right\Vert \leq\left\Vert \left(  \frac{\mathbf{\hat{F}}^{\prime
}\mathbf{\hat{F}}}{T}\right)  ^{-1}\right\Vert \left(  \frac{\left\Vert
\mathbf{\hat{F}}-\mathbf{F}\right\Vert ^{2}}{T}\right)  ^{1/2}\left(
\frac{\sigma_{i}^{2}\left\Vert \boldsymbol{\varepsilon}_{i\circ}\left(
\lambda_{T}\right)  \right\Vert ^{2}}{T}\right)  ^{1/2},
\]
and therefore%
\[
\sup_{i}\left\Vert \mathbf{a}_{3,iT}\right\Vert \leq\left(  \sup_{i}\sigma
_{i}\right)  \left\Vert \left(  \frac{\mathbf{\hat{F}}^{\prime}\mathbf{\hat
{F}}}{T}\right)  ^{-1}\right\Vert \left(  \frac{\left\Vert \mathbf{\hat{F}%
}-\mathbf{F}\right\Vert ^{2}}{T}\right)  ^{1/2}\left[  \sup_{i}\left(
\frac{\left\Vert \boldsymbol{\varepsilon}_{i\circ}\left(  \lambda_{T}\right)
\right\Vert ^{2}}{T}\right)  \right]  ^{1/2}.
\]
Now using (\ref{sup_u2}) and (\ref{FhatFhat}) it follows that
\begin{equation}
\sup_{i}\left\Vert \mathbf{a}_{3,iT}\right\Vert =O_{p}\left(  \frac{1}%
{\delta_{nT}}\right)  . \label{sup_g_a3}%
\end{equation}
Next, note
\[
\left\Vert \mathbf{a}_{4,iT}\right\Vert \leq\left\Vert \left(  \frac
{\mathbf{\hat{F}}^{\prime}\mathbf{\hat{F}}}{T}\right)  ^{-1}\right\Vert
\left\Vert \frac{\sigma_{i}\mathbf{F}^{\prime}\boldsymbol{\varepsilon}%
_{i\circ}\left(  \lambda_{T}\right)  }{T}\right\Vert
\]
then by (\ref{sup_fu}) we also have%
\begin{equation}
\sup_{i}\left\Vert \mathbf{a}_{4,iT}\right\Vert \leq\left(  \sup_{i}\sigma
_{i}\right)  \left\Vert \left(  \frac{\mathbf{\hat{F}}^{\prime}\mathbf{\hat
{F}}}{T}\right)  ^{-1}\right\Vert \left(  \sup_{i}\left\Vert \frac
{\mathbf{F}^{\prime}\boldsymbol{\varepsilon}_{i\circ}\left(  \lambda
_{T}\right)  }{T}\right\Vert \right)  =O_{p}\left(  \sqrt{\frac{\ln(n)}{T}%
}\right)  . \label{sup_g_a4}%
\end{equation}
Hence using (\ref{sup_g_a1})-(\ref{sup_g_a4}) in (\ref{|hg-g|}) we have%
\[
\sup_{i}\left\Vert \boldsymbol{\hat{\gamma}}_{i}-\boldsymbol{\gamma}%
_{i}\right\Vert \leq\sum_{j=1}^{4}\sup_{i}\left\Vert \mathbf{a}_{j,iT}%
\right\Vert =O_{p}\left(  \sqrt{\frac{\ln(n)}{T}}\right)  ,
\]
as required. Result (\ref{sup_|hatf-f|}) follows by symmetry.
\end{proof}

\begin{lemma}
\label{lm_ave:rho} Consider $\varepsilon_{it}\left(  \lambda_{T}\right)
=\varepsilon_{it}+\lambda_{T}\mathbf{w}_{i0}^{\prime}\boldsymbol{\varepsilon
}_{\circ t}$, where $\boldsymbol{\varepsilon}_{\circ t}=(\varepsilon
_{1t},\varepsilon_{2t},...,\varepsilon_{nt})^{\prime}$, $\varepsilon_{it}\sim
IID\left(  0,1\right)  $ for all $i$ and $t$, $\mathbf{w}_{i0}=\left(
w_{i1},w_{i2},\ldots,w_{in}\right)  ^{\prime}$, and $\mathbf{W}=\left(
w_{ij}\right)  $ satisfy the bounded conditions $\left\Vert \mathbf{W}%
\right\Vert _{1}=\sup_{j}\sum_{i=1}^{n}\left\vert w_{ij}\right\vert <C,$ and
$\left\Vert \mathbf{W}\right\Vert _{\infty}=\sup_{i}\sum_{j=1}^{n}\left\vert
w_{ij}\right\vert <C$. Then for all $\left\vert \lambda_{T}\right\vert <C$ we
have
\[
\sup_{j}\sum_{i=1}^{n}\left\vert E\left[  \varepsilon_{it}\left(  \lambda
_{T}\right)  \varepsilon_{jt}\left(  \lambda_{T}\right)  \right]  \right\vert
<C,\text{ }\sup_{i}\sum_{j=1}^{n}\left\vert E\left[  \varepsilon_{it}\left(
\lambda_{T}\right)  \varepsilon_{jt}\left(  \lambda_{T}\right)  \right]
\right\vert <C,
\]
and
\begin{equation}
n^{-1}\sum_{i=1}^{n}\sum_{j=1}^{n}\left\vert E\left[  \varepsilon_{it}\left(
\lambda_{T}\right)  \varepsilon_{jt}\left(  \lambda_{T}\right)  \right]
\right\vert <C. \label{ave:rho}%
\end{equation}

\end{lemma}

\begin{proof}
Let $\boldsymbol{\varepsilon}_{\circ t}(\lambda_{T})=$
$\boldsymbol{\varepsilon}_{\circ t}+\lambda_{T}\mathbf{W}%
\boldsymbol{\varepsilon}_{\circ t}$, where $\mathbf{W}^{\prime}=(\mathbf{w}%
_{10},\mathbf{w}_{20},...,\mathbf{w}_{n0})$. Then
\[
\boldsymbol{\varepsilon}_{\circ t}(\lambda_{T})\boldsymbol{\varepsilon}_{\circ
t}^{\prime}(\lambda_{T})=\boldsymbol{\varepsilon}_{\circ t}%
\boldsymbol{\varepsilon}_{\circ t}^{\prime}+\lambda_{T}^{2}\mathbf{W}%
\boldsymbol{\varepsilon}_{\circ t}\boldsymbol{\varepsilon}_{\circ t}^{\prime
}\mathbf{W}^{\prime}+\lambda_{T}\mathbf{W}\boldsymbol{\varepsilon}_{\circ
t}\boldsymbol{\varepsilon}_{\circ t}^{\prime}+\lambda_{T}%
\boldsymbol{\varepsilon}_{\circ t}\boldsymbol{\varepsilon}_{\circ t}^{\prime
}\mathbf{W}^{\prime},
\]
and%
\[
\mathbf{V}_{\varepsilon}(\lambda_{T})=E\left[  \boldsymbol{\varepsilon}_{\circ
t}(\lambda_{T})\boldsymbol{\varepsilon}_{\circ t}^{\prime}(\lambda
_{T})\right]  =\mathbf{I}_{n}+\lambda_{T}\left(  \mathbf{W+W}^{\prime}\right)
+\lambda_{T}^{2}\mathbf{WW}^{\prime}.
\]
Consider the maximum absolute column sum norm of $\mathbf{V}_{\varepsilon
}(\lambda_{T})$ and note that%
\[
\left\Vert \mathbf{V}_{\varepsilon}(\lambda_{T})\right\Vert _{1}=\sup_{j}%
\sum_{i=1}^{n}\left\vert E\left[  \varepsilon_{it}\left(  \lambda_{T}\right)
\varepsilon_{jt}\left(  \lambda_{T}\right)  \right]  \right\vert <1+\left\vert
\lambda_{T}\right\vert \left(  \left\Vert \mathbf{W}\right\Vert _{1}%
+\left\Vert \mathbf{W}\right\Vert _{\infty}\right)  +\lambda_{T}^{2}\left\Vert
\mathbf{W}\right\Vert _{1}\left\Vert \mathbf{W}\right\Vert _{\infty}<C\text{.
}%
\]
Similarly for the maximum absolute row sum norm of $\mathbf{V}_{\varepsilon
}(\lambda_{T})$
\[
\left\Vert \mathbf{V}_{\varepsilon}(\lambda_{T})\right\Vert _{\infty}=\sup
_{i}\sum_{j=1}^{n}\left\vert E\left[  \varepsilon_{it}\left(  \lambda
_{T}\right)  \varepsilon_{jt}\left(  \lambda_{T}\right)  \right]  \right\vert
<1+\left\vert \lambda_{T}\right\vert \left(  \left\Vert \mathbf{W}\right\Vert
_{\infty}+\left\Vert \mathbf{W}\right\Vert _{1}\right)  +\lambda_{T}%
^{2}\left\Vert \mathbf{W}\right\Vert _{\infty}\left\Vert \mathbf{W}\right\Vert
_{1}<C\text{,}%
\]
and result (\ref{ave:rho}) follows.
\end{proof}

\begin{lemma}
\label{lm_supbaib1} Consider the latent factor model given by (\ref{mod2}) and
(\ref{uit:p}). Suppose that Assumptions \ref{ass:factors}-\ref{ass:ws} hold
and $\left(  n,T\right)  \rightarrow\infty$, such that $n/T\rightarrow
\kappa\,,$ for $0<\kappa<\infty$. Then for the estimator of factors, we have
\begin{equation}
\sup_{i}\left\Vert \frac{\left(  \mathbf{\hat{F}-F}\right)  ^{\prime
}\boldsymbol{\varepsilon}_{i\circ}\left(  \lambda_{T}\right)  }{T}\right\Vert
=O_{p}\left(  \sqrt{\frac{\ln\left(  n\right)  }{nT}}\right)  .
\label{sup_baib1}%
\end{equation}

\end{lemma}

\begin{proof}
By (A.1) of Bai (2003) we note that%
\[
\mathbf{\hat{f}}_{t}-\mathbf{f}_{t}=\frac{1}{T}\sum_{t^{\prime}=1}%
^{T}\mathbf{\hat{f}}_{t^{\prime}}\eta_{tt^{\prime}}+\frac{1}{T}\sum
_{t^{\prime}=1}^{T}\mathbf{\hat{f}}_{t^{\prime}}\zeta_{tt^{\prime}}+\frac
{1}{T}\sum_{t^{\prime}=1}^{T}\mathbf{\hat{f}}_{t^{\prime}}\varkappa
_{tt^{\prime}}+\frac{1}{T}\sum_{t^{\prime}=1}^{T}\mathbf{\hat{f}}_{t^{\prime}%
}\xi_{tt^{\prime}}%
\]
where $\eta_{tt^{\prime}}=n^{-1}\sum_{i=1}^{n}\sigma_{i}^{2}E\left(
\varepsilon_{it}\left(  \lambda_{T}\right)  \varepsilon_{it^{\prime}}\left(
\lambda_{T}\right)  \right)  $, $\zeta_{tt^{\prime}}=n^{-1}\sum_{i=1}%
^{n}\sigma_{i}^{2}\varepsilon_{it}^{2}\left(  \lambda_{T}\right)
-\eta_{tt^{\prime}}$, $\varkappa_{tt^{\prime}}=n^{-1}\sum_{i=1}^{n}\sigma
_{i}\mathbf{f}_{t^{\prime}}^{\prime}\boldsymbol{\gamma}_{i}^{\prime
}\varepsilon_{it}\left(  \lambda_{T}\right)  $, and $\xi_{tt^{\prime}}%
=n^{-1}\sum_{i=1}^{n}\sigma_{i}\mathbf{f}_{t}^{\prime}\boldsymbol{\gamma}%
_{i}^{\prime}\varepsilon_{it^{\prime}}\left(  \lambda_{T}\right)  $. Hence
\begin{align*}
\frac{1}{T}\left(  \mathbf{\hat{F}-F}\right)  ^{\prime}\boldsymbol{\varepsilon
}_{i\circ}\left(  \lambda_{T}\right)   &  =\frac{1}{T}\sum_{t=1}^{T}\left(
\mathbf{\hat{f}}_{t}-\mathbf{f}_{t}\right)  \varepsilon_{it}\left(
\lambda_{T}\right) \\
&  =\frac{1}{T^{2}}\sum_{t=1}^{T}\sum_{t^{\prime}=1}^{T}\mathbf{\hat{f}%
}_{t^{\prime}}\eta_{tt^{\prime}}\varepsilon_{it}\left(  \lambda_{T}\right)
+\frac{1}{T^{2}}\sum_{t=1}^{T}\sum_{t^{\prime}=1}^{T}\mathbf{\hat{f}%
}_{t^{\prime}}\zeta_{tt^{\prime}}\varepsilon_{it}\left(  \lambda_{T}\right) \\
&  +\frac{1}{T^{2}}\sum_{t=1}^{T}\sum_{t^{\prime}=1}^{T}\mathbf{\hat{f}%
}_{t^{\prime}}\varkappa_{tt^{\prime}}\varepsilon_{it}\left(  \lambda
_{T}\right)  +\frac{1}{T^{2}}\sum_{t=1}^{T}\sum_{t^{\prime}=1}^{T}%
\mathbf{\hat{f}}_{t^{\prime}}\xi_{tt^{\prime}}\varepsilon_{it}\left(
\lambda_{T}\right) \\
&  =\sum_{j=1}^{4}\mathbf{b}_{j,iT},
\end{align*}
and%
\begin{equation}
\sup_{i}\left\Vert \frac{\left(  \mathbf{\hat{F}-F}\right)  ^{\prime
}\boldsymbol{\varepsilon}_{i\circ}\left(  \lambda_{T}\right)  }{T}\right\Vert
\leq\sum_{j=1}^{4}\sup_{i}\left\Vert \mathbf{b}_{j,iT}\right\Vert .
\label{|(hf-f)u|}%
\end{equation}
Firstly, consider $\mathbf{b}_{1,iT}$ and note that%
\begin{align*}
\mathbf{b}_{1,iT}  &  =\frac{1}{T^{2}}\sum_{t=1}^{T}\sum_{t^{\prime}=1}%
^{T}\left(  \mathbf{\hat{f}}_{t^{\prime}}-\mathbf{f}_{t^{\prime}}\right)
\eta_{tt^{\prime}}\varepsilon_{it}\left(  \lambda_{T}\right)  +\frac{1}{T^{2}%
}\sum_{t=1}^{T}\sum_{t^{\prime}=1}^{T}\mathbf{f}_{t^{\prime}}\eta_{tt^{\prime
}}\varepsilon_{it}\left(  \lambda_{T}\right)  =\mathbf{b}_{1,1,iT}%
+\mathbf{b}_{1,2,iT},
\end{align*}
so $\left\Vert \mathbf{b}_{1,iT}\right\Vert \leq\left\Vert \mathbf{b}%
_{1,1,iT}\right\Vert +\left\Vert \mathbf{b}_{1,2,iT}\right\Vert . $ Note for
the first term on the right hand side,
\begin{align*}
\left\Vert \mathbf{b}_{1,1,iT}\right\Vert  &  =\left\Vert \frac{1}{T}%
\sum_{t^{\prime}=1}^{T}\left(  \mathbf{\hat{f}}_{t^{\prime}}-\mathbf{f}%
_{t^{\prime}}\right)  \left(  \frac{1}{T}\sum_{t=1}^{T}\eta_{tt^{\prime}%
}\varepsilon_{it}\left(  \lambda_{T}\right)  \right)  \right\Vert \\
&  \leq\left(  \frac{1}{T}\sum_{t^{\prime}=1}^{T}\left\Vert \mathbf{\hat{f}%
}_{t^{\prime}}-\mathbf{f}_{t^{\prime}}\right\Vert ^{2}\right)  ^{1/2}\left[
\frac{1}{T}\sum_{t^{\prime}=1}^{T}\left(  \frac{1}{T}\sum_{t=1}^{T}%
\eta_{tt^{\prime}}\varepsilon_{it}\left(  \lambda_{T}\right)  \right)
^{2}\right]  ^{1/2}\\
&  \leq\frac{1}{\sqrt{T}}\left(  \frac{1}{T}\sum_{t^{\prime}=1}^{T}\left\Vert
\mathbf{\hat{f}}_{t^{\prime}}-\mathbf{f}_{t^{\prime}}\right\Vert ^{2}\right)
^{1/2}\left(  \frac{1}{T}\sum_{t^{\prime}=1}^{T}\sum_{t=1}^{T}\eta
_{tt^{\prime}}^{2}\right)  ^{1/2}\left(  \frac{1}{T}\sum_{t=1}^{T}%
\varepsilon_{it}^{2}\left(  \lambda_{T}\right)  \right)  ^{1/2},
\end{align*}
where the second and third lines hold by Cauchy-Schwarz inequality. By
definition of $\eta_{tt^{\prime}}$ and serial independence of $\varepsilon
_{it}\left(  \lambda_{T}\right)  ,$ $\eta_{tt^{\prime}}=\bar{\sigma}_{n}^{2}$
for $t=t^{\prime}$ but $0$ otherwise, where $\bar{\sigma}_{n}^{2}=n^{-1}%
\sum_{i=1}^{n}\sigma_{i}^{2}E\left(  \varepsilon_{it}^{2}\left(  \lambda
_{T}\right)  \right)  $. Under assumption on weight $\left\{  w_{ij}\right\}
$, $E\left(  \varepsilon_{it}^{2}\left(  \lambda_{T}\right)  \right)
=1+\lambda_{T}^{2}\left(  \sum_{j=1}^{n}w_{ij}^{2}\right)  <C $, so it follows that%
\begin{equation}
\frac{1}{T}\sum_{t^{\prime}=1}^{T}\sum_{t=1}^{T}\eta_{tt^{\prime}}^{2}=\left(
\bar{\sigma}_{n}^{2}\right)  ^{2}<C. \label{average_sigma2}%
\end{equation}
Given results (\ref{lm2.1}), (\ref{sup_u2}) and (\ref{average_sigma2}), we
further obtain
\begin{align*}
\sup_{i}\left\Vert \mathbf{b}_{1,1,iT}\right\Vert  &  \leq\frac{1}{\sqrt{T}%
}\left(  \frac{1}{T}\sum_{t^{\prime}=1}^{T}\left\Vert \mathbf{\hat{f}%
}_{t^{\prime}}-\mathbf{f}_{t^{\prime}}\right\Vert ^{2}\right)  ^{1/2}\left(
\frac{1}{T}\sum_{t^{\prime}=1}^{T}\sum_{t=1}^{T}\eta_{tt^{\prime}}^{2}\right)
^{1/2}\left(  \sup_{i}\frac{1}{T}\sum_{t=1}^{T}\varepsilon_{it}^{2}\left(
\lambda_{T}\right)  \right)  ^{1/2}\\
&  =O_{p}\left(  T^{-1/2}\delta_{nT}^{-1}\right)  .
\end{align*}
Now consider $\mathbf{b}_{1,2,iT}$. Using properties of $\eta_{tt^{\prime}}$
we have%
\[
\frac{1}{T^{2}}\sum_{t=1}^{T}\sum_{t^{\prime}=1}^{T}\mathbf{f}_{t^{\prime}%
}\eta_{tt^{\prime}}\varepsilon_{it}\left(  \lambda_{T}\right)  =\frac{1}%
{T^{2}}\sum_{t=1}^{T}\mathbf{f}_{t}\eta_{tt}\varepsilon_{it}\left(
\lambda_{T}\right)  +\frac{1}{T^{2}}\sum_{t=1}^{T}\sum_{t^{\prime}\neq
t^{\prime}}^{T}\mathbf{f}_{t^{\prime}}\eta_{tt^{\prime}}\varepsilon
_{it}\left(  \lambda_{T}\right)  =\bar{\sigma}_{n}^{2}\left(  \frac{1}{T^{2}%
}\sum_{t=1}^{T}\mathbf{f}_{t}\varepsilon_{it}\left(  \lambda_{T}\right)
\right)  ,
\]
and therefore%
\[
\left\Vert \mathbf{b}_{1,2,iT}\right\Vert =\left\Vert \frac{1}{T^{2}}%
\sum_{t=1}^{T}\sum_{t^{\prime}=1}^{T}\mathbf{f}_{t^{\prime}}\eta_{tt^{\prime}%
}\varepsilon_{it}\left(  \lambda_{T}\right)  \right\Vert \leq\frac{1}{T}%
\bar{\sigma}_{n}^{2}\left(  \left\Vert \frac{1}{T}\sum_{t=1}^{T}\mathbf{f}%
_{t}\varepsilon_{it}\left(  \lambda_{T}\right)  \right\Vert \right)  .
\]
Then using (\ref{sup_fu}) we have%
\[
\sup_{i}\left\Vert \mathbf{b}_{1,2,iT}\right\Vert \leq\frac{1}{T}\bar{\sigma
}_{n}^{2}\sup_{i}\left\Vert \frac{1}{T}\sum_{t=1}^{T}\mathbf{f}_{t}%
\varepsilon_{it}\left(  \lambda_{T}\right)  \right\Vert =O_{p}\left(  \frac
{1}{T}\sqrt{\frac{\ln\left(  n\right)  }{T}}\right)  .
\]
Hence, combining the probability orders of $\sup_{i}\left\Vert \mathbf{b}%
_{1,1,iT}\right\Vert $ and $\sup_{i}\left\Vert \mathbf{b}_{1,2,iT}\right\Vert
$, we have%
\begin{equation}
\sup_{i}\left\Vert \mathbf{b}_{1,nT}\right\Vert \leq\sup_{i}\left\Vert
\mathbf{b}_{1,1,iT}\right\Vert +\sup_{i}\left\Vert \mathbf{b}_{1,2,iT}%
\right\Vert =O_{p}\left(  \frac{1}{T^{1/2}\delta_{nT}}\right)  .
\label{sup:b1nT}%
\end{equation}
Next, consider $\mathbf{b}_{2,iT}$ in (\ref{|(hf-f)u|}), which can be written
as
\begin{align*}
\mathbf{b}_{2,iT}  &  =\frac{1}{T^{2}}\sum_{t=1}^{T}\sum_{t^{\prime}=1}%
^{T}\left(  \mathbf{\hat{f}}_{t^{\prime}}-\mathbf{f}_{t^{\prime}}\right)
\zeta_{tt^{\prime}}\varepsilon_{it}\left(  \lambda_{T}\right)  +\frac{1}%
{T^{2}}\sum_{t=1}^{T}\sum_{t^{\prime}=1}^{T}\mathbf{f}_{t^{\prime}}%
\zeta_{tt^{\prime}}\varepsilon_{it}\left(  \lambda_{T}\right)  =\mathbf{b}%
_{2,1,iT}+\mathbf{b}_{2,2,iT}.
\end{align*}
For the first term, we can apply Cauchy-Schwarz inequality to obtain
\begin{align*}
\left\Vert \mathbf{b}_{2,1,iT}\right\Vert  &  =\left\Vert \frac{1}{T^{2}}%
\sum_{t^{\prime}=1}^{T}\left(  \mathbf{\hat{f}}_{t^{\prime}}-\mathbf{f}%
_{t^{\prime}}\right)  \left(  \sum_{t=1}^{T}\zeta_{tt^{\prime}}\varepsilon
_{it}\left(  \lambda_{T}\right)  \right)  \right\Vert \\
&  \leq\left(  \frac{1}{T}\sum_{t^{\prime}=1}^{T}\left\Vert \mathbf{\hat{f}%
}_{t^{\prime}}-\mathbf{f}_{t^{\prime}}\right\Vert ^{2}\right)  ^{1/2}\left[
\frac{1}{T^{3}}\sum_{t^{\prime}=1}^{T}\left(  \sum_{t=1}^{T}\zeta_{tt^{\prime
}}\varepsilon_{it}\left(  \lambda_{T}\right)  \right)  ^{2}\right]  ^{1/2}\\
&  \leq\left(  \frac{1}{T}\sum_{t^{\prime}=1}^{T}\left\Vert \mathbf{\hat{f}%
}_{t^{\prime}}-\mathbf{f}_{t^{\prime}}\right\Vert ^{2}\right)  ^{1/2}\left(
\frac{1}{T^{2}}\sum_{t^{\prime}=1}^{T}\sum_{t=1}^{T}\zeta_{tt^{\prime}}%
^{2}\right)  ^{1/2}\left(  \frac{1}{T}\sum_{t=1}^{T}\varepsilon_{it}%
^{2}\left(  \lambda_{T}\right)  \right)  ^{1/2}.
\end{align*}
Since
\[
\zeta_{tt^{\prime}}=n^{-1}\sum_{i=1}^{n}\sigma_{i}^{2}\varepsilon_{it}%
^{2}\left(  \lambda_{T}\right)  -\eta_{tt^{\prime}}=\frac{1}{n}\sum_{j=1}%
^{n}\sigma_{j}^{2}\left[  \varepsilon_{jt}\left(  \lambda_{T}\right)
\varepsilon_{jt^{\prime}}\left(  \lambda_{T}\right)  -E\left(  \varepsilon
_{jt}\left(  \lambda_{T}\right)  \varepsilon_{jt^{\prime}}\left(  \lambda
_{T}\right)  \right)  \right]  ,
\]
then it follows that
\begin{align}
E\left(  \frac{1}{T^{2}}\sum_{t^{\prime}=1}^{T}\sum_{t=1}^{T}\zeta
_{tt^{\prime}}^{2}\right)   &  =\frac{1}{T^{2}}\sum_{t^{\prime}=1}^{T}%
\sum_{t=1}^{T}E\left(  \zeta_{tt^{\prime}}^{2}\right) \nonumber\\
&  =\frac{1}{T^{2}n^{2}}\sum_{t^{\prime}=1}^{T}\sum_{t=1}^{T}E\left(
\sum_{j=1}^{n}\sigma_{j}^{2}\left[  \varepsilon_{jt}\left(  \lambda
_{T}\right)  \varepsilon_{jt^{\prime}}\left(  \lambda_{T}\right)  -E\left(
\varepsilon_{jt}\left(  \lambda_{T}\right)  \varepsilon_{jt^{\prime}}\left(
\lambda_{T}\right)  \right)  \right]  \right)  ^{2}\nonumber\\
&  =\frac{1}{T^{2}n^{2}}\sum_{t=1}^{T}\sum_{t^{\prime}=1}^{T}\sum_{j=1}%
^{n}\sum_{j^{\prime}=1}^{n}\sigma_{j}^{2}\sigma_{j^{\prime}}^{2}E\left(
\varepsilon_{jt}\left(  \lambda_{T}\right)  \varepsilon_{jt^{\prime}}\left(
\lambda_{T}\right)  \varepsilon_{j^{\prime}t}\left(  \lambda_{T}\right)
\varepsilon_{j^{\prime}t^{\prime}}\left(  \lambda_{T}\right)  \right)
\nonumber\\
&  -\frac{1}{T^{2}n^{2}}\sum_{t=1}^{T}\sum_{t^{\prime}=1}^{T}\sum_{j=1}%
^{n}\sum_{j^{\prime}=1}^{n}\sigma_{j}^{2}\sigma_{j^{\prime}}^{2}E\left(
\varepsilon_{jt}\left(  \lambda_{T}\right)  \varepsilon_{jt^{\prime}}\left(
\lambda_{T}\right)  \right)  E\left(  \varepsilon_{j^{\prime}t}\left(
\lambda_{T}\right)  \varepsilon_{j^{\prime}t^{\prime}}\left(  \lambda
_{T}\right)  \right)  . \label{E:zeta2}%
\end{align}
For the first term of (\ref{E:zeta2}), given the serial independence of
$\varepsilon_{it}\left(  \lambda_{T}\right)  $ and note $E\left(
\varepsilon_{it}\left(  \lambda_{T}\right)  \right)  =E\left(  \varepsilon
_{it}+\lambda_{T}\mathbf{w}_{i0}^{\prime}\boldsymbol{\varepsilon}_{\circ
t}\right)  =0$, some algebra yields
\begin{align}
&  \frac{1}{T^{2}n^{2}}\sum_{t=1}^{T}\sum_{t^{\prime}=1}^{T}\sum_{j=1}^{n}%
\sum_{j^{\prime}=1}^{n}\sigma_{j}^{2}\sigma_{j^{\prime}}^{2}E\left(
\varepsilon_{jt}\left(  \lambda_{T}\right)  \varepsilon_{jt^{\prime}}\left(
\lambda_{T}\right)  \varepsilon_{j^{\prime}t}\left(  \lambda_{T}\right)
\varepsilon_{j^{\prime}t^{\prime}}\left(  \lambda_{T}\right)  \right)
\nonumber\\
&  =\frac{1}{T^{2}n^{2}}\sum_{t=1}^{T}\sum_{j=1}^{n}\sigma_{j}^{4}E\left(
\varepsilon_{jt}^{4}\left(  \lambda_{T}\right)  \right)  +\frac{1}{T^{2}n^{2}%
}\sum_{t=1}^{T}\sum_{t^{\prime}\neq t}^{T}\sum_{j=1}^{n}\sigma_{j}^{4}E\left(
\varepsilon_{jt}^{2}\left(  \lambda_{T}\right)  \right)  E\left(
\varepsilon_{jt^{\prime}}^{2}\left(  \lambda_{T}\right)  \right)  +\nonumber\\
&  \frac{1}{T^{2}n^{2}}\sum_{t=1}^{T}\sum_{j=1}^{n}\sum_{j^{\prime}\neq j}%
^{n}\sigma_{j}^{2}\sigma_{j^{\prime}}^{2}E\left(  \varepsilon_{jt}^{2}\left(
\lambda_{T}\right)  \varepsilon_{j^{\prime}t}^{2}\left(  \lambda_{T}\right)
\right)  +\nonumber\\
&  \frac{1}{T^{2}n^{2}}\sum_{t=1}^{T}\sum_{t^{\prime}\neq t}^{T}\sum_{j=1}%
^{n}\sum_{j^{\prime}\neq j}^{n}\sigma_{j}^{2}\sigma_{j^{\prime}}^{2}E\left(
\varepsilon_{jt}\left(  \lambda_{T}\right)  \varepsilon_{jt^{\prime}}\left(
\lambda_{T}\right)  \varepsilon_{j^{\prime}t}\left(  \lambda_{T}\right)
\varepsilon_{j^{\prime}t^{\prime}}\left(  \lambda_{T}\right)  \right)  .
\label{E:ujt}%
\end{align}
To show the order of (\ref{E:ujt}), we note $\varepsilon_{it}=\mathbf{e}%
_{i}^{\prime}\boldsymbol{\varepsilon}_{\circ t}$ where $\mathbf{e}_{i}$ is an
$n\times1$ selection vector with $1$ on its $i^{th}$ element and zero
elsewhere, then%
\begin{align*}
\varepsilon_{jt}^{4}\left(  \lambda_{T}\right)   &  =\left(  \varepsilon
_{it}+\lambda_{T}\mathbf{w}_{i0}^{\prime}\boldsymbol{\varepsilon}_{\circ
t}\right)  ^{4}=\left[  \left(  \mathbf{e}_{i}+\lambda_{T}\mathbf{w}%
_{i0}\right)  ^{\prime}\boldsymbol{\varepsilon}_{\circ t}\right]  ^{4}\\
&  =\left[  \boldsymbol{\varepsilon}_{\circ t}^{\prime}\left(  \mathbf{e}%
_{i}+\lambda_{T}\mathbf{w}_{i0}\right)  \left(  \mathbf{e}_{i}+\lambda
_{T}\mathbf{w}_{i0}\right)  ^{\prime}\boldsymbol{\varepsilon}_{\circ
t}\right]  ^{2}=\left(  \boldsymbol{\varepsilon}_{\circ t}^{\prime}%
\mathbf{A}_{i}\boldsymbol{\varepsilon}_{\circ t}\right)  ^{2},
\end{align*}
where $\mathbf{A}_{i}=\left(  \mathbf{e}_{i}+\lambda_{T}\mathbf{w}%
_{i0}\right)  \left(  \mathbf{e}_{i}+\lambda_{T}\mathbf{w}_{i0}\right)
^{\prime}.$ Using result (S.7) of Lemma 6 in Pesaran and Yamagata (2024) and
noting $w_{ii}=0$ for all $i$, we have
\begin{align}
E\left(  \varepsilon_{jt}^{4}\left(  \lambda_{T}\right)  \right)   &
=E\left(  \boldsymbol{\varepsilon}_{\circ t}^{\prime}\mathbf{A}_{i}%
\boldsymbol{\varepsilon}_{\circ t}\right)  ^{2} =\kappa_{2}\mathrm{tr}\left[
\left(  \mathbf{A}_{i}\odot\mathbf{A}_{i}\right)  \right]  +\left[
\mathrm{tr}\left(  \mathbf{A}_{i}\right)  \right]  ^{2}+2\mathrm{tr}\left(
\mathbf{A}_{i}^{2}\right)  ,\nonumber
\end{align}
where $\kappa_{2}=E(\varepsilon_{it}^{4})-3$. Also, by condition
(\ref{sum:wi}) we have
\[
\sum_{j=1}^{n}w_{ij}^{2}\leq\left(  \sum_{j=1}^{n}\left\vert w_{ij}\right\vert
\right)  ^{2}<C,\sum_{j=1}^{n}w_{ij}^{4}\leq\left(  \sum_{j=1}^{n}\left\vert
w_{ij}\right\vert \right)  ^{4}<C,
\]
using which yields
\begin{equation}
E\left(  \varepsilon_{jt}^{4}\left(  \lambda_{T}\right)  \right)  =\kappa
_{2}\left(  1+\lambda_{T}^{4}\sum_{j=1}^{n}w_{ij}^{4}\right)  +3\left(
1+\lambda_{T}^{2}\sum_{j=1}^{n}w_{ij}^{2}\right)  =O\left(  1\right)  .
\label{E:bit4}%
\end{equation}
It therefore follows that $E\left(  \varepsilon_{jt}^{2}\left(  \lambda
_{T}\right)  \right)  $ and $E\left(  \varepsilon_{jt}^{4}\left(  \lambda
_{T}\right)  \right)  $ are bounded so that the first two terms of
(\ref{E:ujt}) satisfy
\begin{align*}
\frac{1}{T^{2}n^{2}}\sum_{t=1}^{T}\sum_{j=1}^{n}\sigma_{j}^{4}E\left(
\varepsilon_{jt}^{4}\left(  \lambda_{T}\right)  \right)   &  =O\left(
\frac{1}{nT}\right)  ,\\
\frac{1}{T^{2}n^{2}}\sum_{t=1}^{T}\sum_{t^{\prime}\neq t}^{T}\sum_{j=1}%
^{n}\sigma_{j}^{4}E\left(  \varepsilon_{jt}^{2}\left(  \lambda_{T}\right)
\right)  E\left(  \varepsilon_{jt^{\prime}}^{2}\left(  \lambda_{T}\right)
\right)   &  =O\left(  \frac{1}{n}\right)  .
\end{align*}
In addition, the third term of (\ref{E:ujt}) can be bounded using
Cauchy-Schwarz inequality,
\begin{align*}
&  \frac{1}{T^{2}n^{2}}\sum_{t=1}^{T}\sum_{j=1}^{n}\sum_{j^{\prime}\neq j}%
^{n}\sigma_{j}^{2}\sigma_{j^{\prime}}^{2}E\left(  \varepsilon_{jt}^{2}\left(
\lambda_{T}\right)  \varepsilon_{j^{\prime}t}^{2}\left(  \lambda_{T}\right)
\right) \\
&  =\frac{1}{T^{2}n^{2}}\sum_{t=1}^{T}\sum_{j=1}^{n}\sum_{j^{\prime}\neq
j}^{n}\sigma_{j}^{2}\sigma_{j^{\prime}}^{2}\left[  E\left(  \varepsilon
_{jt}^{4}\left(  \lambda_{T}\right)  \right)  \right]  ^{1/2}\times\left[
E\left(  \varepsilon_{j^{\prime}t}^{4}\left(  \lambda_{T}\right)  \right)
\right]  ^{1/2}=O\left(  \frac{1}{T}\right)  .
\end{align*}
The fourth term of (\ref{E:ujt}) can be expanded based on the serial
independence of $\varepsilon_{it}\left(  \lambda_{T}\right)  $, so that%
\begin{align*}
&  \frac{1}{T^{2}n^{2}}\sum_{t=1}^{T}\sum_{t^{\prime}\neq t}^{T}\sum_{j=1}%
^{n}\sum_{j^{\prime}\neq j}^{n}\sigma_{j}^{2}\sigma_{j^{\prime}}^{2}E\left(
\varepsilon_{jt}\left(  \lambda_{T}\right)  \varepsilon_{jt^{\prime}}\left(
\lambda_{T}\right)  \varepsilon_{j^{\prime}t}\left(  \lambda_{T}\right)
\varepsilon_{j^{\prime}t^{\prime}}\left(  \lambda_{T}\right)  \right) \\
&  =\frac{1}{T^{2}n^{2}}\sum_{j=1}^{n}\sum_{j^{\prime}\neq j}^{n}\sigma
_{j}^{2}\sigma_{j^{\prime}}^{2}\left[  \sum_{t=1}^{T}E\left(  \varepsilon
_{jt}\left(  \lambda_{T}\right)  \varepsilon_{j^{\prime}t}\left(  \lambda
_{T}\right)  \right)  \right]  \left[  \sum_{t^{\prime}\neq t}^{T}E\left(
\varepsilon_{jt^{\prime}}\left(  \lambda_{T}\right)  \varepsilon_{j^{\prime
}t^{\prime}}\left(  \lambda_{T}\right)  \right)  \right]  .
\end{align*}
Also note by definition of $\varepsilon_{jt}\left(  \lambda_{T}\right)  $, for
$j\neq j^{\prime}$,%
\[
E\left(  \varepsilon_{jt}\left(  \lambda_{T}\right)  \varepsilon_{j^{\prime}%
t}\left(  \lambda_{T}\right)  \right)  =\lambda_{T}g_{jj^{\prime}}+\lambda
_{T}^{2}\mathbf{w}_{j0}^{\prime}\mathbf{w}_{j^{\prime}0},
\]
where $g_{jj^{\prime}}=w_{jj^{\prime}}+w_{j^{\prime}j}$ and $\mathbf{w}%
_{j0}=\left(  w_{j1},w_{j2},\ldots,w_{jn}\right)  ^{\prime}$, so that%
\[
\left[  \sum_{t=1}^{T}E\left(  \varepsilon_{jt}\left(  \lambda_{T}\right)
\varepsilon_{j^{\prime}t}\left(  \lambda_{T}\right)  \right)  \right]  \left[
\sum_{t^{\prime}\neq t}^{T}E\left(  \varepsilon_{jt^{\prime}}\left(
\lambda_{T}\right)  \varepsilon_{j^{\prime}t^{\prime}}\left(  \lambda
_{T}\right)  \right)  \right]  =T\left(  T-1\right)  \left(  \lambda
_{T}g_{jj^{\prime}}+\lambda_{T}^{2}\mathbf{w}_{j0}^{\prime}\mathbf{w}%
_{j^{\prime}0}\right)  ^{2}.
\]
Then it follows
\begin{align*}
&  \frac{1}{T^{2}n^{2}}\sum_{t=1}^{T}\sum_{t^{\prime}\neq t}^{T}\sum_{j=1}%
^{n}\sum_{j^{\prime}\neq j}^{n}\sigma_{j}^{2}\sigma_{j^{\prime}}^{2}E\left(
\varepsilon_{jt}\left(  \lambda_{T}\right)  \varepsilon_{jt^{\prime}}\left(
\lambda_{T}\right)  \varepsilon_{j^{\prime}t}\left(  \lambda_{T}\right)
\varepsilon_{j^{\prime}t^{\prime}}\left(  \lambda_{T}\right)  \right) \\
&  =\frac{T\left(  T-1\right)  }{T^{2}n^{2}}\sum_{j=1}^{n}\sum_{j^{\prime}\neq
j}^{n}\sigma_{j}^{2}\sigma_{j^{\prime}}^{2}\left(  \lambda_{T}g_{jj^{\prime}%
}+\lambda_{T}^{2}\mathbf{w}_{j0}^{\prime}\mathbf{w}_{j^{\prime}0}\right)
^{2}\\
&  \leq\frac{2T\left(  T-1\right)  }{T^{2}n^{2}}\sum_{j=1}^{n}\sum_{j^{\prime
}\neq j}^{n}\sigma_{j}^{2}\sigma_{j^{\prime}}^{2}\left(  \lambda
_{T}g_{jj^{\prime}}\right)  ^{2}+\frac{2T\left(  T-1\right)  }{T^{2}n^{2}}%
\sum_{j=1}^{n}\sum_{j^{\prime}\neq j}^{n}\sigma_{j}^{2}\sigma_{j^{\prime}}%
^{2}\left(  \lambda_{T}^{2}\mathbf{w}_{j0}^{\prime}\mathbf{w}_{j^{\prime}%
0}\right)  ^{2}.
\end{align*}
Since $\sum_{j^{\prime}=1}^{n}\left\vert w_{jj^{\prime}}\right\vert ^{2}%
\leq\left(  \sum_{j^{\prime}=1}^{n}\left\vert w_{jj^{\prime}}\right\vert
\right)  ^{2}\leq\left\Vert \mathbf{w}_{j0}\right\Vert ^{2}<C$ \ by
(\ref{sum:wi}), $\lambda_{T}=c_{\lambda}T^{-1/2}$ and $\left\vert
\mathbf{w}_{j0}^{\prime}\mathbf{w}_{j^{\prime}0}\right\vert \leq\left\Vert
\mathbf{w}_{j0}\right\Vert \left\Vert \mathbf{w}_{j^{\prime}0}\right\Vert <C$,
we further have%
\begin{align*}
\frac{2T\left(  T-1\right)  }{T^{2}n^{2}}\sum_{j=1}^{n}\sum_{j^{\prime}\neq
j}^{n}\sigma_{j}^{2}\sigma_{j^{\prime}}^{2}\left(  \lambda_{T}g_{jj^{\prime}%
}\right)  ^{2}  &  =\frac{2T\left(  T-1\right)  \lambda_{T}^{2}}{T^{2}n^{2}%
}\sum_{j=1}^{n}\sum_{j^{\prime}\neq j}^{n}\sigma_{j}^{2}\sigma_{j^{\prime}%
}^{2}\left(  w_{jj^{\prime}}+w_{j^{\prime}j}\right)  ^{2}\\
&  \leq\frac{4T\left(  T-1\right)  \lambda_{T}^{2}}{T^{2}n^{2}}\sum_{j=1}%
^{n}\sum_{j^{\prime}\neq j}^{n}\sigma_{j}^{2}\sigma_{j^{\prime}}^{2}\left(
\left\vert w_{jj^{\prime}}\right\vert ^{2}+\left\vert w_{j^{\prime}%
j}\right\vert ^{2}\right) \\
&  \leq\frac{4T\left(  T-1\right)  \lambda_{T}^{2}}{T^{2}n^{2}}\left(
\sup_{i}\sigma_{i}^{4}\right)  \sum_{j=1}^{n}\sum_{j^{\prime}\neq j}%
^{n}\left(  \left\vert w_{jj^{\prime}}\right\vert ^{2}+\left\vert
w_{j^{\prime}j}\right\vert ^{2}\right) \\
&  =O\left(  \frac{1}{nT}\right)  ,
\end{align*}
and%
\begin{align*}
\frac{2T\left(  T-1\right)  }{T^{2}n^{2}}\sum_{j=1}^{n}\sum_{j^{\prime}\neq
j}^{n}\sigma_{j}^{2}\sigma_{j^{\prime}}^{2}\left(  \lambda_{T}^{2}%
\mathbf{w}_{j0}^{\prime}\mathbf{w}_{j^{\prime}0}\right)  ^{2}  &
=\frac{2T\left(  T-1\right)  \lambda_{T}^{4}}{T^{2}n^{2}}\sum_{j=1}^{n}%
\sum_{j^{\prime}\neq j}^{n}\sigma_{j}^{2}\sigma_{j^{\prime}}^{2}\left(
\mathbf{w}_{j0}^{\prime}\mathbf{w}_{j^{\prime}0}\right)  ^{2}\\
&  \leq\frac{2T\left(  T-1\right)  \lambda_{T}^{4}}{T^{2}}\left(  \sup
_{i}\sigma_{i}^{4}\right)  \left(  \sup_{j,j^{\prime}}\left\vert
\mathbf{w}_{j0}^{\prime}\mathbf{w}_{j^{\prime}0}\right\vert \right)  ^{2}\\
&  =O_{p}\left(  \frac{1}{T^{2}}\right)  .
\end{align*}
Overall, using (\ref{E:ujt}) we are able to show the first term of
(\ref{E:zeta2}) is $O\left(  T^{-1}\right)  $ as $n,T\rightarrow\infty$ such
that $n/T=\kappa$ where $0<\kappa<\infty$. Compared to that, the second term
of (\ref{E:zeta2}) satisfies%
\begin{align*}
&  \frac{1}{T^{2}n^{2}}\sum_{t=1}^{T}\sum_{t^{\prime}=1}^{T}\sum_{j=1}^{n}%
\sum_{j^{\prime}=1}^{n}\sigma_{j}^{2}\sigma_{j^{\prime}}^{2}E\left(
\varepsilon_{jt}\left(  \lambda_{T}\right)  \varepsilon_{jt^{\prime}}\left(
\lambda_{T}\right)  \right)  E\left(  \varepsilon_{j^{\prime}t}\left(
\lambda_{T}\right)  \varepsilon_{j^{\prime}t^{\prime}}\left(  \lambda
_{T}\right)  \right) \\
&  =\frac{1}{T^{2}n^{2}}\sum_{t=1}^{T}\sum_{j=1}^{n}\sum_{j^{\prime}=1}%
^{n}\sigma_{j}^{2}\sigma_{j^{\prime}}^{2}E\left(  \varepsilon_{jt}^{2}\left(
\lambda_{T}\right)  \right)  E\left(  \varepsilon_{j^{\prime}t}^{2}\left(
\lambda_{T}\right)  \right)  =O\left(  \frac{1}{T}\right)  .
\end{align*}
As we have shown the orders of the two terms in (\ref{E:zeta2}), it follows
that $E\left(  \frac{1}{T^{2}}\sum_{t^{\prime}=1}^{T}\sum_{t=1}^{T}%
\zeta_{tt^{\prime}}^{2}\right)  =O\left(  \frac{1}{T}\right)  $, so by Markov
inequality $T^{-2}\sum_{t^{\prime}=1}^{T}\sum_{t=1}^{T}\zeta_{tt^{\prime}}%
^{2}=O_{p}\left(  \frac{1}{T}\right)  $. Using this result, (\ref{lm2.1}), and
(\ref{sup_u2}), it follows that
\begin{align*}
\sup_{i}\left\Vert \mathbf{b}_{2,1,iT}\right\Vert  &  \leq\left(  \frac{1}%
{T}\sum_{t^{\prime}=1}^{T}\left\Vert \mathbf{\hat{f}}_{t^{\prime}}%
-\mathbf{f}_{t^{\prime}}\right\Vert ^{2}\right)  ^{1/2}\left(  \frac{1}{T^{2}%
}\sum_{t^{\prime}=1}^{T}\sum_{t=1}^{T}\zeta_{tt^{\prime}}^{2}\right)
^{1/2}\left(  \sup_{i}\frac{1}{T}\sum_{t=1}^{T}\varepsilon_{it}^{2}\left(
\lambda_{T}\right)  \right)  ^{1/2}\\
&  =O_{p}\left(  \frac{1}{\delta_{nT}}\right)  \times O_{p}\left(  \frac
{1}{\sqrt{n}}\right)  =O_{p}\left(  \frac{1}{\sqrt{n}\delta_{nT}}\right)  .
\end{align*}
Note that $\mathbf{b}_{2,2,iT}$ can be written as
\[
\mathbf{b}_{2,2,iT}=\frac{1}{\sqrt{nT}}\frac{1}{T}\sum_{t=1}^{T}\mathbf{z}%
_{t}\varepsilon_{it}\left(  \lambda_{T}\right)  ,
\]
where
\[
\mathbf{z}_{t}=\frac{1}{\sqrt{nT}}\sum_{t^{\prime}=1}^{T}\sum_{k=1}^{n}%
\sigma_{k}\mathbf{f}_{t^{\prime}}\left[  \varepsilon_{kt^{\prime}}\left(
\lambda_{T}\right)  \varepsilon_{kt}\left(  \lambda_{T}\right)  -E\left(
\varepsilon_{kt^{\prime}}\left(  \lambda_{T}\right)  \varepsilon_{kt}\left(
\lambda_{T}\right)  \right)  \right]  ,
\]
and $\left\Vert \mathbf{z}_{t}\right\Vert ^{2}=O_{p}\left(  1\right)  $. Then
by Cauchy-Schwarz inequality,%
\[
\left\Vert \mathbf{b}_{2,2,iT}\right\Vert =\frac{1}{\sqrt{nT}}\left\Vert
\frac{1}{T}\sum_{t=1}^{T}\mathbf{z}_{t}\varepsilon_{it}\left(  \lambda
_{T}\right)  \right\Vert \leq\frac{1}{\sqrt{nT}}\left(  \frac{1}{T}\sum
_{t=1}^{T}\left\Vert \mathbf{z}_{t}\right\Vert ^{2}\right)  ^{1/2}\left(
\frac{1}{T}\sum_{t=1}^{T}\varepsilon_{it}^{2}\left(  \lambda_{T}\right)
\right)  ^{1/2},
\]
and in view of (\ref{sup_u2}) we have
\[
\sup_{i}\left\Vert \mathbf{b}_{2,2,iT}\right\Vert \leq\frac{1}{\sqrt{nT}%
}\left(  \frac{1}{T}\sum_{t=1}^{T}\left\Vert \mathbf{z}_{t}\right\Vert
^{2}\right)  ^{1/2}\sup_{i}\left(  \frac{1}{T}\sum_{t=1}^{T}\varepsilon
_{it}^{2}\left(  \lambda_{T}\right)  \right)  ^{1/2}=O_{p}\left(  \frac
{1}{\sqrt{nT}}\right)  .
\]
Hence,%
\[
\sup_{i}\left\Vert \mathbf{b}_{2,iT}\right\Vert \leq\sup_{i}\left\Vert
\mathbf{b}_{2,1,iT}\right\Vert +\sup_{i}\left\Vert \mathbf{b}_{2,2,iT}%
\right\Vert =O_{p}\left(  \frac{1}{\sqrt{n}\delta_{nT}}\right)  +O_{p}\left(
\frac{1}{\sqrt{nT}}\right)  .
\]
Now consider $\mathbf{b}_{3,iT}$ in (\ref{|(hf-f)u|}) and note that%
\begin{align*}
\mathbf{b}_{3,iT}  &  =\frac{1}{T^{2}}\sum_{t=1}^{T}\sum_{t^{\prime}=1}%
^{T}\left(  \mathbf{\hat{f}}_{t^{\prime}}-\mathbf{f}_{t^{\prime}}\right)
\varkappa_{tt^{\prime}}\varepsilon_{it}\left(  \lambda_{T}\right)  +\frac
{1}{T^{2}}\sum_{t=1}^{T}\sum_{t^{\prime}=1}^{T}\mathbf{f}_{t^{\prime}%
}\varkappa_{tt^{\prime}}\varepsilon_{it}\left(  \lambda_{T}\right)
=\mathbf{b}_{3,1,iT}+\mathbf{b}_{3,2,iT}.
\end{align*}
To bound the first term, note that
\begin{align*}
\left\Vert \mathbf{b}_{3,1,iT}\right\Vert  &  =\left\Vert \frac{1}{T^{2}}%
\sum_{t^{\prime}=1}^{T}\left(  \mathbf{\hat{f}}_{t^{\prime}}-\mathbf{f}%
_{t^{\prime}}\right)  \left(  \sum_{t=1}^{T}\varkappa_{tt^{\prime}}%
\varepsilon_{it}\left(  \lambda_{T}\right)  \right)  \right\Vert \\
&  \leq\left(  \frac{1}{T}\sum_{t^{\prime}=1}^{T}\left\Vert \mathbf{\hat{f}%
}_{t^{\prime}}-\mathbf{f}_{t^{\prime}}\right\Vert ^{2}\right)  ^{1/2}\left[
\frac{1}{T^{3}}\sum_{t^{\prime}=1}^{T}\left(  \sum_{t=1}^{T}\varkappa
_{tt^{\prime}}\varepsilon_{it}\left(  \lambda_{T}\right)  \right)
^{2}\right]  ^{1/2}\\
&  \leq\left(  \frac{1}{T}\sum_{t^{\prime}=1}^{T}\left\Vert \mathbf{\hat{f}%
}_{t^{\prime}}-\mathbf{f}_{t^{\prime}}\right\Vert ^{2}\right)  ^{1/2}\left(
\frac{1}{T^{2}}\sum_{t^{\prime}=1}^{T}\sum_{t=1}^{T}\varkappa_{tt^{\prime}%
}^{2}\right)  ^{1/2}\left(  \frac{1}{T}\sum_{t=1}^{T}\varepsilon_{it}%
^{2}\left(  \lambda_{T}\right)  \right)  ^{1/2}.
\end{align*}
By definition of $\varkappa_{tt^{\prime}}$ and given (\ref{ave:rho}), it
follows that
\begin{align*}
E\left(  \frac{1}{T^{2}}\sum_{t=1}^{T}\sum_{t^{\prime}=1}^{T}\varkappa
_{tt^{\prime}}^{2}\right)   &  =\frac{1}{T^{2}}\sum_{t^{\prime}=1}^{T}%
\sum_{t=1}^{T}E\left(  \frac{1}{n}\sum_{j=1}^{n}\sigma_{j}\mathbf{f}%
_{t^{\prime}}^{\prime}\boldsymbol{\gamma}_{j}\varepsilon_{jt}\left(
\lambda_{T}\right)  \right)  ^{2}\\
&  =\frac{1}{T^{2}n^{2}}\sum_{t^{\prime}=1}^{T}\sum_{t=1}^{T}\sum_{j=1}%
^{n}\sum_{j^{\prime}=1}^{n}\sigma_{j}\sigma_{j^{\prime}}E\left(
\mathbf{f}_{t^{\prime}}^{\prime}\boldsymbol{\gamma}_{j}\mathbf{f}_{t^{\prime}%
}\boldsymbol{\gamma}_{j^{\prime}}\right)  E\left(  \varepsilon_{jt}\left(
\lambda_{T}\right)  \varepsilon_{j^{\prime}t}\left(  \lambda_{T}\right)
\right) \\
&  \leq\left(  \sup_{i}\left\Vert \boldsymbol{\gamma}_{i}\right\Vert
^{2}\right)  \left(  \sup_{t}E\left\Vert \mathbf{f}_{t}\right\Vert
^{2}\right)  \frac{1}{T^{2}n^{2}}\sum_{t^{\prime}=1}^{T}\sum_{t=1}^{T}%
\sum_{j=1}^{n}\sum_{j^{\prime}=1}^{n}\sigma_{j}\sigma_{j^{\prime}}E\left(
\varepsilon_{jt}\left(  \lambda_{T}\right)  \varepsilon_{j^{\prime}t}\left(
\lambda_{T}\right)  \right) \\
&  \leq\left(  \sup_{i}\left\Vert \boldsymbol{\gamma}_{i}\right\Vert
^{2}\right)  \left(  \sup_{t}E\left\Vert \mathbf{f}_{t}\right\Vert
^{2}\right)  \left(  \sup_{i}\sigma_{i}^{2}\right)  \frac{1}{nT}\sum_{t=1}%
^{T}\left(  \frac{1}{n}\sum_{j=1}^{n}\sum_{j^{\prime}=1}^{n}\left\vert
E\left(  \varepsilon_{jt}\left(  \lambda_{T}\right)  \varepsilon_{j^{\prime}%
t}\left(  \lambda_{T}\right)  \right)  \right\vert \right) \\
&  =O\left(  \frac{1}{n}\right)  ,
\end{align*}
using which and Markov inequality yields $T^{-2}\sum_{t^{\prime}=1}^{T}%
\sum_{t=1}^{T}\varkappa_{tt^{\prime}}^{2}=O_{p}\left(  n^{-1}\right)  $. Then
given this result and (\ref{lm2.1}), (\ref{sup_u2}), it follows%
\begin{align*}
\sup_{i}\left\Vert \mathbf{b}_{3,1,iT}\right\Vert  &  =\left(  \frac{1}{T}%
\sum_{t^{\prime}=1}^{T}\left\Vert \mathbf{\hat{f}}_{t^{\prime}}-\mathbf{f}%
_{t^{\prime}}\right\Vert ^{2}\right)  ^{1/2}\left(  \frac{1}{T^{2}}%
\sum_{t^{\prime}=1}^{T}\sum_{t=1}^{T}\varkappa_{tt^{\prime}}^{2}\right)
^{1/2}\left(  \sup_{i}\frac{1}{T}\sum_{t=1}^{T}\varepsilon_{it}^{2}\left(
\lambda_{T}\right)  \right)  ^{1/2}\\
&  =O_{p}\left(  \frac{1}{\delta_{nT}}\right)  \times O_{p}\left(  \frac
{1}{\sqrt{n}}\right)  =O_{p}\left(  \frac{1}{\sqrt{n}\delta_{nT}}\right)  .
\end{align*}
Next, we consider $\left\Vert \mathbf{b}_{3,2,iT}\right\Vert $ and observe
that%
\[
\frac{1}{T^{2}}\sum_{t=1}^{T}\sum_{t^{\prime}=1}^{T}\mathbf{f}_{t^{\prime}%
}\varkappa_{tt^{\prime}}\varepsilon_{it}\left(  \lambda_{T}\right)  =\left(
\frac{1}{T}\sum_{t^{\prime}=1}^{T}\mathbf{f}_{t^{\prime}}\mathbf{f}%
_{t^{\prime}}^{\prime}\right)  \left(  \frac{1}{Tn}\sum_{t=1}^{T}\sum
_{j=1}^{n}\sigma_{j}\boldsymbol{\gamma}_{j}\varepsilon_{jt}\left(  \lambda
_{T}\right)  \varepsilon_{it}\left(  \lambda_{T}\right)  \right)  ,
\]
using which yields
\[
\left\Vert \mathbf{b}_{3,2,iT}\right\Vert =\left\Vert \frac{1}{T^{2}}%
\sum_{t=1}^{T}\sum_{t^{\prime}=1}^{T}\mathbf{f}_{t^{\prime}}\varkappa
_{tt^{\prime}}\varepsilon_{it}\left(  \lambda_{T}\right)  \right\Vert
\leq\left(  \frac{1}{T}\sum_{t^{\prime}=1}^{T}\left\Vert \mathbf{f}%
_{t^{\prime}}\mathbf{f}_{t^{\prime}}^{\prime}\right\Vert \right)  \left(
\left\Vert \frac{1}{nT}\sum_{t=1}^{T}\sum_{j=1}^{n}\sigma_{j}%
\boldsymbol{\gamma}_{j}\varepsilon_{jt}\left(  \lambda_{T}\right)
\varepsilon_{it}\left(  \lambda_{T}\right)  \right\Vert \right)  .
\]
Then given (\ref{sup_gujui}),
\[
\sup_{i}\left\Vert \mathbf{b}_{3,2,iT}\right\Vert \leq\left(  \frac{1}{T}%
\sum_{t^{\prime}=1}^{T}\left\Vert \mathbf{f}_{t^{\prime}}\mathbf{f}%
_{t^{\prime}}^{\prime}\right\Vert \right)  \left(  \sup_{i}\left\Vert \frac
{1}{nT}\sum_{t=1}^{T}\sum_{j=1}^{n}\sigma_{j}\boldsymbol{\gamma}%
_{j}\varepsilon_{jt}\left(  \lambda_{T}\right)  \varepsilon_{it}\left(
\lambda_{T}\right)  \right\Vert \right)  =O_{p}\left(  \sqrt{\frac{\ln\left(
n\right)  }{nT}}\right)  .
\]
Hence, $\sup_{i}\left\Vert \mathbf{b}_{3,iT}\right\Vert \leq\sup_{i}\left\Vert
\mathbf{b}_{3,1,iT}\right\Vert +\sup_{i}\left\Vert \mathbf{b}_{3,2,iT}%
\right\Vert =O_{p}\left(  \sqrt{\frac{\ln\left(  n\right)  }{nT}}\right)  . $
Similarly, the probability order of $\sup_{i}\left\Vert \mathbf{b}%
_{4,iT}\right\Vert $ can also be shown to be $O_{p}\left(  \sqrt{\ln\left(
n\right)  /\left(  nT\right)  }\right)  $. Overall, result (\ref{sup_baib1})
follows as we can use (\ref{|(hf-f)u|}) to show%
\begin{align*}
\sup_{i}\left\Vert \frac{\left(  \mathbf{\hat{F}-F}\right)  ^{\prime
}\boldsymbol{\varepsilon}_{i\circ}\left(  \lambda_{T}\right)  }{T}\right\Vert
&  \leq\sup_{i}\left(  \sum_{j=1}^{4}\left\Vert \mathbf{b}_{j,iT}\right\Vert
\right)  \leq\sum_{j=1}^{4}\sup_{i}\left\Vert \mathbf{b}_{j,iT}\right\Vert \\
&  =O_{p}\left(  \frac{1}{\sqrt{T}\delta_{nT}}\right)  +O_{p}\left(  \frac
{1}{\sqrt{n}\delta_{nT}}\right)  +O_{p}\left(  \frac{1}{\sqrt{nT}}\right)
+O_{p}\left(  \sqrt{\frac{\ln\left(  n\right)  }{nT}}\right) \\
&  =O_{p}\left(  \sqrt{\frac{\ln\left(  n\right)  }{nT}}\right)  .
\end{align*}

\end{proof}

\begin{lemma}
\label{lm:sup(ub)}Denote $\boldsymbol{\varepsilon}_{i\circ}=\left(
\varepsilon_{i1},\varepsilon_{i2},\ldots,\varepsilon_{iT}\right)  ^{\prime}$
and $\mathbf{b}_{i}=\left(  b_{i1},b_{i2},\ldots,b_{iT}\right)  ^{\prime}$
with $b_{it}=\mathbf{w}_{i0}^{\prime}\boldsymbol{\varepsilon}_{\circ t}$,
$\mathbf{w}_{i0}=\left(  w_{i1},w_{is},\ldots,w_{in}\right)  ^{\prime}$ and
$\boldsymbol{\varepsilon}_{\circ t}=\left(  \varepsilon_{1t},\varepsilon
_{2t},\ldots,\varepsilon_{nt}\right)  ^{\prime}$. Suppose that Assumptions
\ref{ass:errors} and \ref{ass:ws} hold. Then as $\left(  n,T\right)
\rightarrow\infty$, such that $n/T\rightarrow\kappa\,,$ for $0<\kappa<\infty
$,
\begin{align}
\sup_{i}\left\vert \frac{\boldsymbol{\varepsilon}_{i\circ}^{\prime}%
\mathbf{b}_{i}}{T}\right\vert  &  =O_{p}\left(  \sqrt{\frac{\ln\left(
n\right)  }{T}}\right)  ,\label{sup_i(ub)}\\
\sup_{i}\left\vert \frac{\mathbf{b}_{i}^{\prime}\mathbf{b}_{i}}{T}\right\vert
&  =O_{p}\left(  1\right)  . \label{sup_i(b2)}%
\end{align}

\end{lemma}

\begin{proof}
Consider (\ref{sup_i(ub)}) and denote $\mathcal{I}_{i,t-1}^{\left(
\varepsilon b\right)  }=\left\{  \varepsilon_{i\tau}b_{i\tau}:\tau
=t-1,t-2,\ldots\right\}  $ and $i=1,2,\ldots,n$. By assumption, $\varepsilon
_{it}$ is cross-sectionally independent, and is independent from $b_{it}$ as
$w_{ii}=0$ for all $i$. Then $E\left(  \varepsilon_{it}b_{it}|\mathcal{I}%
_{i,t-1}^{\left(  \varepsilon b\right)  }\right)  =$ $E\left(  \varepsilon
_{it}b_{it}\right)  =0$. Also $Var\left(  \varepsilon_{it}b_{it}\right)
=Var\left(  \varepsilon_{it}\right)  Var\left(  b_{it}\right)  =E\left(
b_{it}^{2}\right)  , $ which is bounded as by condition (\ref{sum:wi}),%
\begin{equation}
E\left(  b_{it}^{2}\right)  =E\left(  \mathbf{w}_{i0}^{\prime}%
\boldsymbol{\varepsilon}_{\circ t}\right)  ^{2}=\mathbf{w}_{i0}^{\prime
}E\left(  \boldsymbol{\varepsilon}_{\circ t}\boldsymbol{\varepsilon}_{\circ
t}^{\prime}\right)  \mathbf{w}_{i0}=\sum_{s=1}^{n}w_{is}^{2}<C. \label{E:bit2}%
\end{equation}
In addition, since $\varepsilon_{it}\left(  \lambda_{T}\right)  =$
$\varepsilon_{it}+\lambda_{T}b_{it}$ is sub-exponential for any $\left\vert
\lambda_{T}\right\vert <C$, then it follows that $\varepsilon_{it}$ and
$b_{it}$ are both sub-exponential, and hence $\varepsilon_{it}b_{it}$ is also
sub-exponential. Hence, result (\ref{sup_i(ub)}) can be established by
applying the method of proof used for result (\ref{sup_u2_v1}). Now consider
(\ref{sup_i(b2)}) and note
\[
\frac{\mathbf{b}_{i}^{\prime}\mathbf{b}_{i}}{T}=\frac{1}{T}\sum_{t=1}%
^{T}\left[  b_{it}^{2}-E\left(  b_{it}^{2}\right)  \right]  +\frac{1}{T}%
\sum_{t=1}^{T}E\left(  b_{it}^{2}\right)  ,
\]
which further implies%
\[
\sup_{i}\left\vert \frac{\mathbf{b}_{i}^{\prime}\mathbf{b}_{i}}{T}\right\vert
\leq\sup_{i}\left\vert \frac{1}{T}\sum_{t=1}^{T}\left[  b_{it}^{2}-E\left(
b_{it}^{2}\right)  \right]  \right\vert +\sup_{i}\left(  \frac{1}{T}\sum
_{t=1}^{T}E\left(  b_{it}^{2}\right)  \right)  .
\]
Denote $\mathcal{I}_{i,t-1}^{\left(  b\right)  }=\left\{  b_{i\tau}%
:\tau=t-1,t-2,\ldots\right\}  $ and $i=1,2,\ldots,n$, then $E\left(
b_{it}^{2}-E\left(  b_{it}^{2}\right)  |\mathcal{I}_{i,t-1}^{\left(  b\right)
}\right)  =$ $E\left(  b_{it}^{2}-E\left(  b_{it}^{2}\right)  \right)  =0$.
Besides, $Var\left(  b_{it}^{2}\right)  =E\left(  b_{it}^{4}\right)  -\left[
E\left(  b_{it}^{2}\right)  \right]  ^{2}$ is bounded as (\ref{E:bit4}) shows
$E\left(  b_{it}^{4}\right)  =E\left(  \mathbf{w}_{i0}^{\prime}%
\boldsymbol{\varepsilon}_{\circ t}\right)  ^{4}=O\left(  1\right)  $. Also
$b_{it}^{2}-E\left(  b_{it}^{2}\right)  $ is sub-exponential given that it is
already established that $b_{it}$ is sub-exponential. Therefore, it follows
that $\sup_{i}\left\vert \frac{1}{T}\sum_{t=1}^{T}\left[  b_{it}^{2}-E\left(
b_{it}^{2}\right)  \right]  \right\vert =O_{p}\left(  \sqrt{\frac{\ln\left(
n\right)  }{T}}\right)  $ by applying the method of proof used for result
(\ref{sup_u2}). Also by (\ref{E:bit2}), we have $\sup_{i}\left(  \frac{1}%
{T}\sum_{t=1}^{T}E\left(  b_{it}^{2}\right)  \right)  =O\left(  1\right)  $.
Result (\ref{sup_i(b2)}) now follows straightforwardly.
\end{proof}

\begin{lemma}
\label{lm4}Consider the latent factor model given by (\ref{mod2}) and
(\ref{uit:p}). Let $\hat{\sigma}_{i,T}=\left(  T^{-1}\mathbf{e}_{i}^{\prime
}\mathbf{e}_{i}\right)  ^{1/2}$, where $\mathbf{e}_{i}=\mathbf{M}_{\hat{F}%
}\mathbf{y}_{i}$, $\mathbf{M}_{\hat{F}}=\mathbf{I}_{T}-\mathbf{\hat{F}(\hat
{F}}^{\prime}\mathbf{\hat{F})}^{-1}\mathbf{\hat{F}}^{\prime}$, $\mathbf{y}%
_{i}=(y_{i1},y_{i2},...,y_{iT})^{\prime}$, and $\mathbf{\hat{F}}$ is given by
(\ref{hatg}). Also let $\omega_{i,T}=\left(  T^{-1}\sigma_{i}^{2}%
\boldsymbol{\varepsilon}_{i\circ}^{^{\prime}}\mathbf{M}_{F}%
\boldsymbol{\varepsilon}_{i\circ}\right)  ^{1/2},$ where $\mathbf{M}%
_{F}=\mathbf{I}_{T}-\mathbf{F(F}^{\prime}\mathbf{F)}^{-1}\mathbf{F}^{\prime}$.
Suppose that Assumptions \ref{ass:factors}-\ref{ass:ws} hold and $\left(
n,T\right)  \rightarrow\infty$, such that $n/T\rightarrow\kappa\,,$ for
$0<\kappa<\infty$. Then
\begin{align}
\sup_{i}\left\vert \hat{\sigma}_{i,T}^{2}-\omega_{i,T}^{2}\right\vert  &
=O_{p}\left(  \frac{\ln\left(  n\right)  }{T}\right)  ,
\label{sup:|sgm2-ome2|}\\
\sup_{i}\left\vert \hat{\sigma}_{i,T}-\omega_{i,T}\right\vert  &
=O_{p}\left(  \frac{\ln\left(  n\right)  }{T}\right)  ,\label{sup:|sgm-ome|}\\
\sup_{i}\left\vert \frac{1}{\hat{\sigma}_{i,T}}-\frac{1}{\omega_{i,T}%
}\right\vert  &  =O_{p}\left(  \frac{\ln\left(  n\right)  }{T}\right)  ,
\label{sup:|sgm(-1)-ome(-1)|}%
\end{align}
and%
\begin{align}
\frac{1}{n}\sum_{i=1}^{n}\left\vert \hat{\sigma}_{i,T}^{2}-\omega_{i,T}%
^{2}\right\vert  &  =O_{p}\left(  \frac{\ln\left(  n\right)  }{T}\right)
,\label{Ews1}\\
\frac{1}{n}\sum_{i=1}^{n}\left\vert \hat{\sigma}_{i,T}-\omega_{i,T}%
\right\vert  &  =O_{p}\left(  \frac{\ln\left(  n\right)  }{T}\right)
,\label{Ews}\\
\frac{1}{n}\sum_{i=1}^{n}\left\vert \frac{1}{\hat{\sigma}_{i,T}}-\frac
{1}{\omega_{i,T}}\right\vert  &  =O_{p}\left(  \frac{\ln\left(  n\right)  }%
{T}\right)  . \label{Ewin}%
\end{align}

\end{lemma}

\begin{proof}
Note that by (\ref{mod2}), $\mathbf{y}_{i}=\mathbf{F}\boldsymbol{\gamma}%
_{i}+\sigma_{i}\boldsymbol{\varepsilon}_{i\circ}\left(  \lambda_{T}\right)  $,
where $\boldsymbol{\varepsilon}_{i\circ}\left(  \lambda_{T}\right)  =\left(
\varepsilon_{i1}\left(  \lambda_{T}\right)  ,\varepsilon_{i2}\left(
\lambda_{T}\right)  ,\ldots,\varepsilon_{iT}\left(  \lambda_{T}\right)
\right)  ^{\prime}$, which in turn implies
\begin{align*}
\mathbf{e}_{i}  &  =\mathbf{M}_{\hat{F}}\mathbf{y}_{i}=\mathbf{M}_{\hat{F}%
}\left(  \mathbf{F}\boldsymbol{\gamma}_{i}+\sigma_{i}\boldsymbol{\varepsilon
}_{i\circ}\left(  \lambda_{T}\right)  \right) \\
&  =\sigma_{i}\mathbf{M}_{F}\boldsymbol{\varepsilon}_{i\circ}\left(
\lambda_{T}\right)  +\sigma_{i}\left(  \mathbf{M}_{\hat{F}}-\mathbf{M}%
_{F}\right)  \boldsymbol{\varepsilon}_{i\circ}\left(  \lambda_{T}\right)
+\mathbf{M}_{\hat{F}}\mathbf{F}\boldsymbol{\gamma}_{i}.
\end{align*}
Then $\hat{\sigma}_{i,T}^{2}$ can be decomposed as%
\begin{align}
\hat{\sigma}_{i,T}^{2}-\omega_{i,T}^{2}  &  =\frac{\boldsymbol{\gamma}%
_{i}^{^{\prime}}\mathbf{F}^{^{\prime}}\mathbf{M}_{\hat{F}}\mathbf{F}%
\boldsymbol{\gamma}_{i}}{T}+\frac{\sigma_{i}^{2}\boldsymbol{\varepsilon
}_{i\circ}\left(  \lambda_{T}\right)  ^{\prime}\left(  \mathbf{M}_{\hat{F}%
}-\mathbf{M}_{F}\right)  \left(  \mathbf{M}_{\hat{F}}-\mathbf{M}_{F}\right)
\boldsymbol{\varepsilon}_{i\circ}\left(  \lambda_{T}\right)  }{T}\nonumber\\
&  +\frac{2\sigma_{i}^{2}\boldsymbol{\varepsilon}_{i\circ}\left(  \lambda
_{T}\right)  ^{\prime}\mathbf{M}_{F}\left(  \mathbf{M}_{\hat{F}}%
-\mathbf{M}_{F}\right)  \boldsymbol{\varepsilon}_{i\circ}\left(  \lambda
_{T}\right)  }{T}+\frac{2\sigma_{i}^{2}\boldsymbol{\varepsilon}_{i\circ
}\left(  \lambda_{T}\right)  ^{\prime}\mathbf{M}_{F}\mathbf{M}_{\hat{F}%
}\mathbf{F}\boldsymbol{\gamma}_{i}}{T}\nonumber\\
&  +\frac{2\sigma_{i}\boldsymbol{\varepsilon}_{i\circ}\left(  \lambda
_{T}\right)  ^{\prime}\left(  \mathbf{M}_{\hat{F}}-\mathbf{M}_{F}\right)
\mathbf{M}_{\hat{F}}\mathbf{F}\boldsymbol{\gamma}_{i}}{T}+\frac{\sigma_{i}%
^{2}\left(  \boldsymbol{\varepsilon}_{i\circ}\left(  \lambda_{T}\right)
+\boldsymbol{\varepsilon}_{i\circ}\right)  ^{\prime}\mathbf{M}_{F}\left(
\boldsymbol{\varepsilon}_{i\circ}\left(  \lambda_{T}\right)
-\boldsymbol{\varepsilon}_{i\circ}\right)  }{T}\nonumber\\
&  =\sum_{j=1}^{6}B_{j,iT}, \label{sum B}%
\end{align}
and $\left\vert \hat{\sigma}_{i,T}^{2}-\omega_{i,T}^{2}\right\vert \leq
\sum_{j=1}^{6}\left\vert B_{j,iT}\right\vert $. Starting with $B_{1,iT}$, note
that%
\[
\left\vert B_{1,iT}\right\vert =\left\vert \frac{\boldsymbol{\gamma}%
_{i}^{^{\prime}}\mathbf{F}^{^{\prime}}\mathbf{M}_{\hat{F}}\mathbf{F}%
\boldsymbol{\gamma}_{i}}{T}\right\vert =\left\vert \frac{\boldsymbol{\gamma
}_{i}^{^{\prime}}\left(  \mathbf{F-\hat{F}}\right)  ^{^{\prime}}%
\mathbf{M}_{\hat{F}}\left(  \mathbf{F-\hat{F}}\right)  \boldsymbol{\gamma}%
_{i}}{T}\right\vert \leq\left\Vert \boldsymbol{\gamma}_{i}\right\Vert
^{2}\left\Vert \mathbf{M}_{\hat{F}}\right\Vert \left(  \frac{\left\Vert
\mathbf{\hat{F}-F}\right\Vert ^{2}}{T}\right)  .
\]
Also $\sup_{i}\left\Vert \boldsymbol{\gamma}_{i}\right\Vert <C$ and
$\left\Vert \mathbf{M}_{\hat{F}}\right\Vert =1$, and using (\ref{lm2.1}) of
Lemma \ref{lm2} we have
\[
\sup_{i}\left\vert B_{1,iT}\right\vert \leq\left(  \sup_{i}\left\Vert
\boldsymbol{\gamma}_{i}\right\Vert ^{2}\right)  \left\Vert \mathbf{M}_{\hat
{F}}\right\Vert \left(  \frac{\left\Vert \mathbf{\hat{F}-F}\right\Vert ^{2}%
}{T}\right)  =O_{p}\left(  \frac{1}{\delta_{nT}^{2}}\right)  .
\]
To establish the probability orders of the remaining terms of (\ref{Ews1}), we
first observe that
\[
\frac{\mathbf{\hat{F}}^{^{\prime}}\mathbf{\hat{F}}}{T}-\frac{\mathbf{F}%
^{^{\prime}}\mathbf{F}}{T}=\frac{\left(  \mathbf{\hat{F}-F}\right)
^{^{\prime}}\left(  \mathbf{\hat{F}-F}\right)  }{T}+\frac{\left(
\mathbf{\hat{F}-F}\right)  ^{^{\prime}}\mathbf{F}}{T}+\frac{\mathbf{F}%
^{^{\prime}}\left(  \mathbf{\hat{F}-F}\right)  }{T}.
\]
Using results (\ref{lm2.1}) and (\ref{baib2}) it follows that
\begin{equation}
\left\Vert \frac{\mathbf{\hat{F}}^{^{\prime}}\mathbf{\hat{F}}}{T}%
-\frac{\mathbf{F}^{^{\prime}}\mathbf{F}}{T}\right\Vert \leq\frac{\left\Vert
\mathbf{\hat{F}-F}\right\Vert ^{2}}{T}+\frac{\left\Vert \left(  \mathbf{\hat
{F}-F}\right)  ^{\prime}\mathbf{F}\right\Vert }{T}+\frac{\left\Vert
\mathbf{F}^{\prime}\left(  \mathbf{\hat{F}-F}\right)  \right\Vert }{T}%
=O_{p}\left(  \frac{1}{\delta_{nT}^{2}}\right)  . \label{F'F1}%
\end{equation}
By assumption $T^{-1}\mathbf{F}^{^{\prime}}\mathbf{F}$ is a positive definite
matrix, then
\[
\left\Vert \left(  \frac{\mathbf{\hat{F}}^{^{\prime}}\mathbf{\hat{F}}}%
{T}\right)  ^{-1}-\left(  \frac{\mathbf{F}^{^{\prime}}\mathbf{F}}{T}\right)
^{-1}\right\Vert \leq\left\Vert \left(  \frac{\mathbf{\hat{F}}^{^{\prime}%
}\mathbf{\hat{F}}}{T}\right)  ^{-1}\right\Vert \left\Vert \frac{\mathbf{\hat
{F}}^{^{\prime}}\mathbf{\hat{F}}}{T}-\frac{\mathbf{F}^{^{\prime}}\mathbf{F}%
}{T}\right\Vert \left\Vert \left(  \frac{\mathbf{F}^{^{\prime}}\mathbf{F}}%
{T}\right)  ^{-1}\right\Vert =O_{p}\left(  \frac{1}{\delta_{nT}^{2}}\right)
,
\]
so it follows that
\begin{equation}
\frac{\mathbf{\hat{F}}^{^{\prime}}\mathbf{\hat{F}}}{T}=\frac{\mathbf{F}%
^{^{\prime}}\mathbf{F}}{T}+O_{p}\left(  \frac{1}{\delta_{nT}^{2}}\right)
,\text{ and }\left(  \frac{\mathbf{\hat{F}}^{^{\prime}}\mathbf{\hat{F}}}%
{T}\right)  ^{-1}=\left(  \frac{\mathbf{F}^{^{\prime}}\mathbf{F}}{T}\right)
^{-1}+O_{p}\left(  \frac{1}{\delta_{nT}^{2}}\right)  . \label{hfhf}%
\end{equation}
Now we consider $B_{2,iT},$ and note that%
\begin{align}
B_{2,iT}  &  =T^{-1}\sigma_{i}^{2}\boldsymbol{\varepsilon}_{i\circ}\left(
\lambda_{T}\right)  ^{\prime}\left[  \mathbf{\hat{F}}\left(  \mathbf{\hat{F}%
}^{^{\prime}}\mathbf{\hat{F}}\right)  ^{-1}\mathbf{\hat{F}}^{^{\prime}%
}-\mathbf{F}\left(  \mathbf{F}^{^{\prime}}\mathbf{F}\right)  ^{-1}%
\mathbf{F}^{^{\prime}}\right]  \boldsymbol{\varepsilon}_{i\circ}\left(
\lambda_{T}\right)  +\nonumber\\
&  T^{-1}\sigma_{i}^{2}\boldsymbol{\varepsilon}_{i\circ}\left(  \lambda
_{T}\right)  ^{\prime}\left(  \mathbf{I}_{m_{0}}-\mathbf{\hat{F}}\left(
\mathbf{\hat{F}}^{^{\prime}}\mathbf{\hat{F}}\right)  ^{-1}\mathbf{\hat{F}%
}^{^{\prime}}\right)  \mathbf{F}\left(  \mathbf{F}^{^{\prime}}\mathbf{F}%
\right)  ^{-1}\mathbf{F}^{^{\prime}}\boldsymbol{\varepsilon}_{i\circ}\left(
\lambda_{T}\right)  +\nonumber\\
&  T^{-1}\sigma_{i}^{2}\boldsymbol{\varepsilon}_{i\circ}\left(  \lambda
_{T}\right)  ^{\prime}\mathbf{F}\left(  \mathbf{F}^{^{\prime}}\mathbf{F}%
\right)  ^{-1}\mathbf{F}^{^{\prime}}\left(  \mathbf{I}_{m_{0}}-\mathbf{\hat
{F}}\left(  \mathbf{\hat{F}}^{^{\prime}}\mathbf{\hat{F}}\right)
^{-1}\mathbf{\hat{F}}^{^{\prime}}\right)  \boldsymbol{\varepsilon}_{i\circ
}\left(  \lambda_{T}\right)  . \label{B2}%
\end{align}
We further note
\begin{align*}
\mathbf{\hat{F}}\left(  \mathbf{\hat{F}}^{^{\prime}}\mathbf{\hat{F}}\right)
^{-1}\mathbf{\hat{F}}^{^{\prime}}-\mathbf{F}\left(  \mathbf{F}^{^{\prime}%
}\mathbf{F}\right)  ^{-1}\mathbf{F}^{^{\prime}}  &  =\left(  \mathbf{\hat
{F}-F}\right)  \left(  \mathbf{\hat{F}}^{^{\prime}}\mathbf{\hat{F}}\right)
^{-1}\left(  \mathbf{\hat{F}-F}\right)  ^{^{\prime}}+\left[  \mathbf{F}\left(
\mathbf{\hat{F}}^{^{\prime}}\mathbf{\hat{F}}\right)  ^{-1}\mathbf{F}%
^{^{\prime}}-\mathbf{F}\left(  \mathbf{F}^{^{\prime}}\mathbf{F}\right)
^{-1}\mathbf{F}^{^{\prime}}\right] \\
&  +\mathbf{F}\left(  \mathbf{\hat{F}}^{^{\prime}}\mathbf{\hat{F}}\right)
^{-1}\left(  \mathbf{\hat{F}-F}\right)  ^{^{\prime}}+\left(  \mathbf{\hat
{F}-F}\right)  \left(  \mathbf{\hat{F}}^{^{\prime}}\mathbf{\hat{F}}\right)
^{-1}\mathbf{F}^{\prime}.
\end{align*}
Using (\ref{B2}), we have $B_{2,iT}=\sum_{j=1}^{6}B_{2,j,iT}$, where%
\begin{align*}
B_{2,1,iT}  &  =T^{-1}\sigma_{i}^{2}\boldsymbol{\varepsilon}_{i\circ}\left(
\lambda_{T}\right)  ^{\prime}\left(  \mathbf{\hat{F}-F}\right)  \left(
\mathbf{\hat{F}}^{^{\prime}}\mathbf{\hat{F}}\right)  ^{-1}\left(
\mathbf{\hat{F}-F}\right)  ^{\prime}\boldsymbol{\varepsilon}_{i\circ}\left(
\lambda_{T}\right)  ,\\
B_{2,2,iT}  &  =T^{-1}\sigma_{i}^{2}\boldsymbol{\varepsilon}_{i\circ}\left(
\lambda_{T}\right)  ^{\prime}\left[  \mathbf{F}\left(  \mathbf{\hat{F}%
}^{^{\prime}}\mathbf{\hat{F}}\right)  ^{-1}\mathbf{F}^{^{\prime}}%
-\mathbf{F}\left(  \mathbf{F}^{^{\prime}}\mathbf{F}\right)  ^{-1}%
\mathbf{F}^{^{\prime}}\right]  \boldsymbol{\varepsilon}_{i\circ}\left(
\lambda_{T}\right)  ,\\
B_{2,3,iT}  &  =B_{2,4,iT}=T^{-1}\sigma_{i}^{2}\boldsymbol{\varepsilon
}_{i\circ}\left(  \lambda_{T}\right)  ^{\prime}\mathbf{F}\left(
\mathbf{\hat{F}}^{^{\prime}}\mathbf{\hat{F}}\right)  ^{-1}\left(
\mathbf{\hat{F}-F}\right)  ^{\prime}\boldsymbol{\varepsilon}_{i\circ}\left(
\lambda_{T}\right)  ,\\
B_{2,5,iT}  &  =B_{2,6,iT}=T^{-1}\sigma_{i}^{2}\boldsymbol{\varepsilon
}_{i\circ}\left(  \lambda_{T}\right)  ^{\prime}\left(  \mathbf{I}_{m_{0}%
}-\mathbf{\hat{F}}\left(  \mathbf{\hat{F}}^{^{\prime}}\mathbf{\hat{F}}\right)
^{-1}\mathbf{\hat{F}}^{^{\prime}}\right)  \mathbf{F}\left(  \mathbf{F}%
^{^{\prime}}\mathbf{F}\right)  ^{-1}\mathbf{F}^{\prime}\boldsymbol{\varepsilon
}_{i\circ}\left(  \lambda_{T}\right)  ,
\end{align*}
and $B_{2,iT}\leq\sum_{j=1}^{6}\left\vert B_{2,j,iT}\right\vert $. Starting
with the first term we note that
\begin{align*}
\left\vert B_{2,1,iT}\right\vert  &  =\left\vert T^{-1}\sigma_{i}%
^{2}\boldsymbol{\varepsilon}_{i\circ}\left(  \lambda_{T}\right)  ^{\prime
}\left(  \mathbf{\hat{F}-F}\right)  \left(  \mathbf{\hat{F}}^{^{\prime}%
}\mathbf{\hat{F}}\right)  ^{-1}\left(  \mathbf{\hat{F}-F}\right)  ^{\prime
}\boldsymbol{\varepsilon}_{i\circ}\left(  \lambda_{T}\right)  \right\vert \\
&  \leq\sigma_{i}^{2}\left(  \frac{1}{T}\left\Vert \boldsymbol{\varepsilon
}_{i\circ}\left(  \lambda_{T}\right)  \right\Vert ^{2}\right)  \left(
\frac{\left\Vert \mathbf{\hat{F}-F}\right\Vert ^{2}}{T}\right)  \left\Vert
\left(  \frac{\mathbf{\hat{F}}^{^{\prime}}\mathbf{\hat{F}}}{T}\right)
^{-1}\right\Vert .
\end{align*}
By (\ref{lm2.1}) and (\ref{sup_u2}) we have%
\[
\sup_{i}\left\vert B_{2,1,iT}\right\vert \leq\left(  \sup_{i}\sigma_{i}%
^{2}\right)  \left(  \sup_{i}\frac{1}{T}\left\Vert \boldsymbol{\varepsilon
}_{i\circ}\left(  \lambda_{T}\right)  \right\Vert ^{2}\right)  \left(
\frac{\left\Vert \mathbf{\hat{F}-F}\right\Vert ^{2}}{T}\right)  \left\Vert
\left(  \frac{\mathbf{\hat{F}}^{^{\prime}}\mathbf{\hat{F}}}{T}\right)
^{-1}\right\Vert =O_{p}\left(  \frac{1}{\delta_{nT}^{2}}\right)  .
\]
Next, consider $B_{2,2,iT}$ and note that%
\begin{align*}
\left\vert B_{2,2,iT}\right\vert  &  =\left\vert \frac{1}{T}\sigma_{i}%
^{2}\boldsymbol{\varepsilon}_{i\circ}\left(  \lambda_{T}\right)  ^{\prime
}\left[  \mathbf{F}\left(  \mathbf{\hat{F}}^{^{\prime}}\mathbf{\hat{F}%
}\right)  ^{-1}\mathbf{F}^{^{\prime}}-\mathbf{F}\left(  \mathbf{F}^{^{\prime}%
}\mathbf{F}\right)  ^{-1}\mathbf{F}^{^{\prime}}\right]
\boldsymbol{\varepsilon}_{i\circ}\left(  \lambda_{T}\right)  \right\vert \\
&  =\sigma_{i}^{2}\left\vert \left(  \frac{\boldsymbol{\varepsilon}_{i\circ
}\left(  \lambda_{T}\right)  ^{\prime}\mathbf{F}}{T}\right)  \left[  \left(
\frac{\mathbf{\hat{F}}^{^{\prime}}\mathbf{\hat{F}}}{T}\right)  ^{-1}-\left(
\frac{\mathbf{F}^{^{\prime}}\mathbf{F}}{T}\right)  ^{-1}\right]  \left(
\frac{\mathbf{F}^{^{\prime}}\boldsymbol{\varepsilon}_{i\circ}\left(
\lambda_{T}\right)  }{T}\right)  \right\vert \\
&  \leq\sigma_{i}^{2}\left(  \left\Vert \frac{\boldsymbol{\varepsilon}%
_{i\circ}\left(  \lambda_{T}\right)  ^{\prime}\mathbf{F}}{T}\right\Vert
^{2}\right)  \left\Vert \left(  \frac{\mathbf{\hat{F}}^{^{\prime}}%
\mathbf{\hat{F}}}{T}\right)  ^{-1}-\left(  \frac{\mathbf{F}^{^{\prime}%
}\mathbf{F}}{T}\right)  ^{-1}\right\Vert .
\end{align*}
Using (\ref{sup_fu}) (from Lemma \ref{lmsup}) and (\ref{hfhf})), we further
have%
\begin{align*}
\sup_{i}\left\vert B_{2,2,iT}\right\vert  &  \leq\left(  \sup_{i}\sigma
_{i}^{2}\right)  \left(  \sup_{i}\left\Vert \frac{\boldsymbol{\varepsilon
}_{i\circ}\left(  \lambda_{T}\right)  ^{\prime}\mathbf{F}}{T}\right\Vert
^{2}\right)  \left\Vert \left(  \frac{\mathbf{\hat{F}}^{^{\prime}}%
\mathbf{\hat{F}}}{T}\right)  ^{-1}-\left(  \frac{\mathbf{F}^{^{\prime}%
}\mathbf{F}}{T}\right)  ^{-1}\right\Vert \\
&  =O_{p}\left(  \frac{\ln\left(  n\right)  }{T}\right)  \times O_{p}\left(
\frac{1}{\delta_{nT}^{2}}\right)  =O_{p}\left(  \frac{\ln\left(  n\right)
}{T\delta_{nT}^{2}}\right)  .
\end{align*}
Next, consider
\begin{align*}
\left\vert B_{2,3,iT}\right\vert  &  =\sigma_{i}^{2}\left\vert \left(
\frac{\boldsymbol{\varepsilon}_{i\circ}\left(  \lambda_{T}\right)  ^{\prime
}\mathbf{F}}{T}\right)  \left(  \frac{\mathbf{\hat{F}}^{^{\prime}}%
\mathbf{\hat{F}}}{T}\right)  ^{-1}\left[  \frac{\left(  \mathbf{\hat{F}%
-F}\right)  ^{\prime}\boldsymbol{\varepsilon}_{i\circ}\left(  \lambda
_{T}\right)  }{T}\right]  \right\vert \\
&  \leq\sigma_{i}^{2}\left\Vert \frac{\boldsymbol{\varepsilon}_{i\circ}\left(
\lambda_{T}\right)  ^{\prime}\mathbf{F}}{T}\right\Vert \left\Vert \left(
\frac{\mathbf{\hat{F}}^{^{\prime}}\mathbf{\hat{F}}}{T}\right)  ^{-1}%
\right\Vert \left\Vert \frac{\left(  \mathbf{\hat{F}-F}\right)  ^{\prime
}\boldsymbol{\varepsilon}_{i\circ}\left(  \lambda_{T}\right)  }{T}\right\Vert
,
\end{align*}
and using (\ref{sup_fu}), (\ref{sup_baib1}) and (\ref{hfhf}) now yields%
\begin{align*}
\sup_{i}\left\vert B_{2,3,iT}\right\vert  &  \leq\left\Vert \left(
\frac{\mathbf{\hat{F}}^{^{\prime}}\mathbf{\hat{F}}}{T}\right)  ^{-1}%
\right\Vert \left(  \sup_{i}\left\Vert \frac{\boldsymbol{\varepsilon}_{i\circ
}\left(  \lambda_{T}\right)  ^{\prime}\mathbf{F}}{T}\right\Vert \right)
\left(  \sup_{i}\left\Vert \frac{\left(  \mathbf{\hat{F}-F}\right)  ^{\prime
}\boldsymbol{\varepsilon}_{i\circ}\left(  \lambda_{T}\right)  }{T}\right\Vert
\right) \\
&  =O_{p}\left(  \sqrt{\frac{\ln\left(  n\right)  }{T}}\right)  \times
O_{p}\left(  \sqrt{\frac{\ln\left(  n\right)  }{nT}}\right)  .
\end{align*}
Similarly,
\begin{align}
\left\vert B_{2,5,iT}\right\vert  &  =\left\vert T^{-1}\sigma_{i}%
^{2}\boldsymbol{\varepsilon}_{i\circ}\left(  \lambda_{T}\right)  ^{\prime
}\left(  \mathbf{I}_{m_{0}}-\mathbf{\hat{F}}\left(  \mathbf{\hat{F}}%
^{^{\prime}}\mathbf{\hat{F}}\right)  ^{-1}\mathbf{\hat{F}}^{^{\prime}}\right)
\mathbf{F}\left(  \mathbf{F}^{^{\prime}}\mathbf{F}\right)  ^{-1}%
\mathbf{F}^{\prime}\boldsymbol{\varepsilon}_{i\circ}\left(  \lambda
_{T}\right)  \right\vert \nonumber\\
&  \leq\sigma_{i}^{2}\left\vert \frac{\boldsymbol{\varepsilon}_{i\circ}\left(
\lambda_{T}\right)  ^{\prime}\mathbf{P}_{F}\boldsymbol{\varepsilon}_{i\circ
}\left(  \lambda_{T}\right)  }{T}\right\vert +\sigma_{i}^{2}\left\vert
\frac{\boldsymbol{\varepsilon}_{i\circ}^{\prime}\left(  \lambda_{T}\right)
\mathbf{P}_{F}\left(  \mathbf{P}_{\hat{F}}-\mathbf{P}_{F}\right)
\boldsymbol{\varepsilon}_{i\circ}\left(  \lambda_{T}\right)  }{T}\right\vert ,
\label{B2,5,nT}%
\end{align}
where $\mathbf{P}_{F}=\mathbf{F}\left(  \mathbf{F}^{^{\prime}}\mathbf{F}%
\right)  ^{-1}\mathbf{F}^{^{\prime}}$ and $\mathbf{P}_{\hat{F}}=\mathbf{\hat
{F}}\left(  \mathbf{\hat{F}}^{^{\prime}}\mathbf{\hat{F}}\right)
^{-1}\mathbf{\hat{F}}^{^{\prime}}$. Further%
\[
\left\vert \frac{\boldsymbol{\varepsilon}_{i\circ}\left(  \lambda_{T}\right)
^{\prime}\mathbf{P}_{F}\boldsymbol{\varepsilon}_{i\circ}\left(  \lambda
\right)  }{T}\right\vert =\left(  \frac{\boldsymbol{\varepsilon}_{i\circ
}\left(  \lambda_{T}\right)  ^{\prime}\mathbf{F}}{T}\right)  \left(
\frac{\mathbf{F}^{^{\prime}}\mathbf{F}}{T}\right)  ^{-1}\left(  \frac
{\mathbf{F}^{^{\prime}}\boldsymbol{\varepsilon}_{i\circ}\left(  \lambda
_{T}\right)  }{T}\right)  \leq\left\Vert \left(  \frac{\mathbf{F}^{^{\prime}%
}\mathbf{F}}{T}\right)  ^{-1}\right\Vert \left\Vert \frac
{\boldsymbol{\varepsilon}_{i\circ}\left(  \lambda_{T}\right)  ^{\prime
}\mathbf{F}}{T}\right\Vert ^{2},
\]
and using (\ref{sup_fu}) it follows that
\[
\sup_{i}\left\vert \frac{\boldsymbol{\varepsilon}_{i\circ}\left(  \lambda
_{T}\right)  ^{\prime}\mathbf{P}_{F}\boldsymbol{\varepsilon}_{i\circ}\left(
\lambda_{T}\right)  }{T}\right\vert \leq\left\Vert \left(  \frac
{\mathbf{F}^{^{\prime}}\mathbf{F}}{T}\right)  ^{-1}\right\Vert \left(
\sup_{i}\left\Vert \frac{\boldsymbol{\varepsilon}_{i\circ}\left(  \lambda
_{T}\right)  ^{\prime}\mathbf{F}}{T}\right\Vert ^{2}\right)  =O_{p}\left(
\frac{\ln\left(  n\right)  }{T}\right)  .
\]
Also, the probability order of the first term in (\ref{B2,5,nT}) dominates the
second term, and we have
\[
\sup_{i}\left\vert B_{2,5,iT}\right\vert \leq\left(  \sup_{i}\sigma_{i}%
^{2}\right)  \left(  \sup_{i}\left\vert \frac{\boldsymbol{\varepsilon}%
_{i\circ}\left(  \lambda_{T}\right)  ^{\prime}\mathbf{P}_{F}%
\boldsymbol{\varepsilon}_{i\circ}\left(  \lambda_{T}\right)  }{T}\right\vert
\right)  =O_{p}\left(  \frac{\ln\left(  n\right)  }{T}\right)  .
\]
Overall, using the above results we obtain%
\[
\sup_{i}\left\vert B_{2,iT}\right\vert \leq\sup_{i}\left(  \sum_{j=1}%
^{6}\left\vert B_{2,j,iT}\right\vert \right)  \leq\sum_{j=1}^{6}\sup
_{i}\left\vert B_{2,j,iT}\right\vert =O_{p}\left(  \frac{\ln\left(  n\right)
}{T}\right)  .
\]
Next, consider $B_{3,iT}$ which can be rewritten as%
\begin{align*}
B_{3,iT}  &  =\frac{2\sigma_{i}^{2}\boldsymbol{\varepsilon}_{i\circ}\left(
\lambda_{T}\right)  ^{\prime}\left(  \mathbf{I}_{m_{0}}-\mathbf{P}_{F}\right)
\left(  \mathbf{P}_{F}-\mathbf{P}_{\hat{F}}\right)  \boldsymbol{\varepsilon
}_{i\circ}\left(  \lambda_{T}\right)  }{T}=\frac{2\sigma_{i}^{2}%
\boldsymbol{\varepsilon}_{i\circ}\left(  \lambda_{T}\right)  ^{\prime}\left(
\mathbf{P}_{F}-\mathbf{P}_{\hat{F}}-\mathbf{P}_{F}+\mathbf{P}_{F}%
\mathbf{P}_{\hat{F}}\right)  \boldsymbol{\varepsilon}_{i\circ}\left(
\lambda_{T}\right)  }{T}\\
&  =\frac{2\sigma_{i}^{2}\boldsymbol{\varepsilon}_{i\circ}\left(  \lambda
_{T}\right)  ^{\prime}\mathbf{P}_{\hat{F}}\boldsymbol{\varepsilon}_{i\circ
}\left(  \lambda_{T}\right)  }{T}+\frac{2\sigma_{i}^{2}\boldsymbol{\varepsilon
}_{i\circ}\left(  \lambda_{T}\right)  ^{\prime}\mathbf{P}_{F}\mathbf{P}%
_{\hat{F}}\boldsymbol{\varepsilon}_{i\circ}\left(  \lambda_{T}\right)  }%
{T}=B_{3,1,nT}+B_{3,2,nT}.
\end{align*}
Note that
\begin{align*}
\left\vert B_{3,1,iT}\right\vert  &  =2\left\vert \sigma_{i}^{2}\left(
\frac{\boldsymbol{\varepsilon}_{i\circ}\left(  \lambda_{T}\right)  ^{\prime
}\mathbf{\hat{F}}}{T}\right)  \left(  \frac{\mathbf{\hat{F}}^{\prime
}\mathbf{\hat{F}}}{T}\right)  ^{-1}\left(  \frac{\mathbf{\hat{F}}^{\prime
}\boldsymbol{\varepsilon}_{i\circ}\left(  \lambda_{T}\right)  }{T}\right)
\right\vert \leq2\sigma_{i}^{2}\left\Vert \frac{\boldsymbol{\varepsilon
}_{i\circ}\left(  \lambda_{T}\right)  ^{\prime}\mathbf{\hat{F}}}{T}\right\Vert
^{2}\left\Vert \left(  \frac{\mathbf{\hat{F}}^{\prime}\mathbf{\hat{F}}}%
{T}\right)  ^{-1}\right\Vert ^{2}\\
&  \leq4\sigma_{i}^{2}\left(  \left\Vert \frac{\boldsymbol{\varepsilon
}_{i\circ}\left(  \lambda_{T}\right)  ^{\prime}\mathbf{F}}{T}\right\Vert
^{2}+\left\Vert \frac{\boldsymbol{\varepsilon}_{i\circ}\left(  \lambda
_{T}\right)  ^{\prime}\left(  \mathbf{\hat{F}}-\mathbf{F}\right)  }%
{T}\right\Vert ^{2}\right)  \left\Vert \left(  \frac{\mathbf{\hat{F}}^{\prime
}\mathbf{\hat{F}}}{T}\right)  ^{-1}\right\Vert ^{2}.
\end{align*}
Then using (\ref{sup_fu}) and (\ref{sup_baib1}), we have%
\begin{align*}
\sup_{i}\left\vert B_{3,1,iT}\right\vert  &  \leq4\left(  \sup_{i}\sigma
_{i}^{2}\right)  \sup_{i}\left(  \left\Vert \frac{\boldsymbol{\varepsilon
}_{i\circ}\left(  \lambda_{T}\right)  ^{\prime}\mathbf{F}}{T}\right\Vert
^{2}+\left\Vert \frac{\boldsymbol{\varepsilon}_{i\circ}\left(  \lambda
_{T}\right)  ^{\prime}\left(  \mathbf{\hat{F}}-\mathbf{F}\right)  }%
{T}\right\Vert ^{2}\right)  \left\Vert \left(  \frac{\mathbf{\hat{F}}^{\prime
}\mathbf{\hat{F}}}{T}\right)  ^{-1}\right\Vert ^{2}\\
&  \leq4\left(  \sup_{i}\sigma_{i}^{2}\right)  \left(  \sup_{i}\left\Vert
\frac{\boldsymbol{\varepsilon}_{i\circ}\left(  \lambda_{T}\right)  ^{\prime
}\mathbf{F}}{T}\right\Vert ^{2}+\sup_{i}\left\Vert \frac
{\boldsymbol{\varepsilon}_{i\circ}\left(  \lambda_{T}\right)  ^{\prime}\left(
\mathbf{\hat{F}}-\mathbf{F}\right)  }{T}\right\Vert ^{2}\right)
\times\left\Vert \left(  \frac{\mathbf{\hat{F}}^{\prime}\mathbf{\hat{F}}}%
{T}\right)  ^{-1}\right\Vert ^{2}\\
&  =O_{p}\left(  \frac{\ln\left(  n\right)  }{T}\right)  .
\end{align*}
Also,
\begin{align*}
\left\vert B_{3,2,iT}\right\vert  &  =2\sigma_{i}^{2}\left\vert \left(
\frac{\boldsymbol{\varepsilon}_{i\circ}\left(  \lambda_{T}\right)  ^{\prime
}\mathbf{F}}{T}\right)  \left(  \frac{\mathbf{F}^{\prime}\mathbf{F}}%
{T}\right)  ^{-1}\left(  \frac{\mathbf{F}^{\prime}\mathbf{\hat{F}}}{T}\right)
\left(  \frac{\mathbf{\hat{F}}^{\prime}\mathbf{\hat{F}}}{T}\right)
^{-1}\left(  \frac{\mathbf{\hat{F}}^{\prime}\boldsymbol{\varepsilon}_{i\circ
}\left(  \lambda_{T}\right)  }{T}\right)  \right\vert \\
&  \leq2\sigma_{i}^{2}\left\vert \left(  \frac{\boldsymbol{\varepsilon
}_{i\circ}\left(  \lambda_{T}\right)  ^{\prime}\mathbf{F}}{T}\right)  \left(
\frac{\mathbf{F}^{\prime}\mathbf{F}}{T}\right)  ^{-1}\left(  \frac
{\mathbf{F}^{\prime}\mathbf{\hat{F}}}{T}\right)  \left(  \frac{\mathbf{\hat
{F}}^{\prime}\mathbf{\hat{F}}}{T}\right)  ^{-1}\left[  \frac{\mathbf{F}%
^{\prime}\boldsymbol{\varepsilon}_{i\circ}\left(  \lambda_{T}\right)  }%
{T}+\frac{\left(  \mathbf{\hat{F}-F}\right)  ^{\prime}\boldsymbol{\varepsilon
}_{i\circ}\left(  \lambda_{T}\right)  }{T}\right]  \right\vert \\
&  \leq2\sigma_{i}^{2}\left\Vert \frac{\boldsymbol{\varepsilon}_{i\circ
}\left(  \lambda_{T}\right)  ^{\prime}\mathbf{F}}{T}\right\Vert \left(
\left\Vert \frac{\mathbf{F}^{\prime}\boldsymbol{\varepsilon}_{i\circ}\left(
\lambda_{T}\right)  }{T}\right\Vert +\left\Vert \frac{\left(  \mathbf{\hat
{F}-F}\right)  ^{\prime}\boldsymbol{\varepsilon}_{i\circ}\left(  \lambda
_{T}\right)  }{T}\right\Vert \right)  \left\Vert \left(  \frac{\mathbf{F}%
^{\prime}\mathbf{F}}{T}\right)  ^{-1}\right\Vert \times\\
&  \left\Vert \left(  \frac{\mathbf{\hat{F}}^{\prime}\mathbf{\hat{F}}}%
{T}\right)  ^{-1}\right\Vert \left\Vert \frac{\mathbf{F}^{\prime}%
\mathbf{\hat{F}}}{T}\right\Vert .
\end{align*}
By (\ref{sup_fu}) and (\ref{sup_baib1}) we obtain%
\begin{align}
&  \sup_{i}\sigma_{i}^{2}\left\Vert \frac{\boldsymbol{\varepsilon}_{i\circ
}\left(  \lambda_{T}\right)  ^{\prime}\mathbf{F}}{T}\right\Vert \left(
\left\Vert \frac{\mathbf{F}^{\prime}\boldsymbol{\varepsilon}_{i\circ}\left(
\lambda_{T}\right)  }{T}\right\Vert +\left\Vert \frac{\left(  \mathbf{\hat
{F}-F}\right)  ^{\prime}\boldsymbol{\varepsilon}_{i\circ}\left(  \lambda
_{T}\right)  }{T}\right\Vert \right) \nonumber\\
&  \leq\left(  \sup_{i}\sigma_{i}^{2}\right)  \left[  \sup_{i}\left\Vert
\frac{\boldsymbol{\varepsilon}_{i\circ}\left(  \lambda_{T}\right)  ^{\prime
}\mathbf{F}}{T}\right\Vert ^{2}+\sup_{i}\left\Vert \frac
{\boldsymbol{\varepsilon}_{i\circ}\left(  \lambda_{T}\right)  ^{\prime
}\mathbf{F}}{T}\right\Vert \left(  \sup_{i}\left\Vert \frac{\left(
\mathbf{\hat{F}-F}\right)  ^{\prime}\boldsymbol{\varepsilon}_{i\circ}\left(
\lambda_{T}\right)  }{T}\right\Vert \right)  \right] \nonumber\\
&  =O_{p}\left(  \frac{\ln\left(  n\right)  }{T}\right)  +O_{p}\left(
\sqrt{\frac{\ln\left(  n\right)  }{T}}\right)  \times O_{p}\left(  \sqrt
{\frac{\ln\left(  n\right)  }{nT}}\right)  =O_{p}\left(  \frac{\ln\left(
n\right)  }{T}\right)  . \label{B3,2,1}%
\end{align}
Further, by (\ref{baib2}),
\begin{equation}
\left\Vert \frac{\mathbf{F}^{\prime}\mathbf{\hat{F}}}{T}\right\Vert
\leq\left\Vert \frac{\mathbf{F}^{\prime}\mathbf{F}}{T}\right\Vert +\left\Vert
\frac{\mathbf{F}^{\prime}\left(  \mathbf{\hat{F}-F}\right)  }{T}\right\Vert
=O_{p}\left(  1\right)  . \label{B3,2,2}%
\end{equation}
Using (\ref{B3,2,1}), (\ref{B3,2,2}), and (\ref{hfhf}) now yields
\begin{align*}
\sup_{i}\left\vert B_{3,2,iT}\right\vert  &  \leq2\left[  \sup_{i}\sigma
_{i}^{2}\left\Vert \frac{\boldsymbol{\varepsilon}_{i\circ}\left(  \lambda
_{T}\right)  ^{\prime}\mathbf{F}}{T}\right\Vert \left(  \left\Vert
\frac{\mathbf{F}^{\prime}\boldsymbol{\varepsilon}_{i\circ}\left(  \lambda
_{T}\right)  }{T}\right\Vert +\left\Vert \frac{\left(  \mathbf{\hat{F}%
-F}\right)  ^{\prime}\boldsymbol{\varepsilon}_{i\circ}\left(  \lambda
_{T}\right)  }{T}\right\Vert \right)  \right]  \times\\
&  \left\Vert \left(  \frac{\mathbf{F}^{\prime}\mathbf{F}}{T}\right)
^{-1}\right\Vert \left\Vert \left(  \frac{\mathbf{\hat{F}}^{\prime
}\mathbf{\hat{F}}}{T}\right)  ^{-1}\right\Vert \left\Vert \frac{\mathbf{F}%
^{\prime}\mathbf{\hat{F}}}{T}\right\Vert =O_{p}\left(  \frac{\ln\left(
n\right)  }{T}\right)  .
\end{align*}
Hence,
\[
\sup_{i}\left\vert B_{3,iT}\right\vert \leq\sup_{i}\left(  \left\vert
B_{3,1,iT}\right\vert +\left\vert B_{3,2,iT}\right\vert \right)  \leq\sup
_{i}\left\vert B_{3,1,iT}\right\vert +\sup_{i}\left\vert B_{3,2,iT}\right\vert
=O_{p}\left(  \frac{\ln\left(  n\right)  }{T}\right)  .
\]
Now consider $B_{4,iT}$, and note that%
\begin{align*}
B_{4,iT}  &  =\frac{2\sigma_{i}\boldsymbol{\gamma}_{i}^{\prime}\left(
\mathbf{F-\hat{F}}\right)  ^{\prime}\mathbf{M}_{\hat{F}}\mathbf{M}%
_{F}\boldsymbol{\varepsilon}_{i\circ}\left(  \lambda_{T}\right)  }{T}\\
&  =-\frac{2\sigma_{i}\boldsymbol{\gamma}_{i}^{\prime}\left(  \mathbf{\hat
{F}-F}\right)  ^{\prime}\boldsymbol{\varepsilon}_{i\circ}\left(  \lambda
_{T}\right)  }{T}+\frac{2\sigma_{i}\boldsymbol{\gamma}_{i}^{\prime}\left(
\mathbf{\hat{F}-F}\right)  ^{\prime}\mathbf{\hat{F}}\left(  \mathbf{\hat{F}%
}^{\prime}\mathbf{\hat{F}}\right)  ^{-1}\mathbf{\hat{F}}^{\prime
}\boldsymbol{\varepsilon}_{i\circ}\left(  \lambda_{T}\right)  }{T}\\
&  -\frac{2\sigma_{i}\boldsymbol{\gamma}_{i}^{\prime}\left(  \mathbf{\hat
{F}-F}\right)  ^{\prime}\mathbf{M}_{\hat{F}}\left(  \mathbf{\hat{F}-F}\right)
\left(  \mathbf{F}^{\prime}\mathbf{F}\right)  ^{-1}\mathbf{F}^{\prime
}\boldsymbol{\varepsilon}_{i\circ}\left(  \lambda_{T}\right)  }{T}=\sum
_{j=1}^{3}B_{4,j,1T}.
\end{align*}
For the first term of the above equation, we have%
\[
\left\vert B_{4,1,iT}\right\vert \leq2\left\vert \frac{\sigma_{i}%
\boldsymbol{\gamma}_{i}^{\prime}\left(  \mathbf{\hat{F}-F}\right)  ^{\prime
}\boldsymbol{\varepsilon}_{i\circ}\left(  \lambda_{T}\right)  }{T}\right\vert
\leq2\sigma_{i}\left\Vert \boldsymbol{\gamma}_{i}\right\Vert \left\Vert
\frac{\left(  \mathbf{\hat{F}-F}\right)  ^{\prime}\boldsymbol{\varepsilon
}_{i\circ}\left(  \lambda_{T}\right)  }{T}\right\Vert ,
\]
where $\sigma_{i}$\ and $\boldsymbol{\gamma}_{i}$ are bounded. Then using
(\ref{sup_baib1}) it follows that%
\[
\sup_{i}\left\vert B_{4,1,iT}\right\vert \leq2\left(  \sup_{i}\sigma
_{i}\right)  \left(  \sup_{i}\left\Vert \boldsymbol{\gamma}_{i}\right\Vert
\right)  \left(  \sup_{i}\left\Vert \frac{\left(  \mathbf{\hat{F}-F}\right)
^{\prime}\boldsymbol{\varepsilon}_{i\circ}\left(  \lambda_{T}\right)  }%
{T}\right\Vert \right)  =O_{p}\left(  \sqrt{\frac{\ln\left(  n\right)  }{nT}%
}\right)  .
\]
Similarly,%
\begin{align*}
\left\vert B_{4,2,iT}\right\vert  &  =\left\vert \frac{2\sigma_{i}%
\boldsymbol{\gamma}_{i}^{\prime}\left(  \mathbf{\hat{F}-F}\right)  ^{\prime
}\mathbf{\hat{F}}\left(  \mathbf{\hat{F}}^{\prime}\mathbf{\hat{F}}\right)
^{-1}\mathbf{\hat{F}}^{\prime}\boldsymbol{\varepsilon}_{i\circ}\left(
\lambda_{T}\right)  }{T}\right\vert \\
&  \leq2\sigma_{i}\left\Vert \boldsymbol{\gamma}_{i}\right\Vert \left\Vert
\frac{\left(  \mathbf{\hat{F}-F}\right)  ^{\prime}\mathbf{\hat{F}}}%
{T}\right\Vert \left\Vert \left(  \frac{\mathbf{\hat{F}}^{^{\prime}%
}\mathbf{\hat{F}}}{T}\right)  ^{-1}\right\Vert \left(  \frac{\left\Vert
\mathbf{F}^{\prime}\boldsymbol{\varepsilon}_{i\circ}\left(  \lambda
_{T}\right)  \right\Vert }{T}+\frac{\left\Vert \left(  \mathbf{\hat{F}%
-F}\right)  ^{\prime}\boldsymbol{\varepsilon}_{i\circ}\left(  \lambda
_{T}\right)  \right\Vert }{T}\right)  ,
\end{align*}
and using (\ref{sup_fu}) and (\ref{sup_baib1}) we have%
\begin{align*}
\sup_{i}\left\vert B_{4,2,iT}\right\vert  &  \leq2\left(  \sup_{i}\sigma
_{i}\right)  \left(  \sup_{i}\left\Vert \boldsymbol{\gamma}_{i}\right\Vert
\right)  \left(  \sup_{i}\frac{\left\Vert \mathbf{F}^{\prime}%
\boldsymbol{\varepsilon}_{i\circ}\left(  \lambda_{T}\right)  \right\Vert }%
{T}+\sup_{i}\frac{\left\Vert \left(  \mathbf{\hat{F}-F}\right)  ^{\prime
}\boldsymbol{\varepsilon}_{i\circ}\left(  \lambda_{T}\right)  \right\Vert }%
{T}\right) \\
&  \times\left\Vert \frac{\left(  \mathbf{\hat{F}-F}\right)  ^{\prime
}\mathbf{\hat{F}}}{T}\right\Vert \left\Vert \left(  \frac{\mathbf{\hat{F}%
}^{^{\prime}}\mathbf{\hat{F}}}{T}\right)  ^{-1}\right\Vert =O_{p}\left(
\sqrt{\frac{\ln\left(  n\right)  }{T}}\right)  \times O_{p}\left(  \frac
{1}{\delta_{nT}^{2}}\right)  .
\end{align*}
Moreover,%
\begin{align*}
\left\vert B_{4,3,iT}\right\vert  &  =\left\vert \frac{2\sigma_{i}%
\boldsymbol{\gamma}_{i}^{\prime}\left(  \mathbf{\hat{F}-F}\right)  ^{\prime
}\mathbf{M}_{\hat{F}}\left(  \mathbf{\hat{F}-F}\right)  \left(  \mathbf{F}%
^{\prime}\mathbf{F}\right)  ^{-1}\mathbf{F}^{\prime}\boldsymbol{\varepsilon
}_{i\circ}\left(  \lambda_{T}\right)  }{T}\right\vert \\
&  \leq2\sigma_{i}\left\Vert \boldsymbol{\gamma}_{i}\right\Vert \left(
\frac{\left\Vert \mathbf{\hat{F}-F}\right\Vert ^{2}}{T}\right)  \left\Vert
\left(  \frac{\mathbf{F}^{\prime}\mathbf{F}}{T}\right)  ^{-1}\right\Vert
\left\Vert \frac{\mathbf{F}^{\prime}\boldsymbol{\varepsilon}_{i\circ}\left(
\lambda_{T}\right)  }{T}\right\Vert ,
\end{align*}
then taking the supremum,%
\begin{align*}
\sup_{i}\left\vert B_{4,3,iT}\right\vert  &  \leq2\left(  \sup_{i}\sigma
_{i}\right)  \left(  \sup_{i}\left\Vert \boldsymbol{\gamma}_{i}\right\Vert
\right)  \left(  \sup_{i}\left\Vert \frac{\mathbf{F}^{\prime}%
\boldsymbol{\varepsilon}_{i\circ}\left(  \lambda_{T}\right)  }{T}\right\Vert
\right)  \left(  \frac{\left\Vert \mathbf{\hat{F}-F}\right\Vert ^{2}}%
{T}\right)  \left\Vert \left(  \frac{\mathbf{F}^{\prime}\mathbf{F}}{T}\right)
^{-1}\right\Vert \\
&  =O_{p}\left(  \sqrt{\frac{\ln\left(  n\right)  }{T}}\right)  \times
O_{p}\left(  \frac{1}{\delta_{nT}^{2}}\right)  .
\end{align*}
Hence, it follows that
\[
\sup_{i}\left\vert B_{4,iT}\right\vert \leq\sup_{i}\sum_{j=1}^{3}\left\vert
B_{4,j,iT}\right\vert \leq\sum_{j=1}^{3}\sup_{i}\left\vert B_{4,j,iT}%
\right\vert =O_{p}\left(  \sqrt{\frac{\ln\left(  n\right)  }{nT}}\right)  .
\]
Similarly, $\sup_{i}\left\vert B_{5,iT}\right\vert =O_{p}\left(  \sqrt
{\ln\left(  n\right)  /\left(  nT\right)  }\right)  $. For the final term
$B_{6,iT}$, note that%
\begin{align*}
B_{6,iT}  &  =\frac{\sigma_{i}^{2}\left(  \boldsymbol{\varepsilon}_{i\circ
}\left(  \lambda_{T}\right)  +\boldsymbol{\varepsilon}_{i\circ}\right)
^{\prime}\left(  \boldsymbol{\varepsilon}_{i\circ}\left(  \lambda_{T}\right)
-\boldsymbol{\varepsilon}_{i\circ}\right)  }{T}-\frac{\sigma_{i}^{2}\left(
\boldsymbol{\varepsilon}_{i\circ}\left(  \lambda_{T}\right)
+\boldsymbol{\varepsilon}_{i\circ}\right)  ^{\prime}\mathbf{F}}{T}\left(
\frac{\mathbf{F}^{\prime}\mathbf{F}}{T}\right)  ^{-1}\frac{\mathbf{F}^{\prime
}\left(  \boldsymbol{\varepsilon}_{i\circ}\left(  \lambda_{T}\right)
-\boldsymbol{\varepsilon}_{i\circ}\right)  }{T}\\
&  =B_{6,1,iT}+B_{6,2,iT}.
\end{align*}
Since $\boldsymbol{\varepsilon}_{i\circ}\left(  \lambda_{T}\right)
=\boldsymbol{\varepsilon}_{i\circ}+\lambda_{T}\mathbf{b}_{i}$ where
$\boldsymbol{\varepsilon}_{i\circ}=\left(  \varepsilon_{i1},\varepsilon
_{i2},\ldots,\varepsilon_{iT}\right)  ^{\prime}$, $\mathbf{b}_{i}=\left(
b_{i1},b_{i2},\ldots,b_{iT}\right)  ^{\prime}$, $b_{it}=\mathbf{w}%
_{i0}^{\prime}\boldsymbol{\varepsilon}_{\circ t}$, $\mathbf{w}_{i0}=\left(
w_{i1},w_{is},\ldots,w_{in}\right)  ^{\prime}$ and $\boldsymbol{\varepsilon
}_{\circ t}=\left(  \varepsilon_{1t},\varepsilon_{2t},\ldots,\varepsilon
_{nt}\right)  ^{\prime}$, then%
\[
B_{6,1,iT}=\sigma_{i}^{2}\left(  \frac{2\lambda_{T}\boldsymbol{\varepsilon
}_{i\circ}^{\prime}\mathbf{b}_{i}}{T}+\frac{\lambda_{T}^{2}\mathbf{b}%
_{i}^{\prime}\mathbf{b}_{i}}{T}\right)  .
\]
Further, using results (\ref{sup_i(ub)}) and (\ref{sup_i(b2)}) we have%
\[
\sup_{i}\left\vert B_{6,1,iT}\right\vert \leq\left(  \sup_{i}\sigma_{i}%
^{2}\right)  \left(  2\left\vert \lambda_{T}\right\vert \sup_{i}\left\vert
\frac{\boldsymbol{\varepsilon}_{i\circ}^{\prime}\mathbf{b}_{i}}{T}\right\vert
+\lambda_{T}^{2}\sup_{i}\left\vert \frac{\mathbf{b}_{i}^{\prime}\mathbf{b}%
_{i}}{T}\right\vert \right)  =O_{p}\left(  \frac{\sqrt{\ln\left(  n\right)  }%
}{T}\right)  +O_{p}\left(  \frac{1}{T}\right)  .
\]
Now consider $B_{6,2,iT}$ and note that%
\[
\left\vert B_{6,2,iT}\right\vert \leq\sigma_{i}^{2}\left\Vert \frac{\left(
\boldsymbol{\varepsilon}_{i\circ}\left(  \lambda_{T}\right)
+\boldsymbol{\varepsilon}_{i\circ}\right)  ^{\prime}\mathbf{F}}{T}\right\Vert
\left\Vert \left(  \frac{\mathbf{F}^{\prime}\mathbf{F}}{T}\right)
^{-1}\right\Vert \left\Vert \frac{\mathbf{F}^{\prime}\left(
\boldsymbol{\varepsilon}_{i\circ}\left(  \lambda_{T}\right)
-\boldsymbol{\varepsilon}_{i\circ}\right)  }{T}\right\Vert .
\]
Further, using (\ref{sup_fu}) we have%
\begin{align*}
\sup_{i}\left\Vert \frac{\left(  \boldsymbol{\varepsilon}_{i\circ}\left(
\lambda_{T}\right)  +\boldsymbol{\varepsilon}_{i\circ}\right)  ^{\prime
}\mathbf{F}}{T}\right\Vert  &  \leq\sup_{i}\left\Vert \frac
{\boldsymbol{\varepsilon}_{i\circ}\left(  \lambda_{T}\right)  ^{\prime
}\mathbf{F}}{T}\right\Vert +\sup_{i}\left\Vert \frac{\boldsymbol{\varepsilon
}_{i\circ}^{\prime}\mathbf{F}}{T}\right\Vert =O_{p}\left(  \sqrt{\ln\left(
n\right)  /T}\right)  ,\\
\sup_{i}\left\Vert \frac{\mathbf{F}^{\prime}\left(  \boldsymbol{\varepsilon
}_{i\circ}\left(  \lambda_{T}\right)  -\boldsymbol{\varepsilon}_{i\circ
}\right)  }{T}\right\Vert  &  \leq\sup_{i}\left\Vert \frac
{\boldsymbol{\varepsilon}_{i\circ}\left(  \lambda_{T}\right)  ^{\prime
}\mathbf{F}}{T}\right\Vert +\sup_{i}\left\Vert \frac{\boldsymbol{\varepsilon
}_{i\circ}^{\prime}\mathbf{F}}{T}\right\Vert =O_{p}\left(  \sqrt{\ln\left(
n\right)  /T}\right)  .
\end{align*}
Hence, it follows that%
\begin{align*}
\sup_{i}\left\vert B_{6,2,iT}\right\vert  &  \leq\left\Vert \left(
\frac{\mathbf{F}^{\prime}\mathbf{F}}{T}\right)  ^{-1}\right\Vert \left(
\sup_{i}\sigma_{i}^{2}\right)  \left(  \sup_{i}\left\Vert \frac{\left(
\boldsymbol{\varepsilon}_{i\circ}\left(  \lambda_{T}\right)
+\boldsymbol{\varepsilon}_{i\circ}\right)  ^{\prime}\mathbf{F}}{T}\right\Vert
\right) \\
&  \times\left(  \sup_{i}\left\Vert \frac{\mathbf{F}^{\prime}\left(
\boldsymbol{\varepsilon}_{i\circ}\left(  \lambda_{T}\right)
-\boldsymbol{\varepsilon}_{i\circ}\right)  }{T}\right\Vert \right)
=O_{p}\left(  \frac{\ln\left(  n\right)  }{T}\right)  .
\end{align*}
Overall,
\[
\sup_{i}\left\vert B_{6,iT}\right\vert \leq\sup_{i}\left\vert B_{6,1,iT}%
\right\vert +\sup_{i}\left\vert B_{6,2,iT}\right\vert =O_{p}\left(  \frac
{\ln\left(  n\right)  }{T}\right)  .
\]
Using the above results of $B_{1,iT}$ to $B_{6,iT}$, and noting that $n$ and
$T$ are assumed to be of the same order of magnitude, we obtain
\[
\sup_{i}\left\vert \hat{\sigma}_{i,T}^{2}-\omega_{i,T}^{2}\right\vert \leq
\sup_{i}\sum_{j=1}^{6}\left\vert B_{j,iT}\right\vert \leq\sum_{j=1}^{6}%
\sup_{i}\left\vert B_{j,iT}\right\vert =O_{p}\left(  \frac{\ln\left(
n\right)  }{T}\right)  ,
\]
so (\ref{sup:|sgm2-ome2|}) is established. To prove (\ref{sup:|sgm-ome|}),
note that
\[
\sup_{i}\left\vert \hat{\sigma}_{i,T}-\omega_{i,T}\right\vert =\sup_{i}%
\frac{\left\vert \hat{\sigma}_{i,T}^{2}-\omega_{i,T}^{2}\right\vert }%
{\hat{\sigma}_{i,T}+\omega_{i,T}}=\sup_{i}\left(  \frac{1}{\hat{\sigma}%
_{i,T}+\omega_{i,T}}\right)  \sup_{i}\left\vert \hat{\sigma}_{i,T}^{2}%
-\omega_{i,T}^{2}\right\vert .
\]
Since $\hat{\sigma}_{i,T}^{2}>0$ (by construction)
\[
\sup_{i}\left\vert \hat{\sigma}_{i,T}-\omega_{i,T}\right\vert \leq\sup
_{i}\left(  \frac{1}{\omega_{i,T}}\right)  \sup_{i}\left\vert \hat{\sigma
}_{i,T}^{2}-\omega_{i,T}^{2}\right\vert .
\]
By definition of $\omega_{i,T}$, we have $\omega_{i,T}^{-2}=\sigma_{i}%
^{-2}\left(  T^{-1}\boldsymbol{\varepsilon}_{i\circ}^{\prime}\mathbf{M}%
_{F}\boldsymbol{\varepsilon}_{i\circ}\right)  ^{-1}$, so that
\begin{align*}
\sup_{i}\left(  \frac{1}{\omega_{i,T}}\right)   &  =\sup_{i}\left(  \frac
{1}{\omega_{i,T}^{2}}\right)  ^{1/2}\leq\left(  \sup_{i}\frac{1}{\sigma
_{i}^{2}}\right)  ^{1/2}\left(  \sup_{i}\frac{1}{T^{-1}\boldsymbol{\varepsilon
}_{i\circ}^{\prime}\mathbf{M}_{F}\boldsymbol{\varepsilon}_{i\circ}}\right)
^{1/2}\\
&  =\left(  \frac{1}{\inf_{i}\sigma_{i}^{2}}\right)  ^{1/2}\left(  \frac
{1}{\inf_{i}\left(  T^{-1}\boldsymbol{\varepsilon}_{i\circ}^{\prime}%
\mathbf{M}_{F}\boldsymbol{\varepsilon}_{i\circ}\right)  }\right)  ^{1/2}.
\end{align*}
Since $\inf_{i}\sigma_{i}^{2}>c>0$ and by condition (\ref{epsic}) in
Assumption \ref{ass:errors}, $\inf_{i}T^{-1}\boldsymbol{\varepsilon}_{i\circ
}^{\prime}\mathbf{M}_{F}\boldsymbol{\varepsilon}_{i\circ}>c>0$ as
$T\rightarrow\infty$, then it readily follows $\sup_{i}\omega_{i,T}%
^{-1}<C<\infty$, which together with (\ref{sup:|sgm2-ome2|}) now establishes
result (\ref{sup:|sgm-ome|}). Similarly, note that%
\[
\sup_{i}\left\vert \frac{1}{\hat{\sigma}_{i,T}}-\frac{1}{\omega_{i,T}%
}\right\vert \leq\sup_{i}\left(  \frac{1}{\hat{\sigma}_{i,T}}\right)  \sup
_{i}\left(  \frac{1}{\omega_{i,T}}\right)  \sup_{i}\left\vert \hat{\sigma
}_{i,T}-\omega_{i,T}\right\vert ,
\]
where $\sup_{i}\left(  \frac{1}{\hat{\sigma}_{i,T}}\right)  <C$, by
construction. Hence, result (\ref{sup:|sgm(-1)-ome(-1)|}) can be established
using (\ref{sup:|sgm-ome|}). Finally, results (\ref{Ews1}),(\ref{Ews}) and
(\ref{Ewin}) follow using (\ref{sup:|sgm2-ome2|}), (\ref{sup:|sgm-ome|}) and
(\ref{sup:|sgm(-1)-ome(-1)|}), respectively.
\end{proof}

\begin{lemma}
\label{lmhgg}Consider the latent factor model given by (\ref{mod2}) and
(\ref{uit:p}). The latent factors, $\mathbf{f}_{t}$, and their loadings,
$\boldsymbol{\gamma}_{i}$, are estimated by principal components,
$\mathbf{\hat{f}}_{t}$ and $\boldsymbol{\hat{\gamma}}_{i}$, given by
(\ref{hatg}). Suppose that Assumptions \ref{ass:factors}-\ref{ass:ws} hold and
$\left(  n,T\right)  \rightarrow\infty$, such that $n/T\rightarrow\kappa\,,$
for $0<\kappa<\infty$. Then
\begin{align}
\mathbf{d}_{1,nT}  &  =\frac{1}{n}\sum_{i=1}^{n}b_{in}(\boldsymbol{\hat
{\gamma}}_{i}-\boldsymbol{\gamma}_{i})=O_{p}\left(  \sqrt{\frac{\ln\left(
n\right)  }{nT}}\right)  ,\label{hgg}\\
\mathbf{d}_{2,nT}  &  =\frac{1}{n}\sum_{i=1}^{n}(\boldsymbol{\hat{\delta}%
}_{i,T}-\boldsymbol{\delta}_{i,T})=O_{p}\left(  \sqrt{\frac{\ln\left(
n\right)  }{nT}}\right)  ,\label{hdd}\\
\mathbf{d}_{3,nT}  &  =\frac{1}{n}\sum_{i=1}^{n}b_{in}\left(  \boldsymbol{\hat
{\gamma}}_{i}-\boldsymbol{\gamma}_{i}\right)  \boldsymbol{\gamma}_{i}^{\prime
}=O_{p}\left(  \sqrt{\frac{\ln\left(  n\right)  }{nT}}\right)  ,\label{hggg}\\
\mathbf{d}_{4,nT}  &  =\frac{1}{n}\sum_{i=1}^{n}\left(  \omega_{i,T}%
-\sigma_{i}\right)  (\boldsymbol{\hat{\gamma}}_{i}-\boldsymbol{\gamma}%
_{i})=O_{p}\left(  \frac{1}{\delta_{nT}^{2}}\right)  ,\label{hgg_b}\\
\mathbf{d}_{5,nT}  &  =\frac{1}{n}\sum_{i=1}^{n}\left(  \frac{1}{\omega_{i,T}%
}-\frac{1}{\sigma_{i}}\right)  (\boldsymbol{\hat{\gamma}}_{i}%
-\boldsymbol{\gamma}_{i})=O_{p}\left(  \frac{1}{\delta_{nT}^{2}}\right)
,\label{hgg_a}\\
\mathbf{d}_{6,nT}  &  =\frac{1}{n}\sum_{i=1}^{n}\left(  \hat{\sigma}%
_{i,T}-\omega_{i,T}\right)  (\boldsymbol{\hat{\gamma}}_{i}-\boldsymbol{\gamma
}_{i})=O_{p}\left[  \left(  \frac{\ln\left(  n\right)  }{T}\right)
^{3/2}\right]  ,\label{hgg_c}\\
\mathbf{d}_{7,nT}  &  =\frac{1}{n}\sum_{i=1}^{n}\left(  \frac{1}{\hat{\sigma
}_{i,T}}-\frac{1}{\omega_{i,T}}\right)  (\boldsymbol{\hat{\gamma}}%
_{i}-\boldsymbol{\gamma}_{i})=O_{p}\left[  \left(  \frac{\ln\left(  n\right)
}{T}\right)  ^{3/2}\right]  , \label{hgg_d}%
\end{align}
where $\left\{  b_{in}\right\}  _{i=1}^{n}$ is a sequence of fixed values
bounded in $n$, such that $n^{-1}\sum_{i=1}^{n}b_{in}^{2}=O(1)$,
$\boldsymbol{\delta}_{i,T}=\boldsymbol{\gamma}_{i}/\omega_{i,T},$
$\boldsymbol{\hat{\delta}}_{i,T}=\hat{\boldsymbol{\gamma}}_{i}/\omega_{i,T},$
and $\omega_{i,T}=\left(  T^{-1}\sigma_{i}^{2}\boldsymbol{\varepsilon}%
_{i\circ}^{\prime}\mathbf{M}_{F}\boldsymbol{\varepsilon}_{i\circ}\right)
^{1/2}.$
\end{lemma}

\begin{proof}
Note that in general%
\begin{align}
\boldsymbol{\hat{\gamma}}_{i}-\boldsymbol{\gamma}_{i}  &  =\left(
\frac{\mathbf{\hat{F}}^{\prime}\mathbf{\hat{F}}}{T}\right)  ^{-1}\left(
\frac{\mathbf{\hat{F}}^{\prime}\mathbf{F}\boldsymbol{\gamma}_{i}}{T}%
+\frac{\sigma_{i}\mathbf{\hat{F}}^{\prime}\boldsymbol{\varepsilon}_{i\circ
}\left(  \lambda_{T}\right)  }{T}\right)  -\left(  \frac{\mathbf{\hat{F}%
}^{\prime}\mathbf{\hat{F}}}{T}\right)  ^{-1}\left(  \frac{\mathbf{\hat{F}%
}^{\prime}\mathbf{\hat{F}}}{T}\right)  \boldsymbol{\gamma}_{i}\nonumber\\
&  =-\left(  \frac{\mathbf{\hat{F}}^{\prime}\mathbf{\hat{F}}}{T}\right)
^{-1}\left[  \frac{\mathbf{\hat{F}}^{\prime}\left(  \mathbf{\hat{F}-F}\right)
\boldsymbol{\gamma}_{i}}{T}\right]  +\left(  \frac{\mathbf{\hat{F}}^{\prime
}\mathbf{\hat{F}}}{T}\right)  ^{-1}\left(  \frac{\sigma_{i}\mathbf{\hat{F}%
}^{\prime}\boldsymbol{\varepsilon}_{i\circ}\left(  \lambda_{T}\right)  }%
{T}\right)  , \label{hgg0}%
\end{align}
and we have%
\begin{align}
\mathbf{d}_{1,nT}  &  =\frac{1}{n}\sum_{i=1}^{n}b_{in}\left(  \boldsymbol{\hat
{\gamma}}_{i}-\boldsymbol{\gamma}_{i}\right) \nonumber\\
&  =-\left(  \frac{\mathbf{\hat{F}}^{\prime}\mathbf{\hat{F}}}{T}\right)
^{-1}\left[  \frac{\mathbf{\hat{F}}^{\prime}\left(  \mathbf{\hat{F}-F}\right)
}{T}\right]  \left(  \frac{1}{n}\sum_{i=1}^{n}b_{in}\boldsymbol{\gamma}%
_{i}\right)  +\left(  \frac{\mathbf{\hat{F}}^{\prime}\mathbf{\hat{F}}}%
{T}\right)  ^{-1}T^{-1}\left(  \frac{1}{n}\sum_{i=1}^{n}b_{in}\sigma
_{i}\mathbf{\hat{F}}^{\prime}\boldsymbol{\varepsilon}_{i\circ}\left(
\lambda_{T}\right)  \right)  . \label{hgg1}%
\end{align}
Since by assumption $\left\Vert \boldsymbol{\gamma}_{i}\right\Vert <C$, we
have
\[
\left\Vert \frac{1}{n}\sum_{i=1}^{n}b_{in}\boldsymbol{\gamma}_{i}\right\Vert
\leq\left(  \frac{1}{n}\sum_{i=1}^{n}b_{in}^{2}\right)  ^{1/2}\left\Vert
\frac{1}{n}\sum_{i=1}^{n}\boldsymbol{\gamma}_{i}\boldsymbol{\gamma}%
_{i}^{^{\prime}}\right\Vert ^{1/2}\leq\left(  \frac{1}{n}\sum_{i=1}^{n}%
b_{in}^{2}\right)  ^{1/2}\left(  \frac{1}{n}\sum_{i=1}^{n}\left\Vert
\boldsymbol{\gamma}_{i}\right\Vert ^{2}\right)  ^{1/2}<C.
\]
Also by results (\ref{FhatFhat}) and (\ref{baib2}), the first term of
(\ref{hgg1}) is $O_{p}\left(  \delta_{nT}^{-2}\right)  $. For the second term
of (\ref{hgg1}), since $\left(  T^{-1}\mathbf{\hat{F}}^{\prime}\mathbf{\hat
{F}}\right)  ^{-1}=O_{p}(1)$, we note that
\[
T^{-1}\left(  \mathbf{\hat{F}}-\mathbf{F+F}\right)  ^{\prime}\left(  \frac
{1}{n}\sum_{i=1}^{n}b_{in}\sigma_{i}\boldsymbol{\varepsilon}_{i\circ}\left(
\lambda_{T}\right)  \right)  =\frac{1}{n}\sum_{i=1}^{n}b_{in}\left(
\frac{\mathbf{\hat{F}}-\mathbf{F}}{T}\right)  ^{\prime}\sigma_{i}%
\boldsymbol{\varepsilon}_{i\circ}\left(  \lambda_{T}\right)  +\frac{1}{Tn}%
\sum_{i=1}^{n}b_{in}\mathbf{F}^{^{\prime}}\sigma_{i}\boldsymbol{\varepsilon
}_{i\circ}\left(  \lambda_{T}\right)  .
\]
Using result (\ref{sup_baib1}), we have
\begin{align}
&  \left\Vert \frac{1}{n}\sum_{i=1}^{n}b_{in}\left(  \frac{\mathbf{\hat{F}%
}-\mathbf{F}}{T}\right)  ^{\prime}\sigma_{i}\boldsymbol{\varepsilon}_{i\circ
}\left(  \lambda_{T}\right)  \right\Vert \nonumber\\
&  \leq\frac{1}{n}\sum_{i=1}^{n}\left\vert b_{in}\sigma_{i}\right\vert
\left\Vert \left(  \frac{\mathbf{\hat{F}}-\mathbf{F}}{T}\right)  ^{\prime
}\boldsymbol{\varepsilon}_{i\circ}\left(  \lambda_{T}\right)  \right\Vert
\leq\left(  \sup_{i}\left\vert b_{in}\right\vert \right)  \left(  \sup
_{i}\sigma_{i}\right)  \left(  \sup_{i}\left\Vert \frac{\left(  \mathbf{\hat
{F}}-\mathbf{F}\right)  ^{\prime}\boldsymbol{\varepsilon}_{i\circ}\left(
\lambda_{T}\right)  }{T}\right\Vert \right) \nonumber\\
&  =O_{p}\left(  \sqrt{\frac{\ln\left(  n\right)  }{nT}}\right)  .
\label{FhatD}%
\end{align}
Under part (a) of Assumption \ref{ass:errors} and by the serial independence
of $\varepsilon_{it}$,
\begin{align*}
E\left\Vert \frac{1}{\sqrt{nT}}\sum_{i=1}^{n}\sum_{t=1}^{T}b_{in}%
\mathbf{f}_{t}\sigma_{i}\boldsymbol{\varepsilon}_{i\circ}\left(  \lambda
_{T}\right)  \right\Vert ^{2}  &  =\frac{1}{nT}\sum_{i=1}^{n}\sum_{j=1}%
^{n}\sum_{t=1}^{T}b_{in}b_{jn}\sigma_{i}\sigma_{j}E\left\Vert \mathbf{f}%
_{t}\right\Vert ^{2}E\left(  \varepsilon_{it}\left(  \lambda_{T}\right)
\varepsilon_{jt}\left(  \lambda_{T}\right)  \right) \\
&  \leq E\left\Vert \mathbf{f}_{t}\right\Vert ^{2}\left(  \sup_{i}b_{in}%
^{2}\right)  \left(  \sup_{i}\sigma_{i}^{2}\right)  \left[  \frac{1}{nT}%
\sum_{t=1}^{T}\sum_{i=1}^{n}\sum_{j=1}^{n}\left\vert E\left(  \varepsilon
_{it}\left(  \lambda_{T}\right)  \varepsilon_{jt}\left(  \lambda_{T}\right)
\right)  \right\vert \right]
\end{align*}
which is $O\left(  1\right)  $ based on (\ref{ave:rho}) and the boundedness of
$E\left\Vert \mathbf{f}_{t}\right\Vert ^{2}$, $b_{in}^{2}$ and $\sigma_{i}%
^{2}$ required by assumptions. So it follows
\begin{equation}
\frac{1}{nT}\sum_{i=1}^{n}b_{in}\mathbf{F}^{^{\prime}}\sigma_{i}%
\boldsymbol{\varepsilon}_{i\circ}\left(  \lambda_{T}\right)  =\frac{1}%
{\sqrt{nT}}\left(  \frac{1}{\sqrt{nT}}\sum_{i=1}^{n}\sum_{t=1}^{T}%
b_{in}\mathbf{f}_{t}\sigma_{i}\boldsymbol{\varepsilon}_{i\circ}\left(
\lambda_{T}\right)  \right)  =O_{p}\left(  \frac{1}{\sqrt{nT}}\right)  .
\label{Fun}%
\end{equation}
Result (\ref{hgg}) now follows using (\ref{FhatD}) and (\ref{Fun}) in
(\ref{hgg1}), and noting that by assumption $n$ and $T$ are of the same order.
Consider now (\ref{hdd}), which can be written as
\begin{align*}
\mathbf{d}_{2,nT}  &  =\frac{1}{n}\sum_{i=1}^{n}\left(  \frac{\boldsymbol{\hat
{\gamma}}_{i}}{\omega_{i,T}}-\frac{\boldsymbol{\gamma}_{i}}{\omega_{i,T}%
}\right) \\
&  =\frac{1}{n}\sum_{i=1}^{n}\left(  \frac{\boldsymbol{\hat{\gamma}}%
_{i}-\boldsymbol{\gamma}_{i}}{\sigma_{i}}\right)  \left(  1-\frac{\omega
_{i,T}-\sigma_{i}}{\omega_{i,T}}\right) \\
&  =\frac{1}{n}\sum_{i=1}^{n}\left(  \frac{\boldsymbol{\hat{\gamma}}%
_{i}-\boldsymbol{\gamma}_{i}}{\sigma_{i}}\right)  -\frac{1}{n}\sum_{i=1}%
^{n}\left(  \frac{\boldsymbol{\hat{\gamma}}_{i}-\boldsymbol{\gamma}_{i}%
}{\sigma_{i}}\right)  \left(  1-\frac{T}{\boldsymbol{\varepsilon}_{i\circ
}^{\prime}\mathbf{M}_{F}\boldsymbol{\varepsilon}_{i\circ}}\right)  .
\end{align*}
The first term of the above has the same form as (\ref{hgg}), and becomes
identical to it if we replace $a_{i}$ in (\ref{hgg}) with $1/\sigma_{i}$,
since by assumption $\inf_{i}(\sigma_{i})>c$. Hence, the order of the first
term is $O_{p}(\sqrt{\ln\left(  n\right)  /\left(  nT\right)  })$. Also the
second term is dominated by the first term, since $1-\left(  T^{-1}%
\boldsymbol{\varepsilon}_{i\circ}^{\prime}\mathbf{M}_{F}%
\boldsymbol{\varepsilon}_{i\circ}\right)  ^{-1}=O_{p}(T^{-1})$ based on result
(\ref{Es}). Therefore, (\ref{hdd}) is established as required. For
(\ref{hggg}), note that
\begin{align}
\mathbf{d}_{3,nT}  &  =\frac{1}{n}\sum_{i=1}^{n}b_{in}\left[  -\left(
\mathbf{\hat{F}}^{\prime}\mathbf{\hat{F}}\right)  ^{-1}\mathbf{\hat{F}%
}^{\prime}\left(  \mathbf{\hat{F}-F}\right)  \boldsymbol{\gamma}_{i}%
+\sigma_{i}\left(  \mathbf{\hat{F}}^{\prime}\mathbf{\hat{F}}\right)
^{-1}\mathbf{\hat{F}}^{\prime}\boldsymbol{\varepsilon}_{i\circ}\left(
\lambda_{T}\right)  \right]  \boldsymbol{\gamma}_{i}^{\prime}\nonumber\\
&  =-\left(  \frac{\mathbf{\hat{F}}^{\prime}\mathbf{\hat{F}}}{T}\right)
^{-1}\frac{\mathbf{\hat{F}}^{\prime}\left(  \mathbf{\hat{F}-F}\right)  }%
{T}\left(  \frac{1}{n}\sum_{i=1}^{n}b_{in}\boldsymbol{\gamma}_{i}%
\boldsymbol{\gamma}_{i}^{\prime}\right)  +\left(  \frac{\mathbf{\hat{F}%
}^{\prime}\mathbf{\hat{F}}}{T}\right)  ^{-1}T^{-1}\mathbf{\hat{F}}^{\prime
}\left(  \frac{1}{n}\sum_{i=1}^{n}b_{in}\sigma_{i}\boldsymbol{\varepsilon
}_{i\circ}\left(  \lambda_{T}\right)  \boldsymbol{\gamma}_{i}^{\prime}\right)
\nonumber\\
&  =-\left(  \frac{\mathbf{\hat{F}}^{\prime}\mathbf{\hat{F}}}{T}\right)
^{-1}\frac{\mathbf{\hat{F}}^{\prime}\left(  \mathbf{\hat{F}-F}\right)  }%
{T}\left(  \frac{1}{n}\sum_{i=1}^{n}b_{in}\boldsymbol{\gamma}_{i}%
\boldsymbol{\gamma}_{i}^{\prime}\right)  +\left(  \frac{\mathbf{\hat{F}%
}^{\prime}\mathbf{\hat{F}}}{T}\right)  ^{-1}\left(  \frac{1}{n}\sum_{i=1}%
^{n}T^{-1}b_{in}\sigma_{i}\left(  \mathbf{\hat{F}-F}\right)  ^{\prime
}\boldsymbol{\varepsilon}_{i\circ}\left(  \lambda_{T}\right)
\boldsymbol{\gamma}_{i}^{\prime}\right) \nonumber\\
&  +\left(  \frac{\mathbf{\hat{F}}^{\prime}\mathbf{\hat{F}}}{T}\right)
^{-1}\left(  \frac{1}{n}\sum_{i=1}^{n}T^{-1}b_{in}\sigma_{i}\mathbf{F}%
^{\prime}\boldsymbol{\varepsilon}_{i\circ}\left(  \lambda_{T}\right)
\boldsymbol{\gamma}_{i}^{\prime}\right)  . \label{hagg4}%
\end{align}
Recall that $\left(  T^{-1}\mathbf{\hat{F}}^{\prime}\mathbf{\hat{F}}\right)
^{-1}=O_{p}(1)$, and $n^{-1}\sum_{i=1}^{n}\boldsymbol{\gamma}_{i}%
\boldsymbol{\gamma}_{i}^{\prime}=O_{p}(1)$. Also note that $b_{in}$ is bounded
in $n$. Then using (\ref{baib3}) it follows that ($n$ and $T$ being of the
same order)
\[
\left(  \frac{\mathbf{\hat{F}}^{\prime}\mathbf{\hat{F}}}{T}\right)  ^{-1}%
\frac{\mathbf{\hat{F}}^{\prime}\left(  \mathbf{\hat{F}-F}\right)  }{T}\left(
n^{-1}\sum_{i=1}^{n}b_{in}\boldsymbol{\gamma}_{i}\boldsymbol{\gamma}%
_{i}^{\prime}\right)  =O_{p}\left(  \frac{1}{\min(n,T)}\right)  =O_{p}%
(\delta_{nT\ }^{-2}).
\]
Similarly, using (\ref{sup_baib1})
\[
\left(  \frac{\mathbf{\hat{F}}^{\prime}\mathbf{\hat{F}}}{T}\right)
^{-1}\left(  n^{-1}\sum_{i=1}^{n}T^{-1}\left(  \mathbf{\hat{F}-F}\right)
^{\prime}b_{in}\sigma_{i}\boldsymbol{\varepsilon}_{i\circ}\left(  \lambda
_{T}\right)  \boldsymbol{\gamma}_{i}^{\prime}\right)  =O_{p}\left(
\sqrt{\frac{\ln\left(  n\right)  }{nT}}\right)  .
\]
The last term of ( \ref{hagg4})\ can be written as $\left(  nT\right)
^{-1/2}\left(  T^{-1}\mathbf{\hat{F}}^{\prime}\mathbf{\hat{F}}\right)
^{-1}\left(  n^{-1/2}T^{-1/2}\sum_{i=1}^{n}b_{in}\sigma_{i}\mathbf{F}^{\prime
}\boldsymbol{\varepsilon}_{i\circ}\left(  \lambda_{T}\right)
\boldsymbol{\gamma}_{i}^{\prime}\right)  ,$ where $n^{-1/2}T^{-1/2}\sum
_{i=1}^{n}b_{in}\sigma_{i}\mathbf{F}^{\prime}\boldsymbol{\varepsilon}_{i\circ
}\left(  \lambda_{T}\right)  \boldsymbol{\gamma}_{i}^{\prime}$ is an
$m_{0}\times m_{0}$ matrix with its $(j,j^{\prime})$ element given by
$n^{-1/2}T^{-1/2}\sum_{i=1}^{n}\sum_{t=1}^{T}$ $b_{in}\sigma_{i}%
f_{jt}\varepsilon_{it}\left(  \lambda_{T}\right)  \gamma_{ij^{\prime}}$ for
$j,j^{\prime}=1,2,\ldots,m_{0}$. It can be further shown
\begin{align*}
&  E\left(  n^{-1/2}T^{-1/2}\sum_{i=1}^{n}\sum_{t=1}^{T}b_{in}\sigma_{i}%
f_{jt}\gamma_{ij^{\prime}}\varepsilon_{it}\left(  \lambda_{T}\right)  \right)
^{2}\\
&  =\frac{1}{nT}\sum_{i=1}^{n}\sum_{i^{\prime}=1}^{n}\sum_{t=1}^{T}%
b_{in}b_{i^{\prime}n}\sigma_{i}\sigma_{i^{\prime}}\gamma_{ij^{\prime}}%
\gamma_{i^{\prime}j^{\prime}}E\left(  f_{jt}^{2}\right)  E\left(
\varepsilon_{it}\left(  \lambda_{T}\right)  \varepsilon_{i^{\prime}t}\left(
\lambda_{T}\right)  \right) \\
&  \leq\frac{1}{nT}\sum_{i=1}^{n}\sum_{i^{\prime}=1}^{n}\sum_{t=1}%
^{T}\left\vert b_{in}b_{i^{\prime}n}\sigma_{i}\sigma_{i^{\prime}}%
\gamma_{ij^{\prime}}\gamma_{i^{\prime}j^{\prime}}\right\vert E\left(
f_{jt}^{2}\right)  \left\vert E\left(  \varepsilon_{it}\left(  \lambda
_{T}\right)  \varepsilon_{i^{\prime}t}\left(  \lambda_{T}\right)  \right)
\right\vert \\
&  \leq\left(  \sup_{i}b_{in}^{2}\right)  \left(  \sup_{i}\gamma_{ij^{\prime}%
}^{2}\right)  \left(  \sup_{i}\sigma_{i}^{2}\right)  E\left(  f_{jt}%
^{2}\right)  \left[  \frac{1}{nT}\sum_{i=1}^{n}\sum_{i^{\prime}=1}^{n}%
\sum_{t=1}^{T}\left\vert E\left(  \varepsilon_{it}\left(  \lambda_{T}\right)
\varepsilon_{i^{\prime}t}\left(  \lambda_{T}\right)  \right)  \right\vert
\right]
\end{align*}
which is $O\left(  1\right)  $ based on (\ref{ave:rho}) and the boundedness of
$b_{in}^{2}$, $\gamma_{ij^{\prime}}^{2}$, $\sigma_{i}^{2}$ and $E\left(
f_{jt}^{2}\right)  $. Consequentially, $n^{-1/2}T^{-1/2}\sum_{i=1}^{n}%
b_{in}\sigma_{i}\mathbf{F}^{\prime}\boldsymbol{\varepsilon}_{i\circ}\left(
\lambda_{T}\right)  \boldsymbol{\gamma}_{i}^{\prime}=O_{p}\left(  1\right)  $
and the last term of (\ref{hagg4}) are also $O_{p}(\delta_{nT\ }^{-2})$. Thus
result (\ref{hggg}) is established, as required. To prove (\ref{hgg_b}) we
first write it as%
\[
\mathbf{d}_{4,nT}=\left(  \sqrt{\frac{1}{T}}\right)  \frac{1}{n}\sum_{i=1}%
^{n}q_{iT}\left(  \boldsymbol{\hat{\gamma}}_{i}-\boldsymbol{\gamma}%
_{i}\right)  ,
\]
where
\[
q_{iT}=\sqrt{T}\left(  \omega_{i,T}-\sigma_{i}\right)  =\sigma_{i}\sqrt
{T}\left[  \left(  \frac{\boldsymbol{\varepsilon}_{i\circ}^{\prime}%
\mathbf{M}_{F}\boldsymbol{\varepsilon}_{i\circ}}{T}\right)  ^{1/2}-1\right]
,
\]
and conditional on $\mathbf{F}$ and $\sigma_{i}$, $q_{iT}$ are independently
distributed across $i$. Using results in Lemma \ref{Emom} it is easily seen
that $E\left(  q_{iT}\right)  =O(T^{-1/2})$ and $Var\left(  q_{iT}\right)
=O(1)$, and hence $n^{-1}\sum_{i=1}^{n}q_{iT}^{2}=O_{p}(1)$. Also by
Cauchy-Schwarz inequality we have
\[
\left\Vert \mathbf{d}_{4,nT}\right\Vert \leq\left(  \sqrt{\frac{1}{T}}\right)
\left(  n^{-1}\sum_{i=1}^{n}q_{iT}^{2}\right)  ^{1/2}\left(  n^{-1/2}%
\left\Vert \mathbf{\hat{\Gamma}}-\mathbf{\Gamma}\right\Vert \right)  ,
\]
where $T^{-1}n=\ominus(1)$, and by (\ref{lm2.2}) $n^{-1/2}\left\Vert
\mathbf{\hat{\Gamma}}-\mathbf{\Gamma}\right\Vert =O_{p}(\delta_{nT}^{-1})$,
and (\ref{hgg_b}) is established. Result (\ref{hgg_a}) follows similarly, with
$q_{iT}$ re-defined as $q_{iT}=\sigma_{i}^{-1}\sqrt{T}\left[  \left(
\frac{\boldsymbol{\varepsilon}_{i\circ}^{\prime}\mathbf{M}_{F}%
\boldsymbol{\varepsilon}_{i\circ}}{T}\right)  ^{-1/2}-1\right]  $, and noting
that $\sup_{i}(1/\sigma_{i}^{2})<C$, and using results in Lemma \ref{Emom}.
Result (\ref{hgg_c}) is established as
\begin{align*}
\left\Vert \mathbf{d}_{6,nT}\right\Vert  &  =\left\Vert \frac{1}{n}\sum
_{i=1}^{n}\left(  \hat{\sigma}_{i,T}-\omega_{i,T}\right)  \left(
\boldsymbol{\hat{\gamma}}_{i}-\boldsymbol{\gamma}_{i}\right)  \right\Vert
\leq\frac{1}{n}\sum_{i=1}^{n}\left\Vert \left(  \hat{\sigma}_{i,T}%
-\omega_{i,T}\right)  \left(  \boldsymbol{\hat{\gamma}}_{i}-\boldsymbol{\gamma
}_{i}\right)  \right\Vert \\
&  \leq\left(  \frac{1}{n}\sum_{i=1}^{n}\left\vert \hat{\sigma}_{i,T}%
-\omega_{i,T}\right\vert \right)  \left(  \sup_{i}\left\Vert \boldsymbol{\hat
{\gamma}}_{i}-\boldsymbol{\gamma}_{i}\right\Vert \right)  =O_{p}\left[
\left(  \frac{\ln\left(  n\right)  }{T}\right)  ^{3/2}\right]
\end{align*}
where by (\ref{sup_|hatg-g|}) $\sup_{i}\left\Vert \boldsymbol{\hat{\gamma}%
}_{i}-\boldsymbol{\gamma}_{i}\right\Vert =O_{p}\left(  \sqrt{\ln\left(
n\right)  /T}\right)  $, and by (\ref{Ews}) $n^{-1}\sum_{i=1}^{n}\left\vert
\hat{\sigma}_{i,T}-\omega_{i,T}\right\vert =O_{p}(\ln\left(  n\right)  /T)$.
Similarly by (\ref{lm2.2}) and (\ref{Ewin}) we have%
\begin{align*}
\left\Vert \mathbf{d}_{7,nT}\right\Vert  &  =\left\Vert \frac{1}{n}\sum
_{i=1}^{n}\left(  \frac{1}{\hat{\sigma}_{i,T}}-\frac{1}{\omega_{i,T}}\right)
\left(  \boldsymbol{\hat{\gamma}}_{i}-\boldsymbol{\gamma}_{i}\right)
\right\Vert \leq\frac{1}{n}\sum_{i=1}^{n}\left\Vert \left(  \frac{1}%
{\hat{\sigma}_{i,T}}-\frac{1}{\omega_{i,T}}\right)  \left(  \boldsymbol{\hat
{\gamma}}_{i}-\boldsymbol{\gamma}_{i}\right)  \right\Vert \\
&  \leq\left(  \frac{1}{n}\sum_{i=1}^{n}\left\vert \frac{1}{\hat{\sigma}%
_{i,T}}-\frac{1}{\omega_{i,T}}\right\vert \right)  \left(  \sup_{i}\left\Vert
\boldsymbol{\hat{\gamma}}_{i}-\boldsymbol{\gamma}_{i}\right\Vert \right)
=O_{p}\left[  \left(  \frac{\ln\left(  n\right)  }{T}\right)  ^{3/2}\right]  .
\end{align*}

\end{proof}

\begin{lemma}
\label{lm5}Consider the latent factor model given by (\ref{mod2}) and
(\ref{uit:p}). Suppose that Assumptions \ref{ass:factors}-\ref{ass:ws} hold
and $\left(  n,T\right)  \rightarrow\infty$, such that $n/T\rightarrow
\kappa\,,$ for $0<\kappa<\infty$. Then
\begin{align}
p_{nT}\left(  \lambda_{T}\right)   &  =\frac{1}{\sqrt{T}}\sum_{t=1}%
^{T}s_{t,nT}^{2}\left(  \lambda_{T}\right)  =o_{p}(1),\label{pnTS}\\
q_{nT}\left(  \lambda_{T}\right)   &  =\frac{1}{\sqrt{T}}\sum_{t=1}^{T}%
\psi_{t,nT}\left(  \lambda_{T}\right)  s_{t,nT}\left(  \lambda_{T}\right)
=o_{p}(1), \label{qnTS}%
\end{align}
where $\psi_{t,nT}\left(  \lambda_{T}\right)  $ and $s_{t,nT}\left(
\lambda_{T}\right)  $ are defined by (\ref{epsinT1}) and (\ref{snT}), respectively.
\end{lemma}

\begin{proof}
Using (\ref{snT}), recall that
\begin{align}
s_{t,nT}\left(  \lambda_{T}\right)   &  =\boldsymbol{\varphi}_{nT}^{\prime
}\left[  n^{-1/2}\sum_{i=1}^{n}\left(  \boldsymbol{\hat{\gamma}}%
_{i}-\boldsymbol{\gamma}_{i}\right)  \sigma_{i}\varepsilon_{it}\left(
\lambda_{T}\right)  \right]  +\boldsymbol{\varphi}_{nT}^{\prime}\left[
n^{-1/2}\sum_{i=1}^{n}\left(  \boldsymbol{\hat{\gamma}}_{i}-\boldsymbol{\gamma
}_{i}\right)  \boldsymbol{\gamma}_{i}^{\prime}\right]  \mathbf{f}%
_{t}\nonumber\\
&  +\left[  n^{-1/2}\sum_{i=1}^{n}\left(  \boldsymbol{\hat{\delta}}%
_{i,T}-\boldsymbol{\delta}_{i,T}\right)  ^{\prime}\right]  \mathbf{f}%
_{t}+\left[  n^{-1/2}\sum_{i=1}^{n}\left(  \boldsymbol{\hat{\delta}}%
_{i,T}-\boldsymbol{\delta}_{i,T}\right)  ^{^{\prime}}\right]  \left(
\mathbf{\hat{f}}_{t}-\mathbf{f}_{t}\right)  .\nonumber\\
&  \label{sntTx}%
\end{align}
We also note that using (\ref{epsinT}), $\psi_{t,nT}\left(  \lambda
_{T}\right)  $ can be written as%
\begin{equation}
\psi_{t,nT}\left(  \lambda_{T}\right)  =\xi_{t,n}\left(  \lambda_{T}\right)
-\left(  \boldsymbol{\varphi}_{nT}-\boldsymbol{\varphi}_{n}\right)  ^{\prime
}\boldsymbol{\kappa}_{t,n}\left(  \lambda_{T}\right)  +\upsilon_{t,nT}\left(
\lambda_{T}\right)  \label{egzintT}%
\end{equation}
where
\begin{align}
\xi_{t,n}\left(  \lambda_{T}\right)   &  =\frac{1}{\sqrt{n}}\sum_{i=1}%
^{n}a_{i,n}\varepsilon_{it}\left(  \lambda_{T}\right)  ,\text{ }%
a_{i,n}=1-\sigma_{i}\boldsymbol{\varphi}_{n}^{\prime}\boldsymbol{\gamma}%
_{i},\label{egziS}\\
\boldsymbol{\kappa}_{t,n}\left(  \lambda_{T}\right)   &  =\frac{1}{\sqrt{n}%
}\sum_{i=1}^{n}\boldsymbol{\gamma}_{i}\sigma_{i}\varepsilon_{it}\left(
\lambda_{T}\right)  ,\label{kapaS}\\
\upsilon_{t,nT}\left(  \lambda_{T}\right)   &  =\frac{1}{\sqrt{n}}\sum
_{i=1}^{n}\left[  \frac{1}{\left(  \boldsymbol{\varepsilon}_{i\circ}^{\prime
}\mathbf{M}_{F}\boldsymbol{\varepsilon}_{i\circ}/T\right)  ^{1/2}}-1\right]
\varepsilon_{it}\left(  \lambda_{T}\right)  . \label{vS}%
\end{align}
After squaring $s_{t,nT}\left(  \lambda_{T}\right)  $, we end up with
$p_{nT}\left(  \lambda_{T}\right)  =\sum_{j=1}^{10}A_{j,nT}\left(  \lambda
_{T}\right)  $, composed of four squared terms and six cross product terms.
For the first square term we have
\[
A_{1,nT}\left(  \lambda_{T}\right)  =\sqrt{T}\boldsymbol{\varphi}_{nT}%
^{\prime}\left(  \frac{1}{T}\sum_{t=1}^{T}\mathbf{b}_{t,n}\left(  \lambda
_{T}\right)  \mathbf{b}_{t,n}^{^{\prime}}\left(  \lambda_{T}\right)  \right)
\boldsymbol{\varphi}_{nT},
\]
where $\mathbf{b}_{t,n}\left(  \lambda_{T}\right)  =n^{-1/2}\sum_{i=1}%
^{n}\left(  \boldsymbol{\hat{\gamma}}_{i}-\boldsymbol{\gamma}_{i}\right)
\sigma_{i}\varepsilon_{it}\left(  \lambda_{T}\right)  $. Let $\mathbf{u}%
_{\circ t}\left(  \lambda_{T}\right)  =\left(  \sigma_{1}\varepsilon
_{1t}\left(  \lambda_{T}\right)  ,\sigma_{2}\varepsilon_{2t}\left(
\lambda_{T}\right)  ,\ldots,\sigma_{n}\varepsilon_{nt}\left(  \lambda
_{T}\right)  \right)  ^{\prime}$ so that $\mathbf{b}_{t,n}\left(  \lambda
_{T}\right)  =n^{-1/2}\left(  \boldsymbol{\hat{\Gamma}}-\boldsymbol{\Gamma
}\right)  ^{^{\prime}}\mathbf{u}_{\circ t}\left(  \lambda_{T}\right)  $. Then%
\begin{equation}
\left\vert A_{1,nT}\left(  \lambda_{T}\right)  \right\vert \leq\frac{\sqrt{T}%
}{n}\left\Vert \boldsymbol{\varphi}_{nT}\right\Vert ^{2}\left\Vert
\boldsymbol{\hat{\Gamma}}-\boldsymbol{\Gamma}\right\Vert ^{2}\left\Vert
\mathbf{V}_{T}\left(  \lambda_{T}\right)  \right\Vert , \label{A1}%
\end{equation}
where $\mathbf{V}_{T}\left(  \lambda_{T}\right)  =T^{-1}\sum_{t=1}%
^{T}\mathbf{u}_{\circ t}\left(  \lambda_{T}\right)  \mathbf{u}_{\circ
t}^{^{\prime}}\left(  \lambda_{T}\right)  .$ Since $\boldsymbol{\varphi}%
_{n}=n^{-1}\sum_{i=1}^{n}\boldsymbol{\gamma}_{i}/\sigma_{i}=O(1)$ by
Assumption \ref{ass:loadings}, and $\boldsymbol{\varphi}_{nT}%
=\boldsymbol{\varphi}_{n}+o_{p}\left(  1\right)  $ by result (\ref{phinLemma})
in Lemma \ref{PHIn}, then
\begin{equation}
\boldsymbol{\varphi}_{nT}=O_{p}\left(  1\right)  . \label{varphi_nT}%
\end{equation}
Further, using (\ref{uit:p}) in the paper, $\mathbf{u}_{\circ t}\left(
\lambda_{T}\right)  =\mathbf{D}_{0}\left(  \boldsymbol{\varepsilon}_{\circ
t}+\lambda_{T}\mathbf{W}\boldsymbol{\varepsilon}_{\circ t}\right)  $, where
$\mathbf{D}_{0}=diag\left(  \sigma_{1},\sigma_{2},\ldots,\sigma_{n}\right)  $,
$\boldsymbol{\varepsilon}_{\circ t}=\left(  \varepsilon_{1t},\varepsilon
_{2t},\ldots,\varepsilon_{nT}\right)  ^{\prime}$, and $\mathbf{W}=\left(
w_{ij}\right)  $. Therefore%
\[
\mathbf{u}_{\circ t}\left(  \lambda_{T}\right)  \mathbf{u}_{\circ t}^{\prime
}\left(  \lambda_{T}\right)  =\mathbf{D}_{0}\left(  \boldsymbol{\varepsilon
}_{\circ t}\boldsymbol{\varepsilon}_{\circ t}^{\prime}+\lambda_{T}%
^{2}\mathbf{W}\boldsymbol{\varepsilon}_{\circ t}\boldsymbol{\varepsilon
}_{\circ t}^{\prime}\mathbf{W}^{\prime}+\lambda_{T}\boldsymbol{\varepsilon
}_{\circ t}\boldsymbol{\varepsilon}_{\circ t}^{\prime}\mathbf{W}^{\prime
}+\lambda_{T}\mathbf{W}\boldsymbol{\varepsilon}_{\circ t}%
\boldsymbol{\varepsilon}_{\circ t}^{\prime}\right)  \mathbf{D}_{0},
\]
and%
\[
\mathbf{V}_{T}\left(  \lambda_{T}\right)  =\mathbf{D}_{0}\left(
\mathbf{V}_{\varepsilon T}+\lambda_{T}^{2}\mathbf{W\mathbf{V}}_{\varepsilon
T}\mathbf{W}^{\prime}+\lambda_{T}\mathbf{V}_{\varepsilon T}\mathbf{W}^{\prime
}+\lambda_{T}\mathbf{W\mathbf{V}}_{\varepsilon T}\right)  \mathbf{D}_{0},
\]
where $\mathbf{V}_{\varepsilon T}=T^{-1}\sum_{t=1}^{T}\boldsymbol{\varepsilon
}_{\circ t}\boldsymbol{\varepsilon}_{\circ t}^{\prime}$. It follows that
\[
\left\Vert \mathbf{V}_{T}\left(  \lambda_{T}\right)  \right\Vert
\leq\left\Vert \mathbf{\mathbf{V}}_{\varepsilon T}\right\Vert \left\Vert
\mathbf{D}_{0}\right\Vert ^{2}\left(  \left\Vert \mathbf{I}_{n}\right\Vert
+\lambda_{T}^{2}\left\Vert \mathbf{W}\right\Vert \left\Vert \mathbf{W}%
^{\prime}\right\Vert +\left\vert \lambda_{T}\right\vert \left\Vert
\mathbf{W}^{\prime}\right\Vert +\left\vert \lambda_{T}\right\vert \left\Vert
\mathbf{W}\right\Vert \right)  .
\]
Note that $\left\Vert \mathbf{D}_{0}\right\Vert $ and $\left\Vert
\mathbf{W}\right\Vert $ are both bounded, and by part (b) of Assumption
\ref{ass:errors} we have $\left\Vert \mathbf{V}_{\varepsilon T}\right\Vert
=\mu_{max}\left(  \mathbf{V}_{\varepsilon T}\right)  =O_{p}\left(  \frac{n}%
{T}\right)  $. Hence,
\begin{equation}
\left\Vert \mathbf{V}_{T}\left(  \lambda_{T}\right)  \right\Vert =O_{p}\left(
\frac{n}{T}\right)  . \label{|VT|}%
\end{equation}
Since $n$ and $T$ are of the same order of magnitude, then using results
(\ref{lm2.2}), (\ref{varphi_nT}), and (\ref{|VT|}) in (\ref{A1}) yields
\[
\left\vert A_{1,nT}\left(  \lambda_{T}\right)  \right\vert =\frac{\sqrt{T}}%
{n}O_{p}\left(  \frac{n}{\delta_{nT}^{2}}\right)  =O_{p}\left(  \frac
{1}{\delta_{nT}}\right)  .
\]
For the second squared term we have
\[
A_{2,nT}\left(  \lambda_{T}\right)  =\sqrt{T}\left[  n^{-1/2}\sum_{i=1}%
^{n}\left(  \boldsymbol{\hat{\delta}}_{i,T}-\boldsymbol{\delta}_{i,T}\right)
^{\prime}\right]  \left(  T^{-1}\mathbf{F}^{\prime}\mathbf{F}\right)  \left[
n^{-1/2}\sum_{i=1}^{n}\left(  \boldsymbol{\hat{\delta}}_{i,T}%
-\boldsymbol{\delta}_{i,T}\right)  \right]  .
\]
where $T^{-1}\mathbf{F}^{\prime}\mathbf{F}=O_{p}(1)$ and using (\ref{hdd})
$n^{-1/2}\sum_{i=1}^{n}\left(  \boldsymbol{\hat{\delta}}_{i,T}%
-\boldsymbol{\delta}_{i,T}\right)  =O_{p}\left(  \sqrt{\ln\left(  n\right)
/T}\right)  $. Hence, $A_{2,nT}\left(  \lambda_{T}\right)  =$ $O_{p}\left(
\ln\left(  n\right)  /\sqrt{T}\right)  =o_{p}(1)$. Similarly,%
\begin{align*}
A_{3,nT}\left(  \lambda_{T}\right)   &  =\sqrt{T}\boldsymbol{\varphi}%
_{nT}^{\prime}\left[  n^{-1/2}\sum_{i=1}^{n}\left(  \boldsymbol{\hat{\gamma}%
}_{i}-\boldsymbol{\gamma}_{i}\right)  \boldsymbol{\gamma}_{i}^{\prime}\right]
\left(  T^{-1}\sum_{t=1}^{T}\mathbf{f}_{t}\mathbf{f}_{t}^{\prime}\right)
\left[  n^{-1/2}\sum_{i=1}^{n}\boldsymbol{\gamma}_{i}\left(  \boldsymbol{\hat
{\gamma}}-\boldsymbol{\gamma}_{i}\right)  ^{\prime}\right]
\boldsymbol{\varphi}_{nT}\\
&  =\sqrt{T}\boldsymbol{\varphi}_{nT}^{\prime}\left[  n^{-1/2}\sum_{i=1}%
^{n}\left(  \boldsymbol{\hat{\gamma}}_{i}-\boldsymbol{\gamma}_{i}\right)
\boldsymbol{\gamma}_{i}^{\prime}\right]  \left(  T^{-1}\mathbf{F}^{\prime
}\mathbf{F}\right)  \left[  n^{-1/2}\sum_{i=1}^{n}\boldsymbol{\gamma}%
_{i}\left(  \boldsymbol{\hat{\gamma}}-\boldsymbol{\gamma}_{i}\right)
^{\prime}\right]  \boldsymbol{\varphi}_{nT},
\end{align*}
where $\Vert\boldsymbol{\varphi}_{nT}\Vert$ is bounded by (\ref{varphi_nT}),
and by (\ref{hggg}) $n^{-1/2}\sum_{i=1}^{n}\left(  \boldsymbol{\hat{\gamma}%
}_{i}-\boldsymbol{\gamma}_{i}\right)  \boldsymbol{\gamma}_{i}^{\prime}%
=O_{p}\left(  \sqrt{\ln\left(  n\right)  /T}\right)  $. Hence, $A_{3,nT}%
\left(  \lambda_{T}\right)  =$ $O_{p}\left(  \ln\left(  n\right)  /\sqrt
{T}\right)  =o_{p}(1)$. For the final squared term,
\[
A_{4,nT}\left(  \lambda_{T}\right)  =\sqrt{T}\left[  \frac{1}{\sqrt{n}}%
\sum_{i=1}^{n}\left(  \boldsymbol{\hat{\delta}}_{i,T}-\boldsymbol{\delta
}_{i,T}\right)  \right]  ^{\prime}\left(  \frac{\left\Vert \hat{\mathbf{F}%
}-\mathbf{F}\right\Vert ^{2}}{T}\right)  \left[  \frac{1}{\sqrt{n}}\sum
_{i=1}^{n}\left(  \boldsymbol{\hat{\delta}}_{i,T}-\boldsymbol{\delta}%
_{i,T}\right)  \right]  ,
\]
By results (\ref{lm2.1}) and (\ref{hdd}), it follows $A_{4,nT}\left(
\lambda_{T}\right)  =\sqrt{T}O_{p}\left(  \delta_{nT\ }^{-2}\right)  \times
O_{p}\left(  \ln\left(  n\right)  /T\right)  =o_{p}(1)$. The probability
orders of the cross product terms of $p_{nT}\left(  \lambda_{T}\right)  $,
namely $A_{5,NT}\left(  \lambda_{T}\right)  ,\ldots,A_{10,NT}\left(
\lambda_{T}\right)  $, are also easily seen to be $o_{p}(1)$, by application
of the Cauchy-Schwarz inequality to the product pairs of the terms
$A_{1,NT}\left(  \lambda_{T}\right)  ,A_{2,NT}\left(  \lambda_{T}\right)
,A_{3,NT}\left(  \lambda_{T}\right)  ,$ and $A_{4,NT}\left(  \lambda
_{T}\right)  $. Thus, overall $p_{nT}\left(  \lambda_{T}\right)  =o_{p}(1)$,
as required. Consider now $q_{nT}\left(  \lambda_{T}\right)  $ and note that
it can be written as (using (\ref{egzintT}) in (\ref{qnTS}))
\begin{align*}
q_{nT}\left(  \lambda_{T}\right)   &  =\frac{1}{\sqrt{T}}\sum_{t=1}%
^{T}s_{t,nT}\left(  \lambda_{T}\right)  \xi_{t,n}\left(  \lambda_{T}\right)
-\left(  \boldsymbol{\varphi}_{nT}-\boldsymbol{\varphi}_{n}\right)  ^{\prime
}\frac{1}{\sqrt{T}}\sum_{t=1}^{T}s_{t,nT}\left(  \lambda_{T}\right)
\boldsymbol{\kappa}_{t,n}\left(  \lambda_{T}\right) \\
&  +\frac{1}{\sqrt{T}}\sum_{t=1}^{T}s_{t,nT}\left(  \lambda_{T}\right)
\upsilon_{t,nT}\left(  \lambda_{T}\right)  ,
\end{align*}
where $\xi_{t,n}\left(  \lambda_{T}\right)  $, $\upsilon_{t,nT}\left(
\lambda_{T}\right)  $ and $\boldsymbol{\kappa}_{t,n}\left(  \lambda
_{T}\right)  \mathbf{\ }$are given by (\ref{egziS}), (\ref{kapaS}) and
(\ref{vS}), respectively. The first term of the above can be written as%
\begin{align*}
\frac{1}{\sqrt{T}}\sum_{t=1}^{T}s_{t,nT}\left(  \lambda_{T}\right)  \xi
_{t,n}\left(  \lambda_{T}\right)   &  =\boldsymbol{\varphi}_{nT}^{\prime
}\left[  \frac{1}{\sqrt{n}}\sum_{i=1}^{n}\left(  \boldsymbol{\hat{\gamma}}%
_{i}-\boldsymbol{\gamma}_{i}\right)  \frac{1}{\sqrt{T}}\sum_{t=1}^{T}\xi
_{t,n}\left(  \lambda_{T}\right)  \sigma_{i}\varepsilon_{it}\left(
\lambda_{T}\right)  \right] \\
&  +\boldsymbol{\varphi}_{nT}^{\prime}\left[  n^{-1/2}\sum_{i=1}^{n}\left(
\boldsymbol{\hat{\gamma}}_{i}-\boldsymbol{\gamma}_{i}\right)
\boldsymbol{\gamma}_{i}^{\prime}\right]  \left(  \frac{1}{\sqrt{T}}\sum
_{t=1}^{T}\xi_{t,n}\left(  \lambda_{T}\right)  \mathbf{f}_{t}\right) \\
&  +\left[  n^{-1/2}\sum_{i=1}^{n}\left(  \boldsymbol{\hat{\delta}}%
_{i,T}-\boldsymbol{\delta}_{i,T}\right)  ^{\prime}\right]  \left(  \frac
{1}{\sqrt{T}}\sum_{t=1}^{T}\xi_{t,n}\left(  \lambda_{T}\right)  \mathbf{f}%
_{t}\right) \\
&  +\left[  n^{-1/2}\sum_{i=1}^{n}\left(  \boldsymbol{\hat{\delta}}%
_{i,T}-\boldsymbol{\delta}_{i,T}\right)  ^{^{\prime}}\right]  \left[  \frac
{1}{\sqrt{T}}\sum_{t=1}^{T}\xi_{t,n}\left(  \lambda_{T}\right)  \left(
\mathbf{\hat{f}}_{t}-\mathbf{f}_{t}\right)  \right] \\
&  =\sum_{j=1}^{4}B_{j,nT}\left(  \lambda_{T}\right)  .
\end{align*}
Using (\ref{egziS}), $B_{1,nT}\left(  \lambda_{T}\right)  $ can be written as
\[
B_{1,nT}\left(  \lambda_{T}\right)  =\boldsymbol{\varphi}_{nT}^{\prime}\left[
\frac{\sqrt{T}}{\sqrt{n}}\sum_{i=1}^{n}\left(  \boldsymbol{\hat{\gamma}}%
_{i}-\boldsymbol{\gamma}_{i}\right)  \frac{1}{\sqrt{n}}\sum_{j=1}^{n}%
a_{j,n}\left(  \frac{1}{T}\sum_{t=1}^{T}\sigma_{i}\varepsilon_{jt}\left(
\lambda_{T}\right)  \varepsilon_{it}\left(  \lambda_{T}\right)  \right)  ,
\right]
\]
where $a_{i,n}=1-\sigma_{i}\boldsymbol{\varphi}_{n}^{\prime}\boldsymbol{\gamma
}_{i}$ and $\boldsymbol{\varphi}_{nT}=O_{p}(1)$. Also since $\varepsilon
_{it}\left(  \lambda_{T}\right)  $ are independently distributed over $t$ and
weakly cross-sectionally dependent, and $n$ and $T$ are of the same order,
then
\[
B_{1,nT}\left(  \lambda_{T}\right)  =O_{p}\left(  \frac{1}{\sqrt{n}}\sum
_{i=1}^{n}a_{i,n}\sigma_{i}\left(  \boldsymbol{\hat{\gamma}}_{i}%
-\boldsymbol{\gamma}_{i}\right)  \right)  .
\]
Further, letting $b_{in}=a_{i,n}\sigma_{i}$, it follows from (\ref{hgg}) that
$n^{-1/2}\sum_{i=1}^{n}a_{i,n}\sigma_{i}\left(  \boldsymbol{\hat{\gamma}}%
_{i}-\boldsymbol{\gamma}_{i}\right)  =O_{p}(\sqrt{\ln\left(  n\right)  /T})$,
which in turn establishes that $B_{1,nT}\left(  \lambda_{T}\right)  =o_{p}%
(1)$. Similarly, using (\ref{egziS}), $B_{2,nT}\left(  \lambda_{T}\right)  $
can be written as%
\[
B_{2,nT}\left(  \lambda_{T}\right)  =\boldsymbol{\varphi}_{nT}^{\prime}\left[
n^{-1/2}\sum_{i=1}^{n}\left(  \boldsymbol{\hat{\gamma}}_{i}-\boldsymbol{\gamma
}_{i}\right)  \boldsymbol{\gamma}_{i}^{\prime}\right]  \left(  \frac{1}%
{\sqrt{nT}}\sum_{j=1}^{n}\sum_{t=1}^{T}a_{j,n}\mathbf{f}_{t}\varepsilon
_{jt}\left(  \lambda_{T}\right)  \right)  ,
\]
where $\boldsymbol{\varphi}_{nT}=O_{p}(1)$. Under parts (a) and (c) of
Assumption \ref{ass:errors}
$
\frac{1}{\sqrt{nT}}\sum_{j=1}^{n}\sum_{t=1}^{T}a_{j,n}\mathbf{f}%
_{t}\varepsilon_{jt}\left(  \lambda_{T}\right)  =O_{p}(1).
$
Using this result together with (\ref{hggg}) it follows that $B_{2,nT}\left(
\lambda_{T}\right)  =o_{p}(1)$. Similarly, using (\ref{hdd}) we can establish
that $B_{3,nT}\left(  \lambda_{T}\right)  =o_{p}(1)$. The final term,
$B_{4,nT}\left(  \lambda_{T}\right)  $, is dominated by the third term and is
also $o_{p}(1)$. Thus overall, $T^{-1/2}\sum_{t=1}^{T}s_{t,nT}\left(
\lambda_{T}\right)  \xi_{t,n}\left(  \lambda_{T}\right)  =o_{p}(1)$. Using the
same line of reasoning, it is also readily established that $T^{-1/2}%
\sum_{t=1}^{T}s_{t,nT}\left(  \lambda_{T}\right)  \boldsymbol{\kappa}%
_{t,n}\left(  \lambda_{T}\right)  =o_{p}(1)$, considering that,
$\boldsymbol{\kappa}_{t,n}\left(  \lambda_{T}\right)  $ $=n^{-1/2}\sum
_{i=1}^{n}\boldsymbol{\gamma}_{i}\sigma_{i}\varepsilon_{it}\left(  \lambda
_{T}\right)  $ has the same format as $\xi_{t,n}\left(  \lambda_{T}\right)  $,
and in addition by (\ref{phinLemma}) $\boldsymbol{\varphi}_{nT}%
-\boldsymbol{\varphi}_{n}=O_{p}(n^{-1/2}T^{-1/2})+O_{p}(T^{-1})$. Finally, the
last term of $q_{nT}$ is given by%
\begin{align*}
\frac{1}{\sqrt{T}}\sum_{t=1}^{T}s_{t,nT}\left(  \lambda_{T}\right)
\upsilon_{t,nT}\left(  \lambda_{T}\right)   &  =\frac{1}{\sqrt{T}}\sum
_{t=1}^{T}\upsilon_{t,nT}\left(  \lambda_{T}\right)  \boldsymbol{\varphi}%
_{nT}^{\prime}\left[  n^{-1/2}\sum_{i=1}^{n}\left(  \boldsymbol{\hat{\gamma}%
}_{i}-\boldsymbol{\gamma}_{i}\right)  \sigma_{i}\varepsilon_{it}\left(
\lambda_{T}\right)  \right] \\
&  +\frac{1}{\sqrt{T}}\sum_{t=1}^{T}\upsilon_{t,nT}\left(  \lambda_{T}\right)
\boldsymbol{\varphi}_{nT}^{\prime}\left[  n^{-1/2}\sum_{i=1}^{n}\left(
\boldsymbol{\hat{\gamma}}_{i}-\boldsymbol{\gamma}_{i}\right)
\boldsymbol{\gamma}_{i}^{\prime}\right]  \mathbf{f}_{t}\\
&  +\frac{1}{\sqrt{T}}\sum_{t=1}^{T}\upsilon_{t,nT}\left(  \lambda_{T}\right)
\left[  n^{-1/2}\sum_{i=1}^{n}\left(  \boldsymbol{\hat{\delta}}_{i,T}%
-\boldsymbol{\delta}_{i,T}\right)  ^{\prime}\right]  \mathbf{f}_{t}\\
&  +\frac{1}{\sqrt{T}}\sum_{t=1}^{T}\upsilon_{t,nT}\left(  \lambda_{T}\right)
\left[  n^{-1/2}\sum_{i=1}^{n}\left(  \boldsymbol{\hat{\delta}}_{i,T}%
-\boldsymbol{\delta}_{i,T}\right)  ^{^{\prime}}\right]  \left(  \mathbf{\hat
{f}}_{t}-\mathbf{f}_{t}\right) \\
&  =\sum_{j=1}^{4}C_{j,nT}\left(  \lambda_{T}\right)  .
\end{align*}
Using (\ref{vS}) we have
\begin{align*}
C_{1,nT}\left(  \lambda_{T}\right)   &  =\frac{1}{\sqrt{T}}\sum_{t=1}^{T}%
\frac{1}{\sqrt{n}}\sum_{j=1}^{n}\left(  \frac{1}{\left(
\boldsymbol{\varepsilon}_{j\circ}^{\prime}\mathbf{M}_{F}%
\boldsymbol{\varepsilon}_{j\circ}/T\right)  ^{1/2}}-1\right)  \varepsilon
_{jt}\left(  \lambda_{T}\right)  \boldsymbol{\varphi}_{nT}^{\prime}\left[
n^{-1/2}\sum_{i=1}^{n}\left(  \boldsymbol{\hat{\gamma}}_{i}-\boldsymbol{\gamma
}_{i}\right)  \sigma_{i}\varepsilon_{it}\left(  \lambda_{T}\right)  \right] \\
&  =\sqrt{\frac{T}{n}}\boldsymbol{\varphi}_{nT}^{\prime}\sum_{j=1}^{n}\left[
n^{-1/2}\sum_{i=1}^{n}\left(  \boldsymbol{\hat{\gamma}}_{i}-\boldsymbol{\gamma
}_{i}\right)  \frac{1}{T}\sum_{t=1}^{T}\sigma_{i}\left(  \frac{1}{\left(
\boldsymbol{\varepsilon}_{j\circ}^{\prime}\mathbf{M}_{F}%
\boldsymbol{\varepsilon}_{j\circ}/T\right)  ^{1/2}}-1\right)  \varepsilon
_{it}\left(  \lambda_{T}\right)  \varepsilon_{jt}\left(  \lambda_{T}\right)
\right]  .
\end{align*}
Since $\varepsilon_{it}\left(  \lambda_{T}\right)  $ is distributed
independently over $t$ and weakly cross-sectionally dependent, then
\[
\frac{1}{T}\sum_{t=1}^{T}\sigma_{i}\left(  \frac{1}{\left(
\boldsymbol{\varepsilon}_{j\circ}^{\prime}\mathbf{M}_{F}%
\boldsymbol{\varepsilon}_{j\circ}/T\right)  ^{1/2}}-1\right)  \varepsilon
_{it}\left(  \lambda_{T}\right)  \varepsilon_{jt}\left(  \lambda_{T}\right)
\rightarrow_{p}0\text{, if }i\neq j\text{, }%
\]
and%
\begin{align*}
&  \frac{1}{T}\sum_{t=1}^{T}\sigma_{i}\left(  \frac{1}{\left(
\boldsymbol{\varepsilon}_{j\circ}^{\prime}\mathbf{M}_{F}%
\boldsymbol{\varepsilon}_{j\circ}/T\right)  ^{1/2}}-1\right)  \varepsilon
_{it}\left(  \lambda_{T}\right)  \varepsilon_{jt}\left(  \lambda_{T}\right) \\
&  \rightarrow_{p}\lim_{T\rightarrow\infty}\frac{1}{T}\sum_{t=1}^{T}\sigma
_{i}E\left\{  \left[  \left(  \frac{\boldsymbol{\varepsilon}_{i\circ}^{\prime
}\mathbf{M}_{F}\boldsymbol{\varepsilon}_{i\circ}}{T}\right)  ^{-1/2}-1\right]
\varepsilon_{it}^{2}\left(  \lambda_{T}\right)  \right\}  \text{, if }i=j.
\end{align*}
Also, by definition $\varepsilon_{it}\left(  \lambda_{T}\right)
=\varepsilon_{it}+\lambda_{T}b_{it}$, where $b_{it}=\mathbf{w}_{i0}^{\prime
}\boldsymbol{\varepsilon}_{\circ t}$, $w_{ii}=0$, and $\varepsilon_{it}$ is
independent from $b_{it}$, and therefore
\begin{align*}
&  E\left\{  \left[  \left(  \frac{\boldsymbol{\varepsilon}_{i\circ}^{\prime
}\mathbf{M}_{F}\boldsymbol{\varepsilon}_{i\circ}}{T}\right)  ^{-1/2}-1\right]
\varepsilon_{it}^{2}\left(  \lambda_{T}\right)  \right\} \\
&  =E\left\{  \left[  \left(  \frac{\boldsymbol{\varepsilon}_{i\circ}^{\prime
}\mathbf{M}_{F}\boldsymbol{\varepsilon}_{i\circ}}{T}\right)  ^{-1/2}-1\right]
\varepsilon_{it}^{2}\right\}  +E\left[  \left(  \frac{\boldsymbol{\varepsilon
}_{i\circ}^{\prime}\mathbf{M}_{F}\boldsymbol{\varepsilon}_{i\circ}}{T}\right)
^{-1/2}-1\right]  \lambda_{T}^{2}E\left(  b_{it}^{2}\right) \\
&  =O\left(  \frac{1}{T}\right)  +O\left(  \frac{1}{T^{2}}\right)  ,
\end{align*}
where the last line holds by (\ref{E:bit2}), (\ref{Es}) and (\ref{E7}).
Moreover, by (\ref{hgg}) $n^{-1/2}\sum_{i=1}^{n}\left(  \boldsymbol{\hat
{\gamma}}_{i}-\boldsymbol{\gamma}_{i}\right)  =O_{p}(\sqrt{\ln\left(
n\right)  /T})$. As $n$ and $T$ being of the same order, it then follows that
$C_{1,nT}\left(  \lambda_{T}\right)  =o_{p}(1)$. Similarly to $B_{2,nT}\left(
\lambda_{T}\right)  $, we have
\begin{align*}
C_{2,nT}\left(  \lambda_{T}\right)   &  =\boldsymbol{\varphi}_{nT}^{\prime
}\left[  n^{-1/2}\sum_{i=1}^{n}\left(  \boldsymbol{\hat{\gamma}}%
_{i}-\boldsymbol{\gamma}_{i}\right)  \boldsymbol{\gamma}_{i}^{\prime}\right]
\left[  \frac{1}{\sqrt{nT}}\sum_{j=1}^{n}\sum_{t=1}^{T}\left(  \frac
{1}{\left(  \boldsymbol{\varepsilon}_{j\circ}^{\prime}\mathbf{M}%
_{F}\boldsymbol{\varepsilon}_{j\circ}/T\right)  ^{1/2}}-1\right)
\varepsilon_{jt}\left(  \lambda_{T}\right)  \mathbf{f}_{t}\right] \\
&  =O_{p}\left(  \sqrt{\frac{\ln\left(  n\right)  }{T}}\right)  O_{p}%
(1)=o_{p}(1).
\end{align*}
The same line of reasoning as used for $B_{3,nT}\left(  \lambda_{T}\right)  $
and $B_{4,nT}\left(  \lambda_{T}\right)  $ can be used to establish
$C_{j,nT}\left(  \lambda_{T}\right)  =o_{p}(1)$ for $j=3$ and $4.$ Hence,
$T^{-1/2}\sum_{t=1}^{T}s_{t,nT}\left(  \lambda_{T}\right)  \upsilon
_{t,nT}\left(  \lambda_{T}\right)  =o_{p}(1)$, and overall we have
$q_{nT}\left(  \lambda_{T}\right)  =o_{p}(1)$, as required.
\end{proof}

\begin{lemma}
\label{PHIn}Consider the latent factor model given by (\ref{mod2}) and
(\ref{uit:p}). Suppose that Assumptions \ref{ass:factors}-\ref{ass:ws} hold
and $\left(  n,T\right)  \rightarrow\infty$, such that $n/T\rightarrow
\kappa\,,$ for $0<\kappa<\infty$. Then%
\begin{equation}
\sqrt{T}\left(  \boldsymbol{\varphi}_{n}-\boldsymbol{\varphi}_{nT}\right)
=O_{p}\left(  n^{-1/2}\right)  +O_{p}\left(  T^{-1/2}\right)  ,
\label{phinLemma}%
\end{equation}%
\begin{equation}
\sqrt{T}\left(  \boldsymbol{\hat{\varphi}}_{nT}-\boldsymbol{\varphi}%
_{n}\right)  =o_{p}(1), \label{phinhatLemma}%
\end{equation}
where $\boldsymbol{\varphi}_{n}=n^{-1}\sum_{i=1}^{n}\boldsymbol{\gamma}%
_{i}/\sigma_{i},$ $\boldsymbol{\varphi}_{nT}=n^{-1}\sum_{i=1}^{n}%
\boldsymbol{\gamma}_{i}/\omega_{i,T}$ with $\omega_{i,T}=\left(  T^{-1}%
\sigma_{i}^{2}\boldsymbol{\varepsilon}_{i\circ}^{\prime}\mathbf{M}%
_{F}\boldsymbol{\varepsilon}_{i\circ}\right)  ^{1/2}$, $\boldsymbol{\hat
{\varphi}}_{nT}=n^{-1}\sum_{i=1}^{n}\boldsymbol{\hat{\gamma}}_{i}/\hat{\sigma
}_{i,T}$, $\hat{\sigma}_{i,T}=\left(  T^{-1}\mathbf{y}_{i}^{\prime}%
\mathbf{M}_{\hat{F}}\mathbf{y}_{i}\right)  ^{1/2}$ and $\boldsymbol{\hat
{\gamma}}_{i}$ and $\mathbf{\hat{F}}$ are the principal component estimators
of $\boldsymbol{\gamma}_{i}$ and $\mathbf{F}$.
\end{lemma}

\begin{proof}
First note that
\begin{align*}
\sqrt{T}\left(  \boldsymbol{\varphi}_{n}-\boldsymbol{\varphi}_{nT}\right)   &
=\frac{\sqrt{T}}{n}\sum_{i=1}^{n}\frac{\boldsymbol{\gamma}_{i}}{\sigma_{i}%
}\left\{  \left(  1-\frac{\sigma_{i}}{\omega_{i,T}}\right)  -\left[
1-E\left(  \frac{\sigma_{i}}{\omega_{i,T}}\right)  \right]  \right\}
+\frac{\sqrt{T}}{n}\sum_{i=1}^{n}\frac{\boldsymbol{\gamma}_{i}}{\sigma_{i}%
}\left[  1-E\left(  \frac{\sigma_{i}}{\omega_{i,T}}\right)  \right]  ,\\
&  =\mathbf{g}_{1,nT}+\mathbf{g}_{2,nT},
\end{align*}
where%
\begin{align*}
\mathbf{g}_{1,nT}  &  =-\frac{1}{n}\sum_{i=1}^{n}\sqrt{T}\left[  \frac
{\sigma_{i}}{\omega_{i,T}}-E\left(  \frac{\sigma_{i}}{\omega_{i,T}}\right)
\right]  \frac{\boldsymbol{\gamma}_{i}}{\sigma_{i}},\\
\mathbf{g}_{2,nT}  &  =\frac{\sqrt{T}}{n}\sum_{i=1}^{n}\left[  1-E\left(
\frac{\sigma_{i}}{\omega_{i,T}}\right)  \right]  \frac{\boldsymbol{\gamma}%
_{i}}{\sigma_{i}}.
\end{align*}
Since $\sigma_{i}/\omega_{i,T}=\left(  T^{-1}\boldsymbol{\varepsilon}_{i\circ
}^{\prime}\mathbf{M}_{F}\boldsymbol{\varepsilon}_{i\circ}\right)  ^{-1/2}$,
$\left\Vert \boldsymbol{\gamma}_{i}\right\Vert <C$, $\sigma_{i}<C$, then using
result (\ref{Es}) we have $E\left(  \frac{\sigma_{i}}{\omega_{i,T}}\right)
=1+O\left(  T^{-1}\right)  $, and $\mathbf{g}_{2,nT}=$ $O\left(
T^{-1/2}\right)  $. The first term can be written as $\mathbf{g}_{1,nT}%
=n^{-1}\sum_{i=1}^{n}\boldsymbol{\gamma}_{i}\sigma_{i}^{-1}\chi_{i,T}$, where
$\chi_{i,T}=-\sqrt{T}\left[  \sigma_{i}/\omega_{i,T}-E\left(  \sigma
_{i}/\omega_{i,T}\right)  \right]  $. Conditional on $\mathbf{F}$ and
$\sigma_{i}$, $\chi_{i,T}$ are distributed independently over $i$ with mean
zero and bounded variances:\footnote{When $\varepsilon_{it}$ are normally
distributed we have the exact result $E\left(  \frac{T}%
{\boldsymbol{\varepsilon}_{i\circ}^{\prime}\mathbf{M}_{F}%
\boldsymbol{\varepsilon}_{i\circ}}\right)  =T/(T-m_{0}-2).$}
\[
Var\left(  \chi_{i,T}\right)  =T\left[  E\left(  \frac{T}%
{\boldsymbol{\varepsilon}_{i\circ}^{\prime}\mathbf{M}_{F}%
\boldsymbol{\varepsilon}_{i\circ}}\right)  -\left[  E\left(  \frac{\sigma_{i}%
}{\omega_{i,T}}\right)  \right]  ^{2}\right]  =T\left[  1+O\left(  \frac{1}%
{T}\right)  -\left[  1+O\left(  \frac{1}{T}\right)  \right]  ^{2}\right]
=O(1).
\]
Hence, $\mathbf{g}_{1,nT}=O_{p}\left(  n^{-1/2}\right)  $, and the desired
result (\ref{phinLemma}) follows. Consider now (\ref{phinhatLemma}) and note
that it can be decomposed as
\begin{equation}
\sqrt{T}\left(  \boldsymbol{\hat{\varphi}}_{nT}-\boldsymbol{\varphi}%
_{n}\right)  =\sqrt{T}\left(  \boldsymbol{\varphi}_{nT}-\boldsymbol{\varphi
}_{n}\right)  +\sqrt{T}\left(  \boldsymbol{\hat{\varphi}}_{nT}%
-\boldsymbol{\varphi}_{nT}\right)  , \label{Tphi}%
\end{equation}
where it is already established that the first term is $o_{p}(1)$. Consider
now the second term of (\ref{Tphi}) and note that it can be written as
\begin{align*}
\sqrt{T}\left(  \boldsymbol{\hat{\varphi}}_{nT}-\boldsymbol{\varphi}%
_{nT}\right)   &  =\frac{\sqrt{T}}{n}\sum_{i=1}^{n}\left(  \frac
{\boldsymbol{\hat{\gamma}}_{i}}{\hat{\sigma}_{i,T}}\boldsymbol{-}%
\frac{\boldsymbol{\gamma}_{i}}{\omega_{i,T}}\right) \\
&  =\frac{\sqrt{T}}{n}\sum_{i=1}^{n}\boldsymbol{\gamma}_{i}\left(  \frac
{1}{\hat{\sigma}_{i,T}}\boldsymbol{-}\frac{1}{\omega_{i,T}}\right)
+\frac{\sqrt{T}}{n}\sum_{i=1}^{n}\frac{1}{\sigma_{i}}\left(  \boldsymbol{\hat
{\gamma}}_{i}-\boldsymbol{\gamma}_{i}\right)  +\\
&  \frac{\sqrt{T}}{n}\sum_{i=1}^{n}\left(  \frac{1}{\omega_{i,T}}-\frac
{1}{\sigma_{i}}\right)  \left(  \boldsymbol{\hat{\gamma}}_{i}%
-\boldsymbol{\gamma}_{i}\right)  +\frac{\sqrt{T}}{n}\sum_{i=1}^{n}\left(
\frac{1}{\hat{\sigma}_{i,T}}-\frac{1}{\omega_{i,T}}\right)  \left(
\boldsymbol{\hat{\gamma}}_{i}-\boldsymbol{\gamma}_{i}\right)  .
\end{align*}
Now using (\ref{Ewin}) of Lemma \ref{lm4}\ we have%
\[
\frac{\sqrt{T}}{n}\sum_{i=1}^{n}\boldsymbol{\gamma}_{i}\left(  \frac{1}%
{\hat{\sigma}_{i,T}}-\frac{1}{\omega_{i,T}}\right)  =O_{p}\left(  \frac
{\ln\left(  n\right)  }{\sqrt{T}}\right)  .
\]
Using this result as well as (\ref{hgg}), (\ref{hgg_a}) and (\ref{hgg_d}), we
have $\sqrt{T}\left(  \boldsymbol{\hat{\varphi}}_{nT}-\boldsymbol{\varphi
}_{nT}\right)  =o_{p}(1)$ as required.
\end{proof}

\begin{lemma}
\label{lmQ} Suppose $\mathbf{M}_{F}=\mathbf{I}_{T}-\mathbf{F(F}^{\prime
}\mathbf{F)}^{-1}\mathbf{F}^{\prime}$, where $\mathbf{F}$ is a $T\times m_{0}$
matrix, and $\boldsymbol{\tau}_{T}$ is a $T\times1$ vector of ones. Then%
\begin{align}
\mathrm{tr}\left(  \mathbf{M}_{F}\right)   &  =v,\text{ }\mathrm{tr}\left(
\mathbf{M}_{F}\mathbf{\odot M}_{F}\right)  =O\left(  v\right)  ,\label{lmQ1}\\
\mathrm{tr}\left(  \mathbf{M}_{F}\mathbf{\odot M}_{F}\mathbf{\odot M}%
_{F}\right)   &  =O\left(  v\right)  ,\mathrm{tr}\left(  \mathbf{M}%
_{F}\mathbf{\odot M}_{F}\mathbf{\odot M}_{F}\mathbf{\odot M}_{F}\right)
=O\left(  v\right)  ,\label{lmQ2}\\
\mathrm{tr}\left[  \left(  \mathbf{I}_{T}\mathbf{\odot M}_{F}\right)
\mathbf{M}_{F}\right]   &  =O\left(  v\right)  ,\mathrm{tr}\left[  \left(
\mathbf{M}_{F}\mathbf{\odot M}_{F}\right)  \mathbf{M}_{F}\right]  =O\left(
v\right)  , \label{lmQ3}%
\end{align}%
\begin{align}
\boldsymbol{\tau}_{T}^{\prime}\left(  \mathbf{M}_{F}\mathbf{\odot M}%
_{F}\right)  \boldsymbol{\tau}_{T}  &  =O\left(  v\right)  ,\boldsymbol{\tau
}_{T}^{\prime}\left(  \mathbf{M}_{F}\mathbf{\odot M}_{F}\mathbf{\odot M}%
_{F}\right)  \boldsymbol{\tau}_{T}=O\left(  v\right)  ,\label{lmQ4}\\
\boldsymbol{\tau}_{T}^{\prime}\left(  \mathbf{M}_{F}\mathbf{\odot M}%
_{F}\mathbf{\odot M}_{F}\mathbf{\odot M}_{F}\right)  \boldsymbol{\tau}_{T}  &
=O\left(  v\right)  ,\boldsymbol{\tau}_{T}^{\prime}\left(  \mathbf{I}%
_{T}\mathbf{\odot M}_{F}\right)  \left(  \mathbf{M}_{F}\mathbf{\odot M}%
_{F}\mathbf{\odot M}_{F}\right)  \boldsymbol{\tau}_{T}=O\left(  v\right)
,\label{lmQ5}\\
\boldsymbol{\tau}_{T}^{\prime}\left(  \mathbf{I}_{T}\mathbf{\odot M}%
_{F}\right)  \left(  \mathbf{M}_{F}\mathbf{\odot M}_{F}\right)  \left(
\mathbf{I}_{T}\mathbf{\odot M}_{F}\right)  \boldsymbol{\tau}_{T}  &  =O\left(
v\right)  ,\boldsymbol{\tau}_{T}^{\prime}\left(  \mathbf{I}_{T}\mathbf{\odot
M}_{F}\right)  \left(  \mathbf{I}_{T}\mathbf{\odot M}_{F}\right)
\boldsymbol{\tau}_{T}=O\left(  v\right)  , \label{lmQ6}%
\end{align}%
\begin{align}
\boldsymbol{\tau}_{T}^{\prime}\left(  \mathbf{I}_{T}\mathbf{\odot M}%
_{F}\right)  \mathbf{M}_{F}\left(  \mathbf{I}_{T}\mathbf{\odot M}_{F}\right)
\boldsymbol{\tau}_{T}  &  =O\left(  v^{3/2}\right)  ,\boldsymbol{\tau}%
_{T}^{\prime}\left(  \mathbf{M}_{F}\mathbf{\odot M}_{F}\right)  \mathbf{M}%
_{F}\left(  \mathbf{I}_{T}\mathbf{\odot M}_{F}\right)  \boldsymbol{\tau}%
_{T}=O\left(  v^{3/2}\right)  ,\label{lmQ7}\\
\boldsymbol{\tau}_{T}^{\prime}\left(  \mathbf{I}_{T}\mathbf{\odot M}%
_{F}\right)  \mathbf{M}_{F}\left(  \mathbf{I}_{T}\mathbf{\odot M}%
_{F}\mathbf{\odot M}_{F}\right)  \boldsymbol{\tau}_{T}  &  =O\left(
v^{3/2}\right)  , \label{lmQ8}%
\end{align}
where $v=T-m_{0}$.
\end{lemma}

\begin{proof}
See Lemma 10 of Pesaran and Yamagata (2024).
\end{proof}

\begin{lemma}
\label{Qmom}Suppose the $T\times1$ vector $\boldsymbol{\varepsilon}%
=\mathbf{(}\varepsilon_{1},\varepsilon_{2},...,\varepsilon_{T})^{\prime}$ is
$\boldsymbol{\varepsilon}\thicksim IID(\mathbf{0},\mathbf{I}_{T})$, $\sup
_{t}E\left(  \left\vert \varepsilon_{t}\right\vert ^{8+s}\right)  <C$ for some
small $s>0$, and $\mathbf{M}_{F}=\mathbf{I}_{T}-\mathbf{F(F}^{\prime
}\mathbf{F)}^{-1}\mathbf{F}^{\prime}$, where the $T\times m_{0}$ matrix
$\mathbf{F}$ is distributed independently of $\boldsymbol{\varepsilon}$,
then\textbf{ }%
\begin{equation}
E\left(  \frac{\boldsymbol{\varepsilon}^{\prime}\mathbf{M}_{F}%
\boldsymbol{\varepsilon}}{v}\right)  =1,\text{ } \label{mom1}%
\end{equation}%
\begin{equation}
E\left[  \left(  \frac{\boldsymbol{\varepsilon}^{\prime}\mathbf{M}%
_{F}\boldsymbol{\varepsilon}}{v}\right)  ^{2}\right]  =1+O\left(  \frac{1}%
{v}\right)  , \label{mom2}%
\end{equation}%
\begin{equation}
E\left(  q_{v}\right)  =0\mathbf{,}\text{ }E\left(  q_{v}^{4}\right)
=O(1)\mathbf{,} \label{qv}%
\end{equation}
where $v=T-m_{0}$, and $q_{v}=\sqrt{v}\left(  \frac{\boldsymbol{\varepsilon
}^{\prime}\mathbf{M}_{F}\boldsymbol{\varepsilon}}{v}-1\right)  $.
\end{lemma}

\begin{proof}
Denote $\boldsymbol{\tau}_{T}$ as a $T\times1$ vector of ones and suppose
$\gamma_{1}=E\left(  \varepsilon_{t}^{3}\right)  $, $\gamma_{2}=E\left(
\varepsilon_{t}^{4}\right)  -3$, $\gamma_{3}=E\left(  \varepsilon_{t}%
^{5}\right)  -10\gamma_{1}$, $\gamma_{4}=E\left(  \varepsilon_{t}^{6}\right)
-15\gamma_{2}-10\gamma_{1}^{2}-15$, $\gamma_{6}=E\left(  \varepsilon_{t}%
^{8}\right)  -28\gamma_{4}-56\gamma_{3}\gamma_{1}-35\gamma_{2}^{2}%
-210\gamma_{2}-280\gamma_{1}^{2}-105$, which are all bounded as it is assumed
$\sup_{t}E\left(  \left\vert \varepsilon_{t}\right\vert ^{8+\epsilon}\right)
<C$. Since $\boldsymbol{\varepsilon}\thicksim IID(\mathbf{0},\mathbf{I}_{T})$
and $\mathbf{M}_{F}=\left(  m_{tt^{\prime}}\right)  $ is an idempotent matrix
then results (S.6) to (S.9) of Lemma 6 in Pesaran and Yamagata (2024) apply
and we have (since $\mathrm{tr}\left(  \mathbf{M}_{F}^{s}\right)
=\mathrm{tr}\left(  \mathbf{M}_{F}\right)  =v$, for $s=1,2,\ldots$)
\begin{equation}
E\left(  \boldsymbol{\varepsilon}^{\prime}\mathbf{M}_{F}%
\boldsymbol{\varepsilon}\right)  =v, \label{mom1_a}%
\end{equation}%
\begin{equation}
E\left[  \left(  \boldsymbol{\varepsilon}^{\prime}\mathbf{M}_{F}%
\boldsymbol{\varepsilon}\right)  ^{2}\right]  =v^{2}+2v+\gamma_{2}%
\mathrm{tr}\left(  \mathbf{M}_{F}\mathbf{\odot M}_{F}\right)  , \label{mom2_a}%
\end{equation}%
\begin{align}
E\left[  \left(  \boldsymbol{\varepsilon}^{\prime}\mathbf{M}_{F}%
\boldsymbol{\varepsilon}\right)  ^{3}\right]   &  =v^{3}+6v^{2}+8v+\gamma
_{4}\mathrm{tr}\left(  \mathbf{M}_{F}\mathbf{\odot M}_{F}\mathbf{\odot M}%
_{F}\right)  +3\gamma_{2}\left(  v+4\right)  \mathrm{tr}\left(  \mathbf{M}%
_{F}\mathbf{\odot M}_{F}\right) \nonumber\\
&  +6\gamma_{1}^{2}\left[  \boldsymbol{\tau}_{T}^{\prime}\left(
\mathbf{I}_{T}\mathbf{\odot M}_{F}\right)  \mathbf{M}_{F}\left(
\mathbf{I}_{T}\mathbf{\odot M}_{F}\right)  \boldsymbol{\tau}_{T}\right]
+4\gamma_{1}^{2}\left[  \boldsymbol{\tau}_{T}^{\prime}\left(  \mathbf{M}%
_{F}\mathbf{\odot M}_{F}\mathbf{\odot M}_{F}\right)  \boldsymbol{\tau}%
_{T}\right]  \label{mom3_a}%
\end{align}%
\begin{equation}
E\left[  \left(  \boldsymbol{\varepsilon}^{\prime}\mathbf{M}_{F}%
\boldsymbol{\varepsilon}\right)  ^{4}\right]  =v^{4}+12v^{3}+44v^{2}%
+48v+\gamma_{2}f_{\gamma_{2}}+\gamma_{4}f_{\gamma_{4}}+\gamma_{6}f_{\gamma
_{6}}+\gamma_{1}^{2}f_{\gamma_{1}^{2}}+\gamma_{2}^{2}f_{\gamma_{2}^{2}}%
+\gamma_{1}\gamma_{3}f_{\gamma_{1}\gamma_{3}}, \label{mom4_a}%
\end{equation}
where%
\begin{align*}
f_{\gamma_{2}}  &  =\left(  6v^{2}+48v\right)  \mathrm{tr}\left(
\mathbf{M}_{F}\mathbf{\odot M}_{F}\right)  +12\left[  \boldsymbol{\tau}%
_{T}^{\prime}\left(  \mathbf{M}_{F}\mathbf{\odot M}_{F}\right)
\boldsymbol{\tau}_{T}\mathrm{tr}\left(  \mathbf{M}_{F}\mathbf{\odot M}%
_{F}\right)  \right] \\
&  +96\mathrm{tr}\left[  \left(  \mathbf{I}_{T}\mathbf{\odot M}_{F}\right)
\mathbf{M}_{F}\right]  +48\boldsymbol{\tau}_{T}^{\prime}\left(  \mathbf{I}%
_{T}\mathbf{\odot M}_{F}\right)  \left(  \mathbf{I}_{T}\mathbf{\odot M}%
_{F}\right)  \boldsymbol{\tau}_{T},
\end{align*}%
\[
f_{\gamma_{4}}=\left(  4v+24\right)  \mathrm{tr}\left(  \mathbf{M}%
_{F}\mathbf{\odot M}_{F}\mathbf{\odot M}_{F}\right)  ,
\]%
\[
f_{\gamma_{6}}=\mathrm{tr}\left(  \mathbf{M}_{F}\mathbf{\odot M}%
_{F}\mathbf{\odot M}_{F}\mathbf{\odot M}_{F}\right)  ,
\]%
\begin{align*}
f_{\gamma_{1}^{2}}  &  =24v\boldsymbol{\tau}_{T}^{\prime}\left(
\mathbf{I}_{T}\mathbf{\odot M}_{F}\right)  \mathbf{M}_{F}\left(
\mathbf{I}_{T}\mathbf{\odot M}_{F}\right)  \boldsymbol{\tau}_{T}%
+48\boldsymbol{\tau}_{T}^{\prime}\left(  \mathbf{I}_{T}\mathbf{\odot M}%
_{F}\right)  \mathbf{M}_{F}\left(  \mathbf{I}_{T}\mathbf{\odot M}_{F}\right)
\boldsymbol{\tau}_{T}\\
&  +16v\boldsymbol{\tau}_{T}^{\prime}\left(  \mathbf{M}_{F}\mathbf{\odot
M}_{F}\mathbf{\odot M}_{F}\right)  \boldsymbol{\tau}_{T}+96\boldsymbol{\tau
}_{T}^{\prime}\left(  \mathbf{M}_{F}\mathbf{\odot M}_{F}\right)
\mathbf{M}_{F}\left(  \mathbf{I}_{T}\mathbf{\odot M}_{F}\right)
\boldsymbol{\tau}_{T}\\
&  +96\mathrm{tr}\left[  \mathbf{M}_{F}\left(  \mathbf{M}_{F}\mathbf{\odot
M}_{F}\right)  \mathbf{M}_{F}\right]  ,
\end{align*}%
\begin{align*}
f_{\gamma_{2}^{2}}  &  =3\left[  \mathrm{tr}\left(  \mathbf{M}_{F}%
\mathbf{\odot M}_{F}\right)  \right]  ^{2}+24\boldsymbol{\tau}_{T}^{\prime
}\left(  \mathbf{I}_{T}\mathbf{\odot M}_{F}\right)  \left(  \mathbf{M}%
_{F}\mathbf{\odot M}_{F}\right)  \left(  \mathbf{I}_{T}\mathbf{\odot M}%
_{F}\right)  \boldsymbol{\tau}_{T}\\
&  +8\boldsymbol{\tau}_{T}^{\prime}\left(  \mathbf{M}_{F}\mathbf{\odot M}%
_{F}\mathbf{\odot M}_{F}\mathbf{\odot M}_{F}\right)  \boldsymbol{\tau}_{T},
\end{align*}%
\[
f_{\gamma_{1}\gamma_{3}}=24\boldsymbol{\tau}_{T}^{\prime}\left(
\mathbf{I}_{T}\mathbf{\odot M}_{F}\right)  \mathbf{M}_{F}\left(
\mathbf{I}_{T}\mathbf{\odot M}_{F}\mathbf{\odot M}_{F}\right)
\boldsymbol{\tau}_{T}+32\boldsymbol{\tau}_{T}^{\prime}\left(  \mathbf{I}%
_{T}\mathbf{\odot M}_{F}\right)  \left(  \mathbf{M}_{F}\mathbf{\odot M}%
_{F}\mathbf{\odot M}_{F}\right)  \boldsymbol{\tau}_{T}.
\]
Result (\ref{mom1}) follows from (\ref{mom1_a}). To establish (\ref{mom2}),
using (\ref{mom2_a}), we first note that%
\[
E\left(  \frac{\boldsymbol{\varepsilon}^{\prime}\mathbf{M}_{F}%
\boldsymbol{\varepsilon}}{v}\right)  ^{2}=1+\frac{2}{v}+\gamma_{2}%
\frac{\mathrm{tr}\left(  \mathbf{M}_{F}\mathbf{\odot M}_{F}\right)  }{v^{2}}.
\]
But by (\ref{lmQ1}) $\mathrm{tr}\left(  \mathbf{M}_{F}\mathbf{\odot M}%
_{F}\right)  =O\left(  v\right)  $ and by assumption $\gamma_{2}$ is bounded.
Hence $E\left(  \frac{\boldsymbol{\varepsilon}^{\prime}\mathbf{M}%
_{F}\boldsymbol{\varepsilon}}{v}\right)  ^{2}=1+O(\frac{1}{v})$, as required.
To prove (\ref{qv}), noting that $E\left(  \boldsymbol{\varepsilon}^{\prime
}\mathbf{M}_{F}\boldsymbol{\varepsilon}\right)  =v$ then $E(q_{v})=0$. Also
\[
q_{v}^{4}=v^{2}\left(  \frac{\boldsymbol{\varepsilon}^{\prime}\mathbf{M}%
_{F}\boldsymbol{\varepsilon}}{v}-1\right)  ^{4}=v^{2}\left[  \left(
\frac{\boldsymbol{\varepsilon}^{\prime}\mathbf{M}_{F}\boldsymbol{\varepsilon}%
}{v}\right)  ^{4}-4\left(  \frac{\boldsymbol{\varepsilon}^{\prime}%
\mathbf{M}_{F}\boldsymbol{\varepsilon}}{v}\right)  ^{3}+6\left(
\frac{\boldsymbol{\varepsilon}^{\prime}\mathbf{M}_{F}\boldsymbol{\varepsilon}%
}{v}\right)  ^{2}-4\left(  \frac{\boldsymbol{\varepsilon}^{\prime}%
\mathbf{M}_{F}\boldsymbol{\varepsilon}}{v}\right)  +1\right]  ,
\]
and taking expectation yields
\begin{align*}
E\left(  q_{v}^{4}\right)   &  =v^{2}\left[  E\left[  \left(  \frac
{\boldsymbol{\varepsilon}^{\prime}\mathbf{M}_{F}\boldsymbol{\varepsilon}}%
{v}\right)  ^{4}\right]  -4E\left[  \left(  \frac{\boldsymbol{\varepsilon
}^{\prime}\mathbf{M}_{F}\boldsymbol{\varepsilon}}{v}\right)  ^{3}\right]
+6E\left[  \left(  \frac{\boldsymbol{\varepsilon}^{\prime}\mathbf{M}%
_{F}\boldsymbol{\varepsilon}}{v}\right)  ^{2}\right]  -4E\left(
\frac{\boldsymbol{\varepsilon}^{\prime}\mathbf{M}_{F}\boldsymbol{\varepsilon}%
}{v}\right)  +1\right] \\
&  =\frac{1}{v^{2}}E\left[  \left(  \boldsymbol{\varepsilon}^{\prime
}\mathbf{M}_{F}\boldsymbol{\varepsilon}\right)  ^{4}\right]  -\frac{4}%
{v}E\left[  \left(  \boldsymbol{\varepsilon}^{\prime}\mathbf{M}_{F}%
\boldsymbol{\varepsilon}\right)  ^{3}\right]  +6E\left[  \left(
\boldsymbol{\varepsilon}^{\prime}\mathbf{M}_{F}\boldsymbol{\varepsilon
}\right)  ^{2}\right]  -4vE\left(  \boldsymbol{\varepsilon}^{\prime}%
\mathbf{M}_{F}\boldsymbol{\varepsilon}\right)  +v^{2}.
\end{align*}
Now using the results in (\ref{mom1_a})-(\ref{mom4_a}), and after some
algebra, we obtain%
\begin{align}
E\left(  q_{v}^{4}\right)   &  =\frac{1}{v^{2}}\left[
\begin{array}
[c]{c}%
12v^{2}+48v+12\gamma_{2}\left[  \boldsymbol{\tau}_{T}^{\prime}\left(
\mathbf{M}_{F}\mathbf{\odot M}_{F}\right)  \boldsymbol{\tau}_{T}\right]
\mathrm{tr}\left(  \mathbf{M}_{F}\mathbf{\odot M}_{F}\right) \\
+96\gamma_{2}\mathrm{tr}\left[  \left(  \mathbf{I}_{T}\mathbf{\odot M}%
_{F}\right)  \mathbf{M}_{F}\right]  +48\gamma_{2}\left[  \boldsymbol{\tau}%
_{T}^{\prime}\left(  \mathbf{I}_{T}\mathbf{\odot M}_{F}\right)  \left(
\mathbf{I}_{T}\mathbf{\odot M}_{F}\right)  \boldsymbol{\tau}_{T}\right] \\
+\left(  4\gamma_{4}v+24\gamma_{4}\right)  \mathrm{tr}\left(  \mathbf{M}%
_{F}\mathbf{\odot M}_{F}\mathbf{\odot M}_{F}\right)  +\gamma_{6}%
\mathrm{tr}\left(  \mathbf{M}_{F}\mathbf{\odot M}_{F}\mathbf{\odot M}%
_{F}\mathbf{\odot M}_{F}\right) \\
+48\gamma_{1}^{2}\left[  \boldsymbol{\tau}_{T}^{\prime}\left(  \mathbf{I}%
_{T}\mathbf{\odot M}_{F}\right)  \mathbf{M}_{F}\left(  \mathbf{I}%
_{T}\mathbf{\odot M}_{F}\right)  \boldsymbol{\tau}_{T}\right]  +96\gamma
_{1}^{2}\left[  \boldsymbol{\tau}_{T}^{\prime}\left(  \mathbf{M}%
_{F}\mathbf{\odot M}_{F}\right)  \mathbf{M}_{F}\left(  \mathbf{I}%
_{T}\mathbf{\odot M}_{F}\right)  \boldsymbol{\tau}_{T}\right] \\
+96\gamma_{1}^{2}\mathrm{tr}\left[  \left(  \mathbf{M}_{F}\mathbf{\odot M}%
_{F}\right)  \mathbf{M}_{F}\right]  +3\gamma_{2}^{2}\left[  \mathrm{tr}\left(
\mathbf{M}_{F}\mathbf{\odot M}_{F}\right)  \right]  ^{2}\\
+24\gamma_{2}^{2}\left[  \boldsymbol{\tau}_{T}^{\prime}\left(  \mathbf{I}%
_{T}\mathbf{\odot M}_{F}\right)  \left(  \mathbf{M}_{F}\mathbf{\odot M}%
_{F}\right)  \left(  \mathbf{I}_{T}\mathbf{\odot M}_{F}\right)
\boldsymbol{\tau}_{T}\right]  +8\gamma_{2}^{2}\left[  \boldsymbol{\tau}%
_{T}^{\prime}\left(  \mathbf{M}_{F}\mathbf{\odot M}_{F}\mathbf{\odot M}%
_{F}\mathbf{\odot M}_{F}\right)  \boldsymbol{\tau}_{T}\right] \\
+24\gamma_{1}\gamma_{3}\left[  \boldsymbol{\tau}_{T}^{\prime}\left(
\mathbf{I}_{T}\mathbf{\odot M}_{F}\right)  \mathbf{M}_{F}\left(
\mathbf{I}_{T}\mathbf{\odot M}_{F}\mathbf{\odot M}_{F}\right)
\boldsymbol{\tau}_{T}\right] \\
+32\gamma_{1}\gamma_{3}\left[  \boldsymbol{\tau}_{T}^{\prime}\left(
\mathbf{I}_{T}\mathbf{\odot M}_{F}\right)  \left(  \mathbf{M}_{F}\mathbf{\odot
M}_{F}\mathbf{\odot M}_{F}\right)  \boldsymbol{\tau}_{T}\right]
\end{array}
\right] \nonumber\\
&  =\sum_{s=1}^{15}a_{s,v}. \label{E qv4}%
\end{align}
Further noting that $\gamma_{1},\gamma_{2}$,$\gamma_{3}$,$\gamma_{4}$%
,$\gamma_{5}$, and $\gamma_{6}$ are all bounded, then using the results
(\ref{lmQ1})-(\ref{lmQ8}) we have%
\[
a_{1,v}=12,a_{2,v}=\frac{48}{v},
\]%
\[
a_{3,v}=\frac{12\gamma_{2}\left[  \boldsymbol{\tau}_{T}^{\prime}\left(
\mathbf{M}_{F}\mathbf{\odot M}_{F}\right)  \boldsymbol{\tau}_{T}\right]
\mathrm{tr}\left(  \mathbf{M}_{F}\mathbf{\odot M}_{F}\right)  }{v^{2}%
}=O\left(  1\right)  ,
\]%
\[
a_{4,v}=\frac{96\gamma_{2}\mathrm{tr}\left[  \left(  \mathbf{I}_{T}%
\mathbf{\odot M}_{F}\right)  \mathbf{M}_{F}\right]  }{v^{2}}=O\left(  \frac
{1}{v}\right)  ,
\]%
\[
a_{5,v}=\frac{48\gamma_{2}\left[  \boldsymbol{\tau}_{T}^{\prime}\left(
\mathbf{I}_{T}\mathbf{\odot M}_{F}\right)  \left(  \mathbf{I}_{T}\mathbf{\odot
M}_{F}\right)  \boldsymbol{\tau}_{T}\right]  }{v^{2}}=O\left(  \frac{1}%
{v}\right)  ,
\]%
\[
a_{6,v}=\frac{\left(  4\gamma_{4}v+24\gamma_{4}\right)  \mathrm{tr}\left(
\mathbf{M}_{F}\mathbf{\odot M}_{F}\mathbf{\odot M}_{F}\right)  }{v^{2}%
}=O\left(  1\right)  ,
\]%
\[
a_{7,v}=\frac{\gamma_{6}\mathrm{tr}\left(  \mathbf{M}_{F}\mathbf{\odot M}%
_{F}\mathbf{\odot M}_{F}\mathbf{\odot M}_{F}\right)  }{v^{2}}=O\left(
\frac{1}{v}\right)  ,
\]%
\[
a_{8,v}=\frac{48\gamma_{1}^{2}\left[  \boldsymbol{\tau}_{T}^{\prime}\left(
\mathbf{I}_{T}\mathbf{\odot M}_{F}\right)  \mathbf{M}_{F}\left(
\mathbf{I}_{T}\mathbf{\odot M}_{F}\right)  \boldsymbol{\tau}_{T}\right]
}{v^{2}}=O\left(  \frac{1}{\sqrt{v}}\right)  ,
\]%
\[
a_{9,v}=\frac{96\gamma_{1}^{2}\boldsymbol{\tau}_{T}^{\prime}\left(
\mathbf{M}_{F}\mathbf{\odot M}_{F}\right)  \mathbf{M}_{F}\left(
\mathbf{I}_{T}\mathbf{\odot M}_{F}\right)  \boldsymbol{\tau}_{T}}{v^{2}%
}=O\left(  \frac{1}{\sqrt{v}}\right)  ,
\]%
\[
a_{10,v}=\frac{96\gamma_{1}^{2}\mathrm{tr}\left[  \left(  \mathbf{M}%
_{F}\mathbf{\odot M}_{F}\right)  \mathbf{M}_{F}\right]  }{v^{2}}=O\left(
\frac{1}{v}\right)  ,
\]%
\[
a_{11,v}=\frac{3\gamma_{2}^{2}\left[  \mathrm{tr}\left(  \mathbf{M}%
_{F}\mathbf{\odot M}_{F}\right)  \right]  ^{2}}{v^{2}}=O\left(  1\right)  ,
\]%
\[
a_{12,v}=\frac{24\gamma_{2}^{2}\left[  \boldsymbol{\tau}_{T}^{\prime}\left(
\mathbf{I}_{T}\mathbf{\odot M}_{F}\right)  \left(  \mathbf{M}_{F}\mathbf{\odot
M}_{F}\right)  \left(  \mathbf{I}_{T}\mathbf{\odot M}_{F}\right)
\boldsymbol{\tau}_{T}\right]  }{v^{2}}=O\left(  \frac{1}{v}\right)  ,
\]%
\[
a_{13,v}=\frac{8\gamma_{2}^{2}\left[  \boldsymbol{\tau}_{T}^{\prime}\left(
\mathbf{M}_{F}\mathbf{\odot M}_{F}\mathbf{\odot M}_{F}\mathbf{\odot M}%
_{F}\right)  \boldsymbol{\tau}_{T}\right]  }{v^{2}}=O\left(  \frac{1}%
{v}\right)  ,
\]%
\[
a_{14,v}=\frac{24\gamma_{1}\gamma_{3}\left[  \boldsymbol{\tau}_{T}^{\prime
}\left(  \mathbf{I}_{T}\mathbf{\odot M}_{F}\right)  \mathbf{M}_{F}\left(
\mathbf{I}_{T}\mathbf{\odot M}_{F}\mathbf{\odot M}_{F}\right)
\boldsymbol{\tau}_{T}\right]  }{v^{2}}=O\left(  \frac{1}{\sqrt{v}}\right)  ,
\]%
\[
a_{15,v}=\frac{32\gamma_{1}\gamma_{3}\left[  \boldsymbol{\tau}_{T}^{\prime
}\left(  \mathbf{I}_{T}\mathbf{\odot M}_{F}\right)  \left(  \mathbf{M}%
_{F}\mathbf{\odot M}_{F}\mathbf{\odot M}_{F}\right)  \boldsymbol{\tau}%
_{T}\right]  }{v^{2}}=O\left(  \frac{1}{v}\right)  .
\]
Using these results in (\ref{E qv4}) it now follows that $E\left(  q_{v}%
^{4}\right)  =O\left(  1\right)  $, as required.
\end{proof}

\begin{lemma}
\label{Emom} Suppose the $T\times1$ vector $\boldsymbol{\varepsilon
}=\mathbf{(}\varepsilon_{1},\varepsilon_{2},...,\varepsilon_{T})^{\prime}$ is
$\boldsymbol{\varepsilon}\thicksim IID(\mathbf{0},\mathbf{I}_{T})$, $\sup
_{t}E\left(  \left\vert \varepsilon_{t}\right\vert ^{8+s}\right)  <C$ for some
small $s>0$, and $\mathbf{M}_{F}=\mathbf{I}_{T}-\mathbf{F(F}^{\prime
}\mathbf{F)}^{-1}\mathbf{F}^{\prime}$, where the $T\times m_{0}$ matrix
$\mathbf{F}$ is distributed independently of $\boldsymbol{\varepsilon}$.
Suppose there exists a finite integer $v_{0}$ such that for all $v>v_{0}$,
\begin{equation}
\frac{\boldsymbol{\varepsilon}^{\prime}\mathbf{M}_{F}\boldsymbol{\varepsilon}%
}{v}>c>0. \label{Lbound}%
\end{equation}
Then for $v>v_{0}$,
\begin{equation}
E\left[  \left(  \frac{\boldsymbol{\varepsilon}^{\prime}\mathbf{M}%
_{F}\boldsymbol{\varepsilon}}{v}\right)  ^{1/2}\right]  =1+O\left(  \frac
{1}{v}\right)  ,\text{ }E\left[  \left(  \frac{\boldsymbol{\varepsilon
}^{\prime}\mathbf{M}_{F}\boldsymbol{\varepsilon}}{v}\right)  ^{-s/2}\right]
=1+O\left(  \frac{1}{v}\right)  ,\text{ for }s=1,2,3,4, \label{Es}%
\end{equation}%
\begin{equation}
E\left[  \varepsilon_{t}^{2}\left(  \frac{\boldsymbol{\varepsilon}^{\prime
}\mathbf{M}_{F}\boldsymbol{\varepsilon}}{v}\right)  \right]  =1+O\left(
\frac{1}{v}\right)  , \label{E5}%
\end{equation}%
\begin{equation}
E\left[  \varepsilon_{t}^{2}\left(  \frac{\boldsymbol{\varepsilon}^{\prime
}\mathbf{M}_{F}\boldsymbol{\varepsilon}}{v}\right)  ^{-1}\right]  =1+O\left(
\frac{1}{v}\right)  , \label{E6}%
\end{equation}%
\begin{equation}
E\left[  \varepsilon_{t}^{2}\left(  \frac{\boldsymbol{\varepsilon}^{\prime
}\mathbf{M}_{F}\boldsymbol{\varepsilon}}{v}\right)  ^{-1/2}\right]
=1+O\left(  \frac{1}{v}\right)  , \label{E7}%
\end{equation}%
\begin{equation}
E\left[  \varepsilon_{t}\varepsilon_{t^{\prime}}\left(  \frac
{\boldsymbol{\varepsilon}^{\prime}\mathbf{M}_{F}\boldsymbol{\varepsilon}}%
{v}\right)  ^{-1/2}\right]  =O\left(  \frac{1}{v}\right)  ,\text{ for }t\neq
t^{\prime}, \label{E8}%
\end{equation}%
\begin{equation}
E\left[  \varepsilon_{t}\left(  \frac{\boldsymbol{\varepsilon}^{\prime
}\mathbf{M}_{F}\boldsymbol{\varepsilon}}{v}\right)  ^{-1/2}\right]  =O\left(
\frac{1}{v}\right)  . \label{E9}%
\end{equation}

\end{lemma}

\begin{proof}
To establish the results in (\ref{Es}) we first note that%
\begin{equation}
\frac{\boldsymbol{\varepsilon}^{\prime}\mathbf{M}_{F}\boldsymbol{\varepsilon}%
}{v}=1+\frac{1}{\sqrt{v}}q_{v} \label{eMeT^-1}%
\end{equation}
where
\begin{equation}
q_{v}=\sqrt{v}\left(  \frac{\boldsymbol{\varepsilon}^{\prime}\mathbf{M}%
_{F}\boldsymbol{\varepsilon}}{v}-1\right)  \text{.} \label{eMeT^-1/2}%
\end{equation}
Applying the Taylor Theorem to $\left(  v^{-1}\boldsymbol{\varepsilon}%
^{\prime}\mathbf{M}_{F}\boldsymbol{\varepsilon}\right)  ^{1/2}$ we have
\begin{equation}
\left(  \frac{\boldsymbol{\varepsilon}^{\prime}\mathbf{M}_{F}%
\boldsymbol{\varepsilon}}{v}\right)  ^{1/2}=1+\frac{1}{2}\frac{q_{v}}{\sqrt
{v}}-\frac{1}{8v}R_{v}, \label{shalf}%
\end{equation}
where%
\[
R_{v}=\left(  1+\frac{\bar{q}_{v}}{\sqrt{v}}\right)  ^{^{-3/2}}q_{v}^{2},
\]
and $\bar{q}_{v}$ lies on the interval between $0$ and $q_{v}$. Since
$E\left(  q_{v}\right)  =0$ as shown by result (\ref{qv}) of Lemma \ref{Qmom},
taking expectations of both sides of (\ref{shalf}) yields \ \
\begin{equation}
E\left(  \frac{\boldsymbol{\varepsilon}^{\prime}\mathbf{M}_{F}%
\boldsymbol{\varepsilon}}{v}\right)  ^{1/2}=1-\frac{1}{8v}E\left(
R_{v}\right)  . \label{Es_a}%
\end{equation}
It is, therefore, sufficient to show that $E\left(  \left\vert R_{v}%
\right\vert \right)  <C$. By Cauchy-Schwarz inequality we have
\begin{equation}
E\left(  \left\vert R_{v}\right\vert \right)  \leq\left[  E\left(  \left\vert
1+\frac{\bar{q}_{v}}{\sqrt{v}}\right\vert ^{-3}\right)  \right]  ^{1/2}\left[
E\left(  q_{v}^{4}\right)  \right]  ^{1/2}. \label{Rv}%
\end{equation}
Consider $\left\vert 1+\frac{\bar{q}_{v}}{\sqrt{v}}\right\vert $\ and
distinguish the cases (a) $q_{v}\geq0$ or equivalently if $\frac
{\boldsymbol{\varepsilon}^{\prime}\mathbf{M}_{F}\boldsymbol{\varepsilon}}%
{v}\geq1$ and (b) $q_{v}<0$ or equivalently if $\frac{\boldsymbol{\varepsilon
}^{\prime}\mathbf{M}_{F}\boldsymbol{\varepsilon}}{v}<1$. Under (a)$\ 0\leq$
$\bar{q}_{v}<q_{v}$, we have$\left\vert 1+\frac{\bar{q}_{v}}{\sqrt{v}%
}\right\vert \geq1$. Under (b) $q_{v}<\bar{q}_{v}<0$, we have $\left\vert
1+\frac{\bar{q}_{v}}{\sqrt{v}}\right\vert >\left\vert 1+\frac{q_{v}}{\sqrt{v}%
}\right\vert =\frac{\boldsymbol{\varepsilon}^{\prime}\mathbf{M}_{F}%
\boldsymbol{\varepsilon}}{v}$, and under condition (\ref{Lbound}) $\left\vert
1+\frac{\bar{q}_{v}}{\sqrt{v}}\right\vert >c>0$. Hence, irrespective of
whether $q_{v}\geq0$ or not,
\begin{equation}
\left\vert 1+\frac{\bar{q}_{v}}{\sqrt{v}}\right\vert >c>0, \label{Lb}%
\end{equation}
and we have $E\left(  \left\vert 1+\frac{\bar{q}_{v}}{\sqrt{v}}\right\vert
^{-3}\right)  <C$. Also it is established that $E\left(  q_{v}^{4}\right)
=O(1)$ by result (\ref{qv}) of Lemma \ref{Qmom}. Using these results in
(\ref{Rv}) it follows that $E\left(  \left\vert R_{v}\right\vert \right)  <C$,
and given (\ref{Es_a}) we can show$E\left[  \left(  \boldsymbol{\varepsilon
}^{\prime}\mathbf{M}_{F}\boldsymbol{\varepsilon}/v\right)  ^{1/2}\right]
=1+O\left(  v^{-1}\right)  . $ The other results in (\ref{Es}) can also be
established similarly. Result (\ref{E5}) follows (S.7) in Lemma 6 of Pesaran
and Yamagata (2024) by setting\textbf{\ }$\varepsilon_{t}^{2}%
=\boldsymbol{\varepsilon}^{\prime}\mathbf{A}_{1}\boldsymbol{\varepsilon}$,
where $\mathbf{A}_{1}$ has only one none-zero element on its diagonal. Result
(\ref{E6}) can be established using a result due to Lieberman (1994) (see
Lemmas 5 and 21 in the online supplement of Pesaran and Yamagata (2024)). To
establish (\ref{E7}), note that\ by applying the Taylor Theorem to $\left(
v^{-1}\boldsymbol{\varepsilon}^{\prime}\mathbf{M}_{F}\boldsymbol{\varepsilon
}\right)  ^{-1/2}$,
\[
\varepsilon_{t}^{2}\left(  \frac{\boldsymbol{\varepsilon}^{\prime}%
\mathbf{M}_{F}\boldsymbol{\varepsilon}}{v}\right)  ^{^{-1/2}}=\varepsilon
_{t}^{2}-\frac{1}{2}\frac{q_{v}\varepsilon_{t}^{2}}{\sqrt{v}}+\frac{3}%
{8v}R_{e,v}%
\]
where $R_{e,v}=\left(  1+\frac{\bar{q}_{v}}{\sqrt{v}}\right)  ^{-5/2}q_{v}%
^{2}\varepsilon_{t}^{2}, $ and taking expectations yields
\begin{equation}
E\left[  \varepsilon_{t}^{2}\left(  \frac{\boldsymbol{\varepsilon}^{\prime
}\mathbf{M}_{F}\boldsymbol{\varepsilon}}{v}\right)  ^{^{-1/2}}\right]
=E\left(  \varepsilon_{t}^{2}\right)  -\frac{1}{2}\frac{E\left(
q_{v}\varepsilon_{t}^{2}\right)  }{\sqrt{v}}+\frac{3}{8v}E\left(
R_{e,v}\right)  . \label{eshalf}%
\end{equation}
$E(\varepsilon_{t}^{2})=1$, and using (\ref{E5}) we have
\begin{equation}
E\left[  \varepsilon_{t}^{2}\left(  \frac{q_{v}}{\sqrt{v}}\right)  \right]
=E\left[  \varepsilon_{t}^{2}\left(  \frac{\boldsymbol{\varepsilon}^{\prime
}\mathbf{M}_{F}\boldsymbol{\varepsilon}}{v}-1\right)  \right]  =O\left(
\frac{1}{v}\right)  . \label{Ee2qv}%
\end{equation}
By Cauchy-Schwarz inequality
\begin{align}
E\left(  \left\vert R_{e,v}\right\vert \right)   &  \leq\left[  E\left[
\varepsilon_{t}^{4}\left(  \left\vert 1+\frac{\bar{q}_{v}}{\sqrt{v}%
}\right\vert ^{-5}\right)  \right]  \right]  ^{1/2}\left[  E\left(  q_{v}%
^{4}\right)  \right]  ^{1/2}\nonumber\\
&  \leq\left[  E\left(  \left\vert 1+\frac{\bar{q}_{v}}{\sqrt{v}}\right\vert
^{-10}\right)  \right]  ^{1/4}\left[  E\left(  \varepsilon_{t}^{8}\right)
\right]  ^{1/4}\left[  E\left(  q_{v}^{4}\right)  \right]  ^{1/2}. \label{Rev}%
\end{align}
Given (\ref{Lb}), it is easily seen that $E\left(  \left\vert 1+\frac{\bar
{q}_{v}}{\sqrt{v}}\right\vert ^{-10}\right)  <C$. Also $E\left(
\varepsilon_{t}^{8}\right)  <C$ by assumption and $E\left(  q_{v}^{4}\right)
<C$ by (\ref{qv}). Hence, using these results in (\ref{Rev}) it follows
$E\left(  \left\vert R_{e,v}\right\vert \right)  <C$, which completes the
proof of (\ref{E7}). To establish (\ref{E8}), using (\ref{shalf}) note that
for $t\neq t^{\prime}$,\textbf{ }%
\[
E\left[  \varepsilon_{t}\varepsilon_{t^{\prime}}\left(  \frac
{\boldsymbol{\varepsilon}^{\prime}\mathbf{M}_{F}\boldsymbol{\varepsilon}}%
{v}\right)  ^{-1/2}\right]  =E\left(  \varepsilon_{t}\varepsilon_{t^{\prime}%
}\right)  -\frac{1}{2}\frac{E\left(  q_{v}\varepsilon_{t}\varepsilon
_{t^{\prime}}\right)  }{\sqrt{v}}+\frac{3}{8v}E\left(  R_{tt^{\prime}%
,v}\right)
\]
where $R_{tt^{\prime},v}=\left(  1+\frac{\bar{q}_{v}}{\sqrt{v}}\right)
^{-5/2}q_{v}^{2}\varepsilon_{t}\varepsilon_{t^{\prime}}. $ Note that $E\left(
\varepsilon_{t}\varepsilon_{t^{\prime}}\right)  =0$ for $t\neq t^{\prime}$ by
serial independence of $\varepsilon_{t}$. In addition, using definition of
$q_{v}$ in (\ref{eMeT^-1/2}) yields%
\begin{align*}
\frac{E\left(  q_{v}\varepsilon_{t}\varepsilon_{t^{\prime}}\right)  }{\sqrt
{v}}  &  =E\left[  \left(  \frac{\boldsymbol{\varepsilon}^{\prime}%
\mathbf{M}_{F}\boldsymbol{\varepsilon}}{v}-1\right)  \varepsilon
_{t}\varepsilon_{t^{\prime}}\right]  =E\left[  \left(  \frac
{\boldsymbol{\varepsilon}^{\prime}\mathbf{M}_{F}\boldsymbol{\varepsilon}}%
{v}\right)  \varepsilon_{t}\varepsilon_{t^{\prime}}\right]  =\frac{1}%
{2v}E\left[  \left(  \boldsymbol{\varepsilon}^{\prime}\mathbf{A}%
\boldsymbol{\varepsilon}\right)  \left(  \boldsymbol{\varepsilon}^{\prime
}\mathbf{B}\boldsymbol{\varepsilon}\right)  \right]  ,
\end{align*}
where $\mathbf{A}=\mathbf{M}_{F}$ and $\mathbf{B}=\mathbf{(}b_{tt^{\prime}})$
with $b_{tt^{\prime}}$ and $b_{t^{\prime}t}$ ($t\neq t^{\prime}$) being the
only non-zero elements. Now using (S.7) of Lemma 6 in Pesaran and Yamagata
(2024) it follows that $E\left(  q_{v}\varepsilon_{t}\varepsilon_{t^{\prime}%
}\right)  =0$. Also by Cauchy-Schwarz inequality ,
\begin{align*}
E\left(  \left\vert R_{tt^{\prime},v}\right\vert \right)   &  \leq\left[
E\left[  \varepsilon_{t}^{2}\varepsilon_{t^{\prime}}^{2}\left(  \left\vert
1+\frac{\bar{q}_{v}}{\sqrt{v}}\right\vert ^{-5}\right)  \right]  \right]
^{1/2}\left[  E\left(  q_{v}^{4}\right)  \right]  ^{1/2}\\
&  \leq\left[  E\left(  \left\vert 1+\frac{\bar{q}_{v}}{\sqrt{v}}\right\vert
^{-10}\right)  \right]  ^{1/4}\left[  E\left(  \varepsilon_{t}^{4}\right)
E\left(  \varepsilon_{t^{\prime}}^{4}\right)  \right]  ^{1/4}\left[  E\left(
q_{v}^{4}\right)  \right]  ^{1/2}<C,
\end{align*}
where the final inequality follows using the same line of argument used to
bound $E\left(  \left\vert R_{e,v}\right\vert \right)  $ in (\ref{Rev}).
Overall, result (\ref{E8}) is established. Finally consider (\ref{E9}) and
note that\
\[
E\left[  \varepsilon_{t}\left(  \frac{\boldsymbol{\varepsilon}^{\prime
}\mathbf{M}_{F}\boldsymbol{\varepsilon}}{v}\right)  ^{-1/2}\right]  =E\left(
\varepsilon_{t}\right)  -\frac{1}{2}\frac{E\left(  q_{v}\varepsilon
_{t}\right)  }{\sqrt{v}}+\frac{3}{8v}E\left(  R_{t,v}\right)  ,
\]
where $R_{t,v}=\left(  1+\frac{\bar{q}_{v}}{\sqrt{v}}\right)  ^{-5/2}q_{v}%
^{2}\varepsilon_{t}. $ We have $E\left(  \varepsilon_{t}\right)  =0$ and
\[
\frac{E\left(  q_{v}\varepsilon_{t}\right)  }{\sqrt{v}}=E\left(
\frac{\boldsymbol{\varepsilon}^{\prime}\mathbf{M}_{F}\boldsymbol{\varepsilon
}\varepsilon_{t}}{v}-\varepsilon_{t}\right)  =E\left(  \frac
{\boldsymbol{\varepsilon}^{\prime}\mathbf{M}_{F}\boldsymbol{\varepsilon
}\varepsilon_{t}}{v}\right)  .
\]
Denote $\left\{  m_{jj^{\prime}}:j,j^{\prime}=1,2,\ldots,T\right\}  $ as the
element of $\mathbf{M}_{F}$, such that based on part (c) of Assumption
\ref{ass:errors} $m_{jj^{\prime}}$ is independent from $\varepsilon_{t}$ and
$E\left(  m_{jj^{\prime}}\right)  =O\left(  1\right)  $. Then it follows that%
\begin{align*}
E\left(  \frac{\boldsymbol{\varepsilon}^{\prime}\mathbf{M}_{F}%
\boldsymbol{\varepsilon}\varepsilon_{t}}{v}\right)   &  =\frac{1}{v}\sum
_{j=1}^{v}\sum_{j^{\prime}=1}^{v}E\left(  \varepsilon_{j}m_{jj^{\prime}%
}\varepsilon_{j^{\prime}}\varepsilon_{t}\right)  =\frac{1}{v}\sum_{j=1}%
^{v}\sum_{j^{\prime}=1}^{v}E\left(  \varepsilon_{j}\varepsilon_{j^{\prime}%
}\varepsilon_{t}\right)  E\left(  m_{jj^{\prime}}\right)  =\frac{1}{v}E\left(
\varepsilon_{t}^{3}\right)  E\left(  m_{tt}\right)
\end{align*}
which is $O\left(  v^{-1}\right)  $ where the second equation holds due to the
independence of $\varepsilon_{t}$ and $m_{jj^{\prime}},$ while the third
equation holds due to the serial independence of $\varepsilon_{t}$. Besides by
Cauchy-Schwarz inequality, \
\begin{align*}
E\left(  \left\vert R_{t,v}\right\vert \right)   &  \leq\left[  E\left[
\varepsilon_{t}^{2}\left(  \left\vert 1+\frac{\bar{q}_{v}}{\sqrt{v}%
}\right\vert ^{-5}\right)  \right]  \right]  ^{1/2}\left[  E\left(  q_{v}%
^{4}\right)  \right]  ^{1/2}\\
&  \leq\left[  E\left(  \left\vert 1+\frac{\bar{q}_{v}}{\sqrt{v}}\right\vert
^{-10}\right)  \right]  ^{1/4}\left[  E\left(  \varepsilon_{t}^{4}\right)
\right]  ^{1/4}\left[  E\left(  q_{v}^{4}\right)  \right]  ^{1/2}<C,
\end{align*}
where the last inequality holds again using the same line of argument used to
bound $E\left(  \left\vert R_{e,v}\right\vert \right)  $ in (\ref{Rev}).
Overall, result (\ref{E9}) is established.
\end{proof}

\begin{lemma}
\label{lmcdtcd}\textbf{ }Consider the latent factor model given by
(\ref{mod2}) and (\ref{uit:p}). $\widetilde{CD}$ and $CD$ statistics are
defined by (\ref{CDtilde1}) and (\ref{CDlm1}). Suppose that Assumptions
\ref{ass:factors}-\ref{ass:ws} hold and $\left(  n,T\right)  \rightarrow
\infty$, such that $n/T\rightarrow\kappa\,,$ for $0<\kappa<\infty$. Then
\begin{equation}
CD=\widetilde{CD}+o_{p}\left(  1\right)  . \label{lemgap}%
\end{equation}

\end{lemma}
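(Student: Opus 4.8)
The plan is to express $CD-\widetilde{CD}$ through the discrepancy between scaling the residuals by $\hat\sigma_{i,T}$ versus by $\omega_{i,T}$ and to show that this discrepancy is asymptotically negligible, using the fast rate $\hat\sigma_{i,T}^{-1}-\omega_{i,T}^{-1}=O_p(\delta_{nT}^{-2})$ from Lemma \ref{lm4} together with a bound on the largest eigenvalue of the PC-filtered residual cross-product matrix. Concretely, put $\hat h_{t,nT}=n^{-1/2}\sum_{i=1}^{n}e_{it}/\hat\sigma_{i,T}$ and $h_{t,nT}=n^{-1/2}\sum_{i=1}^{n}e_{it}/\omega_{i,T}$ (the latter being the object in (\ref{Avee})), and set $\delta_{t,nT}=\hat h_{t,nT}-h_{t,nT}=n^{-1/2}\sum_{i=1}^{n}c_ie_{it}$ with $c_i=\hat\sigma_{i,T}^{-1}-\omega_{i,T}^{-1}$. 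Then $\hat h_{t,nT}^2-h_{t,nT}^2=2h_{t,nT}\delta_{t,nT}+\delta_{t,nT}^2$, and comparing (\ref{CD1}) with (\ref{CDtilde1}) gives $CD-\widetilde{CD}=\sqrt{n/(n-1)}\,(2T)^{-1/2}\sum_{t=1}^{T}(2h_{t,nT}\delta_{t,nT}+\delta_{t,nT}^2)$. Since $\sqrt{n/(n-1)}=O(1)$, it suffices to show $T^{-1/2}\sum_{t=1}^{T}\delta_{t,nT}^2=o_p(1)$ and $T^{-1/2}\sum_{t=1}^{T}h_{t,nT}\delta_{t,nT}=o_p(1)$.

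For the $\delta^2$-term, let $\mathbf{c}=(c_1,\dots,c_n)^{\prime}$ and $\mathbf{E}=(\mathbf{e}_1,\dots,\mathbf{e}_n)=\mathbf{M}_{\hat{\mathbf{F}}}\mathbf{Y}$; then $T^{-1}\sum_{t=1}^{T}\delta_{t,nT}^2=(nT)^{-1}\mathbf{c}^{\prime}\mathbf{E}^{\prime}\mathbf{E}\,\mathbf{c}\le (nT)^{-1}\lambda_{\max}(\mathbf{E}^{\prime}\mathbf{E})\,\Vert\mathbf{c}\Vert^2$. By (\ref{Inhsigma}) of Lemma \ref{lm4}, $c_i=O_p(\delta_{nT}^{-2})$, and by the same integrability-across-$i$ argument used for (\ref{Ews1})--(\ref{Ewin}) the average of $c_i^2$ inherits this order, so $\Vert\mathbf{c}\Vert^2=n\,O_p(\delta_{nT}^{-4})$. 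For the eigenvalue, decompose $\mathbf{Y}=\mathbf{F}\boldsymbol\Gamma^{\prime}+\mathbf{U}$, so that $\mathbf{E}=\mathbf{M}_{\hat{\mathbf{F}}}\mathbf{F}\boldsymbol\Gamma^{\prime}+\mathbf{M}_{\hat{\mathbf{F}}}\mathbf{U}$ and $\lambda_{\max}(\mathbf{E}^{\prime}\mathbf{E})=\Vert\mathbf{E}\Vert^2\le 2\Vert\mathbf{M}_{\hat{\mathbf{F}}}\mathbf{F}\Vert^2\Vert\boldsymbol\Gamma\Vert^2+2\Vert\mathbf{M}_{\hat{\mathbf{F}}}\mathbf{U}\Vert^2$. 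Because $\mathbf{M}_{\hat{\mathbf{F}}}\hat{\mathbf{F}}=\mathbf{0}$, we have $\mathbf{M}_{\hat{\mathbf{F}}}\mathbf{F}=\mathbf{M}_{\hat{\mathbf{F}}}(\mathbf{F}-\hat{\mathbf{F}})$, whence $\Vert\mathbf{M}_{\hat{\mathbf{F}}}\mathbf{F}\Vert\le\Vert\hat{\mathbf{F}}-\mathbf{F}\Vert_F=O_p(\sqrt T/\delta_{nT})$ by (\ref{lm2.1}); Assumption \ref{ass:loadings} gives $\Vert\boldsymbol\Gamma\Vert^2=\lambda_{\max}(\boldsymbol\Gamma^{\prime}\boldsymbol\Gamma)=O(n)$; and Assumption \ref{ass:errors}(b) gives $\Vert\mathbf{M}_{\hat{\mathbf{F}}}\mathbf{U}\Vert^2\le\Vert\mathbf{U}\Vert^2=T\lambda_{\max}(\mathbf{V}_T)=O_p(T)$. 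As $n/\delta_{nT}^2=O(1)$ under $n/T\to\kappa$, this yields $\lambda_{\max}(\mathbf{E}^{\prime}\mathbf{E})=O_p(Tn/\delta_{nT}^2)+O_p(T)=O_p(T)$, hence $T^{-1}\sum_{t=1}^{T}\delta_{t,nT}^2=O_p(\delta_{nT}^{-4})=O_p(T^{-2})$ (since $n$ and $T$ are of the same order), and therefore $T^{-1/2}\sum_{t=1}^{T}\delta_{t,nT}^2=\sqrt T\,O_p(T^{-2})=o_p(1)$.

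For the cross term, note first that $T^{-1}\sum_{t=1}^{T}h_{t,nT}^2=O_p(1)$: by (\ref{Avee}), $h_{t,nT}=\psi_{t,nT}-s_{t,nT}$, where $T^{-1}\sum_t s_{t,nT}^2=T^{-1/2}p_{nT}=o_p(1)$ by Lemma \ref{lm5}, and $T^{-1}\sum_t\psi_{t,nT}^2=O_p(1)$ follows from (\ref{sumepsi}) together with $T^{-1}\sum_t\xi_{t,n}^2\to_p 1-\theta_n=O(1)$ (law of large numbers for the summands $\xi_{t,n}^2$, which are independent over $t$ with common mean $1-\theta_n$). Then, by the Cauchy--Schwarz inequality, $|T^{-1/2}\sum_{t=1}^{T}h_{t,nT}\delta_{t,nT}|\le\sqrt T\,(T^{-1}\sum_t h_{t,nT}^2)^{1/2}(T^{-1}\sum_t\delta_{t,nT}^2)^{1/2}=\sqrt T\,O_p(1)\,O_p(T^{-1})=o_p(1)$. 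Combining the two bounds, $CD-\widetilde{CD}=O(1)\cdot(o_p(1)+o_p(1))=o_p(1)$, which is (\ref{lemgap}).

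The main obstacle is the eigenvalue bound $\lambda_{\max}(\mathbf{E}^{\prime}\mathbf{E})=O_p(T)$: the crude estimate $\lambda_{\max}(\mathbf{E}^{\prime}\mathbf{E})\le\lambda_{\max}(\mathbf{Y}^{\prime}\mathbf{Y})$ is useless because $\mathbf{Y}^{\prime}\mathbf{Y}$ carries a spiked eigenvalue of order $nT$ from the strong latent factor(s), so one must exploit that $\mathbf{M}_{\hat{\mathbf{F}}}$ annihilates the estimated factor space and that the residual factor leakage $\mathbf{M}_{\hat{\mathbf{F}}}\mathbf{F}$ has Frobenius norm only $O_p(\sqrt T/\delta_{nT})$ --- small enough that even after amplification by $\Vert\boldsymbol\Gamma\Vert=O(\sqrt n)$ the factor contribution to $\lambda_{\max}(\mathbf{E}^{\prime}\mathbf{E})$ remains $O_p(T)$ when $n\asymp T$. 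A secondary point requiring care is the passage from the per-unit rate $c_i^2=O_p(\delta_{nT}^{-4})$ to the average $n^{-1}\sum_i c_i^2=O_p(\delta_{nT}^{-4})$, which is handled exactly as the analogous weighted-average claims in Lemma \ref{lm4}.
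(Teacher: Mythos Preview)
Your proof is correct and follows essentially the same strategy as the paper: both write $\hat h_{t,nT}=h_{t,nT}+g_{t,nT}$ (your $\delta_{t,nT}$), control the resulting quadratic forms through $\lambda_{\max}(\mathbf{E}^{\prime}\mathbf{E})/T=\Vert\mathbf{V}_{eT}\Vert=O_p(1)$, and use the rate $\hat\sigma_{i,T}^{-1}-\omega_{i,T}^{-1}=O_p(\delta_{nT}^{-2})$ from Lemma~\ref{lm4}. Two minor implementation differences are worth noting. First, your eigenvalue bound via the two-term split $\mathbf{E}=\mathbf{M}_{\hat{\mathbf{F}}}\mathbf{F}\boldsymbol\Gamma^{\prime}+\mathbf{M}_{\hat{\mathbf{F}}}\mathbf{U}$ is more economical than the paper's ten-term expansion of $\mathbf{V}_{eT}$ in the $(\mathbf{u}_t,\hat{\mathbf{f}}_t-\mathbf{f}_t,\hat{\boldsymbol\Gamma}-\boldsymbol\Gamma)$ basis. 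Second, for the cross term the paper bounds $|\mathbf{c}_{nT}^{\prime}\mathbf{V}_{eT}\mathbf{d}_{nT}|\le\Vert\mathbf{c}_{nT}\Vert\,\Vert\mathbf{V}_{eT}\Vert\,\Vert\mathbf{d}_{nT}\Vert$ directly, which avoids your detour through $T^{-1}\sum_t h_{t,nT}^2=O_p(1)$ and the forward references to Lemma~\ref{lm5} and (\ref{sumepsi}); in fact your own eigenvalue bound already gives $T^{-1}\sum_t h_{t,nT}^2=(nT)^{-1}\mathbf{c}_{nT}^{\prime}\mathbf{E}^{\prime}\mathbf{E}\,\mathbf{c}_{nT}\le (nT)^{-1}\lambda_{\max}(\mathbf{E}^{\prime}\mathbf{E})\Vert\mathbf{c}_{nT}\Vert^2=O_p(1)$ with $\mathbf{c}_{nT}=(\omega_{1,T}^{-1},\dots,\omega_{n,T}^{-1})^{\prime}$, so you could make the argument fully self-contained.
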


\begin{proof}
Using (\ref{CDtilde1}) and (\ref{CDlm1}) we first note that%
\begin{equation}
\left(  \sqrt{\frac{2\left(  n-1\right)  }{n}}\right)  \left(
CD-\widetilde{CD}\right)  =\frac{1}{\sqrt{T}}\sum_{t=1}^{T}\left[  \left(
\frac{1}{\sqrt{n}}\sum_{i=1}^{n}\frac{\hat{u}_{it}}{\hat{\sigma}_{i,T}%
}\right)  ^{2}-\left(  \frac{1}{\sqrt{n}}\sum_{i=1}^{n}\frac{\hat{u}_{it}%
}{\omega_{i,T}}\right)  ^{2}\right]  . \label{g0}%
\end{equation}
Also note that
\begin{equation}
\frac{1}{\sqrt{n}}\sum_{i=1}^{n}\frac{\hat{u}_{it}}{\hat{\sigma}_{i,T}%
}=h_{t,nT\ }+g_{t,nT} \label{g1}%
\end{equation}
where (also see (\ref{Avee}))%
\[
h_{t,nT\ }=\frac{1}{\sqrt{n}}\sum_{i=1}^{n}\frac{\hat{u}_{it}}{\omega_{i,T}%
}=\frac{\mathbf{c}_{nT}^{\prime}\mathbf{\hat{u}}_{\circ t}}{\sqrt{n}}\text{,
and }g_{t,nT}=\frac{1}{\sqrt{n}}\sum_{i=1}^{n}\hat{u}_{it}\left(  \frac
{1}{\hat{\sigma}_{i,T}}-\frac{1}{\omega_{i,T}}\right)  =\frac{\mathbf{d}%
_{nT}^{\prime}\mathbf{\hat{u}}_{\circ t}}{\sqrt{n}}%
\]
$\mathbf{\hat{u}}_{\circ t}=(\hat{u}_{1t},\hat{u}_{2t},...,\hat{u}%
_{nt})^{\prime}$, $\mathbf{c}_{nT}=(\omega_{1,T\ }^{-1},\omega_{2,T}%
^{-1},...,\omega_{n,T}^{-1})^{\prime}$, $\mathbf{d}_{nT}=(d_{1T}%
,d_{2T},...,d_{nT})^{\prime}\,$, and $d_{iT}=\hat{\sigma}_{i,T}^{-1}%
-\omega_{i,T}^{-1}$. Then squaring both sides of (\ref{g1}) and using the
result in (\ref{g0}) we have%
\begin{align}
\left(  \sqrt{\frac{2\left(  n-1\right)  }{n}}\right)  \left(
CD-\widetilde{CD}\right)   &  =\frac{1}{\sqrt{T}}\sum_{t=1}^{T}g_{t,nT}%
^{2}+\frac{2}{\sqrt{T}}\sum_{t=1}^{T}h_{t,nT\ }g_{t,nT}\nonumber\\
&  =\sqrt{\frac{T}{n}}\left(  \frac{1}{\sqrt{n}}\mathbf{d}_{nT}^{\prime
}\mathbf{\hat{V}}_{T}\mathbf{d}_{nT}+\frac{2}{\sqrt{n}}\mathbf{c}_{nT}%
^{\prime}\mathbf{\hat{V}}_{T}\mathbf{d}_{nT}\right)  , \label{gapCD}%
\end{align}
where $\mathbf{\hat{V}}_{T}=T^{-1}\sum_{t=1}^{T}\mathbf{\hat{u}}_{\circ
t}\mathbf{\hat{u}}_{\circ t}^{\prime}$. Now using (\ref{it1}), the error
vector $\mathbf{\hat{u}}_{\circ t}$ can be written as
\[
\mathbf{\hat{u}}_{\circ t}=\mathbf{u}_{\circ t}\left(  \lambda_{T}\right)
-\mathbf{\Gamma}\left(  \mathbf{\hat{f}}_{t}-\mathbf{f}_{t}\right)  -\left(
\mathbf{\hat{\Gamma}}-\mathbf{\Gamma}\right)  \mathbf{f}_{t}-\left(
\mathbf{\hat{\Gamma}}-\mathbf{\Gamma}\right)  \left(  \mathbf{\hat{f}}%
_{t}-\mathbf{f}_{t}\right)  .
\]
where $\mathbf{u}_{\circ t}\left(  \lambda_{T}\right)  =\left(  \sigma
_{1}\varepsilon_{1t}\left(  \lambda_{T}\right)  ,\sigma_{2}\varepsilon
_{2t}\left(  \lambda_{T}\right)  ,\ldots,\sigma_{n}\varepsilon_{nt}\left(
\lambda_{T}\right)  \right)  ^{\prime}$. Using this expression we now have%
\begin{align*}
\mathbf{\hat{V}}_{T}  &  =T^{-1}\sum_{t=1}^{T}\mathbf{u}_{\circ t}\left(
\lambda_{T}\right)  \mathbf{u}_{\circ t}^{\prime}\left(  \lambda_{T}\right)
+\mathbf{\Gamma}\left[  T^{-1}\sum_{t=1}^{T}\left(  \mathbf{\hat{f}}%
_{t}-\mathbf{f}_{t}\right)  \left(  \mathbf{\hat{f}}_{t}-\mathbf{f}%
_{t}\right)  ^{\prime}\right]  \mathbf{\Gamma}^{\prime}\\
&  +\left(  \mathbf{\hat{\Gamma}-\Gamma}\right)  \left(  T^{-1}\sum_{t=1}%
^{T}\mathbf{f}_{t}\mathbf{f}_{t}^{\prime}\right)  \left(  \mathbf{\hat{\Gamma
}-\Gamma}\right)  ^{\prime}+\left(  \mathbf{\hat{\Gamma}-\Gamma}\right)
\left[  T^{-1}\sum_{t=1}^{T}\left(  \mathbf{\hat{f}}_{t}-\mathbf{f}%
_{t}\right)  \left(  \mathbf{\hat{f}}_{t}-\mathbf{f}_{t}\right)  ^{\prime
}\right]  \left(  \mathbf{\hat{\Gamma}-\Gamma}\right)  ^{\prime}%
\mathbf{\Gamma}^{\prime}\\
&  -\left[  T^{-1}\sum_{t=1}^{T}\mathbf{u}_{\circ t}\left(  \lambda
_{T}\right)  \left(  \mathbf{\hat{f}}_{t}-\mathbf{f}_{t}\right)  ^{\prime
}\right]  -\left[  T^{-1}\sum_{t=1}^{T}\mathbf{u}_{\circ t}\left(  \lambda
_{T}\right)  \mathbf{f}_{t}^{\prime}\right]  \left(  \mathbf{\hat{\Gamma
}-\Gamma}\right)  ^{\prime}\\
&  -\left[  T^{-1}\sum_{t=1}^{T}\mathbf{u}_{\circ t}\left(  \lambda
_{T}\right)  \left(  \mathbf{\hat{f}}_{t}-\mathbf{f}_{t}\right)  ^{\prime
}\right]  \left(  \mathbf{\hat{\Gamma}-\Gamma}\right)  ^{\prime}%
+\mathbf{\Gamma}\left[  T^{-1}\sum_{t=1}^{T}\left(  \mathbf{\hat{f}}%
_{t}-\mathbf{f}_{t}\right)  \mathbf{f}_{t}^{\prime}\right]  \left(
\mathbf{\hat{\Gamma}-\Gamma}\right)  ^{\prime}\\
&  +\mathbf{\Gamma}\left[  T^{-1}\sum_{t=1}^{T}\left(  \mathbf{\hat{f}}%
_{t}-\mathbf{f}_{t}\right)  \left(  \mathbf{\hat{f}}_{t}-\mathbf{f}%
_{t}\right)  ^{\prime}\right]  \left(  \mathbf{\hat{\Gamma}-\Gamma}\right)
^{\prime}+\left(  \mathbf{\hat{\Gamma}-\Gamma}\right)  \left[  T^{-1}%
\sum_{t=1}^{T}\mathbf{f}_{t}\left(  \mathbf{\hat{f}}_{t}-\mathbf{f}%
_{t}\right)  ^{\prime}\right]  \left(  \mathbf{\hat{\Gamma}-\Gamma}\right)
^{\prime},
\end{align*}
or in matrix forms
\begin{align*}
\mathbf{\hat{V}}_{T}  &  =\mathbf{V}_{T}\left(  \lambda_{T}\right)
+\mathbf{\Gamma}\left[  T^{-1}\left(  \mathbf{\hat{F}-F}\right)  ^{\prime
}\left(  \mathbf{\hat{F}-F}\right)  \right]  \mathbf{\Gamma}^{\prime}+\left(
\mathbf{\hat{\Gamma}-\Gamma}\right)  \mathbf{\Sigma}_{T,ff}\left(
\mathbf{\hat{\Gamma}-\Gamma}\right)  ^{\prime}\\
&  +\left(  \mathbf{\hat{\Gamma}-\Gamma}\right)  \left[  T^{-1}\left(
\mathbf{\hat{F}-F}\right)  ^{\prime}\left(  \mathbf{\hat{F}-F}\right)
\right]  \left(  \mathbf{\hat{\Gamma}-\Gamma}\right)  ^{\prime}-T^{-1}%
\mathbf{U}^{\prime}\left(  \lambda_{T}\right)  \left(  \mathbf{\hat{F}%
-F}\right)  \mathbf{\Gamma}^{\prime}\\
&  -T^{-1}\mathbf{U}^{\prime}\left(  \lambda_{T}\right)  \mathbf{F}\left(
\mathbf{\hat{\Gamma}-\Gamma}\right)  ^{\prime}-T^{-1}\mathbf{U}^{\prime
}\left(  \lambda_{T}\right)  \left(  \mathbf{\hat{F}-F}\right)  \left(
\mathbf{\hat{\Gamma}-\Gamma}\right)  ^{\prime}+\mathbf{\Gamma}\left[
T^{-1}\mathbf{F}^{\prime}\left(  \mathbf{\hat{F}-F}\right)  \right]  \left(
\mathbf{\hat{\Gamma}-\Gamma}\right)  ^{\prime}\\
&  +\mathbf{\Gamma}\left[  T^{-1}\left(  \mathbf{\hat{F}-F}\right)  ^{\prime
}\left(  \mathbf{\hat{F}-F}\right)  \right]  \left(  \mathbf{\hat{\Gamma
}-\Gamma}\right)  ^{\prime}+\left(  \mathbf{\hat{\Gamma}-\Gamma}\right)
\left[  T^{-1}\mathbf{F}^{\prime}\left(  \mathbf{\hat{F}-F}\right)  \right]
\left(  \mathbf{\hat{\Gamma}-\Gamma}\right)  ^{\prime},%
\end{align*}
where $\mathbf{V}_{T}\left(  \lambda_{T}\right)  =T^{-1}\sum_{t=1}%
^{T}\mathbf{u}_{\circ t}\left(  \lambda_{T}\right)  \mathbf{u}_{\circ
t}^{^{\prime}}\left(  \lambda_{T}\right)  $ and $\mathbf{U}\left(  \lambda
_{T}\right)  =\left(  \mathbf{u}_{\circ1}\left(  \lambda_{T}\right)
,\mathbf{u}_{\circ2}\left(  \lambda_{T}\right)  ,\ldots,\mathbf{u}_{\circ
T}\left(  \lambda_{T}\right)  \right)  ^{\prime}$. By (\ref{|VT|}) we have
$\left\Vert \mathbf{V}_{T}\left(  \lambda_{T}\right)  \right\Vert =\mu_{\max
}\left(  \mathbf{V}_{T}\left(  \lambda_{T}\right)  \right)  =O_{p}(\frac{n}%
{T})$. By results in Lemma \ref{lm2} all other terms of the $\mathbf{\hat{V}%
}_{T}$ are either $O_{p}(1)$ or of lower order, and we also have
$\mathbf{\hat{V}}_{T}=O_{p}\left(  1\right)  $ since $n$ and $T$ are of the
same order. Consider now the terms in (\ref{gapCD}) and note that%
\[
\left(  \sqrt{\frac{2\left(  n-1\right)  }{n}}\right)  \left\vert
CD-\widetilde{CD}\right\vert <C\left\Vert \mathbf{\hat{V}}_{T}\right\Vert
\left[  \left(  \frac{1}{\sqrt{n}}\left\Vert \mathbf{d}_{nT}\right\Vert
^{2}\right)  +\left(  \frac{2}{\sqrt{n}}\left\Vert \mathbf{c}_{nT}\right\Vert
\right)  \left\Vert \mathbf{d}_{nT}\right\Vert \right]  .
\]
where $\frac{1}{\sqrt{n}}\left\Vert \mathbf{c}_{nT}\right\Vert =\left(
n^{-1}\sum_{i=1}^{n}\omega_{i,T}^{-2}\right)  ^{1/2}$ and $\left\Vert
\mathbf{d}_{nT}\right\Vert =\left(  \sum_{i=1}^{n}\left(  \hat{\sigma}%
_{i,T}^{-1}-\omega_{i,T}^{-1}\right)  ^{2}\right)  ^{1/2}$. By part (c) of
Assumption \ref{ass:errors} $E\left(  \omega_{i,T}^{-2}\right)  <C<\infty$,
such that $\frac{1}{n}E\left\Vert \mathbf{c}_{nT}\right\Vert ^{2}=n^{-1}%
\sum_{i=1}^{n}E\left(  \omega_{i,T}^{-2}\right)  <C$ and hence $n^{-1/2}%
\left\Vert \mathbf{c}_{nT}\right\Vert =O_{p}\left(  1\right)  $. Also by
(\ref{sup:|sgm(-1)-ome(-1)|}) we have \textbf{ }%
\[
\sup_{i}\left\vert \hat{\sigma}_{i,T}^{-1}-\omega_{i,T}^{-1}\right\vert
^{2}=\left(  \sup_{i}\left\vert \hat{\sigma}_{i,T}^{-1}-\omega_{i,T}%
^{-1}\right\vert \right)  ^{2}=O_{p}\left[  \left(  \frac{\ln\left(  n\right)
}{T}\right)  ^{2}\right]  ,
\]
therefore%
\[
\left\Vert \mathbf{d}_{nT}\right\Vert ^{2}=\sum_{i=1}^{n}\left(  \hat{\sigma
}_{i,T}^{-1}-\omega_{i,T}^{-1}\right)  ^{2}\leq n\sup_{i}\left(  \hat{\sigma
}_{i,T}^{-1}-\omega_{i,T}^{-1}\right)  ^{2}=O_{p}\left[  n\left(  \frac
{\ln\left(  n\right)  }{T}\right)  ^{2}\right]  =o_{p}\left(  1\right)  ,
\]
recalling that $n$ and $T$ are of the same order. Hence, $\left\vert
CD-\widetilde{CD}\right\vert =o_{p}(1)$, as required.
\end{proof}

\begin{lemma}
\label{lc}Suppose the data are generated by the latent factor model given by
(\ref{mod2}) and (\ref{uit:p}). The latent factors, $\mathbf{f}_{t}$, and
their loadings, $\boldsymbol{\gamma}_{i}$, are estimated by principal
components, $\mathbf{\hat{f}}_{t}$ and $\boldsymbol{\hat{\gamma}}_{i}$, given
by (\ref{hatg}). Suppose that Assumptions \ref{ass:factors}-\ref{ass:ws} hold
and $\left(  n,T\right)  \rightarrow\infty$, such that $n/T\rightarrow
\kappa\,,$ for $0<\kappa<\infty$. Then
\begin{align}
\frac{1}{n}\sum_{i=1}^{n}\frac{\boldsymbol{\hat{\gamma}}_{i}%
-\boldsymbol{\gamma}_{i}}{\sigma_{i}}  &  =O_{p}\left(  \sqrt{\frac{\ln\left(
n\right)  }{nT}}\right)  ,\label{lc2}\\
\frac{1}{n}\sum_{i=1}^{n}\left(  \boldsymbol{\hat{\gamma}}_{i}%
-\boldsymbol{\gamma}_{i}\right)  \sigma_{i}  &  =O_{p}\left(  \sqrt{\frac
{\ln\left(  n\right)  }{nT}}\right)  ,\label{lc3}\\
\frac{1}{n}\sum_{i=1}^{n}\sigma_{i}^{2}\left(  \boldsymbol{\hat{\gamma}}%
_{i}\boldsymbol{\hat{\gamma}}_{i}^{\prime}-\boldsymbol{\gamma}_{i}%
\boldsymbol{\gamma}_{i}^{\prime}\right)   &  =O_{p}\left(  \sqrt{\frac
{\ln\left(  n\right)  }{nT}}\right)  ,\label{lc4}\\
\frac{1}{n}\sum_{i=1}^{n}\hat{\sigma}_{i,T}\boldsymbol{\hat{\gamma}}_{i}%
-\frac{1}{n}\sum_{i=1}^{n}\sigma_{i}\boldsymbol{\gamma}_{i}  &  =O_{p}\left(
\frac{\ln\left(  n\right)  }{T}\right)  . \label{lc1}%
\end{align}

\end{lemma}

\begin{proof}
Results (\ref{lc2}) and (\ref{lc3}) follow directly from (\ref{hgg}) by
setting $b_{in}=\sigma_{i}^{-1}$ and $b_{in}=\sigma_{i}$, respectively. To
prove (\ref{lc4}), note that%
\begin{align}
\frac{1}{n}\sum_{i=1}^{n}\sigma_{i}^{2}\left(  \boldsymbol{\hat{\gamma}}%
_{i}\boldsymbol{\hat{\gamma}}_{i}^{\prime}-\boldsymbol{\gamma}_{i}%
\boldsymbol{\gamma}_{i}^{\prime}\right)   &  =\frac{1}{n}\sum_{i=1}^{n}%
\sigma_{i}^{2}\left(  \boldsymbol{\hat{\gamma}}_{i}-\boldsymbol{\gamma}%
_{i}\right)  \left(  \boldsymbol{\hat{\gamma}}_{i}-\boldsymbol{\gamma}%
_{i}\right)  ^{\prime}\nonumber\\
&  +\frac{1}{n}\sum_{i=1}^{n}\sigma_{i}^{2}\left(  \boldsymbol{\hat{\gamma}%
}_{i}-\boldsymbol{\gamma}_{i}\right)  \boldsymbol{\gamma}_{i}^{\prime}%
+\frac{1}{n}\sum_{i=1}^{n}\sigma_{i}^{2}\boldsymbol{\gamma}_{i}\left(
\boldsymbol{\hat{\gamma}}_{i}-\boldsymbol{\gamma}_{i}\right)  ^{\prime}.
\label{lc4.1}%
\end{align}
Since $\sigma_{i}^{2}$ is bounded, then
\begin{align*}
\left\Vert \frac{1}{n}\sum_{i=1}^{n}\sigma_{i}^{2}\left(  \boldsymbol{\hat
{\gamma}}_{i}-\boldsymbol{\gamma}_{i}\right)  \left(  \boldsymbol{\hat{\gamma
}}_{i}-\boldsymbol{\gamma}_{i}\right)  ^{\prime}\right\Vert  &  \leq\left(
\sup_{i}\sigma_{i}^{2}\right)  \left\Vert \frac{1}{n}\sum_{i=1}^{n}\left(
\boldsymbol{\hat{\gamma}}_{i}-\boldsymbol{\gamma}_{i}\right)  \left(
\boldsymbol{\hat{\gamma}}_{i}-\boldsymbol{\gamma}_{i}\right)  ^{\prime
}\right\Vert \\
&  \leq\left(  \sup_{i}\sigma_{i}^{2}\right)  \left(  \frac{1}{n}\sum
_{i=1}^{n}\left\Vert \boldsymbol{\hat{\gamma}}_{i}-\boldsymbol{\gamma}%
_{i}\right\Vert ^{2}\right)  ,
\end{align*}
and using (\ref{lm2.2}) it follows that
\[
\frac{1}{n}\sum_{i=1}^{n}\sigma_{i}^{2}\left(  \boldsymbol{\hat{\gamma}}%
_{i}-\boldsymbol{\gamma}_{i}\right)  \left(  \boldsymbol{\hat{\gamma}}%
_{i}-\boldsymbol{\gamma}_{i}\right)  ^{\prime}=O_{p}\left(  \frac{1}%
{\delta_{nT}^{2}}\right)  .
\]
Also, using (\ref{hggg}) and setting $b_{in}=\sigma_{i}$, we have%
\[
\left\Vert \frac{1}{n}\sum_{i=1}^{n}\sigma_{i}^{2}\left(  \boldsymbol{\hat
{\gamma}}_{i}-\boldsymbol{\gamma}_{i}\right)  \boldsymbol{\gamma}_{i}^{\prime
}\right\Vert =\left\Vert \frac{1}{n}\sum_{i=1}^{n}\sigma_{i}^{2}%
\boldsymbol{\gamma}_{i}\left(  \boldsymbol{\hat{\gamma}}_{i}%
-\boldsymbol{\gamma}_{i}\right)  ^{\prime}\right\Vert =O_{p}\left(
\sqrt{\frac{\ln\left(  n\right)  }{nT}}\right)  ,
\]
then (\ref{lc4}) is established based on (\ref{lc4.1}). Finally, consider
(\ref{lc1}) and note that
\begin{align}
&  \frac{1}{n}\sum_{i=1}^{n}\hat{\sigma}_{i,T}\boldsymbol{\hat{\gamma}}%
_{i}-\frac{1}{n}\sum_{i=1}^{n}\sigma_{i}\boldsymbol{\gamma}_{i}\nonumber\\
&  =\frac{1}{n}\sum_{i=1}^{n}\left[  \left(  \hat{\sigma}_{i,T}-\omega
_{i,T}\right)  +\omega_{i,T}\right]  \left(  \boldsymbol{\hat{\gamma}}%
_{i}-\boldsymbol{\gamma}_{i}+\boldsymbol{\gamma}_{i}\right)  -\frac{1}{n}%
\sum_{i=1}^{n}\sigma_{i}\boldsymbol{\gamma}_{i}\nonumber\\
&  =\frac{1}{n}\sum_{i=1}^{n}\boldsymbol{\gamma}_{i}\left(  \omega
_{i,T}-\sigma_{i}\right)  +\frac{1}{n}\sum_{i=1}^{n}\boldsymbol{\gamma}%
_{i}\left(  \hat{\sigma}_{i,T}-\omega_{i,T}\right)  +\frac{1}{n}\sum_{i=1}%
^{n}\sigma_{i}\left(  \boldsymbol{\hat{\gamma}}_{i}-\boldsymbol{\gamma}%
_{i}\right) \nonumber\\
&  +\frac{1}{n}\sum_{i=1}^{n}\left(  \omega_{i,T}-\sigma_{i}\right)  \left(
\boldsymbol{\hat{\gamma}}_{i}-\boldsymbol{\gamma}_{i}\right)  +\frac{1}{n}%
\sum_{i=1}^{n}\left(  \hat{\sigma}_{i,T}-\omega_{i,T}\right)  \left(
\boldsymbol{\hat{\gamma}}_{i}-\boldsymbol{\gamma}_{i}\right) \nonumber\\
&  =\mathbf{A}_{1,nT}+\mathbf{A}_{2,nT}+\mathbf{A}_{3,nT}+\mathbf{A}%
_{4,nT}+\mathbf{A}_{5,nT}. \label{lc1.1}%
\end{align}
Recall also that under Assumptions \ref{ass:errors} and \ref{ass:loadings}
$\sigma_{i}$ and $\boldsymbol{\gamma}_{i}$ are bounded and $\omega_{i,T}%
^{2}=T^{-1}\sigma_{i}^{2}\boldsymbol{\varepsilon}_{i\circ}^{^{\prime}%
}\mathbf{M}_{F}\boldsymbol{\varepsilon}_{i\circ}$, for $i=1,2,...,n$ are
distributed independently across $i,$ and from $\sigma_{i}$ and
$\boldsymbol{\gamma}_{i}$. Starting with $\mathbf{A}_{1,nT}$, by (\ref{Es}) we
have
\[
E\left(  \sqrt{nT}\mathbf{A}_{1,nT}\right)  =\frac{\sqrt{nT}}{n}\sum_{i=1}%
^{n}(\boldsymbol{\gamma}_{i}\sigma_{i})E\left[  \left(  \frac
{\boldsymbol{\varepsilon}_{i\circ}^{^{\prime}}\mathbf{M}_{F}%
\boldsymbol{\varepsilon}_{i\circ}}{T}\right)  ^{1/2}-1\right]  =O\left(
\frac{\sqrt{nT}}{T}\right)  .
\]
Since $n$ and $T$ are assumed to be of the same order then $E\left(  \sqrt
{nT}\mathbf{A}_{1,nT}\right)  =O\left(  1\right)  $. Also, using result
(\ref{Es})%
\[
Var\left(  \sqrt{nT}\mathbf{A}_{1,nT}\right)  =\frac{nT}{n^{2}}\sum_{i=1}%
^{n}\left(  \sigma_{i}^{2}\boldsymbol{\gamma}_{i}\boldsymbol{\gamma}%
_{i}^{\prime}\right)  Var\left[  \left(  \frac{\boldsymbol{\varepsilon
}_{i\circ}^{^{\prime}}\mathbf{M}_{F}\boldsymbol{\varepsilon}_{i\circ}}%
{T}\right)  ^{1/2}\right]  =O\left(  1\right)  .
\]
Therefore, $\sqrt{nT}\mathbf{A}_{1,nT}=O_{p}(1)$ and it follows that
$\mathbf{A}_{1,nT}=O_{p}\left[  \left(  nT\right)  ^{-1/2}\right]  $. Further,
using (\ref{Ews}) and setting $b_{in}=\gamma_{ij}$, for $j=1,2,...,m_{0}$, it
follows that
\[
\mathbf{A}_{2,nT}=\frac{1}{n}\sum_{i=1}^{n}\boldsymbol{\gamma}_{i}\left(
\hat{\sigma}_{i,T}-\omega_{i,T}\right)  =O_{p}\left(  \frac{\ln\left(
n\right)  }{T}\right)  .
\]
Since $\mathbf{A}_{3,nT}$ is the same as the result in (\ref{lc3}), which is
already established, then $\mathbf{A}_{3,nT}=O_{p}\left(  \sqrt{\ln\left(
n\right)  /\left(  nT\right)  }\right)  $. Using result (\ref{hgg_b}) it
follows that
\[
\mathbf{A}_{4,nT}=\frac{1}{n}\sum_{i=1}^{n}\left(  \omega_{i,T}-\sigma
_{i}\right)  \left(  \boldsymbol{\hat{\gamma}}_{i}-\boldsymbol{\gamma}%
_{i}\right)  =O_{p}\left(  \frac{1}{\delta_{nT}^{2}}\right)  .
\]
Using result (\ref{hgg_c}) we have%
\[
\mathbf{A}_{5,nT}=\frac{1}{n}\sum_{i=1}^{n}\left(  \hat{\sigma}_{i,T}%
-\omega_{i,T}\right)  \left(  \boldsymbol{\hat{\gamma}}_{i}-\boldsymbol{\gamma
}_{i}\right)  =O_{p}\left[  \left(  \frac{\ln\left(  n\right)  }{T}\right)
^{3/2}\right]  .
\]
Result (\ref{lc1}) now follows straightforwardly based on (\ref{lc1.1}).
\end{proof}

\begin{lemma}
\label{lm:gnT}\textbf{ }Suppose the data are generated by the latent factor
model given by (\ref{mod2}) and (\ref{uit:p}). Denote $\boldsymbol{\varphi
}_{n}=n^{-1}\sum_{i=1}^{n}\boldsymbol{\gamma}_{i}/\sigma_{i}$,
$\boldsymbol{\varphi}_{nT}=n^{-1}\sum_{i=1}^{n}\boldsymbol{\gamma}_{i}%
/\omega_{i,T}$ with $\omega_{i,T}=\left(  T^{-1}\sigma_{i}^{2}%
\boldsymbol{\varepsilon}_{i\circ}^{\prime}\mathbf{M}_{F}%
\boldsymbol{\varepsilon}_{i\circ}\right)  ^{1/2}$ where
$\boldsymbol{\varepsilon}_{i\circ}=\left(  \varepsilon_{i1},\varepsilon
_{i2},\ldots,\varepsilon_{iT}\right)  ^{\prime}$ and $\mathbf{M}%
_{F}=\mathbf{I}_{T}-\mathbf{F}\left(  \mathbf{F}^{\prime}\mathbf{F}\right)
^{-1}\mathbf{F}^{\prime}$. Suppose that Assumptions \ref{ass:factors}%
-\ref{ass:ws} hold and $\left(  n,T\right)  \rightarrow\infty$, such that
$n/T\rightarrow\kappa\,,$ for $0<\kappa<\infty$. Then%
\begin{align*}
g_{2,nT}\left(  \lambda_{T}\right)   &  =\frac{1}{\sqrt{T}}\sum_{t=1}%
^{T}\upsilon_{t,nT}^{2}\left(  \lambda_{T}\right)  =o_{p}\left(  1\right)  ,\\
g_{3,nT}\left(  \lambda_{T}\right)   &  =\sqrt{T}\left(  \boldsymbol{\varphi
}_{nT}-\boldsymbol{\varphi}_{n}\right)  ^{\prime}\left(  \frac{\sum_{t=1}%
^{T}\boldsymbol{\kappa}_{t,n}\left(  \lambda_{T}\right)  \boldsymbol{\kappa
}_{t,n}^{\prime}\left(  \lambda_{T}\right)  }{T}\right)  \left(
\boldsymbol{\varphi}_{nT}-\boldsymbol{\varphi}_{n}\right)  =o_{p}\left(
1\right)  ,\\
g_{4,nT}\left(  \lambda_{T}\right)   &  =\sqrt{T}\left(  \boldsymbol{\varphi
}_{nT}-\boldsymbol{\varphi}_{n}\right)  ^{\prime}\left(  \frac{1}{T}\sum
_{t=1}^{T}\boldsymbol{\kappa}_{t,n}\left(  \lambda_{T}\right)  \upsilon
_{t,nT}\left(  \lambda_{T}\right)  \right)  =o_{p}\left(  1\right)  ,\\
g_{5,nT}\left(  \lambda_{T}\right)   &  =\sqrt{T}\left(  \boldsymbol{\varphi
}_{nT}-\boldsymbol{\varphi}_{n}\right)  ^{\prime}\left(  \frac{1}{T}\sum
_{t=1}^{T}\boldsymbol{\kappa}_{t,n}\left(  \lambda_{T}\right)  \xi
_{t,n}\left(  \lambda_{T}\right)  \right)  =o_{p}\left(  1\right)  ,
\end{align*}
where%
\begin{align*}
\upsilon_{t,nT}\left(  \lambda_{T}\right)   &  =\frac{1}{\sqrt{n}}\sum
_{i=1}^{n}\left(  \frac{1}{\left(  \boldsymbol{\varepsilon}_{i\circ}^{\prime
}\mathbf{M}_{F}\boldsymbol{\varepsilon}_{i\circ}/T\right)  ^{1/2}}-1\right)
\varepsilon_{it}\left(  \lambda_{T}\right)  ,\\
\boldsymbol{\kappa}_{t,n}\left(  \lambda_{T}\right)   &  =\frac{1}{\sqrt{n}%
}\sum_{i=1}^{n}\boldsymbol{\gamma}_{i}\sigma_{i}\varepsilon_{it}\left(
\lambda_{T}\right)  ,\\
\xi_{t,n}\left(  \lambda_{T}\right)   &  =\frac{1}{\sqrt{n}}\sum_{i=1}%
^{n}a_{i,n}\varepsilon_{it}\left(  \lambda_{T}\right)  ,\text{ }%
a_{i,n}=1-\sigma_{i}\boldsymbol{\varphi}_{n}^{\prime}\boldsymbol{\gamma}_{i}.
\end{align*}

\end{lemma}

\begin{proof}
Starting with the $g_{2,nT}\left(  \lambda_{T}\right)  $, note that%
\begin{equation}
g_{2,nT}\left(  \lambda_{T}\right)  =\sqrt{T}\left(  \frac{1}{T}\sum_{t=1}%
^{T}\upsilon_{t,nT}^{2}\left(  \lambda_{T}\right)  \right)  \leq\sqrt
{T}\left(  \sup_{t}\upsilon_{t,nT}^{2}\left(  \lambda_{T}\right)  \right)  .
\label{g2nT}%
\end{equation}
Meanwhile, we have\textbf{ }%
\[
\left\vert \upsilon_{t,nT}\left(  \lambda_{T}\right)  \right\vert \leq\frac
{1}{\sqrt{n}}\sum_{i=1}^{n}\left\vert \left(  \frac{1}{\left(
\boldsymbol{\varepsilon}_{i\circ}^{\prime}\mathbf{M}_{F}%
\boldsymbol{\varepsilon}_{i\circ}/T\right)  ^{1/2}}-1\right)  \right\vert
\left\vert \varepsilon_{it}\left(  \lambda_{T}\right)  \right\vert ,
\]
and
\begin{equation}
\sup_{t}\left\vert \upsilon_{t,nT}\left(  \lambda_{T}\right)  \right\vert
\leq\left[  \frac{1}{\sqrt{n}}\sum_{i=1}^{n}\left\vert \left(  \frac
{1}{\left(  \boldsymbol{\varepsilon}_{i\circ}^{\prime}\mathbf{M}%
_{F}\boldsymbol{\varepsilon}_{i\circ}/T\right)  ^{1/2}}-1\right)  \right\vert
\right]  \left(  \sup_{i,t}\left\vert \varepsilon_{it}\left(  \lambda
_{T}\right)  \right\vert \right)  . \label{sup:vtn}%
\end{equation}
Consider the first term of the product in (\ref{sup:vtn}) and note since
$\boldsymbol{\varepsilon}_{i\circ}$ is cross-sectionally independent
conditional then
\[
E\left[  \frac{1}{\sqrt{n}}\sum_{i=1}^{n}\left\vert \left(  \frac{1}{\left(
\boldsymbol{\varepsilon}_{i\circ}^{\prime}\mathbf{M}_{F}%
\boldsymbol{\varepsilon}_{i\circ}/T\right)  ^{1/2}}-1\right)  \right\vert
\right]  ^{2}=\frac{1}{n}\sum_{i=1}^{n}E\left(  \frac{1}{\left(
\boldsymbol{\varepsilon}_{i\circ}^{\prime}\mathbf{M}_{F}%
\boldsymbol{\varepsilon}_{i\circ}/T\right)  ^{1/2}}-1\right)  ^{2}.
\]
Also, using (\ref{Es}) we obtain
\begin{align*}
E\left(  \frac{1}{\left(  \boldsymbol{\varepsilon}_{i\circ}^{\prime}%
\mathbf{M}_{F}\boldsymbol{\varepsilon}_{i\circ}/T\right)  ^{1/2}}-1\right)
^{2}  &  =E\left(  \frac{1}{\boldsymbol{\varepsilon}_{i\circ}^{\prime
}\mathbf{M}_{F}\boldsymbol{\varepsilon}_{i\circ}/T}\right)  -2E\left(
\frac{1}{\left(  \boldsymbol{\varepsilon}_{i\circ}^{\prime}\mathbf{M}%
_{F}\boldsymbol{\varepsilon}_{i\circ}/T\right)  ^{1/2}}\right)  +1\\
&  =\left[  E\left(  \frac{1}{\boldsymbol{\varepsilon}_{i\circ}^{\prime
}\mathbf{M}_{F}\boldsymbol{\varepsilon}_{i\circ}/T}\right)  -1\right]
-2\left[  E\left(  \frac{1}{\left(  \boldsymbol{\varepsilon}_{i\circ}^{\prime
}\mathbf{M}_{F}\boldsymbol{\varepsilon}_{i\circ}/T\right)  ^{1/2}}\right)
-1\right] \\
&  =O\left(  \frac{1}{T}\right)  .
\end{align*}
Therefore by Markov inequality, we have%
\begin{equation}
\frac{1}{\sqrt{n}}\sum_{i=1}^{n}\left\vert \left(  \frac{1}{\left(
\boldsymbol{\varepsilon}_{i\circ}^{\prime}\mathbf{M}_{F}%
\boldsymbol{\varepsilon}_{i\circ}/T\right)  ^{1/2}}-1\right)  \right\vert
=O_{p}\left(  \frac{1}{\sqrt{T}}\right)  . \label{ave:zMz1}%
\end{equation}
Now consider the second term of of the product in (\ref{sup:vtn}) and note
there exist $C_{\varepsilon,1}$, $C_{\varepsilon,2}$ and $r_{\varepsilon}>0$
such that
\[
\Pr\left(  \sup_{i,t}\left\vert \varepsilon_{it}\left(  \lambda_{T}\right)
\right\vert >a_{\varepsilon}\right)  \leq nT\sup_{i,t}\Pr\left(  \left\vert
\varepsilon_{it}\left(  \lambda_{T}\right)  \right\vert >a_{\varepsilon
}\right)  \leq nTC_{\varepsilon,1}\exp\left(  -C_{\varepsilon,2}\left(
a_{\varepsilon}\right)  ^{r_{\varepsilon}}\right)  .
\]
Hence, by letting $a_{\varepsilon}=\ominus\left(  \ln\left(  nT\right)
\right)  $, we have%
\[
\Pr\left(  \sup_{i,t}\left\vert \varepsilon_{it}\left(  \lambda_{T}\right)
\right\vert >a_{\varepsilon}\right)  \leq C_{\varepsilon,1}\exp\left(
\ln\left(  nT\right)  -C_{\varepsilon,2}\left(  a_{\varepsilon}\right)
^{r_{\varepsilon}}\right)  =O\left(  1\right)  ,
\]
which further implies $\sup_{i,t}\left\vert \varepsilon_{it}\left(
\lambda_{T}\right)  \right\vert =O_{p}\left(  \ln\left(  nT\right)  \right)
$. Invoking this result and (\ref{ave:zMz1}) in (\ref{sup:vtn}) now yields%
\begin{equation}
\sup_{t}\left\vert \upsilon_{t,nT}\left(  \lambda_{T}\right)  \right\vert
=O_{p}\left(  \frac{\ln\left(  nT\right)  }{\sqrt{T}}\right)  . \label{sup:v}%
\end{equation}
Then consider (\ref{g2nT}) and it follows%
\begin{equation}
g_{2,nT}\left(  \lambda_{T}\right)  \leq\sqrt{T}\left(  \sup_{t}%
\upsilon_{t,nT}^{2}\left(  \lambda_{T}\right)  \right)  \leq\sqrt{T}\left(
\sup_{t}\left\vert \upsilon_{t,nT}\left(  \lambda_{T}\right)  \right\vert
\right)  ^{2}=O_{p}\left(  \frac{\left[  \ln\left(  nT\right)  \right]  ^{2}%
}{\sqrt{T}}\right)  =o_{p}\left(  1\right)  , \label{g2_p}%
\end{equation}
as $n$ and $T$ are of the same order of magnitude. Consider $g_{3,nT}\left(
\lambda_{T}\right)  $ and note that the result of Lemma \ref{PHIn} holds such
that \textbf{ }
\begin{equation}
\sqrt{T}\left(  \boldsymbol{\varphi}_{n}-\boldsymbol{\varphi}_{nT}\right)
=O_{p}\left(  n^{-1/2}\right)  +O_{p}\left(  T^{-1/2}\right)  . \label{phin_p}%
\end{equation}
Furthermore, we have
\begin{align*}
E\left\Vert \boldsymbol{\kappa}_{t,n}\left(  \lambda_{T}\right)  \right\Vert
^{2}  &  =\frac{1}{n}\sum_{i=1}^{n}\sum_{j=1}^{n}\boldsymbol{\gamma}%
_{i}^{\prime}\boldsymbol{\gamma}_{j}\sigma_{i}\sigma_{j}E\left(
\varepsilon_{it}\left(  \lambda_{T}\right)  \varepsilon_{jt}\left(
\lambda_{T}\right)  \right) \\
&  \leq\frac{1}{n}\sum_{i=1}^{n}\sum_{j=1}^{n}\left\Vert \boldsymbol{\gamma
}_{i}\right\Vert \left\Vert \boldsymbol{\gamma}_{j}\right\Vert \left\vert
\sigma_{i}\sigma_{j}\right\vert \left\vert E\left(  \varepsilon_{it}\left(
\lambda_{T}\right)  \varepsilon_{jt}\left(  \lambda_{T}\right)  \right)
\right\vert \\
&  \leq\left(  \sup_{i}\left\Vert \boldsymbol{\gamma}_{i}\right\Vert
^{2}\right)  \left(  \sup_{i}\sigma_{i}^{2}\right)  \left[  \frac{1}{n}%
\sum_{i=1}^{n}\sum_{j=1}^{n}\left\vert E\left(  \varepsilon_{it}\left(
\lambda_{T}\right)  \varepsilon_{jt}\left(  \lambda_{T}\right)  \right)
\right\vert \right]  .
\end{align*}
Given (\ref{ave:rho}) and boundedness of $\boldsymbol{\gamma}_{i}$ and
$\sigma_{i}$, it follows
\begin{equation}
E\left\Vert \boldsymbol{\kappa}_{t,n}\left(  \lambda_{T}\right)  \right\Vert
^{2}\leq\left(  \sup_{i}\left\Vert \boldsymbol{\gamma}_{i}\right\Vert
^{2}\right)  \left(  \sup_{i}\sigma_{i}^{2}\right)  \left[  \frac{1}{n}%
\sum_{i=1}^{n}\sum_{j=1}^{n}\left\vert E\left(  \varepsilon_{it}\left(
\lambda_{T}\right)  \varepsilon_{jt}\left(  \lambda_{T}\right)  \right)
\right\vert \right]  =O\left(  1\right)  \label{E:kappa2}%
\end{equation}
and $\boldsymbol{\kappa}_{t,n}\left(  \lambda_{T}\right)  $ is therefore
$O_{p}\left(  1\right)  $. Also, under Assumption \ref{ass:errors}
$\boldsymbol{\kappa}_{t,n}\left(  \lambda_{T}\right)  $ is serially
independent and we have $T^{-1}\sum_{t=1}^{T}\boldsymbol{\kappa}_{t,n}\left(
\lambda_{T}\right)  \boldsymbol{\kappa}_{t,n}^{\prime}\left(  \lambda
_{T}\right)  =O_{p}\left(  1\right)  $. Using this result together with
(\ref{phin_p}) we then have
\begin{equation}
g_{3,nT}\left(  \lambda_{T}\right)  =o_{p}(1). \label{g3_p}%
\end{equation}
Next consider $g_{4,nT}\left(  \lambda_{T}\right)  $ and note that by
Cauchy-Schwarz inequality%
\[
\left\Vert \frac{1}{T}\sum_{t=1}^{T}\boldsymbol{\kappa}_{t,n}\left(
\lambda_{T}\right)  \upsilon_{t,nT}\left(  \lambda_{T}\right)  \right\Vert
\leq\left(  \frac{1}{T}\sum_{t=1}^{T}\left\Vert \boldsymbol{\kappa}%
_{t,n}\left(  \lambda_{T}\right)  \right\Vert ^{2}\right)  ^{1/2}\left(
\frac{1}{T}\sum_{t=1}^{T}\upsilon_{t,nT}^{2}\left(  \lambda_{T}\right)
\right)  ^{1/2}=O_{p}\left(  \frac{\ln\left(  nT\right)  }{\sqrt{T}}\right)
\]
where the equation holds by (\ref{g2_p})\ and (\ref{E:kappa2}). Then using the
above results it also follows that%
\begin{equation}
g_{4,nT}\left(  \lambda_{T}\right)  =\sqrt{T}\left(  \boldsymbol{\varphi}%
_{nT}-\boldsymbol{\varphi}_{n}\right)  ^{\prime}\left(  \frac{1}{T}\sum
_{t=1}^{T}\boldsymbol{\kappa}_{t,n}\left(  \lambda_{T}\right)  \upsilon
_{t,nT}\left(  \lambda_{T}\right)  \right)  =o_{p}(1). \label{g4_p}%
\end{equation}
Similarly, note $\xi_{t,n}\left(  \lambda_{T}\right)  =n^{-1/2}\sum_{i=1}%
^{n}a_{i,n}\varepsilon_{it}\left(  \lambda_{T}\right)  $ and by Cauchy-Schwarz
inequality,%
\[
\frac{1}{T}\sum_{t=1}^{T}\boldsymbol{\kappa}_{t,n}\left(  \lambda_{T}\right)
\xi_{t,n}\left(  \lambda_{T}\right)  \leq\left(  \frac{1}{T}\sum_{t=1}%
^{T}\left\Vert \boldsymbol{\kappa}_{t,n}\left(  \lambda_{T}\right)
\right\Vert ^{2}\right)  ^{1/2}\left(  \frac{1}{T}\sum_{t=1}^{T}\xi_{t,n}%
^{2}\left(  \lambda_{T}\right)  \right)  ^{1/2}%
\]
where given (\ref{ave:rho}), $\sup_{i}a_{i,n}^{2}<C$ and $\sup_{i}\sigma
_{i}^{-2}<C$, it further follows that
\begin{align}
E\left(  \xi_{t,n}^{2}\left(  \lambda_{T}\right)  \right)   &  =\frac{1}%
{n}\sum_{i=1}^{n}\sum_{j=1}^{n}a_{i,n}a_{j,n}E\left(  \varepsilon_{it}\left(
\lambda_{T}\right)  \varepsilon_{jt}\left(  \lambda_{T}\right)  \right)
\nonumber\\
&  <\left(  \sup_{i}a_{i,n}^{2}\right)  \frac{1}{n}\sum_{i=1}^{n}\sum
_{j=1}^{n}\left\vert E\left(  \varepsilon_{it}\left(  \lambda_{T}\right)
\varepsilon_{jt}\left(  \lambda_{T}\right)  \right)  \right\vert <C.
\label{E:ksi2}%
\end{align}
Hence, $T^{-1}\sum_{t=1}^{T}\boldsymbol{\kappa}_{t,n}\left(  \lambda
_{T}\right)  \xi_{t,n}\left(  \lambda_{T}\right)  =O_{p}\left(  1\right)  $,
and again using (\ref{phin_p}) it follows that
\begin{equation}
g_{5,nT}\left(  \lambda_{T}\right)  =\sqrt{T}\left(  \boldsymbol{\varphi}%
_{nT}-\boldsymbol{\varphi}_{n}\right)  ^{\prime}\left(  \frac{1}{T}\sum
_{t=1}^{T}\boldsymbol{\kappa}_{t,n}\left(  \lambda_{T}\right)  \xi
_{t,n}\left(  \lambda_{T}\right)  \right)  =o_{p}(1). \label{g5_p}%
\end{equation}

\end{proof}

\begin{lemma}
\label{lm:wnT}\textbf{ }Suppose the data are generated by the latent factor
model given by (\ref{mod2}) and (\ref{uit:p}). Further denote
\[
w_{nT}\left(  \lambda_{T}\right)  =\frac{T^{-1/2}\sum_{t=1}^{T}\xi
_{t,n}\left(  \lambda_{T}\right)  \upsilon_{t,nT}\left(  \lambda_{T}\right)
}{T^{-1}\sum_{t=1}^{T}\xi_{t,n}^{2}\left(  \lambda_{T}\right)  },
\]
where%
\begin{align*}
\xi_{t,n}\left(  \lambda_{T}\right)   &  =\frac{1}{\sqrt{n}}\sum_{i=1}%
^{n}a_{i,n}\varepsilon_{it}\left(  \lambda_{T}\right)  ,a_{i,n}=1-\sigma
_{i}\boldsymbol{\varphi}_{n}^{\prime}\boldsymbol{\gamma}_{i},\\
\upsilon_{t,nT}\left(  \lambda_{T}\right)   &  =\frac{1}{\sqrt{n}}\sum
_{i=1}^{n}\zeta_{it}\varepsilon_{it}\left(  \lambda_{T}\right)  ,\zeta
_{it}=\frac{1}{\left(  T^{-1}\boldsymbol{\varepsilon}_{i\circ}^{\prime
}\mathbf{M}_{F}\boldsymbol{\varepsilon}_{i\circ}\right)  ^{1/2}}-1,
\end{align*}
with $\boldsymbol{\varphi}_{n}=n^{-1}\sum_{i=1}^{n}\boldsymbol{\gamma}%
_{i}/\sigma_{i}$, $\boldsymbol{\varepsilon}_{i\circ}=\left(  \varepsilon
_{i1},\varepsilon_{i2},\ldots,\varepsilon_{iT}\right)  ^{\prime}$,
$\mathbf{M}_{F}=\mathbf{I}_{T}-\mathbf{F}\left(  \mathbf{F}^{\prime}%
\mathbf{F}\right)  ^{-1}\mathbf{F}^{\prime}$. Suppose that Assumptions
\ref{ass:factors}-\ref{ass:ws} hold and $(n,T)\rightarrow\infty$, such that
$n/T\rightarrow\kappa,$ and $0<\kappa<\infty$. Then $w_{nT}\left(  \lambda
_{T}\right)  =o_{p}\left(  1\right)  $.
\end{lemma}

\begin{proof}
Consider first the denominator of $w_{nT}\left(  \lambda_{T}\right)  $ and
using (\ref{E:ksi2}) note that
\[
E\left(  \frac{1}{T}\sum_{t=1}^{T}\xi_{t,n}^{2}\left(  \lambda_{T}\right)
\right)  =\frac{1}{T}\sum_{t=1}^{T}E\left(  \xi_{t,n}^{2}\left(  \lambda
_{T}\right)  \right)  <C.
\]
Hence, by Markov inequality it is obvious that the denominator of
$w_{nT}\left(  \lambda_{T}\right)  $ is $O_{p}\left(  1\right)  $.\textbf{
}Consider now the numerator of $w_{nT}\left(  \lambda_{T}\right)  $, which is
denoted as $r_{nT}\left(  \lambda_{T}\right)  $. For simplicity, let%
\[
\xi_{t,n}=\frac{1}{\sqrt{n}}\sum_{i=1}^{n}a_{i,n}\varepsilon_{it}%
,\upsilon_{t,nT}=\frac{1}{\sqrt{n}}\sum_{i=1}^{n}\zeta_{it}\varepsilon_{it},
\]
then%
\begin{align*}
r_{nT}\left(  \lambda_{T}\right)   &  =\frac{1}{\sqrt{T}}\sum_{t=1}^{T}%
\xi_{t,n}\upsilon_{t,nT}+\frac{1}{\sqrt{T}}\sum_{t=1}^{T}\left[  \xi
_{t,n}\left(  \lambda_{T}\right)  -\xi_{t,n}\right]  \upsilon_{t,nT}\left(
\lambda_{T}\right) \\
&  +\frac{1}{\sqrt{T}}\sum_{t=1}^{T}\xi_{t,n}\left(  \lambda_{T}\right)
\left[  \upsilon_{t,nT}\left(  \lambda_{T}\right)  -\upsilon_{t,nT}\right]
-\frac{1}{\sqrt{T}}\sum_{t=1}^{T}\left[  \xi_{t,n}\left(  \lambda_{T}\right)
-\xi_{t,n}\right]  \left[  \upsilon_{t,nT}\left(  \lambda_{T}\right)
-\upsilon_{t,nT}\right] \\
&  =r_{1,nT}+\sum_{j=2}^{4}r_{j,nT}\left(  \lambda_{T}\right)  .
\end{align*}
To bound $r_{1,nT}$, we firstly note $\xi_{t,n}\upsilon_{t,nT}=\frac{1}{n}%
\sum_{i=1}^{n}\sum_{j=1}^{n}a_{in}\varepsilon_{it}\zeta_{jt},$ where for
$i\neq j$ $\varepsilon_{it}$ and $\zeta_{jt}$ are distributed independently by
parts (a) and (c) of Assumption \ref{ass:errors}. Therefore,
\[
E\left(  \xi_{t,n}\upsilon_{t,nT}\right)  =\frac{1}{n}\sum_{i=1}^{n}%
a_{in}E\left[  \varepsilon_{it}^{2}\left(  \frac{1}{\left(  T^{-1}%
\boldsymbol{\varepsilon}_{i\circ}^{\prime}\mathbf{M}_{F}%
\boldsymbol{\varepsilon}_{i\circ}\right)  ^{1/2}}-1\right)  \right]  .
\]
Furthermore, since $a_{in}$ is bounded, and by result (\ref{E7})
\[
E\left[  \varepsilon_{it}^{2}\left(  \frac{1}{\left(  T^{-1}%
\boldsymbol{\varepsilon}_{i\circ}^{\prime}\mathbf{M}_{F}%
\boldsymbol{\varepsilon}_{i\circ}\right)  ^{1/2}}-1\right)  \right]  =O\left(
\frac{1}{T}\right)  .
\]
Then $E\left(  \xi_{t,n}\upsilon_{t,nT}\right)  =O\left(  T^{-1}\right)  $,
and it follows that
\begin{align}
E\left(  r_{1,nT}\right)   &  =\frac{1}{\sqrt{T}}\sum_{t=1}^{T}E\left(
\xi_{t,n}\upsilon_{t,nT}\right)  =O\left(  \frac{1}{\sqrt{T}}\right)
,\label{EwnT}\\
Var\left(  r_{1,nT}\right)   &  =E\left(  r_{1,nT}^{2}\right)  -\left[
E\left(  r_{1,nT}\right)  \right]  ^{2}=\frac{1}{T}\sum_{t=1}^{T}%
\sum_{t^{\prime}=1}^{T}E\left(  \xi_{t,n}\upsilon_{t,nT}\xi_{t^{\prime}%
,n}v_{t^{\prime},nT}\right)  +O\left(  T^{-1}\right)  . \label{Varr_nT}%
\end{align}
Consider now the first term of $Var\left(  r_{1,nT}\right)  $, and using
(\ref{zeta}) we have
\begin{align*}
&  E\left(  \xi_{t,n}\upsilon_{t,nT}\xi_{t^{\prime},n}v_{t^{\prime},nT}\right)
\\
&  \equiv E\left[  \left(  \frac{1}{n}\sum_{i=1}^{n}\sum_{j=1}^{n}a_{in}%
a_{jn}\varepsilon_{it}\varepsilon_{jt^{\prime}}\right)  \left(  \frac{1}%
{n}\sum_{r=1}^{n}\sum_{s=1}^{n}\zeta_{rt}\zeta_{st^{\prime}}\right)  \right]
\\
&  =\frac{1}{n^{2}}\sum_{i=1}^{n}\sum_{j=1}^{n}\sum_{r=1}^{n}\sum_{s=1}%
^{n}a_{in}a_{jn}E\left\{  \left(  \frac{1}{\left(  T^{-1}%
\boldsymbol{\varepsilon}_{r\circ}^{\prime}\mathbf{M}_{F}%
\boldsymbol{\varepsilon}_{r\circ}\right)  ^{1/2}}-1\right)  \left(  \frac
{1}{\left(  T^{-1}\boldsymbol{\varepsilon}_{s\circ}^{\prime}\mathbf{M}%
_{F}\boldsymbol{\varepsilon}_{s\circ}\right)  ^{1/2}}-1\right)  \varepsilon
_{it}\varepsilon_{jt^{\prime}}\varepsilon_{rt}\varepsilon_{st^{\prime}%
}\right\}  .
\end{align*}
Since by part (a) of Assumption \ref{ass:errors}, $\varepsilon_{it}^{\prime}s$
are cross-sectionally independent and after some algebra we have%
\begin{equation}
\frac{1}{T}\sum_{t=1}^{T}\sum_{t^{\prime}=1}^{T}E\left(  \xi_{t,n}%
\upsilon_{t,nT}\xi_{t^{\prime},n}v_{t^{\prime},nT}\right)  =\sum_{l=1}%
^{6}D_{l,nT}, \label{egzi*v}%
\end{equation}
where%
\[
D_{1,nT}=\frac{1}{Tn^{2}}\sum_{t=1}^{T}\sum_{t^{\prime}=1}^{T}\sum_{i=1}%
^{n}a_{in}^{2}E\left[  \varepsilon_{it}^{2}\varepsilon_{it^{\prime}}%
^{2}\left(  \frac{1}{\left(  T^{-1}\boldsymbol{\varepsilon}_{i\circ}^{\prime
}\mathbf{M}_{F}\boldsymbol{\varepsilon}_{i\circ}\right)  ^{1/2}}-1\right)
^{2}\right]  ,
\]%
\begin{align*}
D_{2,nT}  &  =\frac{1}{Tn^{2}}\sum_{t=1}^{T}\sum_{t^{\prime}=1}^{T}\sum
_{i=1}^{n}\sum_{r\neq i}^{n}a_{in}^{2}E\left[  \varepsilon_{it}\varepsilon
_{it^{\prime}}^{2}\left(  \frac{1}{\left(  T^{-1}\boldsymbol{\varepsilon
}_{i\circ}^{\prime}\mathbf{M}_{F}\boldsymbol{\varepsilon}_{i\circ}\right)
^{1/2}}-1\right)  \right]  \times\\
&  E\left[  \left(  \frac{1}{\left(  T^{-1}\boldsymbol{\varepsilon}_{r\circ
}^{\prime}\mathbf{M}_{F}\boldsymbol{\varepsilon}_{r\circ}\right)  ^{1/2}%
}-1\right)  \varepsilon_{rt}\right]  ,
\end{align*}%
\begin{align*}
D_{3,nT}  &  =\frac{1}{Tn^{2}}\sum_{t=1}^{T}\sum_{t^{\prime}=1}^{T}\sum
_{i=1}^{n}\sum_{s\neq i}^{n}a_{in}^{2}E\left[  \varepsilon_{it}^{2}%
\varepsilon_{it^{\prime}}\left(  \frac{1}{\left(  T^{-1}%
\boldsymbol{\varepsilon}_{i\circ}^{\prime}\mathbf{M}_{F}%
\boldsymbol{\varepsilon}_{i\circ}\right)  ^{1/2}}-1\right)  \right]  \times\\
&  E\left[  \left(  \frac{1}{\left(  T^{-1}\boldsymbol{\varepsilon}_{s\circ
}^{\prime}\mathbf{M}_{F}\boldsymbol{\varepsilon}_{s\circ}\right)  ^{1/2}%
}-1\right)  \varepsilon_{st^{\prime}}\right]  ,
\end{align*}%
\[
D_{4,nT}=\frac{1}{Tn^{2}}\sum_{t=1}^{T}\sum_{t^{\prime}=1}^{T}\sum_{i=1}%
^{n}\sum_{r\neq i}^{n}a_{in}^{2}E\left(  \varepsilon_{it}\varepsilon
_{it^{\prime}}\right)  E\left[  \left(  \frac{1}{\left(  T^{-1}%
\boldsymbol{\varepsilon}_{r\circ}^{\prime}\mathbf{M}_{F}%
\boldsymbol{\varepsilon}_{r\circ}\right)  ^{1/2}}-1\right)  ^{2}%
\varepsilon_{rt}\varepsilon_{rt^{\prime}}\right]  ,
\]%
\begin{align*}
D_{5,nT}  &  =\frac{1}{Tn^{2}}\sum_{t=1}^{T}\sum_{t^{\prime}=1}^{T}\sum
_{i=1}^{n}\sum_{j\neq i}^{n}a_{in}a_{jn}E\left[  \varepsilon_{it}^{2}\left(
\frac{1}{\left(  T^{-1}\boldsymbol{\varepsilon}_{i\circ}^{\prime}%
\mathbf{M}_{F}\boldsymbol{\varepsilon}_{i\circ}\right)  ^{1/2}}-1\right)
\right]  \times\\
&  E\left[  \varepsilon_{jt^{\prime}}^{2}\left(  \frac{1}{\left(
T^{-1}\boldsymbol{\varepsilon}_{j\circ}^{\prime}\mathbf{M}_{F}%
\boldsymbol{\varepsilon}_{j\circ}\right)  ^{1/2}}-1\right)  \right]  ,
\end{align*}
and%
\begin{align*}
D_{6,nT}  &  =\frac{1}{Tn^{2}}\sum_{t=1}^{T}\sum_{t^{\prime}=1}^{T}\sum
_{i=1}^{n}\sum_{j\neq i}^{n}a_{in}a_{jn}E\left[  \varepsilon_{it}%
\varepsilon_{it^{\prime}}\left(  \frac{1}{\left(  T^{-1}%
\boldsymbol{\varepsilon}_{i\circ}^{\prime}\mathbf{M}_{F}%
\boldsymbol{\varepsilon}_{i\circ}\right)  ^{1/2}}-1\right)  \right]  \times\\
&  E\left[  \varepsilon_{jt}\varepsilon_{jt^{\prime}}\left(  \frac{1}{\left(
T^{-1}\boldsymbol{\varepsilon}_{j\circ}^{\prime}\mathbf{M}_{F}%
\boldsymbol{\varepsilon}_{j\circ}\right)  ^{1/2}}-1\right)  \right]  .
\end{align*}
For $D_{1,nT}$ we have
\begin{align*}
D_{1,nT}  &  =\frac{1}{n^{2}}\sum_{i=1}^{n}a_{in}^{2}\frac{1}{T}\sum_{t=1}%
^{T}E\left[  \varepsilon_{it}^{4}\left(  \frac{1}{\left(  T^{-1}%
\boldsymbol{\varepsilon}_{i\circ}^{\prime}\mathbf{M}_{F}%
\boldsymbol{\varepsilon}_{i\circ}\right)  ^{1/2}}-1\right)  ^{2}\right] \\
+  &  \frac{1}{n^{2}}\sum_{i=1}^{n}a_{in}^{2}\frac{1}{T}\sum_{t^{\prime}\neq
t}^{T}E\left[  \varepsilon_{it}^{2}\varepsilon_{it^{\prime}}^{2}\left(
\frac{1}{\left(  T^{-1}\boldsymbol{\varepsilon}_{i\circ}^{\prime}%
\mathbf{M}_{F}\boldsymbol{\varepsilon}_{i\circ}\right)  ^{1/2}}-1\right)
^{2}\right]  =D_{1,1,nT}+D_{1,2,nT}.
\end{align*}
It is clear that $D_{1,1,nT}=O(n^{-1})$. Also by Cauchy-Schwarz inequality%
\[
E\left[  \varepsilon_{it}^{2}\varepsilon_{it^{\prime}}^{2}\left(  \frac
{1}{\left(  T^{-1}\boldsymbol{\varepsilon}_{i\circ}^{\prime}\mathbf{M}%
_{F}\boldsymbol{\varepsilon}_{i\circ}\right)  ^{1/2}}-1\right)  ^{2}\right]
\leq\left[  E\left(  \varepsilon_{it}^{4}\varepsilon_{it^{\prime}}^{4}\right)
\right]  ^{1/2}\left[  E\left(  \frac{1}{\left(  T^{-1}\boldsymbol{\varepsilon
}_{i\circ}^{\prime}\mathbf{M}_{F}\boldsymbol{\varepsilon}_{i\circ}\right)
^{1/2}}-1\right)  ^{4}\right]  ^{1/2},
\]
where by part (a) of Assumption \ref{ass:errors} we have $E\left(
\varepsilon_{it}^{4}\varepsilon_{it^{\prime}}^{4}\right)  =E\left(
\varepsilon_{it}^{4}\right)  E\left(  \varepsilon_{it^{\prime}}^{4}\right)
=O\left(  1\right)  $. Further, in view of (\ref{Es})%
\begin{align*}
E\left(  \frac{1}{\left(  T^{-1}\boldsymbol{\varepsilon}_{i\circ}^{\prime
}\mathbf{M}_{F}\boldsymbol{\varepsilon}_{i\circ}\right)  ^{1/2}}-1\right)
^{4}  &  =E\left[  \frac{1}{\left(  T^{-1}\boldsymbol{\varepsilon}_{i\circ
}^{\prime}\mathbf{M}_{F}\boldsymbol{\varepsilon}_{i\circ}\right)  ^{2}%
}\right]  -4E\left[  \frac{1}{\left(  T^{-1}\boldsymbol{\varepsilon}_{i\circ
}^{\prime}\mathbf{M}_{F}\boldsymbol{\varepsilon}_{i\circ}\right)  ^{3/2}%
}\right]  +6E\left[  \frac{1}{T^{-1}\boldsymbol{\varepsilon}_{i\circ}^{\prime
}\mathbf{M}_{F}\boldsymbol{\varepsilon}_{i\circ}}\right] \\
&  -4E\left[  \frac{1}{\left(  T^{-1}\boldsymbol{\varepsilon}_{i\circ}%
^{\prime}\mathbf{M}_{F}\boldsymbol{\varepsilon}_{i\circ}\right)  ^{1/2}%
}\right]  +1=O\left(  \frac{1}{T}\right)  .
\end{align*}
Using this result it follows that $D_{1,2,nT}=$ $O\left(  \sqrt{T}%
n^{-1}\right)  ,$ and overall we have
\begin{equation}
D_{1,nT}=O\left(  \sqrt{T}n^{-1}\right)  . \label{egzi*v_1}%
\end{equation}
Consider now $D_{2,nT}$ and note
\begin{align*}
\left\vert D_{2,nT}\right\vert  &  \leq\frac{1}{Tn^{2}}\sum_{t=1}^{T}%
\sum_{t^{\prime}=1}^{T}\sum_{i=1}^{n}\sum_{r\neq i}^{n}a_{in}^{2}\left\vert
E\left[  \varepsilon_{it}\varepsilon_{it^{\prime}}^{2}\left(  \frac{1}{\left(
T^{-1}\boldsymbol{\varepsilon}_{i\circ}^{\prime}\mathbf{M}_{F}%
\boldsymbol{\varepsilon}_{i\circ}\right)  ^{1/2}}-1\right)  \right]
\right\vert \times\\
&  \left\vert E\left[  \left(  \frac{1}{\left(  T^{-1}\boldsymbol{\varepsilon
}_{r\circ}^{\prime}\mathbf{M}_{F}\boldsymbol{\varepsilon}_{r\circ}\right)
^{1/2}}-1\right)  \varepsilon_{rt}\right]  \right\vert .
\end{align*}
By (\ref{E9}) we have%
\begin{equation}
E\left[  \left(  \frac{1}{\left(  T^{-1}\boldsymbol{\varepsilon}_{r\circ
}^{\prime}\mathbf{M}_{F}\boldsymbol{\varepsilon}_{r\circ}\right)  ^{1/2}%
}-1\right)  \varepsilon_{rt}\right]  =E\left[  \frac{\varepsilon_{rt}}{\left(
T^{-1}\boldsymbol{\varepsilon}_{r\circ}^{\prime}\mathbf{M}_{F}%
\boldsymbol{\varepsilon}_{r\circ}\right)  ^{1/2}}\right]  =O\left(  \frac
{1}{T}\right)  . \label{Eegzi}%
\end{equation}
Under Assumption \ref{ass:errors} and given results (\ref{E6}) and (\ref{E7}),
it follows%
\begin{align}
E\left[  \varepsilon_{it}^{2}\left(  \frac{1}{\left(  T^{-1}%
\boldsymbol{\varepsilon}_{i\circ}^{\prime}\mathbf{M}_{F}%
\boldsymbol{\varepsilon}_{i\circ}\right)  ^{1/2}}-1\right)  ^{2}\right]   &
=E\left[  \frac{\varepsilon_{it}^{2}}{\left(  T^{-1}\boldsymbol{\varepsilon
}_{i\circ}^{\prime}\mathbf{M}_{F}\boldsymbol{\varepsilon}_{i\circ}\right)
}\right]  -2E\left[  \frac{\varepsilon_{it}^{2}}{\left(  T^{-1}%
\boldsymbol{\varepsilon}_{i\circ}^{\prime}\mathbf{M}_{F}%
\boldsymbol{\varepsilon}_{i\circ}\right)  ^{1/2}}\right]  +1\nonumber\\
&  =\left[  E\left[  \frac{\varepsilon_{it}^{2}}{\left(  T^{-1}%
\boldsymbol{\varepsilon}_{i\circ}^{\prime}\mathbf{M}_{F}%
\boldsymbol{\varepsilon}_{i\circ}\right)  }\right]  -1\right]  -2\left[
E\left[  \frac{\varepsilon_{it}^{2}}{\left(  T^{-1}\boldsymbol{\varepsilon
}_{i\circ}^{\prime}\mathbf{M}_{F}\boldsymbol{\varepsilon}_{i\circ}\right)
^{1/2}}\right]  -1\right] \nonumber\\
&  =O\left(  \frac{1}{T}\right)  . \label{Eegzi^2}%
\end{align}
Further, by Cauchy-Schwarz inequality,%
\[
\left\vert E\left[  \varepsilon_{it}\varepsilon_{it^{\prime}}^{2}\left(
\frac{1}{\left(  T^{-1}\boldsymbol{\varepsilon}_{i\circ}^{\prime}%
\mathbf{M}_{F}\boldsymbol{\varepsilon}_{i\circ}\right)  ^{1/2}}-1\right)
\right]  \right\vert \leq\left[  E\left(  \varepsilon_{it^{\prime}}%
^{4}\right)  \right]  ^{1/2}\left[  E\left[  \varepsilon_{it}^{2}\left(
\frac{1}{\left(  T^{-1}\boldsymbol{\varepsilon}_{i\circ}^{\prime}%
\mathbf{M}_{F}\boldsymbol{\varepsilon}_{i\circ}\right)  ^{1/2}}-1\right)
^{2}\right]  \right]  ^{1/2},
\]
which is $O\left(  T^{-1/2}\right)  $ based on $E\left(  \varepsilon
_{it^{\prime}}^{4}\right)  =O\left(  1\right)  $ and (\ref{Eegzi^2}). With
this result and (\ref{Eegzi}), it now follows that
\begin{equation}
D_{2,nT}=O\left(  T^{-1/2}\right)  . \label{egzi*v_2}%
\end{equation}
Similarly, it follows that
\begin{equation}
D_{3,nT}=O\left(  T^{-1/2}\right)  . \label{egzi*v_3}%
\end{equation}
Consider the fourth term of (\ref{egzi*v}), and using the serial independence
of $\varepsilon_{it}$,
\begin{align}
D_{4,nT}  &  =\frac{1}{Tn^{2}}\sum_{t=1}^{T}\sum_{t^{\prime}=1}^{T}\sum
_{i=1}^{n}\sum_{r\neq i}^{n}a_{in}^{2}E\left(  \varepsilon_{it}\varepsilon
_{it^{\prime}}\right)  E\left[  \left(  \frac{1}{\left(  T^{-1}%
\boldsymbol{\varepsilon}_{r\circ}^{\prime}\mathbf{M}_{F}%
\boldsymbol{\varepsilon}_{r\circ}\right)  ^{1/2}}-1\right)  ^{2}%
\varepsilon_{rt}\varepsilon_{rt^{\prime}}\right] \nonumber\\
&  =\frac{1}{Tn^{2}}\sum_{t=1}^{T}\sum_{i=1}^{n}\sum_{r\neq i}^{n}a_{in}%
^{2}E\left(  \varepsilon_{it}^{2}\right)  E\left[  \left(  \frac{1}{\left(
T^{-1}\boldsymbol{\varepsilon}_{r\circ}^{\prime}\mathbf{M}_{F}%
\boldsymbol{\varepsilon}_{r\circ}\right)  ^{1/2}}-1\right)  ^{2}%
\varepsilon_{rt}^{2}\right] \nonumber\\
&  =O\left(  \frac{1}{T}\right)  \label{egzi*v_4}%
\end{align}
in which the final equation holds given $E\left(  \varepsilon_{it}^{2}\right)
=1$ by part (a) of Assumption \ref{ass:errors} and (\ref{Eegzi^2}). To derive
the order of $D_{5,nT}$, using result (\ref{E7}) we have%
\begin{equation}
E\left[  \frac{\varepsilon_{it}^{2}}{\left(  T^{-1}\boldsymbol{\varepsilon
}_{i\circ}^{\prime}\mathbf{M}_{F}\boldsymbol{\varepsilon}_{i\circ}\right)
^{1/2}}-\varepsilon_{it}^{2}\right]  =O\left(  \frac{1}{T}\right)  .
\label{e*egzi}%
\end{equation}
Also since $\varepsilon_{it}^{2}\left(  \frac{1}{\left(  T^{-1}%
\boldsymbol{\varepsilon}_{i\circ}^{\prime}\mathbf{M}_{F}%
\boldsymbol{\varepsilon}_{i\circ}\right)  ^{1/2}}-1\right)  $ is independently
distributed across $i$, then \textbf{ }%
\begin{align}
D_{5,nT}  &  =\frac{1}{Tn^{2}}\sum_{t=1}^{T}\sum_{t^{\prime}=1}^{T}\sum
_{i=1}^{n}\sum_{j\neq i}^{n}a_{in}a_{jn}E\left[  \varepsilon_{it}^{2}\left(
\frac{1}{\left(  T^{-1}\boldsymbol{\varepsilon}_{i\circ}^{\prime}%
\mathbf{M}_{F}\boldsymbol{\varepsilon}_{i\circ}\right)  ^{1/2}}-1\right)
\right]  E\left[  \varepsilon_{jt^{\prime}}^{2}\left(  \frac{1}{\left(
T^{-1}\boldsymbol{\varepsilon}_{j\circ}^{\prime}\mathbf{M}_{F}%
\boldsymbol{\varepsilon}_{j\circ}\right)  ^{1/2}}-1\right)  \right]
\nonumber\\
&  =O\left(  \frac{1}{T}\right)  . \label{egzi*v_5}%
\end{align}
To establish the order of $D_{6,nT}$, using (\ref{E8}) and the serial
independence of $\varepsilon_{it}$ we have%
\[
E\left[  \frac{\varepsilon_{it}\varepsilon_{it^{\prime}}}{\left(
T^{-1}\boldsymbol{\varepsilon}_{i\circ}^{\prime}\mathbf{M}_{F}%
\boldsymbol{\varepsilon}_{i\circ}\right)  ^{1/2}}-\varepsilon_{it}%
\varepsilon_{it^{\prime}}\right]  =O\left(  \frac{1}{T}\right)  \text{, for
}t\neq t^{\prime}.
\]
Using this result and (\ref{e*egzi}) we have%
\begin{align}
D_{6,nT}  &  =\frac{1}{Tn^{2}}\sum_{t=1}^{T}\sum_{i=1}^{n}\sum_{j\neq i}%
^{n}a_{in}a_{jn}E\left[  \varepsilon_{it}^{2}\left(  \frac{1}{\left(
T^{-1}\boldsymbol{\varepsilon}_{i\circ}^{\prime}\mathbf{M}_{F}%
\boldsymbol{\varepsilon}_{i\circ}\right)  ^{1/2}}-1\right)  \right]  E\left[
\varepsilon_{jt}^{2}\left(  \frac{1}{\left(  T^{-1}\boldsymbol{\varepsilon
}_{j\circ}^{\prime}\mathbf{M}_{F}\boldsymbol{\varepsilon}_{j\circ}\right)
^{1/2}}-1\right)  \right] \nonumber\\
&  +\frac{1}{Tn^{2}}\sum_{t=1}^{T}\sum_{t^{\prime}\neq t}^{T}\sum_{i=1}%
^{n}\sum_{j\neq i}^{n}a_{in}a_{jn}E\left[  \varepsilon_{it}\varepsilon
_{it^{\prime}}\left(  \frac{1}{\left(  T^{-1}\boldsymbol{\varepsilon}_{i\circ
}^{\prime}\mathbf{M}_{F}\boldsymbol{\varepsilon}_{i\circ}\right)  ^{1/2}%
}-1\right)  \right]  \times\nonumber\\
&  E\left[  \varepsilon_{jt}\varepsilon_{jt^{\prime}}\left(  \frac{1}{\left(
T^{-1}\boldsymbol{\varepsilon}_{j\circ}^{\prime}\mathbf{M}_{F}%
\boldsymbol{\varepsilon}_{j\circ}\right)  ^{1/2}}-1\right)  \right]
\nonumber\\
&  =O\left(  \frac{1}{T}\right)  . \label{egzi*v_6}%
\end{align}
Hence, using (\ref{egzi*v_1}), (\ref{egzi*v_2}), (\ref{egzi*v_3}),
(\ref{egzi*v_4}), (\ref{egzi*v_5}) and (\ref{egzi*v_6}) in (\ref{egzi*v}) and
then in (\ref{Varr_nT}), we have $Var\left(  r_{1,nT}\right)  =O\left(
\frac{1}{\sqrt{T}}\right)  +O\left(  \frac{\sqrt{T}}{n}\right)  ,$ and by
Chebyshev's inequality $r_{1,nT}=O\left(  \frac{1}{\sqrt{T}}\right)
+O_{p}\left(  \frac{\sqrt{T}}{n}\right)  $ . For the second term of
$r_{nT}\left(  \lambda_{T}\right)  $, as $\xi_{t,n}\left(  \lambda_{T}\right)
=\frac{1}{\sqrt{n}}\sum_{i=1}^{n}a_{i,n}\varepsilon_{it}\left(  \lambda
_{T}\right)  $ where $\varepsilon_{it}\left(  \lambda_{T}\right)
=\varepsilon_{it}+\lambda_{T}b_{it}$ with $\lambda_{T}=c_{\lambda}T^{-1/2}$
and $b_{it}=\sum_{s=1}^{n}w_{is}\varepsilon_{st}$, then
\begin{align*}
\left\vert r_{2,nT}\left(  \lambda_{T}\right)  \right\vert  &  \leq\frac
{1}{\sqrt{T}}\sum_{t=1}^{T}\left\vert \left[  \xi_{t,n}\left(  \lambda
_{T}\right)  -\xi_{t,n}\right]  \upsilon_{t,nT}\left(  \lambda_{T}\right)
\right\vert \\
&  \leq\left(  \frac{1}{\sqrt{T}}\sum_{t=1}^{T}\left[  \xi_{t,n}\left(
\lambda_{T}\right)  -\xi_{t,n}\right]  \right)  \left(  \sup_{t}\left\vert
\upsilon_{t,nT}\left(  \lambda_{T}\right)  \right\vert \right) \\
&  \leq\left(  \frac{1}{T}\sum_{t=1}^{T}\left\vert \frac{c_{\lambda}}{\sqrt
{n}}\sum_{i=1}^{n}a_{i,n}b_{it}\right\vert \right)  \left(  \sup_{t}\left\vert
\upsilon_{t,nT}\left(  \lambda_{T}\right)  \right\vert \right)  .
\end{align*}
Since $\sup_{t}\left\vert \upsilon_{t,nT}\left(  \lambda_{T}\right)
\right\vert =O_{p}\left(  \frac{\ln\left(  nT\right)  }{\sqrt{T}}\right)  $ as
shown by (\ref{sup:v}), while $c_{\lambda}n^{-1/2}\sum_{i=1}^{n}a_{i,n}b_{it}$
is serially independent and
\begin{align}
E\left(  \frac{c_{\lambda}}{\sqrt{n}}\sum_{i=1}^{n}a_{i,n}b_{it}\right)  ^{2}
&  =\frac{c_{\lambda}^{2}}{n}\sum_{i=1}^{n}\sum_{j=1}^{n}a_{i,n}%
a_{j,n}\mathbf{w}_{i0}^{\prime}E\left(  \boldsymbol{\varepsilon}_{\circ
t}\boldsymbol{\varepsilon}_{\circ t}^{\prime}\right)  \mathbf{w}%
_{j0}\nonumber\\
&  =\frac{1}{n}\sum_{i=1}^{n}\sum_{j=1}^{n}a_{i,n}a_{j,n}\mathbf{w}%
_{i0}^{\prime}\mathbf{w}_{j0}=\frac{1}{n}\sum_{i=1}^{n}\sum_{j=1}^{n}%
a_{i,n}a_{j,n}\sum_{s=1}^{n}w_{is}w_{js}\nonumber\\
&  =\frac{1}{n}\sum_{s=1}^{n}\sum_{i=1}^{n}\sum_{j=1}^{n}a_{i,n}w_{is}%
a_{j,n}w_{js}=\frac{1}{n}\sum_{s=1}^{n}\left(  \sum_{i=1}^{n}a_{i,n}%
w_{is}\right)  ^{2}<C, \label{E:ab}%
\end{align}
then it follows $r_{2,nT}\left(  \lambda_{T}\right)  =O_{p}\left(  \frac
{\ln\left(  nT\right)  }{\sqrt{T}}\right)  =o_{p}\left(  1\right)  .$ Now
consider the third term of $r_{nT}\left(  \lambda_{T}\right)  $ and note%
\begin{align*}
\left\vert r_{3,nT}\left(  \lambda_{T}\right)  \right\vert  &  =\frac{1}%
{\sqrt{T}}\sum_{t=1}^{T}\xi_{t,n}\left(  \lambda_{T}\right)  \left[
\upsilon_{t,nT}\left(  \lambda_{T}\right)  -\upsilon_{t,nT}\right] \\
&  =\frac{1}{T}\sum_{t=1}^{T}\xi_{t,n}\left(  \lambda_{T}\right)  \left[
\frac{c_{\lambda}}{\sqrt{n}}\sum_{i=1}^{n}\left(  \frac{1}{\left(
T^{-1}\boldsymbol{\varepsilon}_{i\circ}^{\prime}\mathbf{M}_{F}%
\boldsymbol{\varepsilon}_{i\circ}\right)  ^{1/2}}-1\right)  b_{it}\right] \\
&  \leq\left(  \frac{1}{T}\sum_{t=1}^{T}\left\vert \xi_{t,n}\left(
\lambda_{T}\right)  \right\vert \right)  \sup_{t}\left\vert \frac{c_{\lambda}%
}{\sqrt{n}}\sum_{i=1}^{n}\left(  \frac{1}{\left(  T^{-1}%
\boldsymbol{\varepsilon}_{i\circ}^{\prime}\mathbf{M}_{F}%
\boldsymbol{\varepsilon}_{i\circ}\right)  ^{1/2}}-1\right)  b_{it}\right\vert
.
\end{align*}
Clearly, $\xi_{t,n}\left(  \lambda_{T}\right)  $ is serially independent and
we have shown $\xi_{t,n}\left(  \lambda_{T}\right)  =O_{p}\left(  1\right)  $
by (\ref{E:ksi2}). Since by part (a) of Assumption \ref{ass:errors} and Lemma
\ref{lm:sup(ub)}, $b_{it}$ is sub-exponential, then $\sup_{i,t}\left\vert
b_{it}\right\vert =O_{p}\left[  \ln\left(  nT\right)  \right]  $ and using
(\ref{ave:zMz1}) it follows that
\begin{align}
\sup_{t}\left\vert \frac{c_{\lambda}}{\sqrt{n}}\sum_{i=1}^{n}\left(  \frac
{1}{\left(  T^{-1}\boldsymbol{\varepsilon}_{i\circ}^{\prime}\mathbf{M}%
_{F}\boldsymbol{\varepsilon}_{i\circ}\right)  ^{1/2}}-1\right)  b_{it}%
\right\vert  &  \leq\left\vert \left(  \sup_{i,t}\left\vert b_{it}\right\vert
\right)  \left[  \frac{c_{\lambda}}{\sqrt{n}}\sum_{i=1}^{n}\left\vert \frac
{1}{\left(  T^{-1}\boldsymbol{\varepsilon}_{i\circ}^{\prime}\mathbf{M}%
_{F}\boldsymbol{\varepsilon}_{i\circ}\right)  ^{1/2}}-1\right\vert \right]
\right\vert \nonumber\\
&  =O_{p}\left(  \frac{\ln\left(  nT\right)  }{\sqrt{T}}\right)  .
\label{diff:vt,nT}%
\end{align}
Overall we have $r_{3,nT}\left(  \lambda_{T}\right)  =O_{p}\left(  \frac
{\ln\left(  nT\right)  }{\sqrt{T}}\right)  =o_{p}\left(  1\right)  $. Now
consider $r_{4,nT}\left(  \lambda_{T}\right)  $ and note that%
\begin{align*}
&  \frac{1}{\sqrt{T}}\sum_{t=1}^{T}\left[  \xi_{t,n}\left(  \lambda
_{T}\right)  -\xi_{t,n}\right]  \left[  \upsilon_{t,nT}\left(  \lambda
_{T}\right)  -\upsilon_{t,nT}\right] \\
&  \leq\left(  \frac{1}{\sqrt{T}}\sum_{t=1}^{T}\left[  \xi_{t,n}\left(
\lambda_{T}\right)  -\xi_{t,n}\right]  ^{2}\right)  ^{1/2}\left(  \frac
{1}{\sqrt{T}}\sum_{t=1}^{T}\left[  \upsilon_{t,nT}\left(  \lambda_{T}\right)
-\upsilon_{t,nT}\right]  ^{2}\right)  ^{1/2}\\
&  =\left[  \frac{1}{T\sqrt{T}}\sum_{t=1}^{T}\left\vert \frac{c_{\lambda}%
}{\sqrt{n}}\sum_{i=1}^{n}a_{i,n}b_{it}\right\vert ^{2}\right]  ^{1/2}\left[
\frac{1}{T\sqrt{T}}\sum_{t=1}^{T}\left[  \frac{1}{\sqrt{n}}\sum_{i=1}%
^{n}\left(  \frac{1}{\left(  T^{-1}\boldsymbol{\varepsilon}_{i\circ}^{\prime
}\mathbf{M}_{F}\boldsymbol{\varepsilon}_{i\circ}\right)  ^{1/2}}-1\right)
c_{\lambda}b_{it}\right]  ^{2}\right]  ^{1/2}\\
&  =o_{p}\left(  1\right)  ,
\end{align*}
where the final line holds by (\ref{E:ab}) and (\ref{diff:vt,nT}). Overall, $r_{nT}\left(  \lambda_{T}\right)  =o_{p}\left(  1\right)  $ and
hence $w_{nT}\left(  \lambda_{T}\right) =o_{p}\left(  1\right)
$.
\end{proof}

\section{Derivation of $\theta_{n}$ in terms of factor
strengths\label{section.thetan}}

Consider $\theta_{n}$ defined by (\ref{thetan}), and note that it can be
written as
\begin{align}
\theta_{n}  &  =1-\frac{1}{n}\sum_{i=1}^{n}a_{i,n}^{2}=1-\frac{1}{n}\sum
_{i=1}^{n}\left(  1-\sigma_{i}\boldsymbol{\varphi}_{n}^{\prime}%
\boldsymbol{\gamma}_{i}\right)  ^{2}\nonumber\\
&  =2\boldsymbol{\varphi}_{n}^{\prime}\left(  \frac{1}{n}\sum_{i=1}^{n}%
\sigma_{i}\boldsymbol{\gamma}_{i}\right)  -\boldsymbol{\varphi}_{n}^{\prime
}\left(  \frac{1}{n}\sum_{i=1}^{n}\sigma_{i}^{2}\boldsymbol{\gamma}%
_{i}\boldsymbol{\gamma}_{i}^{\prime}\right)  \boldsymbol{\varphi}_{n},
\label{thetaAPP}%
\end{align}
where $\boldsymbol{\varphi}_{n}=n^{-1}\sum_{i=1}^{n}\boldsymbol{\gamma}%
_{i}/\sigma_{i}$. Then%
\[
\left\vert \theta_{n}\right\vert \leq\sup_{i}(\sigma_{i}^{2})\left\Vert
\boldsymbol{\varphi}_{n}\right\Vert _{1}^{2}\left(  \frac{1}{n}\sum_{i=1}%
^{n}\left\Vert \boldsymbol{\gamma}_{i}\right\Vert _{1}^{2}\right)  +2\sup
_{i}(\sigma_{i})\left\Vert \boldsymbol{\varphi}_{n}\right\Vert _{1}\left(
\frac{1}{n}\sum_{i=1}^{n}\left\Vert \boldsymbol{\gamma}_{i}\right\Vert
_{1}\right)  .
\]
$\left\Vert \boldsymbol{\gamma}_{i}\right\Vert _{1}=$ $\sum_{j=1}^{m_{0}%
}\left\vert \gamma_{ij}\right\vert $, and%
\begin{equation}
\left\Vert \boldsymbol{\varphi}_{n}\right\Vert _{1}\leq\inf_{i}(\sigma
_{i})\left(  \frac{1}{n}\sum_{i=1}^{n}\left\Vert \boldsymbol{\gamma}%
_{i}\right\Vert _{1}\right)  =\inf_{i}(\sigma_{i})\left[  \sum_{j=1}^{m_{0}%
}\left(  \frac{1}{n}\sum_{i=1}^{n}\left\vert \gamma_{ij}\right\vert \right)
\right]  . \label{bound phin}%
\end{equation}
Since by assumption $\inf_{i}(\sigma_{i})>c>0$, and $\sup_{i}(\sigma_{i}%
^{2})<C<\infty$, then the order of $\left\vert \theta_{n}\right\vert $ is
determined by $\sum_{j=1}^{m_{0}}\left(  \frac{1}{n}\sum_{i=1}^{n}\left\vert
\gamma_{ij}\right\vert \right)  $, where $m_{0}$ is a fixed integer. Hence,
$\left\vert \theta_{n}\right\vert =\ominus\left(  n^{\alpha-1}\right)  $ as
required, where $\alpha=\max_{j}(\alpha_{j}),$ and $\alpha_{j}$ is defined by
$\sum_{i=1}^{n}\left\vert \gamma_{ij}\right\vert =\ominus\left(  n^{\alpha
_{j}}\right)  $. See (\ref{fs}).

{\footnotesize
\normalsize
}

\section{Juodis and Reese's CD$_{W+}$ test\label{section.JR}}

The CD$_{W+}$ test statistic, proposed by Juodis and Reese (2022, JR), is used
in our Monte Carlo experiments and is defined by $CD_{W+}=CD_{W}+\Delta_{nT},$
where $CD_{W}$ is the randomized component given by (\ref{CDW+}) and
$\Delta_{nT}$ is the so-called screening component defined by (\ref{Delta nT})
in the paper. Here we provide some theoretical insights regarding the size and
power of the CD$_{W+}$ test. First, for the CD$_{W+}$ test to have the correct
size under $H_{0}:\rho_{ij}=0$, for all $i\neq j$, the screening component
$\Delta_{nT}$ of the test given by (\ref{Delta nT}) must converge to zero as
$n$ and $T\rightarrow\infty$, jointly. To our knowledge, the conditions under
which this holds are not investigated by JR. Whilst it is beyond the scope of
the present paper to investigate the limiting properties of $\Delta_{nT}$ in
the case of a general factor model, using results presented in Bailey et al.
(2019) (BPS), we will provide sufficient conditions for $\Delta_{nT}%
\rightarrow_{p}0$ in the case of the simple null model given by $y_{it}%
=\mu_{i}+\sigma_{i}\varepsilon_{it}$. By the Cauchy-Schwarz inequality we
first note that for all $i\neq j$,%
\begin{align}
&  E\left[  \left\vert \hat{\rho}_{ij,T}\right\vert I\left(  \left\vert
\hat{\rho}_{ij,T}\right\vert >2\sqrt{\frac{\ln\left(  n\right)  }{T}}\right)
|\rho_{ij}=0\right] \nonumber\\
&  \leq\left[  E\left(  \left\vert \hat{\rho}_{ij,T}\right\vert ^{2}|\rho
_{ij}=0\right)  \right]  ^{1/2}\left[  P\left(  \left\vert \hat{\rho}%
_{ij,T}\right\vert >2\sqrt{\frac{\ln\left(  n\right)  }{T}}|\rho
_{ij}=0\right)  \right]  ^{1/2}, \label{JR0}%
\end{align}
where $\rho_{ij}=E\left(  \varepsilon_{it}\varepsilon_{jt}\right)  $. Hence%
\begin{align}
&  E\left(  \Delta_{nT}|\rho_{ij}=0,\text{ for all }i\neq j\right) \nonumber\\
&  \leq\frac{n(n-1)}{2}\sup_{i\neq j}\left[  E\left[  \left\vert \hat{\rho
}_{ij,T}\right\vert ^{2}|\rho_{ij}=0\right]  \right]  ^{1/2}\sup_{i\neq
j}\left[  P\left[  \left\vert \hat{\rho}_{ij,T}\right\vert >2\sqrt{\frac
{\ln\left(  n\right)  }{T}}|\rho_{ij}=0\right]  \right]  ^{1/2}. \label{JR1}%
\end{align}
Now using results $(9)$ and $(10)$ of BPS, we have%
\begin{equation}
E\left[  \left\vert \hat{\rho}_{ij,T}\right\vert ^{2}|\rho_{ij}=0\right]
=O\left(  \frac{1}{T}\right)  , \label{JR2}%
\end{equation}
and using result (A.4) in the online supplement of BPS, we also have%
\[
\sup_{i\neq j}P\left[  \left\vert \hat{\rho}_{ij,T}\right\vert >\frac
{C_{p}\left(  n,\delta\right)  }{\sqrt{T}}|\rho_{ij}=0\right]  =O\left(
e^{-\frac{1}{2}\frac{C_{p}^{2}\left(  n,\delta\right)  }{\varphi_{\max}}%
}\right)  +O\left(  T^{-\frac{\left(  s-1\right)  }{2}}\right)  ,
\]
where $C_{p}\left(  n,\delta\right)  =\Phi^{-1}\left(  1-\frac{p}{2n\delta
}\right)  ,$ $\ 0<p<1$, $\Phi^{-1}\left(  \cdot\right)  $ is the inverse of
the cumulative distribution of a standard normal variable, $\delta>0$,
$\varphi_{\max}=\sup_{i\neq j}E\left(  \varepsilon_{it}^{2}\varepsilon
_{jt}^{2}\right)  $, and $s$ is such that $\sup_{i\neq j}E\left\vert
\varepsilon_{it}\right\vert ^{2s}<C$, for some integer $s\geq3$ (see
Assumption 2 of BPS). Also using results in Lemma 2 in the online supplement
of BPS, we have%
\[
\lim_{n\rightarrow\infty}\frac{C_{p}^{2}\left(  n,\delta\right)  }{\ln\left(
n\right)  }=2\delta,\text{ and }e^{\frac{-\frac{1}{2}C_{p}^{2}\left(
n,\delta\right)  }{\varphi_{\max}}}=O\left(  n^{-\delta/\varphi_{\max}%
}\right)  .
\]
Therefore, $T^{-1/2}C_{p}\left(  n,\delta\right)  $, and $2\sqrt{T^{-1}%
\ln\left(  n\right)  }$ have the same limiting properties if we set $\delta
=2$. Overall, it then follows that%
\begin{equation}
\sup_{i\neq j}P\left(  \left\vert \hat{\rho}_{ij,T}\right\vert >2\sqrt
{\frac{\ln\left(  n\right)  }{T}}|\rho_{ij}=0\right)  =O\left(  n^{-\frac
{2}{\varphi_{\max}}}\right)  +O\left(  T^{-\frac{\left(  s-1\right)  }{2}%
}\right)  . \label{JR3}%
\end{equation}
Using (\ref{JR2}) and (\ref{JR3}) in (\ref{JR1}), we now have%
\begin{equation}
E\left(  \Delta_{nT}|\rho_{ij}=0,\text{ for all }i\neq j\right)  =O\left(
\frac{n^{2-\frac{1}{\varphi_{\max}}}}{\sqrt{T}}\right)  +O\left(
n^{2}T^{-\frac{\left(  s+1\right)  }{4}}\right)  . \label{JR4}%
\end{equation}
Therefore, $\Delta_{nT}\rightarrow_{p}0,$ if $n^{2}T^{-\frac{s+1}{4}%
}\rightarrow0$ and $T^{-1/2}n^{2-\frac{1}{\varphi_{\max}}}\rightarrow0$. It is
easily seen that both of these conditions will be met as $n$ and
$T\rightarrow\infty$ and $n=o\left(  \sqrt{T}\right)  $ if $\varepsilon_{it}$
is Gaussian, since under Gaussian errors, $\varphi_{\max}=1$ and $s$ can be
taken to be sufficiently large. But, in general the expansion rate of $T$
relative to $n$ required to ensure $\Delta_{nT}\rightarrow_{p}0\,$will also
depend on the degree to which $E\left(  \varepsilon_{it}^{2}\varepsilon
_{jt}^{2}\right)  $ exceeds unity. For example, if $\varepsilon_{it}$ has a
multivariate $t$-distribution with degrees of freedom $v>4$, then letting
$T=n^{d}$, $d$ $>0,$ and using results in Lemma 5 of BPS's online supplement,
we have%
\[
\varphi_{\max}=\sup_{i\neq j}E\left(  \varepsilon_{it}^{2}\varepsilon_{jt}%
^{2}|\rho_{ij}=0\right)  =\frac{v-2}{v-4}.
\]
Hence, $E\left(  \Delta_{nT}|\rho_{ij}=0,\text{ for all }i\neq j\right)  $
defined by (\ref{JR4}) tends to $0$ if $n^{2-\frac{v-4}{v-2}-d/2}%
\rightarrow0,$ or if $d>\frac{2v}{v-2}$. Assumption 1 of JR requires
$E\left\vert \varepsilon_{it}\right\vert ^{8+\epsilon}<C$, for some small
positive $\epsilon$, and for this to be satisfied in the case of
$t$-distributed errors we need $v>9$, which yields $d>2$ when $v=10$,
requiring $T$ to rise faster than $n$.

Finally, for the CD$_{W^{+}}$ test to have power it is also necessary to show
that $\Delta_{nT}$ diverges in $n$ and $T$ sufficiently fast under alternative
hypotheses of interest, namely spatial or network dependence. In Section
\ref{mc}, we provide some Monte Carlo evidence on this issue. We find the
CD$_{W^{+}}$ test lacks power against network alternatives, in turn suggesting
that $\Delta_{nT}$ need not diverge sufficiently fast under such alternatives.
Our Monte Carlo experiments also show that the CD$_{W^{+}}$ test tends to
over-reject when $n>>T$ and the errors are chi-squared distributed.

\section{Simulation results\label{section.s3}}

This section provides simulation results for the experiments discussed in
Section \ref{mc} of the main paper. Tables \ref{table.chi2.1f} to
\ref{table.chi2.x2f} report the results for the DGPs with serially independent
errors. Tables \ref{table.v.n.1f} to \ref{table.v.chi2.x2f} report the results
for the DGPs with serially correlated errors using variance adjustment. Tables
\ref{table.ardl.n.1f} to \ref{table.ardl.chi2.x2f} report the results for the
DGPs with serially correlated errors using ARDL adjustment. Figures
\ref{fig_psf} to \ref{fig_xpsf_n} display the simulated power functions of the
CD$^{\ast}$ test in the case of pure factor and panel regression DGPs with
serially independent errors for different $n=200,500$ and $T=100,500$
combinations. To highlight our theoretical result that the power of the
CD$^{\ast}$ test is primarily governed by $T$, each figure is split into two
parts, part (a) which gives the power functions for ($T=100, n=200$) and
($T=100,n=500$) and part (b) which gives the power functions for
($T=500,n=200$) and ($T=500, n=500$). These figures clearly show that when $T$
is fixed, increasing $n$ does not alter the shape of the power function, but
when $T$ is increased the power rises quite sharply as predicted by the
theory. See Theorem \ref{Theorem:CDs} in the paper.

\begin{sidewaystable}
\caption{Size and power of tests of error cross-sectional dependence for the latent factor model with one factor $(m_0=1)$ and serially independent non-Gaussian errors}%
\label{table.chi2.1f}
\scriptsize
\begin{center}
\begin{tabular}{cr|rrr|rrr|rrr|rrr|rrr|rrr}
$\hat{m}=1$& \multicolumn{19}{c}{}\\
\hline\hline
\multicolumn{1}{l}{}                               &      & \multicolumn{9}{c|}{Size ($H_o:\lambda=0$)}                                                                               & \multicolumn{9}{c}{Power ($H_1:\lambda=0.25$)}                                                                            \\\cline{3-20}
\multicolumn{1}{l}{}                               &      & \multicolumn{3}{c|}{$\alpha=1$} & \multicolumn{3}{c|}{$\alpha=2/3$} & \multicolumn{3}{c|}{$\alpha=1/2$} & \multicolumn{3}{c|}{$\alpha=1$} & \multicolumn{3}{c|}{$\alpha=2/3$} & \multicolumn{3}{c}{$\alpha=1/2$} \\\cline{3-20}
\multicolumn{1}{c}{Tests}                           & $n\setminus T$     & 100       & 200      & 500       & 100        & 200       & 500       & 100        & 200        & 500      & 100       & 200       & 500      & 100        & 200       & 500       & 100        & 200       & 500       \\\hline
\multirow{4}{*}{$CD$}        & 100  & 64.5 & 86.4 & 97.7  & 5.3  & 8.9  & 23.1 & 6.1  & 6.6  & 8.0 & 23.5 & 37.1 & 57.5 & 69.0 & 87.0 & 97.8  & 80.4 & 94.3 & 99.4  \\
& 200  & 68.8 & 93.1 & 99.6  & 5.2  & 7.6  & 13.5 & 5.7  & 5.3  & 5.5 & 16.0 & 29.2 & 50.0 & 78.1 & 93.3 & 100.0 & 86.7 & 97.3 & 100.0 \\
& 500  & 69.1 & 94.0 & 100.0 & 4.8  & 5.8  & 9.8  & 5.8  & 6.0  & 5.4 & 11.1 & 23.8 & 46.5 & 84.5 & 97.5 & 100.0 & 90.2 & 98.8 & 100.0 \\
& 1000 & 69.4 & 95.4 & 100.0 & 4.9  & 5.0  & 6.9  & 6.0  & 5.0  & 5.1 & 10.0 & 19.1 & 43.8 & 86.3 & 97.9 & 100.0 & 91.8 & 99.4 & 100.0 \\\hline
\multirow{4}{*}{$CD^\ast$} & 100  & 5.1  & 5.4  & 5.6   & 5.2  & 5.0  & 5.8  & 7.1  & 6.2  & 5.8 & 58.8 & 82.4 & 98.8 & 86.2 & 98.4 & 100.0 & 88.4 & 99.0 & 100.0 \\
& 200  & 3.8  & 4.4  & 4.4   & 5.8  & 5.4  & 5.1  & 5.5  & 5.3  & 5.0 & 58.4 & 81.4 & 99.3 & 86.7 & 98.7 & 100.0 & 89.6 & 99.1 & 100.0 \\
& 500  & 5.4  & 5.6  & 5.1   & 5.3  & 5.5  & 5.3  & 6.0  & 6.2  & 5.5 & 58.6 & 83.5 & 99.5 & 88.4 & 99.1 & 100.0 & 91.1 & 99.3 & 100.0 \\
& 1000 & 5.4  & 5.2  & 4.6   & 5.3  & 5.4  & 5.3  & 6.4  & 5.0  & 4.9 & 60.6 & 84.4 & 99.4 & 88.8 & 99.0 & 100.0 & 92.4 & 99.4 & 100.0 \\\hline
\multirow{4}{*}{$CD_{W+}$} & 100  & 8.3  & 5.9  & 6.0   & 7.4  & 6.0  & 5.4  & 7.9  & 6.2  & 7.4 & 9.8  & 11.7 & 55.1 & 10.0 & 12.4 & 66.7  & 11.4 & 12.2 & 70.9  \\
& 200  & 8.6  & 5.8  & 6.0   & 9.4  & 7.5  & 5.6  & 9.7  & 6.2  & 5.3 & 10.8 & 11.7 & 66.9 & 11.4 & 14.6 & 72.1  & 12.9 & 12.5 & 71.4  \\
& 500  & 17.3 & 10.1 & 6.1   & 16.7 & 10.9 & 5.1  & 17.6 & 9.8  & 5.3 & 18.3 & 17.7 & 79.8 & 18.1 & 17.6 & 82.4  & 19.3 & 16.7 & 82.9  \\
& 1000 & 27.8 & 14.9 & 6.1   & 28.5 & 16.3 & 6.2  & 27.0 & 16.6 & 6.4 & 27.8 & 21.2 & 85.4 & 28.1 & 23.6 & 87.1  & 27.1 & 24.6 & 85.9
\\\hline
\multicolumn{20}{c}{}\\
$\hat{m}=2$& \multicolumn{19}{c}{}\\
\hline\hline
\multicolumn{1}{l}{}                               &      & \multicolumn{9}{c|}{Size ($H_o:\lambda=0$)}                                                                               & \multicolumn{9}{c}{Power ($H_1:\lambda=0.25$)}                                                                               \\\cline{3-20}
\multicolumn{1}{l}{}                               &      & \multicolumn{3}{c|}{$\alpha=1$} & \multicolumn{3}{c|}{$\alpha=2/3$} & \multicolumn{3}{c|}{$\alpha=1/2$} & \multicolumn{3}{c|}{$\alpha=1$} & \multicolumn{3}{c|}{$\alpha=2/3$} & \multicolumn{3}{c}{$\alpha=1/2$} \\\cline{3-20}
\multicolumn{1}{c}{Tests}                           & $n\setminus T$     & 100       & 200      & 500       & 100        & 200       & 500       & 100        & 200        & 500      & 100       & 200       & 500      & 100        & 200       & 500       & 100        & 200       & 500       \\\hline
\multirow{4}{*}{$CD$}        & 100  & 65.8 & 87.0 & 97.6  & 5.9  & 10.4 & 22.8 & 6.3  & 6.9  & 9.4 & 26.5 & 44.2 & 62.5 & 58.9 & 75.0 & 86.1  & 70.2 & 84.5 & 91.1  \\
& 200  & 68.2 & 93.2 & 99.6  & 5.2  & 6.9  & 13.9 & 5.2  & 5.1  & 5.9 & 18.5 & 32.6 & 55.7 & 73.2 & 90.1 & 98.8  & 83.1 & 95.2 & 99.6  \\
& 500  & 68.8 & 93.7 & 100.0 & 4.7  & 5.1  & 9.5  & 5.6  & 6.1  & 5.8 & 11.4 & 24.9 & 50.2 & 82.0 & 96.9 & 100.0 & 88.4 & 98.6 & 100.0 \\
& 1000 & 68.4 & 95.3 & 100.0 & 5.1  & 5.1  & 6.5  & 6.1  & 5.5  & 5.1 & 10.3 & 20.1 & 45.9 & 85.4 & 97.9 & 100.0 & 91.2 & 99.2 & 100.0 \\\hline
\multirow{4}{*}{$CD^\ast$} & 100  & 6.2  & 5.9  & 7.2   & 5.9  & 5.5  & 5.9  & 7.1  & 6.7  & 5.8 & 57.5 & 81.7 & 98.3 & 83.8 & 97.8 & 100.0 & 86.4 & 98.6 & 100.0 \\
& 200  & 4.7  & 4.6  & 5.1   & 6.1  & 5.7  & 5.8  & 6.2  & 5.3  & 5.3 & 58.8 & 81.3 & 99.1 & 85.1 & 97.9 & 100.0 & 88.7 & 99.2 & 100.0 \\
& 500  & 5.4  & 5.7  & 5.6   & 5.4  & 5.8  & 5.2  & 6.3  & 6.5  & 5.4 & 58.4 & 82.6 & 99.5 & 87.5 & 99.0 & 100.0 & 90.8 & 99.1 & 100.0 \\
& 1000 & 5.5  & 5.3  & 4.4   & 5.9  & 5.9  & 5.5  & 6.2  & 5.5  & 4.7 & 60.8 & 83.8 & 99.2 & 88.4 & 98.9 & 100.0 & 91.9 & 99.5 & 100.0 \\\hline
\multirow{4}{*}{$CD_{W+}$} & 100  & 6.4  & 6.4  & 6.1   & 6.5  & 5.8  & 5.6  & 7.0  & 5.6  & 8.1 & 7.8  & 10.9 & 37.6 & 9.1  & 10.2 & 42.1  & 8.6  & 9.5  & 50.1  \\
& 200  & 7.3  & 5.9  & 6.6   & 10.1 & 6.3  & 5.0  & 9.8  & 5.5  & 5.6 & 9.3  & 10.5 & 52.0 & 11.3 & 10.1 & 54.9  & 11.1 & 9.8  & 54.8  \\
& 500  & 15.8 & 8.3  & 5.8   & 13.8 & 9.0  & 5.9  & 15.1 & 8.4  & 5.9 & 16.8 & 15.0 & 73.3 & 15.5 & 16.5 & 73.3  & 17.2 & 15.0 & 74.7  \\
& 1000 & 27.1 & 15.0 & 6.1   & 27.4 & 15.9 & 6.6  & 27.6 & 16.4 & 6.7 & 26.4 & 21.5 & 81.3 & 26.9 & 22.6 & 82.1  & 27.9 & 23.5 & 82.4
\\\hline
\end{tabular}
\end{center}
\par
\textit{Notes}: The DGP is given by (\ref{dgpy}) with $\beta_{i1}=\beta_{i2}=0$ and contains a single latent factor with different factor strengths, $\alpha=1,$ $2/3$, and $1/2$. $\lambda$ denotes the spatial autocorrelation coefficient of the error term defined in (\ref{error}). $m_0$ is the true number of factors and $\hat{m}$ is the number of selected PCs used to compute the different CD statistics.
$CD$ denotes the standard test of error cross-sectional dependence defined by (\ref{cd}), $CD^\ast$ is the bias-corrected version defined by (\ref{CD*a}), and $CD_{W+}$ is the power-enhanced randomized version defined by (\ref{CDW+}).
\end{sidewaystable}

\begin{sidewaystable}
\scriptsize
\caption{Size and power of tests of error cross-sectional dependence for the latent factor model with two factors $(m_0=2)$ and serially independent non-Gaussian errors}
\label{table.chi2.2f}
\begin{center}
\begin{tabular}{cr|rrr|rrr|rrr|rrr|rrr|rrr}
$\hat{m}=2$& \multicolumn{19}{c}{}\\
\hline\hline
\multicolumn{1}{l}{}                               &      & \multicolumn{9}{c|}{Size ($H_o:\lambda=0$)}                                                                               & \multicolumn{9}{c}{Power ($H_1:\lambda=0.25$)}                                                                                \\\cline{3-20}
\multicolumn{1}{l}{}                           &    &    \multicolumn{3}{c|}{$\alpha_1=1,\alpha_2=1$} & \multicolumn{3}{c|}{$\alpha_1=1,\alpha_2=2/3$} & \multicolumn{3}{c|}{$\alpha_1=2/3,\alpha_2=1/2$} & \multicolumn{3}{c|}{$\alpha_1=1,\alpha_2=1$} & \multicolumn{3}{c|}{$\alpha_1=1,\alpha_2=2/3$} & \multicolumn{3}{c}{$\alpha_1=2/3,\alpha_2=1/2$}  \\\cline{3-20}
\multicolumn{1}{c}{Tests}                           & $n\setminus T$     & 100       & 200      & 500       & 100        & 200       & 500       & 100        & 200        & 500      & 100       & 200       & 500      & 100        & 200       & 500       & 100        & 200       & 500       \\\hline
\multirow{4}{*}{$CD$}        & 100  & 99.9  & 100.0 & 100.0 & 98.4 & 99.9  & 100.0 & 9.9  & 17.3 & 43.1 & 99.1  & 99.9  & 100.0 & 88.9 & 97.5 & 99.3  & 55.8 & 64.5 & 79.5  \\
& 200  & 100.0 & 100.0 & 100.0 & 99.2 & 100.0 & 100.0 & 7.6  & 8.9  & 25.3 & 99.8  & 100.0 & 100.0 & 92.2 & 99.1 & 100.0 & 69.8 & 85.8 & 98.3  \\
& 500  & 100.0 & 100.0 & 100.0 & 99.5 & 100.0 & 100.0 & 7.4  & 5.3  & 10.8 & 100.0 & 100.0 & 100.0 & 92.0 & 99.8 & 100.0 & 82.7 & 95.9 & 99.9  \\
& 1000 & 100.0 & 100.0 & 100.0 & 99.7 & 100.0 & 100.0 & 7.6  & 5.1  & 7.9  & 100.0 & 100.0 & 100.0 & 93.5 & 99.8 & 100.0 & 87.4 & 97.9 & 100.0 \\\hline
\multirow{4}{*}{$CD^\ast$} & 100  & 5.3   & 5.5   & 5.0   & 5.9  & 4.8   & 5.1   & 10.8 & 6.7  & 5.4  & 22.8  & 35.1  & 61.3  & 34.1 & 51.8 & 79.6  & 83.5 & 97.4 & 100.0 \\
& 200  & 5.4   & 4.7   & 4.1   & 6.5  & 5.1   & 5.8   & 9.0  & 5.9  & 5.5  & 20.8  & 33.8  & 62.6  & 34.3 & 49.5 & 80.7  & 84.9 & 98.5 & 100.0 \\
& 500  & 5.6   & 6.1   & 5.9   & 6.3  & 5.1   & 5.2   & 9.0  & 5.9  & 5.0  & 22.0  & 35.1  & 63.6  & 33.8 & 53.8 & 83.8  & 88.9 & 98.9 & 100.0 \\
& 1000 & 5.8   & 4.7   & 4.7   & 4.8  & 5.1   & 5.9   & 8.6  & 5.7  & 4.5  & 23.4  & 35.1  & 64.5  & 35.9 & 53.3 & 84.1  & 90.4 & 99.3 & 100.0 \\\hline
\multirow{4}{*}{$CD_{W+}$} & 100  & 7.1   & 6.7   & 7.0   & 7.3  & 5.9   & 7.8   & 8.2  & 6.1  & 10.8 & 9.4   & 11.8  & 43.5  & 8.9  & 10.5 & 52.6  & 11.4 & 13.0 & 70.0  \\
& 200  & 9.1   & 7.0   & 5.9   & 8.8  & 6.5   & 5.1   & 9.6  & 6.8  & 5.7  & 10.6  & 13.1  & 58.1  & 10.5 & 11.6 & 60.6  & 12.9 & 12.7 & 70.8  \\
& 500  & 15.4  & 9.3   & 5.8   & 16.0 & 8.6   & 5.1   & 17.5 & 9.2  & 6.1  & 17.3  & 16.5  & 76.8  & 17.5 & 15.4 & 75.8  & 19.1 & 16.4 & 80.1  \\
& 1000 & 26.9  & 18.2  & 5.6   & 29.0 & 16.0  & 5.9   & 30.3 & 17.7 & 6.7  & 27.4  & 24.0  & 82.6  & 27.6 & 22.2 & 83.6  & 30.0 & 26.3 & 85.6
\\\hline
\multicolumn{20}{c}{}\\
$\hat{m}=4$& \multicolumn{19}{c}{}\\
\hline\hline
\multicolumn{1}{l}{}                               &      & \multicolumn{9}{c|}{Size ($H_o:\lambda=0$)}                                                                               & \multicolumn{9}{c}{Power ($H_1:\lambda=0.25$)}                                                                                \\\cline{3-20}
\multicolumn{1}{l}{}                               &      & \multicolumn{3}{c|}{$\alpha_1=1,\alpha_2=1$} & \multicolumn{3}{c|}{$\alpha_1=1,\alpha_2=2/3$} & \multicolumn{3}{c|}{$\alpha_1=2/3,\alpha_2=1/2$} & \multicolumn{3}{c|}{$\alpha_1=1,\alpha_2=1$} & \multicolumn{3}{c|}{$\alpha_1=1,\alpha_2=2/3$} & \multicolumn{3}{c}{$\alpha_1=2/3,\alpha_2=1/2$} \\\cline{3-20}
\multicolumn{1}{c}{Tests}                           & $n\setminus T$     & 100       & 200      & 500       & 100        & 200       & 500       & 100        & 200        & 500      & 100       & 200       & 500      & 100        & 200       & 500       & 100        & 200       & 500       \\\hline
\multirow{4}{*}{$CD$}        & 100  & 99.8  & 100.0 & 100.0 & 98.4 & 99.9  & 100.0 & 9.3  & 16.4 & 44.1 & 99.3  & 99.9  & 100.0 & 91.5 & 98.4 & 99.6  & 38.6 & 43.6 & 57.3  \\
& 200  & 100.0 & 100.0 & 100.0 & 99.4 & 100.0 & 100.0 & 6.3  & 8.5  & 27.1 & 99.8  & 100.0 & 100.0 & 93.1 & 99.2 & 100.0 & 60.3 & 73.9 & 88.8  \\
& 500  & 100.0 & 100.0 & 100.0 & 99.5 & 100.0 & 100.0 & 7.2  & 5.1  & 11.4 & 100.0 & 100.0 & 100.0 & 92.8 & 99.8 & 100.0 & 79.6 & 93.3 & 99.7  \\
& 1000 & 100.0 & 100.0 & 100.0 & 99.8 & 100.0 & 100.0 & 7.7  & 5.3  & 8.6  & 100.0 & 100.0 & 100.0 & 93.9 & 99.8 & 100.0 & 85.7 & 96.8 & 100.0 \\\hline
\multirow{4}{*}{$CD^\ast$} & 100  & 7.4   & 8.4   & 14.0  & 7.5  & 6.2   & 8.0   & 10.5 & 8.8  & 8.0  & 28.5  & 43.3  & 75.0  & 35.7 & 53.8 & 81.7  & 80.0 & 95.3 & 100.0 \\
& 200  & 6.2   & 5.8   & 7.2   & 7.2  & 5.6   & 6.7   & 8.4  & 6.6  & 7.4  & 24.1  & 37.2  & 69.4  & 34.5 & 50.5 & 82.1  & 82.2 & 98.0 & 100.0 \\
& 500  & 7.0   & 6.1   & 6.9   & 6.3  & 5.3   & 5.5   & 9.1  & 5.9  & 5.1  & 24.2  & 37.5  & 65.5  & 34.2 & 52.7 & 84.1  & 87.8 & 98.5 & 100.0 \\
& 1000 & 5.9   & 4.7   & 4.4   & 5.4  & 5.1   & 6.2   & 9.2  & 5.4  & 5.0  & 24.8  & 37.6  & 65.1  & 36.9 & 52.9 & 84.4  & 89.3 & 99.2 & 100.0 \\\hline
\multirow{4}{*}{$CD_{W+}$} & 100  & 6.5   & 7.8   & 26.9  & 6.6  & 6.9   & 10.7  & 6.3  & 6.6  & 9.7  & 7.8   & 10.1  & 51.2  & 7.1  & 9.9  & 44.2  & 7.9  & 9.2  & 37.8  \\
& 200  & 8.6   & 5.5   & 6.8   & 9.0  & 7.1   & 5.7   & 8.3  & 6.5  & 5.3  & 8.8   & 9.7   & 39.5  & 9.6  & 12.6 & 50.0  & 10.1 & 10.1 & 38.3  \\
& 500  & 15.5  & 8.7   & 5.1   & 14.9 & 9.0   & 5.8   & 14.8 & 9.5  & 6.2  & 15.9  & 13.7  & 63.9  & 14.9 & 13.9 & 69.8  & 16.8 & 14.1 & 64.9  \\
& 1000 & 29.8  & 14.1  & 7.0   & 27.1 & 16.1  & 6.1   & 28.1 & 14.7 & 5.5  & 27.5  & 20.5  & 76.4  & 26.6 & 23.2 & 80.5  & 28.0 & 21.7 & 78.5
\\\hline
\end{tabular}
\end{center}
\par
\textit{Notes}: The DGP is given by (\ref{dgpy}) with $\beta_{i1}=\beta_{i2}=0$ and contains two latent factors with different factor strengths, $(\alpha_{1},\alpha_{2})=(1,1)$, $(1,2/3)$, and $(2/3,1/2)$. $\lambda$ denotes the spatial autocorrelation coefficient of the error term defined in (\ref{error}). $m_0$ is the true number of factors and $\hat{m}$ is the number of selected PCs used to compute the different CD statistics.
$CD$ denotes the
standard test of error cross-sectional dependence defined by (\ref{cd}), $CD^\ast$ is the bias-corrected version defined by (\ref{CD*a}), and $CD_{W+}$ is the power-enhanced randomized version defined by (\ref{CDW+}).
\end{sidewaystable}

\begin{sidewaystable}
\scriptsize
\caption{Size and power of tests of error cross-sectional dependence for the panel regression model with one latent factor $(m_0=1)$ and serially independent non-Gaussian errors}
\label{table.chi2.x1f}
\begin{center}
\begin{tabular}{cr|rrr|rrr|rrr|rrr|rrr|rrr}
$\hat{m}=1$& \multicolumn{19}{c}{}\\
\hline\hline
\multicolumn{1}{l}{}                               &      & \multicolumn{9}{c|}{Size ($H_o:\lambda=0$)}                                                                               & \multicolumn{9}{c}{Power ($H_1:\lambda=0.25$)}                                                                                \\\cline{3-20}
\multicolumn{1}{l}{}                               &      & \multicolumn{3}{c|}{$\alpha=1$} & \multicolumn{3}{c|}{$\alpha=2/3$} & \multicolumn{3}{c|}{$\alpha=1/2$} & \multicolumn{3}{c|}{$\alpha=1$} & \multicolumn{3}{c|}{$\alpha=2/3$} & \multicolumn{3}{c}{$\alpha=1/2$} \\\cline{3-20}
\multicolumn{1}{c}{Tests}                           & $n\setminus T$     & 100       & 200      & 500       & 100        & 200       & 500       & 100        & 200        & 500      & 100       & 200       & 500      & 100        & 200       & 500       & 100        & 200       & 500       \\\hline
\multirow{4}{*}{$CD$}        & 100  & 65.7 & 87.9 & 98.4  & 6.1  & 9.3  & 20.3 & 6.9  & 6.6  & 9.0 & 26.1 & 38.0 & 56.0 & 70.9 & 87.5 & 97.5  & 81.8 & 95.2 & 99.4  \\
& 200  & 67.6 & 91.4 & 99.6  & 5.8  & 7.2  & 13.0 & 5.5  & 6.0  & 6.3 & 16.6 & 30.8 & 49.6 & 80.0 & 93.9 & 99.9  & 86.6 & 98.5 & 100.0 \\
& 500  & 67.4 & 94.9 & 99.9  & 5.3  & 5.0  & 7.4  & 6.9  & 5.8  & 4.9 & 12.0 & 22.3 & 45.6 & 85.3 & 97.4 & 100.0 & 91.3 & 98.9 & 100.0 \\
& 1000 & 68.7 & 95.3 & 100.0 & 6.3  & 4.9  & 7.0  & 6.7  & 6.3  & 5.0 & 10.0 & 19.3 & 44.1 & 86.1 & 98.1 & 100.0 & 90.9 & 99.3 & 100.0 \\\hline
\multirow{4}{*}{$CD^\ast$} & 100  & 5.3  & 4.9  & 5.1   & 5.9  & 5.8  & 5.9  & 7.6  & 6.8  & 6.1 & 57.0 & 82.4 & 98.4 & 86.0 & 98.2 & 100.0 & 89.1 & 98.8 & 100.0 \\
& 200  & 6.0  & 5.4  & 5.1   & 6.2  & 6.1  & 5.3  & 6.3  & 6.8  & 5.5 & 58.5 & 83.0 & 99.2 & 87.6 & 98.8 & 100.0 & 89.6 & 99.2 & 100.0 \\
& 500  & 5.5  & 5.1  & 4.5   & 6.2  & 5.0  & 4.5  & 7.3  & 6.1  & 4.9 & 60.2 & 83.2 & 99.5 & 90.0 & 98.8 & 100.0 & 92.2 & 99.1 & 100.0 \\
& 1000 & 5.5  & 5.4  & 4.8   & 7.0  & 5.6  & 5.3  & 6.7  & 6.5  & 5.4 & 58.4 & 84.3 & 99.2 & 88.9 & 99.2 & 100.0 & 91.3 & 99.2 & 100.0 \\\hline
\multirow{4}{*}{$CD_{W+}$} & 100  & 7.5  & 7.0  & 5.3   & 7.0  & 5.3  & 5.9  & 7.2  & 6.7  & 7.5 & 9.0  & 11.7 & 53.8 & 8.5  & 11.1 & 64.9  & 9.6  & 13.8 & 69.1  \\
& 200  & 9.0  & 7.6  & 4.5   & 8.6  & 7.8  & 5.1  & 8.8  & 7.0  & 5.9 & 10.3 & 13.4 & 64.9 & 10.7 & 14.1 & 70.4  & 11.6 & 16.0 & 71.9  \\
& 500  & 15.9 & 8.9  & 6.1   & 15.8 & 9.5  & 5.8  & 16.1 & 9.7  & 5.7 & 16.6 & 15.8 & 77.5 & 17.8 & 15.7 & 80.4  & 18.2 & 15.8 & 81.0  \\
& 1000 & 26.0 & 14.9 & 6.0   & 26.1 & 16.9 & 5.8  & 26.8 & 14.4 & 6.1 & 24.5 & 22.6 & 84.5 & 25.7 & 23.5 & 85.2  & 25.4 & 21.3 & 86.0
\\\hline
\multicolumn{20}{c}{}\\
$\hat{m}=2$& \multicolumn{19}{c}{}\\
\hline\hline
\multicolumn{1}{l}{}                               &      & \multicolumn{9}{c|}{Size ($H_o:\lambda=0$)}                                                                               & \multicolumn{9}{c}{Power ($H_1:\lambda=0.25$)}                                                                               \\\cline{3-20}
\multicolumn{1}{l}{}                               &      & \multicolumn{3}{c|}{$\alpha=1$} & \multicolumn{3}{c|}{$\alpha=2/3$} & \multicolumn{3}{c|}{$\alpha=1/2$} & \multicolumn{3}{c|}{$\alpha=1$} & \multicolumn{3}{c|}{$\alpha=2/3$} & \multicolumn{3}{c}{$\alpha=1/2$} \\\cline{3-20}
\multicolumn{1}{c}{Tests}                           & $n\setminus T$     & 100       & 200      & 500       & 100        & 200       & 500       & 100        & 200        & 500      & 100       & 200       & 500      & 100        & 200       & 500       & 100        & 200       & 500       \\\hline
\multirow{4}{*}{$CD$}        & 100  & 67.2 & 88.4 & 98.4  & 5.9  & 9.2  & 21.3 & 7.0  & 6.0  & 10.0 & 29.7 & 45.1 & 64.4 & 60.3 & 73.3 & 83.5  & 71.4 & 85.2 & 90.6  \\
& 200  & 67.3 & 90.6 & 99.8  & 5.5  & 6.9  & 13.0 & 6.1  & 5.3  & 6.5  & 18.1 & 35.1 & 55.5 & 72.9 & 88.9 & 98.7  & 80.8 & 95.8 & 99.4  \\
& 500  & 67.3 & 94.7 & 100.0 & 6.1  & 4.8  & 7.6  & 6.8  & 6.3  & 4.6  & 12.6 & 23.4 & 48.8 & 83.9 & 96.5 & 99.9  & 89.1 & 98.3 & 100.0 \\
& 1000 & 68.4 & 95.0 & 100.0 & 6.2  & 4.7  & 7.3  & 7.0  & 6.4  & 5.1  & 10.0 & 19.4 & 45.9 & 84.8 & 97.9 & 100.0 & 89.6 & 98.9 & 100.0 \\\hline
\multirow{4}{*}{$CD^\ast$} & 100  & 6.1  & 5.9  & 6.5   & 6.2  & 6.3  & 6.4  & 7.7  & 7.0  & 6.9  & 56.1 & 81.2 & 98.0 & 84.9 & 97.8 & 100.0 & 86.8 & 98.3 & 100.0 \\
& 200  & 6.8  & 5.8  & 5.8   & 6.3  & 6.1  & 5.1  & 6.9  & 6.3  & 5.8  & 58.3 & 82.9 & 99.1 & 85.5 & 98.5 & 100.0 & 87.7 & 99.0 & 100.0 \\
& 500  & 5.7  & 5.0  & 5.1   & 6.8  & 5.1  & 4.1  & 7.2  & 6.5  & 4.9  & 59.6 & 83.5 & 99.6 & 89.4 & 98.7 & 100.0 & 90.9 & 99.0 & 100.0 \\
& 1000 & 5.5  & 5.4  & 5.0   & 7.2  & 5.7  & 5.6  & 7.0  & 6.8  & 5.6  & 57.8 & 84.5 & 99.2 & 87.3 & 99.0 & 100.0 & 90.6 & 99.1 & 100.0 \\\hline
\multirow{4}{*}{$CD_{W+}$} & 100  & 6.5  & 6.4  & 7.1   & 6.3  & 5.8  & 6.6  & 7.3  & 5.9  & 8.5  & 8.9  & 10.0 & 38.2 & 8.1  & 8.6  & 40.8  & 8.8  & 10.0 & 48.9  \\
& 200  & 8.6  & 6.5  & 5.2   & 9.0  & 6.9  & 5.8  & 8.8  & 6.5  & 6.0  & 10.2 & 10.7 & 50.6 & 11.4 & 11.3 & 54.8  & 9.5  & 11.9 & 56.2  \\
& 500  & 14.1 & 9.5  & 5.0   & 15.5 & 10.0 & 5.6  & 14.9 & 7.7  & 4.9  & 15.8 & 13.4 & 70.7 & 15.8 & 15.4 & 72.8  & 15.9 & 12.8 & 73.1  \\
& 1000 & 26.0 & 15.3 & 6.9   & 24.0 & 14.5 & 6.4  & 25.1 & 14.5 & 6.0  & 25.0 & 20.8 & 79.3 & 24.1 & 21.7 & 81.0  & 25.1 & 20.5 & 81.1  \\\hline
\end{tabular}
\end{center}
\par
\textit{Notes}: The DGP is given by (\ref{dgpy}) with $\beta_{i1}$ and $\beta_{i2}$ both generated from normal distribution, and contains a single latent factor with different factor strengths, $\alpha=1,$ $2/3$, and $1/2$. $\lambda$ denotes the spatial autocorrelation coefficient of the error term defined in (\ref{error}). $m_0$ is the true number of factors and $\hat{m}$ is the number of selected PCs used to compute the different CD statistics.
$CD$ denotes the
standard test of error cross-sectional dependence defined by (\ref{cd}), $CD^\ast$ is the bias-corrected version defined by (\ref{CD*a}), and $CD_{W+}$ is the power-enhanced randomized version defined by (\ref{CDW+}).
\end{sidewaystable}

\begin{sidewaystable}
\scriptsize
\caption{Size and power of tests of error cross-sectional dependence for the panel regression model with two latent factors $(m_0=2)$ and serially independent non-Gaussian errors}
\label{table.chi2.x2f}
\begin{center}
\begin{tabular}{cr|rrr|rrr|rrr|rrr|rrr|rrr}
$\hat{m}=2$& \multicolumn{19}{c}{}\\
\hline\hline
\multicolumn{1}{l}{}                               &      & \multicolumn{9}{c|}{Size ($H_o:\lambda=0$)}                                                                               & \multicolumn{9}{c}{Power ($H_1:\lambda=0.25$)}                                                                                \\\cline{3-20}
\multicolumn{1}{l}{}                           &    &    \multicolumn{3}{c|}{$\alpha_1=1,\alpha_2=1$} & \multicolumn{3}{c|}{$\alpha_1=1,\alpha_2=2/3$} & \multicolumn{3}{c|}{$\alpha_1=2/3,\alpha_2=1/2$} & \multicolumn{3}{c|}{$\alpha_1=1,\alpha_2=1$} & \multicolumn{3}{c|}{$\alpha_1=1,\alpha_2=2/3$} & \multicolumn{3}{c}{$\alpha_1=2/3,\alpha_2=1/2$}  \\\cline{3-20}
\multicolumn{1}{c}{Tests}                           & $n/T$     & 100       & 200      & 500       & 100        & 200       & 500       & 100        & 200        & 500      & 100       & 200       & 500      & 100        & 200       & 500       & 100        & 200       & 500       \\\hline
\multirow{4}{*}{$CD$}        & 100  & 100.0 & 100.0 & 100.0 & 97.9 & 100.0 & 100.0 & 8.6  & 13.7 & 39.7 & 98.9  & 100.0 & 100.0 & 89.5 & 97.3 & 99.6  & 57.7 & 68.5 & 80.9  \\
& 200  & 100.0 & 100.0 & 100.0 & 99.0 & 100.0 & 100.0 & 6.6  & 7.4  & 22.9 & 99.7  & 100.0 & 100.0 & 91.4 & 99.1 & 100.0 & 72.8 & 87.4 & 98.3  \\
& 500  & 100.0 & 100.0 & 100.0 & 99.5 & 100.0 & 100.0 & 7.0  & 6.0  & 11.8 & 100.0 & 100.0 & 100.0 & 92.6 & 99.6 & 100.0 & 83.0 & 95.9 & 99.9  \\
& 1000 & 100.0 & 100.0 & 100.0 & 99.6 & 100.0 & 100.0 & 6.6  & 5.9  & 7.6  & 100.0 & 100.0 & 100.0 & 93.0 & 99.9 & 100.0 & 85.7 & 97.5 & 100.0 \\\hline
\multirow{4}{*}{$CD^\ast$} & 100  & 5.3   & 5.4   & 5.4   & 6.9  & 5.7   & 4.9   & 9.3  & 6.3  & 6.2  & 23.0  & 34.5  & 63.1  & 32.6 & 49.6 & 78.3  & 83.7 & 97.7 & 100.0 \\
& 200  & 5.2   & 5.1   & 5.1   & 6.3  & 4.7   & 5.5   & 8.2  & 6.7  & 5.4  & 21.2  & 35.9  & 64.3  & 34.5 & 50.2 & 81.1  & 87.6 & 98.5 & 100.0 \\
& 500  & 5.9   & 4.7   & 5.8   & 5.9  & 5.3   & 4.8   & 9.2  & 5.9  & 5.9  & 21.9  & 37.6  & 64.9  & 36.2 & 53.4 & 83.9  & 88.4 & 99.0 & 100.0 \\
& 1000 & 6.8   & 4.8   & 5.5   & 6.5  & 4.5   & 5.0   & 7.6  & 6.4  & 5.0  & 23.4  & 33.6  & 63.2  & 34.2 & 52.7 & 86.4  & 88.9 & 98.9 & 100.0 \\\hline
\multirow{4}{*}{$CD_{W+}$} & 100  & 7.1   & 6.4   & 5.6   & 7.1  & 5.8   & 6.8   & 7.2  & 5.6  & 9.9  & 8.4   & 10.5  & 42.2  & 9.0  & 9.9  & 50.6  & 10.7 & 13.1 & 67.6  \\
& 200  & 10.1  & 7.0   & 5.1   & 9.9  & 7.5   & 4.9   & 9.5  & 5.9  & 6.1  & 10.8  & 12.1  & 58.1  & 10.9 & 12.4 & 58.8  & 12.4 & 12.0 & 69.0  \\
& 500  & 16.2  & 10.4  & 6.1   & 15.8 & 9.0   & 5.4   & 16.3 & 9.8  & 5.8  & 16.5  & 15.2  & 74.4  & 16.0 & 14.2 & 74.8  & 17.7 & 17.6 & 79.8  \\
& 1000 & 26.6  & 14.7  & 6.2   & 28.1 & 18.0  & 5.2   & 28.1 & 15.0 & 5.7  & 25.4  & 21.5  & 84.1  & 27.5 & 24.2 & 82.8  & 27.4 & 22.2 & 85.1
\\\hline
\multicolumn{20}{c}{}\\
$\hat{m}=4$& \multicolumn{19}{c}{}\\
\hline\hline
\multicolumn{1}{l}{}                               &      & \multicolumn{9}{c|}{Size ($H_o:\lambda=0$)}                                                                               & \multicolumn{9}{c}{Power ($H_1:\lambda=0.25$)}                                                                                \\\cline{3-20}
\multicolumn{1}{l}{}                               &      & \multicolumn{3}{c|}{$\alpha_1=1,\alpha_2=1$} & \multicolumn{3}{c|}{$\alpha_1=1,\alpha_2=2/3$} & \multicolumn{3}{c|}{$\alpha_1=2/3,\alpha_2=1/2$} & \multicolumn{3}{c|}{$\alpha_1=1,\alpha_2=1$} & \multicolumn{3}{c|}{$\alpha_1=1,\alpha_2=2/3$} & \multicolumn{3}{c}{$\alpha_1=2/3,\alpha_2=1/2$} \\\cline{3-20}
\multicolumn{1}{c}{Tests}                           & $n/T$     & 100       & 200      & 500       & 100        & 200       & 500       & 100        & 200        & 500      & 100       & 200       & 500      & 100        & 200       & 500       & 100        & 200       & 500       \\\hline
\multirow{4}{*}{$CD$}        & 100  & 99.9  & 100.0 & 100.0 & 98.4 & 99.9  & 100.0 & 8.1  & 15.3 & 41.7 & 99.2  & 100.0 & 100.0 & 91.6 & 98.1 & 99.5  & 38.4 & 45.2 & 54.9  \\
& 200  & 100.0 & 100.0 & 100.0 & 99.1 & 100.0 & 100.0 & 6.1  & 8.6  & 23.5 & 99.7  & 100.0 & 100.0 & 92.6 & 99.2 & 100.0 & 61.2 & 76.4 & 88.2  \\
& 500  & 100.0 & 100.0 & 100.0 & 99.5 & 100.0 & 100.0 & 7.4  & 5.6  & 11.6 & 100.0 & 100.0 & 100.0 & 92.8 & 99.8 & 100.0 & 77.9 & 94.0 & 99.9  \\
& 1000 & 100.0 & 100.0 & 100.0 & 99.7 & 100.0 & 100.0 & 6.8  & 6.1  & 7.4  & 100.0 & 100.0 & 100.0 & 93.0 & 99.9 & 100.0 & 83.2 & 97.2 & 100.0 \\\hline
\multirow{4}{*}{$CD^\ast$} & 100  & 8.2   & 8.9   & 15.0  & 8.9  & 8.6   & 11.0  & 9.8  & 7.9  & 9.0  & 29.3  & 41.4  & 75.1  & 34.7 & 52.6 & 82.0  & 80.0 & 95.7 & 100.0 \\
& 200  & 5.8   & 6.0   & 7.1   & 6.9  & 5.5   & 6.9   & 8.9  & 8.1  & 7.0  & 25.1  & 39.8  & 70.7  & 35.3 & 51.0 & 82.4  & 84.1 & 97.7 & 100.0 \\
& 500  & 6.1   & 5.2   & 5.8   & 6.5  & 5.4   & 5.4   & 9.6  & 6.7  & 6.2  & 24.3  & 39.4  & 66.7  & 36.3 & 53.9 & 84.4  & 85.8 & 98.3 & 100.0 \\
& 1000 & 7.0   & 4.8   & 5.6   & 7.0  & 5.0   & 5.4   & 7.9  & 6.6  & 5.1  & 25.1  & 35.2  & 64.8  & 35.4 & 53.2 & 86.7  & 86.7 & 98.7 & 100.0 \\\hline
\multirow{4}{*}{$CD_{W+}$} & 100  & 7.3   & 7.6   & 26.2  & 6.8  & 6.8   & 15.5  & 5.5  & 6.4  & 12.0 & 8.9   & 11.1  & 48.9  & 7.6  & 9.6  & 43.3  & 6.2  & 8.6  & 38.0  \\
& 200  & 8.2   & 6.8   & 5.9   & 9.0  & 7.3   & 5.8   & 9.5  & 5.9  & 5.7  & 9.6   & 10.1  & 38.7  & 8.7  & 11.7 & 47.8  & 10.4 & 9.7  & 38.0  \\
& 500  & 14.5  & 9.6   & 4.9   & 15.1 & 8.8   & 6.6   & 13.3 & 8.1  & 5.3  & 15.8  & 13.6  & 61.3  & 15.6 & 13.8 & 68.8  & 14.6 & 12.5 & 62.7  \\
& 1000 & 26.3  & 13.2  & 6.9   & 25.2 & 14.4  & 6.3   & 24.7 & 16.0 & 5.6  & 25.1  & 19.8  & 75.1  & 26.4 & 21.2 & 77.8  & 23.4 & 21.5 & 77.5  \\\hline
\end{tabular}
\end{center}
\par
\textit{Notes}: The DGP is given by (\ref{dgpy}) with $\beta_{i1}$ and $\beta_{i2}$ both generated from normal distribution, and contains two latent factors with different factor strengths, $(\alpha_{1},\alpha_{2})=(1,1)$, $(1,2/3)$, and $(2/3,1/2)$. $\lambda$ denotes the spatial autocorrelation coefficient of the error term defined in (\ref{error}). $m_0$ is the true number of factors and $\hat{m}$ is the number of selected PCs used to compute the different CD statistics.
$CD$ denotes the
standard test of error cross-sectional dependence defined by (\ref{cd}), $CD^\ast$ is the bias-corrected version defined by (\ref{CD*a}), and $CD_{W+}$ is the power-enhanced randomized version defined by (\ref{CDW+}).
\end{sidewaystable}


\begin{sidewaystable}
\scriptsize
\caption{Size and power of variance adjusted tests of error cross-sectional dependence for the latent factor model with one factor $(m_0=1)$ and serially correlated Gaussian errors}
\label{table.v.n.1f}
\begin{center}
\begin{tabular}{cr|rrr|rrr|rrr|rrr|rrr|rrr}
$\hat{m}=1$& \multicolumn{19}{c}{}\\
\hline\hline
\multicolumn{1}{l}{}                               &      & \multicolumn{9}{c|}{Size ($H_o:\lambda=0$)}                                                                               & \multicolumn{9}{c}{Power ($H_1:\lambda=0.25$)}                                                                                \\\cline{3-20}
\multicolumn{1}{l}{}                               &      & \multicolumn{3}{c|}{$\alpha=1$} & \multicolumn{3}{c|}{$\alpha=2/3$} & \multicolumn{3}{c|}{$\alpha=1/2$} & \multicolumn{3}{c|}{$\alpha=1$} & \multicolumn{3}{c|}{$\alpha=2/3$} & \multicolumn{3}{c}{$\alpha=1/2$} \\\cline{3-20}
\multicolumn{1}{c}{Tests}                           & $n\setminus T$     & 100       & 200      & 500       & 100        & 200       & 500       & 100        & 200        & 500      & 100       & 200       & 500      & 100        & 200       & 500       & 100        & 200       & 500       \\\hline
\multirow{4}{*}{Variance
adjusted $CD$}                        & 100  & 41.0 & 71.4 & 93.8 & 5.3  & 6.6  & 12.9 & 6.7  & 6.5  & 5.9  & 12.8 & 26.1 & 44.1  & 59.8 & 74.9 & 91.8  & 67.5 & 85.9 & 98.3   \\
& 200  & 37.0 & 75.8 & 97.2 & 5.2  & 5.0  & 7.5  & 6.0  & 5.9  & 5.5  & 7.7  & 18.8 & 34.8  & 65.7 & 83.0 & 97.7  & 74.5 & 90.1 & 99.9   \\
& 500  & 33.5 & 76.6 & 98.6 & 6.7  & 4.8  & 5.7  & 6.5  & 5.4  & 5.6  & 5.3  & 14.4 & 31.9  & 74.2 & 90.1 & 99.3  & 73.0 & 94.2 & 99.7   \\
& 1000 & 28.8 & 76.6 & 99.6 & 10.1 & 5.9  & 4.5  & 6.9  & 6.6  & 4.2  & 4.4  & 10.4 & 27.4  & 77.9 & 92.3 & 100.0 & 78.7 & 93.7 & 100.0  \\\hline
\multirow{4}{*}{Variance adjusted
$CD^\ast$} & 100  & 5.6  & 5.3  & 5.6  & 6.2  & 6.2  & 6.3  & 6.8  & 6.7  & 4.9  & 40.9 & 61.7 & 89.3  & 69.4 & 88.9 & 99.7  & 73.2 & 92.4 & 99.8   \\
& 200  & 5.0  & 4.7  & 5.9  & 6.5  & 4.4  & 6.1  & 6.6  & 6.5  & 5.4  & 45.6 & 66.3 & 92.9  & 72.8 & 90.6 & 100.0 & 75.8 & 93.4 & 100.0  \\
& 500  & 6.3  & 5.0  & 5.4  & 7.2  & 5.5  & 4.5  & 6.8  & 5.8  & 5.6  & 46.2 & 63.5 & 93.5  & 77.1 & 93.7 & 99.9  & 73.7 & 94.8 & 99.8   \\
& 1000 & 6.7  & 5.2  & 5.1  & 11.4 & 6.3  & 4.0  & 6.9  & 6.4  & 4.4  & 49.9 & 66.7 & 94.2  & 79.9 & 93.4 & 100.0 & 78.8 & 94.2 & 100.0  \\\hline
\multirow{4}{*}{Variance adjusted $CD_{W+}$}                  & 100  & 16.0 & 15.5 & 12.8 & 16.0 & 14.5 & 14.9 & 15.7 & 16.9 & 17.0 & 22.8 & 39.1 & 86.8  & 23.6 & 38.8 & 93.9  & 26.5 & 43.3 & 94.4   \\
& 200  & 20.2 & 25.3 & 19.8 & 22.8 & 26.1 & 20.6 & 20.5 & 26.8 & 22.4 & 29.1 & 55.8 & 98.6  & 29.6 & 57.0 & 99.8  & 35.7 & 59.1 & 99.5   \\
& 500  & 30.4 & 54.2 & 51.5 & 30.5 & 54.7 & 51.1 & 40.1 & 58.0 & 51.7 & 35.1 & 79.6 & 100.0 & 34.9 & 78.9 & 100.0 & 50.9 & 82.4 & 100.0  \\
& 1000 & 36.6 & 80.8 & 85.5 & 39.5 & 78.3 & 87.0 & 53.9 & 82.6 & 86.9 & 42.5 & 92.6 & 100.0 & 42.9 & 92.2 & 100.0 & 58.1 & 94.6 & 100.0
\\\hline
\multicolumn{20}{c}{}\\
$\hat{m}=2$& \multicolumn{19}{c}{}\\
\hline\hline
\multicolumn{1}{l}{}                               &      & \multicolumn{9}{c|}{Size ($H_o:\lambda=0$)}                                                                               & \multicolumn{9}{c}{Power ($H_1:\lambda=0.25$)}                                                                                \\\cline{3-20}
\multicolumn{1}{l}{}                               &      & \multicolumn{3}{c|}{$\alpha=1$} & \multicolumn{3}{c|}{$\alpha=2/3$} & \multicolumn{3}{c|}{$\alpha=1/2$} & \multicolumn{3}{c|}{$\alpha=1$} & \multicolumn{3}{c|}{$\alpha=2/3$} & \multicolumn{3}{c}{$\alpha=1/2$}\\\cline{3-20}
\multicolumn{1}{c}{Tests}                           & $n\setminus T$     & 100       & 200      & 500       & 100        & 200       & 500       & 100        & 200        & 500      & 100       & 200       & 500      & 100        & 200       & 500       & 100        & 200       & 500       \\\hline
\multirow{4}{*}{Variance
adjusted $CD$}                        & 100  & 41.7 & 72.2 & 93.3 & 4.2  & 5.6  & 12.8 & 6.1  & 6.3  & 6.1  & 15.5 & 28.1 & 50.0  & 52.4 & 66.5 & 81.2  & 61.4 & 79.6 & 89.2   \\
& 200  & 38.2 & 75.3 & 96.9 & 4.8  & 5.0  & 7.4  & 6.4  & 5.2  & 5.4  & 8.1  & 19.4 & 37.4  & 63.2 & 80.7 & 96.3  & 71.4 & 88.2 & 99.1   \\
& 500  & 34.1 & 77.0 & 98.7 & 7.3  & 4.6  & 5.6  & 6.8  & 5.2  & 5.6  & 5.1  & 15.5 & 33.9  & 73.5 & 88.6 & 99.1  & 71.3 & 93.3 & 99.7   \\
& 1000 & 29.5 & 77.0 & 99.6 & 11.2 & 5.8  & 3.9  & 7.0  & 6.5  & 4.4  & 4.5  & 10.4 & 28.5  & 77.8 & 90.9 & 100.0 & 77.2 & 93.2 & 100.0  \\\hline
\multirow{4}{*}{Variance adjusted
$CD^\ast$} & 100  & 4.9  & 4.8  & 5.7  & 5.9  & 6.7  & 5.8  & 7.0  & 7.4  & 6.2  & 41.1 & 62.3 & 88.8  & 69.1 & 89.6 & 99.5  & 71.6 & 92.0 & 99.9   \\
& 200  & 5.6  & 4.2  & 6.1  & 6.8  & 4.9  & 7.2  & 7.1  & 6.0  & 5.5  & 45.5 & 65.7 & 93.6  & 71.5 & 90.4 & 99.7  & 75.3 & 93.2 & 100.0  \\
& 500  & 5.6  & 5.5  & 5.3  & 8.2  & 5.1  & 5.1  & 7.0  & 5.3  & 5.8  & 45.2 & 63.4 & 93.9  & 76.7 & 93.7 & 99.9  & 73.1 & 94.0 & 99.8   \\
& 1000 & 6.9  & 5.7  & 5.2  & 11.7 & 7.7  & 4.1  & 7.1  & 6.7  & 4.3  & 50.2 & 67.3 & 94.4  & 80.3 & 93.5 & 100.0 & 77.8 & 94.1 & 100.0  \\\hline
\multirow{4}{*}{Variance adjusted $CD_{W+}$}                  & 100  & 13.4 & 15.6 & 14.1 & 14.9 & 14.3 & 13.4 & 15.0 & 15.6 & 16.9 & 19.3 & 34.4 & 80.7  & 19.8 & 32.9 & 84.6  & 19.6 & 35.0 & 86.1   \\
& 200  & 18.3 & 22.9 & 20.2 & 19.9 & 25.2 & 18.3 & 17.4 & 27.0 & 21.8 & 26.3 & 47.6 & 97.9  & 24.5 & 49.3 & 98.5  & 27.5 & 50.0 & 98.9   \\
& 500  & 27.3 & 52.1 & 49.7 & 27.0 & 54.1 & 47.9 & 33.7 & 55.6 & 51.0 & 31.3 & 77.7 & 100.0 & 33.4 & 78.3 & 100.0 & 40.7 & 80.5 & 100.0  \\
& 1000 & 36.3 & 80.2 & 82.9 & 36.5 & 78.7 & 83.5 & 50.1 & 84.4 & 83.8 & 40.6 & 93.5 & 100.0 & 42.0 & 92.2 & 100.0 & 54.4 & 95.6 & 100.0  \\\hline
\end{tabular}
\end{center}
\par
\textit{Notes}: The DGP is given by (\ref{dgpy}) with $\beta_{i1}=\beta_{i2}=0$ and contains a single latent factor with different factor strengths, $\alpha=1,$ $2/3$, and $1/2$. $\lambda$ denotes the spatial autocorrelation coefficient of the error term defined in (\ref{error}). $m_0$ is the true number of factors and $\hat{m}$ is the number of selected PCs used to compute the different CD statistics.
$CD$ denotes the
standard test of error cross-sectional dependence defined by (\ref{cd}), $CD^\ast$ is the bias-corrected version defined by (\ref{CD*a}), and $CD_{W+}$ is the power-enhanced randomized version defined by (\ref{CDW+}).
\end{sidewaystable}

\begin{sidewaystable}
\scriptsize
\caption{Size and power of variance adjusted tests of error cross-sectional dependence for the latent factor model with two factors $(m_0=2)$ and serially correlated Gaussian errors}
\label{table.v.n.2f}
\begin{center}
\begin{tabular}{cr|rrr|rrr|rrr|rrr|rrr|rrr}
$\hat{m}=2$& \multicolumn{19}{c}{}\\
\hline\hline
\multicolumn{1}{l}{}                               &      & \multicolumn{9}{c|}{Size ($H_o:\lambda=0$)}                                                                               & \multicolumn{9}{c}{Power ($H_1:\lambda=0.25$)}                                                                                \\\cline{3-20}
\multicolumn{1}{l}{}                           &    &    \multicolumn{3}{c|}{$\alpha_1=1,\alpha_2=1$} & \multicolumn{3}{c|}{$\alpha_1=1,\alpha_2=2/3$} & \multicolumn{3}{c|}{$\alpha_1=2/3,\alpha_2=1/2$} & \multicolumn{3}{c|}{$\alpha_1=1,\alpha_2=1$} & \multicolumn{3}{c|}{$\alpha_1=1,\alpha_2=2/3$} & \multicolumn{3}{c}{$\alpha_1=2/3,\alpha_2=1/2$}   \\\cline{3-20}
\multicolumn{1}{c}{Tests}                           & $n\setminus T$     & 100       & 200      & 500       & 100        & 200       & 500       & 100        & 200        & 500      & 100       & 200       & 500      & 100        & 200       & 500       & 100        & 200       & 500       \\\hline
\multirow{4}{*}{Variance adjusted $CD$}    & 100  & 98.2 & 100.0 & 100.0 & 87.5 & 98.7  & 100.0 & 8.4  & 6.4  & 22.4 & 92.9 & 99.1  & 100.0 & 67.6 & 90.6 & 97.8  & 54.0 & 61.0 & 69.8  \\
& 200  & 99.0 & 100.0 & 100.0 & 88.7 & 99.8  & 100.0 & 9.0  & 5.2  & 9.8  & 95.4 & 99.7  & 100.0 & 66.6 & 94.6 & 99.4  & 64.8 & 77.8 & 94.0  \\
& 500  & 99.2 & 100.0 & 100.0 & 88.2 & 100.0 & 100.0 & 12.2 & 8.2  & 7.2  & 95.1 & 100.0 & 100.0 & 63.4 & 96.8 & 100.0 & 80.8 & 90.8 & 99.2  \\
& 1000 & 98.0 & 100.0 & 100.0 & 84.6 & 99.9  & 100.0 & 13.4 & 8.4  & 4.8  & 92.6 & 100.0 & 100.0 & 52.4 & 95.4 & 100.0 & 81.6 & 93.0 & 99.8  \\\hline
\multirow{4}{*}{Variance adjusted $CD^\ast$} & 100  & 5.0  & 5.0   & 4.5   & 5.9  & 5.4   & 4.4   & 12.2 & 5.8  & 6.0  & 17.4 & 22.0  & 40.3  & 27.0 & 34.8 & 57.8  & 71.6 & 91.4 & 99.6  \\
& 200  & 5.9  & 4.2   & 4.8   & 7.1  & 5.7   & 4.2   & 10.8 & 6.6  & 6.8  & 19.2 & 24.1  & 42.1  & 27.8 & 35.8 & 64.6  & 75.0 & 91.0 & 99.8  \\
& 500  & 8.4  & 4.8   & 4.3   & 9.0  & 6.2   & 4.5   & 14.2 & 9.6  & 6.6  & 25.3 & 27.1  & 45.3  & 30.0 & 41.4 & 71.2  & 83.8 & 93.0 & 99.8  \\
& 1000 & 13.7 & 4.7   & 5.1   & 12.5 & 7.7   & 5.6   & 14.4 & 11.6 & 6.6  & 33.6 & 28.4  & 46.4  & 42.6 & 41.2 & 65.8  & 83.6 & 94.8 & 100.0 \\\hline
\multirow{4}{*}{Variance adjusted $CD_{W+}$} & 100  & 17.2 & 16.5  & 14.1  & 16.8 & 18.7  & 20.9  & 16.8 & 17.8 & 23.6 & 23.3 & 37.8  & 85.9  & 21.2 & 33.4 & 76.6  & 26.0 & 43.6 & 95.4  \\
& 200  & 20.9 & 29.2  & 24.2  & 19.4 & 27.6  & 24.6  & 28.2 & 30.6 & 26.2 & 28.0 & 53.0  & 98.3  & 21.2 & 39.2 & 90.8  & 35.4 & 58.2 & 99.6  \\
& 500  & 32.4 & 54.9  & 51.7  & 29.1 & 55.4  & 52.7  & 45.4 & 63.2 & 55.4 & 38.1 & 79.6  & 100.0 & 27.6 & 51.6 & 99.4  & 50.4 & 80.8 & 100.0 \\
& 1000 & 39.4 & 80.9  & 85.9  & 38.5 & 81.3  & 87.7  & 59.8 & 87.6 & 88.0 & 43.8 & 94.0  & 100.0 & 36.4 & 79.0 & 100.0 & 60.2 & 96.4 & 100.0
\\\hline
\multicolumn{20}{c}{}\\
$\hat{m}=4$& \multicolumn{19}{c}{}\\
\hline\hline
\multicolumn{1}{l}{}                               &      & \multicolumn{9}{c|}{Size ($H_o:\lambda=0$)}                                                                               & \multicolumn{9}{c}{Power ($H_1:\lambda=0.25$)}                                                                                \\\cline{3-20}
\multicolumn{1}{l}{}                           &    &    \multicolumn{3}{c|}{$\alpha_1=1,\alpha_2=1$} & \multicolumn{3}{c|}{$\alpha_1=1,\alpha_2=2/3$} & \multicolumn{3}{c|}{$\alpha_1=2/3,\alpha_2=1/2$} & \multicolumn{3}{c|}{$\alpha_1=1,\alpha_2=1$} & \multicolumn{3}{c|}{$\alpha_1=1,\alpha_2=2/3$} & \multicolumn{3}{c}{$\alpha_1=2/3,\alpha_2=1/2$}  \\\cline{3-20}
\multicolumn{1}{c}{Tests}                           & $n\setminus T$     & 100       & 200      & 500       & 100        & 200       & 500       & 100        & 200        & 500      & 100       & 200       & 500      & 100        & 200       & 500       & 100        & 200       & 500       \\\hline
\multirow{4}{*}{Variance adjusted   $CD$}    & 100  & 98.7 & 100.0 & 100.0 & 88.7 & 98.8  & 100.0 & 5.8  & 6.2  & 20.4 & 95.4 & 99.7  & 99.9  & 73.2 & 94.0 & 99.4  & 44.4 & 46.6 & 46.2  \\
& 200  & 99.5 & 100.0 & 100.0 & 91.2 & 99.8  & 100.0 & 6.6  & 5.2  & 12.2 & 96.3 & 99.9  & 100.0 & 69.4 & 95.2 & 99.6  & 59.8 & 69.6 & 87.6  \\
& 500  & 99.2 & 100.0 & 100.0 & 88.4 & 100.0 & 100.0 & 11.4 & 7.6  & 7.6  & 95.5 & 100.0 & 100.0 & 61.4 & 97.6 & 100.0 & 77.4 & 87.6 & 98.4  \\
& 1000 & 98.1 & 100.0 & 100.0 & 83.5 & 99.7  & 100.0 & 12.2 & 8.2  & 5.8  & 92.4 & 100.0 & 100.0 & 51.6 & 96.6 & 100.0 & 78.8 & 91.8 & 99.8  \\\hline
\multirow{4}{*}{Variance adjusted $CD^\ast$} & 100  & 5.6  & 6.8   & 9.3   & 7.8  & 7.0   & 6.6   & 9.2  & 6.2  & 8.8  & 20.7 & 27.5  & 51.6  & 28.4 & 43.4 & 65.8  & 71.0 & 89.4 & 99.8  \\
& 200  & 5.5  & 5.8   & 5.7   & 7.1  & 6.9   & 5.1   & 10.2 & 7.2  & 8.4  & 20.3 & 26.9  & 48.1  & 29.4 & 38.2 & 67.2  & 73.8 & 90.6 & 100.0 \\
& 500  & 8.9  & 5.4   & 4.1   & 9.3  & 6.6   & 5.4   & 13.0 & 10.8 & 6.4  & 25.9 & 28.0  & 46.6  & 34.0 & 43.2 & 71.6  & 82.6 & 92.8 & 99.6  \\
& 1000 & 14.3 & 4.6   & 5.1   & 13.5 & 7.6   & 5.5   & 14.2 & 10.8 & 7.2  & 34.9 & 29.1  & 47.9  & 42.2 & 43.6 & 66.0  & 81.0 & 93.4 & 100.0 \\\hline
\multirow{4}{*}{Variance adjusted $CD_{W+}$} & 100  & 13.4 & 17.7  & 34.6  & 13.2 & 16.7  & 32.5  & 14.0 & 20.6 & 25.0 & 17.4 & 30.5  & 82.4  & 18.2 & 30.2 & 80.6  & 18.2 & 34.2 & 80.8  \\
& 200  & 18.8 & 26.3  & 23.4  & 20.6 & 24.0  & 25.8  & 22.8 & 25.6 & 22.4 & 22.7 & 45.5  & 93.6  & 23.6 & 43.0 & 95.6  & 27.6 & 45.2 & 95.4  \\
& 500  & 29.6 & 53.6  & 48.4  & 28.9 & 52.6  & 49.9  & 32.6 & 54.4 & 53.4 & 33.9 & 76.1  & 100.0 & 34.0 & 77.0 & 99.8  & 35.6 & 77.4 & 100.0 \\
& 1000 & 38.9 & 79.3  & 84.4  & 40.9 & 79.7  & 82.9  & 45.6 & 83.0 & 81.2 & 43.0 & 92.6  & 100.0 & 45.6 & 91.4 & 100.0 & 49.8 & 93.0 & 100.0 \\\hline
\end{tabular}
\end{center}
\par
\textit{Notes}: The DGP is given by (\ref{dgpy}) with $\beta_{i1}=\beta_{i2}=0$, and contains two latent factors with different factor strengths, $(\alpha_{1},\alpha_{2})=(1,1)$, $(1,2/3)$, and $(2/3,1/2)$. $\lambda$ denotes the spatial autocorrelation coefficient of the error term defined in (\ref{error}). $m_0$ is the true number of factors and $\hat{m}$ is the number of selected PCs used to compute the different CD statistics.
$CD$ denotes the
standard test of error cross-sectional dependence defined by (\ref{cd}), $CD^\ast$ is the bias-corrected version defined by (\ref{CD*a}), and $CD_{W+}$ is the power-enhanced randomized version defined by (\ref{CDW+}).
\end{sidewaystable}

\begin{sidewaystable}
\scriptsize
\caption{Size and power of variance adjusted tests of error cross-sectional dependence for the panel regression model with one latent factor $(m_0=1)$ and serially correlated Gaussian errors}
\label{table.v.n.x1f}
\begin{center}
\begin{tabular}{cr|rrr|rrr|rrr|rrr|rrr|rrr}
$\hat{m}=1$& \multicolumn{19}{c}{}\\
\hline\hline
\multicolumn{1}{l}{}                               &      & \multicolumn{9}{c|}{Size ($H_o:\lambda=0$)}                                                                               & \multicolumn{9}{c}{Power ($H_1:\lambda=0.25$)}                                                                                \\\cline{3-20}
\multicolumn{1}{l}{}                               &      & \multicolumn{3}{c|}{$\alpha=1$} & \multicolumn{3}{c|}{$\alpha=2/3$} & \multicolumn{3}{c|}{$\alpha=1/2$} & \multicolumn{3}{c|}{$\alpha=1$} & \multicolumn{3}{c|}{$\alpha=2/3$} & \multicolumn{3}{c}{$\alpha=1/2$} \\\cline{3-20}
\multicolumn{1}{c}{Tests}                           & $n\setminus T$     & 100       & 200      & 500       & 100        & 200       & 500       & 100        & 200        & 500      & 100       & 200       & 500      & 100        & 200       & 500       & 100        & 200       & 500       \\\hline
\multirow{4}{*}{Variance
adjusted $CD$}                        & 100  & 44.1 & 73.2 & 94.5 & 8.0  & 5.6  & 11.5 & 6.4  & 5.6  & 6.6  & 14.5 & 24.7 & 43.0  & 62.8 & 78.4 & 93.3  & 69.0 & 85.3 & 97.5   \\
& 200  & 39.6 & 74.0 & 97.0 & 5.2  & 5.4  & 8.3  & 6.1  & 7.1  & 4.6  & 9.3  & 18.9 & 36.3  & 68.1 & 87.3 & 98.1  & 74.6 & 91.7 & 99.9   \\
& 500  & 32.9 & 77.8 & 96.3 & 8.9  & 5.1  & 5.0  & 8.7  & 6.0  & 4.9  & 5.3  & 13.4 & 29.4  & 75.4 & 90.8 & 99.6  & 77.8 & 94.5 & 99.8   \\
& 1000 & 27.7 & 76.2 & 77.2 & 8.4  & 6.4  & 5.3  & 7.3  & 5.3  & 4.0  & 5.0  & 9.3  & 27.9  & 80.0 & 93.1 & 99.7  & 79.0 & 95.4 & 100.0  \\\hline
\multirow{4}{*}{Variance adjusted
$CD^\ast$} & 100  & 5.0  & 4.1  & 5.4  & 8.6  & 5.9  & 5.8  & 7.0  & 6.2  & 5.4  & 40.8 & 62.1 & 88.3  & 71.7 & 90.9 & 99.7  & 74.6 & 91.5 & 99.7   \\
& 200  & 5.3  & 6.6  & 5.0  & 5.9  & 6.5  & 5.8  & 6.4  & 7.0  & 5.1  & 42.5 & 64.7 & 91.8  & 74.6 & 93.1 & 99.9  & 76.9 & 93.6 & 100.0  \\
& 500  & 5.7  & 5.1  & 4.9  & 9.5  & 6.3  & 4.8  & 8.8  & 6.7  & 5.0  & 48.1 & 65.1 & 93.8  & 78.1 & 93.6 & 99.9  & 78.4 & 95.3 & 99.9   \\
& 1000 & 8.9  & 4.0  & 4.2  & 9.2  & 6.8  & 5.1  & 7.3  & 5.2  & 4.2  & 51.7 & 66.8 & 93.5  & 81.4 & 94.9 & 100.0 & 79.3 & 95.7 & 100.0  \\\hline
\multirow{4}{*}{Variance adjusted $CD_{W+}$}                  & 100  & 14.4 & 16.1 & 12.5 & 14.7 & 15.8 & 15.0 & 13.6 & 16.2 & 16.1 & 19.6 & 35.6 & 88.5  & 20.3 & 39.3 & 93.8  & 19.1 & 40.5 & 94.1   \\
& 200  & 16.1 & 25.2 & 20.6 & 18.2 & 25.1 & 22.8 & 20.5 & 26.9 & 19.8 & 22.6 & 50.7 & 99.4  & 23.8 & 51.8 & 99.3  & 26.4 & 54.6 & 99.6   \\
& 500  & 26.6 & 50.1 & 50.7 & 26.8 & 48.8 & 52.6 & 26.0 & 54.1 & 50.8 & 31.0 & 75.4 & 100.0 & 31.4 & 75.9 & 100.0 & 30.0 & 78.5 & 100.0  \\
& 1000 & 29.0 & 75.5 & 76.0 & 34.1 & 74.3 & 85.6 & 34.4 & 78.2 & 82.1 & 35.2 & 90.0 & 100.0 & 36.1 & 90.1 & 100.0 & 37.9 & 93.4 & 100.0 \\\hline
\multicolumn{20}{c}{}\\
$\hat{m}=2$& \multicolumn{19}{c}{}\\
\hline\hline
\multicolumn{1}{l}{}                               &      & \multicolumn{9}{c|}{Size ($H_o:\lambda=0$)}                                                                               & \multicolumn{9}{c}{Power ($H_1:\lambda=0.25$)}                                                                                \\\cline{3-20}
\multicolumn{1}{l}{}                               &      & \multicolumn{3}{c|}{$\alpha=1$} & \multicolumn{3}{c|}{$\alpha=2/3$} & \multicolumn{3}{c|}{$\alpha=1/2$} & \multicolumn{3}{c|}{$\alpha=1$} & \multicolumn{3}{c|}{$\alpha=2/3$} & \multicolumn{3}{c}{$\alpha=1/2$}\\\cline{3-20}
\multicolumn{1}{c}{Tests}                           & $n\setminus T$     & 100       & 200      & 500       & 100        & 200       & 500       & 100        & 200        & 500      & 100       & 200       & 500      & 100        & 200       & 500       & 100        & 200       & 500       \\\hline
\multirow{4}{*}{Variance
adjusted $CD$}                        & 100  & 43.4 & 72.9 & 93.8 & 7.2  & 5.7  & 11.3 & 6.0  & 5.9  & 7.3  & 15.6 & 28.3 & 51.4  & 52.8 & 67.8 & 82.6  & 60.8 & 77.5 & 88.4   \\
& 200  & 40.1 & 73.7 & 96.9 & 4.7  & 5.8  & 8.0  & 5.4  & 6.8  & 4.7  & 10.4 & 20.3 & 38.6  & 62.6 & 84.0 & 96.3  & 68.4 & 87.4 & 98.9   \\
& 500  & 34.4 & 77.7 & 96.3 & 7.5  & 5.0  & 5.1  & 7.8  & 5.5  & 5.2  & 5.1  & 14.0 & 31.5  & 71.3 & 89.6 & 99.4  & 74.2 & 93.9 & 99.7   \\
& 1000 & 27.1 & 75.2 & 75.9 & 7.1  & 5.3  & 4.8  & 6.0  & 5.8  & 3.8  & 5.5  & 10.1 & 29.5  & 76.2 & 92.6 & 99.7  & 74.3 & 95.3 & 100.0  \\\hline
\multirow{4}{*}{Variance adjusted
$CD^\ast$} & 100  & 5.5  & 4.4  & 5.7  & 8.4  & 6.5  & 6.6  & 6.9  & 7.1  & 7.0  & 40.4 & 63.3 & 90.4  & 67.9 & 90.3 & 99.9  & 72.7 & 92.0 & 99.9   \\
& 200  & 5.2  & 6.0  & 5.8  & 6.0  & 6.9  & 5.3  & 6.0  & 6.9  & 5.3  & 41.4 & 64.5 & 92.4  & 72.1 & 92.6 & 100.0 & 74.6 & 91.9 & 100.0  \\
& 500  & 6.0  & 5.3  & 5.1  & 8.3  & 6.4  & 5.7  & 7.8  & 5.7  & 5.3  & 47.7 & 65.4 & 94.6  & 75.3 & 93.2 & 99.9  & 75.7 & 95.4 & 100.0  \\
& 1000 & 9.1  & 4.3  & 4.4  & 7.5  & 6.0  & 5.6  & 6.2  & 5.8  & 4.1  & 51.2 & 66.5 & 93.7  & 78.0 & 94.8 & 99.9  & 75.5 & 95.7 & 100.0  \\\hline
\multirow{4}{*}{Variance adjusted $CD_{W+}$}                  & 100  & 11.6 & 15.2 & 14.8 & 12.5 & 15.7 & 15.5 & 14.5 & 14.7 & 14.7 & 16.4 & 30.5 & 80.4  & 17.6 & 31.3 & 84.5  & 18.4 & 31.1 & 85.0   \\
& 200  & 19.3 & 22.3 & 19.7 & 18.3 & 21.9 & 20.8 & 15.3 & 23.8 & 22.6 & 21.8 & 44.1 & 97.3  & 24.0 & 46.0 & 97.6  & 20.4 & 48.8 & 97.0   \\
& 500  & 23.8 & 46.9 & 50.5 & 25.7 & 47.0 & 46.7 & 26.9 & 50.6 & 52.3 & 28.7 & 71.3 & 100.0 & 30.0 & 70.3 & 100.0 & 31.8 & 75.9 & 100.0  \\
& 1000 & 30.4 & 75.9 & 76.0 & 31.2 & 74.4 & 83.5 & 32.0 & 74.4 & 83.8 & 32.8 & 91.6 & 100.0 & 33.2 & 90.0 & 100.0 & 36.2 & 88.4 & 100.0  \\\hline
\end{tabular}
\end{center}
\par
\textit{Notes}: The DGP is given by (\ref{dgpy}) with $\beta_{i1}$ and $\beta_{i2}$ both generated from normal distribution, and contains a single latent factor with different factor strengths, $\alpha=1,$ $2/3$, and $1/2$. $\lambda$ denotes the spatial autocorrelation coefficient of the error term defined in (\ref{error}). $m_0$ is the true number of factors and $\hat{m}$ is the number of selected PCs used to compute the different CD statistics.
$CD$ denotes the
standard test of error cross-sectional dependence defined by (\ref{cd}), $CD^\ast$ is the bias-corrected version defined by (\ref{CD*a}), and $CD_{W+}$ is the power-enhanced randomized version defined by (\ref{CDW+}).
\end{sidewaystable}

\begin{sidewaystable}
\scriptsize
\caption{Size and power of variance adjusted tests of error cross-sectional dependence for the panel regression model with two latent factors $(m_0=2)$ and serially correlated Gaussian errors}
\label{table.v.n.x2f}
\begin{center}
\begin{tabular}{cr|rrr|rrr|rrr|rrr|rrr|rrr}
$\hat{m}=2$& \multicolumn{19}{c}{}\\
\hline\hline
\multicolumn{1}{l}{}                               &      & \multicolumn{9}{c|}{Size ($H_o:\lambda=0$)}                                                                               & \multicolumn{9}{c}{Power ($H_1:\lambda=0.25$)}                                                                                \\\cline{3-20}
\multicolumn{1}{l}{}                           &    &    \multicolumn{3}{c|}{$\alpha_1=1,\alpha_2=1$} & \multicolumn{3}{c|}{$\alpha_1=1,\alpha_2=2/3$} & \multicolumn{3}{c|}{$\alpha_1=2/3,\alpha_2=1/2$} & \multicolumn{3}{c|}{$\alpha_1=1,\alpha_2=1$} & \multicolumn{3}{c|}{$\alpha_1=1,\alpha_2=2/3$} & \multicolumn{3}{c}{$\alpha_1=2/3,\alpha_2=1/2$}  \\\cline{3-20}
\multicolumn{1}{c}{Tests}                           & $n\setminus T$     & 100       & 200      & 500       & 100        & 200       & 500       & 100        & 200        & 500      & 100       & 200       & 500      & 100        & 200       & 500       & 100        & 200       & 500       \\\hline
\multirow{4}{*}{Variance   adjusted $CD$}      & 100  & 98.6 & 100.0 & 100.0 & 89.0 & 98.6 & 100.0 & 6.3  & 8.1  & 18.1 & 93.2 & 99.5  & 99.9  & 68.4 & 90.3 & 98.6  & 54.7 & 65.3 & 76.1  \\
& 200  & 99.6 & 100.0 & 100.0 & 91.0 & 99.5 & 100.0 & 8.6  & 6.2  & 9.9  & 94.6 & 100.0 & 100.0 & 68.9 & 92.8 & 99.6  & 67.1 & 79.0 & 95.2  \\
& 500  & 99.2 & 100.0 & 100.0 & 88.3 & 99.9 & 100.0 & 7.5  & 6.1  & 6.2  & 95.3 & 100.0 & 100.0 & 61.8 & 95.7 & 100.0 & 75.1 & 91.6 & 98.9  \\
& 1000 & 97.4 & 100.0 & 100.0 & 84.5 & 99.8 & 100.0 & 11.3 & 7.3  & 3.5  & 91.2 & 100.0 & 100.0 & 55.5 & 96.1 & 100.0 & 81.1 & 90.8 & 100.0 \\\hline
\multirow{4}{*}{Variance adjusted   $CD^\ast$} & 100  & 4.4  & 4.4   & 5.6   & 6.4  & 4.9  & 4.2   & 9.0  & 9.0  & 6.0  & 18.0 & 26.0  & 37.6  & 26.4 & 34.6 & 57.6  & 71.8 & 90.2 & 99.4  \\
& 200  & 6.0  & 3.2   & 4.8   & 6.2  & 5.6  & 4.3   & 10.3 & 7.2  & 6.0  & 19.1 & 22.0  & 43.8  & 27.6 & 37.3 & 60.9  & 75.7 & 92.6 & 99.8  \\
& 500  & 8.4  & 5.2   & 4.2   & 7.6  & 5.4  & 5.2   & 9.0  & 8.6  & 6.0  & 25.2 & 26.3  & 44.3  & 34.0 & 42.3 & 65.2  & 78.8 & 95.5 & 99.9  \\
& 1000 & 15.0 & 4.8   & 5.6   & 13.6 & 6.6  & 5.7   & 12.5 & 8.9  & 5.3  & 35.6 & 27.4  & 45.4  & 40.3 & 43.5 & 68.0  & 82.9 & 92.0 & 100.0 \\\hline
\multirow{4}{*}{Variance adjusted $CD_{W+}$}   & 100  & 14.6 & 14.6  & 15.6  & 15.7 & 15.8 & 16.6  & 17.3 & 19.2 & 22.8 & 19.0 & 36.1  & 86.3  & 20.6 & 36.1 & 88.5  & 23.9 & 43.0 & 95.5  \\
& 200  & 19.2 & 23.6  & 23.8  & 18.9 & 25.1 & 22.8  & 23.8 & 28.6 & 25.9 & 24.4 & 50.3  & 98.3  & 26.0 & 50.5 & 98.8  & 30.3 & 56.6 & 99.6  \\
& 500  & 28.6 & 52.8  & 48.2  & 26.8 & 50.8 & 48.7  & 38.2 & 53.3 & 50.1 & 33.2 & 76.8  & 100.0 & 32.8 & 75.0 & 100.0 & 44.1 & 79.6 & 100.0 \\
& 1000 & 36.4 & 76.4  & 84.0  & 33.0 & 74.3 & 84.7  & 49.0 & 81.4 & 83.9 & 40.0 & 89.9  & 100.0 & 36.8 & 89.0 & 100.0 & 51.2 & 84.0 & 100.0 \\\hline
\multicolumn{20}{c}{}\\
$\hat{m}=4$& \multicolumn{19}{c}{}\\
\hline\hline
\multicolumn{1}{l}{}                               &      & \multicolumn{9}{c|}{Size ($H_o:\lambda=0$)}                                                                               & \multicolumn{9}{c}{Power ($H_1:\lambda=0.25$)}                                                                                \\\cline{3-20}
\multicolumn{1}{l}{}                           &    &    \multicolumn{3}{c|}{$\alpha_1=1,\alpha_2=1$} & \multicolumn{3}{c|}{$\alpha_1=1,\alpha_2=2/3$} & \multicolumn{3}{c|}{$\alpha_1=2/3,\alpha_2=1/2$} & \multicolumn{3}{c|}{$\alpha_1=1,\alpha_2=1$} & \multicolumn{3}{c|}{$\alpha_1=1,\alpha_2=2/3$} & \multicolumn{3}{c}{$\alpha_1=2/3,\alpha_2=1/2$}  \\\cline{3-20}
\multicolumn{1}{c}{Tests}                           & $n\setminus T$     & 100       & 200      & 500       & 100        & 200       & 500       & 100        & 200        & 500      & 100       & 200       & 500      & 100        & 200       & 500       & 100        & 200       & 500       \\\hline
\multirow{4}{*}{Variance   adjusted $CD$}      & 100  & 98.6 & 100.0 & 100.0 & 90.3 & 98.7 & 100.0 & 6.8  & 8.2  & 18.0 & 94.3 & 99.5  & 99.9  & 73.5 & 93.0 & 99.0  & 39.4 & 44.4 & 50.2  \\
& 200  & 99.0 & 100.0 & 100.0 & 92.2 & 99.6 & 100.0 & 5.4  & 4.7  & 12.2 & 96.5 & 100.0 & 100.0 & 72.0 & 94.7 & 99.6  & 58.0 & 73.5 & 85.7  \\
& 500  & 99.6 & 100.0 & 100.0 & 89.7 & 99.8 & 100.0 & 6.1  & 5.7  & 5.7  & 94.8 & 100.0 & 100.0 & 64.5 & 96.3 & 100.0 & 72.8 & 88.6 & 99.2  \\
& 1000 & 96.6 & 100.0 & 100.0 & 84.1 & 99.8 & 100.0 & 10.0 & 7.3  & 3.6  & 91.2 & 100.0 & 100.0 & 55.3 & 96.7 & 100.0 & 78.0 & 88.9 & 100.0 \\\hline
\multirow{4}{*}{Variance adjusted   $CD^\ast$} & 100  & 6.0  & 5.8   & 8.0   & 6.6  & 7.4  & 8.0   & 9.9  & 9.5  & 7.8  & 20.0 & 30.3  & 49.9  & 27.5 & 40.2 & 65.4  & 67.9 & 88.3 & 99.3  \\
& 200  & 6.6  & 3.6   & 5.2   & 7.4  & 6.5  & 5.6   & 7.8  & 7.5  & 6.3  & 19.8 & 24.2  & 49.0  & 27.8 & 39.3 & 67.6  & 72.5 & 90.8 & 99.8  \\
& 500  & 9.4  & 5.6   & 4.4   & 7.8  & 5.5  & 5.1   & 7.7  & 7.6  & 5.8  & 26.2 & 27.6  & 46.6  & 33.3 & 42.7 & 67.5  & 78.3 & 93.7 & 99.9  \\
& 1000 & 16.0 & 4.6   & 5.4   & 12.4 & 6.5  & 5.4   & 10.2 & 9.3  & 5.5  & 35.5 & 27.9  & 47.8  & 42.1 & 42.8 & 68.7  & 80.1 & 91.0 & 100.0 \\\hline
\multirow{4}{*}{Variance adjusted $CD_{W+}$}   & 100  & 13.8 & 17.2  & 31.8  & 10.8 & 16.0 & 30.2  & 13.5 & 17.9 & 23.8 & 16.2 & 26.3  & 81.2  & 14.2 & 26.9 & 78.1  & 16.8 & 31.6 & 78.4  \\
& 200  & 13.2 & 22.6  & 23.8  & 15.4 & 18.4 & 24.6  & 18.6 & 24.0 & 22.9 & 19.5 & 40.6  & 94.6  & 20.2 & 37.9 & 93.7  & 22.4 & 44.9 & 95.4  \\
& 500  & 27.0 & 46.8  & 46.8  & 24.5 & 46.3 & 49.6  & 28.5 & 50.4 & 49.5 & 30.3 & 70.1  & 99.9  & 27.3 & 69.5 & 100.0 & 32.4 & 72.9 & 100.0 \\
& 1000 & 32.6 & 75.2  & 82.4  & 33.3 & 73.5 & 81.0  & 38.5 & 78.9 & 82.4 & 36.9 & 88.4  & 100.0 & 35.8 & 87.2 & 100.0 & 44.7 & 79.9 & 100.0 \\\hline
\end{tabular}
\end{center}
\par
\textit{Notes}: The DGP is given by (\ref{dgpy}) with $\beta_{i1}$ and $\beta_{i2}$ both generated from normal distribution, and contains two latent factors with different factor strengths, $(\alpha_{1},\alpha_{2})=(1,1)$, $(1,2/3)$, and $(2/3,1/2)$. $\lambda$ denotes the spatial autocorrelation coefficient of the error term defined in (\ref{error}). $m_0$ is the true number of factors and $\hat{m}$ is the number of selected PCs used to compute the different CD statistics.
$CD$ denotes the
standard test of error cross-sectional dependence defined by (\ref{cd}), $CD^\ast$ is the bias-corrected version defined by (\ref{CD*a}), and $CD_{W+}$ is the power-enhanced randomized version defined by (\ref{CDW+}).
\end{sidewaystable}


\begin{sidewaystable}
\scriptsize
\caption{Size and power of variance adjusted tests of error cross-sectional dependence for the latent factor model with one factor $(m_0=1)$ and serially correlated non-Gaussian errors}
\label{table.v.chi2.1f}
\begin{center}
\begin{tabular}{cr|rrr|rrr|rrr|rrr|rrr|rrr}
$\hat{m}=1$& \multicolumn{19}{c}{}\\
\hline\hline
\multicolumn{1}{l}{}                               &      & \multicolumn{9}{c|}{Size ($H_o:\lambda=0$)}                                                                               & \multicolumn{9}{c}{Power ($H_1:\lambda=0.25$)}                                                                                \\\cline{3-20}
\multicolumn{1}{l}{}                               &      & \multicolumn{3}{c|}{$\alpha=1$} & \multicolumn{3}{c|}{$\alpha=2/3$} & \multicolumn{3}{c|}{$\alpha=1/2$} & \multicolumn{3}{c|}{$\alpha=1$} & \multicolumn{3}{c|}{$\alpha=2/3$} & \multicolumn{3}{c}{$\alpha=1/2$} \\\cline{3-20}
\multicolumn{1}{c}{Tests}                           & $n\setminus T$     & 100       & 200      & 500       & 100        & 200       & 500       & 100        & 200        & 500      & 100       & 200       & 500      & 100        & 200       & 500       & 100        & 200       & 500       \\\hline
\multirow{4}{*}{Variance adjusted $CD$}                       & 100  & 41.6  & 70.9  & 93.4 & 4.6  & 7.0   & 10.6 & 6.4  & 5.8   & 6.3  & 14.1  & 26.1  & 41.6  & 59.7  & 76.6  & 94.1  & 66.4  & 87.0  & 97.4   \\
& 200  & 38.4  & 73.2  & 96.2 & 6.0  & 5.7   & 7.1  & 6.3  & 5.6   & 5.2  & 9.3   & 16.4  & 34.6  & 67.9  & 85.3  & 98.8  & 72.2  & 91.8  & 99.8   \\
& 500  & 36.3  & 76.5  & 98.8 & 7.2  & 6.4   & 6.3  & 7.3  & 7.2   & 5.5  & 5.4   & 12.1  & 31.0  & 74.7  & 89.6  & 100.0 & 76.0  & 93.8  & 100.0  \\
& 1000 & 29.4  & 75.6  & 99.5 & 8.9  & 6.6   & 4.5  & 6.8  & 6.7   & 4.0  & 5.3   & 9.5   & 26.8  & 78.2  & 92.5  & 99.9  & 78.6  & 91.8  & 100.0  \\\hline
\multirow{4}{*}{Variance adjusted $CD^\ast$} & 100  & 5.4   & 5.0   & 3.3  & 5.8  & 5.7   & 4.2  & 6.7  & 5.5   & 5.3  & 41.0  & 61.3  & 91.7  & 70.4  & 89.8  & 99.9  & 72.1  & 93.1  & 99.9   \\
& 200  & 4.5   & 6.2   & 6.2  & 6.9  & 6.6   & 5.7  & 6.6  & 6.3   & 6.7  & 42.6  & 66.0  & 92.9  & 74.7  & 92.3  & 99.8  & 74.2  & 94.3  & 99.9   \\
& 500  & 5.1   & 5.2   & 4.7  & 8.4  & 6.8   & 4.6  & 7.5  & 7.2   & 5.6  & 44.1  & 66.1  & 93.8  & 76.8  & 93.1  & 100.0 & 76.9  & 94.1  & 100.0  \\
& 1000 & 7.6   & 5.9   & 4.0  & 9.3  & 6.9   & 4.1  & 6.8  & 6.8   & 4.2  & 48.9  & 67.9  & 93.8  & 79.5  & 94.8  & 100.0 & 79.1  & 92.0  & 100.0  \\\hline
\multirow{4}{*}{Variance adjusted $CD_{W+}$}                  & 100  & 25.3  & 19.5  & 14.1 & 30.1 & 23.4  & 15.6 & 28.2 & 23.0  & 20.1 & 37.7  & 47.5  & 89.8  & 42.2  & 55.2  & 94.5  & 42.6  & 58.4  & 96.2   \\
& 200  & 57.4  & 50.8  & 26.6 & 57.1 & 52.4  & 29.1 & 60.2 & 56.4  & 30.1 & 66.3  & 81.6  & 99.4  & 64.6  & 81.2  & 99.9  & 69.7  & 85.2  & 99.6   \\
& 500  & 95.6  & 96.5  & 75.9 & 95.0 & 95.3  & 77.8 & 95.7 & 96.3  & 79.8 & 96.4  & 99.5  & 100.0 & 96.4  & 99.5  & 100.0 & 97.4  & 99.9  & 100.0  \\
& 1000 & 100.0 & 100.0 & 99.3 & 99.9 & 100.0 & 99.5 & 99.9 & 100.0 & 99.6 & 100.0 & 100.0 & 100.0 & 100.0 & 100.0 & 100.0 & 100.0 & 100.0 & 100.0
\\\hline
\multicolumn{20}{c}{}\\
$\hat{m}=2$& \multicolumn{19}{c}{}\\
\hline\hline
\multicolumn{1}{l}{}                               &      & \multicolumn{9}{c|}{Size ($H_o:\lambda=0$)}                                                                               & \multicolumn{9}{c}{Power ($H_1:\lambda=0.25$)}                                                                                \\\cline{3-20}
\multicolumn{1}{l}{}                               &      & \multicolumn{3}{c|}{$\alpha=1$} & \multicolumn{3}{c|}{$\alpha=2/3$} & \multicolumn{3}{c|}{$\alpha=1/2$} & \multicolumn{3}{c|}{$\alpha=1$} & \multicolumn{3}{c|}{$\alpha=2/3$} & \multicolumn{3}{c}{$\alpha=1/2$} \\\cline{3-20}
\multicolumn{1}{c}{Tests}                           & $n\setminus T$     & 100       & 200      & 500       & 100        & 200       & 500       & 100        & 200        & 500      & 100       & 200       & 500      & 100        & 200       & 500       & 100        & 200       & 500       \\\hline
\multirow{4}{*}{Variance adjusted $CD$}                         & 100  & 42.9 & 71.5 & 93.4 & 6.0   & 6.4   & 10.7 & 5.3  & 5.2   & 7.2  & 14.1 & 29.8  & 46.6  & 53.4 & 66.7  & 82.2  & 60.4 & 78.0  & 88.5   \\
& 200  & 38.2 & 74.2 & 96.6 & 6.3   & 5.0   & 7.9  & 6.1  & 6.0   & 6.2  & 9.0  & 18.1  & 40.0  & 63.9 & 81.4  & 97.0  & 69.5 & 90.2  & 99.4   \\
& 500  & 35.9 & 76.2 & 98.8 & 7.6   & 6.0   & 6.3  & 7.9  & 7.3   & 6.0  & 5.4  & 11.7  & 33.6  & 73.0 & 88.3  & 99.8  & 75.8 & 92.7  & 100.0  \\
& 1000 & 28.4 & 75.2 & 99.4 & 9.3   & 6.4   & 4.2  & 7.1  & 6.8   & 4.1  & 4.9  & 9.1   & 27.4  & 77.6 & 91.8  & 99.9  & 76.9 & 91.6  & 100.0  \\\hline
\multirow{4}{*}{Variance adjusted $CD^\ast$}  & 100  & 5.5  & 5.4  & 4.2  & 7.2   & 6.2   & 5.2  & 6.5  & 6.0   & 6.4  & 40.5 & 62.9  & 92.4  & 69.9 & 89.7  & 99.7  & 72.1 & 93.3  & 99.9   \\
& 200  & 4.7  & 6.3  & 6.2  & 7.0   & 6.4   & 5.6  & 6.9  & 7.1   & 7.6  & 42.8 & 65.6  & 92.7  & 75.5 & 91.8  & 99.6  & 73.7 & 93.6  & 99.8   \\
& 500  & 5.6  & 5.2  & 5.2  & 9.2   & 7.1   & 4.8  & 8.3  & 7.4   & 5.7  & 45.0 & 65.7  & 94.3  & 77.9 & 92.3  & 100.0 & 76.7 & 93.8  & 100.0  \\
& 1000 & 8.9  & 6.0  & 4.5  & 9.9   & 7.0   & 4.3  & 7.2  & 6.8   & 4.3  & 50.2 & 67.6  & 93.5  & 80.5 & 93.9  & 100.0 & 77.4 & 91.8  & 100.0  \\\hline
\multirow{4}{*}{Variance adjusted $CD_{W+}$}                  & 100  & 21.2 & 17.9 & 17.3 & 20.9  & 21.0  & 16.2 & 25.2 & 21.1  & 19.8 & 30.0 & 39.0  & 81.3  & 31.8 & 45.3  & 84.7  & 35.0 & 44.4  & 88.8   \\
& 200  & 47.0 & 43.0 & 26.0 & 48.4  & 44.3  & 27.7 & 48.3 & 46.9  & 29.6 & 56.7 & 72.8  & 98.5  & 57.1 & 73.2  & 99.2  & 58.2 & 75.6  & 98.9   \\
& 500  & 93.1 & 92.9 & 75.0 & 93.0  & 92.6  & 72.4 & 94.1 & 94.5  & 75.2 & 95.0 & 98.9  & 100.0 & 95.1 & 98.6  & 100.0 & 94.9 & 99.2  & 100.0  \\
& 1000 & 99.8 & 99.9 & 99.2 & 100.0 & 100.0 & 99.1 & 99.9 & 100.0 & 99.6 & 99.9 & 100.0 & 100.0 & 99.9 & 100.0 & 100.0 & 99.9 & 100.0 & 100.0   \\\hline
\end{tabular}
\end{center}
\par
\textit{Notes}: The DGP is given by (\ref{dgpy}) with $\beta_{i1}=\beta_{i2}=0$ and contains a single latent factor with different factor strengths, $\alpha=1,$ $2/3$, and $1/2$. $\lambda$ denotes the spatial autocorrelation coefficient of the error term defined in (\ref{error}). $m_0$ is the true number of factors and $\hat{m}$ is the number of selected PCs used to compute the different CD statistics.
$CD$ denotes the
standard test of error cross-sectional dependence defined by (\ref{cd}), $CD^\ast$ is the bias-corrected version defined by (\ref{CD*a}), and $CD_{W+}$ is the power-enhanced randomized version defined by (\ref{CDW+}).
\end{sidewaystable}

\begin{sidewaystable}
\scriptsize
\caption{Size and power of variance adjusted tests of error cross-sectional dependence for the latent factor model with two factors $(m_0=2)$ and serially correlated non-Gaussian errors}
\label{table.v.chi2.2f}
\begin{center}
\begin{tabular}{cr|rrr|rrr|rrr|rrr|rrr|rrr}
$\hat{m}=2$& \multicolumn{19}{c}{}\\
\hline\hline
\multicolumn{1}{l}{}                               &      & \multicolumn{9}{c|}{Size ($H_o:\lambda=0$)}                                                                               & \multicolumn{9}{c}{Power ($H_1:\lambda=0.25$)}                                                                                \\\cline{3-20}
\multicolumn{1}{l}{}                           &    &    \multicolumn{3}{c|}{$\alpha_1=1,\alpha_2=1$} & \multicolumn{3}{c|}{$\alpha_1=1,\alpha_2=2/3$} & \multicolumn{3}{c|}{$\alpha_1=2/3,\alpha_2=1/2$} & \multicolumn{3}{c|}{$\alpha_1=1,\alpha_2=1$} & \multicolumn{3}{c|}{$\alpha_1=1,\alpha_2=2/3$} & \multicolumn{3}{c}{$\alpha_1=2/3,\alpha_2=1/2$}   \\\cline{3-20}
\multicolumn{1}{c}{Tests}                           & $n\setminus T$     & 100       & 200      & 500       & 100        & 200       & 500       & 100        & 200        & 500      & 100       & 200       & 500      & 100        & 200       & 500       & 100        & 200       & 500       \\\hline
\multirow{4}{*}{Variance adjusted   $CD$}    & 100  & 98.6 & 100.0 & 100.0 & 89.0 & 98.6 & 100.0 & 6.3  & 8.1  & 18.1 & 93.2 & 99.5  & 99.9  & 68.4 & 90.3 & 98.6  & 54.7 & 65.3 & 76.1  \\
& 200  & 99.6 & 100.0 & 100.0 & 91.0 & 99.5 & 100.0 & 8.6  & 6.2  & 9.9  & 94.6 & 100.0 & 100.0 & 68.9 & 92.8 & 99.6  & 67.1 & 79.0 & 95.2  \\
& 500  & 99.2 & 100.0 & 100.0 & 88.3 & 99.9 & 100.0 & 7.5  & 6.1  & 6.2  & 95.3 & 100.0 & 100.0 & 61.8 & 95.7 & 100.0 & 75.1 & 91.6 & 98.9  \\
& 1000 & 97.4 & 100.0 & 100.0 & 84.5 & 99.8 & 100.0 & 11.3 & 7.3  & 3.5  & 91.2 & 100.0 & 100.0 & 55.5 & 96.1 & 100.0 & 81.1 & 90.8 & 100.0 \\\hline
\multirow{4}{*}{Variance adjusted $CD^\ast$} & 100  & 4.4  & 4.4   & 5.6   & 6.4  & 4.9  & 4.2   & 9.0  & 9.0  & 6.0  & 18.0 & 26.0  & 37.6  & 26.4 & 34.6 & 57.6  & 71.8 & 90.2 & 99.4  \\
& 200  & 6.0  & 3.2   & 4.8   & 6.2  & 5.6  & 4.3   & 10.3 & 7.2  & 6.0  & 19.1 & 22.0  & 43.8  & 27.6 & 37.3 & 60.9  & 75.7 & 92.6 & 99.8  \\
& 500  & 8.4  & 5.2   & 4.2   & 7.6  & 5.4  & 5.2   & 9.0  & 8.6  & 6.0  & 25.2 & 26.3  & 44.3  & 34.0 & 42.3 & 65.2  & 78.8 & 95.5 & 99.9  \\
& 1000 & 15.0 & 4.8   & 5.6   & 13.6 & 6.6  & 5.7   & 12.5 & 8.9  & 5.3  & 35.6 & 27.4  & 45.4  & 40.3 & 43.5 & 68.0  & 82.9 & 92.0 & 100.0 \\\hline
\multirow{4}{*}{Variance adjusted $CD_{W+}$} & 100  & 14.6 & 14.6  & 15.6  & 15.7 & 15.8 & 16.6  & 17.3 & 19.2 & 22.8 & 19.0 & 36.1  & 86.3  & 20.6 & 36.1 & 88.5  & 23.9 & 43.0 & 95.5  \\
& 200  & 19.2 & 23.6  & 23.8  & 18.9 & 25.1 & 22.8  & 23.8 & 28.6 & 25.9 & 24.4 & 50.3  & 98.3  & 26.0 & 50.5 & 98.8  & 30.3 & 56.6 & 99.6  \\
& 500  & 28.6 & 52.8  & 48.2  & 26.8 & 50.8 & 48.7  & 38.2 & 53.3 & 50.1 & 33.2 & 76.8  & 100.0 & 32.8 & 75.0 & 100.0 & 44.1 & 79.6 & 100.0 \\
& 1000 & 36.4 & 76.4  & 84.0  & 33.0 & 74.3 & 84.7  & 49.0 & 81.4 & 83.9 & 40.0 & 89.9  & 100.0 & 36.8 & 89.0 & 100.0 & 51.2 & 84.0 & 100.0\\\hline
\multicolumn{20}{c}{}\\
$\hat{m}=4$& \multicolumn{19}{c}{}\\
\hline\hline
\multicolumn{1}{l}{}                               &      & \multicolumn{9}{c|}{Size ($H_o:\lambda=0$)}                                                                               & \multicolumn{9}{c}{Power ($H_1:\lambda=0.25$)}                                                                                \\\cline{3-20}
\multicolumn{1}{l}{}                           &    &    \multicolumn{3}{c|}{$\alpha_1=1,\alpha_2=1$} & \multicolumn{3}{c|}{$\alpha_1=1,\alpha_2=2/3$} & \multicolumn{3}{c|}{$\alpha_1=2/3,\alpha_2=1/2$} & \multicolumn{3}{c|}{$\alpha_1=1,\alpha_2=1$} & \multicolumn{3}{c|}{$\alpha_1=1,\alpha_2=2/3$} & \multicolumn{3}{c}{$\alpha_1=2/3,\alpha_2=1/2$}   \\\cline{3-20}
\multicolumn{1}{c}{Tests}                           & $n\setminus T$     & 100       & 200      & 500       & 100        & 200       & 500       & 100        & 200        & 500      & 100       & 200       & 500      & 100        & 200       & 500       & 100        & 200       & 500       \\\hline
\multirow{4}{*}{Variance adjusted   $CD$}    & 100  & 98.6 & 100.0 & 100.0 & 90.3 & 98.7 & 100.0 & 6.8  & 8.2  & 18.0 & 94.3 & 99.5  & 99.9  & 73.5 & 93.0 & 99.0  & 39.4 & 44.4 & 50.2  \\
& 200  & 99.0 & 100.0 & 100.0 & 92.2 & 99.6 & 100.0 & 5.4  & 4.7  & 12.2 & 96.5 & 100.0 & 100.0 & 72.0 & 94.7 & 99.6  & 58.0 & 73.5 & 85.7  \\
& 500  & 99.6 & 100.0 & 100.0 & 89.7 & 99.8 & 100.0 & 6.1  & 5.7  & 5.7  & 94.8 & 100.0 & 100.0 & 64.5 & 96.3 & 100.0 & 72.8 & 88.6 & 99.2  \\
& 1000 & 96.6 & 100.0 & 100.0 & 84.1 & 99.8 & 100.0 & 10.0 & 7.3  & 3.6  & 91.2 & 100.0 & 100.0 & 55.3 & 96.7 & 100.0 & 78.0 & 88.9 & 100.0 \\\hline
\multirow{4}{*}{Variance adjusted $CD^\ast$} & 100  & 6.0  & 5.8   & 8.0   & 6.6  & 7.4  & 8.0   & 9.9  & 9.5  & 7.8  & 20.0 & 30.3  & 49.9  & 27.5 & 40.2 & 65.4  & 67.9 & 88.3 & 99.3  \\
& 200  & 6.6  & 3.6   & 5.2   & 7.4  & 6.5  & 5.6   & 7.8  & 7.5  & 6.3  & 19.8 & 24.2  & 49.0  & 27.8 & 39.3 & 67.6  & 72.5 & 90.8 & 99.8  \\
& 500  & 9.4  & 5.6   & 4.4   & 7.8  & 5.5  & 5.1   & 7.7  & 7.6  & 5.8  & 26.2 & 27.6  & 46.6  & 33.3 & 42.7 & 67.5  & 78.3 & 93.7 & 99.9  \\
& 1000 & 16.0 & 4.6   & 5.4   & 12.4 & 6.5  & 5.4   & 10.2 & 9.3  & 5.5  & 35.5 & 27.9  & 46.5  & 42.1 & 42.8 & 68.7  & 80.1 & 91.0 & 100.0 \\\hline
\multirow{4}{*}{Variance adjusted $CD_{W+}$} & 100  & 13.8 & 17.2  & 31.8  & 10.8 & 16.0 & 30.2  & 13.5 & 17.9 & 23.8 & 16.2 & 26.3  & 81.2  & 14.2 & 26.9 & 78.1  & 16.8 & 31.6 & 78.4  \\
& 200  & 13.2 & 22.6  & 23.8  & 15.4 & 18.4 & 24.6  & 18.6 & 24.0 & 22.9 & 19.5 & 40.6  & 94.6  & 20.2 & 37.9 & 93.7  & 22.4 & 44.9 & 95.4  \\
& 500  & 27.0 & 46.8  & 46.8  & 24.5 & 46.3 & 49.6  & 28.5 & 50.4 & 49.5 & 30.3 & 70.1  & 99.9  & 27.3 & 69.5 & 100.0 & 32.4 & 72.9 & 100.0 \\
& 1000 & 32.6 & 75.2  & 82.4  & 33.3 & 73.5 & 81.0  & 38.5 & 78.9 & 82.4 & 36.9 & 88.4  & 100.0 & 35.8 & 87.2 & 100.0 & 44.7 & 79.9 & 100.0\\\hline
\end{tabular}
\end{center}
\par
\textit{Notes}: The DGP is given by (\ref{dgpy}) with $\beta_{i1}=\beta_{i2}=0$, and contains two latent factors with different factor strengths, $(\alpha_{1},\alpha_{2})=(1,1)$, $(1,2/3)$, and $(2/3,1/2)$. $\lambda$ denotes the spatial autocorrelation coefficient of the error term defined in (\ref{error}). $m_0$ is the true number of factors and $\hat{m}$ is the number of selected PCs used to compute the different CD statistics.
$CD$ denotes the
standard test of error cross-sectional dependence defined by (\ref{cd}), $CD^\ast$ is the bias-corrected version defined by (\ref{CD*a}), and $CD_{W+}$ is the power-enhanced randomized version defined by (\ref{CDW+}).
\end{sidewaystable}

\begin{sidewaystable}
\scriptsize
\caption{Size and power of variance adjusted tests of error cross-sectional dependence for the panel regression model with one latent factor $(m_0=1)$ and serially correlated non-Gaussian errors}
\label{table.v.chi2.x1f}
\begin{center}
\begin{tabular}{cr|rrr|rrr|rrr|rrr|rrr|rrr}
$\hat{m}=1$& \multicolumn{19}{c}{}\\
\hline\hline
\multicolumn{1}{l}{}                               &      & \multicolumn{9}{c|}{Size ($H_o:\lambda=0$)}                                                                               & \multicolumn{9}{c}{Power ($H_1:\lambda=0.25$)}                                                                                \\\cline{3-20}
\multicolumn{1}{l}{}                               &      & \multicolumn{3}{c|}{$\alpha=1$} & \multicolumn{3}{c|}{$\alpha=2/3$} & \multicolumn{3}{c|}{$\alpha=1/2$} & \multicolumn{3}{c|}{$\alpha=1$} & \multicolumn{3}{c|}{$\alpha=2/3$} & \multicolumn{3}{c}{$\alpha=1/2$}\\\cline{3-20}
\multicolumn{1}{c}{Tests}                           & $n\setminus T$     & 100       & 200      & 500       & 100        & 200       & 500       & 100        & 200        & 500      & 100       & 200       & 500      & 100        & 200       & 500       & 100        & 200       & 500       \\\hline
\multirow{4}{*}{Variance adjusted $CD$}                    & 100  & 40.0 & 70.6  & 93.3  & 6.1  & 7.0   & 13.3 & 7.2  & 6.0   & 7.3  & 15.9 & 24.7  & 44.2  & 60.5 & 79.6 & 92.6  & 70.4 & 86.6  & 98.2   \\
& 200  & 38.4 & 71.7  & 96.5  & 6.4  & 5.3   & 7.9  & 6.2  & 5.1   & 5.0  & 7.4  & 16.5  & 36.7  & 69.3 & 85.4 & 99.1  & 73.8 & 92.3  & 99.9   \\
& 500  & 32.6 & 75.5  & 99.2  & 8.0  & 6.3   & 5.9  & 9.4  & 7.0   & 3.8  & 6.3  & 13.4  & 29.9  & 76.7 & 91.3 & 99.7  & 76.9 & 94.7  & 100.0  \\
& 1000 & 28.3 & 69.0  & 69.0  & 8.3  & 4.8   & 4.5  & 7.1  & 5.4   & 4.8  & 5.5  & 9.1   & 27.1  & 80.5 & 93.9 & 100.0 & 78.0 & 96.0  & 100.0  \\\hline
\multirow{4}{*}{Variance adjusted $CD^\ast$} & 100  & 5.6  & 5.8   & 4.2   & 7.6  & 6.5   & 5.5  & 8.0  & 7.0   & 6.4  & 42.8 & 62.4  & 89.9  & 72.2 & 91.0 & 99.9  & 75.9 & 93.5  & 100.0  \\
& 200  & 4.6  & 5.1   & 6.1   & 8.1  & 5.7   & 5.6  & 6.7  & 5.6   & 5.4  & 43.1 & 64.9  & 92.6  & 75.8 & 91.7 & 99.9  & 75.6 & 94.7  & 100.0  \\
& 500  & 5.8  & 5.2   & 5.4   & 8.5  & 7.2   & 4.7  & 9.7  & 7.3   & 3.6  & 46.2 & 67.6  & 93.3  & 79.2 & 94.3 & 100.0 & 77.5 & 95.4  & 100.0  \\
& 1000 & 7.2  & 5.9   & 5.9   & 9.2  & 5.5   & 5.1  & 7.2  & 5.6   & 5.9  & 51.0 & 65.7  & 95.4  & 82.3 & 96.4 & 100.0 & 78.4 & 96.2  & 100.0  \\\hline
\multirow{4}{*}{Variance adjusted $CD_{W+}$}                 & 100  & 22.5 & 17.1  & 13.5  & 24.5 & 22.2  & 19.1 & 21.7 & 21.9  & 17.8 & 33.0 & 43.8  & 90.2  & 34.4 & 52.4 & 95.8  & 35.7 & 51.9  & 96.4   \\
& 200  & 48.2 & 44.6  & 27.9  & 51.9 & 48.0  & 26.4 & 52.7 & 48.0  & 28.8 & 61.0 & 74.4  & 99.3  & 64.1 & 77.8 & 100.0 & 62.9 & 78.2  & 99.9   \\
& 500  & 93.1 & 93.5  & 73.9  & 91.7 & 94.7  & 76.6 & 91.7 & 94.2  & 77.4 & 94.1 & 98.5  & 100.0 & 93.9 & 99.2 & 100.0 & 94.3 & 99.1  & 100.0  \\
& 1000 & 98.1 & 100.0 & 100.0 & 99.7 & 100.0 & 99.5 & 99.9 & 100.0 & 99.5 & 99.9 & 100.0 & 100.0 & 99.9 & 99.9 & 100.0 & 99.7 & 100.0 & 100.0
\\\hline
\multicolumn{20}{c}{}\\
$\hat{m}=2$& \multicolumn{19}{c}{}\\
\hline\hline
\multicolumn{1}{l}{}                               &      & \multicolumn{9}{c|}{Size ($H_o:\lambda=0$)}                                                                               & \multicolumn{9}{c}{Power ($H_1:\lambda=0.25$)}                                                                                \\\cline{3-20}
\multicolumn{1}{l}{}                               &      & \multicolumn{3}{c|}{$\alpha=1$} & \multicolumn{3}{c|}{$\alpha=2/3$} & \multicolumn{3}{c|}{$\alpha=1/2$} & \multicolumn{3}{c|}{$\alpha=1$} & \multicolumn{3}{c|}{$\alpha=2/3$} & \multicolumn{3}{c}{$\alpha=1/2$} \\\cline{3-20}
\multicolumn{1}{c}{Tests}                           & $n\setminus T$     & 100       & 200      & 500       & 100        & 200       & 500       & 100        & 200        & 500      & 100       & 200       & 500      & 100        & 200       & 500       & 100        & 200       & 500       \\\hline
\multirow{4}{*}{Variance adjusted $CD$}                       & 100  & 41.2 & 70.6  & 93.1  & 4.4  & 5.5   & 13.6 & 6.8  & 5.6   & 7.3  & 15.5 & 27.2  & 50.2  & 51.8 & 69.9  & 79.2  & 57.7 & 77.0  & 86.9   \\
& 200  & 39.2 & 72.4  & 97.0  & 5.7  & 4.6   & 6.2  & 5.2  & 3.9   & 5.3  & 8.7  & 17.9  & 38.0  & 63.2 & 81.6  & 96.7  & 67.9 & 89.8  & 99.2   \\
& 500  & 32.0 & 75.6  & 99.3  & 6.7  & 6.1   & 5.9  & 7.4  & 6.4   & 3.3  & 6.3  & 12.8  & 31.1  & 71.9 & 89.6  & 99.6  & 72.3 & 94.1  & 100.0  \\
& 1000 & 28.9 & 69.1  & 69.1  & 7.2  & 5.0   & 4.5  & 7.0  & 6.1   & 5.2  & 5.2  & 9.5   & 28.7  & 76.6 & 94.7  & 99.9  & 73.2 & 95.3  & 100.0  \\\hline
\multirow{4}{*}{Variance adjusted $CD^\ast$} & 100  & 5.4  & 4.9   & 5.2   & 6.7  & 6.7   & 6.8  & 8.6  & 7.2   & 7.0  & 40.5 & 61.4  & 89.5  & 69.1 & 89.6  & 99.8  & 70.6 & 92.3  & 99.9   \\
& 200  & 4.6  & 4.2   & 6.7   & 6.8  & 4.2   & 5.6  & 6.3  & 4.8   & 6.3  & 41.4 & 64.8  & 92.7  & 72.8 & 91.5  & 99.9  & 73.0 & 94.4  & 100.0  \\
& 500  & 6.1  & 5.2   & 4.7   & 7.6  & 6.9   & 4.9  & 7.8  & 6.8   & 4.0  & 48.5 & 66.5  & 93.6  & 75.0 & 92.6  & 100.0 & 73.3 & 95.3  & 100.0  \\
& 1000 & 8.1  & 6.3   & 6.3   & 8.0  & 5.3   & 4.4  & 7.1  & 6.0   & 5.5  & 51.0 & 65.4  & 95.2  & 79.4 & 95.5  & 100.0 & 78.4 & 96.2  & 100.0  \\\hline
\multirow{4}{*}{Variance adjusted $CD_{W+}$}                  & 100  & 21.0 & 16.9  & 14.7  & 17.4 & 18.9  & 16.2 & 19.1 & 17.8  & 17.8 & 25.6 & 36.7  & 80.4  & 26.0 & 41.3  & 86.5  & 28.8 & 40.3  & 86.3   \\
& 200  & 42.2 & 40.4  & 26.5  & 45.7 & 40.8  & 25.7 & 44.4 & 39.2  & 25.1 & 50.7 & 64.5  & 98.2  & 55.0 & 68.7  & 99.2  & 53.8 & 69.4  & 98.5   \\
& 500  & 88.5 & 91.7  & 72.7  & 89.2 & 92.3  & 74.3 & 86.6 & 92.0  & 73.7 & 89.8 & 98.2  & 100.0 & 91.2 & 97.8  & 100.0 & 89.9 & 98.3  & 100.0  \\
& 1000 & 98.0 & 100.0 & 100.0 & 99.7 & 100.0 & 99.3 & 99.4 & 100.0 & 99.0 & 99.3 & 100.0 & 100.0 & 99.4 & 100.0 & 100.0 & 99.7 & 100.0 & 100.0    \\\hline
\end{tabular}
\end{center}
\par
\textit{Notes}: The DGP is given by (\ref{dgpy}) with $\beta_{i1}$ and $\beta_{i2}$ both generated from normal distribution, and contains a single latent factor with different factor strengths, $\alpha=1,$ $2/3$, and $1/2$. $\lambda$ denotes the spatial autocorrelation coefficient of the error term defined in (\ref{error}). $m_0$ is the true number of factors and $\hat{m}$ is the number of selected PCs used to compute the different CD statistics.
$CD$ denotes the
standard test of error cross-sectional dependence defined by (\ref{cd}), $CD^\ast$ is the bias-corrected version defined by (\ref{CD*a}), and $CD_{W+}$ is the power-enhanced randomized version defined by (\ref{CDW+}).
\end{sidewaystable}

\begin{sidewaystable}
\scriptsize
\caption{Size and power of variance adjusted tests of error cross-sectional dependence for the panel regression model with two latent factors $(m_0=2)$ and serially correlated non-Gaussian errors}
\label{table.v.chi2.x2f}
\begin{center}
\begin{tabular}{cr|rrr|rrr|rrr|rrr|rrr|rrr}
$\hat{m}=2$& \multicolumn{19}{c}{}\\
\hline\hline
\multicolumn{1}{l}{}                               &      & \multicolumn{9}{c|}{Size ($H_o:\lambda=0$)}                                                                               & \multicolumn{9}{c}{Power ($H_1:\lambda=0.25$)}                                                                                \\\cline{3-20}
\multicolumn{1}{l}{}                           &    &    \multicolumn{3}{c|}{$\alpha_1=1,\alpha_2=1$} & \multicolumn{3}{c|}{$\alpha_1=1,\alpha_2=2/3$} & \multicolumn{3}{c|}{$\alpha_1=2/3,\alpha_2=1/2$} & \multicolumn{3}{c|}{$\alpha_1=1,\alpha_2=1$} & \multicolumn{3}{c|}{$\alpha_1=1,\alpha_2=2/3$} & \multicolumn{3}{c}{$\alpha_1=2/3,\alpha_2=1/2$}  \\\cline{3-20}
\multicolumn{1}{c}{Tests}                           & $n\setminus T$     & 100       & 200      & 500       & 100        & 200       & 500       & 100        & 200        & 500      & 100       & 200       & 500      & 100        & 200       & 500       & 100        & 200       & 500       \\\hline
\multirow{4}{*}{Variance   adjusted $CD$}      & 100  & 97.9 & 100.0 & 100.0 & 87.6 & 99.0  & 100.0 & 8.2  & 6.7   & 20.8  & 91.5 & 99.2  & 100.0 & 69.1 & 90.7  & 98.4  & 53.1 & 65.7 & 73.2  \\
& 200  & 98.4 & 100.0 & 100.0 & 88.1 & 100.0 & 100.0 & 7.4  & 6.0   & 11.6  & 93.2 & 99.9  & 100.0 & 67.6 & 93.1  & 99.7  & 66.7 & 79.6 & 93.0  \\
& 500  & 98.4 & 100.0 & 100.0 & 88.2 & 100.0 & 100.0 & 10.7 & 7.5   & 5.6   & 93.9 & 100.0 & 100.0 & 62.3 & 95.8  & 100.0 & 75.1 & 88.9 & 98.2  \\
& 1000 & 98.1 & 100.0 & 100.0 & 82.3 & 100.0 & 100.0 & 11.0 & 8.7   & 5.0   & 92.7 & 100.0 & 100.0 & 54.3 & 96.1  & 100.0 & 75.8 & 88.7 & 100.0 \\\hline
\multirow{4}{*}{Variance adjusted   $CD^\ast$} & 100  & 6.3  & 4.5   & 5.2   & 6.5  & 5.3   & 4.3   & 9.8  & 7.7   & 6.7   & 16.6 & 21.9  & 36.8  & 24.3 & 35.3  & 54.4  & 71.6 & 91.0 & 99.4  \\
& 200  & 5.9  & 4.1   & 5.1   & 7.3  & 4.9   & 4.1   & 10.0 & 8.2   & 5.6   & 20.7 & 24.9  & 40.4  & 27.1 & 34.8  & 63.7  & 75.3 & 91.9 & 100.0 \\
& 500  & 8.9  & 6.1   & 4.4   & 9.0  & 5.4   & 5.1   & 11.9 & 9.6   & 5.2   & 25.7 & 26.8  & 44.5  & 32.6 & 38.9  & 65.7  & 79.0 & 92.7 & 99.0  \\
& 1000 & 14.2 & 4.9   & 5.4   & 13.5 & 6.9   & 5.6   & 12.0 & 10.5  & 6.2   & 33.5 & 29.1  & 44.1  & 41.2 & 42.0  & 66.4  & 78.3 & 91.2 & 100.0 \\\hline
\multirow{4}{*}{Variance adjusted $CD_{W+}$}   & 100  & 21.9 & 19.4  & 14.0  & 22.1 & 19.2  & 18.9  & 25.6 & 24.6  & 25.0  & 28.2 & 40.6  & 83.1  & 27.9 & 40.3  & 90.3  & 37.2 & 53.9 & 95.8  \\
& 200  & 45.7 & 43.7  & 24.3  & 46.9 & 43.3  & 28.2  & 54.7 & 50.6  & 30.1  & 54.7 & 72.3  & 98.5  & 52.7 & 74.1  & 99.2  & 64.8 & 79.8 & 99.7  \\
& 500  & 90.1 & 94.7  & 74.3  & 89.1 & 93.4  & 74.0  & 93.2 & 95.8  & 77.5  & 92.0 & 98.1  & 100.0 & 91.4 & 98.7  & 100.0 & 94.1 & 98.8 & 99.9  \\
& 1000 & 99.9 & 100.0 & 100.0 & 99.7 & 100.0 & 100.0 & 99.7 & 100.0 & 100.0 & 99.5 & 100.0 & 100.0 & 99.8 & 100.0 & 100.0 & 88.5 & 96.4 & 100.0
\\\hline
\multicolumn{20}{c}{}\\
$\hat{m}=4$& \multicolumn{19}{c}{}\\
\hline\hline
\multicolumn{1}{l}{}                               &      & \multicolumn{9}{c|}{Size ($H_o:\lambda=0$)}                                                                               & \multicolumn{9}{c}{Power ($H_1:\lambda=0.25$)}                                                                                \\\cline{3-20}
\multicolumn{1}{l}{}                           &    &    \multicolumn{3}{c|}{$\alpha_1=1,\alpha_2=1$} & \multicolumn{3}{c|}{$\alpha_1=1,\alpha_2=2/3$} & \multicolumn{3}{c|}{$\alpha_1=2/3,\alpha_2=1/2$} & \multicolumn{3}{c|}{$\alpha_1=1,\alpha_2=1$} & \multicolumn{3}{c|}{$\alpha_1=1,\alpha_2=2/3$} & \multicolumn{3}{c}{$\alpha_1=2/3,\alpha_2=1/2$}  \\\cline{3-20}
\multicolumn{1}{c}{Tests}                           & $n\setminus T$     & 100       & 200      & 500       & 100        & 200       & 500       & 100        & 200        & 500      & 100       & 200       & 500      & 100        & 200       & 500       & 100        & 200       & 500       \\\hline
\multirow{4}{*}{Variance   adjusted $CD$}      & 100  & 97.8 & 100.0 & 100.0 & 88.9 & 99.3  & 100.0 & 6.8  & 6.2   & 20.9  & 93.3 & 99.4  & 100.0 & 73.2 & 93.4  & 99.3  & 40.1 & 45.4 & 49.5  \\
& 200  & 98.6 & 100.0 & 100.0 & 89.2 & 100.0 & 100.0 & 6.8  & 5.6   & 11.3  & 94.8 & 99.9  & 100.0 & 70.5 & 94.8  & 99.5  & 57.3 & 71.1 & 87.2  \\
& 500  & 98.4 & 100.0 & 100.0 & 87.9 & 100.0 & 100.0 & 8.5  & 6.8   & 5.1   & 94.3 & 100.0 & 100.0 & 62.0 & 96.4  & 100.0 & 71.3 & 85.8 & 97.9  \\
& 1000 & 97.5 & 100.0 & 100.0 & 81.1 & 99.8  & 100.0 & 10.1 & 7.9   & 5.1   & 91.5 & 100.0 & 100.0 & 51.8 & 96.2  & 100.0 & 73.8 & 87.3 & 100.0 \\\hline
\multirow{4}{*}{Variance adjusted   $CD^\ast$} & 100  & 6.7  & 6.7   & 7.1   & 7.4  & 7.1   & 8.0   & 9.1  & 9.0   & 8.8   & 22.9 & 29.4  & 47.0  & 28.3 & 40.5  & 64.6  & 67.9 & 89.1 & 99.2  \\
& 200  & 7.7  & 5.7   & 5.7   & 6.9  & 6.1   & 5.2   & 9.1  & 7.8   & 4.9   & 22.3 & 28.0  & 46.0  & 27.8 & 37.7  & 66.7  & 72.5 & 91.3 & 99.9  \\
& 500  & 9.0  & 7.0   & 5.2   & 9.9  & 6.0   & 6.5   & 10.5 & 8.8   & 6.4   & 26.8 & 28.6  & 45.4  & 34.9 & 41.1  & 69.3  & 77.5 & 90.4 & 98.5  \\
& 1000 & 16.0 & 6.5   & 5.2   & 15.1 & 6.6   & 5.3   & 11.4 & 9.2   & 6.8   & 36.1 & 29.0  & 44.5  & 43.0 & 44.4  & 66.5  & 75.1 & 89.3 & 100.0 \\\hline
\multirow{4}{*}{Variance adjusted $CD_{W+}$}   & 100  & 18.6 & 17.3  & 33.9  & 14.4 & 16.6  & 28.9  & 16.7 & 18.8  & 24.6  & 23.1 & 31.5  & 82.3  & 19.6 & 31.6  & 81.7  & 21.6 & 36.5 & 80.3  \\
& 200  & 35.2 & 36.8  & 23.9  & 34.9 & 35.0  & 26.5  & 38.3 & 37.4  & 26.3  & 39.4 & 57.4  & 95.8  & 41.4 & 58.8  & 96.5  & 49.0 & 63.9 & 96.5  \\
& 500  & 83.6 & 88.1  & 69.0  & 84.6 & 89.7  & 70.5  & 84.8 & 92.4  & 71.6  & 87.3 & 96.6  & 100.0 & 86.6 & 97.3  & 100.0 & 88.8 & 96.8 & 99.5  \\
& 1000 & 99.4 & 99.9  & 100.0 & 99.6 & 100.0 & 100.0 & 99.4 & 100.0 & 100.0 & 99.3 & 100.0 & 100.0 & 99.2 & 100.0 & 100.0 & 87.8 & 96.4 & 100.0 \\\hline
\end{tabular}
\end{center}
\par
\textit{Notes}: The DGP is given by (\ref{dgpy}) with $\beta_{i1}$ and $\beta_{i2}$ both generated from normal distribution, and contains two latent factors with different factor strengths, $(\alpha_{1},\alpha_{2})=(1,1)$, $(1,2/3)$, and $(2/3,1/2)$. $\lambda$ denotes the spatial autocorrelation coefficient of the error term defined in (\ref{error}). $m_0$ is the true number of factors and $\hat{m}$ is the number of selected PCs used to compute the different CD statistics.
$CD$ denotes the
standard test of error cross-sectional dependence defined by (\ref{cd}), $CD^\ast$ is the bias-corrected version defined by (\ref{CD*a}), and $CD_{W+}$ is the power-enhanced randomized version defined by (\ref{CDW+}).
\end{sidewaystable}


\begin{sidewaystable}
\scriptsize
\caption{Size and power of ARDL adjusted tests of error cross-sectional dependence for the latent factor model with one factor $(m_0=1)$ and serially correlated Gaussian errors}
\label{table.ardl.n.1f}
\begin{center}
\begin{tabular}{cr|rrr|rrr|rrr|rrr|rrr|rrr}
$\hat{m}=2$& \multicolumn{19}{c}{}\\
\hline\hline
\multicolumn{1}{l}{}                               &      & \multicolumn{9}{c|}{Size ($H_o:\lambda=0$)}                                                                               & \multicolumn{9}{c}{Power ($H_1:\lambda=0.25$)}                                                                                \\\cline{3-20}
\multicolumn{1}{l}{}                               &      & \multicolumn{3}{c|}{$\alpha=1$} & \multicolumn{3}{c|}{$\alpha=2/3$} & \multicolumn{3}{c|}{$\alpha=1/2$} & \multicolumn{3}{c|}{$\alpha=1$} & \multicolumn{3}{c|}{$\alpha=2/3$} & \multicolumn{3}{c}{$\alpha=1/2$} \\\cline{3-20}
\multicolumn{1}{c}{Tests}                           & $n\setminus T$     & 100       & 200      & 500       & 100        & 200       & 500       & 100        & 200        & 500      & 100       & 200       & 500      & 100        & 200       & 500       & 100        & 200       & 500       \\\hline
\multirow{4}{*}{ARDL
adjusted $CD$}                        & 100  & 68.9 & 91.3 & 98.7  & 6.5 & 9.0 & 20.8 & 6.2 & 5.7 & 9.4 & 28.6 & 45.2 & 67.0 & 54.3 & 70.5 & 82.6  & 68.3 & 80.6 & 87.8   \\                                                               & 200  & 70.6 & 94.4 & 99.8  & 5.1 & 7.0 & 14.3 & 5.7 & 5.5 & 5.7 & 19.4 & 35.4 & 61.0 & 69.3 & 88.8 & 98.1  & 81.4 & 95.7 & 99.4   \\                                                               & 500  & 71.2 & 96.5 & 100.0 & 4.5 & 5.1 & 7.4  & 7.3 & 6.0 & 5.1 & 12.8 & 27.3 & 55.6 & 79.9 & 95.9 & 100.0 & 88.6 & 98.0 & 100.0  \\                                                               & 1000 & 73.0 & 97.2 & 100.0 & 5.9 & 5.4 & 5.6  & 7.2 & 5.1 & 5.2 & 10.1 & 22.6 & 52.1 & 81.7 & 96.7 & 100.0 & 87.4 & 99.2 & 100.0  \\\hline
\multirow{4}{*}{ARDL adjusted
$CD^\ast$} & 100  & 5.0  & 5.3  & 5.9   & 6.8 & 6.3 & 5.8  & 7.4 & 6.9 & 7.1 & 57.1 & 83.9 & 99.1 & 82.3 & 98.0 & 100.0 & 85.0 & 98.5 & 100.0  \\                                                               & 200  & 5.5  & 5.4  & 5.4   & 5.8 & 5.9 & 4.7  & 6.3 & 6.3 & 5.5 & 59.1 & 84.9 & 99.4 & 84.6 & 98.3 & 100.0 & 88.7 & 99.2 & 100.0  \\                                                               & 500  & 5.0  & 5.5  & 5.3   & 5.5 & 5.3 & 5.5  & 7.4 & 6.7 & 5.2 & 60.7 & 85.5 & 99.5 & 86.0 & 98.6 & 100.0 & 90.4 & 98.9 & 100.0  \\                                                               & 1000 & 5.4  & 4.5  & 4.6   & 6.1 & 6.1 & 4.9  & 7.7 & 5.3 & 5.1 & 61.0 & 85.6 & 99.8 & 85.1 & 98.3 & 100.0 & 88.6 & 99.4 & 100.0  \\\hline
\multirow{4}{*}{ARDL adjusted $CD_{W+}$}                  & 100  & 5.6  & 5.4  & 7.2   & 5.5 & 5.6 & 5.2  & 6.5 & 5.3 & 7.5 & 7.0  & 6.5  & 28.9 & 6.3  & 7.9  & 34.0  & 6.7  & 8.1  & 31.8   \\                                                               & 200  & 5.0  & 6.1  & 5.4   & 4.4 & 5.3 & 5.7  & 4.2 & 5.0 & 5.5 & 6.0  & 6.5  & 36.2 & 5.1  & 7.1  & 41.7  & 4.7  & 7.0  & 38.3   \\                                                               & 500  & 6.3  & 5.1  & 4.8   & 5.4 & 4.6 & 5.1  & 4.3 & 5.2 & 5.7 & 5.5  & 6.3  & 45.7 & 5.6  & 5.7  & 46.8  & 5.1  & 6.6  & 46.1   \\                                                               & 1000 & 4.7  & 6.1  & 5.0   & 5.6 & 4.6 & 4.3  & 4.3 & 5.5 & 5.5 & 4.4  & 5.7  & 47.0 & 5.9  & 5.6  & 45.7  & 4.7  & 5.6  & 47.3
\\\hline
\multicolumn{20}{c}{}\\
$\hat{m}=4$& \multicolumn{19}{c}{}\\
\hline\hline
\multicolumn{1}{l}{}                               &      & \multicolumn{9}{c|}{Size ($H_o:\lambda=0$)}                                                                               & \multicolumn{9}{c}{Power ($H_1:\lambda=0.25$)}                                                                                \\\cline{3-20}
\multicolumn{1}{l}{}                               &      & \multicolumn{3}{c|}{$\alpha=1$} & \multicolumn{3}{c|}{$\alpha=2/3$} & \multicolumn{3}{c|}{$\alpha=1/2$} & \multicolumn{3}{c|}{$\alpha=1$} & \multicolumn{3}{c|}{$\alpha=2/3$} & \multicolumn{3}{c}{$\alpha=1/2$}\\\cline{3-20}
\multicolumn{1}{c}{Tests}                           & $n\setminus T$     & 100       & 200      & 500       & 100        & 200       & 500       & 100        & 200        & 500      & 100       & 200       & 500      & 100        & 200       & 500       & 100        & 200       & 500       \\\hline
\multirow{4}{*}{ARDL
adjusted $CD$}                        & 100  & 68.9 & 91.6 & 98.6  & 6.2 & 9.2 & 22.6 & 6.9 & 5.8 & 10.2 & 33.7 & 56.9 & 78.1 & 38.4 & 47.3 & 62.0  & 49.3 & 58.9 & 69.0   \\                                                               & 200  & 71.5 & 94.0 & 99.7  & 5.4 & 7.1 & 13.9 & 5.5 & 5.4 & 6.0  & 22.3 & 42.7 & 69.3 & 58.9 & 79.1 & 91.5  & 72.2 & 90.4 & 95.5   \\                                                               & 500  & 70.8 & 96.9 & 100.0 & 5.0 & 5.8 & 7.6  & 7.3 & 5.9 & 4.8  & 14.7 & 30.8 & 60.9 & 75.9 & 94.7 & 99.9  & 84.6 & 97.6 & 100.0  \\                                                               & 1000 & 72.4 & 97.5 & 100.0 & 5.9 & 5.9 & 6.0  & 6.9 & 5.3 & 5.7  & 10.8 & 24.7 & 54.6 & 78.6 & 95.8 & 100.0 & 84.1 & 98.6 & 100.0  \\\hline
\multirow{4}{*}{ARDL adjusted
$CD^\ast$} & 100  & 7.0  & 7.1  & 12.0  & 7.9 & 8.1 & 10.3 & 8.9 & 9.7 & 12.0 & 58.0 & 84.7 & 99.5 & 80.3 & 97.2 & 100.0 & 80.8 & 98.1 & 100.0  \\                                                               & 200  & 6.4  & 5.8  & 6.6   & 6.4 & 6.5 & 5.7  & 6.7 & 6.8 & 6.8  & 57.5 & 84.2 & 99.3 & 82.1 & 97.8 & 100.0 & 85.8 & 98.8 & 100.0  \\                                                               & 500  & 5.4  & 5.9  & 5.2   & 5.3 & 5.3 & 5.5  & 7.9 & 6.7 & 5.2  & 58.6 & 85.2 & 99.6 & 85.0 & 98.4 & 100.0 & 88.1 & 98.4 & 100.0  \\                                                               & 1000 & 4.9  & 4.9  & 4.7   & 6.3 & 6.1 & 5.3  & 7.3 & 5.3 & 5.6  & 59.9 & 85.0 & 99.8 & 82.8 & 97.9 & 100.0 & 85.7 & 99.0 & 100.0  \\\hline
\multirow{4}{*}{ARDL adjusted $CD_{W+}$}                  & 100  & 5.9  & 5.9  & 8.2   & 5.5 & 6.1 & 7.4  & 5.7 & 6.2 & 7.8  & 6.8  & 7.0  & 20.5 & 5.8  & 7.4  & 18.9  & 5.6  & 8.0  & 18.6   \\                                                               & 200  & 5.7  & 5.1  & 5.2   & 4.8 & 5.1 & 4.8  & 5.0 & 5.7 & 5.1  & 5.9  & 6.1  & 20.2 & 6.0  & 5.1  & 22.1  & 5.1  & 5.5  & 19.8   \\                                                               & 500  & 4.7  & 4.5  & 5.7   & 5.7 & 5.4 & 4.7  & 5.4 & 5.9 & 5.9  & 5.1  & 4.9  & 31.8 & 5.8  & 5.1  & 33.2  & 5.6  & 6.1  & 33.7   \\                                                               & 1000 & 6.7  & 4.9  & 4.5   & 5.7 & 4.4 & 4.6  & 6.5 & 5.1 & 5.5  & 5.8  & 5.8  & 40.2 & 6.4  & 4.8  & 40.7  & 6.2  & 5.9  & 38.2 \\\hline
\end{tabular}
\end{center}
\par
\textit{Notes}: The DGP is given by (\ref{dgpy}) with $\beta_{i1}=\beta_{i2}=0$ and contains a single latent factor with different factor strengths, $\alpha=1,$ $2/3$, and $1/2$. $\lambda$ denotes the spatial autocorrelation coefficient of the error term defined in (\ref{error}). $m_0$ is the true number of factors and $\hat{m}$ is the number of selected PCs used to compute the different CD statistics.
$CD$ denotes the
standard test of error cross-sectional dependence defined by (\ref{cd}), $CD^\ast$ is the bias-corrected version defined by (\ref{CD*a}), and $CD_{W+}$ is the power-enhanced randomized version defined by (\ref{CDW+}).
\end{sidewaystable}

\begin{sidewaystable}
\scriptsize
\caption{Size and power of ARDL adjusted tests of error cross-sectional dependence for the latent factor model with two factors $(m_0=2)$ and serially correlated Gaussian errors}
\label{table.ardl.n.2f}
\begin{center}
\begin{tabular}{cr|rrr|rrr|rrr|rrr|rrr|rrr}
$\hat{m}=4$& \multicolumn{19}{c}{}\\
\hline\hline
\multicolumn{1}{l}{}                               &      & \multicolumn{9}{c|}{Size ($H_o:\lambda=0$)}                                                                               & \multicolumn{9}{c}{Power ($H_1:\lambda=0.25$)}                                                                                \\\cline{3-20}
\multicolumn{1}{l}{}                           &    &    \multicolumn{3}{c|}{$\alpha_1=1,\alpha_2=1$} & \multicolumn{3}{c|}{$\alpha_1=1,\alpha_2=2/3$} & \multicolumn{3}{c|}{$\alpha_1=2/3,\alpha_2=1/2$} & \multicolumn{3}{c|}{$\alpha_1=1,\alpha_2=1$} & \multicolumn{3}{c|}{$\alpha_1=1,\alpha_2=2/3$} & \multicolumn{3}{c}{$\alpha_1=2/3,\alpha_2=1/2$}   \\\cline{3-20}
\multicolumn{1}{c}{Tests}                           & $n\setminus T$     & 100       & 200      & 500       & 100        & 200       & 500       & 100        & 200        & 500      & 100       & 200       & 500      & 100        & 200       & 500       & 100        & 200       & 500       \\\hline
\multirow{4}{*}{ARDL
adjusted $CD$}                        & 100  & 100.0 & 100.0 & 100.0 & 98.9 & 99.9  & 100.0 & 7.4  & 11.9 & 38.0 & 99.6  & 100.0 & 100.0 & 93.6 & 99.0 & 99.9  & 37.3 & 39.4 & 50.1   \\                                                               & 200  & 100.0 & 100.0 & 100.0 & 99.5 & 100.0 & 100.0 & 6.9  & 8.8  & 21.8 & 99.8  & 100.0 & 100.0 & 94.3 & 99.7 & 100.0 & 59.4 & 72.1 & 86.3   \\                                                               & 500  & 100.0 & 100.0 & 100.0 & 99.9 & 100.0 & 100.0 & 7.0  & 5.7  & 10.7 & 99.9  & 100.0 & 100.0 & 93.3 & 99.9 & 100.0 & 76.3 & 93.4 & 99.8   \\                                                               & 1000 & 100.0 & 100.0 & 100.0 & 99.8 & 100.0 & 100.0 & 7.1  & 5.5  & 7.5  & 100.0 & 100.0 & 100.0 & 93.8 & 99.9 & 100.0 & 81.9 & 96.8 & 100.0  \\\hline
\multirow{4}{*}{ARDL adjusted
$CD^\ast$} & 100  & 8.0   & 9.3   & 14.5  & 7.8  & 7.2   & 10.5  & 10.2 & 9.9  & 9.9  & 31.8  & 53.7  & 87.9  & 39.8 & 62.2 & 93.0  & 81.2 & 96.7 & 100.0  \\                                                               & 200  & 5.7   & 7.2   & 7.9   & 6.6  & 6.2   & 7.1   & 8.1  & 7.6  & 6.3  & 30.2  & 50.0  & 86.0  & 38.2 & 60.0 & 92.3  & 81.7 & 97.7 & 100.0  \\                                                               & 500  & 6.1   & 6.1   & 6.5   & 7.0  & 6.2   & 5.0   & 8.6  & 6.1  & 4.7  & 29.7  & 49.5  & 83.5  & 40.5 & 61.4 & 92.4  & 85.3 & 98.3 & 100.0  \\                                                               & 1000 & 5.5   & 6.0   & 5.9   & 6.3  & 6.1   & 5.0   & 8.2  & 7.2  & 6.4  & 28.3  & 48.1  & 81.8  & 40.4 & 60.9 & 92.7  & 86.4 & 98.7 & 100.0  \\\hline
\multirow{4}{*}{ARDL adjusted $CD_{W+}$}                  & 100  & 6.3   & 6.9   & 8.5   & 5.4  & 6.2   & 10.2  & 5.8  & 5.6  & 9.2  & 6.1   & 8.7   & 21.1  & 8.4  & 14.4 & 75.6  & 6.8  & 8.1  & 25.8   \\                                                               & 200  & 5.3   & 5.9   & 6.5   & 5.5  & 6.0   & 5.1   & 6.0  & 5.5  & 6.3  & 5.5   & 7.0   & 25.3  & 6.7  & 8.0  & 25.5  & 6.4  & 6.3  & 26.9   \\                                                               & 500  & 5.4   & 4.0   & 4.4   & 5.3  & 4.7   & 4.4   & 5.9  & 4.7  & 6.2  & 5.6   & 5.1   & 36.1  & 5.8  & 6.8  & 23.2  & 6.0  & 4.9  & 35.9   \\                                                               & 1000 & 5.4   & 5.3   & 5.0   & 5.1  & 5.9   & 5.1   & 5.2  & 5.4  & 6.3  & 5.6   & 5.1   & 40.7  & 5.7  & 6.2  & 33.6  & 5.4  & 6.1  & 40.6  \\\hline
\multicolumn{20}{c}{}\\
$\hat{m}=6$& \multicolumn{19}{c}{}\\
\hline\hline
\multicolumn{1}{l}{}                               &      & \multicolumn{9}{c|}{Size ($H_o:\lambda=0$)}                                                                               & \multicolumn{9}{c}{Power ($H_1:\lambda=0.25$)}                                                                                \\\cline{3-20}
\multicolumn{1}{l}{}                           &    &    \multicolumn{3}{c|}{$\alpha_1=1,\alpha_2=1$} & \multicolumn{3}{c|}{$\alpha_1=1,\alpha_2=2/3$} & \multicolumn{3}{c|}{$\alpha_1=2/3,\alpha_2=1/2$} & \multicolumn{3}{c|}{$\alpha_1=1,\alpha_2=1$} & \multicolumn{3}{c|}{$\alpha_1=1,\alpha_2=2/3$} & \multicolumn{3}{c}{$\alpha_1=2/3,\alpha_2=1/2$}   \\\cline{3-20}
\multicolumn{1}{c}{Tests}                           & $n\setminus T$     & 100       & 200      & 500       & 100        & 200       & 500       & 100        & 200        & 500      & 100       & 200       & 500      & 100        & 200       & 500       & 100        & 200       & 500       \\\hline
\multirow{4}{*}{ARDL
adjusted $CD$}                        & 100  & 100.0 & 100.0 & 100.0 & 98.8 & 99.9  & 100.0 & 7.4  & 12.4 & 38.6 & 99.7  & 100.0 & 100.0 & 95.3 & 99.3  & 99.9  & 27.0 & 29.7 & 50.3   \\                                                               & 200  & 100.0 & 100.0 & 100.0 & 99.6 & 100.0 & 100.0 & 6.5  & 8.9  & 22.4 & 100.0 & 100.0 & 100.0 & 95.0 & 99.8  & 100.0 & 47.8 & 60.6 & 70.8   \\                                                               & 500  & 100.0 & 100.0 & 100.0 & 99.9 & 100.0 & 100.0 & 6.5  & 5.1  & 10.5 & 100.0 & 100.0 & 100.0 & 94.2 & 99.9  & 100.0 & 72.9 & 91.1 & 99.6   \\                                                               & 1000 & 100.0 & 100.0 & 100.0 & 99.8 & 100.0 & 100.0 & 7.1  & 5.5  & 7.7  & 100.0 & 100.0 & 100.0 & 94.0 & 100.0 & 100.0 & 78.0 & 95.9 & 100.0  \\\hline
\multirow{4}{*}{ARDL adjusted
$CD^\ast$} & 100  & 11.4  & 15.6  & 30.3  & 9.8  & 12.5  & 24.1  & 12.3 & 13.2 & 17.3 & 37.0  & 62.5  & 93.1  & 42.7 & 69.2  & 96.0  & 78.5 & 95.5 & 100.0  \\                                                               & 200  & 7.5   & 9.1   & 12.7  & 6.9  & 7.2   & 10.5  & 8.8  & 7.9  & 8.6  & 32.1  & 53.4  & 89.8  & 38.8 & 61.5  & 94.3  & 78.6 & 97.0 & 100.0  \\                                                               & 500  & 6.6   & 6.6   & 7.4   & 7.1  & 6.8   & 5.6   & 8.3  & 5.9  & 5.6  & 29.8  & 50.2  & 84.6  & 39.5 & 60.7  & 92.9  & 82.9 & 98.0 & 100.0  \\                                                               & 1000 & 6.0   & 6.4   & 6.1   & 6.8  & 5.8   & 4.9   & 8.2  & 7.0  & 6.4  & 28.3  & 47.0  & 81.6  & 40.0 & 60.7  & 92.5  & 83.6 & 98.7 & 100.0  \\\hline
\multirow{4}{*}{ARDL adjusted $CD_{W+}$}                  & 100  & 5.5   & 7.4   & 16.2  & 6.1  & 6.9   & 16.5  & 5.5  & 6.3  & 10.4 & 5.9   & 8.4   & 26.2  & 6.7  & 8.1   & 27.8  & 5.8  & 8.3  & 21.5   \\                                                               & 200  & 5.5   & 5.5   & 6.5   & 5.9  & 5.5   & 6.7   & 5.4  & 5.7  & 6.3  & 6.2   & 6.5   & 17.8  & 6.8  & 6.4   & 16.5  & 6.3  & 6.6  & 14.3   \\                                                               & 500  & 5.6   & 4.9   & 5.1   & 6.2  & 4.7   & 5.1   & 6.3  & 6.6  & 4.7  & 5.1   & 6.3   & 27.9  & 6.6  & 4.9   & 27.8  & 6.5  & 6.4  & 26.6   \\                                                               & 1000 & 5.2   & 5.4   & 5.3   & 5.4  & 5.0   & 4.8   & 6.3  & 5.5  & 5.7  & 5.5   & 6.0   & 35.9  & 5.5  & 5.3   & 34.8  & 6.4  & 6.8  & 36.6   \\\hline
\end{tabular}
\end{center}
\par
\textit{Notes}: The DGP is given by (\ref{dgpy}) with $\beta_{i1}=\beta_{i2}=0$, and contains two latent factors with different factor strengths, $(\alpha_{1},\alpha_{2})=(1,1)$, $(1,2/3)$, and $(2/3,1/2)$. $\lambda$ denotes the spatial autocorrelation coefficient of the error term defined in (\ref{error}). $m_0$ is the true number of factors and $\hat{m}$ is the number of selected PCs used to compute the different CD statistics.
$CD$ denotes the
standard test of error cross-sectional dependence defined by (\ref{cd}), $CD^\ast$ is the bias-corrected version defined by (\ref{CD*a}), and $CD_{W+}$ is the power-enhanced randomized version defined by (\ref{CDW+}).
\end{sidewaystable}

\clearpage

\begin{sidewaystable}
\scriptsize
\caption{Size and power of ARDL adjusted tests of error cross-sectional dependence for the panel regression model with one latent factor $(m_0=1)$ and serially correlated Gaussian errors}
\label{table.ardl.n.x1f}
\begin{center}
\begin{tabular}{cr|rrr|rrr|rrr|rrr|rrr|rrr}
$\hat{m}=2$& \multicolumn{19}{c}{}\\
\hline\hline
\multicolumn{1}{l}{}                               &      & \multicolumn{9}{c|}{Size ($H_o:\lambda=0$)}                                                                               & \multicolumn{9}{c}{Power ($H_1:\lambda=0.25$)}                                                                                \\\cline{3-20}
\multicolumn{1}{l}{}                               &      & \multicolumn{3}{c|}{$\alpha=1$} & \multicolumn{3}{c|}{$\alpha=2/3$} & \multicolumn{3}{c|}{$\alpha=1/2$} & \multicolumn{3}{c|}{$\alpha=1$} & \multicolumn{3}{c|}{$\alpha=2/3$} & \multicolumn{3}{c}{$\alpha=1/2$} \\\cline{3-20}
\multicolumn{1}{c}{Tests}                           & $n\setminus T$     & 100       & 200      & 500       & 100        & 200       & 500       & 100        & 200        & 500      & 100       & 200       & 500      & 100        & 200       & 500       & 100        & 200       & 500       \\\hline
\multirow{4}{*}{ARDL
adjusted $CD$}                        & 100  & 69.1 & 91.0 & 98.9  & 7.0 & 9.2 & 22.1 & 8.1 & 7.4 & 9.4 & 29.5 & 49.6 & 67.5 & 58.3 & 70.8 & 82.5  & 66.6 & 79.9 & 86.2   \\                                                               & 200  & 69.6 & 93.5 & 99.9  & 7.0 & 6.6 & 11.9 & 7.4 & 6.8 & 5.6 & 18.5 & 34.7 & 60.2 & 71.6 & 89.6 & 98.4  & 82.2 & 94.9 & 99.5   \\                                                               & 500  & 67.2 & 96.4 & 100.0 & 7.1 & 4.6 & 7.9  & 8.0 & 6.6 & 5.1 & 11.4 & 25.4 & 53.4 & 82.8 & 96.6 & 100.0 & 87.2 & 98.7 & 100.0  \\                                                               & 1000 & 69.0 & 96.8 & 100.0 & 7.6 & 6.0 & 6.2  & 8.0 & 6.7 & 5.1 & 10.5 & 22.2 & 51.2 & 85.5 & 98.3 & 100.0 & 87.3 & 99.1 & 100.0  \\\hline
\multirow{4}{*}{ARDL adjusted
$CD^\ast$} & 100  & 5.7  & 6.1  & 7.8   & 7.9 & 7.6 & 6.9  & 9.4 & 8.6 & 7.4 & 55.8 & 81.6 & 98.9 & 82.6 & 98.1 & 100.0 & 85.3 & 99.0 & 100.0  \\                                                               & 200  & 5.1  & 5.7  & 5.1   & 8.3 & 6.7 & 5.3  & 8.4 & 8.3 & 5.4 & 58.7 & 84.3 & 99.6 & 85.9 & 98.4 & 100.0 & 88.4 & 98.9 & 100.0  \\                                                               & 500  & 6.5  & 6.2  & 5.2   & 8.6 & 6.5 & 5.8  & 8.0 & 7.4 & 5.7 & 61.8 & 85.6 & 99.6 & 87.6 & 99.0 & 100.0 & 89.2 & 99.1 & 100.0  \\                                                               & 1000 & 6.2  & 4.7  & 4.6   & 8.3 & 6.0 & 6.4  & 8.2 & 7.0 & 5.9 & 60.1 & 85.3 & 99.5 & 88.3 & 99.0 & 100.0 & 88.3 & 99.3 & 100.0  \\\hline
\multirow{4}{*}{ARDL adjusted $CD_{W+}$}                  & 100  & 5.4  & 6.4  & 6.4   & 5.3 & 5.9 & 5.9  & 5.6 & 5.5 & 6.3 & 5.4  & 8.0  & 29.1 & 6.7  & 7.8  & 33.7  & 6.8  & 8.5  & 32.7   \\                                                               & 200  & 6.5  & 5.9  & 5.3   & 6.6 & 6.4 & 5.0  & 5.3 & 5.7 & 5.5 & 7.0  & 7.4  & 34.7 & 7.3  & 8.2  & 40.7  & 5.9  & 6.4  & 39.0   \\                                                               & 500  & 5.8  & 5.6  & 4.7   & 5.4 & 6.2 & 5.1  & 5.1 & 5.4 & 4.4 & 6.0  & 5.8  & 44.3 & 5.5  & 6.6  & 47.5  & 5.3  & 6.0  & 46.4   \\                                                               & 1000 & 6.7  & 5.6  & 4.9   & 5.9 & 5.3 & 5.8  & 6.7 & 6.0 & 5.2 & 6.5  & 5.5  & 42.0 & 5.5  & 6.3  & 48.2  & 6.7  & 6.8  & 47.9
\\\hline
\multicolumn{20}{c}{}\\
$\hat{m}=4$& \multicolumn{19}{c}{}\\
\hline\hline
\multicolumn{1}{l}{}                               &      & \multicolumn{9}{c|}{Size ($H_o:\lambda=0$)}                                                                               & \multicolumn{9}{c}{Power ($H_1:\lambda=0.25$)}                                                                                \\\cline{3-20}
\multicolumn{1}{l}{}                               &      & \multicolumn{3}{c|}{$\alpha=1$} & \multicolumn{3}{c|}{$\alpha=2/3$} & \multicolumn{3}{c|}{$\alpha=1/2$} & \multicolumn{3}{c|}{$\alpha=1$} & \multicolumn{3}{c|}{$\alpha=2/3$} & \multicolumn{3}{c}{$\alpha=1/2$}\\\cline{3-20}
\multicolumn{1}{c}{Tests}                           & $n\setminus T$     & 100       & 200      & 500       & 100        & 200       & 500       & 100        & 200        & 500      & 100       & 200       & 500      & 100        & 200       & 500       & 100        & 200       & 500       \\\hline
\multirow{4}{*}{ARDL
adjusted $CD$}                        & 100  & 69.6 & 90.3 & 99.2  & 6.7 & 10.4 & 23.7 & 7.1 & 7.3  & 10.8 & 38.9 & 60.2 & 80.9 & 38.7 & 47.8 & 60.3  & 44.8 & 58.5 & 65.8   \\                                                               & 200  & 69.5 & 93.8 & 100.0 & 6.7 & 6.9  & 11.6 & 7.1 & 5.9  & 6.8  & 23.4 & 42.0 & 70.1 & 58.1 & 77.6 & 91.6  & 68.5 & 87.9 & 95.8   \\                                                               & 500  & 67.9 & 96.0 & 100.0 & 6.3 & 4.7  & 7.8  & 6.9 & 6.6  & 5.1  & 14.1 & 28.0 & 58.8 & 74.9 & 95.0 & 100.0 & 80.3 & 97.9 & 100.0  \\                                                               & 1000 & 68.5 & 96.6 & 100.0 & 7.0 & 5.7  & 6.1  & 8.0 & 6.5  & 5.8  & 12.2 & 24.8 & 54.7 & 77.9 & 97.3 & 100.0 & 81.7 & 98.8 & 100.0  \\\hline
\multirow{4}{*}{ARDL adjusted
$CD_\ast$} & 100  & 6.7  & 8.3  & 12.0  & 9.0 & 10.3 & 11.9 & 9.3 & 10.5 & 12.6 & 55.9 & 83.0 & 99.4 & 77.2 & 96.7 & 99.9  & 80.3 & 97.9 & 100.0  \\                                                               & 200  & 5.4  & 6.6  & 6.6   & 8.1 & 7.6  & 6.4  & 8.6 & 8.8  & 8.1  & 57.4 & 84.1 & 99.5 & 81.5 & 97.5 & 100.0 & 85.1 & 98.3 & 100.0  \\                                                               & 500  & 6.6  & 7.0  & 5.4   & 7.1 & 6.2  & 6.1  & 7.6 & 7.6  & 6.2  & 58.9 & 84.8 & 99.7 & 83.6 & 98.7 & 100.0 & 85.3 & 98.8 & 100.0  \\                                                               & 1000 & 6.6  & 5.4  & 5.1   & 7.6 & 6.1  & 6.5  & 8.2 & 7.1  & 6.2  & 59.1 & 84.2 & 99.6 & 82.6 & 98.7 & 100.0 & 83.7 & 98.9 & 100.0  \\\hline
\multirow{4}{*}{ARDL adjusted $CD_{W+}$}                  & 100  & 7.0  & 6.5  & 9.5   & 5.8 & 4.2  & 7.3  & 6.6 & 6.6  & 7.6  & 7.1  & 8.3  & 20.5 & 6.4  & 6.1  & 17.8  & 6.5  & 8.1  & 17.8   \\                                                               & 200  & 5.3  & 5.9  & 4.9   & 6.5 & 5.5  & 5.5  & 6.3 & 5.7  & 5.4  & 5.7  & 6.6  & 21.6 & 7.0  & 6.7  & 20.5  & 7.0  & 6.8  & 20.0   \\                                                               & 500  & 5.5  & 5.1  & 5.3   & 4.8 & 5.7  & 4.9  & 6.4 & 5.5  & 5.1  & 5.6  & 6.1  & 30.0 & 5.8  & 5.5  & 34.4  & 6.5  & 6.5  & 32.1   \\                                                               & 1000 & 6.0  & 5.6  & 5.5   & 5.8 & 5.4  & 6.3  & 5.6 & 5.0  & 5.5  & 6.3  & 6.5  & 39.7 & 5.4  & 5.0  & 40.3  & 5.8  & 5.1  & 38.5   \\\hline
\end{tabular}
\end{center}
\par
\textit{Notes}: The DGP is given by (\ref{dgpy}) with $\beta_{i1}$ and $\beta_{i2}$ both generated from normal distribution, and contains a single latent factor with different factor strengths, $\alpha=1,$ $2/3$, and $1/2$. $\lambda$ denotes the spatial autocorrelation coefficient of the error term defined in (\ref{error}). $m_0$ is the true number of factors and $\hat{m}$ is the number of selected PCs used to compute the different CD statistics.
$CD$ denotes the
standard test of error cross-sectional dependence defined by (\ref{cd}), $CD^\ast$ is the bias-corrected version defined by (\ref{CD*a}), and $CD_{W+}$ is the power-enhanced randomized version defined by (\ref{CDW+}).
\end{sidewaystable}

\begin{sidewaystable}
\scriptsize
\caption{Size and power of ARDL adjusted tests of error cross-sectional dependence for the panel regression model with two latent factors $(m_0=2)$ and serially correlated Gaussian errors}
\label{table.ardl.n.x2f}
\begin{center}
\begin{tabular}{cr|rrr|rrr|rrr|rrr|rrr|rrr}
$\hat{m}=4$& \multicolumn{19}{c}{}\\
\hline\hline
\multicolumn{1}{l}{}                               &      & \multicolumn{9}{c|}{Size ($H_o:\lambda=0$)}                                                                               & \multicolumn{9}{c}{Power ($H_1:\lambda=0.25$)}                                                                                \\\cline{3-20}
\multicolumn{1}{l}{}                           &    &    \multicolumn{3}{c|}{$\alpha_1=1,\alpha_2=1$} & \multicolumn{3}{c|}{$\alpha_1=1,\alpha_2=2/3$} & \multicolumn{3}{c|}{$\alpha_1=2/3,\alpha_2=1/2$} & \multicolumn{3}{c|}{$\alpha_1=1,\alpha_2=1$} & \multicolumn{3}{c|}{$\alpha_1=1,\alpha_2=2/3$} & \multicolumn{3}{c}{$\alpha_1=2/3,\alpha_2=1/2$}  \\\cline{3-20}
\multicolumn{1}{c}{Tests}                           & $n\setminus T$     & 100       & 200      & 500       & 100        & 200       & 500       & 100        & 200        & 500      & 100       & 200       & 500      & 100        & 200       & 500       & 100        & 200       & 500       \\\hline
\multirow{4}{*}{ARDL
adjusted $CD$}                        & 100  & 100.0 & 100.0 & 100.0 & 98.4  & 100.0 & 100.0 & 8.8  & 12.0 & 37.8 & 99.4  & 100.0 & 100.0 & 92.7 & 98.7 & 100.0 & 35.9 & 41.8 & 49.9   \\                                                               & 200  & 100.0 & 100.0 & 100.0 & 99.3  & 100.0 & 100.0 & 7.7  & 8.4  & 19.6 & 99.8  & 100.0 & 100.0 & 92.3 & 99.6 & 100.0 & 60.6 & 74.2 & 87.6   \\                                                               & 500  & 100.0 & 100.0 & 100.0 & 99.6  & 100.0 & 100.0 & 6.8  & 6.1  & 8.3  & 99.9  & 100.0 & 100.0 & 92.2 & 99.8 & 100.0 & 78.0 & 95.1 & 99.7   \\                                                               & 1000 & 100.0 & 100.0 & 100.0 & 100.0 & 100.0 & 100.0 & 8.3  & 6.0  & 6.9  & 100.0 & 100.0 & 100.0 & 91.6 & 99.9 & 100.0 & 82.0 & 97.1 & 100.0  \\\hline
\multirow{4}{*}{ARDL adjusted
$CD^\ast$} & 100  & 7.7   & 7.7   & 11.9  & 7.8   & 8.8   & 11.6  & 11.9 & 11.4 & 12.4 & 29.3  & 51.8  & 84.9  & 37.2 & 64.2 & 92.2  & 77.8 & 96.4 & 100.0  \\                                                               & 200  & 6.1   & 7.0   & 6.0   & 6.6   & 6.3   & 6.1   & 11.6 & 9.3  & 6.7  & 29.3  & 48.8  & 81.9  & 40.1 & 61.5 & 91.4  & 83.4 & 97.7 & 100.0  \\                                                               & 500  & 7.2   & 5.7   & 6.0   & 7.4   & 5.7   & 4.7   & 10.1 & 8.6  & 6.6  & 29.6  & 41.9  & 78.6  & 40.0 & 60.9 & 92.4  & 86.4 & 98.6 & 100.0  \\                                                               & 1000 & 5.9   & 5.8   & 5.2   & 7.3   & 5.6   & 5.5   & 9.0  & 8.4  & 6.5  & 27.1  & 41.6  & 76.6  & 42.6 & 62.1 & 91.9  & 86.5 & 98.4 & 100.0  \\\hline
\multirow{4}{*}{ARDL adjusted $CD_{W+}$}                  & 100  & 6.4   & 6.8   & 10.2  & 5.7   & 6.9   & 9.4   & 5.7  & 6.3  & 9.1  & 8.3   & 9.1   & 25.2  & 6.4  & 7.2  & 24.9  & 7.0  & 7.7  & 24.9   \\                                                               & 200  & 5.8   & 5.9   & 6.0   & 5.1   & 5.9   & 5.3   & 5.8  & 4.7  & 5.5  & 6.2   & 6.8   & 24.3  & 5.6  & 6.9  & 24.1  & 5.5  & 5.5  & 25.1   \\                                                               & 500  & 5.9   & 5.8   & 5.5   & 6.4   & 5.5   & 5.6   & 5.8  & 5.7  & 5.5  & 5.9   & 7.0   & 34.3  & 6.9  & 5.9  & 36.0  & 5.6  & 5.7  & 37.1   \\                                                               & 1000 & 5.9   & 5.2   & 4.9   & 6.5   & 5.6   & 6.2   & 5.5  & 5.6  & 5.7  & 6.2   & 6.1   & 38.7  & 6.6  & 6.7  & 38.2  & 5.9  & 6.1  & 40.1
\\\hline
\multicolumn{20}{c}{}\\
$\hat{m}=6$& \multicolumn{19}{c}{}\\
\hline\hline
\multicolumn{1}{l}{}                               &      & \multicolumn{9}{c|}{Size ($H_o:\lambda=0$)}                                                                               & \multicolumn{9}{c}{Power ($H_1:\lambda=0.25$)}                                                                                \\\cline{3-20}
\multicolumn{1}{l}{}                           &    &    \multicolumn{3}{c|}{$\alpha_1=1,\alpha_2=1$} & \multicolumn{3}{c|}{$\alpha_1=1,\alpha_2=2/3$} & \multicolumn{3}{c|}{$\alpha_1=2/3,\alpha_2=1/2$} & \multicolumn{3}{c|}{$\alpha_1=1,\alpha_2=1$} & \multicolumn{3}{c|}{$\alpha_1=1,\alpha_2=2/3$} & \multicolumn{3}{c}{$\alpha_1=2/3,\alpha_2=1/2$}   \\\cline{3-20}
\multicolumn{1}{c}{Tests}                           & $n\setminus T$     & 100       & 200      & 500       & 100        & 200       & 500       & 100        & 200        & 500      & 100       & 200       & 500      & 100        & 200       & 500       & 100        & 200       & 500       \\\hline
\multirow{4}{*}{ARDL
adjusted $CD$}                        & 100  & 100.0 & 100.0 & 100.0 & 98.5  & 100.0 & 100.0 & 9.0  & 13.0 & 38.7 & 99.8  & 100.0 & 100.0 & 93.5 & 99.3 & 100.0 & 23.9 & 32.2 & 48.7   \\                                                               & 200  & 100.0 & 100.0 & 100.0 & 99.4  & 100.0 & 100.0 & 8.3  & 8.5  & 20.2 & 99.8  & 100.0 & 100.0 & 94.0 & 99.8 & 100.0 & 48.2 & 62.1 & 73.2   \\                                                               & 500  & 100.0 & 100.0 & 100.0 & 99.7  & 100.0 & 100.0 & 7.3  & 5.7  & 8.4  & 100.0 & 100.0 & 100.0 & 92.4 & 99.8 & 100.0 & 70.7 & 91.5 & 99.5   \\                                                               & 1000 & 100.0 & 100.0 & 100.0 & 100.0 & 100.0 & 100.0 & 6.9  & 6.1  & 7.0  & 100.0 & 100.0 & 100.0 & 92.6 & 99.9 & 100.0 & 75.9 & 96.2 & 100.0  \\\hline
\multirow{4}{*}{ARDL adjusted
$CD^\ast$} & 100  & 10.1  & 13.8  & 28.7  & 10.5  & 14.2  & 26.9  & 13.1 & 15.3 & 19.1 & 32.0  & 59.4  & 91.6  & 41.9 & 69.9 & 96.2  & 73.7 & 95.8 & 100.0  \\                                                               & 200  & 7.4   & 8.0   & 9.7   & 8.0   & 8.0   & 9.0   & 11.4 & 10.4 & 9.3  & 30.1  & 50.9  & 86.9  & 40.4 & 62.9 & 93.7  & 78.4 & 96.2 & 100.0  \\                                                               & 500  & 7.8   & 5.7   & 6.8   & 8.0   & 6.0   & 5.4   & 9.5  & 8.7  & 7.3  & 29.3  & 42.4  & 79.4  & 39.8 & 61.2 & 93.2  & 82.1 & 98.3 & 100.0  \\                                                               & 1000 & 6.3   & 6.2   & 5.1   & 8.3   & 6.5   & 5.6   & 8.4  & 7.7  & 6.4  & 27.7  & 42.4  & 77.8  & 41.2 & 63.1 & 91.1  & 82.1 & 98.2 & 100.0  \\\hline
\multirow{4}{*}{ARDL adjusted $CD_{W+}$}                  & 100  & 6.3   & 6.8   & 22.5  & 6.9   & 7.2   & 16.9  & 5.5  & 7.1  & 10.7 & 6.9   & 8.6   & 32.1  & 7.9  & 8.4  & 28.0  & 6.7  & 8.9  & 20.8   \\                                                               & 200  & 5.9   & 5.9   & 5.9   & 6.5   & 5.5   & 6.4   & 6.6  & 5.3  & 5.6  & 6.2   & 7.0   & 16.8  & 6.7  & 5.9  & 15.8  & 6.9  & 6.2  & 16.0   \\                                                               & 500  & 7.4   & 5.8   & 5.5   & 6.7   & 5.8   & 5.7   & 5.7  & 4.9  & 4.2  & 6.9   & 6.3   & 28.4  & 7.1  & 6.5  & 26.4  & 6.5  & 5.7  & 25.5   \\                                                               & 1000 & 5.9   & 5.9   & 5.2   & 6.4   & 6.3   & 5.3   & 6.1  & 5.8  & 5.3  & 6.1   & 6.5   & 33.1  & 6.2  & 6.8  & 32.9  & 6.6  & 5.9  & 34.4   \\\hline
\end{tabular}
\end{center}
\par
\textit{Notes}: The DGP is given by (\ref{dgpy}) with $\beta_{i1}$ and $\beta_{i2}$ both generated from normal distribution, and contains two latent factors with different factor strengths, $(\alpha_{1},\alpha_{2})=(1,1)$, $(1,2/3)$, and $(2/3,1/2)$. $\lambda$ denotes the spatial autocorrelation coefficient of the error term defined in (\ref{error}). $m_0$ is the true number of factors and $\hat{m}$ is the number of selected PCs used to compute the different CD statistics.
$CD$ denotes the
standard test of error cross-sectional dependence defined by (\ref{cd}), $CD^\ast$ is the bias-corrected version defined by (\ref{CD*a}), and $CD_{W+}$ is the power-enhanced randomized version defined by (\ref{CDW+}).
\end{sidewaystable}

\begin{sidewaystable}
\scriptsize
\caption{Size and power of ARDL adjusted tests of error cross-sectional dependence for the latent factor model with one factor $(m_0=1)$ and serially correlated non-Gaussian errors}
\label{table.ardl.chi2.1f}
\begin{center}
\begin{tabular}{cr|rrr|rrr|rrr|rrr|rrr|rrr}
$\hat{m}=2$& \multicolumn{19}{c}{}\\
\hline\hline
\multicolumn{1}{l}{}                               &      & \multicolumn{9}{c|}{Size ($H_o:\lambda=0$)}                                                                               & \multicolumn{9}{c}{Power ($H_1:\lambda=0.25$)}                                                                                \\\cline{3-20}
\multicolumn{1}{l}{}                               &      & \multicolumn{3}{c|}{$\alpha=1$} & \multicolumn{3}{c|}{$\alpha=2/3$} & \multicolumn{3}{c|}{$\alpha=1/2$} & \multicolumn{3}{c|}{$\alpha=1$} & \multicolumn{3}{c|}{$\alpha=2/3$} & \multicolumn{3}{c}{$\alpha=1/2$} \\\cline{3-20}
\multicolumn{1}{c}{Tests}                           & $n\setminus T$     & 100       & 200      & 500       & 100        & 200       & 500       & 100        & 200        & 500      & 100       & 200       & 500      & 100        & 200       & 500       & 100        & 200       & 500       \\\hline
\multirow{4}{*}{ARDL
adjusted $CD$}                        & 100  & 62.2 & 86.3 & 97.2  & 6.7  & 9.7  & 23.6 & 6.6  & 6.4  & 10.0 & 28.5 & 46.4 & 67.0 & 55.4 & 69.2 & 81.4  & 67.0 & 79.1 & 86.4   \\      & 200  & 66.2 & 91.6 & 96.4  & 4.9  & 7.0  & 14.2 & 5.1  & 5.4  & 6.9  & 19.2 & 34.4 & 58.8 & 70.1 & 88.1 & 98.1  & 80.1 & 95.0 & 99.2   \\                                                         & 500  & 66.0 & 88.8 & 97.3  & 5.6  & 5.8  & 9.5  & 7.3  & 5.9  & 5.9  & 12.2 & 27.9 & 52.8 & 81.3 & 95.8 & 100.0 & 88.9 & 98.5 & 100.0  \\                                                           & 1000 & 63.1 & 96.8 & 100.0 & 6.4  & 5.7  & 7.4  & 7.2  & 5.2  & 5.0  & 9.6  & 21.7 & 50.4 & 82.4 & 97.0 & 100.0 & 87.6 & 99.3 & 100.0  \\\hline
\multirow{4}{*}{ARDL adjusted
$CD^\ast$} & 100  & 5.5  & 5.9  & 5.7   & 7.5  & 6.2  & 7.6  & 8.6  & 7.6  & 6.9  & 58.8 & 84.5 & 99.1 & 81.7 & 97.4 & 100.0 & 85.9 & 98.3 & 100.0  \\                                                               & 200  & 5.5  & 5.4  & 6.3   & 5.6  & 5.5  & 5.7  & 6.1  & 6.3  & 6.2  & 59.6 & 86.1 & 99.3 & 85.1 & 98.2 & 100.0 & 87.9 & 99.1 & 100.0  \\                                                               & 500  & 5.8  & 5.3  & 5.8   & 6.4  & 5.5  & 5.6  & 7.6  & 6.6  & 5.8  & 61.7 & 85.5 & 99.4 & 86.4 & 98.6 & 100.0 & 90.4 & 99.0 & 100.0  \\                                                               & 1000 & 5.5  & 6.2  & 5.4   & 7.0  & 6.0  & 4.8  & 7.4  & 5.6  & 5.8  & 62.3 & 85.8 & 99.5 & 86.4 & 98.8 & 100.0 & 88.7 & 99.6 & 100.0  \\\hline
\multirow{4}{*}{ARDL adjusted $CD_{W+}$}                  & 100  & 6.5  & 7.0  & 6.4   & 6.2  & 5.1  & 5.7  & 5.9  & 5.7  & 7.3  & 7.9  & 10.9 & 33.0 & 7.0  & 9.2  & 37.8  & 8.0  & 9.2  & 35.1   \\                                                               & 200  & 8.5  & 6.5  & 6.1   & 8.5  & 6.5  & 5.4  & 8.1  & 5.5  & 5.1  & 10.4 & 10.7 & 47.1 & 9.6  & 11.5 & 53.4  & 9.2  & 10.4 & 51.9   \\                                                               & 500  & 13.5 & 10.9 & 6.9   & 15.9 & 8.1  & 5.3  & 14.1 & 7.8  & 5.0  & 16.9 & 16.0 & 67.8 & 16.4 & 14.9 & 72.1  & 15.8 & 13.6 & 68.7   \\                                                               & 1000 & 24.3 & 13.9 & 5.9   & 27.4 & 14.2 & 7.0  & 26.0 & 16.4 & 6.5  & 25.2 & 19.9 & 81.7 & 26.3 & 20.7 & 81.4  & 24.7 & 23.8 & 82.0
\\\hline
\multicolumn{20}{c}{}\\
$\hat{m}=4$& \multicolumn{19}{c}{}\\
\hline\hline
\multicolumn{1}{l}{}                               &      & \multicolumn{9}{c|}{Size ($H_o:\lambda=0$)}                                                                               & \multicolumn{9}{c}{Power ($H_1:\lambda=0.25$)}                                                                                \\\cline{3-20}
\multicolumn{1}{l}{}                               &      & \multicolumn{3}{c|}{$\alpha=1$} & \multicolumn{3}{c|}{$\alpha=2/3$} & \multicolumn{3}{c|}{$\alpha=1/2$} & \multicolumn{3}{c|}{$\alpha=1$} & \multicolumn{3}{c|}{$\alpha=2/3$} & \multicolumn{3}{c}{$\alpha=1/2$} \\\cline{3-20}
\multicolumn{1}{c}{Tests}                           & $n\setminus T$     & 100       & 200      & 500       & 100        & 200       & 500       & 100        & 200        & 500      & 100       & 200       & 500      & 100        & 200       & 500       & 100        & 200       & 500       \\\hline
\multirow{4}{*}{ARDL
adjusted $CD$}                        & 100  & 63.6 & 86.2 & 97.5  & 6.7  & 9.5  & 24.6 & 7.2  & 6.7  & 10.7 & 35.2 & 60.2 & 79.8 & 38.6 & 46.8 & 61.5  & 50.3 & 59.0 & 66.0   \\                                                               & 200  & 66.5 & 90.9 & 96.3  & 5.4  & 7.6  & 14.5 & 4.8  & 5.5  & 6.7  & 20.8 & 41.9 & 67.9 & 60.0 & 78.7 & 90.7  & 70.1 & 87.4 & 94.8   \\                                                               & 500  & 64.4 & 88.4 & 97.3  & 6.1  & 5.6  & 9.2  & 6.9  & 6.3  & 5.8  & 13.8 & 29.5 & 58.7 & 77.5 & 94.7 & 99.9  & 85.0 & 97.9 & 100.0  \\                                                               & 1000 & 62.0 & 96.8 & 100.0 & 6.5  & 5.4  & 7.8  & 7.0  & 5.7  & 5.8  & 10.1 & 23.5 & 53.8 & 80.6 & 97.2 & 100.0 & 86.4 & 99.0 & 100.0  \\\hline
\multirow{4}{*}{ARDL adjusted
$CD^\ast$} & 100  & 7.6  & 7.0  & 10.7  & 9.8  & 8.5  & 11.3 & 9.0  & 8.3  & 8.6  & 58.0 & 83.8 & 99.7 & 78.9 & 96.9 & 100.0 & 83.2 & 97.3 & 100.0  \\                                                               & 200  & 7.2  & 6.6  & 7.1   & 6.5  & 6.3  & 7.1  & 6.1  & 6.4  & 6.6  & 59.9 & 86.1 & 99.1 & 81.7 & 97.7 & 100.0 & 85.0 & 98.7 & 100.0  \\                                                               & 500  & 6.4  & 5.3  & 5.7   & 7.0  & 6.4  & 5.6  & 7.2  & 7.0  & 5.7  & 61.3 & 85.6 & 99.6 & 85.5 & 98.3 & 100.0 & 87.9 & 98.8 & 100.0  \\                                                               & 1000 & 6.5  & 6.1  & 5.3   & 7.0  & 5.9  & 5.1  & 7.4  & 6.2  & 6.0  & 61.6 & 85.6 & 99.5 & 85.6 & 98.7 & 100.0 & 88.3 & 99.3 & 100.0  \\\hline
\multirow{4}{*}{ARDL adjusted $CD_{W+}$}                  & 100  & 6.7  & 6.1  & 7.4   & 6.7  & 7.0  & 7.4  & 6.3  & 5.8  & 8.0  & 8.2  & 8.3  & 22.0 & 7.6  & 8.0  & 18.7  & 6.5  & 7.3  & 18.5   \\                                                               & 200  & 6.8  & 5.9  & 6.3   & 6.3  & 6.3  & 6.9  & 7.0  & 6.1  & 5.3  & 7.6  & 7.5  & 28.2 & 7.7  & 9.7  & 29.5  & 8.6  & 7.6  & 26.2   \\                                                               & 500  & 13.5 & 10.4 & 6.1   & 14.1 & 8.4  & 6.4  & 14.0 & 9.3  & 5.7  & 15.4 & 13.4 & 52.7 & 15.5 & 12.3 & 54.7  & 14.3 & 13.2 & 57.5   \\                                                               & 1000 & 22.8 & 13.4 & 5.3   & 24.3 & 14.9 & 6.3  & 23.3 & 14.5 & 6.2  & 24.0 & 18.6 & 73.0 & 24.0 & 20.0 & 73.2  & 23.8 & 19.9 & 71.9 \\\hline
\end{tabular}
\end{center}
\par
\textit{Notes}: The DGP is given by (\ref{dgpy}) with $\beta_{i1}=\beta_{i2}=0$ and contains a single latent factor with different factor strengths, $\alpha=1,$ $2/3$, and $1/2$. $\lambda$ denotes the spatial autocorrelation coefficient of the error term defined in (\ref{error}). $m_0$ is the true number of factors and $\hat{m}$ is the number of selected PCs used to compute the different CD statistics.
$CD$ denotes the
standard test of error cross-sectional dependence defined by (\ref{cd}), $CD^\ast$ is the bias-corrected version defined by (\ref{CD*a}), and $CD_{W+}$ is the power-enhanced randomized version defined by (\ref{CDW+}).
\end{sidewaystable}

\begin{sidewaystable}
\scriptsize
\caption{Size and power of ARDL adjusted tests of error cross-sectional dependence for the latent factor model with two factors $(m_0=2)$ and serially correlated non-Gaussian errors}
\label{table.ardl.chi2.2f}
\begin{center}
\begin{tabular}{cr|rrr|rrr|rrr|rrr|rrr|rrr}
$\hat{m}=4$& \multicolumn{19}{c}{}\\
\hline\hline
\multicolumn{1}{l}{}                               &      & \multicolumn{9}{c|}{Size ($H_o:\lambda=0$)}                                                                               & \multicolumn{9}{c}{Power ($H_1:\lambda=0.25$)}                                                                                \\\cline{3-20}
\multicolumn{1}{l}{}                           &    &    \multicolumn{3}{c|}{$\alpha_1=1,\alpha_2=1$} & \multicolumn{3}{c|}{$\alpha_1=1,\alpha_2=2/3$} & \multicolumn{3}{c|}{$\alpha_1=2/3,\alpha_2=1/2$} & \multicolumn{3}{c|}{$\alpha_1=1,\alpha_2=1$} & \multicolumn{3}{c|}{$\alpha_1=1,\alpha_2=2/3$} & \multicolumn{3}{c}{$\alpha_1=2/3,\alpha_2=1/2$}  \\\cline{3-20}
\multicolumn{1}{c}{Tests}                           & $n\setminus T$     & 100       & 200      & 500       & 100        & 200       & 500       & 100        & 200        & 500      & 100       & 200       & 500      & 100        & 200       & 500       & 100        & 200       & 500       \\\hline
\multirow{4}{*}{ARDL
adjusted $CD$}                        & 100  & 100.0 & 100.0 & 100.0 & 98.6 & 100.0 & 100.0 & 8.3  & 14.6 & 39.4 & 99.6  & 100.0 & 100.0 & 92.4 & 98.4  & 99.9  & 36.6 & 41.7 & 51.0   \\                                                               & 200  & 100.0 & 100.0 & 100.0 & 99.4 & 100.0 & 100.0 & 5.9  & 8.6  & 24.1 & 99.8  & 100.0 & 100.0 & 92.6 & 99.5  & 100.0 & 58.4 & 70.3 & 84.7   \\                                                               & 500  & 100.0 & 100.0 & 100.0 & 99.7 & 100.0 & 100.0 & 7.5  & 5.2  & 11.5 & 99.9  & 100.0 & 100.0 & 93.2 & 99.7  & 100.0 & 78.0 & 93.3 & 99.8   \\                                                               & 1000 & 100.0 & 100.0 & 100.0 & 99.7 & 100.0 & 100.0 & 7.4  & 6.1  & 7.7  & 100.0 & 100.0 & 100.0 & 91.9 & 100.0 & 100.0 & 82.9 & 96.4 & 100.0  \\\hline
\multirow{4}{*}{ARDL adjusted
$CD^\ast$} & 100  & 9.0   & 8.9   & 14.1  & 8.0  & 8.9   & 12.1  & 10.8 & 9.0  & 10.4 & 32.8  & 54.9  & 87.1  & 41.0 & 64.1  & 92.6  & 79.4 & 95.8 & 100.0  \\                                                               & 200  & 6.5   & 7.1   & 8.0   & 6.4  & 6.9   & 7.1   & 8.2  & 7.4  & 7.0  & 29.8  & 49.2  & 86.8  & 39.1 & 59.7  & 91.1  & 82.5 & 97.5 & 100.0  \\                                                               & 500  & 6.4   & 6.1   & 6.2   & 6.6  & 6.0   & 5.6   & 9.6  & 7.9  & 6.5  & 31.3  & 53.1  & 85.0  & 42.4 & 62.4  & 91.8  & 86.3 & 98.6 & 100.0  \\                                                               & 1000 & 6.2   & 5.7   & 5.1   & 7.1  & 6.2   & 4.6   & 8.9  & 6.5  & 4.8  & 30.5  & 46.8  & 82.2  & 41.9 & 61.9  & 92.4  & 87.5 & 98.3 & 100.0  \\\hline
\multirow{4}{*}{ARDL adjusted $CD_{W+}$}                  & 100  & 6.9   & 7.4   & 8.1   & 6.0  & 6.4   & 12.1  & 6.1  & 6.4  & 10.4 & 7.9   & 9.7   & 25.0  & 6.4  & 8.3   & 29.4  & 6.7  & 8.8  & 26.4   \\                                                               & 200  & 8.1   & 6.3   & 5.2   & 7.8  & 6.9   & 6.0   & 7.8  & 6.7  & 6.4  & 9.7   & 9.9   & 31.8  & 9.4  & 9.0   & 32.6  & 9.0  & 8.3  & 34.0   \\                                                               & 500  & 13.0  & 7.6   & 5.2   & 15.0 & 8.6   & 5.7   & 14.7 & 8.4  & 5.5  & 13.6  & 11.9  & 59.4  & 15.4 & 12.0  & 56.7  & 14.6 & 13.8 & 59.3   \\                                                               & 1000 & 25.8  & 13.7  & 6.0   & 26.9 & 13.0  & 6.2   & 24.1 & 14.6 & 6.3  & 24.2  & 19.9  & 74.9  & 25.5 & 18.6  & 72.6  & 23.8 & 21.4 & 76.4  \\\hline
\multicolumn{20}{c}{}\\
$\hat{m}=6$& \multicolumn{19}{c}{}\\
\hline\hline
\multicolumn{1}{l}{}                               &      & \multicolumn{9}{c|}{Size ($H_o:\lambda=0$)}                                                                               & \multicolumn{9}{c}{Power ($H_1:\lambda=0.25$)}                                                                                \\\cline{3-20}
\multicolumn{1}{l}{}                           &    &    \multicolumn{3}{c|}{$\alpha_1=1,\alpha_2=1$} & \multicolumn{3}{c|}{$\alpha_1=1,\alpha_2=2/3$} & \multicolumn{3}{c|}{$\alpha_1=2/3,\alpha_2=1/2$} & \multicolumn{3}{c|}{$\alpha_1=1,\alpha_2=1$} & \multicolumn{3}{c|}{$\alpha_1=1,\alpha_2=2/3$} & \multicolumn{3}{c}{$\alpha_1=2/3,\alpha_2=1/2$}   \\\cline{3-20}
\multicolumn{1}{c}{Tests}                           & $n\setminus T$     & 100       & 200      & 500       & 100        & 200       & 500       & 100        & 200        & 500      & 100       & 200       & 500      & 100        & 200       & 500       & 100        & 200       & 500       \\\hline
\multirow{4}{*}{ARDL
adjusted $CD$}                        & 100  & 99.9  & 100.0 & 100.0 & 98.5 & 100.0 & 100.0 & 8.2  & 15.0 & 39.4 & 99.5  & 100.0 & 100.0 & 94.6 & 99.2  & 99.9  & 26.1 & 32.5 & 51.7   \\                                                               & 200  & 100.0 & 100.0 & 100.0 & 99.2 & 100.0 & 100.0 & 5.8  & 8.5  & 24.3 & 99.9  & 100.0 & 100.0 & 94.5 & 99.7  & 100.0 & 49.3 & 60.5 & 67.7   \\                                                               & 500  & 100.0 & 100.0 & 100.0 & 99.5 & 100.0 & 100.0 & 7.6  & 5.7  & 10.9 & 100.0 & 100.0 & 100.0 & 92.8 & 99.7  & 100.0 & 75.4 & 90.9 & 99.6   \\                                                               & 1000 & 100.0 & 100.0 & 100.0 & 99.6 & 100.0 & 100.0 & 8.0  & 5.9  & 7.4  & 100.0 & 100.0 & 100.0 & 91.9 & 100.0 & 100.0 & 80.4 & 95.3 & 100.0  \\\hline
\multirow{4}{*}{ARDL adjusted
$CD^\ast$} & 100  & 13.1  & 14.5  & 30.0  & 11.9 & 14.3  & 25.1  & 12.7 & 12.2 & 17.2 & 38.6  & 61.7  & 93.3  & 45.5 & 69.5  & 96.2  & 76.1 & 95.8 & 100.0  \\                                                               & 200  & 7.6   & 8.4   & 11.8  & 7.9  & 8.1   & 10.4  & 8.8  & 9.0  & 9.1  & 32.9  & 54.6  & 90.5  & 41.3 & 62.5  & 93.8  & 80.7 & 97.1 & 100.0  \\                                                               & 500  & 6.8   & 6.8   & 6.7   & 6.9  & 5.9   & 6.0   & 10.0 & 8.3  & 6.5  & 31.9  & 52.8  & 84.4  & 43.6 & 62.7  & 92.5  & 85.4 & 98.3 & 100.0  \\                                                               & 1000 & 7.1   & 6.2   & 5.7   & 8.3  & 7.0   & 4.9   & 9.3  & 6.6  & 4.9  & 33.0  & 47.4  & 82.4  & 43.0 & 62.8  & 92.8  & 86.0 & 98.2 & 100.0  \\\hline
\multirow{4}{*}{ARDL adjusted $CD_{W+}$}                  & 100  & 6.0   & 6.9   & 15.6  & 6.4  & 6.3   & 17.1  & 6.8  & 6.1  & 10.6 & 6.6   & 7.6   & 27.7  & 6.8  & 8.2   & 29.0  & 7.4  & 8.3  & 19.2   \\                                                               & 200  & 7.7   & 6.3   & 7.1   & 8.0  & 5.7   & 5.5   & 7.4  & 5.6  & 5.7  & 7.7   & 8.6   & 19.1  & 7.9  & 7.5   & 20.5  & 8.1  & 7.7  & 18.3   \\                                                               & 500  & 12.1  & 8.2   & 5.8   & 13.9 & 8.4   & 6.3   & 12.9 & 8.4  & 5.8  & 14.2  & 11.9  & 44.1  & 14.1 & 11.5  & 44.6  & 13.8 & 12.1 & 44.9   \\                                                               & 1000 & 24.0  & 12.8  & 5.9   & 26.5 & 14.4  & 5.5   & 23.2 & 13.5 & 6.9  & 24.2  & 18.2  & 67.7  & 24.9 & 19.9  & 64.9  & 24.2 & 18.5 & 67.7 \\\hline
\end{tabular}
\end{center}
\par
\textit{Notes}: The DGP is given by (\ref{dgpy}) with $\beta_{i1}=\beta_{i2}=0$, and contains two latent factors with different factor strengths, $(\alpha_{1},\alpha_{2})=(1,1)$, $(1,2/3)$, and $(2/3,1/2)$. $\lambda$ denotes the spatial autocorrelation coefficient of the error term defined in (\ref{error}). $m_0$ is the true number of factors and $\hat{m}$ is the number of selected PCs used to compute the different CD statistics.
$CD$ denotes the
standard test of error cross-sectional dependence defined by (\ref{cd}), $CD^\ast$ is the bias-corrected version defined by (\ref{CD*a}), and $CD_{W+}$ is the power-enhanced randomized version defined by (\ref{CDW+}).
\end{sidewaystable}

\begin{sidewaystable}
\scriptsize
\caption{Size and power of ARDL adjusted tests of error cross-sectional dependence for the panel regression model with one latent factor $(m_0=1)$ and serially correlated non-Gaussian errors}
\label{table.ardl.chi2.x1f}
\begin{center}
\begin{tabular}{cr|rrr|rrr|rrr|rrr|rrr|rrr}
$\hat{m}=2$& \multicolumn{19}{c}{}\\
\hline\hline
\multicolumn{1}{l}{}                               &      & \multicolumn{9}{c|}{Size ($H_o:\lambda=0$)}                                                                               & \multicolumn{9}{c}{Power ($H_1:\lambda=0.25$)}                                                                                \\\cline{3-20}
\multicolumn{1}{l}{}                               &      & \multicolumn{3}{c|}{$\alpha=1$} & \multicolumn{3}{c|}{$\alpha=2/3$} & \multicolumn{3}{c|}{$\alpha=1/2$} & \multicolumn{3}{c|}{$\alpha=1$} & \multicolumn{3}{c|}{$\alpha=2/3$} & \multicolumn{3}{c}{$\alpha=1/2$}\\\cline{3-20}
\multicolumn{1}{c}{Tests}                           & $n\setminus T$     & 100       & 200      & 500       & 100        & 200       & 500       & 100        & 200        & 500      & 100       & 200       & 500      & 100        & 200       & 500       & 100        & 200       & 500       \\\hline
\multirow{4}{*}{ARDL
adjusted $CD$}                        & 100  & 66.7 & 90.9 & 98.9  & 7.2  & 10.1 & 20.4 & 7.9  & 7.0  & 9.9 & 28.5 & 49.2 & 68.2 & 59.1 & 70.9 & 82.7  & 69.2 & 78.1 & 86.2   \\                                                               & 200  & 68.7 & 93.2 & 99.9  & 7.5  & 5.8  & 13.6 & 8.6  & 5.5  & 6.4 & 18.7 & 36.4 & 58.7 & 71.8 & 90.2 & 98.1  & 81.0 & 95.7 & 99.3   \\                                                            & 500  & 68.1 & 96.0 & 100.0 & 8.2  & 7.7  & 8.9  & 8.0  & 5.6  & 5.8 & 11.1 & 25.0 & 53.0 & 82.6 & 96.1 & 100.0 & 87.3 & 98.9 & 100.0  \\                                                               & 1000 & 66.6 & 95.9 & 100.0 & 7.2  & 5.6  & 5.2  & 7.9  & 6.9  & 4.6 & 10.9 & 21.6 & 51.7 & 84.8 & 97.9 & 100.0 & 89.5 & 98.8 & 100.0  \\\hline
\multirow{4}{*}{ARDL adjusted
$CD^\ast$} & 100  & 5.8  & 6.3  & 7.4   & 8.6  & 8.2  & 6.8  & 8.8  & 7.2  & 7.4 & 57.8 & 84.2 & 99.5 & 83.6 & 97.6 & 100.0 & 87.2 & 97.9 & 100.0  \\                                                               & 200  & 6.8  & 6.7  & 5.5   & 9.4  & 6.9  & 7.0  & 9.9  & 7.2  & 6.5 & 60.0 & 85.2 & 99.3 & 84.9 & 98.5 & 100.0 & 87.9 & 99.2 & 100.0  \\                                                               & 500  & 6.1  & 5.5  & 5.1   & 9.5  & 7.5  & 6.5  & 8.5  & 6.7  & 6.1 & 60.3 & 84.7 & 99.6 & 87.5 & 99.1 & 100.0 & 89.5 & 99.4 & 100.0  \\                                                               & 1000 & 6.2  & 5.7  & 5.0   & 8.5  & 6.8  & 5.1  & 8.2  & 7.4  & 4.8 & 60.6 & 85.7 & 99.6 & 88.0 & 99.1 & 100.0 & 90.6 & 98.9 & 100.0  \\\hline
\multirow{4}{*}{ARDL adjusted $CD_{W+}$}                  & 100  & 7.6  & 6.3  & 6.2   & 6.9  & 6.4  & 7.0  & 6.7  & 6.2  & 6.4 & 8.4  & 9.9  & 32.9 & 8.8  & 11.4 & 40.0  & 8.3  & 10.1 & 37.5   \\                                                               & 200  & 9.0  & 5.9  & 5.1   & 8.9  & 6.0  & 6.0  & 9.2  & 6.7  & 5.7 & 9.8  & 9.9  & 46.7 & 10.3 & 10.7 & 52.8  & 10.4 & 11.4 & 52.3   \\                                                               & 500  & 15.0 & 9.1  & 5.3   & 13.2 & 9.3  & 5.5  & 12.8 & 9.7  & 5.3 & 15.0 & 14.1 & 67.8 & 15.3 & 14.5 & 71.5  & 13.8 & 15.2 & 71.7   \\                                                               & 1000 & 25.9 & 14.0 & 6.2   & 23.8 & 15.6 & 5.9  & 22.8 & 14.2 & 5.8 & 24.1 & 19.4 & 79.9 & 23.3 & 22.3 & 79.3  & 22.9 & 20.7 & 78.0
\\\hline
\multicolumn{20}{c}{}\\
$\hat{m}=4$& \multicolumn{19}{c}{}\\
\hline\hline
\multicolumn{1}{l}{}                               &      & \multicolumn{9}{c|}{Size ($H_o:\lambda=0$)}                                                                               & \multicolumn{9}{c}{Power ($H_1:\lambda=0.25$)}                                                                                \\\cline{3-20}
\multicolumn{1}{l}{}                               &      & \multicolumn{3}{c|}{$\alpha=1$} & \multicolumn{3}{c|}{$\alpha=2/3$} & \multicolumn{3}{c|}{$\alpha=1/2$} & \multicolumn{3}{c|}{$\alpha=1$} & \multicolumn{3}{c|}{$\alpha=2/3$} & \multicolumn{3}{c}{$\alpha=1/2$} \\\cline{3-20}
\multicolumn{1}{c}{Tests}                           & $n\setminus T$     & 100       & 200      & 500       & 100        & 200       & 500       & 100        & 200        & 500      & 100       & 200       & 500      & 100        & 200       & 500       & 100        & 200       & 500       \\\hline
\multirow{4}{*}{ARDL
adjusted $CD$}                        & 100  & 68.1 & 91.0 & 98.7  & 6.9  & 10.2 & 22.4 & 7.8  & 7.2  & 10.4 & 37.8 & 61.2 & 80.5 & 38.1 & 46.9 & 60.4  & 49.4 & 57.4 & 66.1   \\                                                               & 200  & 67.9 & 93.2 & 99.8  & 6.7  & 6.6  & 13.0 & 8.4  & 6.0  & 7.4  & 23.0 & 43.1 & 69.7 & 59.0 & 78.4 & 89.8  & 68.2 & 88.6 & 94.3   \\                                                               & 500  & 67.8 & 95.8 & 100.0 & 8.0  & 7.0  & 9.0  & 7.4  & 6.7  & 5.7  & 12.8 & 29.3 & 57.8 & 75.1 & 93.8 & 99.9  & 80.8 & 97.8 & 100.0  \\                                                               & 1000 & 66.1 & 95.9 & 100.0 & 6.6  & 5.6  & 5.0  & 7.2  & 6.6  & 4.7  & 11.1 & 23.0 & 55.2 & 79.2 & 96.8 & 100.0 & 85.0 & 98.1 & 100.0  \\\hline
\multirow{4}{*}{ARDL adjusted
$CD^\ast$} & 100  & 7.6  & 8.0  & 11.8  & 10.0 & 9.8  & 10.3 & 10.7 & 10.0 & 12.9 & 56.4 & 84.6 & 99.7 & 79.9 & 96.6 & 100.0 & 82.0 & 97.0 & 100.0  \\                                                               & 200  & 7.8  & 7.2  & 6.6   & 8.4  & 7.1  & 8.5  & 9.8  & 8.1  & 8.6  & 57.4 & 84.8 & 99.4 & 80.7 & 97.5 & 100.0 & 84.3 & 98.4 & 100.0  \\                                                               & 500  & 6.8  & 5.6  & 5.8   & 9.1  & 7.7  & 6.9  & 7.9  & 7.8  & 6.5  & 59.9 & 84.4 & 99.5 & 84.0 & 98.6 & 100.0 & 85.5 & 99.2 & 100.0  \\                                                               & 1000 & 6.9  & 5.9  & 5.5   & 7.6  & 6.5  & 5.9  & 7.7  & 7.6  & 5.3  & 60.1 & 85.8 & 99.7 & 84.2 & 98.7 & 100.0 & 87.0 & 98.8 & 100.0  \\\hline
\multirow{4}{*}{ARDL adjusted $CD_{W+}$}                  & 100  & 7.0  & 6.7  & 7.4   & 6.3  & 6.1  & 7.8  & 6.6  & 6.6  & 8.3  & 7.8  & 7.6  & 21.8 & 8.2  & 8.7  & 20.2  & 8.1  & 8.0  & 20.7   \\                                                               & 200  & 7.3  & 5.9  & 5.9   & 8.3  & 5.8  & 6.3  & 7.9  & 5.8  & 6.0  & 8.6  & 8.5  & 27.2 & 9.0  & 8.6  & 29.3  & 8.7  & 8.2  & 25.5   \\                                                               & 500  & 13.1 & 9.8  & 5.1   & 14.4 & 7.8  & 6.2  & 11.9 & 8.7  & 5.5  & 14.4 & 13.6 & 53.8 & 14.6 & 12.0 & 55.3  & 13.7 & 12.2 & 54.4   \\                                                               & 1000 & 23.9 & 14.3 & 6.5   & 24.1 & 11.9 & 5.7  & 22.6 & 13.3 & 6.0  & 23.0 & 19.3 & 71.8 & 21.8 & 19.4 & 71.6  & 21.7 & 18.7 & 71.0  \\\hline
\end{tabular}
\end{center}
\par
\textit{Notes}: The DGP is given by (\ref{dgpy}) with $\beta_{i1}$ and $\beta_{i2}$ both generated from normal distribution, and contains a single latent factor with different factor strengths, $\alpha=1,$ $2/3$, and $1/2$. $\lambda$ denotes the spatial autocorrelation coefficient of the error term defined in (\ref{error}). $m_0$ is the true number of factors and $\hat{m}$ is the number of selected PCs used to compute the different CD statistics.
$CD$ denotes the
standard test of error cross-sectional dependence defined by (\ref{cd}), $CD^\ast$ is the bias-corrected version defined by (\ref{CD*a}), and $CD_{W+}$ is the power-enhanced randomized version defined by (\ref{CDW+}).
\end{sidewaystable}

\begin{sidewaystable}
\scriptsize
\caption{Size and power of ARDL adjusted tests of error cross-sectional dependence for the panel regression model with two latent factors $(m_0=2)$ and serially correlated non-Gaussian errors}
\label{table.ardl.chi2.x2f}
\begin{center}
\begin{tabular}{cr|rrr|rrr|rrr|rrr|rrr|rrr}
$\hat{m}=4$& \multicolumn{19}{c}{}\\
\hline\hline
\multicolumn{1}{l}{}                               &      & \multicolumn{9}{c|}{Size ($H_o:\lambda=0$)}                                                                               & \multicolumn{9}{c}{Power ($H_1:\lambda=0.25$)}                                                                                \\\cline{3-20}
\multicolumn{1}{l}{}                           &    &    \multicolumn{3}{c|}{$\alpha_1=1,\alpha_2=1$} & \multicolumn{3}{c|}{$\alpha_1=1,\alpha_2=2/3$} & \multicolumn{3}{c|}{$\alpha_1=2/3,\alpha_2=1/2$} & \multicolumn{3}{c|}{$\alpha_1=1,\alpha_2=1$} & \multicolumn{3}{c|}{$\alpha_1=1,\alpha_2=2/3$} & \multicolumn{3}{c}{$\alpha_1=2/3,\alpha_2=1/2$}  \\\cline{3-20}
\multicolumn{1}{c}{Tests}                           & $n\setminus T$     & 100       & 200      & 500       & 100        & 200       & 500       & 100        & 200        & 500      & 100       & 200       & 500      & 100        & 200       & 500       & 100        & 200       & 500       \\\hline
\multirow{4}{*}{ARDL
adjusted $CD$}                        & 100  & 99.9  & 100.0 & 100.0 & 98.4 & 100.0 & 100.0 & 8.2  & 12.3 & 38.5 & 99.4 & 100.0 & 100.0 & 91.5 & 98.6 & 100.0 & 36.2 & 43.2 & 50.4   \\                                                               & 200  & 100.0 & 100.0 & 100.0 & 99.3 & 100.0 & 100.0 & 8.3  & 7.0  & 21.5 & 99.6 & 100.0 & 100.0 & 92.2 & 99.7 & 100.0 & 59.8 & 75.4 & 85.4   \\                                                               & 500  & 100.0 & 100.0 & 100.0 & 99.6 & 100.0 & 100.0 & 7.5  & 7.0  & 10.7 & 99.9 & 100.0 & 100.0 & 91.5 & 99.5 & 100.0 & 77.3 & 94.8 & 99.8   \\                                                               & 1000 & 100.0 & 100.0 & 100.0 & 99.6 & 100.0 & 100.0 & 9.7  & 6.1  & 7.1  & 99.9 & 100.0 & 100.0 & 91.5 & 99.9 & 100.0 & 82.1 & 97.9 & 100.0  \\\hline
\multirow{4}{*}{ARDL adjusted
$CD^\ast$} & 100  & 8.4   & 8.4   & 12.0  & 8.3  & 9.4   & 9.6   & 11.2 & 12.1 & 9.3  & 33.3 & 50.8  & 83.4  & 40.5 & 64.4 & 92.1  & 79.4 & 96.5 & 100.0  \\                                                               & 200  & 6.4   & 6.8   & 6.5   & 7.5  & 7.8   & 6.3   & 11.5 & 9.2  & 8.1  & 29.1 & 46.2  & 80.7  & 39.8 & 60.7 & 91.8  & 82.6 & 98.4 & 100.0  \\                                                               & 500  & 7.0   & 6.7   & 5.8   & 6.6  & 5.6   & 5.4   & 10.1 & 9.2  & 8.2  & 30.3 & 45.0  & 80.1  & 43.5 & 61.9 & 91.9  & 85.9 & 98.3 & 100.0  \\                                                               & 1000 & 6.4   & 5.0   & 5.8   & 7.5  & 6.6   & 5.8   & 10.5 & 9.0  & 7.2  & 29.3 & 44.2  & 78.7  & 42.4 & 64.2 & 90.6  & 86.6 & 99.2 & 100.0  \\\hline
\multirow{4}{*}{ARDL adjusted $CD_{W+}$}                  & 100  & 7.2   & 7.2   & 10.3  & 7.3  & 6.6   & 10.2  & 6.1  & 6.4  & 9.9  & 7.9  & 9.4   & 27.9  & 7.6  & 8.6  & 29.4  & 7.7  & 10.2 & 28.1   \\                                                               & 200  & 8.5   & 6.2   & 6.3   & 7.6  & 7.0   & 5.9   & 7.6  & 5.9  & 5.9  & 9.8  & 9.3   & 32.8  & 10.0 & 9.5  & 32.0  & 8.0  & 9.0  & 34.0   \\                                                               & 500  & 12.1  & 8.8   & 5.6   & 12.2 & 8.2   & 5.8   & 12.8 & 9.0  & 4.8  & 13.9 & 13.2  & 59.5  & 14.7 & 12.5 & 57.6  & 14.5 & 14.0 & 60.7   \\                                                               & 1000 & 24.6  & 15.6  & 5.9   & 25.0 & 12.8  & 6.6   & 23.5 & 12.6 & 6.5  & 23.7 & 22.2  & 73.1  & 24.2 & 19.1 & 73.0  & 23.3 & 18.9 & 75.6
\\\hline
\multicolumn{20}{c}{}\\
$\hat{m}=6$& \multicolumn{19}{c}{}\\
\hline\hline
\multicolumn{1}{l}{}                               &      & \multicolumn{9}{c|}{Size ($H_o:\lambda=0$)}                                                                               & \multicolumn{9}{c}{Power ($H_1:\lambda=0.25$)}                                                                                \\\cline{3-20}
\multicolumn{1}{l}{}                           &    &    \multicolumn{3}{c|}{$\alpha_1=1,\alpha_2=1$} & \multicolumn{3}{c|}{$\alpha_1=1,\alpha_2=2/3$} & \multicolumn{3}{c|}{$\alpha_1=2/3,\alpha_2=1/2$} & \multicolumn{3}{c|}{$\alpha_1=1,\alpha_2=1$} & \multicolumn{3}{c|}{$\alpha_1=1,\alpha_2=2/3$} & \multicolumn{3}{c}{$\alpha_1=2/3,\alpha_2=1/2$}   \\\cline{3-20}
\multicolumn{1}{c}{Tests}                           & $n\setminus T$     & 100       & 200      & 500       & 100        & 200       & 500       & 100        & 200        & 500      & 100       & 200       & 500      & 100        & 200       & 500       & 100        & 200       & 500       \\\hline
\multirow{4}{*}{ARDL
adjusted $CD$}                        & 100  & 100.0 & 100.0 & 100.0 & 97.9 & 100.0 & 100.0 & 8.1  & 13.2 & 40.7 & 99.6 & 100.0 & 100.0 & 93.9 & 99.3 & 100.0 & 24.7 & 32.6 & 50.5   \\                                                               & 200  & 100.0 & 100.0 & 100.0 & 99.1 & 100.0 & 100.0 & 7.6  & 7.7  & 22.0 & 99.9 & 100.0 & 100.0 & 92.9 & 99.7 & 100.0 & 49.2 & 61.7 & 71.7   \\                                                               & 500  & 100.0 & 100.0 & 100.0 & 99.4 & 100.0 & 100.0 & 8.6  & 6.5  & 11.3 & 99.9 & 100.0 & 100.0 & 92.5 & 99.6 & 100.0 & 70.8 & 91.6 & 99.2   \\                                                               & 1000 & 100.0 & 100.0 & 100.0 & 99.5 & 100.0 & 100.0 & 8.9  & 6.1  & 6.8  & 99.9 & 100.0 & 100.0 & 91.4 & 99.9 & 100.0 & 76.6 & 96.6 & 100.0  \\\hline
\multirow{4}{*}{ARDL adjusted
$CD^\ast$} & 100  & 12.0  & 14.0  & 29.5  & 11.0 & 14.9  & 25.8  & 12.3 & 16.3 & 16.7 & 39.0 & 59.8  & 91.6  & 44.9 & 70.5 & 96.1  & 74.4 & 95.5 & 99.9   \\                                                               & 200  & 7.3   & 8.4   & 9.6   & 8.8  & 8.6   & 9.2   & 11.5 & 10.3 & 9.5  & 32.5 & 49.6  & 85.1  & 42.4 & 64.2 & 94.0  & 78.1 & 97.1 & 100.0  \\                                                               & 500  & 7.7   & 7.1   & 6.5   & 6.9  & 6.4   & 5.5   & 10.3 & 9.5  & 8.3  & 31.9 & 46.9  & 80.5  & 44.4 & 62.5 & 92.7  & 82.5 & 98.0 & 100.0  \\                                                               & 1000 & 7.0   & 5.3   & 6.5   & 7.9  & 6.7   & 5.9   & 9.8  & 8.3  & 7.1  & 30.5 & 45.1  & 79.5  & 43.5 & 63.5 & 91.1  & 83.5 & 98.4 & 100.0  \\\hline
\multirow{4}{*}{ARDL adjusted $CD_{W+}$}                  & 100  & 6.5   & 7.4   & 21.7  & 7.0  & 7.1   & 17.3  & 7.1  & 7.7  & 10.4 & 6.9  & 9.4   & 34.3  & 7.1  & 8.0  & 32.4  & 8.0  & 9.1  & 22.7   \\                                                               & 200  & 6.9   & 5.8   & 6.8   & 7.6  & 6.4   & 7.0   & 7.4  & 6.3  & 6.2  & 8.0  & 9.2   & 21.9  & 8.0  & 7.8  & 20.1  & 8.6  & 7.7  & 19.6   \\                                                               & 500  & 12.9  & 7.5   & 6.3   & 13.2 & 8.3   & 5.7   & 12.6 & 9.2  & 6.1  & 14.5 & 11.8  & 45.6  & 12.6 & 10.8 & 46.9  & 13.7 & 12.4 & 44.9   \\                                                               & 1000 & 23.5  & 14.9  & 5.6   & 24.8 & 12.7  & 6.3   & 23.4 & 12.5 & 6.2  & 23.5 & 19.7  & 66.5  & 24.6 & 18.1 & 65.0  & 22.5 & 18.5 & 64.7    \\\hline
\end{tabular}
\end{center}
\par
\textit{Notes}: The DGP is given by (\ref{dgpy}) with $\beta_{i1}$ and $\beta_{i2}$ both generated from normal distribution, and contains two latent factors with different factor strengths, $(\alpha_{1},\alpha_{2})=(1,1)$, $(1,2/3)$, and $(2/3,1/2)$. $\lambda$ denotes the spatial autocorrelation coefficient of the error term defined in (\ref{error}). $m_0$ is the true number of factors and $\hat{m}$ is the number of selected PCs used to compute the different CD statistics.
$CD$ denotes the
standard test of error cross-sectional dependence defined by (\ref{cd}), $CD^\ast$ is the bias-corrected version defined by (\ref{CD*a}), and $CD_{W+}$ is the power-enhanced randomized version defined by (\ref{CDW+}).
\end{sidewaystable}

\clearpage

\bigskip%
\begin{figure}[ptb]%
\centering
\includegraphics[
height=3in,
width=6.8761in
]%
{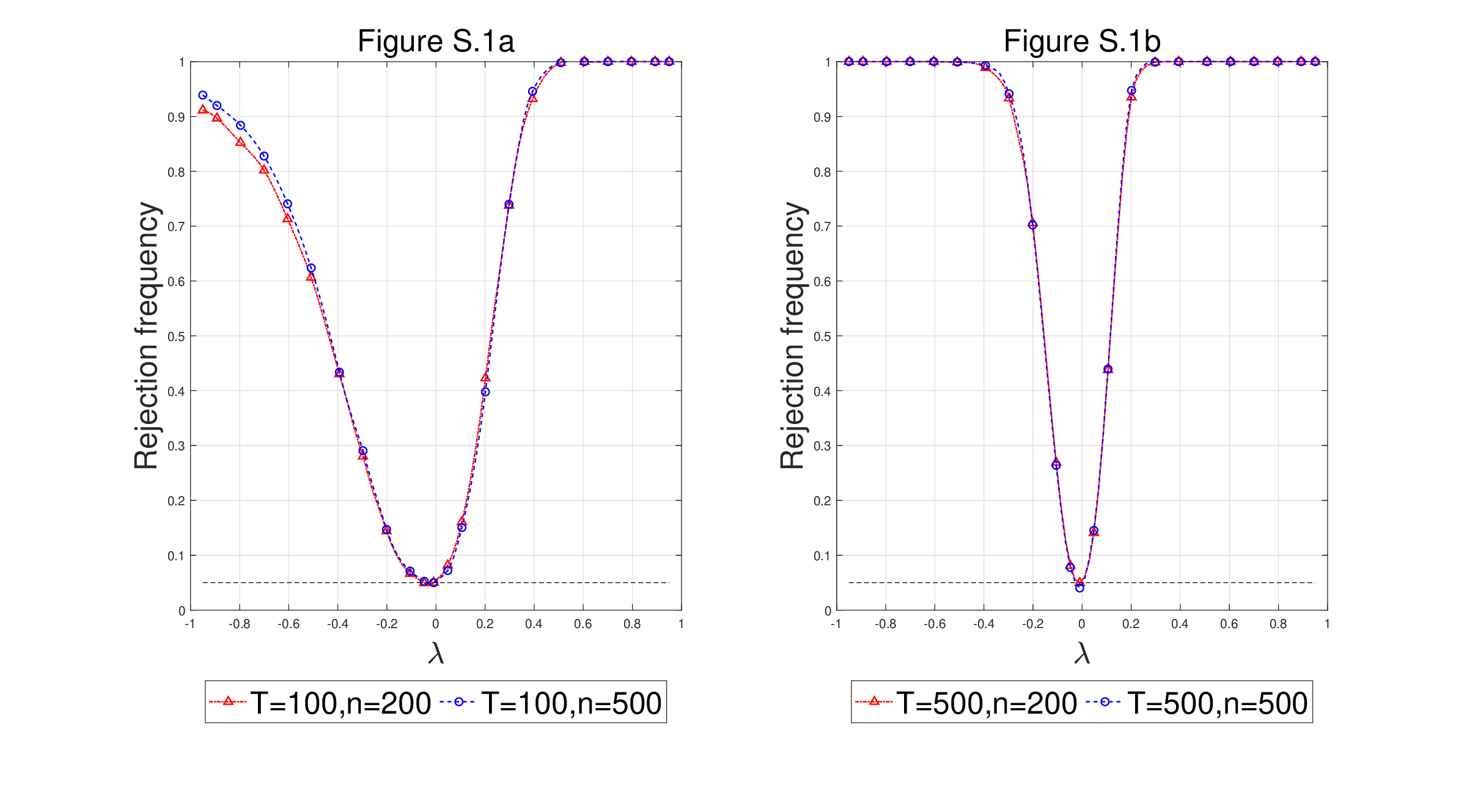}%
\caption{Empirical power functions of the CD$^{\text{*}}$ test against spatial
alternatives for the pure latent factor model with one latent strong factor
and serially independent Gaussian errors, for different sample sizes.}%
\label{fig_psf}%
\end{figure}
\begin{figure}[ptb]%
\centering
\includegraphics[
height=3in,
width=6.8761in
]%
{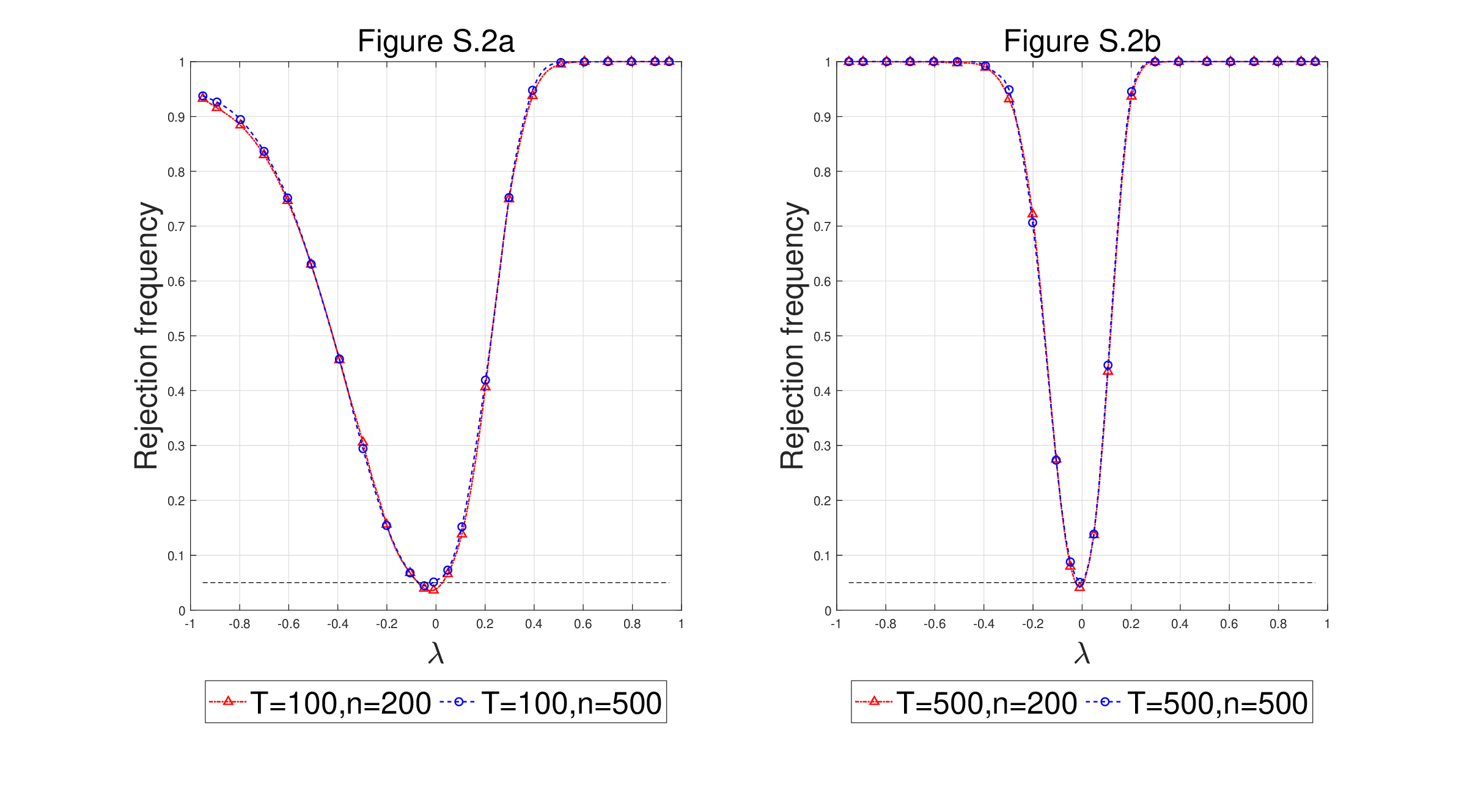}%
\caption{Empirical power functions of the CD$^{\text{*}}$ test against spatial
alternatives for the panel regression model with one latent strong factor and
serially independent Gaussian errors, for different sample sizes.}%
\end{figure}

\bigskip\bigskip\bigskip%
\begin{figure}[ptb]%
\centering
\includegraphics[
height=3in,
width=6.8761in
]%
{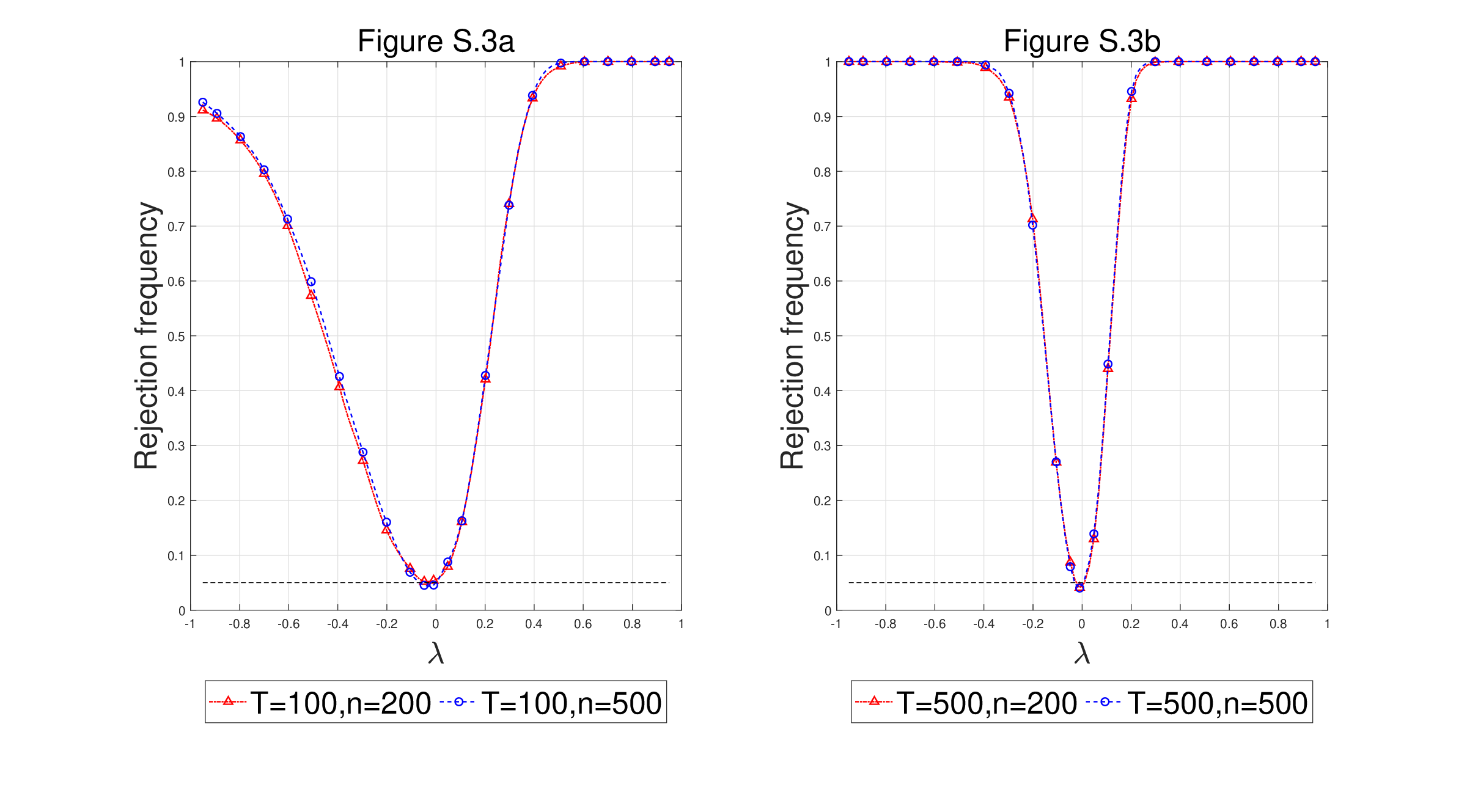}%
\caption{Empirical power functions of the CD$^{\text{*}}$ test against spatial
alternatives for the pure latent factor model with one latent strong factor
and serially independent non-Gaussian errors, for different sample sizes.}%
\end{figure}

\bigskip%

\begin{figure}[ptb]%
\centering
\includegraphics[
height=3in,
width=6.8761in
]%
{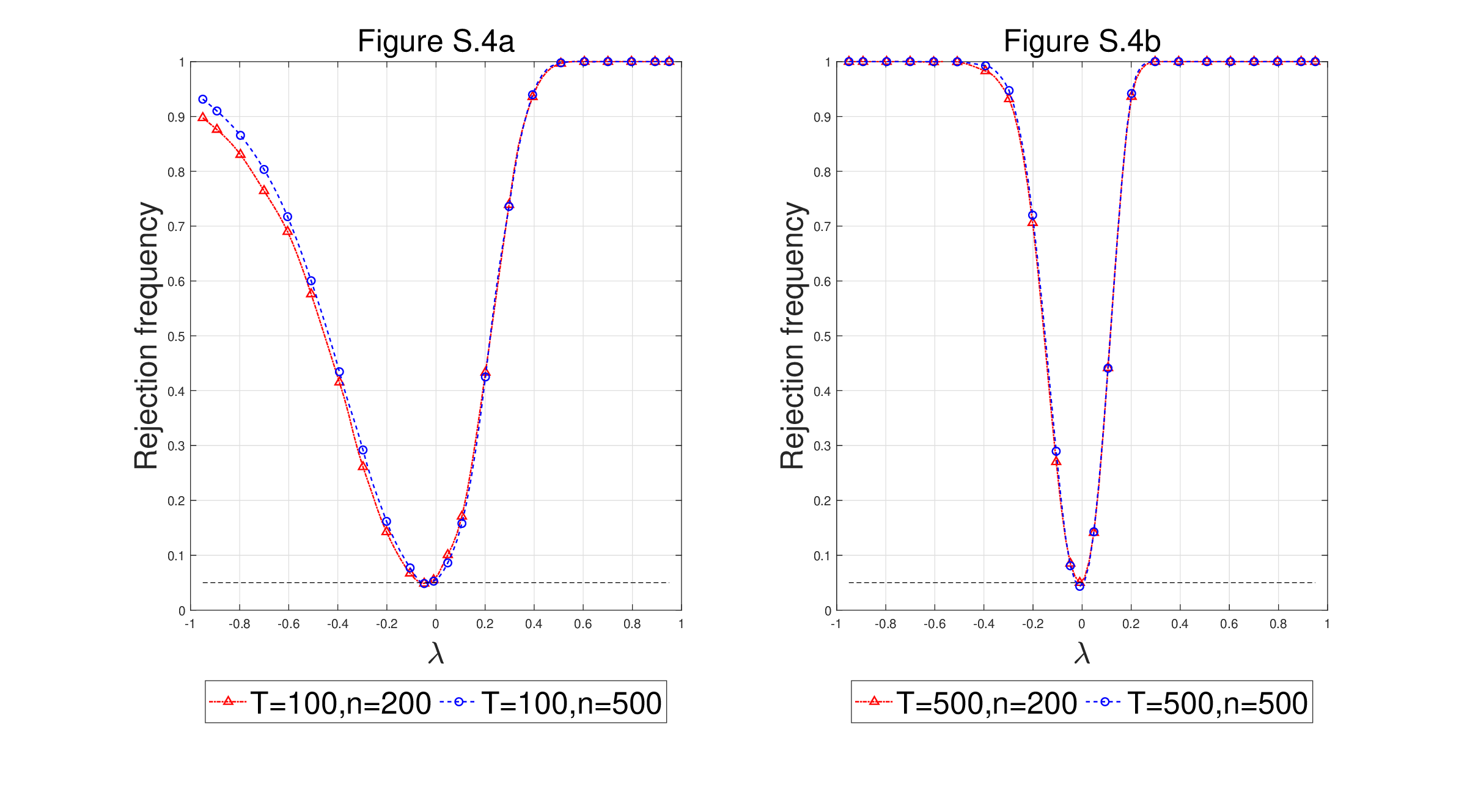}%
\caption{Empirical power functions of the CD$^{\text{*}}$ test against spatial
alternatives for the panel regression model with one latent strong factor and
serially independent non-Gaussian errors, for different sample sizes.}%
\label{fig_xpsf_n}%
\end{figure}

\clearpage\noindent\textbf{References} \newline\vbox{}

\noindent Bai, J. (2003). Inferential theory for factor models of large
dimensions. \textit{Econometrica}, 71(1):135-171.

\bigskip

\noindent Bailey, N., Pesaran, M. H., and Smith, L. V. (2019). A multiple
testing approach to the regularisation of large sample correlation matrices.
\textit{Journal of Econometrics}, 208(2):507--534.

\bigskip

\noindent Chudik, A., Kapetanios, G., \& Pesaran, M. H. (2018). A one
covariate at a time, multiple testing approach to variable selection in
high-dimensional linear regression models. \textit{Econometrica}, 86(4):1479-1512.

\bigskip

\noindent\noindent Juodis, A. and Reese, S. (2022). The incidental parameters
problem in testing for remaining cross-section correlation. \textit{Journal of
Business }$\&$\textit{ Economic Statistics}, 40(3):1191-1203.

\bigskip

\noindent Lieberman, O. (1994). A Laplace approximation to the moments of a
ratio of quadratic forms. \textit{Biometrika}, 81(4):681-690.

\bigskip

\noindent Pesaran, M. H. and Yamagata, T. (2024). Testing for alpha in linear
factor pricing models with a large number of securities. \textit{Journal of
Financial Econometrics}, 22(2):407-460.

\end{document}